\documentclass[pointlessnumbers,
%draft,
twoside,
11pt
]{scrbook}

\usepackage[utf8x]{inputenc}% WICHTIG! Für rôle etc.
\usepackage{xcolor}
\usepackage{listings}
\usepackage[color=black]{ocaml}
\usepackage{amssymb}
\usepackage{amsmath}
\usepackage{amsthm}
\usepackage{textcomp}
\usepackage{pifont}
\usepackage{bbding}
\makeatletter

 \newcommand*\l@preface[2]{%
  \ifnum \c@tocdepth >\m@ne
   \addpenalty{-\@highpenalty}%
   \vskip 1.0em \@plus\p@
   \setlength\@tempdima{1.5em}%
   \if@tocleft
   \ifx\toc@l@number\@empty\else
   \setlength\@tempdima{0\toc@l@number}%
   \fi
   \fi
   \begingroup
    \parindent \z@ \rightskip \@pnumwidth
    \parfillskip -\@pnumwidth
    \leavevmode \itshape
    \advance\leftskip\@tempdima
    \hskip -\leftskip
    #1\nobreak\hfil \nobreak\hb@xt@\@pnumwidth{\hss{\it page} {\rm #2}}\par
    \penalty\@highpenalty
   \endgroup
  \fi
 }

 \newcommand*\l@bibidx[2]{%
  \ifnum \c@tocdepth >\m@ne
   \addpenalty{-\@highpenalty}%
   \vskip 1.0em \@plus\p@
   \setlength\@tempdima{1.5em}%
   \if@tocleft
   \ifx\toc@l@number\@empty\else
   \setlength\@tempdima{0\toc@l@number}%
   \fi
   \fi
   \begingroup
    \parindent \z@ \rightskip \@pnumwidth
    \parfillskip -\@pnumwidth
    \leavevmode \itshape
    \advance\leftskip\@tempdima
    \hskip -\leftskip
    #1\nobreak\hfil \nobreak\hb@xt@\@pnumwidth{\hss {\rm #2}}\par
    \penalty\@highpenalty
   \endgroup
  \fi
 }

 \renewcommand*{\@@makechapterhead}[1]{%
  \chapterheadstartvskip
  \begingroup
   \normalfont\sectfont\size@chapter
   \setlength{\parindent}{\z@}\setlength{\parfillskip}{\z@ \@plus 1fil}%
   \begin{center}
    \chapterformat \\[0.5cm]
    #1
   \end{center}
   \nobreak\chapterheadendvskip%
  \endgroup
 }

 \renewcommand*{\@@makeschapterhead}[1]{%
  \chapterheadstartvskip
  \begingroup
   \normalfont\sectfont\size@chapter
   \begin{center}
    #1
   \end{center}
   \nobreak\chapterheadendvskip%
  \endgroup
 }

 % Get rid of the dots in the ToC
 \renewcommand{\@dotsep}{1000}

 \def\cleardoublepage{%
  \clearpage
  \ifodd\c@page\else \hbox{}\thispagestyle{empty}\newpage\fi
 }

\makeatother

\definecolor{dgreen}{rgb}{0.0,0.5,0.0}
\definecolor{dblue} {rgb}{0.2,0.2,0.7}
\definecolor{clorange}{rgb}{1.0,0.73,0.0}
\definecolor{clblue}{rgb}{0.38,0.38,0.83}
\definecolor{cllblue}{rgb}{0.9,0.9,0.93}
\definecolor{cldblue}{rgb}{0.0,0.0,0.5}

\lstdefinelanguage{OCaml}{%
 morekeywords={and,as,assert,asr,begin,class,%
  constraint,do,done,downto,else,end,%
  exception,external,false,for,fun,function,%
  functor,if,in,include,inherit,initializer,%
  land,lazy,let,lor,lsl,lsr,%
  lxor,match,method,mod,module,mutable,%
  new,object,of,open,or,private,%
  rec,sig,struct,then,to,true,%
  try,type,val,virtual,when,while,%
  with},%
 sensitive=true,%
 morecomment=[n]{(*}{*)},%
 morestring=[b]{"},%
}%

\lstdefinestyle{camlsrc}{%
 columns=fullflexible,
 basicstyle=\color{dgreen}\sffamily\normalsize,
 stringstyle=\ttfamily,
 commentstyle=\itshape,%\color{clblue},
 numbers=left,
 numberstyle=\tiny,
 tabsize=1,
 frame=lines,
 showstringspaces=true,
 numberbychapter=true,
 captionpos=b,
 xleftmargin=15pt,
 xrightmargin=15pt,
 language=OCaml,
 texcl=true,
}

\lstdefinestyle{caml}{%
 columns=fullflexible,
 basicstyle=\color{dgreen}\sffamily\normalsize\upshape,
 stringstyle=\ttfamily,
 commentstyle=\itshape,%\color{clblue},
 tabsize=1,
 language=OCaml,
 showstringspaces=true,
 xleftmargin=15pt,
 xrightmargin=15pt,
 language=OCaml,
 mathescape=true,
 numberstyle=\tiny,
}

\lstdefinestyle{bash}{%
 basicstyle=\ttfamily\normalsize,
 commentstyle=\ttfamily\normalsize,
 numbers=none,
 xleftmargin=15pt,
 xrightmargin=15pt,
 nolol=true,
 language=bash,
}

% GM

\theoremstyle{definition}

\newtheorem*{bsp*}{Beispiel}

\newtheorem*{aufgabe*}{Übungsaufgabe}

\newtheorem*{bemerkung*}{Bemerkung}
\newtheorem*{konvention*}{Konvention}

%       % amsthm proof environment
{\begin{proof}[Beweis]}{\end{proof}}
{\begin{proof}[Beweisskizze]}{\end{proof}}
{\begin{proof}[Beweisansatz]}{\end{proof}}

\usepackage{graphicx}
\usepackage[english]{babel}
\usepackage{makeidx}
\usepackage{scrpage2}
\usepackage{tikz}
\usetikzlibrary{arrows,positioning,backgrounds,fit}
% \usepackage[  % uncomment for arxiv
%  plainpages=false,
%  pdftitle={Complexity Analysis of Term Rewriting},
%  pdfauthor={Georg Moser},
%  pdfsubject={Habiliation Thesis},
%  pdfkeywords={
%   Proof Theory, Termination Methods, Derivational
%   Complexity, Computational Complexity},
% ]{hyperref}
\usepackage[numbers %numern als referenzen
]{natbib}   % better bibliography

\newtheorem{theorem}{Theorem}[section]
\newtheorem*{mainth}{Main Theorem}
\newtheorem{corollary}{Corollary}[section]
\newtheorem{lemma}{Lemma}[section] 

\newtheorem*{claim*}{Claim}

\newtheorem{proposition}{Proposition}[section]
\theoremstyle{definition}      % Definition with upshape body instead of italics   
\newtheorem{definition}{Definition}[section]
\newtheorem{remark}{Remark}[section]
\newtheorem{example}{Example}[section]

%from llncs
\def\abstractname{Abstract.}
\includeonly{%
  frontmatter,
  preface,
  introduction,
  summary,
  rta03,
  fsttcs05,
  lpar06,
  flops08,
  rta08,
  ijcar08a,
  lpar08
}

% itemize

% footnote
\deffootnote{1em}{1em}{\textsuperscript{\thefootnotemark\ }}

\begin{document}
\frontmatter

\titlehead{{\Large University of Innsbruck \hfill Institute of Computer Science}}
\subject{\hrule\vskip1cm Cumulative Habilitation Thesis}
\title{%
Proof Theory at Work:\\
\vspace{.5cm}
Complexity Analysis of Term Rewrite Systems
}
\author{Georg Moser}
\date{August 2009 \vskip1cm \hrule}

\lowertitleback{%
\textbf{dedicated to Claudia}
}
\maketitle[1]
\tableofcontents
\cleardoublepage
\thispagestyle{empty}

\chapter*{Preface}

This cumulative habilitation thesis is based on the 
following publications studying the complexity of term rewrite systems. 
It goes without saying that for all included papers I wrote the manuscript.
\begin{enumerate}
\item G.~Moser and A.~Weiermann.
\newblock Relating derivation lengths with the slow-growing hierarchy directly.
\newblock In \emph{Proceedings of the 14th International Conference on
  Rewriting Techniques and Applications}, number 2706 in LNCS, pages 296--310, 
Springer Verlag, 2003.

\emph{Own Contribution:} Andreas Weiermann introduced me to the field of
complexity analysis of term rewriting systems and suggested the topic of the
paper. He made suggestions towards the proof methodology, but I was in charge to
fill in the details. Hence I crafted the central notions and theorems and
proved their correctness.

\medskip

\item T.~Arai and G.~Moser.
\newblock Proofs of termination of rewrite systems for polytime functions.
\newblock In \emph{Proceedings of the Annual Conference on Foundations of
  Software Technology and Theoretical Computer Science}, number 3821 in LNCS,
pages 529--540, Springer Verlag, 2005.

\emph{Own Contribution:} Toshiyasu Arai suggested a close study of a term-rewriting
characterisation of the polytime computable function given by Beckmann, Weiermann 
(Arch.~Math.~Log.~36(1):11--30, 1996). This study forms the basis of the paper. 
The established results were obtained jointly through iteratered revision of
the introduced concepts, theorems, and proofs.

\medskip

\item G.~Moser.
\newblock Derivational complexity of {K}nuth {B}endix orders revisited.
\newblock In \emph{Proceedings of the 13th International Conference on Logic
  for Programming Artificial Intelligence and Reasoning}, number 4246 in LNCS,
pages 75--89, Springer Verlag, 2006.

\medskip

\item M.~Avanzini and G.~Moser.
\newblock Complexity analysis by rewriting.
\newblock In \emph{Proceedings of the 9th International Symposium on Functional
  and Logic Programming}, number 4989 in LNCS, pages 130--146, 
Springer Verlag, 2008.

\emph{Own Contribution:} The theoretical contribution of this paper was
entirely my work. Martin Avanzini essentially implemented the technique and was
responsible to provide the experimental evidence.

\medskip
wi
\item G.~Moser and A.~Schnabl.
\newblock Proving quadratic derivational complexities using context dependent
  interpretations.
\newblock In \emph{Proceedings of the 19th International Conference on Rewrite
  Techniques and Applications},  number 5117 in LNCS, pages 276--290, 
Springer Verlag, 2008.

\emph{Own Contribution:} The paper provides an extension and clarification
of Andreas Schnabl's master thesis conducted under my supervision. Apart from
introducing Schnabl to the topic of the paper, I refined and smoothened 
essentially all introduced concepts and extended the theoretical contributions
of the paper. 

\medskip

\item N.~Hirokawa and G.~Moser.
\newblock Automated complexity analysis based on the dependency pair method.
\newblock In \emph{Proceedings of the 4th International Joint Conference on
  Automated Reasoning},  number 5195 in LNAI, pages 364--380, 
Springer Verlag, 2008.

\emph{Own Contribution:} I suggested the topic of complexity analysis to
Nao Hirokawa and extended initial ideas that were presented by him. For example the
focus shift towards the dependency pair method was my idea. The finally established
results were obtained jointly through iteratered revision of
the introduced concepts, theorems, and proofs.

\medskip

\item N.~Hirokawa and G.~Moser.
\newblock Complexity, graphs, and the dependency pair method.
\newblock In \emph{Proceedings of the International Conference on Logic for
  Programming Artificial Intelligence and Reasoning}, number 5330 in LNCS,
pages 652--666, Springer Verlag, 2008.

\emph{Own Contribution:} This paper extends the results presented in
paper (vi) and the initial idea of the extension was the result of joint
discussion. However I was in charge of filling in the details. 
Hence I crafted the central notions and theorems and proved their correctness. 
Apart from his contribution to the initial idea, Hirokawa's main contribution was the implementation
of the method and he was responsible for providing the experimental evidence.

\end{enumerate}
In order to keep a uniform presentation textual changes to these
papers have been necessary. Furthermore minor shortcomings have been repaired.

\bigskip\bigskip
\goodbreak

As an indication of the broadness of my research, below I also mention selected publications from the areas 
automated deduction, proof theory, and rewriting. 
\begin{enumerate}
\setcounter{enumi}{7}
\item G.~Moser. Some remarks on transfinite E-semantic trees and superposition.
In {\em Proceedings of the 1st International Workshop on First order
  Theorem Proving}, pages 97--103, 1997.

\medskip
\item C.~Ferm{\"u}ller and G.~Moser.
\newblock Have {\sc spass} with \ensuremath{{\mbox{\sl OCC1N}_g^=}}.
\newblock In {\em Proceedings of the 7th International Conference on Logic for
  Programming and Automated Reasoning}, number 1955 in LNCS, pages 114--130, 
Springer Verlag, 2000.

\medskip
\item G.~Moser. Ackermann's substitution method (remixed).
{\em Annals of Pure and Applied Logic}, 142\penalty0 (1--3):\penalty0 1--18, 2006.

\medskip
\item
M.~Baaz and G.~Moser. Herbrand's theorem and Term Induction.
{\em Archive of {M}athematical {L}ogic}, 45:\penalty0 447--503, 2006.
\medskip

\item N.~Dershowitz and G.~Moser.
The Hydra Battle Revisited.
In \emph{Rewriting, Computation and Proof; Essays Dedicated to Jean-Pierre Jouannaud on the Occasion of his 60th Birthday}, 
pages 1--27, 2007.

\medskip
\item G.~Moser, A.~Schnabl, and J.~Waldmann. Complexity Analysis of Term Rewriting Based on 
Matrix and Context Dependent Interpretations. In \emph{Proceedings of the Annual Conference on Foundations of
Software Technology and Theoretical Computer Science}, pages 304--315, 2008.
Creative-Commons-NC-ND licensed.
\end{enumerate}

\bigskip
\vfill

\subsection*{Acknowledgements}

I would like to take this opportunity to thank all my colleagues 
that contributed to this thesis. In particular my gratitude goes to
my co-authors Andreas Schabl, Andreas Weiermann, Martin Avanzini, Nao Hirokawa, 
and Toshiaysu Arai, who kindly allowed me to include our joint papers into this thesis.

\medskip
\noindent
Special thanks go to my past and present colleagues in the Computational Logic Group.
I want to thank Aart Middeldorp, 
Andreas Schnabl, Anna-Maria Scheiring, Christian Sternagel, Christian Vogt, 
Clemens Ballarin, Friedrich Neurauter, Harald Zankl, Martin Avanzini, Martin Korp, 
Martina Ingenhaeff, Mona Kornherr, Ren{\'e} Thiemann, Sandra Adelt, Simon Bailey, and Stefan Blom,
for providing a pleasant and inspiring working environment over the years.

\medskip
\noindent
Last but certainly not least, I want to express my gratitude towards my parents Helga Moser
and Gerhard Margreiter ($\dag$) who do and did their 
best in supporting me throughout my studies and research.

\vspace{1.5cm}

\noindent
Innsbruck, \today	\hfill Georg Moser

\mainmatter
\pagestyle{scrheadings}
\setheadsepline{0.5pt}
\automark[section]{chapter}
\lstset{language=OCaml}

\chapter{Introduction}

This thesis is concerned with investigations into the \emph{complexity of term rewriting systems}. 
Moreover the majority of the presented work deals with the \emph{automation} of 
such a complexity analysis. 
The aim of this introduction
is to present the main ideas in an easily accessible fashion to make the result presented accessible to
the general public. Necessarily some technical points are stated in an over-simplified
way. I kindly refer the knowledgeable reader to the technical summary as presented in Chapter~\ref{Summary}.

Since the advent of programming languages 
formalisms and tools have been thought and developed to express and
prove properties of programs. 
The general goal of the here presented work is to develop logical tools for
analysing the complexity of programs (automatically whenever possible). 

When reasoning about properties of programs, we need to fix the \emph{programs}, 
we aim to analyse. For a number of reasons this is a difficult choice. On one hand,
we would want our analysis to be directly transferable into applications, so that we indeed can
analyse existing software packages without further ado. On the other hand, we want our results 
to be as general as possible, hence our analysis should abstract from individual features
of programming languages. 

There is no decisive answer to these conflicting priorities, but for me the generality of the
obtained results appears more important. Successful investigation of abstract programs
can often be adapted to real-life contexts with ease, while the generalisation of tools and
methods invented in a specific setting to a more abstract level, may prove to be difficult
or even impossible. Still it is crucial that the applicability of the introduced general concepts is
not lost.

The most abstract formalism we may consider in computer science, 
are so-called \emph{abstract models of computation}. 
Such abstractions have been intensively studied by mathematicians and logicians at the beginning
of the $20^{\text{th}}$ century. Then the problem was to fix a suitable mathematical notion of 
\emph{computation}.
Several equivalent concepts like \emph{combinatory logic}, 
\emph{$\lambda$-calculus}, \emph{recursive functions}, \emph{register machines}, and 
\emph{Turing machines} have been put forward by Curry, Church, Gödel, Kleene, Turing and others. 
The central computational mechanism of Church's $\lambda$-calculus is \emph{rewriting} and bluntly
we could argue that the $\lambda$-calculus is simply \emph{rewriting} in disguise. 

Term rewriting is a conceptually simple, but powerful abstract model of computation. Let me 
describe this computation model in its most abstract and most simple form. We assert
a collection of widgets together with rules that govern the replacement of one widget by another.
For example, if we take the set of states of a computer as our collection of widgets and allow the 
replacement of one state by another, whenever the latter is reachable (in some sense) from the first, then
this constitutes a \emph{term rewrite system}. 
A fundamental property of rewrite systems is \emph{termination}, the non-existence
of endless rewrite steps, or replacements of one widget by another. Conclusively
strong techniques have been designed to ensure termination of term rewrite systems. In recent
years the emphasis shifted towards techniques that \emph{automatically} verify termination of 
a given term rewrite system.

Observe that despite its simplicity, rewriting is an abstract computation model that is equivalent to all notions
of computability mentioned above. For example any OCaml program is 
easily representable as a rewrite system. If this encoding is done carefully enough, then termination
techniques for rewriting become applicable to show termination of programs, sometimes even
fully automatically. 
The use of a functional programming language as an example may seem restrictive, as 
the representation of an OCaml program as a term rewrite system is typically simple.
However recent work aims at
the incorporation of imperative programming languages like Java or C. For example Java Bytecode
programs become applicable to this setting if a termination graph, representing the
program flow, is provided in a pre-processing step. The structural information of
these graphs can then be encoded as term rewrite systems. 
%I believe that this
%approach is very close to successful industrial applications.

Once we have verified termination of a given term rewrite system, we have (perhaps automatically) established
a very important property. But in accordance with the legacy of rewriting as a computation
model, we should strive for more. For a terminating system, we can consider the following problem: Given
some widget, how many replacement steps can we perform till no more replacement is possible?
Termination assert that this problem is well-defined. 

This naturally entails investigations into the \emph{complexity} of term rewrite systems. 
A term rewrite system is considered of higher complexity, if the number of possible rewrite steps 
is larger. In other words the complexity of a rewrite system is measured through the maximal
possible computation steps possible with this abstract program.

The investigations into this problem are the topic of my thesis. 
In line with earlier results presented in the literature the study is performed as an analysis
of \emph{termination methods}. I.e., instead of directly seeking techniques to establish the
complexity of a given term rewrite system I have studied the complexity \emph{induced} by 
termination methods. These investigations cover well-established termination orders as
well as modern (automatable) termination techniques. 

Through this indirect study, a higher level of understanding is possible. Not only have I
provided new techniques to analyse the complexity of a given rewrite system, but at the same
time I have rendered insights into the expressivity of termination methods. These results
may lead to a new generation of termination provers for term rewrite system. Currently a 
termination prover, if successful, will simply output \emph{yes}. In the future, termination
provers can perhaps be identified with  complexity analysers: 
given a terminating rewrite system the laconic \emph{yes} of the prover is replaced
by expressive information on the complexity of the system. 

\chapter{Summary} \label{Summary}

{
\input{arxiv.sty}

\section{Status of Research}

As already mentioned in the introduction, 
term rewriting is a conceptually simple, but powerful abstract model of computation. The foundation
of rewriting is equational logic and \emph{term rewrite systems} (\emph{TRSs} for short) 
are conceivable as sets of directed equations.

To be a bit more formal, let $\FS$ denote a finite set of function symbols, i.e., a signature and
let $\VS$ denote a countable set of variables. Then the set of
terms over $\FS$ and $\VS$ is denoted as $\TA(\FS,\VS)$. 
A TRS $\RS$ is a set of rewrite rules $l \rew r$, where $l$ and
$r$ are terms.
The \emph{rewrite relation} $\rsrew{\RS}$  is the least binary
relation on the set of terms $\TA(\FS,\VS)$ containing $\RS$ such that
(i) if $s \rsrew{\RS} t$ and $\sigma$ a substitution, then
$s\sigma \rsrew{\RS} t\sigma$ holds, and
(ii) if $s \rsrew{\RS} t$, then for all $f \in \FS$: $f(\ldots, s,\ldots) \rsrew{\RS} f(\ldots ,t,\ldots)$ holds.
We sometimes write $\rew$, instead of $\rsrew{\RS}$, if no
confusion can arise from this. 

The implicit orientation of equations in rewrite systems naturally gives rise to computations, 
where a term is rewritten by successively replacing
subterms by equal terms until no further reduction is possible. 
Such a sequence of rewrite steps is also called a
derivation. Term rewriting forms a Turing complete model of computation, 
hence fundamental questions as for example 
termination of a given TRS, are undecidable in general. 
Furthermore term rewriting underlies much of 
declarative programming. As a special form of equational logic it has 
also found many applications in automated deduction and verification.

In this thesis, I will consider termination problems in term rewriting and
the complexity of term rewrite systems as measured by the maximal length of derivations.

\subsection{Termination in Rewriting}

In the area of term rewriting~\cite{BaaderNipkow:1998,Terese} powerful methods
have been introduced to establish termination of a given TRS $\RS$. 
Earlier research mainly concentrated on inventing suitable reduction orders---% 
for example \emph{simplification orders}, see~\cite[Chapter 6]{Terese}---capable of
proving termination directly. 

As an example let us consider the \emph{multiset path order} 
(\emph{MPO} for short), cf.~\cite{Dershowitz:1982}. 
Let $>$ denote a strict partial order on the signature $\FS$. We call $>$ a \emph{precedence}. Let
$s$, $t$ be terms. For $s,t \not\in\VS$, we can write $s = f(s_1,\dots,s_m)$,
$t = g(t_1,\dots,t_n)$. We define $s \gmpo t$ if one of the 
following alternatives hold:
\begin{itemize}
\item there exists $i \in \{1,\dots,m\}$ such that $s_i \gmpo t$ or $s_i = t$,
\item $f > g$ and $s \gmpo t_j$ for all $1 \leqslant j \leqslant n$, or
\item $f = g$ and $\{s_1,\dots,s_m\} \gmpomul \{t_1,\dots,t_n\}$.
\end{itemize}
Here $\gmpomul$ denotes the multiset extension of $\gmpo$.

The definition of MPO entails, that if a TRS $\RS$ is compatible with
an MPO $\gmpo$, i.e., if ${\RS} \subseteq {\gmpo}$ then termination of $\RS$ follows. 
Clearly the reverse direction need not hold.
Consider the TRS $\RSa$ over the signature $\FS = \{\mf,\circ,\me\}$ 
taken from~\cite{Hofbauer:1991}:
\begin{align*}
      \mf(x) \circ (y \circ z) & \rew x \circ (\mf (\mf (y)) \circ z) 
      \tag{1} \\
      \mf(x) \circ (y \circ (z \circ w)) & \rew x \circ (z \circ (y \circ w)) 
      \tag{2} \\
      \mf(x) & \rew x
      \tag{3}
\end{align*}
The TRS $\RSa$ is terminating. However we cannot find a precedence $>$ such
that ${\RSa} \subseteq {\gmpo}$. This is due to
the fact that $\RSa$ encodes the Ackermann function as we will see in
the next section.
\label{trs:rsa}

In recent years the emphasis shifted towards transformation techniques like 
the \emph{dependency pair method}~\cite{ArtsGiesl:2000}, its extension the
\emph{dependency pair framework}~\cite{Thiemann:2007} or 
\emph{semantic labeling}~\cite{Zantema:1994,Zantema:1995}. 
The advantage---in particular of the dependency pair method---being that 
they are easily automatable.

Here we briefly recall the basics of the dependency pair method. Below on
page~\ref{p:SL} we will also recall the semantic labeling technique. In the presentation
of the dependency pair method we follow~\cite{ArtsGiesl:2000,HirokawaMiddeldorp:2005}. 
Let $t$ be a term. We set $t^\sharp \defsym t$ if $t \in \VS$, and 
$t^\sharp \defsym f^\sharp(t_1,\dots,t_n)$ if $t = f(\seq{t})$.
Here $f^\sharp$ is a new $n$-ary function symbol called 
\emph{dependency pair symbol}. For a signature
$\FS$, we define $\FS^\sharp = \FS \cup \{f^\sharp \mid f\in \FS\}$.
The set $\DP(\RS)$ of  \emph{dependency pairs} of a TRS $\RS$
is defined as the of pairs $l^\sharp \to u^{\sharp}$, where $l \to r \in \RS$ and
$u$ is a subterm of $r$, whose root symbol is a defined function symbol. Moreover
$u$ is not a proper subterm of $l$.

It is not difficult to see that a TRS $\RS$ is terminating if and only if 
there exists no infinite derivation of the following form 
\begin{equation*}
  t_1^\sharp \rssrew{\RS} t_2^\sharp \rsrew{\DP(\RS)} t_3^\sharp \rssrew{\RS} \ldots \tkom
\end{equation*}
where for all $i>0$, $t_i^\sharp$ is terminating with respect to $\RS$. 
This is a key observation in the formulation of the basic setting of the
dependency pair method. We obtain the following characterisation of
termination due to Arts and Giesl:
\label{p:DP}
\begin{itemize}
\item A TRS $\RS$ is terminating if and only if there exist
a reduction pair $(\gtrsim,\succ)$ such
that ${\DP(\RS)} \subseteq {\succ}$ and ${\RS} \subseteq {\gtrsim}$.
\end{itemize}
Here a \emph{reduction pair} $(\gtrsim, \succ)$ consists of a
rewrite preorder $\gtrsim$ and a compatible well-founded order $\succ$
which is closed under substitutions; compatibility means 
the inclusion ${\gtrsim \cdot \succ \cdot \gtrsim} \subseteq {\succ}$.

Another development is the use of automata 
techniques to prove termination~\cite{Geser2004,Geser2005,KM:2007}.
Moreover the technique to show termination by building an order-preserving mapping
into a well-founded domain has received renewed attention~\cite{Hofbauer2006,EWZ:2008,KW:2008}. 

These methods, among others, are used
in several recent software tools that aim to prove termination automatically. We mention
\aprove~\cite{Aprove1.2}, \cime~\cite{Cime:2003}, \jambox~\cite{EWZ:2008},  
\matchbox~\cite{Matchbox:2004}, \muterm~\cite{Muterm:2004}, \tpa~\cite{TPA:2006}, \tttt~\cite{TTTT:2008}.
In the termination competition (a subset of) these provers compete against
each other in order to prove termination of TRSs automatically, see
\begin{center}
\url{http://termcomp.uibk.ac.at} \tkom  
\end{center}
for this ongoing event. 

\subsection{Complexity of Rewrite Systems} \label{Complexity}

In order to assess the complexity of a TRS it is natural to look at the maximal
length of derivation sequences, a program that has already 
been suggested in~\cite{HofbauerLautemann:1989}. 
See also~\cite{CKS:1989} for a complementary study of the complexity of term rewrite systems. 

The \emph{derivational complexity function} 
with respect to a (terminating) TRS $\RS$ 
relates the length of a longest derivation sequence 
to the size of the initial term. 
Observe that the derivational complexity function is conceivable as 
a measure of proof complexity. Suppose an equational theory is 
representable as a \emph{convergent} (i.e., a confluent and terminating) TRS, then
rewriting to normal form induces an effective procedure to 
decide whether two terms are equal over a given equational theory.
Thus the derivational complexity with respect to a
convergent TRS amounts to the \emph{proof complexity} of this proof of identity.
 
In order to make further discussion more concrete, we present the central definitions. 
Let $\RS$ denote a finitely branching and terminating TRSs. 
The \emph{derivation length function} $\dheight(t,\rew)$ of a term $t$ with respect
to a rewrite relation $\rew$ is defined as 
\begin{equation*}
  \dheight(t,\rew) = \max \{n \mid 
  \exists (t_0,\dots,t_n) \colon t=t_0 \rew t_1 \rew \dots \rew t_n \}
\end{equation*}
To make the notion of derivation length independent of the choice of $t$ one defines the
\emph{derivational complexity function} (\emph{with respect to $\RS$}):
\begin{equation*}
  \Dc{\RS}(n) =\max(\{\dheight(t,\rsrew{\RS}) \mid \size{t} \leq n\}) \tkom
\end{equation*}
where $\size{\cdot}$ denotes a suitable term-complexity measure of $t$, 
e.g.\ the number of symbols in $t$. Observe that even for terminating and finitely
branching TRS $\RS$, the induced derivational complexity function $\Dc{\RS}$
is only well-defined, if either the signature $\FS$ is finite or $\RS$ is finite.
Hofbauer and Lautemann~\cite{HofbauerLautemann:1989} showed that for finite TRS
\begin{itemize}
\item a termination proof by polynomial interpretations implies a double-exponential
upper bound on the derivational complexity.
\end{itemize}
With respect to syntactically defined termination orders, 
Hofbauer~\cite{Hofbauer:1991,Hofbauer:1992} established that for finite TRSs
\begin{itemize}
\item a termination proof via the multiset path order implies that there is a primitive
    recursive bound on the derivational complexity. 
\end{itemize}
It is this result that explains why we cannot find a multiset path order
compatible with the TRS $\RSa$; recall that this TRS essentially encodes
the Ackermann function, which in turn implies that its derivational complexity
is at least the Ackermann function.

Weiermann~\cite{Weiermann:1995:tcs}, and
Lepper~\cite{Lepper:2001a} established that for finite TRSs
\begin{itemize}
\item a termination proof via the lexicographic path order (LPO for short) induces a multiple
    recursive bound on the derivational complexity, and
\item if termination is shown by the Knuth-Bendix order (KBO for short), then 
the derivational complexity function is a member of $\Ack(\bO(n),0)$, where $\Ack$ denotes the 
binary Ackermann function. 
\end{itemize}
In all mentioned cases the upper bounds are optimal, i.e., it is
possible to provide TRSs, whose derivational complexity function 
form a tight lower bound on the established upper bounds. 

For a specific TRS $\RS$ the mentioned results yield precise upper bounds 
on the derivational complexity of $\RS$, i.e.,
depending on $\RS$, one can compute exact upper bounds. 
Therefore, these results constitute an a priori
complexity analysis of TRSs provably terminating by polynomial interpretations, 
MPOs, LPOs, or KBOs. As term rewriting forms the basis of declarative programming, such
complexity results transcend naturally to (worst-case) complexity results on declarative programs.

It is well-known that all mentioned termination methods are incomparable in the sense that
there exist TRSs whose termination can be shown by one of the these techniques, but not
by any of the other. Still it is a deep and interesting question, how to characterise 
the strength of termination methods in the large. Dershowitz and Okada argued that the 
order type of the employed reduction order would constitute 
a suitable uniform measure, cf.~\cite{DershowitzOkada:1988}.
Clearly the complexity induced by a termination method serves equally well (or perhaps better)
as such a measure, cf.~\cite{HofbauerLautemann:1989}. 
See~\cite{Hofbauer:1991,Cichon:1992,Hofbauer:2001,Lepper:2002,Touzet:2002} 
for further reading on this subject.

In the remainder of this thesis I refer to investigations on (derivational) complexities
of TRSs as \emph{complexity analysis of term rewrite systems}.

To conclude this section, let me apply Lepper's result to the
motivating TRS $\RS$. I kindly refer the reader to Chapter~\ref{lpar06} for a formal
definition of KBO. Consider the precedence $>$ defined by $\mf > \circ > \me$ together
with the weight function $\Wei(\mf) = \Wei(\circ) = 0$ and $\Wei(\me) = 1$. Let $\gkbo$ be the
KBO\ induced by $>$ and $\Wei$. Then for all rules $l \rew r \in \RS$, 
$l \gkbo r$. 
As $\RS$ is compatible with $\gkbo$, we infer that the
derivational complexity of $\RS$ is bounded from above by the Ackermann function. Moreover
this upper bound is tight, cf.~\cite{Hofbauer:1991}. Hence we conclude that the
derivational complexity function $\Dc{\RSa}$ features the same growth rate as the
Ackermann function.

\subsection{Proof-Theoretic Analysis of Complexity Analysis} \label{ProofTheory}

The study of the length of derivations of TRSs has stirred some attention in proof theory, 
cf.~\cite{Weiermann:1995,Buchholz:1995,Arai:1998:trs,Friedman:2001}.
This proof-theoretic interest in termination proofs and their induced complexity
is not surprising. After all, the conceptual theme here is an old theme of
proof theory. We can suit a conception question dedicated to Kreisel to the present
context and ask:
\begin{quote}
  \emph{What more do we know from a termination proof, than the mere fact of termination?}
\end{quote}
While Hofbauer, Weiermann, and Lepper gave answers to this question for specific
instances of termination proofs it was Buchholz who delivered a direct proof-theoretic
analysis. In order to explain this result, we need a few definitions. 

As usual \emph{Peano Arithmetic} refers to the first-order axiomatisation of 
number theory and $I\Sigma_1$ is the fragment of Peano Arithmetic, where the
axiom of mathematical induction is restricted to existential induction formulas. 
The fragment $I\Sigma_1$ is relatively weak, but strong enough to prove the
totality of primitive recursive functions. 

Let $f\colon \N \to \N$ be a function on the naturals and let $\Godelnum{f(x) = y}$ 
denote a suitable chosen computation predicate for the function $f$. 
Then $f$ is called \emph{provably recursive} (in the theory $T$) if
$T\vdash \forall x \exists y \ \Godelnum{f(x) = y}$ holds, i.e., the 
totality of $f$ is provable in $T$.
Note that the provably recursive functions of the theory $I\Sigma_1$ are exactly
the primitive recursive functions.
Let $\succ$ denote a simplification order like MPO, LPO, or KBO and let
$W$ denote the \emph{accessible part} of $\succ$ on $\TA(\FS,\VS)$, i.e., 
\begin{equation*}
  W = \bigcap \{ X \subseteq \TA(\FS,\VS) \mid \forall t 
  ( \forall s (s \prec t  \impl s \in X) \impl t \in X ) \tpkt
\end{equation*}
Then well-foundedness of $\succ$ can be shown using (second-order) induction over $W$
(see~\cite{Buchholz:1995} but also~\cite{GL:2001}).
In proof one uses the axioms 
$(\dag): \forall t ( \forall s (s \prec t  \impl s \in X) \gdw t \in W)$
and
$(\ddag): \forall t \in W ( \forall s ( s \prec t \impl F(s) ) \impl F(t)) \impl \forall t \in W F(t)$,
together with the definition of $\succ$.

Based on this well-foundedness proof, Buchholz observes 
that Hofbauer's result is a consequence of the following meta-theorem, cf.~\cite{Buchholz:1995}.
\begin{itemize}
\item If $\succ$ is a primitive recursive relation on $\TA(\FS,\VS)$ such that
$I\Sigma_1 \vdash \Godelnum{s \succ t} \impl \Godelnum{s \gmpo t}$ and $W$ is a $\Sigma_1$-set
such that $I\Sigma_1$ proves axioms $(\dag)$ and $(\ddag)$ for 
all $\Sigma_1$-formulas, then well-foundedness of $\succ$ is $I\Sigma_1$-provable. 
\end{itemize}
To see this, Buchholz defines \emph{finite approximations} $\succ_k$ of $\succ$ so that
the assertions of the meta-theorem are fulfilled. Furthermore he proves that compatibility
of a TRS $\RS$ with $\gmpo$ implies that ${\RS} \subseteq {\succ_k}$ for some $k$
depending only on $\RS$. Conclusively the meta-theorem asserts that 
well-foundedness of $\succ_k$ is provable in $I\Sigma_1$. Hence
the derivational complexity function $\Dc{\RS}$ is contained
in the class of provably recursive functions of $I\Sigma_1$. And thus the
derivational complexity function $\Dc{\RS}$ is primitive recursive.

The definition of these approximations is surprisingly simple: It suffices to
guarantee that $\succ_k$ fulfils the definition of MPO above and additionally 
$s \succ_k t$ implies $\size{s} + k \geqslant \size{t}$, cf.~\cite{Buchholz:1995}.
A similar argument works for LPO, i.e., Hofbauer's and
Weiermann's result can both be obtained directly by a proof-theoretic analysis.
It is worthy of note that Buchholz's proof is more general than 
the combinatorial arguments of Hofbauer and Weiermann. First observe that in
any case, we cannot dispense the assumption that the 
set $\{ \size{r} \mid l \to r \in \RS\}$ is bounded.
However in~\cite{Hofbauer:1992,Weiermann:1995} interpretations into the natural number 
are employed that crucially rest on the cardinality of $\FS$ and 
the maximal arity of the symbols in $\FS$. In contrast to this~\cite{Buchholz:1995} 
makes only use of the finiteness of the signature.

Observe that a proof-theoretical analysis of KBO along this lines is not
possible. Although well-foundedness of finite approximations of KBO are
formalisable in Peano Arithmetic, the needed fragment is too strong to
provide an optimal complexity analysis. On the other hand proof theory can be used successfully
to extend Lepper's result (see Chapter~\ref{lpar06}).

The connection of rewriting and proof theory is also addressed in~\cite{Cichon:1992}
(see also Chapter~\ref{rta03}) where Cichon emphasises a connection between the
order type of a given reduction order $\succ$ and the induced derivational
complexity. More precisely, the so-called \emph{Cichon's principle}
can be formulated as follows.
\begin{quote}
\emph{The (worst-case) complexity of a rewrite system for which 
termination is provable using a
reduction order of order type $\alpha$ is eventually dominated 
by a function from the \emph{slow-growing hierarchy} along $\alpha$.} 
\end{quote}
Here the slow-growing hierarchy denotes a hierarchy of number theoretic
functions $\Slow{\alpha}$, indexed by transfinite ordinals $\alpha$, 
whose growth rate is relatively slow: for example $\Slow{\omega}(x)= x+1$,
where $\omega$ denotes the first limit ordinal. 
This function hierarchy is sometimes called \emph{point-wise} in the
literature, cf.~\cite{Gir87}.
See~\cite[Chapter~3]{Buss:1998} for further reading.

It ought to be stressed that this principle is false in general. 
According to Cichon's principle, for any simply terminating TRS $\RS$, the
derivational complexity function $\Dc{\RS}$ should be majorised by a 
multiple-recursive function~\cite{Peter:1967}. This however, is not true.
In~\cite{Touzet:1998} Touzet introduced
a rewrite system $\RS$ coding a restrained version of the (standard) 
Hydra Battle~\cite{KirbyParis:1982} such that 
no multiple-recursive function can majorise the derivational complexity function. 
Furthermore, $\RS$ is simply terminating. 

Thus the principle fails even for simply terminating rewrite systems. Motivated
by these negative results Touzet asserts that the \emph{Hardy hierarchy}, a
hierarchy of rather fast growing functions, index by ordinals is the right
tool to connect the order type and derivation lengths. This point is enforced
by later results due to Lepper, cf.~\cite{Lepper:2004}.

However, note that Cichon's principle is correct for two instances
of simplification orders mentioned above: MPO and LPO. Essentially this 
follows from the mentioned results by Hofbauer and Weiermann, 
cf.~\cite{Hofbauer:1992,Weiermann:1995}. Buchholz's proof-theoretic
analysis provides some explanation. Namely for $\RS$ 
compatible with MPO, or LPO, the termination proof
does not make full use of the order type of (the class of) MPOs or LPOs, but only in a 
\emph{point-wise} way. Note that the termination proof can even be formalised in a provability
relation that makes use of ordinals only in a point-wise way (see Arai~\cite{Arai:1998} for
a more precise account of this connection).

Let me conclude this section by mentioning that Cichon's principle underlies the 
open problem \# 23 in the list of open problems in rewriting (\emph{RTALooP} for short, 
see~\url{http://rtaloop.mancoosi.univ-paris-diderot.fr/}).
\begin{quote}
\emph{Must any termination order used for proving termination of the Battle of Hydra and Hercules-system have the Howard ordinal\footnote{The Howard-Bachmann ordinal is the proof theoretic ordinal of the arithmetical theory of one inductive definition, see~\cite[Chapter~3]{Buss:1998}. Note that the Howard-Bachmann ordinal easily dwarfs the proof-theoretical ordinal of Peano Arithmetic $\epsilon_0$.} 
as its order type?}  
\end{quote}
In~\cite{Moser:2009} (see also~\cite{DershowitzMoser:2007}) I resolve this problem
by answering it in the negative.

\section{Research Program}

The goal of my research is to make complexity analysis of term rewrite systems:
\begin{labeling}[~-]{\textbf{modern~-}}
\item[\textbf{modern}] by studying the complexities induced by modern termination techniques,

\item[\textbf{useful}] by establishing refinements of existing termination techniques 
  guaranteeing that the induced complexity is bounded by functions of 
  low computational complexity, for example polytime computable function,

\item[\textbf{broad}] by analysing the complexities of 
  higher-order rewrite systems and for TRSs based on particular rewrite strategies.
\end{labeling}
Further I want to ensure that the results that are developed in these 
three areas provide computable and precise bounds. 
To this avail I am working (together with
Avanzini and Schnabl) on a software tool: 
the \emph{Tyrolean Complexity Tool} (\tct\ for short) that 
incorporates the most powerful techniques to analyse the complexity of 
rewrite systems that are currently at hand. 

In order to test the competitive capability of \tct, a specialised category for complexity
analysers has been integrated into the international termination competition; see
\url{http://termcomp.uibk.ac.at} or Section~\ref{Conclusion}
for further details.

\subsection{Modern Termination Techniques}
\label{Modern}

Modern termination provers rarely employ base orders as those mentioned above directly. 
To the contrary almost all modern termination provers use variants of the dependency pair method to prove termination.
Hence, we cannot easily combine the result of a modern termination prover and the above results
to get insights on the complexity of a given terminating TRS~$\RS$.

To improve the situation, I investigated modern termination techniques, as are usually employed in 
termination provers, in particular I studied the complexities induced by the dependency pair method,
cf.~Section~\ref{E:modern}. 
As indicated below, a complexity analysis based on the dependency pair method
is and (in its full generality) remains a challenging task. See~\cite{MoserSchnabl:2009} 
for recent developments in the complexity analysis of this technique.
However, if we can obtain optimal complexity analysis results induced by modern techniques, 
we can significantly extend the expressivity of termination provers in general.

Consider a sorting algorithm $\Program$ like insertion
sort and its implementation in a functional programming language
like OCaml:
\begin{ocaml}
let rec insert x = function
  | [] -> [x]
  | y :: ys -> if x <= y then x :: y :: ys 
  else y :: insert x ys ;;
let rec sort = function
  | [] -> []
  | x :: xs -> insert x (sort xs) ;;
\end{ocaml}
It is not difficult to translate this program into a TRS $\RS$, such that termination of $\RS$ implies termination of $\Program$. With ease, termination of $\RS$ can be verified automatically.
If we can extend automatic termination proofs by expressive certificates on the 
complexity of $\RS$, we obtain an automatic complexity analysis on $\Program$.
I.e., the prover gives us in addition to the assertion that $\RS$ is terminating, 
an upper bound on the complexity of $\RS$ (and therefore of $\Program$).

Of course this goal requires theoretical and practical work: 
Firstly deep theoretical considerations on the complexity induced by 
modern termination techniques are necessary and secondly modern termination provers 
have to be extended suitably to render the sought certificates automatically. Finally
the complexity preservation of the transformation from the program $\Program$ into 
the TRS $\RS$ has to be established. 
Adapting transformation techniques as mentioned in the Introduction, it seems possible to extend 
this approach to imperative programming languages like Java or C without too much difficulties.
See~\cite{OBEG:2009,FGPSF:2009} for current work on the termination analysis of imperative programs via
rewriting. 

In the following I discuss the challenges of this endeavour for the key examples of the
dependency pair method and semantic labeling. In particular a complete analysis of the former is
of utmost importance as this technique has extended the termination proving power of automatic
tools significantly. 
To clarify my point, I briefly state the central observations and apply the dependency pair
method to the example given in the introduction. For further
information on the concepts and definitions employed, I kindly refer the reader 
to~\cite{ArtsGiesl:2000,HirokawaMiddeldorp:2005}; further refinements can be found e.g.~in~\cite{HirokawaMiddeldorp:2007,Giesl:2006}. 
\begin{itemize}
\item A TRS $\RS$ is terminating if and only if for every cycle $\C$ in the dependency graph $\DG(\RS)$
there are no $\C$-minimal rewrite sequences.
\item If there exists an argument filtering $\pi$ and a reduction pair $(\gtrsim,>)$ so that
  \linebreak
  ${\pi(\RS)} \subseteq {\gtrsim}$, ${\pi(\C)} \subseteq {\gtrsim \cup >}$, and ${\pi(\C)} \cap {>} \not= {\varnothing}$, then
  there are no $\C$-minimal rewrite sequences.
\end{itemize}
Note that this result is a refinement of the characterisation
of termination mentioned on page~\pageref{p:DP} above. The dependency graph $\DG(\RS)$ 
essentially plays the role of a call graph in program analysis.

Efficient implementations of the dependency pair method consider maximal cycles instead of cycles.%
\footnote{In the literature maximal cycles are sometimes called \emph{strongly connected components}. We use this
notion in its original graph-theoretic definitions later on, see Chapter~\ref{lpar08}. 
Hence I refrain from following this convention.} 
Moreover the stated criteria are applied recursively, 
by disregarding dependency pairs that are already strictly
decreasing, cf.~\cite{HirokawaMiddeldorp:2005}. 
Consider the TRS $\RSa$, defined on page~\pageref{trs:rsa}. In the first step the rules~(1)--(3) 
are extended by the TRS $\DP(\RSa)$:
\begin{align*}
  \mf(x) \circ^{\sharp} (y \circ z) & \rew x \circ^{\sharp} (\mf(\mf(y)) \circ z) \tag{4}\\
  \mf(x) \circ^{\sharp} (y \circ z) & \rew \mf(\mf(y)) \circ^{\sharp} y \tag{5} \\
  \mf(x) \circ^{\sharp} (y \circ z) & \rew \mf^{\sharp}(\mf(y)) \tag{6}\\
  \mf(x) \circ^{\sharp} (y \circ z) & \rew \mf^{\sharp}(y) \tag{7}\\
  \mf(x) \circ^{\sharp} (y \circ (z \circ w)) & \rew x \circ^{\sharp} (z \circ (y \circ w)) \tag{8}\\
  \mf(x) \circ^{\sharp} (y \circ (z \circ w)) & \rew  z \circ^{\sharp} (y \circ w) \tag{9}\\
  \mf(x) \circ^{\sharp} (y \circ (z \circ w)) & \rew y \circ^{\sharp} w \tag{10}
\end{align*}
The next step is to compute an approximated dependency graph for $\DP(\RSa)$, presented in 
Figure~\ref{fig:1}.
\begin{figure}
\label{fig:1}
 \centering
 \begin{tikzpicture}[xscale=1.5,yscale=1.5] 
   \node[shape=circle, draw] (4) at (5,1) {$4$} ;
   \node[shape=circle, draw] (5) at (3,3) {$5$} ;
   \node[shape=circle, draw] (6) at (1,3) {$6$} ;
   \node[shape=circle, draw] (7) at (7,2) {$7$} ;
   \node[shape=circle, draw] (8) at (5,3) {$8$} ;
   \node[shape=circle, draw] (9) at (3,1) {$9$} ;
   \node[shape=circle, draw] (10) at (1,1) {$10$} ;
   \draw[->] (10) edge [loop left] (10) ;
   \draw[->] (9) edge [loop below] (9) ;
   \draw[->] (8) edge [loop above] (8) ;
   \draw[->] (5) edge [loop above] (5) ;
   \draw[->] (4) edge [loop below] (4) ;
   \draw[->] (10) edge (6) ;   
   \draw[->] (10) .. controls (1,0.25) .. (5,0.25) 
   .. controls (7,0.25) and (7,1) .. (7) ;   
   \draw[->] (10) -- (9) ;
   \draw[->] (10) -- (5) ;
   \draw[->] (9) -- (5) ;
   \draw[->] (9) -- (8) ;
   \draw[->] (9) -- (7) ;
   \draw[->] (9) -- (4) ;
   \draw[->] (9) edge (6) ;
   \draw[->] (8) -- (9) ;
   \draw[->] (8) -- (5) ;
   \draw[->] (8) -- (7) ;
   \draw[->] (8) -- (10) ;
   \draw[->] (8) edge [bend right=45] (6) ;
   \draw[->] (8) -- (4) ;
   \draw[->] (5) -- (10) ;
   \draw[->] (5) -- (9) ;
   \draw[->] (5) -- (8) ;
   \draw[->] (5) edge (6) ;
   \draw[->] (5) -- (4) ;
   \draw[->] (5) -- (7) ;
   \draw[->] (4) -- (8) ;
   \draw[->] (4) -- (7) ;
\end{tikzpicture}
\caption{The approximated dependency graph $\DP(\RSa)$.}
\end{figure}
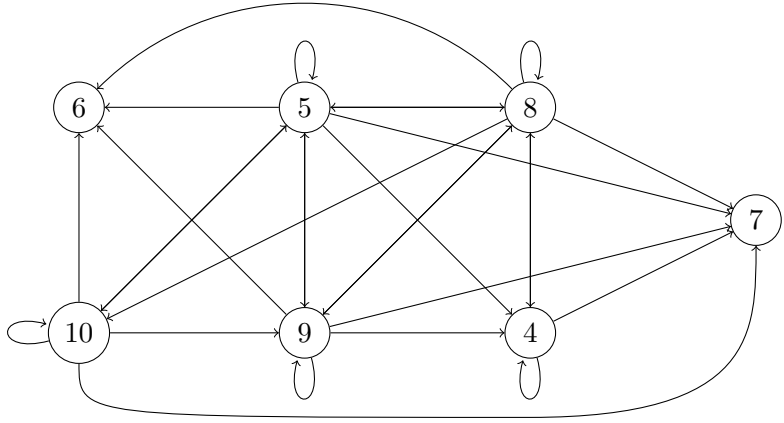

This graph contains only one maximal cycle comprising the rules $\{4,5,8,9,10\}$. 
By taking the polynomial interpretation $\mf_{\N}(x) = x$
and $\circ_{\N}(x,y) = \circ_{\N}^{\sharp}(x,y) = x + y + 1$ 
the rules in $\{1-4,8\}$ are weakly decreasing and
the rules in $\{5,9,10\}$ are strictly decreasing. The remaining maximal cycle $\{4, 8\}$ is
handled by the subterm criterion~\cite{HirokawaMiddeldorp:2007}.

This simple example should clarify the challenge of the dependency pair method in the
context of complexity analysis.
Recall that the TRS $\RSa$ (essentially) represents the binary Ackermann function. Hence the complexity
cannot be bounded by a primitive recursive function. However, the only information we can directly
gather from the given proof is the use of polynomial interpretations and the subterm criterion. Neither
of these methods is individually---i.e., as a basic termination technique---of sufficient strength to
yield an upper bound on the complexity of $\RS$. 

\subsubsection{Semantic Labeling}

Similar to the dependency pair method, semantic labeling is a transformation technique. 
Its central idea is to employ semantic information on the given TRS~$\RS$ to 
transform $\RS$ into a labeled TRS $\Rlab$ such that $\RS$ is terminating if $\Rlab$ is terminating. 
The obtained annotated TRS is typically larger and may even be infinite, but the structure may
get simpler. If this is indeed the case then TRSs whose termination proof is challenging 
can be handled with relatively simple methods. 

Let $\A$ be a model of the TRS $\RS$. A \emph{labeling} $\ell$ for $\A$ consists of a set of labels
$L_f$ together with mappings $\ell_f \colon A^n \to L_f$ for every $f \in \FS$, $f$ $n$-ary,
where $A$ is the domain of the model $\A$. For every assignment $\alpha \colon \VS \to A$, let 
$\lab{\alpha}$ denote a mapping from terms to terms defined as follows:
\begin{equation*}
  \lab{\alpha}(t) \defsym 
  \begin{cases}
    t & \text{if $t \in \VS$} \tkom \\
    f(\lab{\alpha}(t_1),\dots,\lab{\alpha}(t_n)) & \text{if $t=f(t_1,\dots,t_n)$ and $L_f = \varnothing$} \tkom \\
    f_a(\lab{\alpha}(t_1),\dots,\lab{\alpha}(t_n)) & \text{otherwise} \tpkt
  \end{cases}
\end{equation*}
The label $a$ in the last case is defined as $\ell_f(\eval{\alpha}{\A}{t_1},\dots,\eval{\alpha}{\A}{t_n})$,
where $\eval{\alpha}{\A}{t}$ denotes the evaluation of term $t$ with respect to the model $\A$ and 
the assignment $\alpha$.
The \emph{labeled} TRS $\Rlab$ is defined as
\begin{equation*}
  \{\lab{\alpha}(l) \rew \lab{\alpha}(r) \mid \text{$l \rew r \in \RS$ and $\alpha$ an assignment}\} \tpkt
\end{equation*}
Below I state the central result for one variant of semantic labeling, for further refinements 
see~\cite{Zantema:1994,Zantema:1995,Torpa:2005,KoprowskiZantema:2006,HirokawaMiddeldorp:2006}.
\begin{itemize}
\label{p:SL}
\item Let $\RS$ be a TRS, $\A$ a model for $\RS$, and
$\ell$ a labeling for $\A$. Then $\RS$ is terminating if and only if
the labelled TRS~$\Rlab$ (with respect to $\ell$) is terminating.
\end{itemize}
Note that the model $\A$ of the TRS $\RS$ is used to represent the semantic information
of $\RS$.

Semantic labeling turns out to be a promising candidate for complexity analysis. It is not
difficult to see that the derivation length of each term with respect to $\RS$ equals
its derivation length with respect to $\Rlab$. Therefore results on the complexity of
$\Rlab$ are transferable to the original system. 
The latter remains true, if refinements of the semantic labeling technique are used. 

Still there is plenty of room for research, as often the transformed system $\Rlab$ is infinite. 
However, the mentioned results on derivational complexities in Section~\ref{Complexity} 
do not necessarily carry over to infinite TRSs.  
Indeed in the case of a TRS compatible with either MPO or LPO it is easily verified
that the results become false for any computable upper bound, cf.~Section~\ref{E:modern}.

\subsubsection{Modern Direct Termination Techniques}

Let me briefly mention known results on the complexities of modern termination techniques 
that we have not yet treated. 
First, we consider the \emph{match-bound technique}~\cite{Geser2004,GHWZ07} 
a new method for automatically proving termination of left-linear term rewriting systems. 
In~\cite{GHWZ07} linear derivational complexity for linear match-bounded TRSs is established,
but exponential lower bounds exist for top-bounded TRSs.
This result extends to non-linear, 
but non-duplicating TRSs. For non-left-linear (non-duplicating) TRSs, 
the notion of match-boundedness has to be replaced by match-raise-boundedness. The
latter technique is introduced in~\cite{KM:2007}. 
Employing~\cite{HW:2004,KM:2007} it is not difficult to argue 
that any non-duplicating, right-linear, and match-raise bounded TRS 
induces at most linear derivational complexity. 
In the context of derivational complexity analysis 
the restriction to non-duplicating TRSs is harmless, as 
any duplicating TRS induces at least exponential derivational complexity, 
see Section~\ref{Established} for further details.

Secondly, consider the matrix interpretation method~\cite{Hofbauer2006,EWZ:2008}. 
In general the existence of a matrix interpretation for a given TRS $\RS$ 
induces exponential upper bounds on the derivational complexity function $\Dc{\RS}$. 
However, two specific instances of the matrix interpretation method have recently been
studied in the literature: the \emph{arctic matrix method} and \emph{triangular matrices}.

The arctic matrix method employs matrices over the arctic semi-ring that employs as
domain the set $\N \cup \{-\infty\}$ together with the operations maximisation
and addition, see~\cite{KW:2008}. This technique induces linear derivational
complexity for string rewrite systems%
\footnote{String rewrite systems are a specific class of
TRSs, such that all function symbols have unary arity.}
if employed as a direct termination technique, cf.~\cite{KW:2008}. 
On the other hand, triangular matrix interpretations restrict the form of matrices
(defined over the natural numbers $\N$ together with the usual operations) to
upper triangular form. In~\cite{MSW:2008} we establish that the induced derivational
complexity is polynomial, where the degree of the polynomial is the dimension
of the employed matrices.

\subsection{Low-Complexity Bounding Functions} \label{LowLevel}

The greatest hindrance in exploiting the majority of results on derivational complexities 
is the fact that the obtained upper bounds for general TRS are only of theoretical value. 
Even the smallest bound, i.e., the double-exponential bound, 
mentioned in Section~\ref{Complexity}, cannot be considered computationally feasible. 
Note that, while this upper bound is tight for 
the \emph{class} of polynomially terminating TRSs, it is not
difficult to find polynomially terminating TRSs whose derivational complexity functions grow much
slower than double-exponentially. In the same spirit we easily find TRSs compatible with
MPO, LPO, or KBO, respectively that do not exhibit the theoretical upper bound on the derivational complexity 
presented in Section~\ref{Complexity}.

With respect to polynomial interpretations, this 
observation led for example to the development of context-dependent interpretations, that allow
a finer analysis of the derivational complexity~\cite{Hofbauer:2001} and
with respect to LPO-termination this leads to a refined analysis of the above
mentioned result that LPO\ induces multiple recursive upper bounds, cf.~\cite{Arai:1998:trs}.
Therefore, one would want a more careful calibration of the results mentioned in
Section~\ref{Complexity}, so that the analysis of complexities of TRSs
becomes more versatile. Below I will take this further by 
striving for bounding functions on the complexity that are feasible, or at least 
belong to one of the lower classes in the polynomial hierarchy. 

This is a non-trivial task: Even in the case of linear termination, where 
we restrict the interpretation functions to linear polynomials, the derivational complexity is
optimally bounded by an exponential function, cf.~\cite{HofbauerLautemann:1989}. 
Moreover, although we can characterise
the class of \emph{polytime computable functions} (\emph{$\fptime$} for short) by carefully controlling the
way the successor symbols are interpreted, feasible upper bounds on the derivation length
requires new ideas, see for example~\cite{CichonLescanne:1992,BCMT:2001,Hofbauer:2001}
but also~\cite{fsttcs:2005} (Chapter~\ref{fsttcs05}) and~\cite{AM:2008} (Chapter~\ref{flops08}).

The quest for low-complexity bounding functions highlights a shortcoming of
the traditional notion of \emph{derivational complexity} that I will discuss now. 
While the derivational complexity function $\Dc{\RS}$ is well-motivated 
if we are mainly concerned with the strength of (direct) termination techniques 
(see~\cite{HofbauerLautemann:1989,Hofbauer:1991,Cichon:1992,Hofbauer:2001}), its
usability becomes more questionable in the wider perspective we take here.

Consider a TRS $\RS$, encoding a functional program $\Program$. Then 
the applicability of our results in program analysis hinges on the
fact that results on the complexity of $\RS$ are meaningful measures of the (runtime-)complexity
of the program $\Program$. 
For example, consider the version of insertion sort introduced in Section~\ref{Modern}.
Typically, we will not call the function \texttt{sort} iteratively, but 
\texttt{sort} will be given values as arguments. Consequently we are only
interested in the runtime complexity of such a function call, but the above
given definition of derivational complexity may overestimate
this complexity.
In particular, the following example indicates that the derivational complexity function
may overestimate the ``real'' complexity of a computed function for purely syntactic
reasons. Consider the TRS $\RSb$
\begin{align*}
\isempty(\nil) &\rew {\top} \\
\isempty({x} \cons {y}) &\rew {\bottom}\\
\hd({x} \cons {y}) &\rew x \\
\tl({x} \cons {y}) &\rew y\\
\app(x,y) &\rew \ifapp(x,y,x) \tag{$\ast$}\\
\ifapp(x,y,{u} \cons {v}) & \rew {u} \cons {\app(v,y)}  \tpkt
\end{align*}
Although the functions \emph{computed} by $\RSb$ are obviously feasible this
is not reflected in the derivational complexity of $\RS$. Consider rule $(\ast)$, which
I abbreviate as $C[x] \to D[x,x]$. 
Since the maximal derivation length starting with $C^n[x]$ equals $2^{n-1}$ for all $n > 0$,
$\RS$ admits (at least) exponential derivational complexity.
A possible solution how to overcome this obstacle, 
is discussed in Section~\ref{Established} below.
\label{trs:rsb}

\subsection{Strategies and Higher-Order Rewriting} \label{Broad}

Reduction strategies in rewriting and programming have attracted increasing attention within 
the last years. New types of reduction strategies have been invented and investigated, and 
new results on rewriting under particular strategies have been obtained. 

To be precise, I recall the central definitions, for further details 
see~\cite[Chapter~9]{Terese}.
A rewrite strategy for a TRS is a mapping $\strategy$ that assigns to every term
$s$ not in normal form a non-empty set of finite non-empty derivations starting from
$s$. We say that $s$ rewrites under the strategy $\strategy$ to the term $t$, if 
$s \rstrew{\RS} t \in \strategy(t)$. 
Typically strategies are defined by selecting the redexes which are to be contracted in
each step. Examples of such strategies are the \emph{leftmost outermost rewrite strategy}, where
always the leftmost outermost redex is selected. Likewise, the \emph{leftmost innermost strategy} contracts
the leftmost of the innermost redexes. Other examples of strategies are the \emph{parallel innermost}, 
\emph{parallel outermost}, the \emph{full substitution} and the \emph{call-by-need} strategy.

Strategies allow us to efficiently compute normal forms of weakly normalising rewrite systems. 
Thus considering complexities for TRSs governed by rewrite strategies immediately broadens 
the applicability of complexity investigations. 
The more pressing reason, why we want to investigate strategies is
that rewrite strategies allow far more efficient computations of normal forms. Through
strategies the best-case behaviour of a termination method, described as the shortest derivation
length, becomes accessible. Hence, considering strategies appears to be one step 
forward to obtain feasible upper bounds on the complexities of rewrite systems. 

Reduction strategies in rewriting are one way to broaden the applicability of 
complexity results. Another extension stems more directly from programming. 
Consider the following OCaml program $\Program$ encoding the definition of 
the higher-order function \texttt{map}:
\begin{ocaml}
let rec map f l = function
  | [] -> []
  | hd :: tl -> f hd :: map f tl;;
\end{ocaml}
Higher-order programs like $\Program$ can either be represented as 
\emph{$S$-expression rewrite systems}~\cite{Toyama:2004,Toyama:2008}
or as \emph{applicative systems}, employing a binary applicative symbol 
$\circ$, cf.~\cite{KKSV:1996,Middeldorp:1996,Giesl:2005frocos}.
While $S$-expression rewrite systems and in particular applicative systems
have been studied extensively, relative little effort has been spent to
prove termination of higher-order rewrite systems directly, 
see~\cite{Jouannaud:1998,Jouannaud:1999,Hamana:2005,Jouannaud:2006,Jouannaud:2006b,JR:2007}.

Currently the complexity analysis of higher-order systems via rewriting has not
yet attracted much attention. Although there is long established 
interest in the functional programming community to 
automatically verify complexity properties of programs, 
see for example~\cite{AKAL05,B02,GL02,R89}, no results along the
lines presented here can be found in the literature. Future
research will overcome this restriction as the applicability of complexity
analysis to rewrite systems in the context of (functional) programs is of utmost
importance to the sustainability of this research, see
Section~\ref{Future}.

\section{Contributions} \label{Established}

Above I define the derivation length function $\dheight(t,\rew)$ of term $t$ with respect
to a rewrite relation $\rew$ as the longest possible derivation (with respect to $\rew$)
starting with $t$. Based on the derivation length, Hofbauer and Lautemann defined
the derivational complexity function $\Dc{\RS}$ with respect to the (full)
rewrite relation $\rsrew{\RS}$. 
Instead I propose the following generalisation of this concept to arbitrary 
relations. Additionally this concept allows for better control on the
set of admitted start terms of a given computation.
Let $\RS$ be a TRS and ${T} \subseteq {\TA(\FS,\VS)}$ be a set of terms. 
The \emph{runtime complexity function with respect to a relation $\rew$ on $T$}
is defined as follows:
\begin{equation} \label{eq:rc}
\Rc{}(n, T, \rew) = \max\{ \dheight(t, \rew) \mid 
\text{$t \in T$ and $\size{t} \leqslant n$}\} \tpkt
\end{equation}

Based on this notion the derivational complexity function becomes
definable as follows: $\Dc{\RS}(n) = \Rc{}(n,\TA(\FS,\VS),\rsrew{\RS})$.
Currently four instances of~\eqref{eq:rc} are most prominent in research:
\begin{itemize}
\item the \emph{derivational complexity function} $\Dc{\RS}$, as defined above. 
\item the \emph{innermost derivational complexity function} 
  $\Dci{\RS}= \Rc{}(n,\TA(\FS,\VS),\irew{\RS})$. 
\item the \emph{runtime complexity function} 
  $\Rc{\RS} = \Rc{}(n, \TB, \rew)$.
\item the \emph{innermost runtime complexity function} 
  $\Rci{\RS} = \Rc{}(n, \TB, \ito)$.
\end{itemize}
Here $\irew{\RS}$ denotes the innermost rewrite relation with respect to $\RS$
and $\TB \subseteq \TA(\FS,\VS)$ denotes the set of constructor-based terms, 
cf.~\cite{BaaderNipkow:1998}. 
A constructor based term directly represents a function call with values as argument. Hence this notion 
corresponds nicely to the typical use of \emph{runtime complexity} in
the literature on functional programming, cf.~\cite{Bird:1998}. Note that 
the runtime complexity of a TRS extends the notion of the \emph{cost} of
(constructor based) term as introduced in~\cite{CKS:1989}.
Technically the broader definition has the advantage that the syntactic restriction to
non-duplicating TRSs mentioned in Section~\ref{LowLevel} can be overcome. Indeed,
as expected, the (innermost) runtime complexity function $\Rc{\RSb}$ of the TRS $\RSb$
given on page~\pageref{trs:rsb} is linear, 
and this can be verified automatically, see Chapter~\ref{ijcar08a}.

\subsection{Modern Termination Techniques} \label{E:modern}

The starting point of my research into the complexity of
rewrite systems was an investigation of Hofbauer's and Weiermann's results 
on the derivational complexity induced by MPO and LPO, see Section~\ref{Complexity}.
More precisely, in Chapter~\ref{rta03} a \emph{generalised system of fundamental sequences} 
is introduced  and its associated slow-growing hierarchy is defined. These notions provide
the tools to establish a modernised (and correct) treatment of Cichon's principle
for the simplification orders MPO and LPO. 

In order to state the central results precisely, I introduce some further definitions
(see~\cite[Chapter~4]{Buss:1998} for additional background information).
Let $\veblen$ denote the small Veblen ordinal~\cite{Schuette:1977} 
and let $\otyp(\succ)$ denote the order type of a well-founded relation~$\succ$.
It is well-known that $\veblen = \sup \{ \otyp(\succ) \mid \text{$\succ$ is a lexicographic path order}\}$,
cf.~\cite{Schmidt:1979}. 
Let $>$ denote a precedence on the signature $\FS$, 
let $\glpo$ denote the induced LPO and let $\int \colon \TA(\FS) \to \veblen$ denote
an interpretation from the set of ground terms into the ordinals less than $\veblen$. 
The central results of~\cite{MoserWeiermann:2003} (see Chapter~\ref{rta03}) can be paraphrased
as follows;
\begin{itemize}
\item[\eyecatch] There exists a generalised system of fundamental sequences for ordinals
below $\veblen$ that allows the definition of a point-wise relation $\gx{x}$.
Roughly speaking $\gx{x}$ denotes the descent along the $x^{th}$ branch of these
fundamental sequences.

\item[\eyecatch] If $\RS$ denotes a finite TRS compatible with $\glpo$, 
then there exists a number $k$, such that for any rule $l \rew r \in \RS$ and
any ground substitution $\rho$, we have $\int(l\rho) \gx{k} \int(r\rho)$.

\item[\eyecatch] There exists a slow-growing hierarchy of sub-recursive function 
$\bigcup_{\alpha < \veblen} \G_{\alpha}$ such that if 
$\alpha \gx{x} \beta$, then $\G_{\alpha}(x) > \G_{\beta}(x)$. 
\end{itemize}
As the hierarchy $\bigcup_{\alpha <\veblen} E(\G_{\alpha})$ characterises 
exactly the multiple-recursive functions, we re-obtain the above mentioned
result that LPO induces multiple-recursive derivational complexity.
(Here $E(f)$ denotes the elementary closure of function $f$.)

In subsequent research I generalised the introduced concepts suitably to
analyse the derivational complexity induced by the Knuth-Bendix order (see~\cite{Moser:2006lpar}).
This substantiated and clarified claims made in~\cite{MoserWeiermann:2003} that the 
provided concepts are genuinely related to the classification of the
complexity of rewrite systems for which termination is provable by a
simplification order.
In Chapter~\ref{lpar06} the derivational complexity of TRSs~$\RS$ compatible
with KBO is studied, where the signature of $\RS$ may be infinite. It is
shown that Lepper's result on the derivational complexity with respect
to finite TRS is essentially preserved, see~\cite{Moser:2006lpar} (cf.~Chapter~\ref{lpar06})
for further details.
\begin{itemize}
\item[\eyecatch] Let $\RS$ be a TRS based on a signature $\FS$ with bounded arities that is 
  compatible with a KBO $\gkbo$ and let some weak assumption on $\RS$ be fulfilled.
  Then for any term $t$:
  $\dheight(t,\rsrew{\RS}) \leqslant \Ack(2^{\bO(n)},0)$,
\end{itemize}
where the constant hidden in the big-Oh notation, 
depends only on syntactic properties of the function symbols in $t$, the TRS $\RS$
and the instance $\gkbo$ used. Note that $\FS$ need not be finite.
As a corollary to this result I re-obtain the $2$-recursive upper-bound on the derivational complexity of 
finite rewrite systems $\RS$ compatible with KBO. 

It seems worthy of note that the material presented in Chapter~\ref{lpar06} provides the first in-depth 
derivational complexity analysis of semantic labeling. 
Recall from Section~\ref{Modern} that the
central idea of semantic labeling is to transform the given TRS $\RS$ into
a system $\Rlab$ such that $\RS$ is terminating if and only if $\Rlab$ is terminating.
Furthermore showing termination of $\Rlab$ should be easier than showing termination of $\RS$.

As indicated, semantic information (i.e., a model of $\RS$) is used to define
the new system $\Rlab$. If this model is finite, then the complexity certificates for
$\Rlab$ are trivially transferable into complexity certificates for $\RS$. However,
often infinite models would be more suitable, which changes the picture
completely. See~\cite{KoprowskiZantema:2006,HirokawaMiddeldorp:2006,KoprowskiMiddeldorp:2007}
for further reading on semantic labeling with infinite models.

The main problem is that classic results on complexities of
simplification orders (see Section~\ref{Complexity}) not necessarily extend to
infinite signatures. It is not difficult to see that the complexity results on
MPO and LPO mentioned in Section~\ref{Complexity} cannot be extended to infinite
signatures, cf.~\cite{Middeldorp:1996}. 
On the other hand the above result shows that for KBO, complexity results
are transferable, even if the underlying model is infinite. Observe that
the weak restrictions mentioned, typically hold for systems obtained via the
semantic labeling transformation, see~\cite{Moser:2006lpar} or Chapter~\ref{lpar06}.

% Dependency Pairs
In Section~\ref{Modern} I indicated the challenges posed, if we
aim for a classification of the complexities of TRSs, whose termination is
shown by the dependency pair method. In order to tackle these difficulties 
recent efforts in this direction (see~\cite{HM:2008,HM:2008b}) concentrate on estimates 
for (innermost) runtime complexities. In this context we are most interested
in techniques that induce \emph{polynomial (innermost) runtime complexities}. 

In~\cite{HM:2008,HM:2008b} a variant of the dependency pair method for analysing runtime complexities has 
been introduced (cf.~Chapter~\ref{ijcar08a} and~\ref{lpar08}).
We show how natural improvements of the dependency pair method, like 
\emph{usable rules}, 
\emph{reduction pairs},  
\emph{argument filterings}, and \emph{dependency graphs} become applicable in this context.
More precisely, we have established a notion of dependency pairs, called \emph{weak dependency
pairs} that are applicable in the context of complexity analysis. This notion
provides us with the following method to analyse runtime complexity:
\begin{itemize}
\item[\eyecatch] Let $\RS$ be a TRS, let $\AA$ be a restricted polynomial interpretation,
essentially expressing a weight function, let $(\gtrsim,>)$ denote a reduction pair
(not necessarily based on $\AA$) that fulfils some additional conditions 
and let $\PP$ denote the set of weak dependency pairs of $\RS$ such that  
$\PP$ is non-duplicating.
Suppose the usable rules $\UU(\PP)$ of
$\PP$ are contained in $\gtrsim$ and ${\PP} \subseteq {>}$. 
Moreover, suppose ${\UU(\PP)} \subseteq {>_\A}$. 
Then the runtime complexity function $\Rc{\RS}$ with respect to $\RS$ 
depends linearly on the rank of the order $>$.
\end{itemize}
Here the \emph{rank} of a well-founded order is defined as usual.
Observe that it is very easy to verify the mentioned additional restriction 
on the reduction pair $(\gtrsim,>)$, if $(\gtrsim,>)$ is
based on a polynomial interpretation $\BB$, cf.~Chapter~\ref{ijcar08a}.
These results can be adapted for the special case of innermost rewriting. 
Here we replace the full rewrite relation $\rsrew{\RS}$ 
in the definition of runtime complexity by the innermost rewriting relation
$\irew{\RS}$. 
The established techniques are fully automatable and easy to implement. 

Let me reformulate this important result in a slightly more concrete setting.
Suppose $\AA$ is defined as above and assume $\BB$ denotes a polynomial interpretation,
fulfilling the restriction that constructors are interpreted as weights.
Then it is easy to see that if a TRS $\RS$ is compatible with such an interpretation
$\BB$ the runtime complexity of $\RS$ is polynomial (see Chapter~\ref{ijcar08a} but also~\cite{BCMT:2001}).
As a corollary to the above result we obtain:
\begin{itemize}
\item[\eyecatch] Let $\RS$ be a TRS, let $\PP$ be the set of
weak dependency pairs of $\RS$, and let $\AA$ and $\BB$ be defined
as above. 
Suppose $(\geqord{\BB},\gord{\BB})$ forms a reduction pair and in addition:
${\UU(\PP)} \subseteq {\geqord{\BB}}$ and ${\PP} \subseteq {\gord{\BB}}$,
where $\PP$ is supposed to be non-duplicating.
If $\UU(\PP) \subseteq {>_\A}$ then 
the runtime complexity function $\Rc{\RS}$ with respect to $\RS$ is polynomial.
\end{itemize}
This result significantly extends the analytic power of existing direct methods. 
Moreover this entails the first  method to analyse
the derivation length induced by the (standard) dependency pair method for innermost rewriting,
cf.~Chapter~\ref{ijcar08a}.

\subsection{Low-Complexity Bounding Functions} \label{E:LowComplexity}

As already observed in Section~\ref{LowLevel} it is not difficult to
find polynomially terminating TRSs, whose derivational complexity functions grow
significantly slower than double-exponentially. I.e., polynomial interpretations typically
overestimate the induced derivational complexity. In~\cite{Hofbauer:2001} Hofbauer introduced
context-dependent interpretations as a remedy. Consequently these interpretation provided
a starting point in the analysis of termination methods that induce polynomial derivational
complexity. Indeed in~\cite{MS:2008} (see Chapter~\ref{rta08}) such an analysis is conducted and a
new method to automatically conclude polynomial (even quadratic) derivational complexity
is given such that we obtain the following result:
\begin{itemize}
\item[\eyecatch] Let $\RS$ be compatible with a specific restriction of a context-dependent interpretation, called 
\emph{$\Delta$-restricted interpretation}. Then $\Dc{\RS}(n) = \bO(n^2)$. 
Moreover there exists a TRS $\RS$ such that $\Dc{\RS}(n) = \Omega(n^2)$.
\end{itemize}
Moreover, subsequent research revealed the existence of a tight correspondence
between a subclass of context-dependent interpretations and
restricted triangular matrix interpretations, cf.~\cite{MSW:2008}. 
On the one hand this correspondence allows for a
much simpler and more powerful method to automatically 
deduce polynomial derivational complexity. On the other hand this 
result reveals a connection between seemingly very different termination
techniques: matrix interpretations and context-dependent interpretations. Moreover
this result would not have been observed if we had investigated these techniques
directly and not the induced complexity. (Observe that no indication of this correspondence
result could be found in the literature.)

Buchholz's result (described in Section~\ref{ProofTheory}) suggests another approach.
Conceptually~\cite{Buchholz:1995} provides a new well-foundedness proof of MPO
and LPO (by induction on the accessible parts of these orders) and \emph{miniaturises}
this proof in the context of termination analysis. This entails the idea to directly
study miniaturisations of well-known reduction orders in such a way that 
infeasible growth rates are prohibited. 
Of course these miniaturisations have to be done carefully to prevent us from robbing the
order from any real termination power.

To this avail we introduce in~\cite{fsttcs:2005} (see Chapter~\ref{fsttcs05}) the 
\emph{path order for $\fptime$} (\emph{POP} for short). We could
show that POP characterises the functions computable in polytime, i.e., the
complexity class $\fptime$. In particular any function in 
$\fptime$ is representable as a TRS compatible with POP.
Moreover, we established the following result:
\begin{itemize}
\item[\eyecatch] A termination proof for a TRS $\RS$ via POP implies that for any $f \in \FS$ of arity $m$
  $\dheight(f(\Succ^{n_1}(0),\dots,\Succ^{n_m}(0)),\rsrew{\RS})$ is polynomially bounded 
 in the sum of the (binary) length of the input $\Succ^{n_1}(0),\dots,\Succ^{n_m}(0)$.
\end{itemize}

Still, in practice, the applicability of POP is limited. 
Many natural term-rewriting representations of polytime computable functions cannot be handled 
by POP as the imposed restrictions are sometimes not general enough.
To remedy this situation I studied generalisations of POP that are more broadly applicable.
These investigations resulted in the definition of a syntactic restriction of MPO, 
called~\POPSTAR, and the following result, cf.~\cite{AM:2008} (see Chapter~\ref{flops08}). 
\begin{itemize}
\item[\eyecatch] A termination proof for a TRS $\RS$ via \POPSTAR\ implies that the
innermost runtime complexity function $\Rci{\RS}$ is polynomially bounded.
\end{itemize}
Moreover \POPSTAR\ is complete for $\fptime$. 
It should be stressed that as characterisations of complexity classes the orders POP and \POPSTAR\ are
closely related. However, with respect to direct applicability and in particular
automatisation the latter result is a lot stronger.

It is worth noting that our result in~\cite{AM:2008} depends on the careful
combination of the miniaturisation of the multiset path order together with a 
specific strategy. 
Hence~\cite{AM:2008} provides an important indication of the need to consider
rewrite strategies in complexity analysis, see Section~\ref{Broad}.

\section{Related Work} \label{Related}

I mention here only work that is not already cited in Sections~\ref{Modern}--\ref{Broad}.
Concerning low-complexity bounding functions, I want to mention the connection between
the complexity analysis of a TRS $\RS$ and the \emph{computability} of $\RS$.
Roughly speaking a function $f$ is computable by a terminating TRS $\RS$ if there are function symbols
$\mF,\Succ,\Succo,0,0'$ such that
\begin{equation*}
  \mF(\Succ^{n_1}(0),\dots,\Succ^{n_m}(0)) \rssrew{\RS} \Succo^{f(n_1,\dots,n_m)}(0') \tkom
\end{equation*}
holds for all $n_1 , \dots, n_m$, cf.~\cite{HuetOppen:1980}.
The distinction between the input successor $\Succ$ and the output successor 
$\Succo$, as well as between $0$ and $0'$ is sometimes necessary to allow finer distinctions.

We say a function $f$ is computable with respect to a termination method $\mathsf{M}$, 
if $f$ is computable by a TRS that is $\mathsf{M}$-terminating.
For large complexity classes, as for example the primitive
recursive functions, the derivational complexity induced by a termination method 
implies its computability, cf.~\cite{CichonWeiermann:1997}. 
For example the class of functions computable with respect to MPO equals the primitive recursive functions, 
cf.~\cite{Hofbauer:1991}. 
For small complexity classes this equivalence is lost. 
Consider the class of polytime computable function $\fptime$. The class $\fptime$ is 
representable as the set of functions computable by TRSs $\RS$ that are compatible with
restricted polynomial interpretations $\A$, cf.~\cite{BCMT:2001}. 
On the other hand, the derivational complexity induced by $\A$ is double-exponentially, cf.~Section~\ref{Complexity}. 

This seems to strengthen the argument made above that the derivational complexity function
$\Dc{\RS}$ is not always a suitable measure of the complexity of a TRS. Kindly observe that the
runtime complexity $\Rc{\RS}$ with respect to $\RS$ induced by the interpretations $\A$ is polynomial. 
Still, we cannot equate (runtime) complexity and computability in general. The
fact that a given polytime computable function $f$ is computable by a TRS $\RS$ need not
imply that $\Rc{\RS}$ is indeed polynomial (see~\cite{fsttcs:2005} but also~\cite{BeckmannWeiermann:1996}).

The study of the computability of a given function $f$ with respect to a termination
method as outlined above is clearly connected to the investigations in
\emph{implicit computational complexity theory}. 
In the analysis of the implicit computational complexity of programs, 
one is interested in the analysis of the complexity of a given program rather than the 
study of the complexity of the function computed, or
of the problem solved. Much attention is direction towards the characterisation of 
``nice'' classes of programs that define complexity classes in the polynomial hierarchy,
most prominently the class of polytime computable functions $\fptime$. 

In particular I want to mention related work employing term rewriting as
abstract model of computation and consequently use existing techniques
from rewriting to characterise several computational complexity
classes. Interesting techniques in this context comprise the miniaturisation of simplification
orders like MPO and LPO, by Cichon and Marion, cf.~\cite{CichonMarion:1999,Marion:2003}, 
as well as the use of quasi-interpretations or sup-interpretations to characterise complexity classes
by Bonfante, Marion, Moyen, P\'echoux and others, cf.~\cite{MarionMoyen:2000,BCMT:2001,
Amadio:2005,Marion:2006,BMP:2007,BMM:2009:tcs}.

On a more general level I want to mention additional work on tiering or ramification concepts 
by Leivant, Marion, and Pfenning, cf.~\cite{LeivantMarion:1993,Leivant:1994,LeivantMarion:1997,Pfenning:2001}.
Moreover I cite Hofmann's seminal work~\cite{Hofmann:1999,Hofmann:2002} 
as well as related results by Aehlig, Schwichtenberg, and others,
cf.~\cite{AS:2002,ABHS:2004,Schwichtenberg:2006,BT:2009}. In addition there is highly interesting
work on recouce bounds of imperative programs by Niggl, Jones, Kristansen, and others,
see~\cite{Niggl:2005,NW:2006,BJK:2008,NK:2009}.

\section{Future Research} \label{Future}

In Section~\ref{Broad} I discussed the general aim to extend existing work on
complexity analysis for first order rewriting to the higher-order case. As already
mentioned one way to represent higher-order programs like the \texttt{map} function
defined in Section~\ref{Broad} are $S$-expression rewrite systems. For clarity,
we recall the definition from \cite{Toyama:2004,Toyama:2008}.
Let $\CC$ be a set of constants, $\VS$ be a set of
variables such that $\VS \cap \CC = \varnothing$, and
$\circ \notin \CC \cup \VS$ a variadic function symbol. 
We define the set $\mathcal{S}(\CC,\VS)$ of \emph{S-expressions} built from $\CC$ and $\VS$ 
as $\TA(\CC \cup \{ \circ \},\VS)$.
We write $(\sexpr{s})$ instead of $\circ(\seq{s})$.
An \emph{S-expression rewrite system} (\emph{SRS} for short) 
is a TRS with the property that the left- and right-hand
sides of all rewrite rules are S-expressions. 

Applying transformation steps, like \emph{case analysis} and \emph{rewriting of right-hand sides},
the function $\texttt{map}$, as defined in Section~\ref{Broad}, becomes representable as the following SRS:
\begin{align*}
\SL{\CCC{map} \V{f} \CCC{nil}} &\rew \CCC{nil} \\
\SL{\CCC{map} \V{f} \SL{\CCC{cons} \V{x} \V{xs}}} &\rew
\SL{\CCC{cons} \SL{\V{f} \V{x}} \SL{\CCC{map} \V{f} \V{xs}}}  
\end{align*}

In recent work together with Avanzini, Hirokawa and Middeldorp (see~\cite{AHMM:2007})
we study the runtime complexity of (a subset of)
Scheme programs by a translation into SRSs. Scheme is a statically scoped and
properly tail-recursive dialect of the Lisp programming
language invented by Guy Lewis Steele Jr. and Gerald Jay Sussman, cf.~\cite{R6RS}.
Due to its clear and simple semantics Scheme appears as an ideal candidate
to apply our results on the complexity analysis of TRSs in the context of
functional programming. 

By designing the translation to be complexity preserving (or at least closed under
polynomial functions) the complexity of the initial Scheme program can be estimated by analysing 
the complexity of the resulting SRS. Here we indicate how the above result on \POPSTAR\ is applicable to 
(a subset of) S-expression rewrite systems.

Let $\RSS$ be an SRS over $\mathcal{S}(\CC,\VS)$ and let $\CC = \DD \cup \KK$ such
that $\DD \cap \KK = \varnothing$. We call the elements of $\KK$ \emph{constructor}
constants and the elements of $\DD$ \emph{defined} constants.
We define the notion of \emph{value} in the context of SRSs.
The set of \emph{values} $\Val(\RSS)$ of $\RSS$ with respect to $\KK$ is
inductively defined as follows: (i) if $v \in \CC$ then $v \in \Val(\RSS)$,
(ii) if $\seq{v} \in \Val(\RSS)$ and $\CCC{c} \in \KK$ then
$(\CCC{c} \ v_1 \dots  v_n) \in \Val(\RSS)$.

Observe that (defined) constants are values, this reflects that in Scheme
procedures are values, cf.~\cite{R6RS} and allows for a representation
of higher-order programs. Scheme programs are conceivable as SRSs, allowing conditional
\CCC{if} expressions in conjunction with an eager, i.e., innermost
rewrite strategy. 
Thus we can delineate a class of SRSs that easily 
accommodates a relative large subset of Scheme programs, called 
\emph{constructor} SRSs in~\cite{AHMM:2007}. Based on Toyama's observation
that recursive path orders can be successfully employed to prove termination
of SRSs, we invented an automatic complexity analyser for Scheme programs, cf.~\cite{AHMM:2007}. 
The main theoretical contribution of this work can be paraphrased as follows:
\begin{itemize}
\item Let $\RSS$ be a constructor SRS compatible with \POPSTAR. 
Then the  innermost runtime complexity function $\Rci{\RS}$ (suitably adapted to
constructor SRSs) is polynomially bounded.
% Then for all $f \in \DD$ of arity $n$ and for all values $s_1,\dots,s_n$: 
% $\dheight((\mathsf{f} \ s_1 \ \dots \ s_n),\irew{\RSS})$ is bounded by a polynomial
% in the sum of the sizes of the arguments $s_1,\dots,s_n$.
\end{itemize}
In conjunction with the fact that the transformation of Scheme programs 
into SRS is complexity preserving this result provides us with a
complexity analysis of Scheme programs that is fully automatable. Still,
this is only a partial result as the considered subset of Scheme programs is
only of limited practical interest. In particular
we cannot yet handle integer values. This will be subject to future research.

\section{Conclusion} \label{Conclusion}

In order to assess the complexity of a TRS it is natural to look at the maximal
length of derivation sequences, a program that has been suggested by Hofbauer
and Lautemann in~\cite{HofbauerLautemann:1989}. 
This concept has given rise to the area of \emph{derivational complexity analysis}
that produced a number of deep insights into the strength of direct termination methods,
described in Section~\ref{Complexity} and~\ref{ProofTheory}.

The goal of my subsequent research was and still is to make 
the (derivational) complexity analysis of rewrite systems modern, useful, and
broad.
For that purpose I have analysed 
the established results in order to assess their applicability in the context of 
modern termination provers. These investigations
(notably in~\cite{MoserWeiermann:2003,Moser:2006lpar}) resulted in an improved
understanding and clarification of the used concepts that often allowed the deduction
of more general results.

During this research it became apparent that the ``standard'' notion
of derivational complexity with respect to a given TRS was not the right tool
to modernise complexity analysis. Instead its generalisation to the above
introduced runtime complexity function with respect to a TRS and
a given rewrite strategy proved (up-to now) as the most useful.

Based on this conceptional advance I was able (together with various
co-authors) to modernise (derivational) complexity analysis to accommodate
modern termination techniques like context-dependent interpretations,
match-bounds, matrix interpretations, semantic labeling and dependency pairs,
as documented in~\cite{MS:2008,HM:2008,HM:2008b}.

Moreover, through the research published in~\cite{fsttcs:2005,AM:2008} the viewpoint
of (derivational) complexity analysis, is today much more focused on \emph{feasible} bounding
functions than in earlier research. This has important consequences for the
applicability of this research. Earlier investigations were mainly conducted to reveal
the strength of termination methods, while my research pushed the interest
towards the strength or complexity of rewrite systems, proper. This opens the
door to exciting applications in (automated) program analysis.

Lastly my research in this direction aims at the automation of the introduced
techniques. To this avail I am building (together with Avanzini and Schnabl)
the software tool \tct\ to analyse the complexity of 
rewrite systems automatically. In this context a specialised category for complexity
analysers has been integrated into the termination competition, see
\begin{center}
\url{http://termcomp.uibk.ac.at} \tpkt  
\end{center}
The goal of this competition is twofold. On one hand the most advanced techniques
become comparable in a direct contest. Hence different tools compete to
provide for each system the best possible complexity certificate. For example,
if we consider estimation of upper bounds, then the tool that provides the tightest
bound, gets the highest score.
On the other hand this competition provides a forum that allows to publicise the
gained results and insights. A necessity if we want to apply these results
outside rewriting.

I anticipate that the research described here will considerably advance 
the field of term rewriting. Moreover, I anticipate impact on the fields 
of \emph{implicit computational complexity theory} and 
\emph{proof theory}.

In the context of implicit computational complexity theory (see Section~\ref{Related}) 
my main interest lies in studies that employ term rewriting as
abstract model of computation and consequently use existing techniques
from rewriting to characterise several computational complexity
classes, as described in Section~\ref{Related}. 
Here I highlight the latter approach to implicit computational
complexity. In Section~\ref{Modern} we considered a functional program $\Program$ that
implements insertion sort. Interestingly $\Program$ is a challenge for 
implicit computational complexity theory as its obvious polynomial runtime
complexity cannot be easily verified. 
This was first observed by Caseiro~\cite{Caseiro:1997}, see also~\cite{Hofmann:1999,AS:2002}.
Observe that program $\Program$ can be easily transformed into the following TRS
$\RSc$:
\begin{align*}
  \mif(\top,x,y) & \rew x & 
  \ins(x,\nil) & \rew x \cons \nil\\
  \mif(\bottom,x,y) & \rew y & 
  \ins(x,y\cons z) &\rew \mif(x \leqslant y,x \cons y \cons z,y \cons \ins(x,z))\\
  \m{0} \leqslant \m{s}(y) &\rew \top & 
  \sort(\nil) & \rew \nil\\
  x \leqslant 0 &\rew \bottom &
  \sort(x \cons z) &\rew \ins(x,\sort(z))\\
  \m{s}(x) \leqslant \m{s}(y) &\rew x \leqslant y
\end{align*}
It is easy to see that $\RSc$ is MPO-terminating. Moreover, there
exists a weakly monotone max-polynomial interpretation $\AA$ such that
the interpretation of constructor symbols is restricted to weight functions.
The induced order $\geqord{\AA}$ weakly orients all rules, 
cf.~Bonfante et al.~\cite{BMM:2009:tcs}. Hence $\RSc$ belongs to a specific subclass of
rewrite systems studied in~\cite{BMM:2009:tcs} such that each function
computed by such a TRS is polytime computable.%
\footnote{Note that this does not imply that the
runtime complexity function $\Rc{\RSc}$ is polynomial, but that the function computed
is polytime computable (in the usual sense).}

In my research I am genuinely interested in ``applicable'' upper bounds on the
complexities of rewrite systems and therefore I am less concerned with the 
classification of computational complexity classes. Moreover, it seems a
not too important statement that insertion sort is a polytime computable 
function. Instead the exciting question is whether a given \emph{implementation} 
$\Program$ of insertion sort admits (at most) polynomial runtime complexity.
We thus have to clarify what exactly we accept as an implementation or program.
I would argue that in this context term rewriting systems would be a good choice
and the complexity of $\Program$ ought to be measured in the natural way
for computation model. Unfortunately, we cannot conclude
polynomial runtime complexity of $\RSc$ from the results by Bonfante et al.\
(see~\cite{BMM:2009:tcs} but also Chapter~\ref{fsttcs05}).

Still, there are many connections between complexity analysis of term rewrite systems
as discussed here and implicit computational complexity theory. 
For example the use of rewriting techniques opens the way for automatisation.
Recently, Avanzini, Schnabl and myself implemented a fully automated system that incorporates the 
majority of these techniques. See~\cite{AMS:2008} for the findings of
this experimental comparison.

Furthermore derivational complexity studies have stirred some attention in proof theory, 
cf.~\cite{Weiermann:1995,Buchholz:1995,Arai:1998:trs,Friedman:2001}. 
Clearly my research has implications for proof theory, see Section~\ref{ProofTheory}.
Here I want to emphasise that we are implicitly dealing
with the connection of partial orders and the growth-rate of functions defined by induction
on these orders: We say that a TRS $\RS$ is $\alpha$-terminating if $\RS$ is compatible with an
$\FS$-algebra $(\alpha,>)$, where $>$ denotes ordinal comparison. Any function computable by
an $\alpha$-terminating $\RS$ gives rise to a function 
defined by transfinite induction up-to $\alpha$.

A related connection was first observed by Cichon, who conjectured that the slow-growing
hierarchies would connect the order type of a termination order compatible with $\RS$ with the derivational
complexity of $\RS$, cf.~\cite{Cichon:1992}. Unfortunately, this claim is incorrect, as shown
by Touzet~\cite{Touzet:1998}. On the other hand, the principal connection refers to deep proof theoretic
questions as for example the ``naturalness'' of a given ordinal notation system, cf.~\cite{Feferman:1996,BeklemGPS:2004}, 
see also Section~\ref{ProofTheory}.

}

\chapter{Relating Derivation Lengths with the
Slow-Growing Hierarchy Directly} 
\label{rta03}

\subsection*{Publication Details}

G.~Moser and A.~Weiermann.
\newblock Relating derivation lengths with the slow-growing hierarchy directly.
\newblock In \emph{Proceedings of the 14th International Conference on
Rewriting Techniques and Applications}, number 2706 in LNCS, pages 296--310.
Springer Verlag, 2003.

\subsection*{Ranking}
The International Conference on Rewriting Techniques and Applications has been ranked
\textbf{A} by the \emph{Computing Research and Education Association of Australasia}
(\emph{CORE} for short) in 2007.%
\footnote{http://www.core.edu.au/}

\subsection*{Abstract}

In this article we introduce the notion
of a generalized system of fundamental
sequences and we define its associated
slow-growing hierarchy. 
We claim that these concepts are genuinely 
related to the classification of the
complexity---the derivation length---
of rewrite systems
for which termination is provable by a
standard termination ordering.

To substantiate this claim, we re-obtain
multiple recursive bounds on the 
the derivation length for
rewrite systems terminating under 
lexicographic path ordering, originally
established by the second author.

{
\input{rta03.sty}

\begin{section}{Introduction} \label{rta03:Introduction}

To show termination of a rewrite system $R$ one usually
shows that the induced reduction relation $\rew$ is
contained in some abstract ordering known to be well-founded.
One way to assess the strength of such a termination ordering is
to calculate its \emph{order type}, cf.~\cite{DershowitzOkada:1988}.
There appears to be a subtle relationship between these order
types and the \emph{complexity} of the rewrite system $R$ considered.
Cichon~\cite{Cichon:1992} discussed (and investigated) whether 
the complexity of a rewrite system for which termination is provable using a
termination ordering of order type $\alpha$ is eventually dominated 
by a function from the \emph{slow-growing hierarchy} along $\alpha$. 
It turned out that this principle---henceforth referred to as \CP---is
valid for the (i) \emph{multiset path ordering} ($\gmpo$)
and the (ii) \emph{lexicographic path ordering} ($\glpo$). 

More precisely, Hofbauer~\cite{Hofbauer:1992} proved 
that $\gmpo$ as termination
ordering implies primitive recursive derivation length, 
while the second author showed
that $\glpo$ as termination
ordering implies multiply-recursive derivation length~\cite{Weiermann:1995}.
If one regards the order types of $\gmpo$ and $\glpo$, respectively, 
then these results imply the correctness of \CP\ for (i) and (ii).
Buchholz~\cite{Buchholz:1995} has given an alternative proof of \CP\ for (i) and (ii).
His proof avoids the (sometimes lengthy) calculations with
functions from subrecursive hierarchies
in~\cite{Hofbauer:1992,Weiermann:1995}. 
Instead a clever application of proof-theoretic
results is used. 
Although this proof is of striking beauty, one might
miss the link to term rewriting theory that is
provided in~\cite{Hofbauer:1992,Weiermann:1995}. 

The mentioned proofs \cite{Hofbauer:1992,Weiermann:1995,Buchholz:1995}
of \CP---with respect to (i) and (ii)---%
are \emph{indirect}. I.e.\ without direct reference to
the slow-growing hierarchy.
By now, we know from the work of Touzet~\cite{Touzet:1998} and 
Lepper~\cite{Lepper:2001a,Lepper:2004} that \CP\ fails to hold in general. 
However, our interest in \CP\ is motivated by our strong belief that
there exist reliable ties between \emph{proof theory} and
\emph{term rewriting theory}. Ties which become particularly apparent if one
studies those termination orderings for which \CP\ holds.

To articulate this belief we give yet another \emph{direct}
proof of \CP\ (with respect to (i) and (ii)). To this avail
we introduce the notion of a \emph{generalized system of fundamental sequences} and we define 
its associated \emph{slow-growing hierarchy}.
These concepts are genuinely related 
to classifying derivation
lengths for rewrite systems
for which termination is proved by a
standard termination ordering. 
To emphasize this let us present the 
general outline of the proof method.

Let terms $s=t_0,t_1,\ldots,t_n$ be given, such that
$s \rew t_1 \rew \cdots \rew t_n$ holds, where $t_n$ is in normal form
and term-depth of $s$ ($\depth{s}$) is $\mbox{}\leq m$. Assume $\rew$ is contained
in a termination ordering $\succ$. Hence
$s \succ t_1 \succ \cdots \succ t_n$
holds. Assume further the sequence $(s,t_1,\ldots,t_n)$
is chosen so that $n$ is maximal. Then in the realm of
classifications of derivation lengths one usually defines
an \emph{interpretation} 
$\intN \colon \TA{\Sigma}{\VS} \to \N$ such that
$\intN(s) > \intN(t_1) > \cdots > \intN(t_n)$
holds. ($\TA{\Sigma}{\VS}$ denotes
the term algebra over the signature $\Sigma$ and the set of
variables $\VS$.)
The existence of such an interpretation then directly yields
a bound on the derivation length. 

The problem with this approach is
to guess the right interpretation from
the beginning. More often than not this is not at all obvious.
Therefore we want to generate the interpretation function
directly from the termination ordering in an intrinsic
way. To this avail we proceed as follows.
We separate $\intN$ into an \emph{ordinal interpretation} 
$\int \colon \GTA{\Sigma} \to T$ and an ordinal theoretic
function $g \colon T \to \N$. ($T$ denotes a suitable chosen
set of terms representing an initial segment of the ordinals, 
cf.~Definition~\ref{d:T}.) 
This works smoothly.
Firstly, we can employ the connection between the 
termination ordering $\succ$ and the ordering
on the notation system $T$. This connection was already
observed by Dershowitz and Okada, cf.~\cite{DershowitzOkada:1988}.
Secondly, it turns out that $g$ can be defined in terms of the
slow-growing function $G_{x} \colon T \to \N$; $x \in \N$.
(Note that we have swapped the usual
denotation of arguments, see Definition~\ref{d:G} and
Definition~\ref{l:G_schlange}.)

To simplify the presentation we restrict our attention
to a rewrite system $R$ 
whose termination can be shown by a \emph{lexicographic path ordering} $\glpo$. 
It will become apparent later that the proof presented below is (relative) 
easily adaptable to the case where the rewrite relation $\rew$ is contained
in a \emph{multiset path ordering} $\gmpo$.
We assume the signature $\Sigma$ contains at least one constant $c$.

Let $R$ be a rewrite system over $\TA{\Sigma}{\VS}$ 
such that $\rew$ is contained in a lexicographic path ordering.
Let terms $s=t_0,t_1,\ldots,t_n$ be given, such that
$s \rew t_1 \rew \cdots \rew t_n$ holds, where $t_n$ is in normal form
and $\depth{s} \leq m$.
By our choice of $R$ this implies
\begin{equation}
\label{eq:rta03:2}
s \glpo t_1 \glpo \cdots \glpo t_n \quad .
\end{equation}
We define a ground substitution $\rho$: $\rho(x) = c$, for all $x \in \VS$.
Let $>$ denote a suitable defined (well-founded) ordering relation  
on the ordinal notation system $T$. 
Let $l,r \in \TA{\Sigma}{\VS}$. 
Depending on $m$ and properties of $R$, 
we show the existence of a natural number $h$ such that
$l \glpo r$ implies $\int(l\rho) > \int(r\rho)$ and 
$G_h(\int(l\rho)) >  G_h(\int(r\rho))$, respectively.
Employing this form of an \emph{Interpretation Theorem}
we conclude from (\ref{eq:rta03:2}) for some $\alpha \in T$
\begin{equation*}
\alpha > \int(s\rho) > \int(t_1\rho) > \cdots > \int(t_n\rho) \quad.
\end{equation*}
and consequently 
\begin{equation*}
G_h(\alpha) > G_h(\int(s\rho)) > G_h(\int(t_1\rho)) > \cdots > G_h(\int(t_n\rho)) \quad.
\end{equation*}
Thus $G_h(\alpha)$ calculates an upper bound for $n$. 
Therefore the \emph{complexity} of $R$ can be measured
in terms of the \emph{slow-growing hierarchy} along the \emph{order type} of $T$.

To see that this method calculates an optimal bound, 
it remains to relate the function $G_x \colon T \to \N$ to
the multiply-recursive functions. 
We employ Girard's Hierarchy Comparison Theorem~\cite{Girard:1981}. 
Due to (a variant) of this theorem any multiple-recursive function
can be majorized by functions from the slow-growing hierarchy and
vice versa.%
\footnote{A $k$-ary function $g$ is said to be \emph{majorized} by a unary function $f$ if there
exists a number $n < \omega$ such that $\term{g}{x}{1}{k} < f(\max\{x_1,\ldots,x_k\})$,
whenever $\max\{x_1,\ldots,x_k\} \geq n$.}
(For further details see Section~\ref{Fundamental_Sequences}.)

Contrary to the original proof in\cite{Weiermann:1995}, we can thus 
circumvent technical calculations with the $F$-hierarchy (the fast-growing
hierarchy) and can shed light on the way the slow-growing hierarchy
relates the order type of the termination ordering $\succ$ to
the bound on the length of reduction sequences along $\rew$. 

\end{section}

\begin{section}{The Lexicographic Path Ordering}
\label{LPO}

We assume familiarity with the basic concepts of term rewriting. 
However, we fix some notations. 
Let $\Sigma = \{f_1,\ldots,f_K\}$ denote a finite signature such that 
any function symbol $f \in \Sigma$ has a unique \emph{arity}, denoted as 
$\ar{f}$. The cardinality $K$ is assumed to be fixed in the sequel. 
To avoid trivialities we demand
that $\Sigma$ is non-empty and contains at least one constant, i.e.\
a function symbol of arity $0$.
We set $N \defsym \max \{ \ar{f} \colon f \in \Sigma \}$.

The set of terms over $\Sigma$ and the countably infinite
set of variables \VS\ is denoted as \TA{\Sigma}{\VS}. 
We will use the meta-symbols $l,r,s,t,u,\ldots$ to denote
terms. The set of variables occurring in a term $t$ is denoted
as $\var{t}$. A term $t$ is called \emph{ground} or \emph{closed} if
$\var{t}=\emptyset$. 
The set of ground terms over $\Sigma$ is denoted as \GTA{\Sigma}.
If no confusion can arise, the reference to the signature $\Sigma$
and the set of variables $\VS$ is dropped.
With $\depth{s}$ we denote the \emph{term depth} of $s$,
defined as $\depth{s} \defsym 0$, if $s\in \VS$ or $s \in \Sigma$ and
otherwise 
$\depth{\term{f}{s}{1}{m}} \defsym 
\max \{ \depth{s_i} \colon 1 \leq i \leq m \} + 1$.
A \emph{substitution} $\sigma \colon \VS \to \T$ is a mapping from the
set of variables to the set of terms. 
The application of a substitution $\sigma$ to a term $t$ is
(usually) written as $t\sigma$ instead of $\sigma(t)$.

A \emph{term rewriting system} (or \emph{rewrite system}) $R$ over 
\T\ is a finite set of rewrite rules $(l,r)$. 
The {\em rewrite relation} $\rew$ on $\T$ is the least binary 
relation on $\T$ containing $R$ such that 
(i) if $s \rew t$ and $\sigma$ a substitution, then 
$s\sigma \rew t\sigma$ holds, and
(ii) if $s \rew t$, then $f(\ldots, s,\ldots) \rew f(\ldots ,t,\ldots)$.
A rewrite system $R$ is {\em terminating}
if there is no infinite sequence $\langle t_i\colon i \in \N \rangle$ 
of terms such that 
$t_1\rew t_2\rew\cdots\rew t_m\rew \cdots$.
Let $\succ$ denote a total order on $\Sigma$ such that $f_j \succ f_i \gdw j > i$ for
$i,j \in \{1,\ldots,K\}$. The \emph{lexicographic path ordering} 
$\glpo$ on $\T$ (induced by $\succ$) is defined
as follows, cf.~\cite{BaaderNipkow:1998}.

\begin{definition}
\label{d:LPO}
$s \glpo t$ iff
%\vspace{-\topsep}
\begin{enumerate}
\item $t \in \var{s}$ and $s \not= t$, or
\item \label{en:LPO}
  $s= \term{f_j}{s}{1}{m}$, $t = \term{f_i}{t}{1}{n}$, and
  \begin{itemize}
  \item there exists $k$ ($1 \leq k \leq m$) with 
    $s_k\geqlpo t$, or
  \item $j > i$ and $s \glpo t_l$ for all $l = 1,\ldots,n$, or
  \item $i=j$ and $s \glpo t_l$ for all $l = 1,\ldots,n$, and 
    there exists an $i_0$ ($1 \leq i_0 \leq m$) such that 
    $s_1=t_1, \ldots s_{i_0-1} = t_{i_0-1}$ and $s_{i_0} \glpo t_{i_0}$.
  \end{itemize}
\end{enumerate}
\end{definition}

\begin{proposition} (Kamin-Levy).
%\vspace{-\topsep}
\begin{enumerate}
\item If $s \glpo t$, then $\var{t} \subseteq \var{s}$.
\item For any total order $\prec$ on $\Sigma$, the induced
lexicographic order $\glpo$ is a simplification order on
\T.
\item If $R$ is a rewrite system such that $\rew$ is contained
in a lexicographic path ordering, then $R$ is terminating.
\end{enumerate}
\end{proposition}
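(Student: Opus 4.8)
The plan is to treat the three claims in order, since each leans on the previous one, reasoning throughout by induction on Definition~\ref{d:LPO} (equivalently on term structure) and invoking the finiteness of $\Sigma$ only at the very end. For (1) I would induct on the derivation of $s \glpo t$. If $t \in \var{s}$ with $s \neq t$, then $\var{t} = \{t\} \subseteq \var{s}$ at once. Otherwise $s = f_j(s_1,\dots,s_m)$ and $t = f_i(t_1,\dots,t_n)$, and I split on the three subcases of the second clause of Definition~\ref{d:LPO}. In the first, some $s_k \geqlpo t$, and the induction hypothesis (together with the trivial case $s_k = t$) gives $\var{t} \subseteq \var{s_k} \subseteq \var{s}$. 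In the other two, $s \glpo t_l$ for every $l$, so the induction hypothesis yields $\var{t_l} \subseteq \var{s}$ and hence $\var{t} = \bigcup_l \var{t_l} \subseteq \var{s}$.

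For (2), recall that a \emph{simplification order} is a strict order on \T\ that is stable under substitutions, monotone with respect to contexts, and enjoys the subterm property, and that over a finite signature such an order is automatically well-founded by Kruskal's tree theorem. I would verify these ingredients separately. The subterm property is immediate from the first subcase of Definition~\ref{d:LPO}, taking $s_k \geqlpo s_k$ reflexively. Irreflexivity---that $s \glpo s$ holds for no $s$---follows by a routine induction on $\depth{s}$. Stability, $s \glpo t \Rightarrow s\sigma \glpo t\sigma$, is obtained by induction on the derivation of $s \glpo t$, the base case $t \in \var{s}$ reducing to the subterm property since then $t\sigma$ is a proper subterm of $s\sigma$. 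Monotonicity, i.e.\ $f(\ldots, s, \ldots) \glpo f(\ldots, t, \ldots)$ whenever $s \glpo t$, follows from the third subcase of Definition~\ref{d:LPO}, using the subterm property to dominate the unchanged arguments and the first subcase to dominate $t$. Transitivity, the remaining and most delicate property, I postpone to the discussion of the main obstacle below.

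For (3) the argument is short once (2) is in place. As $\rew$ is contained in some lexicographic path ordering $\glpo$, and $\glpo$ is well-founded by the previous part, any infinite rewrite sequence $t_1 \rew t_2 \rew \cdots$ would induce an infinite descending chain $t_1 \glpo t_2 \glpo \cdots$, which is impossible; hence $R$ is terminating. Here part (1) additionally records the variable condition $\var{r} \subseteq \var{l}$ enjoyed by any rule oriented by $\glpo$, confirming that $\glpo$ is a genuine reduction order.

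The main obstacle is transitivity, deliberately set aside above: proving $s \glpo t$ and $t \glpo u$ imply $s \glpo u$ calls for a simultaneous induction (say on the sum $\depth{s} + \depth{t} + \depth{u}$) with a nested case analysis over the clauses of Definition~\ref{d:LPO} applied to both hypotheses, and care is needed so that every appeal to the induction hypothesis lands on a strictly smaller instance. The other genuinely nontrivial input is Kruskal's tree theorem, which I would cite rather than reprove; this is the only point at which the finiteness of $\Sigma$ (fixed cardinality $K$) is essential.
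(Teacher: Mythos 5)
The paper offers no argument here at all---its ``proof'' is the single word \emph{Folklore}---so there is nothing to compare against; your sketch is the standard one and is sound. Part (1) correctly exploits that the clause ``$s \glpo t_l$ for all $l$'' is present in both the precedence and the lexicographic subcases, part (3) follows as you say once well-foundedness is in hand, and for part (2) you have rightly isolated the two genuinely nontrivial inputs: transitivity (which you outline but do not carry out---acceptable for a folklore lemma, and your proposed induction on the combined depth with a case split on both derivations is the standard way to do it) and Kruskal's tree theorem for well-foundedness over the finite signature.
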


\begin{proof}
Folklore.
\end{proof}

\end{section}

\begin{section}{Ordinal Terms and the Lexicographic Path Ordering}
\label{OrdinalTerms}

Let $N$ be defined as in the previous section. 
In this section we define a set of terms $T$ (and a subset $P \subset T$)
together with a well-ordering $<$ on $T$. The elements of $T$ are
built from $0$, $+$ and the $(N+1)$-ary function symbol $\psi$. It
is important to note that the elements of $T$ are \emph{terms} not
ordinals. Although these terms can serve as representations of
an initial segment of the set of ordinals \On, we will not make
any use of this \emph{interpretation}. 
In particular the reader 
not familiar with proof theory should have no difficulties to understand
the definitions and propositions of this section. However
some basic amount of understanding in proof theory may
be useful to grasp the origin and meaning of the presented concepts,
cf.~\cite{DershowitzOkada:1988,Lepper:2004,Schuette:1977}.
For the reader familiar with proof theory:
Note that $P$ corresponds to the set of additive principal numbers in $T$,
while $\psi$ represents the (set-theoretical) fixed-point free 
Veblen function, cf.~\cite{Schuette:1977,Lepper:2004}.
%
% def T, P \subset T
% def <
%
\begin{definition} 
\label{d:T}
Recursive definition of a set $T$ of 
ordinal terms, a subset $P \subset T$, and a
binary relation $>$ on $T$.
%\vspace{-\topsep}
\begin{enumerate}
\item $0 \in T$.
\item If $\alpha_1, \ldots, \alpha_m \in P$ and 
  $\alpha_1 \geq \cdots \geq \alpha_m$, then 
  $\alpha_1 + \cdots + \alpha_m \in T$.
\item If $\alpha_1,\ldots,\alpha_{N+1} \in T$, then
  $\psi(\alpha_1,\ldots,\alpha_{N+1}) \in P$ and
  $\psi(\alpha_1,\ldots,\alpha_{N+1}) \in T$.
\item $\alpha \not= 0$ implies $\alpha > 0$.
\item $ \alpha > \beta_1,\ldots,\beta_m$ and $\alpha \in P$
  implies $\alpha > \beta_1 + \cdots + \beta_m$.
\item Let $\alpha = \alpha_1 + \cdots + \alpha_m$, 
  $\beta = \beta_1 + \cdots + \beta_n$. Then $\alpha > \beta$ iff
  \begin{itemize}
    \item $m > n$, and for all $i$ ($i \in \{1,\ldots,n\}$) $\alpha_i = \beta_i$, or
    \item there exists $i$ ($i \in \{1,\ldots,m\}$) such that
      $\alpha_1 = \beta_1, \dots, \alpha_{i-1} = \beta_{i-1}$, and $\alpha_i > \beta_i$.
  \end{itemize}  
\item \label{en:T}
  Let $\alpha = \term{\psi}{\alpha}{1}{N+1}$,
  $\beta =  \term{\psi}{\beta}{1}{N+1}$. Then $\alpha > \beta$ iff
  \begin{itemize}
    \item there exists $k$ ($1 \leq k \leq N+1$) with 
    $\alpha_k \geq \beta$, or
    \item $\alpha > \beta_l$ for all $l=1,\ldots,N+1$ and 
    there exists an $i_0$ ($1 \leq i_0 \leq N+1$) such that 
    $\alpha_1=\beta_1, \ldots \alpha_{i_0-1} = \beta_{i_0-1}$ and 
    $\alpha_{i_0} > \beta_{i_0}$.
    \end{itemize}
\end{enumerate}
\end{definition}

We use lower-case Greek letters to denote the elements of $T$. 
Furthermore we formally define $\alpha + 0 = 0 + \alpha = \alpha$
for all $\alpha \in T$.

We sometimes abbreviate sequences of (ordinal) terms like 
$\alpha_1,\ldots,\alpha_n$ by $\overline{\alpha}$. Hence, instead of 
$\psi(\alpha_1,\ldots,\alpha_{N+1})$ we may write $\psi(\overline{\alpha})$.
To relate the elements of $T$ to more expressive ordinal notations, we
define $1 \defsym \psi(\overline{0})$, 
$\omega \defsym \psi(\overline{0},1)$, and
$\epsilon_0 \defsym \psi(\overline{0},1,0)$. 
Let \Lim\ be the set of elements in $T$ which are neither $0$ nor of the form
$\alpha + 1$. Elements of \Lim\ are called \emph{limit} ordinal terms.

\begin{proposition} 
Let $(T,<)$ be defined as above.
Then $(T,<)$ is a well-ordering.
\end{proposition}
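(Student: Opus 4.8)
The plan is to verify the two ingredients of a well-ordering in turn: that the relation $>$ of Definition~\ref{d:T} (whose inverse is $<$) is a strict linear order on $T$, and that it admits no infinite descending chain. The observation that organises the whole argument is that $>$ is built from two familiar pieces. Restricted to $\psi$-terms, the defining clause for $\psi$, item~(\ref{en:T}), is \emph{literally} the lexicographic path comparison $\glpo$ of Definition~\ref{d:LPO} for the one-symbol signature $\{\psi\}$ with $\ar{\psi}=N+1$: the first alternative is the subterm case, and the second is the case of equal head symbols. Restricted to sums, the comparison of additive normal forms $\alpha_1+\cdots+\alpha_m$ (with the $\alpha_i\in P$ listed in weakly decreasing order) is a length-then-lexicographic rule, while the clauses for $0$ and for the membership $\alpha\in P$ only glue principals and sums together. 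Every case below therefore reduces either to a property already proved for $\glpo$ or to a routine induction on sequences.

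First I would show, by a simultaneous induction on $\size{\alpha}+\size{\beta}$, that exactly one of $\alpha<\beta$, $\alpha=\beta$, $\alpha>\beta$ holds (trichotomy), and that $>$ is irreflexive and transitive. The case analysis follows the outermost shape of the two terms. Two sums are compared by the additive rule, invoking the induction hypothesis on the principal summands, which are strictly smaller in the measure. Two $\psi$-terms are handled by transcribing the standard proof that $\glpo$ is a strict order (the Kamin--Levy proposition), now carried out inside $T$. The mixed case of a principal against a proper sum is settled using the rule that a principal dominating every summand dominates their sum, together with the additive comparison. Transitivity is the delicate point, because the subterm alternative of the $\psi$-clause forces one to propagate inequalities of the form $\alpha_k\geq\beta$ through a further $\beta_l\geq\gamma$; this is exactly the bookkeeping in the classical $\glpo$ argument, so I would mirror it, applying the induction hypothesis at each descent into an argument of $\psi$.

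For well-foundedness I would pass through the intended ordinal meaning noted in the remark preceding Definition~\ref{d:T}. Define $o\colon T\to\On$ by $o(0)=0$, by letting $o$ of a sum be the ordinal sum of the $o$-values of its principals, and by $o(\psi(\overline{\alpha}))=\varphi(o(\alpha_1),\dots,o(\alpha_{N+1}))$, where $\varphi$ is the fixed-point-free $(N+1)$-ary Veblen function of~\cite{Schuette:1977,Lepper:2004}. The aim is to prove, again by induction on $\size{\alpha}+\size{\beta}$, that $\alpha<\beta$ if and only if $o(\alpha)<o(\beta)$; together with trichotomy this exhibits $o$ as an order isomorphism of $(T,<)$ onto an initial segment of the ordinals, whence $<$ inherits well-foundedness from $\On$. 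Since $T$ is a standard Veblen-style notation system, one may alternatively just cite this isomorphism from~\cite{Schuette:1977}.

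The step I expect to be the main obstacle is the order-preservation equivalence for $\psi$-terms, i.e.\ matching the purely syntactic clause~(\ref{en:T}) to the actual behaviour of $\varphi$. This is precisely where the monotonicity and fixed-point lemmas for the Veblen function are required---for instance, the characterisation of when $\varphi(\overline{\alpha})\leq\varphi(\overline{\beta})$ by exactly the subterm-and-lexicographic pattern of the clause---and it is the one place where the ordinal interpretation, deliberately avoided elsewhere in the chapter, genuinely enters. Should one prefer a self-contained syntactic treatment, the same difficulty resurfaces as the well-foundedness of this $\glpo$-style order over the finite signature underlying $T$, which one would then obtain by a minimal-bad-sequence argument in the spirit of the proof that $\glpo$ is a simplification order, rather than by appeal to the ordinals.
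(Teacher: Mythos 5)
Your proposal is correct, and its first half (trichotomy, irreflexivity and transitivity by induction on the total number of symbols in the two terms) is essentially what the paper does to establish that the order is well-defined. Where you diverge is in the well-foundedness argument. The paper does not introduce an ordinal interpretation at all: it proves well-foundedness by induction along the lexicographic path ordering, exploiting precisely the structural coincidence you observe between the $\psi$-comparison clause of the definition of $T$ and the equal-head-symbol clause of the lexicographic path ordering. Since the well-foundedness of that path ordering over a finite signature is already available in the chapter (it is a simplification order by the Kamin--Levy proposition), no fresh minimal-bad-sequence argument and no Veblen-function lemmas are needed; this is exactly the ``self-contained syntactic treatment'' you mention as a fallback, except that the hard part has already been paid for. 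Your primary route through the interpretation $o\colon T\to\mathrm{On}$ with the fixed-point-free Veblen function is also sound, but it imports the monotonicity and comparison lemmas for $\varphi$, which is the step you yourself flag as the main obstacle; its compensation is that it delivers the order type of $(T,<)$ as a by-product, a fact the paper instead defers to the separate proposition relating $(T(k),<)$ to the small Veblen ordinal. In short: same first half, but the paper's well-foundedness proof is the cheaper syntactic reduction to the path order, while yours is the semantic embedding into the ordinals.
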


\begin{proof}
Let $\length{\alpha}$ denote the number of symbols in the ordinal
term $\alpha$. Exploiting induction on $\length{\alpha}$ one
easily verifies that the ordering $(T,<)$ is well-defined. To show
well-foundedness one uses induction on the lexicographic path ordering
$\llpo$, exploiting the 
close connection between Definition~\ref{d:LPO}.\ref{en:LPO}
in Section~\ref{LPO} and
Definition~\ref{d:T}.\ref{en:T} above. 
\end{proof}

In the following proposition we want to relate the \emph{order type}
of the well-ordering $(T,<)$ and the well-partial ordering
$\llpo$. Concerning the latter it is best to momentarily restrict
our attention to the well-ordering $(\GTA{\Sigma},\llpo)$. 
We indicate the arity of the
function symbol $\psi$ employed in Definition~\ref{d:T}. 
We write $(T(N+1),<)$ instead of $(T,<)$.
Similarly we write $(\GTA{\Sigma(N)},\llpo)$ to indicate
the maximal arity of function symbols in the finite signature $\Sigma$.
Let $\veblen$ denote the small Veblen ordinal~\cite{Schuette:1977}
and let $\otyp(M)$ denote the order type of a well-odering $M$.

\begin{proposition}
\label{p:LPO_T}
\begin{enumerate}
\item 
For any number $k$, there exists an order isomorphic embedding from
$(\GTA{\Sigma(k)},\llpo)$ into $(T(k+1),<)$.
\item For any number $k > 2$, there exists an order isomorphic embedding from
$(T(k),<)$ into $(\GTA{\Sigma(k)},\llpo)$. 
\item 
$\sup_{k < \omega} (\otyp((T(k),<))) = 
\sup_{k < \omega} (\otyp((\GTA{\Sigma(k)},\llpo))) = \veblen$.
\end{enumerate}
\end{proposition}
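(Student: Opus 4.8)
The plan is to derive the order-type identity of the third claim from the two embeddings of the first two claims, together with the classical identification of $\sup_{k<\omega}\otyp((T(k),<))$ with the small Veblen ordinal. First I would establish the embedding $e\colon(\GTA{\Sigma(k)},\llpo)\to(T(k+1),<)$ of the first claim by exploiting the deliberate parallel between Definition~\ref{d:LPO}.\ref{en:LPO} and Definition~\ref{d:T}.\ref{en:T}. Since here $\psi$ is $(N+1)$-ary with $N=k$, I reserve its first argument for the precedence and its remaining $N$ arguments for the (zero-padded) immediate subterms, setting $e(\term{f_j}{s}{1}{m})\defsym\psi(c_j,e(s_1),\ldots,e(s_m),0,\ldots,0)$, where $c_j$ is the $j$-fold sum $1+\cdots+1$ with $1\defsym\psi(\overline{0})$, so that $f_j\succ f_i$ is turned into $c_j>c_i$. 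The image consists solely of principal terms, hence every comparison inside it is governed by Definition~\ref{d:T}.\ref{en:T}, and a routine induction matches the three alternatives of the lexicographic path ordering with that clause: a subterm with $s_p\geqlpo t$ gives $e(s_p)\geq e(t)$ and fires the first option of the clause; a precedence step $j>i$ becomes a strict decrease $c_j>c_i$ in the leading coordinate; and a lexicographic subterm step becomes lexicographic descent in the argument coordinates. The domination side conditions $e(s)>c_i,0,e(t_l)$ follow from the induction hypothesis together with the clause of Definition~\ref{d:T} stating that a principal term exceeding each summand exceeds the whole sum. Forward monotonicity and injectivity suffice, since $\llpo$ is a total order on ground terms (the precedence being total) and $<$ is a well-ordering, so $e$ is automatically an order isomorphic embedding.

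The converse embedding $d\colon(T(k),<)\to(\GTA{\Sigma(k)},\llpo)$ of the second claim is where the real work lies, because the additive structure of $T$ has no single LPO counterpart and must be routed through the precedence. I fix in $\Sigma(k)$ a constant $c$, a binary symbol $\oplus$ and a $k$-ary symbol $g$ with $c\prec\oplus\prec g$; the hypothesis $k>2$ is exactly what lets these three symbols of distinct arities $0,2,k$ coexist under the arity bound $k$. I set $d(0)=c$, $d(\psi(\alpha_1,\ldots,\alpha_k))=g(d(\alpha_1),\ldots,d(\alpha_k))$ and $d(\alpha_1+\cdots+\alpha_m)=\oplus(d(\alpha_1),d(\alpha_2+\cdots+\alpha_m))$ for $m\geq2$. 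Principal-versus-principal comparisons again match Definition~\ref{d:T}.\ref{en:T} with the equal-symbol case of the ordering, and comparisons of two sums reduce, via the lexicographic $\oplus$-clause of the LPO, to the sum-comparison clause of Definition~\ref{d:T}. The main obstacle I anticipate is the sum-versus-principal case, which must be checked in both directions: since $g\succ\oplus$, a principal term beats a sum by dominating the immediate subterms $d(\alpha_1),d(\alpha')$ of the $\oplus$-node, whereas since $\oplus\prec g$ a sum can beat a principal term only through its leading summand $d(\alpha_1)$; verifying these, together with the routine fact that deleting the leading summand strictly decreases a sum, is the delicate part. As before, injectivity and forward monotonicity give an order isomorphic embedding because the source $(T(k),<)$ is total.

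Finally, for the third claim I read off from the two embeddings that $\otyp((T(k),<))\leq\otyp((\GTA{\Sigma(k)},\llpo))\leq\otyp((T(k+1),<))$ for every $k$, since an order isomorphic embedding between well-orders cannot increase the order type. Taking suprema over $k<\omega$ squeezes the middle term between two sequences with a common supremum (the upper one being a shift of the lower), so $\sup_{k<\omega}\otyp((\GTA{\Sigma(k)},\llpo))=\sup_{k<\omega}\otyp((T(k),<))$. It then remains only to identify this common value with $\veblen$: reading $\psi$ as the fixed-point-free Veblen function of $N+1$ arguments, $\otyp((T(k),<))$ is the order type of the $k$-ary Veblen notation system, and these converge from below to the small Veblen ordinal, which I would cite from Sch{\"u}tte~\cite{Schuette:1977} to close the argument.
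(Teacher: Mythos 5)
Your proposal is correct and, in substance, it is the argument the paper intends: the published proof of this proposition is only a two-line sketch (``a consequence of the well-ordering proof of $(T,<)$'' plus a citation for the third item), and what you supply are exactly the explicit embeddings that sketch presupposes. Your map $e$ for assertion (i) is essentially the interpretation $\pi$ defined later in the chapter (precedence index in the first $\psi$-slot, subterms in the remaining slots, zero padding), and the matching of the three LPO alternatives against clause (vii) of the definition of $<$ is the ``close connection'' the paper alludes to. For assertion (ii) the paper gives no construction at all; your encoding of $+$ by a binary symbol below a $k$-ary symbol is a sensible way to realise it, and you correctly isolate the genuinely delicate point, namely the sum-versus-principal comparisons and the fact that the induction for sum-versus-sum must be run on a measure that covers the tail sums (note that in the source order a principal term is compared with a sum only via the degenerate length-one reading of clause (vi), so both directions of that case do need the explicit check you describe). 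The only place where your route genuinely differs from the paper's is assertion (iii): the paper quotes Schmidt's computation of $\sup_k \mathsf{otype}(\mathcal{T}(\Sigma(k)),<_{\textsc{lpo}})$ as the small Veblen ordinal and lets the squeeze transfer this to the $T(k)$, whereas you compute $\sup_k \mathsf{otype}(T(k),<)$ from the Veblen-function reading of $\psi$ (Sch\"utte) and transfer it in the opposite direction. Both are legitimate, but your version silently assumes that the syntactic order on $T(k)$ agrees with the ordinal order under that reading -- a standard fact, but one that deserves its own citation or verification, which is precisely what the appeal to Schmidt avoids.
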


\begin{proof}
The first two assertions are a consequence of 
the well-ordering proof of $(T,<)$. We only comment
on the stated lower bound in the second one.
The statement fails for $(T(2),<)$ and $(\GTA{\Sigma(2)},\llpo)$.
The presence of the binary function symbol $+$ in $T(2)$
can make the ordering $<$ more expressive than
$\llpo$. This difference vanishes for $k \geq 3$.
The third assertion follows from~\cite{Schmidt:1979}.
\end{proof}
\end{section}

\begin{section}{Fundamental Sequences and Sub-recursive Hierarchies}
\label{Fundamental_Sequences}

To each ordinal term $\alpha \in T$ we assign a canonical sequence of ordinal
terms $\langle \alpha[x] \colon x \in \N \rangle$, the \emph{fundamental
sequence}. The concept of fundamental sequences is a crucial one in
(ordinal) proof theory. 
The main idea of utilizing fundamental sequences in term rewriting, is that the
descent along the branches of such a sequence can, informally speaking,
code rewriting steps. 
We have to wade through some technical definitions.

We define the set $\is{\overline{\alpha}}{\gamma}$, the set of
\emph{interesting subterms} of $\gamma$ (relative to $\overline{\alpha}$)
by induction on $\gamma$. 
We set $\is{\overline{\alpha}}{0} \defsym \emptyset$,
$\is{\overline{\alpha}}{\gamma_1 + \cdots + \gamma_m} \defsym 
  \bigcup_{i=1}^m \is{\overline{\alpha}}{\gamma_i}$, and
finally 
% IS, MS
%
\vspace{-\topsep}
\begin{equation*}
\is{\overline{\alpha}}{\term{\psi}{\gamma}{1}{N+1}} \defsym 
  \left \{ \begin{array}{ll}
      \{ \psi(\overline{\gamma}) \} \quad & \mbox{if}\ (\gamma_1,\ldots,\gamma_N) \geqlex 
      (\alpha_1,\ldots,\alpha_N)\\[3pt]
      \bigcup_{i=1}^{N+1} \is{\overline{\alpha}}{\gamma_i} & \mbox{otherwise}.
      \end{array}
      \right . 
\end{equation*}

The (relative to $\overline{\alpha}$) \emph{maximal interesting
subterm} $\ms{\overline{\alpha}}{\gamma_1,\ldots,\gamma_n}$ of a non-empty
sequence $(\gamma_1,\ldots,\gamma_n)$ is defined as the maximum of the
terms occurring in $\is{\overline{\alpha}}{\gamma_i}$.
Let $\glex$ denote the lexicographic ordering on sequences of ordinal terms 
induced by $>$. 
% Fix(\alpha}
%
Let $\overline{\alpha} = \alpha_1,\ldots,\alpha_{N} \in T$ and $\beta \in T$.
Then set
\begin{equation*}
\fix{\overline{\alpha}} \defsym \{ \psi(\overline{\gamma}, \delta) \colon 
\overline{\gamma} \glex \overline{\alpha} \mbox{ and } 
\psi(\overline{\gamma},\delta) > \alpha_i \mbox{ for all } i=1,\ldots,N
\} \quad .
\end{equation*}

For a unary function symbol $f$ we define the $n^{\rm th}$ iteration
$f^n$ inductively as (i) $f^0(x) \defsym x$, and (ii) $f^{n+1}(x) \defsym f(f^n(x))$.
We will make use of this notation for functions of higher arity by assuming
that all but one argument remain fixed. We use $\cdot$ to indicate the free
position. 
In the sequel $\lambda$ (possibly extended by a subscript) 
will always denote a limit ordinal term. 

\begin{definition}
\label{d:fundamental}
Recursive definition of $\alpha[x]$ for $x < \omega$.
%\vspace{-\topsep}
\begin{eqnarray*}
0[x] & \defsym & 0\\
(\alpha_1 + \cdots + \alpha_m)[x] & \defsym & \alpha_1 + \cdots + \alpha_m[x] \qquad
  m > 1, \alpha_1 \geq \cdots \geq \alpha_m\\
\psi(\overline{0})[x] & \defsym & 0\\
\psi(\overline{0},\beta+1)[x] & \defsym & \psi(\overline{0},\beta) \cdot (x+1)\\
\psi(\overline{0},\lambda)[x] & \defsym & \psi(\overline{0},\lambda[x]) \qquad
  \lambda \not\in \fix{\overline{0}}\\
\psi(\overline{0},\lambda)[x] & \defsym & \lambda \cdot (x+1) \qquad
  \lambda \in \fix{\overline{0}}\\
\psi(\alpha_1,\ldots,\alpha_i + 1, \overline{0}, 0)[x] & \defsym &
   \psi(\alpha_1,\ldots,\alpha_i,\cdot,\overline{0})^{x+1}(0)\\
\psi(\alpha_1,\ldots,\alpha_i + 1, \overline{0}, \beta+1)[x] & \defsym &
   \psi(\alpha_1,\ldots,\alpha_i,\cdot,\overline{0})^{x+1}
   (\psi(\alpha_1,\ldots,\alpha_i + 1, \overline{0}, \beta))\\
\psi(\alpha_1,\ldots,\alpha_i + 1, \overline{0}, \lambda)[x] & \defsym &
   \psi(\alpha_1,\ldots,\alpha_i+1,\overline{0},\lambda[x]) \qquad
   \lambda \not\in \fix{\overline{\alpha},\overline{0}}\\
\psi(\alpha_1,\ldots,\alpha_i + 1, \overline{0}, \lambda)[x] & \defsym &
   \psi(\alpha_1,\ldots,\alpha_i,\cdot,\overline{0})^{x+1}(\lambda) \qquad
   \lambda \in \fix{\overline{\alpha},\overline{0}}\\
\psi(\alpha_1,\ldots,\lambda_i, \overline{0}, 0)[x] & \defsym &
   \psi(\alpha_1,\ldots,\lambda_i[x],\overline{0}, 
   \ms{\overline{\alpha},\lambda_i,\overline{0}}{\overline{\alpha},\lambda_i})\\
\psi(\alpha_1,\ldots,\lambda_i, \overline{0}, \beta+1)[x] & \defsym &
   \psi(\alpha_1,\ldots,\lambda_i[x],\overline{0}, 
        \psi(\alpha_1,\ldots,\lambda_i,\overline{0},\beta))\\
\psi(\alpha_1,\ldots,\lambda_i, \overline{0}, \lambda)[x] & \defsym &
   \psi(\alpha_1,\ldots,\lambda_i,\overline{0}, \lambda[x]) \qquad 
   \lambda \not\in \fix{\overline{\alpha},\overline{0}}\\
\psi(\alpha_1,\ldots,\lambda_i, \overline{0}, \lambda)[x] & \defsym &
   \psi(\alpha_1,\ldots,\lambda_i[x],\overline{0}, \lambda) \qquad 
   \lambda \in \fix{\overline{\alpha},\overline{0}}
\end{eqnarray*}
\end{definition}

The above definition is given
in such a way as to simplify the comparison between
the fundamental sequences for $T$ and the 
fundamental sequences for the set of ordinal terms $T(2)$ 
(built from $0$, $+$, and a 2-ary function symbol $\psi$)
as presented in~\cite{Weiermann:2001}.
Note that our definition is equivalent to the more compact one 
presented in~\cite{Lepper:2004}.
The following proposition is stated without proof.
A proof (for a slightly different 
assignment of fundamental sequences) can be found in~\cite{Buchholz:2003}.

\begin{proposition}
Let $\alpha \in T$ be given; assume $x < \omega$. If $\alpha > 0$,
then $\alpha > \alpha[x]$. For $\alpha > 1$ we get $\alpha[x] > 0$, and
if $\alpha \in \Lim$, then $\alpha[x+1] > \alpha[x]$. Finally, if
$\beta < \alpha \in \Lim$, then there exists $x < \omega$, such that $\beta < \alpha[x]$
holds.
\end{proposition}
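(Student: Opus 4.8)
The plan is to establish all four assertions simultaneously by induction on the term complexity $\length{\alpha}$, following the case distinction of Definition~\ref{d:fundamental}. The statements are mutually supporting: the cofinality claim for a term $\psi(\overline{\gamma})$ appeals to the strict-monotonicity claim for its proper subterms, and the monotonicity claim in turn uses that fundamental sequences descend. It is therefore cleanest to fold the four properties into a single inductive statement, so that the induction hypothesis provides each of them for every $\beta$ with $\length{\beta} < \length{\alpha}$.

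First I would dispose of the structural cases. For $\alpha = 0$ and $\alpha = \psi(\overline{0}) = 1$ the claims are vacuous or immediate. For a proper sum $\alpha = \alpha_1 + \cdots + \alpha_m$ with $m > 1$ the definition sets $\alpha[x] = \alpha_1 + \cdots + \alpha_{m-1} + \alpha_m[x]$, so each assertion reduces to the corresponding one for the last summand $\alpha_m$ via the ordering rules for sums (items 5 and 6 of Definition~\ref{d:T}): claim~(1) from $\alpha_m[x] < \alpha_m$, claim~(2) because the head summand $\alpha_1$ survives and is positive, claim~(3) from strict monotonicity of $\alpha_m[\cdot]$, and claim~(4) by first locating the summand at which $\beta$ and $\alpha$ first differ and then invoking the induction hypothesis on $\alpha_m$. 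A successor appears here as a sum whose last summand is $1 = \psi(\overline{0})$, whence $(\beta+1)[x] = \beta$ and only claims~(1),(2) are pertinent.

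The real work lies in the $\psi$-clauses, and here I would first isolate and prove two auxiliary facts about $\psi$ directly from Definition~\ref{d:T}.\ref{en:T}: (a) \emph{inflation}, that $\gamma < \psi(\alpha_1,\ldots,\alpha_i,\gamma,\overline{0})$, which follows from the subterm clause together with the principal-number rule (item 5) by case analysis on the shape of $\gamma$; and (b) \emph{strict monotonicity} of $\gamma \mapsto \psi(\alpha_1,\ldots,\alpha_i,\gamma,\overline{0})$ in the distinguished coordinate, from the lexicographic clause. With these in hand the iteration clauses $\psi(\alpha_1,\ldots,\alpha_i,\cdot,\overline{0})^{x+1}(\cdot)$ become routine: claim~(3) is immediate since $\alpha[x+1] = \psi(\alpha_1,\ldots,\alpha_i,\alpha[x],\overline{0})$ and (a) gives $\alpha[x] < \alpha[x+1]$; claim~(1) holds because each iterate has $i$-th coordinate $\alpha_i$ rather than $\alpha_i+1$, so the lexicographic clause of Definition~\ref{d:T}.\ref{en:T} applies once its side condition $\alpha > \beta_l$ is verified through (a); and claim~(2) is positivity of a $\psi$-term by item 4. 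The multiplication clauses $\psi(\overline{0},\beta)\cdot(x+1)$ are handled identically through the sum rules.

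The hard part will be claim~(4), cofinality, in the two families that descend into a limit coordinate $\lambda_i$ and employ the maximal interesting subterm $\ms{\overline{\alpha},\lambda_i,\overline{0}}{\overline{\alpha},\lambda_i}$. Given $\beta < \alpha$, I would analyse which clause of Definition~\ref{d:T}.\ref{en:T} witnesses $\beta < \alpha$: if it is the subterm clause, that $\beta \leqslant \alpha_k$ for some $k$, one descends by the induction hypothesis; if it is the lexicographic clause, one must choose $x$ so large that $\lambda_i[x]$ already overtakes the $i$-th coordinate of $\beta$, which is exactly where the induction hypothesis (claims~(3) and~(4) for $\lambda_i$) is used, and where the interesting-subterm data $\is{\overline{\alpha}}{\gamma}$ selected by the $\geqlex$ test must be checked to be genuinely cofinal. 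I expect the single real obstacle to be verifying that membership in $\fix{\overline{\alpha},\overline{0}}$ correctly separates the ``$\lambda[x]$ in the last coordinate'' clauses from the iteration clauses, so that cofinality is preserved in each branch. Since the present assignment agrees with that of \cite{Lepper:2004} and is a notational variant of \cite{Buchholz:2003}, I would finally note that on the binary reduct $T(2)$ these computations coincide with the ones in \cite{Weiermann:2001}, which may be cited to shorten the most tedious sub-cases.
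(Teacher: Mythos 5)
The paper offers no proof of this proposition at all: it is explicitly ``stated without proof'' with a pointer to Buchholz's note on fundamental sequences for a slightly different assignment, so there is no in-paper argument to measure you against. Taken on its own terms, your plan --- a simultaneous induction on the number of symbols in $\alpha$, with the sum case reduced to the last summand, the two auxiliary facts (inflation and strict monotonicity of $\psi$ in a distinguished coordinate) extracted first, and the iteration/multiplication clauses then dispatched by the lexicographic and principal-number rules --- is the natural reconstruction and is sound for the first three assertions.

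The gap is in the fourth assertion, and it sits exactly where the paper warns it will: the clauses $\psi(\alpha_1,\ldots,\lambda_i,\overline{0},0)[x] = \psi(\alpha_1,\ldots,\lambda_i[x],\overline{0},\mathrm{MS})$, where $\mathrm{MS}$ denotes the maximal interesting subterm of the parameters. You correctly identify this as the hard case, but you do not prove it --- you only record that the interesting-subterm data ``must be checked to be genuinely cofinal'' and that you ``expect'' the separation by the fixed-point set to work out. That check is the entire non-trivial content of the proposition: the $\Gamma_0$ computation immediately following it in the paper shows that dropping $\mathrm{MS}$ from the last coordinate destroys precisely this assertion, so cofinality cannot follow from the general shape of the clauses alone. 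Concretely, given $\beta = \psi(\overline{\beta}) < \alpha$ via the lexicographic clause, producing $x$ with $\beta < \alpha[x]$ requires not only that $\lambda_i[x]$ eventually majorises $\beta_i$ (which your induction hypothesis delivers), but also that $\alpha[x] > \beta_l$ for \emph{every} coordinate $\beta_l$ of $\beta$, including ones that are themselves close to $\alpha$; this is exactly what the term $\mathrm{MS}$ in the last coordinate is there to guarantee, via the subterm clause, and your sketch never carries out that verification. Until that case is written down, the proof is incomplete at its decisive point.
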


In the definition of $\psi(\alpha_1,\ldots,\lambda_i, \overline{0}, 0)[x]$
we introduce at the last position of $\psi$ the term 
$\ms{\overline{a},\overline{0}}{\overline{\alpha}}$.
We cannot simply dispense of this term.
To see this, we alter the definition of the crucial case.
We momentarily consider only $3$-ary $\psi$-functions;
we set $\Gamma_0 \defsym \psi(1,0,0)$ and calculate $\psi(0,\Gamma_0,0)[x]$:
\begin{eqnarray*}
\psi(0,\Gamma_0,0)[x] & = &\psi(0,\psi(1,0,0)[x],0)\\
& = & \psi(0,\psi(0,\cdot,0)^{x+1}(0),0)\\
& = & \psi(0,\cdot,0)^{x+2}(0)\\
& < & \psi(1,0,0) \quad .
\end{eqnarray*}

Hence for every $x < \omega$; $\psi(0,\Gamma_0,0)[x] < \Gamma_0$ holds. This
contradicts the last assertion of the proposition as
$\Gamma_0 < \psi(0,\Gamma_0,0)$.
As a side-remark we want to mention that the given assignment of
fundamental sequences even fulfills the \emph{Bachmann} property, 
see~\cite{Bachmann:1955}.
Utilizing Definition~\ref{d:fundamental} we are now in the position
to define sub-recursive hierarchies of ordinal functions.

%
% G
%
\begin{definition} (The slow-growing hierarchy).
\label{d:G}
Recursive definition of the function $G_{\alpha} \colon \omega \to \omega$
for $\alpha \in T$. 
\vspace{-\topsep}
\begin{eqnarray*}
G_0(x) & \defsym & 0\\
G_{\alpha +1}(x) & \defsym & G_{\alpha}(x) + 1\\
G_{\lambda}(x) & \defsym & G_{\lambda[x]}(x) \quad .
\end{eqnarray*}
\end{definition}

%
% F
%
\begin{definition} (The fast-growing hierarchy.)
Recursive definition of the function $F_{\alpha} \colon \omega \to \omega$
for $\alpha \in T$. 
\vspace{-\topsep}
\begin{eqnarray*}
F_0(x) & \defsym & x+1\\
F_{\alpha +1}(x) & \defsym & F_{\alpha}^{x+1}(x)\\
F_{\lambda}(x) & \defsym & F_{\lambda[x]}(x) \quad .
\end{eqnarray*}
\end{definition}

It is easy to see that $G_{\alpha}(x) < F_{\alpha}(x)$ for all $\alpha > 0$.
To see that the name of the hierarchy $\{G_{\alpha} \colon \alpha \in T\}$ 
is appropriate, it suffices to calculate  some examples. 
Take e.g.\ $G_{\omega}$:
$G_{\omega}(x) =  G_{\psi(\overline{0})\cdot (x+1)}(x) = G_{x+1}(x) = G_{x}(x) + 1 = x+1$.

Recall that a function $f$ is \emph{elementary} (in a function $g$) if $f$ is
definable explicitely from $0$, $1$, $+$, $\modminus$ (and $g$), using bounded
sum and product. $E(g)$ denotes the class of all such functions $f$.
Then $G_{\epsilon_0}$ majorizes the elementary functions $E$.
In contrast the function $F_{\omega}$ already majorizes the primitive recursive functions,
i.e.\ its growth rate is comparable to the (binary) Ackermann function. 
Furthermore the class of multiple recursive functions can be characterized
by the hierarchy $\{E(F_{\gamma}) \colon \gamma < \omega^{\omega}\}$, 
cf.~\cite{Peter:1967,Robbin:1965}.

However, the following theorem states a (surprising) connection 
between the slow- and fast-growing hierarchy. 
See e.g.~\cite{Girard:1981,CichonWainer:1983,Weiermann:2001} 
for further reading on the Hierarchy Comparison Theorem.

\begin{theorem} 
\label{t:hierarchy_comparission}
(The Hierarchy Comparison Theorem.)
%
%\vspace{-\topsep}
\begin{equation*}
\bigcup_{\alpha \in T} E(G_{\alpha})
   = \bigcup_{\gamma < \omega^{N+1}} E(F_{\gamma}) \quad.
\end{equation*}
\end{theorem}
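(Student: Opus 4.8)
The plan is to prove the two inclusions separately. Both are established by transfinite induction along the fundamental-sequence structure of $T$ given in Definition~\ref{d:fundamental}, with the combinatorial heart supplied by Girard's Hierarchy Comparison Theorem~\cite{Girard:1981} (see also~\cite{CichonWainer:1983,Weiermann:2001}). Throughout I would use that the elementary closure operator $E(\cdot)$ is robust: it is closed under composition, bounded sums and products, and absorbs any fixed finite number of applications of its generator, so it suffices to compare the two hierarchies up to majorization (in the sense of the footnote in Section~\ref{Fundamental_Sequences}) rather than on the nose.

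First the inclusion $\bigcup_{\alpha \in T} E(G_{\alpha}) \subseteq \bigcup_{\gamma < \omega^{N+1}} E(F_{\gamma})$. I would show, by transfinite induction on $\alpha \in T$, that $G_{\alpha}$ is majorized by, hence elementary in, some $F_{\gamma}$ with $\gamma < \omega^{N+1}$. The cases $\alpha = 0$ and $\alpha = \beta + 1$ are immediate from Definition~\ref{d:G}, since $G_{\beta+1}(x) = G_{\beta}(x)+1$. For a limit $\lambda$ one uses $G_{\lambda}(x) = G_{\lambda[x]}(x)$, where the only delicate point is the size of the jump incurred by passing from $\lambda$ to $\lambda[x]$. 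Inspecting Definition~\ref{d:fundamental}, every such jump is governed by the iteration clauses $\psi(\alpha_1,\ldots,\alpha_i,\cdot,\overline{0})^{x+1}$, and a single application of the successor clause $F_{\gamma+1}(x) = F_{\gamma}^{x+1}(x)$ absorbs exactly this $(x+1)$-fold iteration. Tracking how the $\psi$-nesting depth translates into the exponent of $\omega$ keeps the resulting index below $\omega^{N+1}$, since $\psi$ is $(N+1)$-ary.

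The reverse inclusion $\bigcup_{\gamma < \omega^{N+1}} E(F_{\gamma}) \subseteq \bigcup_{\alpha \in T} E(G_{\alpha})$ is the substantive direction: although each $G_{\alpha}$ is pointwise slow (e.g.\ $G_{\omega}(x) = x+1$), the slow-growing functions \emph{catch up} once their index ranges cofinally in $\otyp(T(N+1))$. I would exhibit, for each $\gamma < \omega^{N+1}$, an ordinal term $\alpha_{\gamma} \in T$ with $F_{\gamma}$ majorized by $G_{\alpha_{\gamma}}$. The device is a collapse $\gamma \mapsto \alpha_{\gamma}$ into the range of the Veblen function $\psi$; the point is that the fundamental sequences of $\psi$-terms \emph{internalize} iteration, so that descending along $G$ through such terms reproduces the nested iterations that define $F$. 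Concretely, the very $(\cdot)^{x+1}$ clauses of Definition~\ref{d:fundamental} that bounded the jump above now \emph{generate}, on the $G$-side, the iteration $F_{\gamma}^{x+1}$ demanded by the successor clause of the fast-growing hierarchy, and the $N+1$ argument places of $\psi$ provide enough room to reach every $\gamma < \omega^{N+1}$.

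I expect this reverse inclusion, and within it the verification that the collapse $\gamma \mapsto \alpha_{\gamma}$ yields genuine cofinal domination, to be the main obstacle: one must match the $x$-indexed iteration built into $G_{\alpha_{\gamma}}$ with the $x$-indexed iteration built into $F_{\gamma}$ \emph{uniformly} in $x$, not merely for each fixed $x$. Rather than grind through the subrecursive estimates, I would invoke Girard's theorem (and its reworkings in~\cite{CichonWainer:1983,Weiermann:2001}) for this comparison, which is precisely the technical payoff that lets us, as announced in Section~\ref{rta03:Introduction}, bypass direct calculation with the $F$-hierarchy. Finally, combining the two inclusions across all arities $N$ with the characterization of the multiple recursive functions by $\{E(F_{\gamma}) \colon \gamma < \omega^{\omega}\}$ (cf.~\cite{Peter:1967,Robbin:1965}) recovers the multiply-recursive bound on derivation lengths used later.
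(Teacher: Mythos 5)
Your proposal is correct and follows essentially the same route as the paper: both arguments reduce the statement to the established Hierarchy Comparison Theorem of Girard/Weiermann~\cite{Girard:1981,Weiermann:2001}, treat the inclusion $E(G_{\alpha}) \subseteq \bigcup E(F_{\gamma})$ as the routine direction (via \cite{CichonWainer:1983}-style estimates along the fundamental sequences), and correctly single out the majorization of each $F_{\gamma}$, $\gamma < \omega^{N+1}$, by some $G_{\alpha}$ as the substantive step requiring the extension from the binary $\psi$ of $T(2)$ to the $(N+1)$-ary case. The paper is no more detailed than you are; it likewise only states the main idea and defers the technical work to extending Lemma~5 and Theorem~1 of \cite{Weiermann:2001}.
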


\begin{proof}
We do not give a detailed proof, but only state the main idea.
In~\cite{Weiermann:2001} the hierarchy comparison
theorem has been established for the set of ordinal terms $T(2)$ 
(built from $0$, $+$, and the function symbol $\psi$, where $\ar{\psi} = 2$).
To extend the result to $T$ it suffices
to follow the pattern of the proof in~\cite{Weiermann:2001}.

The difficult direction is to show that 
every function in the hierarchy 
$\{ F_{\gamma} \colon \gamma < \omega^{N+1} \}$
is majorized by some $G_{\alpha}$. To show this one in particular needs to extend
the proofs of Lemma~5 and Theorem~1 in~\cite{Weiermann:2001}
adequately.
The reversed direction follows by standard techniques, cf.~\cite{CichonWainer:1983}.
\end{proof}
\end{section}

\begin{section}{The Interpretation Theorem}

For all $\alpha \in T$ there are uniquely determined ordinal terms 
$\alpha_1 \geq \cdots \geq \alpha_m \in P$ such that 
$\alpha = \alpha_1 + \cdots + \alpha_m$ holds. 
In addition, for every $\alpha \in P$ there exist unique 
$\alpha_1,\ldots,\alpha_{N+1}$ such that $\alpha = \psi(\alpha_1,\ldots,\alpha_{N+1})$.
(This normal form property is trivial by definition.) 
Now assume $\alpha, \beta \in T$ with 
$\alpha = \gamma_1 + \cdots + \gamma_{m_0}$, $\beta = \gamma_{m_0+1} + \cdots + \gamma_m$.
Then the \emph{natural sum} $\alpha \# \beta$ is defined as
$\gamma_{\rho(1)} + \cdots + \gamma_{\rho(m)}$,
where $\rho$ denotes a permutation on $\{1,\ldots,m\}$ such that 
$\gamma_{\rho(1)} \geq \cdots \geq \gamma_{\rho(m)}$ holds.

Let $R$ denote a finite rewrite system whose induced rewrite
relation is contained in $\glpo$. 

\begin{definition} 
\label{d:interpretation}
Recursive definition of the interpretation function $\int \colon \GTA{\Sigma} \to T$.
Let $N$ denote the maximal arity of a function symbol in $\Sigma$. 
If $s = f_j \in \Sigma$, then set $\int(s) \defsym \psi(j,\overline{0})$. 
Otherwise, let $s=\term{f_j}{s}{1}{m}$ and set
\begin{equation*}
\int(s) \defsym \psi(j,\int(s_1),\ldots,\int(s_m)+1,\overline{0}) \quad.
\end{equation*}
\end{definition}

In the sequel of this section we show that $\int$ defines an \emph{interpretation}
for $R$ on $(T,<)$; i.e.\ we establish the following theorem.

\begin{theorem}
\label{t:interpretation1}
For all $s,t \in \GTA{\Sigma}$ we
have 
$s \rew t$ implies $\int(s) > \int(t)$. 
\end{theorem}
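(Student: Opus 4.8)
The plan is to factor the statement through the stronger \emph{order-preservation} claim
\[
s \glpo t \quad\Longrightarrow\quad \int(s) > \int(t)
\qquad (s,t \in \GTA{\Sigma}) ,
\]
from which the theorem is immediate: since $\rew$ is contained in $\glpo$ and $\glpo$ is, by the Kamin--Levy proposition, a simplification order (hence stable under substitution and monotonic under contexts), every ground step $s \rew t$ already satisfies $s \glpo t$. For ground $s,t$ the first clause of Definition~\ref{d:LPO} is vacuous, so only clause~\ref{en:LPO} with its three alternatives can apply, and I would prove the order-preservation claim by induction on $\size{s}+\size{t}$, matching each alternative of Definition~\ref{d:LPO}.\ref{en:LPO} against the corresponding alternative of Definition~\ref{d:T}.\ref{en:T}.

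Before the induction I would record two facts about $\int$, both by a straightforward structural induction. First, $\int(s)=\psi(j,\int(s_1)+1,\dots,\int(s_m)+1,\overline{0})$ is always a $\psi$-term, hence lies in $P$ and is additive principal; moreover $\int(s)>1$. Consequently, by the additive-principal clause of Definition~\ref{d:T} (for $\alpha\in P$, if $\alpha>\beta_i$ for all $i$ then $\alpha>\beta_1+\cdots+\beta_m$), the term $\int(s)$ exceeds every natural number, and whenever $\int(s)>\beta$ we also get $\int(s)>\beta+1$. Second, the subterm property $\int(s)>\int(s_k)$ holds for each immediate subterm $s_k$: the $(k+1)$-th argument of $\int(s)$ equals $\int(s_k)+1\geq\int(s_k)$, so the first alternative of Definition~\ref{d:T}.\ref{en:T} applies directly.

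For the induction, write $s=f_j(s_1,\dots,s_m)$ and $t=f_i(t_1,\dots,t_n)$. If some $s_k \geqlpo t$, then either $s_k=t$ and $\int(s)>\int(s_k)=\int(t)$ by the subterm property, or $s_k\glpo t$ and the induction hypothesis yields $\int(s_k)>\int(t)$, whence $\int(s)>\int(t)$ by transitivity of the well-ordering $(T,<)$. If $j>i$ with $s\glpo t_l$ for all $l$, I would invoke the second alternative of Definition~\ref{d:T}.\ref{en:T}: the induction hypothesis gives $\int(s)>\int(t_l)$, upgraded to $\int(s)>\int(t_l)+1$ by additive principality, while $\int(s)>i$ and $\int(s)>0$ hold because $\int(s)$ exceeds the naturals, and the first differing argument is the leading one, where $j>i$. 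The lexicographic alternative ($i=j$, $s\glpo t_l$ for all $l$, with $s_1=t_1,\dots,s_{i_0-1}=t_{i_0-1}$ and $s_{i_0}\glpo t_{i_0}$) is handled identically, except that the leading arguments agree ($j=j$ and $\int(s_l)+1=\int(t_l)+1$ for $l<i_0$) and the first strict decrease occurs at position $i_0+1$, supplied by $\int(s_{i_0})>\int(t_{i_0})$ from the induction hypothesis.

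The construction is deliberately arranged so that Definition~\ref{d:interpretation} transcribes the precedence $f_j\succ f_i \gdw j>i$ into the leading argument of $\psi$ and the recursive LPO comparison into the remaining arguments; consequently the clauses of Definition~\ref{d:T}.\ref{en:T} line up one for one with those of Definition~\ref{d:LPO}.\ref{en:LPO}. The only genuinely delicate points, and the ones I would treat with care, are the two places where the naive hypothesis $\int(s)>\int(t_l)$ is not quite enough and must be strengthened to $\int(s)>\int(t_l)+1$: this is exactly what additive principality of the $\psi$-terms in $P$ buys us, and it is the reason the $+1$ appears in Definition~\ref{d:interpretation} and why the image of $\int$ must land in $P$. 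Everything else is bookkeeping across the two parallel definitions.
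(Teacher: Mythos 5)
Your proof is correct, but it takes a genuinely different route from the paper. The paper never proves Theorem~\ref{t:interpretation1} directly: it obtains it as a corollary of the much stronger pointwise Interpretation Theorem (Theorem~\ref{t:lemma20} and Theorem~\ref{t:interpretation2}), which establishes $\int(l\rho) \gx{k} \int(r\rho)$ for a fixed $k$ depending only on $R$, combined with the remark that $\alpha \gx{k} \beta$ implies $\alpha > \beta$. That detour through the generalized fundamental sequences is forced by the intended application: the bare inequality $\int(s) > \int(t)$ only yields that $G_{\int(s)}$ eventually majorizes $G_{\int(t)}$, which is useless for derivation-length bounds, so the paper needs the pointwise relation anyway and the weak statement falls out for free. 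Your argument instead proves exactly the stated theorem by a direct structural induction showing that $\int$ embeds $\glpo$ on ground terms into $(T,<)$, matching the three alternatives of Definition~\ref{d:LPO}.\ref{en:LPO} against Definition~\ref{d:T}.\ref{en:T} and using additive principality of the $\psi$-terms (clause 5 of Definition~\ref{d:T}) to absorb the precedence indices and the successors. This is the natural, self-contained proof of the statement in isolation, requiring none of the machinery of Section~\ref{Fundamental_Sequences}; what it cannot do is deliver the pointwise refinement that the complexity analysis actually consumes downstream.

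One transcription slip: Definition~\ref{d:interpretation} attaches the successor only to the \emph{last} argument, $\int(s) = \psi(j,\int(s_1),\ldots,\int(s_m)+1,\overline{0})$, not to every argument as you write. This does not break your induction --- in the lexicographic case the first strict descent is supplied by $\int(s_{i_0}) > \int(t_{i_0})$ when $i_0 < m$ and by $\int(s_m)+1 > \int(t_m)+1$ when $i_0 = m$, and in the subterm case $\int(s_m)+1 \geq \int(s_m)$ still gives what you need --- but the bookkeeping should be carried out against the definition as actually stated.
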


Unfortunately this is not strong enough. 
The problem being that $\alpha > \beta$ implies that $G_{\alpha}$ majorizes
$G_{\beta}$, only. 
Whereas to proceed with our general program---see Section~\ref{rta03:Introduction}---%
we need an interpretation theorem for a binary relation $\succ$ on $T$, such that
$\alpha \succ \beta \Rightarrow G_{\alpha}(x) > G_{\beta}(x)$ holds for all $x$.
We introduce a notion of a generalized system of fundamental
sequences. Based on this generalized notion, it is then possible
to define a suitable ordering $\succ$. 

\begin{definition} (Generalized system of fundamental sequences for $(T,<)$.)
\label{d:alpha^x}
Recursive definition of $(\alpha)^x$ for $x < \omega$.
%\vspace{-\topsep}
\begin{enumerate}
\item $(0)^x \defsym \emptyset$
\item Assume $\alpha = \alpha_1 + \cdots + \alpha_m$; $m > 1$. Then
  $\beta \in (\alpha)^x$ if either
  \begin{itemize}
  \item $\beta = \alpha_1 \# \cdots \alpha_i^{\ast} \cdots \# \alpha_m$ and
    $\alpha_i^{\ast} \in (\alpha_i)^x$ holds, or
  \item $\beta = \alpha_i$.
  \end{itemize}
\item Assume $\alpha = \psi(\overline{\alpha})$. Then
  $\beta \in (\alpha)^x$ if
  \begin{itemize}
   \item $\beta = \psi(\alpha_1,\ldots,\alpha_i^{\ast},\ldots,\alpha_{N+1})$, and
      $\alpha_i^{\ast} \in (\alpha_i)^x$, or
   \item $\beta = \alpha_i + x$, 
     where $\alpha_i > 0$, or
   \item $\beta = \psi(\overline{\alpha})[x]$.
   \end{itemize}
\end{enumerate}
\end{definition}

By recursion we define the \emph{transitive closure} of the ownership
$(\alpha)^x \owns \beta$:
$(\alpha \gx{x} \beta) \gdw (\exists \gamma \in (\alpha)^{x} (
                               \gamma \gx{x} \beta \lor \gamma = \beta)
                               )$.
Let $\alpha, \beta \in T$. It is easy to verify that 
$\alpha \gx{x} \beta$ (for some $x < \omega$) implies
$\alpha > \beta$. If no confusion can arise we write
$\alpha^x$ instead of $(\alpha)^x$.

%
% Subterm
%
\begin{lemma} (Subterm Property)
\label{l:subterm}
Let $x < \omega$ be arbitrary.
\begin{enumerate}
\item \label{en:subterm:i}
$\alpha \lx{x} \gamma_1 \# \cdots \alpha \cdots \# \gamma_m$.
\item \label{en:subterm:ii}
$\alpha \lx{x} \psi(\gamma_1,\ldots,\alpha,\ldots,\gamma_{N+1})$.
\end{enumerate}
\end{lemma}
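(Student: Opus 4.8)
The plan is to unfold the recursive definition of the generalized fundamental-sequence sets $(\alpha)^x$ (Definition~\ref{d:alpha^x}) together with the reachability relation $\gx{x}$, and to reduce both assertions to two auxiliary facts: that every nonzero ordinal term reaches $0$, and that the natural sum $\#$ is monotone with respect to $\gx{x}$. Throughout I would freely use that $\alpha \gx{x} \beta$ implies $\alpha > \beta$ (recorded just before the lemma), that $\gx{x}$ is transitive by construction, and that $\beta \in (\alpha)^x$ entails $\alpha \gx{x} \beta$ (take $\gamma = \beta$ in the definition of the transitive closure).

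The first auxiliary fact is \emph{descent to zero}: for every $\alpha > 0$ one has $\alpha \gx{x} 0$. I would prove this by transfinite induction along the well-ordering $(T,<)$. For $\alpha > 0$ the set $(\alpha)^x$ is nonempty — a proper sum contributes its summands, and a principal term $\psi(\overline{\alpha})$ contributes $\psi(\overline{\alpha})[x]$ — and each element of $(\alpha)^x$ lies strictly below $\alpha$. If some element already equals $0$ we are done; otherwise pick any $\beta \in (\alpha)^x$, apply the induction hypothesis to get $\beta \gx{x} 0$, and conclude $\alpha \gx{x} 0$ by transitivity.

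The second auxiliary fact is \emph{monotonicity of $\#$}: if $\mu \gx{x} \nu$ then $\kappa \# \mu \gx{x} \kappa \# \nu$ for every $\kappa$, and this is where the main work lies. I would argue by induction on the length of the derivation witnessing $\mu \gx{x} \nu$, reducing via transitivity to the one-step case $\nu \in (\mu)^x$ and then inspecting which clause of Definition~\ref{d:alpha^x} produced $\nu$. Whenever $\nu$ arises by descending inside a single canonical summand of $\mu$ (the first bullet of the sum-clause, or any $\psi$-clause applied to a principal $\mu$), the very same descent applied to the corresponding summand of $\kappa \# \mu$ shows $\kappa \# \nu \in (\kappa \# \mu)^x$ directly, since $\#$ merely re-sorts the principal summands. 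The delicate clause is $\nu = \mu_i$, where a sum $\mu = \mu_1 + \cdots + \mu_s$ is stripped to one summand: then $\kappa \# \nu$ is not a one-step descendant of $\kappa \# \mu$, so I would instead run a sub-induction on $s$, using the descent-to-zero fact to drive each unwanted $\mu_j$ ($j \neq i$) down to $0$ through the clean descend-in-one-summand steps, thereby deleting it from the sum. The main obstacle of the whole lemma is precisely this interaction between the pick-a-summand clause and the reordering performed by $\#$; once monotonicity is available the remainder is routine.

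With these in place both parts follow. For part~\ref{en:subterm:i} I write the natural sum as $\kappa \# \alpha$, where $\kappa$ is the natural sum of the remaining terms (assumed not all $0$, so that $\kappa > 0$); applying monotonicity to $\kappa \gx{x} 0$ with fixed part $\alpha$ gives $\kappa \# \alpha \gx{x} 0 \# \alpha = \alpha$, i.e.\ $\alpha \lx{x} \gamma_1 \# \cdots \alpha \cdots \# \gamma_m$. For part~\ref{en:subterm:ii} let $\delta = \psi(\gamma_1,\ldots,\alpha,\ldots,\gamma_{N+1})$ with $\alpha$ in slot $i$. If $\alpha > 0$ the clause giving $\alpha + x \in (\delta)^x$ applies: for $x = 0$ this already reads $\alpha \in (\delta)^x$, and for $x > 0$ part~\ref{en:subterm:i} yields $\alpha + x \gx{x} \alpha$ (viewing $\alpha + x$ as $\alpha \# 1 \# \cdots \# 1$), so transitivity gives $\delta \gx{x} \alpha$. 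The remaining case $\alpha = 0$ is immediate from the descent-to-zero fact, since $\delta \in P$ is nonzero. This establishes $\alpha \lx{x} \psi(\gamma_1,\ldots,\alpha,\ldots,\gamma_{N+1})$.
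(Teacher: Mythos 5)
Your proposal is correct in substance, but it reaches the lemma by a noticeably longer route than the paper, which disposes of it in two sentences: assertion~(i) is declared ``trivial'' and assertion~(ii) is obtained by combining the clause $\beta = \alpha_i + x$ of Definition~\ref{d:alpha^x} with assertion~(i). Your treatment of~(ii) is essentially the paper's, with the bonus that you handle the corners $x=0$ and $\alpha=0$ explicitly. The real divergence is in~(i): the paper's implicit one-step argument (project the natural sum onto the summand $\alpha$ via the second bullet of the sum clause of Definition~\ref{d:alpha^x}) applies verbatim only when $\alpha$ is a single principal summand of the displayed term; when $\alpha$ is itself a proper sum one must delete the surrounding $\gamma_j$'s, and that is exactly what your combination of descent-to-zero and $\#$-monotonicity buys. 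Your decomposition is therefore more work but covers the general case honestly, and it front-loads two facts---every positive term reaches $0$ under $\gx{x}$, and $\#$ is monotone with respect to $\gx{x}$---which the paper in any case needs later (the former is invoked inside the proof of Lemma~\ref{l:lemma17}, the latter is Lemma~\ref{l:monotonie}, proved independently just after the Subterm Property, so no circularity arises from your reordering).

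One technical point needs repair. For $\#$-monotonicity you induct on the length of the derivation of $\mu \gx{x} \nu$ and, in the pick-a-summand sub-case, delete each unwanted $\mu_j$ by lifting a chain $\mu_j \gx{x} 0$ into the ambient sum ``through the clean descend-in-one-summand steps''. The intermediate terms of that chain need not be principal---a single clause applied to a principal $\mu_j$ can already produce a proper sum, e.g.\ $\psi(\overline{0},\beta+1)[x]$---so some of its steps are themselves projections, and lifting those is again the problematic case. Induction on derivation length does not feed this recursive call, because the chains attached to the intermediate terms are unrelated in length to the original one. The fix is to run the one-step case by well-founded induction on $\mu$ along $<$: every element of $(\mu)^x$, and hence every term occurring in the chain $\mu_j \gx{x} 0$, lies strictly below $\mu$, so each lifted step may invoke the induction hypothesis. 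With that adjustment your argument goes through.
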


\begin{proof}
The first assertion is trivial. 
The second assertion follows by the definition of $\lx{x}$ and assertion~\ref{en:subterm:i}.
\end{proof}

%
% Monotonie
%
\begin{lemma} (Monotonicity Property)
\label{l:monotonie}
Let $x < \omega$ be arbitrary.
\begin{enumerate}
\item \label{en:mono:i} 
  If $\alpha \gx{x} \beta$, then 
  $\gamma_1 \# \cdots \alpha \cdots \# \gamma_m \gx{x}
  \gamma_1 \# \cdots \beta \cdots \# \gamma_m$.
\item \label{en:mono:ii}
  If $\alpha \gx{x} \beta$, then
  $\psi(\gamma_1,\ldots,\alpha,\ldots,\gamma_{N+1}) \gx{x}
  \psi(\gamma_1,\ldots,\beta,\ldots,\gamma_{N+1})$.
\end{enumerate}
\end{lemma}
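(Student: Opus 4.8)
The plan is to prove both assertions simultaneously by induction on the length $n$ of the descent witnessing $\alpha \gx{x} \beta$; recall that, by definition, $\gx{x}$ is the transitive closure of the one-step ownership relation $\beta \in (\alpha)^{x}$. First I would reduce everything to the one-step case. If $n \geqslant 2$, pick $\gamma \in (\alpha)^{x}$ with $\gamma \gx{x} \beta$ along a shorter descent; applying the one-step case to $\gamma \in (\alpha)^{x}$ and the induction hypothesis to $\gamma \gx{x} \beta$, and composing the two conclusions via transitivity of $\gx{x}$, yields the claim. Hence it suffices to treat $\beta \in (\alpha)^{x}$.

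Assertion~\ref{en:mono:ii} (the $\psi$-case) is then immediate: with $\beta \in (\alpha)^{x} = (\gamma_{i})^{x}$ at argument position $i$, the first item of clause~(3) of Definition~\ref{d:alpha^x} gives directly $\psi(\gamma_1,\ldots,\beta,\ldots,\gamma_{N+1}) \in (\psi(\gamma_1,\ldots,\alpha,\ldots,\gamma_{N+1}))^{x}$, so a single step does the job, irrespective of whether $\beta$ is additively principal or a sum.

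For assertion~\ref{en:mono:i} (the natural-sum case) I would write $\Gamma \defsym \gamma_1 \# \cdots \# \gamma_m$ for the context and split on the clause of Definition~\ref{d:alpha^x} witnessing $\beta \in (\alpha)^{x}$. If $\alpha$ is additively principal, or if $\beta$ is produced by the \emph{replace one summand} item of clause~(2), then $\alpha$ (respectively the affected summand) occurs as a genuine principal part of the normal form of $\Gamma \# \alpha$, and that item lets us substitute $\beta$ for it; since $\#$ merely re-sorts principal parts, the result is exactly $\Gamma \# \beta$, and one step suffices.

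The main obstacle is the remaining subcase: $\alpha = a_1 + \cdots + a_k$ is a proper sum and $\beta = a_i$ arises from the \emph{drop to a single summand} item of clause~(2). Lifting this through $\Gamma$ amounts to showing $\Gamma \# \alpha \gx{x} \Gamma \# a_i$, i.e.\ that the summands $a_{i'}$ with $i' \neq i$ may be discarded while $\Gamma$ is retained; this is not a single ownership step, so the drop item cannot be invoked directly in the presence of a nonempty context. The plan is to discharge it by first proving the auxiliary fact that $\delta \gx{x} 0$ for every $\delta > 0$ — the base case $1 \gx{x} 0$ holds because $(\psi(\overline{0}))^{x} = \{\psi(\overline{0})[x]\} = \{0\}$, and the step follows along fundamental sequences, noting in particular that a trailing summand $+1$ can always be replaced by $0 \in (1)^{x}$ — and then eliminating the unwanted $a_{i'}$ one at a time, replacing each by $0$ through the already-settled replacement case together with transitivity of $\gx{x}$. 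Care is needed to keep this auxiliary elimination independent of the main chain-length induction (for instance by a separate induction on the summand being removed), so as to avoid a circular appeal to assertion~\ref{en:mono:i} itself.
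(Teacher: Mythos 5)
Your proposal follows the same skeleton as the paper's own proof --- reduce to the one\nobreakdash-step case $\beta \in (\alpha)^x$ via transitivity and an outer induction, then read the required membership off Definition~\ref{d:alpha^x} --- and for assertion~\ref{en:mono:ii} the two arguments coincide exactly. Where you genuinely diverge is assertion~\ref{en:mono:i}: the paper simply asserts that $\beta \in (\alpha)^x$ gives $\gamma_1 \# \cdots \beta \cdots \# \gamma_m \in (\gamma_1 \# \cdots \alpha \cdots \# \gamma_m)^{x}$ ``by Definition~\ref{d:alpha^x}'', which is literally true only when $\alpha$ is additively principal or $\beta$ arises by the replace-one-summand item; you are right that when $\alpha = a_1 + \cdots + a_k$ and $\beta = a_i$ comes from the drop-to-a-single-summand item, the term $\gamma_1 \# \cdots \# a_i \# \cdots \# \gamma_m$ is in general \emph{not} a one-step descendant of $\gamma_1 \# \cdots \# \alpha \# \cdots \# \gamma_m$, since the only drop available from the big sum discards the $\gamma_j$ as well. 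Your subcase analysis therefore supplies an argument the paper leaves implicit, and this is the main added value of your write-up. One caveat on the repair: ``replacing each $a_{i'}$ by $0$ through the replacement case'' is not a single ownership step unless $a_{i'} = \psi(\overline{0})$, because the replacement item only permits substituting an element of $(a_{i'})^x$, and lifting the intermediate steps of a chain $a_{i'} \gx{x} 0$ through the context reintroduces exactly the drop subcase you are trying to handle. The clean way to realise the ``separate induction'' you allude to is to prove the contextual statement $\Gamma \# \delta \gx{x} \Gamma$ for all $\delta > 0$ directly by induction on $\delta$ with respect to $<$: if $\delta$ is principal, replace it by any element of $(\delta)^x$ (which is nonempty and strictly smaller, so the induction hypothesis applies to the result); if $\delta$ is a proper sum, peel off its principal parts one at a time, each being strictly below $\delta$. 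With that lemma the drop subcase follows at once, and your proof is complete.
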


\begin{proof}
We employ induction on $\alpha$ to prove~\ref{en:mono:i}).
We write (ih) for induction hypothesis.
We may assume that $\alpha > 0$. By definition of $\alpha \gx{x} \beta$
we either have 
(i) that there exist $\delta \in \alpha^x$ and $\delta \gx{x} \beta$
or (ii) $\beta \in \alpha^x$. Firstly, one considers the latter case. Then
$(\gamma_1 \# \cdots \beta \cdots \# \gamma_m) \in 
     (\gamma_1 \# \cdots \alpha \cdots \# \gamma_m)^{x}$
holds by Definition~\ref{d:alpha^x}. Therefore 
$(\gamma_1 \# \cdots \beta \cdots \# \gamma_m) \lx{x}
     (\gamma_1 \# \cdots \alpha \cdots \# \gamma_m)$
follows. 
Now, we consider the first case. 
By assumption $\delta \gx{x} \beta$ holds, by (ih) this
implies 
$(\gamma_1 \# \cdots \delta \cdots \# \gamma_m) \gx{x}
     (\gamma_1 \# \cdots \beta \cdots \# \gamma_m)^{x}$. 
Now
$(\gamma_1 \# \cdots \alpha \cdots \# \gamma_m) \gx{x} 
      (\gamma_1 \# \cdots \delta \cdots \# \gamma_m)$
follows by definition of $\gx{x}$, 
if we replace $\beta$ by $\delta$ in the proof of
the second case. This completely proves~\ref{en:mono:i}).

To prove~\ref{en:mono:ii}) we proceed by 
induction on $\alpha$. By definition of $\alpha \gx{x} \beta$
we have either 
(i) $\delta \in \alpha^x$ and $\delta \gx{x} \beta$
or (ii) $\beta \in \alpha^x$. It is sufficient to consider the latter case,
the first case follows from the second as above. 
By Definition~\ref{d:alpha^x}, $\beta \in \alpha^x$ implies 
$\psi(\gamma_1,\ldots,\beta,\ldots,\gamma_{N+1}) \in 
     \psi(\gamma_1,\ldots,\alpha,\ldots,\gamma_{N+1})^{x}$.
\end{proof}

In the sequel we show the existence of a natural number $e$, such that 
for all $s,t \in \T$, and any ground substitution $\rho$,
$s \rew t$ implies $\int(s\rho) \gx{e} \int(t\rho)$. 
Theorem~\ref{t:interpretation1} follows then as a corollary.
The proof is involved, and makes use of a sequence of
lemmas.

\begin{lemma}
\label{l:lemma21}
Assume $\alpha,\beta \in \Lim$; $x \geq 1$. If $\alpha \gx{x} \beta$,
then $\alpha \gx{x+1} \beta + 1$ holds.
\end{lemma}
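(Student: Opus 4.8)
The plan is to use that $\gx{x}$ is, by Definition~\ref{d:alpha^x}, the transitive closure of the one-step ownership relation $(\cdot)^x$, and to argue by induction on the length $k$ of a descending chain $\alpha = \delta_0, \delta_1, \ldots, \delta_k = \beta$ witnessing $\alpha \gx{x} \beta$ (so $\delta_{j+1} \in (\delta_j)^x$). Since $\gx{x+1}$ is transitive by the same definition, and since the Monotonicity Property (Lemma~\ref{l:monotonie}) is stated for the full relation $\gx{x+1}$ rather than for single steps, the strategy is to localise the manufacture of the extra successor in one branch of a single ownership step and then transport it through the surrounding term structure. The limit hypotheses on the endpoints $\alpha,\beta$ enter only to guarantee $\alpha > \beta + 1$ (an element strictly above $\beta$ exists below $\alpha$) and to make the endpoint cases of the ownership analysis go through; the intermediate terms $\delta_j$ need not be limits.

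The heart of the argument is the single-step analysis, where I would inspect the three clauses of Definition~\ref{d:alpha^x} for $\beta \in (\alpha)^x$. The additive branch of a principal $\alpha = \psi(\overline{\alpha})$ is the easy source of the extra successor: if $\beta = \alpha_i + x$, then at index $x+1$ the same clause produces $\alpha_i + (x+1) = (\alpha_i + x) + 1 = \beta + 1 \in (\alpha)^{x+1}$, giving $\alpha \gx{x+1} \beta + 1$ outright. The fundamental-sequence branch $\beta = \alpha[x]$ is the substantive case: here I would use $\alpha[x+1] \in (\alpha)^{x+1}$ together with an auxiliary claim $\alpha[x+1] \gx{x+1} \alpha[x] + 1$, proved by a parallel induction over Definition~\ref{d:fundamental}. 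The point of that sub-induction is that $\lambda[x+1]$ carries exactly one more iteration of $\psi(\ldots,\cdot,\ldots)$, respectively one more additive copy, than $\lambda[x]$ (compare $\omega[x+1] = x+2 = \omega[x]+1$ and $\psi(\overline{0},\beta+1)[x+1] = \psi(\overline{0},\beta+1)[x] \mathbin{\#} \psi(\overline{0},\beta)$), and that this surplus dominates a trailing $+1$. The component-replacement clause is then handled by recursion on the affected argument followed by Lemma~\ref{l:monotonie}.

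To close the inductive step I would split the chain as $\alpha \gx{x} \delta_{k-1}$ followed by the final step $\beta \in (\delta_{k-1})^x$ with $\delta_{k-1} > \beta$. The prefix lifts monotonically to $\alpha \gx{x+1} \delta_{k-1}$ via a one-step lift $\gamma \in (\alpha)^x \Rightarrow \alpha \gx{x+1} \gamma$, which follows from the same fundamental-sequence fact together with the trivial descent $\delta + 1 \gx{x+1} \delta$ (obtained by replacing a trailing summand $1 = \psi(\overline{0})$ by $0 \in (1)^{x+1}$). The final step is treated by the single-step analysis above to yield $\delta_{k-1} \gx{x+1} \beta + 1$, using Lemma~\ref{l:subterm} and Lemma~\ref{l:monotonie} to reduce companion summands when $\beta$ appears merely as one summand of a sum $\delta_{k-1}$; composing by transitivity gives $\alpha \gx{x+1} \beta + 1$.

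The main obstacle is precisely the index shift: a chain that descends at level $x$ is \emph{not} a chain at level $x+1$, because both the additive witnesses $\alpha_i + x$ and the fundamental sequences $\lambda[x]$ genuinely change with the index. Hence the $+1$ cannot simply be appended at the end of a reused witness; it must be traced to the one branch (additive or fundamental-sequence) that actually gains room at $x+1$ and then pushed through the compound term by monotonicity. The secondary difficulty is that the $\delta_j$ need not be limits, which is why the single-step lemma must be formulated for an arbitrary source term and why the trivial descent $\delta + 1 \gx{x+1} \delta$ and a ``reduce a summand to $1$'' step are needed to absorb successor adjustments; keeping the induction hypothesis general enough to cover these non-limit intermediate terms, while only assuming the endpoints are limits, is the part requiring the most care.
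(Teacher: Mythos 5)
Your proposal takes essentially the same route as the paper's proof: the paper likewise reduces the transitive closure to a single ownership step, runs the identical three-way case distinction on Definition~\ref{d:alpha^x} (the additive clause absorbing the extra successor via $\alpha_i + (x+1) = (\alpha_i + x)+1$, the component-replacement clause handled by recursion into the affected argument plus Lemma~\ref{l:monotonie}, and the fundamental-sequence clause delegated to $\alpha \gx{x+1} \alpha[x+1] \geqx{x+1} \alpha[x]+1$, which is precisely Lemma~\ref{l:auxilliary}). The only organisational difference is that the paper runs one well-founded induction on the form of $\alpha$ and discharges the transitive case by applying the induction hypothesis to an intermediate $\gamma \in (\alpha)^x$, whereas you induct on chain length and prove the one-step claim by a separate structural recursion; both bookkeepings work.

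One formulation needs to be relaxed, however. Your one-step claim ``$\beta \in (\delta)^x$ implies $\delta \gx{x+1} \beta+1$'' for an arbitrary source $\delta$ is false when $\delta$ is a successor: for $\delta = \mu+1$ we have $\mu \in (\delta)^x$, yet $\delta \gx{x+1} \mu+1 = \delta$ would contradict irreflexivity. Likewise the auxiliary claim cannot be strict --- your own example $\omega[x+1] = \omega[x]+1$ exhibits equality. Both must be stated with $\geqx{x+1}$, i.e.\ with the extra disjunct $\delta = \beta+1$. This costs nothing in your scheme: in the offending case $\delta_{k-1} = \beta+1$ the lifted prefix $\alpha \gx{x+1} \delta_{k-1}$ already yields the conclusion, and in the base case both endpoints are limits so the strict form survives. (The paper's write-up is no more careful here; it also applies its induction hypothesis to intermediate terms $\gamma \in (\alpha)^x$ that need not be limits.)
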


To prove the lemma we exploit the following auxiliary lemma. 

\begin{lemma}
\label{l:auxilliary}
We assume the assumptions and notation of Lemma~\ref{l:lemma21};
assume Lemma~\ref{l:lemma21} holds for all 
$\gamma,\delta \in \Lim$ with $\gamma,\delta < \alpha$. 
Then $\alpha \gx{x+1} \alpha[x+1] \geqx{x+1} \alpha[x] +1$.
\end{lemma}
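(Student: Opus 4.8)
The plan is to establish the two conjuncts $\alpha \gx{x+1} \alpha[x+1]$ and $\alpha[x+1] \geqx{x+1} \alpha[x]+1$ separately, running a subsidiary induction on the build-up of $\alpha$; in this way, for every limit term $\gamma < \alpha$ I may use both the assumed instance of Lemma~\ref{l:lemma21} and the inductive instance of Lemma~\ref{l:auxilliary} itself. The only facts about $\gx{x+1}$ I will need are the defining closure conditions of Definition~\ref{d:alpha^x}, the Subterm Property (Lemma~\ref{l:subterm}) and the Monotonicity Property (Lemma~\ref{l:monotonie}).

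The first conjunct is essentially a direct reading of Definition~\ref{d:alpha^x}: I would show $\alpha[x+1] \in (\alpha)^{x+1}$, which by the definition of the transitive closure yields $\alpha \gx{x+1} \alpha[x+1]$ at once. If $\alpha = \psi(\overline{\alpha})$ is principal, the last clause of case~(3) literally places $\psi(\overline{\alpha})[x+1]$ into $(\alpha)^{x+1}$. If $\alpha = \alpha_1 + \cdots + \alpha_m$ with $m > 1$, then since $\alpha \in \Lim$ and the only principal successor term is $1 = \psi(\overline{0})$, the last summand $\alpha_m$ is a limit principal term; the principal case gives $\alpha_m[x+1] \in (\alpha_m)^{x+1}$, and the first clause of case~(2) then puts $\alpha_1 \# \cdots \# \alpha_{m-1} \# \alpha_m[x+1]$ into $(\alpha)^{x+1}$. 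As $\alpha_m[x+1] < \alpha_m \leqslant \alpha_{m-1}$, this natural sum is exactly $\alpha[x+1]$.

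For the second conjunct I would distinguish cases along the clauses of Definition~\ref{d:fundamental}, which fall into two shapes. In the \emph{iteration} clauses, where $\alpha[x] = g^{x+1}(\cdot)$ with $g(\cdot) = \psi(\alpha_1,\ldots,\alpha_i,\cdot,\overline{0})$, one has $\alpha[x+1] = g(\alpha[x])$ and the task reduces to $g(\alpha[x]) \gx{x+1} \alpha[x]+1$. The engine here is the clause $\beta = \alpha_i + x$ of Definition~\ref{d:alpha^x}: since $\alpha[x] > 0$ for $x \geqslant 1$, it puts $\alpha[x] + (x+1)$ into $(g(\alpha[x]))^{x+1}$, and repeatedly replacing a trailing summand $1 = \psi(\overline{0})$ by $\psi(\overline{0})[x+1] = 0$ descends from $\alpha[x] + (x+1)$ down to $\alpha[x]+1$. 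The purely additive clauses (those for $\psi(\overline{0},\beta+1)$ and for $\psi(\overline{0},\lambda)$ with $\lambda \in \fix{\overline{0}}$) are even easier: there $\alpha[x+1] = \alpha[x] + d$ for a principal $d$ with $d \geqx{x+1} 1$, so Monotonicity gives the claim. In the \emph{limit-substitution} clauses, where $\alpha[x]$ replaces an inner limit $\lambda < \alpha$ by $\lambda[x]$, I would apply the inductive hypothesis to $\lambda$ to obtain $\lambda[x+1] \geqx{x+1} \lambda[x]+1$ and propagate it by Monotonicity; when $\lambda$ is the last summand of a sum the resulting ``$+1$'' already lies outermost, whereas when $\lambda$ sits inside a $\psi$ I would still have to move the ``$+1$'' out, which I would do by unfolding one step of the fundamental sequence of $\psi(\ldots,\lambda[x]+1)$, returning an iteration term that dominates $\psi(\ldots,\lambda[x])+1$ via the iteration case.

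The step I expect to be the genuine obstacle is the clause $\psi(\alpha_1,\ldots,\lambda_i,\overline{0},0)[x]$, whose value $\psi(\alpha_1,\ldots,\lambda_i[x],\overline{0},M)$ carries the maximal interesting subterm $M = \ms{\overline{\alpha},\lambda_i,\overline{0}}{\overline{\alpha},\lambda_i}$ fixed at the last coordinate while $\lambda_i$ is descended. After the inductive hypothesis and Monotonicity reduce matters to $\psi(\alpha_1,\ldots,\lambda_i[x]+1,\overline{0},M) \gx{x+1} \psi(\alpha_1,\ldots,\lambda_i[x],\overline{0},M)+1$, externalising the ``$+1$'' across the $\psi$ forces a sub-case analysis on the shape of $M$ (zero, successor, or limit) when unfolding the fundamental sequence, with the fixed term $M$ now riding along in the base of the resulting iteration; the clauses for $\psi(\alpha_1,\ldots,\lambda_i,\overline{0},\beta+1)$ and for $\psi(\alpha_1,\ldots,\lambda_i,\overline{0},\lambda)$ with $\lambda \in \fix{\overline{\alpha},\overline{0}}$ present the same difficulty. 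The hypothesis $x \geqslant 1$ enters precisely to keep all intermediate terms produced along these descents nonzero, indeed limit terms, so that the clauses of Definition~\ref{d:alpha^x} generating the additive ``$+x$'' remain applicable.
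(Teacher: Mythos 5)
Your proof follows exactly the route the paper takes---induction on the form of $\alpha$ combined with a case analysis over all clauses of Definition~\ref{d:fundamental}---and merely fills in the descent arguments (via the ``$\alpha_i+x$'' clause of Definition~\ref{d:alpha^x} and the Monotonicity Property) that the paper's one-sentence proof leaves entirely implicit. The case analysis, including the delicate clauses where $\lambda_i$ is descended while the maximal interesting subterm rides along in the last coordinate, is sound, so the proposal is correct and essentially identical in approach.
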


\begin{proof}
The lemma follows by induction on the form of $\alpha$
by analyzing all cases of Definition~\ref{d:fundamental}.
\end{proof}

\begin{proof} (of Lemma~\ref{l:lemma21})
The proof proceeds by induction on the form of $\alpha$.
We consider only the case where $\alpha = \term{\psi}{\alpha}{1}{N+1}$.
The case where $\alpha = \alpha_1 + \cdots + \alpha_m$ is similar but simpler. 

By definition of $\alpha \gx{x} \beta$ we have either 
(i) $\gamma \in \alpha^x$ and $\gamma \gx{x} \beta$
or (ii) $\beta \in \alpha^x$.  
Assume for $\gamma \in \alpha^x$
we have already shown that $\gamma +1 \lx{x+1} \alpha$. Then for
$\beta \lx{x} \gamma$, we conclude by (ih) and the Subterm Property
$\beta + 1 \lx{x+1} \gamma \lx{x+1} \gamma + 1 \lx{x+1} \alpha$. 
Hence, it suffices to consider the second case. 
We proceed by case distinction on the form of $\beta$.

\textsc{Case} $\beta=\psi(\alpha_1,\ldots,\alpha_i^{\ast},\ldots,\alpha_{N+1})$ 
where $\alpha_i^{\ast} \in (\alpha_i)^x$ for some $i$ ($1 \leq i \leq N+1$). 
Note that $\alpha_i < \alpha$, hence (ih) is applicable
to establish $\alpha_i^{\ast} +1 \lx{x+1} \alpha_i$. 

Furthermore by the Subterm Property follows 
$\alpha_i^{\ast} \lx{x+1} \alpha_i^{\ast} +1$ and therefore
\begin{equation*}
  \psi(\alpha_1,\ldots,\alpha_i^{\ast},\ldots,\alpha_{N+1}) \lx{x+1}
\psi(\alpha_1,\ldots,\alpha_i^{\ast}+1,\ldots,\alpha_{N+1})
\end{equation*}
% $\psi(\alpha_1,\ldots,\alpha_i^{\ast},\ldots,\alpha_{N+1}) \lx{x+1}
% \psi(\alpha_1,\ldots,\alpha_i^{\ast}+1,\ldots,\alpha_{N+1})$
holds with Monotonicity. 
Applying (ih) with respect to
$\psi(\alpha_1,\ldots,\alpha_i^{\ast}+1,\ldots,\alpha_{N+1})$ we obtain
\begin{eqnarray*}
\psi(\alpha_1,\ldots,\alpha_i^{\ast},\ldots,\alpha_{N+1}) + 1 & \lx{x+1} &
    \psi(\alpha_1,\ldots,\alpha_i^{\ast}+1,\ldots,\alpha_{N+1})\\
& \lx{x+1} &
    \psi(\alpha_1,\ldots,\alpha_i,\ldots,\alpha_{N+1}) = \alpha \quad .
\end{eqnarray*}
The last inequality follows again by an application of the Monotonicity Property.

\textsc{Case} $\beta = \alpha_i + x$: Then 
$(\alpha_i + x) +1 = \alpha_i + (x+1) \lx{x+1} \alpha$.

\textsc{Case} $\beta = \psi(\overline{\alpha})[x]$. Clearly $\beta \in \Lim$. 
Then the auxiliary lemma becomes applicable. Thus 
$\psi(\overline{\alpha})[x] + 1 \leqx{x+1} \alpha[x+1] \lx{x+1} \alpha$.
\end{proof}

\begin{lemma}
\label{l:lemma17}
Let $t \in \GTA{\Sigma}$ be given. Assume $\depth{t} \leq d$, and $f_j \in \Sigma$.
If $f_j \glpo t$, then $\int(f_j) \gx{2d} \int(t)$.
\end{lemma}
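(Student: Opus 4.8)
The plan is to argue by induction on the depth bound $d$, using that $f_j \glpo t$ forces every function symbol occurring in $t$ to have index strictly below $j$ (when the left-hand side is the constant $f_j$, the only applicable alternative of Definition~\ref{d:LPO} is $j>i$ together with $f_j\glpo t_\ell$ for all arguments). In the base case $d=0$ the term $t$ is itself a constant $f_i$ with $i<j$, so $\int(f_j)=\psi(j,\overline{0})$ and $\int(t)=\psi(i,\overline{0})$. Here I would first observe $j \gx{0} i$: writing the natural-number term $j=1+\cdots+1$, each replacement of a summand $1=\psi(\overline{0})$ by $0$ is a single $(\cdot)^{0}$-step, since $(1)^{0}=\{0\}$, so deleting summands one at a time descends to any smaller $i$. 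Transporting this through the leading argument by the Monotonicity Property (Lemma~\ref{l:monotonie}) yields $\psi(j,\overline{0}) \gx{0} \psi(i,\overline{0})$.

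For the inductive step write $t=f_i(t_1,\dots,t_m)$ with $i<j$ and $\depth{t_\ell}\leq d-1$; the comparison $f_j\glpo t$ additionally gives $f_j\glpo t_\ell$ for every $\ell$, so the induction hypothesis supplies $\int(f_j)\gx{2(d-1)}\int(t_\ell)$ for each $\ell$. The target is $\int(t)=\psi(i,\int(t_1),\dots,\int(t_m)+1,\overline{0})$, whose leading argument is smaller than that of $\int(f_j)$ while the later slots are larger; this is exactly the lexicographic trade-off that the generalized fundamental sequences of Definition~\ref{d:alpha^x} are built to pay for. Concretely I would (i) reduce the leading argument from $j$ down to $i$ as in the base case, (ii) invoke the fundamental-sequence clause $\psi(\overline{\alpha})[x]\in(\psi(\overline{\alpha}))^{x}$ together with Definition~\ref{d:fundamental} to unfold $\psi(i+1,\overline{0})$ into a $\psi(i,\cdot,\overline{0})$-nest of sufficient depth, and (iii) install the values $\int(t_1),\dots,\int(t_m)+1$ in the argument positions by repeated appeals to Monotonicity (Lemma~\ref{l:monotonie}), fed by the inductive bounds on the $t_\ell$ and kept below the partially built term by the Subterm Property (Lemma~\ref{l:subterm}).

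The crux — and the source of the factor $2$ — is that $\gx{x}$ carries one fixed parameter along the entire descending chain, so the hypotheses $\int(f_j)\gx{2(d-1)}\int(t_\ell)$ must be promoted to parameter $2d$ before they can be spliced into a single $\gx{2d}$-chain reaching $\int(t)$. This reconciliation is precisely what Lemma~\ref{l:lemma21}, via its auxiliary Lemma~\ref{l:auxilliary}, delivers: each application raises the parameter by one at the price of adding $1$ to the right-hand side, and the two applications per recursion level simultaneously close the gap $2(d-1)\to 2d$ and manufacture the ``$+1$'' demanded by the last argument $\int(t_m)+1$. I expect the genuinely delicate point to be populating all slots of the $(N{+}1)$-ary $\psi$ under one uniform parameter: the monotonicity steps and fundamental-sequence unfoldings must be ordered so that every intermediate ordinal term remains a limit (as Lemma~\ref{l:lemma21} requires) and so that each installed $\int(t_\ell)$ stays strictly below the term being assembled. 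Verifying this bookkeeping, rather than any individual inequality, is where the real work lies.
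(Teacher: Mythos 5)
Your overall strategy---induction on the depth, the induction hypothesis at parameter $2(d-1)$, unfolding a fundamental sequence to manufacture a $\psi$-term with large argument slots, installing the $\int(t_\ell)$ via Monotonicity, and using Lemma~\ref{l:lemma21} to promote the parameter from $2(d-1)$ to $2d$---is indeed the paper's strategy. But the concrete order of operations in your steps (i)--(iii) hides a genuine error: you lower the head index from $j$ to $i+1$ \emph{first} and only then unfold, so the nest built in step (ii) consists entirely of terms of the form $\psi(i,\cdot,\overline{0})^{k}(0)$. Such a term is strictly below every term with root $\psi(k,\ldots)$ for $k>i$ (neither the subterm clause nor the lexicographic clause of Definition~\ref{d:T}.\ref{en:T} can fire), and since $\gx{x}$ refines $>$, step (iii) then has nothing to substitute into. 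The hypothesis $f_j \glpo f_i(t_1,\dots,t_n)$ only forces the symbols occurring in the $t_\ell$ to have index below $j$, not below $i$; whenever some $t_\ell$ has a root symbol of index strictly between $i$ and $j$, its interpretation $\int(t_\ell)$ dominates every term of your $\psi(i,\cdot,\overline{0})$-nest, so Monotonicity cannot place it into a slot. Your appeal to the Subterm Property does not repair this: the induction hypothesis locates $\int(t_\ell)$ below $\psi(j,\overline{0})$, not below anything assembled at level $i$.

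The paper avoids this by unfolding \emph{before} touching the head index: $\psi(j,\overline{0})[2d]=\psi(j-1,\cdot,\overline{0})^{2d+1}(0)$, and the iterated descent stays at level $j-1$, filling every argument position with a copy of $\psi(j,\overline{0})[2(d-1)]=\psi(j-1,\cdot,\overline{0})^{2d-1}(0)$. That tower sits high enough to dominate, with respect to $\gx{2d}$, every element of $(\psi(j,\overline{0}))^{2(d-1)}$---which is all the induction hypothesis actually delivers for the $\int(t_\ell)$---and verifying this domination requires a three-way case distinction on which clause of Definition~\ref{d:alpha^x} produced $\int(t_\ell)$ (the cases $\psi(j^{\ast},\overline{0})$, $j+2(d-1)$, and $\psi(j,\overline{0})[2(d-1)]$ itself), with Lemma~\ref{l:lemma21} supplying the parameter promotion; your sketch omits this case analysis entirely. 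Only after the $\int(t_\ell)$ have been installed at level $j-1$ is the head index lowered from $j-1$ to $i$ by Monotonicity. The fix is therefore to permute your steps: unfold at level $j$, install, and reduce the index last.
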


\begin{proof}
We proceed by induction on \depth{t}. In the presentation of
the argument, we will frequently employ the Subterm and the
Monotonicity Property without further notice.
Set $\alpha \defsym \int(f_j)$, and $\beta \defsym \int(t)$.
Furthermore it is a crucial observation that $0 \lx{x} \alpha$
holds for any $x < \omega$, $\alpha \in T$. (This follows by a
simple induction on $\alpha$.)

\textsc{Case} $\depth{t} = 0$:
Then by assumption $t=f_i \in \Sigma$, $i < j$. Hence $i \lx{2d} j$ holds
and we conclude
$\int(t) = \psi(i,\overline{0}) \lx{2d} \psi(j,\overline{0}) = \int(f_j)$.

\textsc{Case} $\depth{t} > 0$: Let $t=\term{f_i}{t}{1}{n}$. 
Set $\beta_l \defsym \int(t_l)$ for all $l=1,\ldots,n$.
By (ih) one
obtains $\beta_l \lx{2(d-1)} \alpha$ for all $l$. 
For all $l$, we need only consider the case where
$\beta_l \in \alpha^{2(d-1)}$.
We consider $\psi(j,\overline{0})[2d]$ and apply the following sequence
of descents via $\gx{2d}$:
\begin{eqnarray*}
\psi(j,\overline{0})[2d] & = & \psi(j-1,\cdot,\overline{0})^{2d+1}(0) \\
& = & \psi(j-1,\psi(j-1,\cdot,\overline{0})^{2d}(0),\overline{0})\\
& \gx{2d} & \psi(j-1,\psi(j-1,\cdot,\overline{0})^{2d-1}(0)+1,\overline{0})\\
& \gx{2d} & \psi(j-1,\underbrace{\psi(j-1,\cdot,\overline{0})^{2d-1}(0)}_{\psi(j,\overline{0})[2(d-1)]},
                 \cdot,\overline{0})^{2d+1}(0) \quad .
\end{eqnarray*}

We define $\gamma_1 \defsym \psi(j,\overline{0})[2(d-1)]$ and
$\gamma_{k+1} \defsym \psi(j-1,\gamma_1,\ldots,\gamma_k+1,\overline{0})[2(d-1)]$.
By iteration of the above descent, we see 
\begin{eqnarray*}
\alpha[2d] & = & \psi(j,\overline{0})[2d]\\
& \gx{2d} & \psi(j-1,\gamma_1,\ldots,\gamma_n+1,\overline{0})\\ 
& \gx{2d} & \psi(j-1,\alpha[2(d-1)],\ldots,\alpha[2(d-1)]+1,\overline{0})\qquad (\delta) \quad. 
\end{eqnarray*}

Let $l$ ($1 \leq l \leq n$) be fixed. 
By assumption we have $\beta_l \in (\alpha)^{2(d-1)}$. We proceed by
case distinction on the definition of $\beta_l$. 

Assume $\beta_l = \psi(j,\overline{0})[2(d-1)]$. Then 
$\delta = \psi(j-1,\alpha[2(d-1)],\ldots,\beta_l,\ldots,\alpha[2(d-1)]+1,\overline{0})$.
Assume $\beta_l = \psi(j^{\ast},\overline{0})$, where
$j^{\ast} \in (j)^{2(d-1)}$, i.e.\ $j^{\ast} \leqx{2d} j-1 \lx{2d} j$. 
Therefore $\alpha[2(d-1)] \gx{2d} \psi(j-1,\overline{0})$. Hence
$\delta \gx{2d} \psi(j-1,\alpha[2(d-1)],\ldots,\beta_l,\ldots,\alpha[2(d-1)]+1,\overline{0})$. 
Finally assume $\beta_l = j + 2(d-1)$. Then 
$\beta_l \lx{2(d-1)+1} \psi(j-1,\overline{0}) \lx{2(d-1)+1} 
                       \psi(j-1,\cdot,\overline{0})^{2d-1}(0) = \alpha[2(d-1)]$.
Hence $\beta_l \lx{2d} \alpha[2(d-1)]$ by Lemma~\ref{l:lemma21} and therefore
$\delta \gx{2d} \psi(j-1,\alpha[2(d-1)],\ldots,\beta_l,\ldots,\alpha[2(d-1)]+1,\overline{0})$.

As $l$ was fixed but arbitrary, the above construction is valid for
all $l $. And the lemma follows.                    
\end{proof}

\begin{lemma}
\label{l:lemma18}
Let $\term{f_i}{t}{1}{n}, \term{f_j}{s}{1}{m} \in \GTA{\Sigma}$ be given;
let $d > 0$. Then
\begin{enumerate}
\item \label{en:lemma18:i}
If $i < j$, $\int(f_j(\overline{s})) \gx{2(d-1)} \int(t_l)$
  for all $l=1,\ldots,n$. Then $\int(f_j(\overline{s})) \gx{2d} \int(f_i(\overline{t}))$ holds.
\item \label{en:lemma18:ii}
If $s_1 = t_1, \ldots, s_{{i_0}-1} = t_{{i_0}-1}$, 
  $\int(s_{i_0}) \gx{2(d-1)} \int(t_{i_0})$, and
    $\int(f_j(\overline{s})) \gx{2(d-1)} \int(t_l)$, for all $l=i_0+1,\ldots,n$, then
    $\int(f_j(\overline{s})) \gx{2d} \int(f_i(\overline{t}))$ holds.
\end{enumerate}
\end{lemma}

\begin{proof}
The proof of assertion~\ref{en:lemma18:i}) is similar to
the proof of assertion~\ref{en:lemma18:ii}) but simpler. Hence, we concentrate
on~\ref{en:lemma18:ii}). 
Set $\alpha \defsym \int(f_j(\overline{s}))$; $\beta \defsym \int(f_i(\overline{t}))$;
finally set $\alpha_i \defsym \int(s_i)$ for all $i=1,\ldots,m$, and
$\beta_i \defsym \int(t:i)$ for all $i=1,\ldots,n$.
As above, we consider only the case 
where $\beta_l \in (\alpha)^{2(d-1)}$. The other case follows easily. 
\begin{eqnarray*}
\alpha[2d] & = & \psi(j,\alpha_1,\ldots,\alpha_m+1,\overline{0})[2d]\\
& = & \psi(j,\alpha_1,\ldots,\alpha_m,
            \psi(j,\alpha_1,\ldots,\alpha_m,\cdot,\overline{0})^{2d}(0),\overline{0})\\
& \gx{2d} &  \psi(j,\alpha_1,\ldots,\alpha_m,
            \psi(j,\alpha_1,\ldots,\alpha_m,\cdot,\overline{0})^{2d-1}(0)+1,\overline{0})\\
& = & \psi(j,\alpha_1,\ldots,\alpha_m,
            \underbrace{\psi(j,\alpha_1,\ldots,\alpha_m+1,\overline{0})[2(d-1)]}_{\alpha[2(d-1)]}+1,
            \overline{0}) \quad .
\end{eqnarray*}

Similar to above, we define
$\gamma_1\defsym \alpha[2(d-1)] = \psi(j,\alpha_1,\ldots,\alpha_m+1,\overline{0})[2(d-1)]$
and $\gamma_{k+1} \defsym 
\psi(j,\alpha_1,\ldots,\alpha_m,\gamma_1,\ldots,\gamma_{k+1}+1,\overline{0})[2(d-1)]$
and obtain
\begin{eqnarray*}
\alpha[2d] & \gx{2d} & \psi(j,\alpha_1,\ldots,\alpha_m,\gamma_1,\ldots,\gamma_{N-m}+1)\\
& \gx{2d} &  \psi(j,\alpha_1,\ldots,\alpha_m,\alpha[2(d-1)],\ldots,\alpha[2(d-1)]+1)\\
& \gx{2d} & \psi(j,\alpha_1,\ldots,\alpha_{i_0},\overline{0},\alpha[2(d-1)]+1) \quad .
\end{eqnarray*}

By assumption $\beta_{i_0} \lx{2(d-1)} \alpha_{i_0}$ and
by Lemma~\ref{l:lemma21} this implies $\beta_{i_0}+1 \lx{2d} \alpha_{i_0}$.
We set $\overline{\alpha} \defsym \alpha_1,\ldots,\alpha_{i_0-1}$, then
we obtain
\begin{eqnarray*}
\psi(j,\overline{\alpha},\alpha_{i_0},\overline{0},\alpha[2(d-1)]+1) 
& \gx{2d} & \psi(j,\overline{\alpha},\beta_{i_0}+1,\overline{0},\alpha[2(d-1)]+1)\\
& \gx{2d} & \psi(j,\overline{\alpha},\beta_{i_0}+1,\overline{0},\alpha[2(d-1)]+1)[2d]\\
%
%& = & \psi(j,\overline{\alpha},\beta_{i_0},\cdot,\overline{0})^{2d+1}(
%\psi(j,\overline{\alpha},\beta_{i_0}+1,\overline{0},\alpha[2(d-1)])
%)\\ 
%& = & \psi(j,\overline{\alpha},\beta_{i_0},
%             \psi(j,\overline{\alpha},\beta_{i_0},\cdot,\overline{0})^{2d}(
%                  \psi(j,\overline{\alpha},\beta_{i_0}+1,\overline{0},\alpha[2(d-1)])
%                 ), 
%             \overline{0})\\
%
& \gx{2d} & \psi(j,\overline{\alpha},\beta_{i_0},
                 \psi(j,\overline{\alpha},\beta_{i_0}+1,\overline{0},\alpha[2(d-1)]),
                 \overline{0})\\
& \gx{2d} & \psi(j,\overline{\alpha},\beta_{i_0},\alpha[2(d-1)]+1), \overline{0})\\
& = & \psi(j,\beta_{1},\ldots,\beta_{i_0},\alpha[2(d-1)]+1), \overline{0}) \quad .
\end{eqnarray*}

As in the first part of the proof, we obtain
$\alpha[2d] \gx{2d} 
\psi(j,\overline{\alpha},\alpha_{i_0},\overline{0},\alpha[2(d-1)]+1) \gx{2d} \mbox{}$
\begin{equation*}
\mbox{} \gx{2d}
   \psi(j,\beta_{1},\ldots,\beta_{i_0},\alpha[2(d-1)],\ldots,\alpha[2(d-1)]+1,\overline{0}) \quad.
\end{equation*}

By assumption we have $\beta_l \lx{2(d-1)} \alpha$ for all $l=1,\ldots,n$. 
It remains to prove that
this implies $\beta_l \leqx{2d} \gamma$. For this it is sufficient
to consider the case where $\beta_l \in (\alpha)^{2(d-1)}$. 
The proof proceeds by case-distinction on the construction of $\beta_l$.
The proof is similar to the respective part in the proof of 
Lemma~\ref{l:lemma17}, and hence omitted.
\end{proof}

\begin{lemma}
\label{l:lemma19}
Let $s,t \in \T$ be given. Assume $s = \term{f_j}{s}{1}{m}$,
$\rho$ is a ground substitution, $\depth{t} \leq d$. 
Assume further $s_k \glpo u$ and $\depth{u} \leq d$ implies $\int(s_k\rho) \gx{2d} \int(u\rho)$
for all $u \in \T$. Then $s \glpo t$ implies $\int(s\rho) \gx{2d} \int(t\rho)$.
\end{lemma}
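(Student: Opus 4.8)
The plan is to prove the statement by case analysis on the clause of Definition~\ref{d:LPO} that witnesses $s \glpo t$, threaded through an induction on $\depth{t}$. Throughout, $s\rho$ and $t\rho$ are ground, so $\int$ is defined on them, and I will freely use that $\gx{2d}$ is a transitive relation together with the Subterm Property (Lemma~\ref{l:subterm}) and the Monotonicity Property (Lemma~\ref{l:monotonie}). Recalling from Definition~\ref{d:interpretation} that $\int(s\rho) = \psi(j,\int(s_1\rho),\ldots,\int(s_m\rho)+1,\overline{0})$, each argument's interpretation occurs, possibly shifted by a $+1$, in a slot of this $\psi$-term, so the Subterm Property gives $\int(s_k\rho) \lx{2d} \int(s\rho)$ for every $k$. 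This single observation drives the two easy cases.

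First I would dispose of the cases in which $s \glpo t$ is witnessed by a subterm of $s$. If $t$ is a variable (clause~1), then $t \in \var{s}$, so $t\rho$ occurs as a (possibly improper) subterm of some $s_k\rho$; by iterated use of the Subterm Property and transitivity, $\int(t\rho) \leqx{2d} \int(s_k\rho) \lx{2d} \int(s\rho)$, and we are done. If instead clause~2 holds with $s_k \geqlpo t$ for some $k$, there are two subcases: when $s_k = t$ we have $\int(s_k\rho) = \int(t\rho)$ and conclude by the Subterm observation; when $s_k \glpo t$, the hypothesis of the lemma applied to $u \defsym t$ (legitimate since $\depth{t} \leq d$) yields $\int(s_k\rho) \gx{2d} \int(t\rho)$, and transitivity with $\int(s\rho) \gx{2d} \int(s_k\rho)$ finishes the case. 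Note that these cases stay entirely at level $2d$ and do not invoke the induction on $\depth{t}$.

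The substance of the proof is the reduction of the remaining two alternatives of clause~2 to Lemma~\ref{l:lemma18}. Here $t = f_i(\overline{t})$ and every $t_l$ satisfies $\depth{t_l} \leq d-1$. In the precedence case ($j>i$ and $s \glpo t_l$ for all $l$) I would obtain, for each $l$, the premise $\int(s\rho) \gx{2(d-1)} \int(t_l\rho)$ and then invoke Lemma~\ref{l:lemma18}.\ref{en:lemma18:i} to conclude $\int(s\rho) \gx{2d} \int(t\rho)$. In the lexicographic case ($i=j$, $s \glpo t_l$ for all $l$, with $s_1=t_1,\ldots,s_{i_0-1}=t_{i_0-1}$ and $s_{i_0}\glpo t_{i_0}$) I would supply the two kinds of premises of Lemma~\ref{l:lemma18}.\ref{en:lemma18:ii}: the decrease in the $i_0$-th argument, $\int(s_{i_0}\rho) \gx{2(d-1)} \int(t_{i_0}\rho)$, coming from the lemma's hypothesis applied to the argument $s_{i_0}$ at the level $2(d-1)$ matching $\depth{t_{i_0}}$, and the bounds $\int(s\rho) \gx{2(d-1)} \int(t_l\rho)$ for $l>i_0$, coming as in the precedence case. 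Since $\int(s\rho)\gx{2(d-1)}\int(t_l\rho)$ is exactly the conclusion of the present lemma for the pair $(s,t_l)$ with $d-1$ in place of $d$, these premises are produced by the induction on $\depth{t}$, the $t_l$ having strictly smaller depth.

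The main obstacle is the bookkeeping of the subscript of $\gx{\cdot}$, that is, keeping the point-wise level synchronised with term depth. Lemma~\ref{l:lemma18} consumes premises at level $2(d-1)$ and returns a conclusion at level $2d$, and the descent relations $\gx{x}$ cannot in general be lowered in their subscript (for limit terms Lemma~\ref{l:lemma21} together with the Subterm Property rather exhibits controlled \emph{upward} behaviour). Consequently the hypothesis on the arguments $s_k$ must be used precisely at the level $2\,\depth{u}$ matching each subterm $u$ to which it is applied, not merely at the top level $2d$; the induction on $\depth{t}$ is organised exactly so that every descent to an argument $t_l$ or $t_{i_0}$ of depth $\leq d-1$, and every appeal to Lemma~\ref{l:lemma18}, is made at level $2(d-1)$. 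The reason Lemma~\ref{l:lemma18} must raise the level at all is that its internal slot-by-slot comparison of the two $\psi$-terms, carried out by Monotonicity, needs at the strictly decreasing slot the shifted inequality $\int(t_{i_0}\rho)+1 \lx{2d} \int(s_{i_0}\rho)$ rather than $\int(t_{i_0}\rho) \lx{2(d-1)} \int(s_{i_0}\rho)$, a shift furnished by Lemma~\ref{l:lemma21}; this is precisely why our premises have to be delivered one level lower. Getting this coupling of level and depth exactly right across all four cases is the delicate part of the argument.
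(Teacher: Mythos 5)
Your proof follows essentially the same route as the paper's: induction on the depth bound $d$, case analysis on the clauses of Definition~\ref{d:LPO}, the variable and subterm cases handled via the Subterm Property and the lemma's hypothesis, and the precedence and lexicographic cases reduced to Lemma~\ref{l:lemma18}(i) and (ii) with premises at level $2(d-1)$ supplied by the induction hypothesis. Your explicit account of why the hypothesis on the arguments $s_k$ must be invoked at the level matching the depth of the compared subterm (rather than only at the top level $2d$) is in fact more careful than the paper's own rather terse treatment of exactly this point.
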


\begin{proof}
The proof is by induction on $d$. 

\textsc{Case} $d=0$: 
Hence $\depth{t}=0$; therefore $t \in \VS$ or
$t = f_i \in \Sigma$. Consider $t \in \VS$.
Then $t$ is a subterm of $s$. Hence there exists 
$k$ ($1 \leq k \leq m$) s.t.\ $t$ is subterm of $s_k$. Hence $s_k \geqlpo t$,
and by assumption this implies $\int(s_k\rho) \gx{2d} \int(t\rho)$,
and therefore $\int(s\rho) \gx{2d} \int(t\rho)$ by 
the Subterm Property.

Now assume $t = f_i \in \Sigma$. As $s \glpo t$
by assumption either $i < j$ or $s_k \geqlpo t$ holds. In the latter
case, the assumptions render $\int(s_k\rho) \geqx{2d} \int(t\rho)$; hence
$\int(s\rho) \gx{2d} \int(t\rho)$. 
Otherwise, 
$\int(s\rho) = \psi(j,\int(s_1\rho),\ldots,\int(s_m\rho)+1,\overline{0})$,
while $\int(t\rho) = \int(t) = \psi(i,\overline{0})$.
As $\int(s_k\rho) \gx{x} 0$ holds for arbitrary $x < \omega$, we
conclude $\int(s\rho) \gx{2d} \int(t\rho)$.

\textsc{Case} $d>0$: 
Assume $\depth{t} > 0$. (Otherwise, the proof follows the pattern of
the case $d=0$.) 
Let $t=\term{f_i}{t}{1}{n}$, and clearly 
$\depth{t_l} \leq (d-1)$ for all $l=1,\ldots,n$. 
We start with the following observation:
Assume there exists $i_0$ s.t.\ $s \glpo t_l$ holds for all 
$l=i_0+1,\ldots,n$. Then by (ih) we have 
$\int(s\rho) \gx{2(d-1)} \int(t_l\rho)$. 

We proceed by case-distinction on $s \glpo t$. 
Assume firstly there exists $k$ ($1\leq k \leq m$) s.t.\
$s_k \geqlpo t$. Utilizing the assumptions of the lemma,
we conclude $\int(s\rho) \gx{2d} \int(t\rho)$.
Now assume $i < j$ and $s \glpo t_l$ for all $l=1,\ldots,n$. 
Clearly $s\rho,t\rho \in \GTA{\Sigma}$. By the 
observation $\int(s\rho) \gx{2(d-1)} \int(t_l\rho)$ holds. 
Hence Lemma~\ref{l:lemma18}.\ref{en:lemma18:i} becomes applicable
and therefore $\int(s\rho) \gx{2d} \int(t\rho)$ holds true.
Finally assume $i=j$; $s_1=t_1,\ldots,s_{i_0-1}=t_{i_0-1}$;
$s_{i_0} \glpo t_{i_0}$; $s \glpo t_l$, for all $l=i_0+1,\ldots,m$.
Utilizing the observation, we see that 
Lemma~\ref{l:lemma18}.\ref{en:lemma18:ii} becomes applicable
and therefore $\int(s\rho) \gx{2d} \int(t\rho)$.
\end{proof}

\begin{lemma}
\label{l:lemma25}
Let $t \in \GTA{\Sigma}$ be given, assume $\depth{t} \leq d$.
Then $ \psi(K+1,\overline{0})\gx{2d} \int(t)$.
\end{lemma}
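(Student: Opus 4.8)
The plan is to read $\psi(K+1,\overline{0})$ as the interpretation $\int(f_{K+1})$ of a fictitious constant $f_{K+1}$ whose index exceeds every symbol of $\Sigma$, and to note that every ground term over $\Sigma$ lies $\glpo$-below such a constant. Indeed, the root symbol of any $t\in\GTA{\Sigma}$ is some $f_i$ with $i\leq K<K+1$, so the inequality ``$K+1>i$'' needed to descend past the root is always at hand. This is precisely the property of the precedence that powers the proof of Lemma~\ref{l:lemma17}, and I therefore expect the present statement to follow by the same argument with $j$ replaced by $K+1$ throughout. Set $\alpha\defsym\psi(K+1,\overline{0})$.

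I would argue by induction on $d$, establishing that $\depth{t}\leq d$ implies $\alpha\gx{2d}\int(t)$. In the base case $d=0$ the term $t$ is a constant $f_i$ with $1\leq i\leq K$, so $\int(t)=\psi(i,\overline{0})$; since $K+1>i$ one obtains $K+1\gx{0}i$ by repeatedly dropping a summand of the numeral $K+1$, and Monotonicity (Lemma~\ref{l:monotonie}) lifts this to $\alpha\gx{0}\psi(i,\overline{0})$.

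For the inductive step $d>0$ write $t=f_i(t_1,\dots,t_n)$, so that each $\depth{t_l}\leq d-1$ and the induction hypothesis gives $\alpha\gx{2(d-1)}\int(t_l)$ for every $l$. I then run the very chain of descents used for Lemma~\ref{l:lemma17}: unfolding the fundamental sequence $\alpha[2d]=\psi(K,\cdot,\overline{0})^{2d+1}(0)$ and iterating (via the auxiliary terms $\gamma_k$ of that proof) yields
\[
\alpha\gx{2d}\psi(K,\alpha[2(d-1)],\dots,\alpha[2(d-1)]+1,\overline{0}).
\]
Using $\int(t_l)\leqx{2d}\alpha[2(d-1)]$ together with Lemma~\ref{l:lemma21} and Monotonicity, I replace the displayed copies of $\alpha[2(d-1)]$ by $\int(t_1),\dots,\int(t_n)$ and lower the leading $K$ to $i$ (legitimate as $i\leq K$), reaching $\psi(i,\int(t_1),\dots,\int(t_n)+1,\overline{0})=\int(t)$; hence $\alpha\gx{2d}\int(t)$.

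The inductive step is the only substantive work, and it is word-for-word the computation already performed for Lemma~\ref{l:lemma17}: the single property of the leading index exploited there is that it is positive and strictly dominates the root symbol of $t$, both of which hold for $K+1$ because $t$ ranges over $\GTA{\Sigma}$ and so has root index at most $K$. The main ``obstacle'' is thus merely the bookkeeping of verifying that each appeal to ``$j>i$'' in the earlier proof is subsumed by ``$K+1>i$''. A slightly slicker packaging avoids even this: pass to the signature $\Sigma\cup\{f_{K+1}\}$ with $f_{K+1}$ a fresh maximal constant; since it is a constant the maximal arity $N$, and hence the system $T$ and the map $\int$, are unchanged, one checks $f_{K+1}\glpo t$ by a trivial structural induction, and Lemma~\ref{l:lemma17} then applies verbatim to give $\alpha=\int(f_{K+1})\gx{2d}\int(t)$.
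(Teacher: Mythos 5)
Your proposal is correct and matches the paper's proof, which consists precisely of the remark that the inductive argument follows the pattern of the proof of Lemma~\ref{l:lemma17}; you carry out that pattern with $j$ replaced by $K+1$, which is exactly what is intended. Your closing reformulation via a fresh maximal constant $f_{K+1}$ (leaving $N$, $T$ and $\int$ unchanged) is a tidy way to make the reduction to Lemma~\ref{l:lemma17} literal rather than ``by analogy'', but it is the same underlying argument.
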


\begin{proof}
The inductive proof follows
the pattern of the proof of Lemma~\ref{l:lemma17}.
\end{proof}

\begin{theorem}
\label{t:lemma20}
Let $l,r \in \T$ be given. Assume $\rho$ is a ground substitution,
$\depth{t} \leq d$. 
Then $l \glpo r$ implies $\int(l\rho) \gx{2d} \int(r\rho)$.
\end{theorem}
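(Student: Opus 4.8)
The plan is to prove the statement by induction on the structure of $l$ (equivalently on $\depth{l}$), reading the hypothesis $\depth{t} \leqslant d$ as a bound on the right-hand side $r$ — the term playing the role of $t$ in Lemmas~\ref{l:lemma17}--\ref{l:lemma19} — and keeping $d$ universally quantified. Since the substitution $\rho$ sends every variable to the constant $c$, which has term depth $0$, we have $\depth{r\rho} = \depth{r} \leqslant d$, so the auxiliary lemmas, stated for ground terms, apply after instantiation. Concretely I would establish: \emph{for every $l$, and for all $r$, $d$ with $\depth{r} \leqslant d$, if $l \glpo r$ then $\int(l\rho) \gx{2d} \int(r\rho)$}, with the outermost induction on $l$.

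For the base case, $l$ is either a variable or a constant. If $l \in \VS$, then $l \glpo r$ cannot hold (clause~(1) of Definition~\ref{d:LPO} would force $r = l$), so there is nothing to prove. If $l = f_j$ is a constant, then by Kamin--Levy $\var{r} \subseteq \var{l} = \varnothing$, hence $r$ is ground and $r\rho = r$; since $f_j \glpo r$ and $\depth{r} \leqslant d$, Lemma~\ref{l:lemma17} yields $\int(l\rho) = \int(f_j) \gx{2d} \int(r) = \int(r\rho)$. For the inductive step write $l = f_j(s_1,\dots,s_m)$ with $m \geqslant 1$. The induction hypothesis, applied to each immediate subterm $s_k$, states precisely that for every $u \in \T$ with $\depth{u} \leqslant d$, $s_k \glpo u$ implies $\int(s_k\rho) \gx{2d} \int(u\rho)$ — this is exactly the subterm premise required by Lemma~\ref{l:lemma19}. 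Invoking that lemma with $s = l$ and $t = r$ then gives $\int(l\rho) \gx{2d} \int(r\rho)$, which completes the induction. (Internally Lemma~\ref{l:lemma19} dispatches the three sub-cases of Definition~\ref{d:LPO}.\ref{en:LPO} through the Subterm Property, the Monotonicity Property, and Lemma~\ref{l:lemma18}, so no further case analysis is needed here.)

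The one point that needs care — and what I expect to be the genuine obstacle in organizing the argument — is the bookkeeping of the level parameter. The proof of Lemma~\ref{l:lemma19} runs a secondary induction on $d$ and, when comparing $s$ against the arguments $t_l$ of $t$ at reduced depth $d-1$, consumes the subterm hypothesis at the lower level $2(d-1)$. Hence the subterm premise must be available not only at level $2d$ but at every level $2d'$ with $d' \leqslant d$. This is exactly why the main statement has to be proved with $d$ universally quantified: only then does the structural induction hypothesis on $s_k$ deliver the premise of Lemma~\ref{l:lemma19} simultaneously at all the levels its internal $d$-recursion requires. Once this is arranged the theorem is essentially an assembly step, since the heavy ordinal-term computations with fundamental sequences have already been discharged in Lemmas~\ref{l:lemma21}, \ref{l:lemma17}, and \ref{l:lemma18}; the remaining work is purely to route the induction correctly and to note that Theorem~\ref{t:interpretation1} follows by specializing $d$ to a depth bound on the terms involved and using that $\gx{2d}$ refines $>$.
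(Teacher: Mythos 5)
Your proof is correct and follows essentially the same route as the paper's: induction on the depth (structure) of the left-hand side, discharging the base case via Lemma~\ref{l:lemma17} (after noting a variable cannot be $\glpo$-greater than anything) and the step case by feeding the induction hypothesis as the subterm premise of Lemma~\ref{l:lemma19}. Your additional remark about keeping $d$ universally quantified so that the premise of Lemma~\ref{l:lemma19} is available at all the levels its internal $d$-recursion consumes is a sound clarification of a point the paper leaves implicit.
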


\begin{proof}
We proceed by induction on $\depth{s}$.

\textsc{Case} $\depth{s}=0$: 
Then $s$ can either be a constant or a variable. As $s \glpo t$ holds,
we can exclude the latter case. Hence assume $s=f_j$. As $f_j \glpo t$,
$t$ is closed. Hence the assumptions of the theorem imply
the assumptions of Lemma~\ref{l:lemma17} and we conclude 
$\int(s\rho) = \int(s) \gx{2d} \int(t) = \int(t\rho)$.

\textsc{Case} $\depth{s}>0$: 
Then $s$ can be written as $\term{f_j}{s}{1}{m}$. By (ih) 
$s_k \glpo u$ and $\depth{u} \leq d$ imply
$\int(s_k\rho) \gx{2d} \int(t\rho)$. 
Therefore the present assumptions contain 
the assumptions of Lemma~\ref{l:lemma19} and hence 
$\int(s\rho) \gx{2d} \int(t\rho)$ follows.
\end{proof}

\begin{theorem} (The Interpretation Theorem.)
\label{t:interpretation2}
Let $R$ denote a finite rewrite system whose induced rewrite
relation is contained in $\glpo$.
Then there exists $k < \omega$, such that 
for all $l,r \in \T$, and any ground substitution $\rho$
$l \rew r$ implies $\int(l\rho) \gx{k} \int(r\rho)$.
\end{theorem}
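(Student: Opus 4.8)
The plan is to reduce a single rewrite step to a single comparison taking place at the level of a rewrite \emph{rule}, where Theorem~\ref{t:lemma20} applies with a depth parameter that is bounded uniformly over $R$, and then to transport the resulting descent through the surrounding context by the Monotonicity Property. Concretely, if $l \rew r$ then, by the inductive definition of $\rew$, there are a rule $(l_0,r_0) \in R$, a context $C$ with a single hole, and a substitution $\sigma$ such that $l = C[l_0\sigma]$ and $r = C[r_0\sigma]$. For a ground substitution $\rho$ this gives $l\rho = (C\rho)[l_0\sigma\rho]$ and $r\rho = (C\rho)[r_0\sigma\rho]$, where $C\rho$ is a ground context and the composite $\sigma\rho$ is a ground substitution (its range is ground because $\rho$ is). Since $R \subseteq \glpo$ we have $l_0 \glpo r_0$.

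The key step is to invoke Theorem~\ref{t:lemma20} not on the grounded redex but on the rule $l_0 \glpo r_0$ itself, feeding $\sigma\rho$ as the ground substitution. Put $D \defsym \max\{\depth{l_0}, \depth{r_0} \colon (l_0,r_0) \in R\}$, which is finite because $R$ is finite. Choosing the depth parameter of Theorem~\ref{t:lemma20} to be $D$, its hypothesis is satisfied, and the theorem yields $\int(l_0\sigma\rho) \gx{2D} \int(r_0\sigma\rho)$. The whole point of working at the rule level is that the subscript is now controlled by $D$, a quantity depending only on $R$ and not on $l$, $r$, or $\rho$; this is exactly what makes the target $k \defsym 2D$ uniform.

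It remains to lift this descent through the context. Writing $u \defsym l_0\sigma\rho$ and $v \defsym r_0\sigma\rho$, the value $\int((C\rho)[u])$ is built from $\int(u)$ by iterating the recursive clause of Definition~\ref{d:interpretation}: climbing from the hole of $C\rho$ to its root, each enclosing symbol embeds the interpretation so far as an argument of a fresh $\psi$ (possibly with a $+1$ in the relevant argument slot). I would therefore argue by induction on $\depth{C\rho}$, applying the Monotonicity Property (Lemma~\ref{l:monotonie}) at every layer---part~\ref{en:mono:ii} for the $\psi$-embedding and part~\ref{en:mono:i} to absorb the $+1$---which preserves the subscript $2D$. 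From $\int(u) \gx{2D} \int(v)$ this gives $\int(l\rho) = \int((C\rho)[u]) \gx{2D} \int((C\rho)[v]) = \int(r\rho)$, and taking $k = 2D$ establishes the theorem; when $C$ is the trivial context the first step already finishes.

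The main obstacle is precisely the uniformity of $k$, and the naive route founders on it: applying Theorem~\ref{t:lemma20} directly to the grounded redex $l_0\sigma\rho \glpo r_0\sigma\rho$ would force the subscript to depend on $\depth{r_0\sigma\rho}$, which grows with the substituted subterms and is therefore unbounded across rewrite steps. The observation that saves the argument is that the only genuine comparisons live in the finite ``skeleton'' contributed by the rule, while the substituted subterms occupy identical leaf positions in $u$ and in $v$ and are carried along verbatim; invoking Theorem~\ref{t:lemma20} at the rule level with $\sigma\rho$ isolates exactly this skeleton, and Monotonicity then does the bookkeeping for the substituted parts and the context without ever increasing the subscript.
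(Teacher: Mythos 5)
Your proof is correct and follows essentially the same route as the paper: the paper's own proof just sets $d$ to the maximal term depth of a right-hand side of $R$ and declares the theorem a corollary of Theorem~\ref{t:lemma20} with $k = 2d$. The extra work you do---invoking Theorem~\ref{t:lemma20} at the level of the rule instantiated by $\sigma\rho$ (so that the depth parameter is bounded uniformly by $R$) and then transporting the descent through the surrounding ground context with Lemma~\ref{l:monotonie}---is precisely the reading the paper leaves implicit, and it correctly addresses the only delicate point, the uniformity of $k$.
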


\begin{proof}
Set $d$ equal to $\max \{ \depth{r} \colon \exists l \ (l,r) \in R  \}$.
Then the theorem follows as a corollary to Theorem~\ref{t:lemma20} if
$k$ is set to $2d$.
\end{proof}
\end{section}

\begin{section}{Collapsing Theorem}

We define a variant of the slow-growing hierarchy, 
cf.~Definition~\ref{d:G}, suitable for our purposes.

%
% G schlange
%
\begin{definition} 
Recursive definition of the function $\G_{\alpha} \colon \omega \to \omega$
for $\alpha \in T$. 
\begin{eqnarray*}
\G_0(x) & \defsym & 0\\
\G_{\alpha}(x) & \defsym & \max \{\G_{\beta}(x) \colon \beta \in (\alpha)^x \} + 1 \quad.
\end{eqnarray*}
\end{definition}

\begin{lemma}
\label{l:G_schlange}
Let $\alpha \in T$, $\alpha > 0$ be given. Assume
$x < \omega$ is arbitrary.
\begin{enumerate}
\item \label{en:G_schlange:i}
$\G_{\alpha}$ is increasing. (Even strictly if $\alpha > \omega$.)
\item \label{en:G_schlange:ii}
If $\alpha \gx{x} \beta$, then $\G_{\alpha}(x) > \G_{\beta}(x)$.
\end{enumerate}
\end{lemma}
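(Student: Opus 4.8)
The plan is to prove both assertions simultaneously by induction on $\alpha$ along the well-ordering $(T,<)$, exploiting the fact (noted after Definition~\ref{d:alpha^x}) that $\gamma\in(\alpha)^x$ entails $\alpha\gx{x}\gamma$ and hence $\gamma<\alpha$; thus the members of $(\alpha)^x$ over which $\G_{\alpha}(x)$ takes its maximum are always available as smaller instances of the induction. It is convenient to read $\G_{\alpha}(x)$ as the length of the longest $(\cdot)^x$-descending chain from $\alpha$ down to $0$, since $\G_{0}(x)=0$ and $\G_{\alpha}(x)=\max\{\G_{\beta}(x)\colon \beta\in(\alpha)^x\}+1$.

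I would establish assertion~\ref{en:G_schlange:ii} first, as it is essentially immediate. If $\alpha\gx{x}\beta$, then unfolding the transitive closure there is $\gamma\in(\alpha)^x$ with $\gamma=\beta$ or $\gamma\gx{x}\beta$. In the first case the definition of $\G_{\alpha}$ gives $\G_{\alpha}(x)\geq\G_{\beta}(x)+1>\G_{\beta}(x)$; in the second case the induction hypothesis applied to $\gamma<\alpha$ yields $\G_{\gamma}(x)>\G_{\beta}(x)$, and combining with $\G_{\alpha}(x)\geq\G_{\gamma}(x)+1$ again gives $\G_{\alpha}(x)>\G_{\beta}(x)$. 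This settles part~\ref{en:G_schlange:ii}, and I would use it freely afterwards.

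For assertion~\ref{en:G_schlange:i} the key auxiliary step is a monotonicity-in-$x$ property of the descent relation, namely $\alpha\gx{x}\beta\Rightarrow\alpha\gx{x+1}\beta$. With this in hand, weak monotonicity of $\G$ follows by the same induction: for every $\beta\in(\alpha)^x$ we have $\alpha\gx{x+1}\beta$, so part~\ref{en:G_schlange:ii} gives $\G_{\alpha}(x+1)>\G_{\beta}(x+1)$, while the induction hypothesis applied to $\beta<\alpha$ gives $\G_{\beta}(x+1)\geq\G_{\beta}(x)$; hence $\G_{\beta}(x)<\G_{\alpha}(x+1)$ for all such $\beta$, i.e.\ $\G_{\alpha}(x)\leq\G_{\alpha}(x+1)$. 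The auxiliary step itself I would prove first for single steps $\beta\in(\alpha)^x$, by cases on Definition~\ref{d:alpha^x}, and then extend to $\gx{x}$ by chaining. The harmless cases are the nested descents $\psi(\ldots,\alpha_i^{\ast},\ldots)$ and $\alpha_1\#\cdots\alpha_i^{\ast}\cdots\#\alpha_m$, which reduce to the induction hypothesis through the Monotonicity Property (Lemma~\ref{l:monotonie}). The two genuinely $x$-dependent clauses are $\beta=\alpha_i+x$, handled by noting $\alpha_i+(x+1)\in(\alpha)^{x+1}$ together with $\alpha_i+(x+1)\gx{x+1}\alpha_i+x$, and $\beta=\psi(\overline{\alpha})[x]$, for which I would invoke Lemma~\ref{l:auxilliary} in the form $\alpha\gx{x+1}\alpha[x+1]\geqx{x+1}\alpha[x]+1$ (hence $\alpha\gx{x+1}\alpha[x]$), together with Lemma~\ref{l:lemma21} to lift comparisons on limit arguments from level $x$ to level $x+1$.

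Finally, for the strictness claim when $\alpha>\omega$ I would sharpen the argument at the maximiser. Writing $\beta^{\ast}\in(\alpha)^x$ for a term realising $\G_{\alpha}(x)=\G_{\beta^{\ast}}(x)+1$, it suffices to exhibit one $\gamma\in(\alpha)^{x+1}$ with $\gamma\gx{x+1}\beta^{\ast}$, for then $\G_{\alpha}(x+1)\geq\G_{\gamma}(x+1)+1>\G_{\beta^{\ast}}(x+1)+1\geq\G_{\beta^{\ast}}(x)+1=\G_{\alpha}(x)$. The hypothesis $\alpha>\omega$ is exactly what secures this extra descent step: for finite $\alpha$ the sets $(\alpha)^x$ are independent of $x$ (for instance $(n)^x=\{n-1\}$), so $\G_{\alpha}$ is constant, whereas for $\alpha>\omega$ the $x$-dependent members $\alpha_i+x$ and $\psi(\overline{\alpha})[x]$ move strictly with $x$ and supply the required $\gamma$ (e.g.\ $\alpha_i+(x+1)\gx{x+1}\alpha_i+x$ or $\alpha[x+1]\gx{x+1}\alpha[x]$). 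I expect this strictness bookkeeping, together with the verification of the descent-monotonicity lemma in the nested $\psi$ case, to be the main obstacle, precisely because it forces a careful interplay between the generalised fundamental sequences $(\cdot)^x$ and the ordinary ones $\alpha[x]$; this is where Lemmas~\ref{l:lemma21} and~\ref{l:auxilliary} carry the real weight.
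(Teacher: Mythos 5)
Your proposal is correct and follows essentially the same route as the paper, whose proof of this lemma consists of the single remark that both assertions follow by induction over $<$ on $\alpha$; your argument is precisely that induction, spelled out. The details you supply — deriving part~(\ref{en:G_schlange:ii}) directly from the unfolding of the transitive closure, and reducing part~(\ref{en:G_schlange:i}) to the auxiliary fact that $\alpha \gx{x} \beta$ implies $\alpha \gx{x+1} \beta$ via Lemmas~\ref{l:lemma21} and~\ref{l:auxilliary} — are consistent with the surrounding development and fill in exactly what the paper leaves implicit.
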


\begin{proof}
Both assertions follow by induction over $<$ on $\alpha$. 
\end{proof}

We need to know that this variant of the slow-growing hierarchy
is indeed slow-growing. We show this by verifying
that the hierarchies $\{ \G_{\alpha} \colon \alpha \in T \}$ and
$\{ G_{\alpha} \colon \alpha \in T \}$ coincide with respect
to growth-rate. 
It is a triviality to verify that there exists $\beta \in T$ such that
$\G_{\beta}$ majorizes $G_{\alpha}$. (Simply set $\beta = \alpha$.)
The other direction is less trivial. One first
proves that for any $\alpha \in T$ there exists $\gamma < \omega^{N+1}$
such that $\G_{\alpha}(x) \leq F_{\gamma}(x)$ for almost all $x$. 
Secondly one employs the Hierarchy Comparison Theorem once
more to establish the existence of $\beta \in T$ such that 
$\G_{\alpha}(x) \leq G_{\beta}(x)$ holds for almost all $x$.

\begin{theorem}
\label{t:satz4}
\begin{equation*}
\bigcup_{\alpha \in T} E(G_{\alpha}) = \bigcup_{\alpha \in T} E(\G_{\alpha})
   = \bigcup_{\gamma < \omega^{N+1}} E(F_{\gamma}) \quad .
\end{equation*}
\end{theorem}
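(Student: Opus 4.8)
The plan is to read the claimed chain as two independent tasks. The right-hand equality $\bigcup_{\alpha \in T} E(G_{\alpha}) = \bigcup_{\gamma < \omega^{N+1}} E(F_{\gamma})$ is exactly the Hierarchy Comparison Theorem (Theorem~\ref{t:hierarchy_comparission}), which I may invoke verbatim. Hence the whole content reduces to the middle equality $\bigcup_{\alpha\in T} E(G_{\alpha}) = \bigcup_{\alpha\in T} E(\G_{\alpha})$, which I would establish by two inclusions, and then compose with Theorem~\ref{t:hierarchy_comparission}.

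For $\bigcup_{\alpha} E(G_{\alpha}) \subseteq \bigcup_{\alpha} E(\G_{\alpha})$ I would prove the pointwise bound $G_{\alpha}(x) \leq \G_{\alpha}(x)$ for all $\alpha\in T$ and all $x$, so that already $\beta=\alpha$ witnesses the inclusion. This goes by induction along $<$ on $\alpha$, the point being that the single branch descended by $G_{\alpha}$ always occurs among the options in $(\alpha)^x$. For a successor one checks that the predecessor lies in $(\alpha)^x$ (using Definition~\ref{d:alpha^x} together with $\psi(\overline{0})[x]=0$ from Definition~\ref{d:fundamental}), and for a limit $\lambda$ one has $\lambda[x]\in(\lambda)^x$ in every case of Definition~\ref{d:fundamental}. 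Since $\G_{\alpha}(x)=\max\{\G_{\beta}(x):\beta\in(\alpha)^x\}+1$ by definition, it dominates the value obtained by following that one branch, which is precisely $G_{\alpha}(x)$ by Definition~\ref{d:G}. As $G_{\alpha}$ is elementarily computable and bounded by $\G_{\alpha}$, it is elementary in $\G_{\alpha}$, whence $E(G_{\alpha})\subseteq E(\G_{\alpha})$.

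The substantive direction is $\bigcup_{\alpha} E(\G_{\alpha}) \subseteq \bigcup_{\alpha} E(G_{\alpha})$. Here I would first show that for each $\alpha\in T$ there is an ordinal $\gamma < \omega^{N+1}$ with $\G_{\alpha}(x) \leq F_{\gamma}(x)$ for almost all $x$, and then close by a second appeal to Theorem~\ref{t:hierarchy_comparission}: since $F_{\gamma}\in\bigcup_{\gamma'<\omega^{N+1}}E(F_{\gamma'})=\bigcup_{\beta}E(G_{\beta})$, and $\G_{\alpha}$ is elementarily computable from the (elementary) relation $\gx{x}$ once a bound on the descent length is available, $\G_{\alpha}$ lands in $E(G_{\beta})$ for a suitable $\beta$, giving $E(\G_{\alpha})\subseteq\bigcup_{\beta}E(G_{\beta})$. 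For the estimate itself I would proceed by induction on $\alpha$ with respect to $<$, reading $\G_{\alpha}(x)$ as the length of a longest $\gx{x}$-descending chain from $\alpha$ (well-founded, and strictly decreasing along $\gx{x}$, by Lemma~\ref{l:G_schlange}). The three kinds of immediate predecessors recorded in Definition~\ref{d:alpha^x} — an interior descent $\alpha_i^{\ast}\in(\alpha_i)^x$, the shift $\alpha_i+x$, and the fundamental-sequence step $\psi(\overline{\alpha})[x]$ — are each matched against the defining recursion of $F_{\gamma}$; the finite, $x$-independent branching, bounded by the arity $N+1$ of $\psi$, is what keeps the resulting index strictly below $\omega^{N+1}$.

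I expect the main obstacle to be precisely this bound $\G_{\alpha}(x)\leq F_{\gamma}(x)$ with $\gamma<\omega^{N+1}$. The delicate part is the fundamental-sequence branch $\psi(\overline{\alpha})[x]$, since the cases of Definition~\ref{d:fundamental} — in particular the nested iterations $\psi(\ldots,\cdot,\overline{0})^{x+1}$ and the insertion of maximal interesting subterms — are where descent lengths grow fastest, and the bookkeeping must assign a $\gamma$ that stays \emph{below $\omega^{N+1}$} rather than merely below the (much larger) order type of $T$. Following the pattern of the Hierarchy Comparison Theorem proof in~\cite{Weiermann:2001}, extended as in Theorem~\ref{t:hierarchy_comparission} from the $2$-ary to the $(N+1)$-ary Veblen setting, should deliver the required uniform bound and thereby the theorem.
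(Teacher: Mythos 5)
Your proposal is correct and follows essentially the same route as the paper: the paper's (sketched) argument likewise observes that $\beta=\alpha$ witnesses $\G_{\alpha}$ majorizing $G_{\alpha}$ for the easy inclusion, and for the converse first bounds $\G_{\alpha}$ by some $F_{\gamma}$ with $\gamma<\omega^{N+1}$ almost everywhere and then applies the Hierarchy Comparison Theorem a second time to land back in $\bigcup_{\beta}E(G_{\beta})$. You correctly identify the bound $\G_{\alpha}(x)\leq F_{\gamma}(x)$ with $\gamma<\omega^{N+1}$ as the substantive step, which the paper also leaves at the level of a sketch.
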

\end{section}

\begin{section}{Complexity Bounds}
The \emph{complexity} of a terminating finite rewrite system $R$ is
measured by the \emph{derivation length} function.

\begin{definition}
The derivation length function 
$\dl{R} \colon \omega \to \omega$. 
Let $m < \omega$ be given.
$\dl{R}(m) \defsym \max\{n \colon \exists t_1,\ldots,t_n \in \T\
   \left( (t_1\rew\cdots\rew t_n) \land (\depth{t_1} \leq m) \right) \}$.
\end{definition}

Let $R$ be a rewrite system over $\T$ such that $\rew$ is contained
in a lexicographic path ordering.
Now assume that there exist $s=t_0,t_1,\ldots, t_n \in \T$
with $\depth{s} \leq m$ such that
\begin{equation*}
s \rew t_1 \rew \cdots \rew t_n
\end{equation*}
holds. By our choice of $R$ this implies 
$s \glpo t_1 \glpo \cdots \glpo t_n$.
By assumption on $\Sigma$ there exists $c \in \Sigma$, with $\ar{c} = 0$. 
We define a ground substitution $\rho$: $\rho(x) = c$, for all $x \in \VS$.
Let $k < \omega$ be defined as in Theorem~\ref{t:interpretation2}. 
Recall that $K$ denotes the cardinality of $\Sigma$.
We conclude from the Interpretation Theorem and Lemma~\ref{l:lemma25},
$\int(s\rho) \gx{k} \int(t_1\rho) \gx{k} \cdots \gx{k} \int(t_n\rho)$
and $\psi(K+1,\overline{0}) \gx{2m} \int(s\rho)$.
Setting $h \defsym \max \{2m,k\}$ and utilizing Lemma~\ref{l:lemma21}, we obtain
$\psi(K+1,\overline{0}) \gx{h} \int(s\rho) \gx{h} \cdots \gx{h} \int(t_n\rho)$.
An application of Lemma~\ref{l:G_schlange}.\ref{en:G_schlange:ii} yields
\begin{equation*}
\G_{\psi(K+1,\overline{0})}(h) > \G_{\int(s\rho)}(h) > \cdots > \G_{\int(t_n\rho)}(h) \quad.
\end{equation*}
Employing Theorem~\ref{t:satz4} we conclude the existence
of $\gamma < \omega^{\omega}$, such that
\begin{equation*}
F_{\gamma}(\max\{2m,k\}) \geq \G_{\psi(K+1,\overline{0})}(\max\{2m,k\}) \geq \dl{R}(m) \quad.
\end{equation*}

The class of multiply-recursive functions is captured by
$\bigcup_{\gamma < \omega^{\omega}} E(F_{\gamma})$, see~\cite{Robbin:1965}).
Thus we have established a multiply-recursive upper bound for the
derivation length of $R$ if $\rew$ is contained
in a lexicographic path ordering. Furthermore, this bound is essentially 
optimal, cf.~\cite{Weiermann:1995}.
\end{section}

\begin{section}{Conclusion}
The presented proof method is generally applicable. 
Let $R$ denote a rewrite system whose termination can
be shown via $\gmpo$. To yield a primitive recursive upper bound for 
the complexity of $R$ the above proof can be employed. Firstly the 
definition of the interpretation function $\int$ has
to be changed as follows.
If $s=\term{f_j}{s}{1}{m}$, then we set
\begin{equation*}
\int(s) \defsym \psi(j,\int(s_1) \# \cdots \#\int(s_m) \# 1) \quad .
\end{equation*}
Then the presented proof needs only partial changes. It suffices
to reformulate (and reprove) Lemma~\ref{l:lemma17}, \ref{l:lemma18}, \ref{l:lemma19},
and~\ref{l:lemma25}, respectively. 

Future work will be concerned with the \emph{Knuth-Bendix ordering}. 
Due to the more complicated nature of
this ordering the statement of the interpretation is not so simple.
Still we believe that only mild alterations of the given proof
are necessary.
\end{section}
}

\chapter{Proofs of Termination of Rewrite Systems for
Polytime Functions}
\label{fsttcs05}

\subsection*{Publication Details}
 T.~Arai and G.~Moser.
\newblock Proofs of termination of rewrite systems for polytime functions.
\newblock In \emph{Proceedings of the Annual Conference on Foundations of
  Software Technology and Theoretical Computer Science}, number 3821 in LNCS,
  pages 529--540. Springer Verlag, 2005.

\subsection*{Ranking}
The Conference on Foundations of Software Technology and Theoretical Computer Science has been ranked
\textbf{A} by CORE in 2007.

{
\input{fsttcs05.sty}

\subsection*{Abstract}
We define a new path order $\lpop$ so that 
for a finite rewrite system $\RS$ compatible with $\lpop$,
the \emph{complexity} or \emph{derivation length function} $\Drf{\RS}{f}$ for
each function symbol $f$ is 
guaranteed to be bounded by a polynomial in the length of the inputs.
Our results yield a simplification and clarification of the results obtained 
by Beckmann and Weiermann (Archive for Mathematical Logic, 36:11--30, 1996).

\section{Introduction}

Suppose $\C$ denotes an inductively defined class of recursive 
number-theoretic functions and suppose each $f \in \C$ is
defined via an equation (or more generally a system of equations) 
of the form
\begin{equation}
  \label{eq:defining_equation}
  f(\vecx) = 
  t(\lambda \vecy.f(\vecy),\vecx) \tkom
\end{equation}
where $t$ may involve previously defined functions. In a term-rewriting
context these defining equations are oriented from left to right and
the canonical \emph{term-rewriting characterisation} $\RC$ of $\C$ 
can be defined as follows: The signature $\Sigma$ of $\RC$ includes 
for each function $f$ in $\C$ a corresponding function symbol
$f$.
In order to represent natural numbers $\Sigma$ includes 
a constant $0$ and a unary function symbol $S$. 
I.e.\ numbers are represented by their numerals. (Later we 
represent natural numbers in the form of binary strings.) 
For each function $f \in \C - \{0,S\}$, 
defined by~\eqref{eq:defining_equation}, the rule
\begin{equation*}
  f(\vecx) \rew t(\lambda \vecy.f(\vecy), \vecx) \tkom
\end{equation*}
is added to $\RC$. In all non-pathological cases the term rewrite system (TRS for short) 
$\RC$ is terminating and confluent. $\RC$ is best understood
as a constructor TRS, where the constructors are $0$ and $S$. Hence $\RC$ may be
conceived as a \emph{functional program} implementing the functions in $\C$. 

Term-rewriting characterisations have
been studied e.g.\ in~\cite{BeckmannWeiermann:1996,CichonWeiermann:1997,Oitavem:2002,BMM:2005:rta}. The analysis of $\RC$ provides insight into the 
structure of $\C$ or renders us with a delineation of a class of rewrite systems whose complexity (measured by the length of derivations) is guaranteed to belong to the class $\C$.
Term-rewriting characterisations turn the
emphasis form the \emph{definition} of a function $f$ to its
\emph{computation}. 
An essential property of term-rewriting characterisations $\RC$ is its 
\emph{feasibility}: $\RC$ is called \emph{feasible}, if 
for each $n$-ary function $f \in \C$, there exists a function symbol $g$ in 
the signature of $\RC$ such that $g(\overline{m}_1,\dots,\overline{m}_n)$ 
computes the value of $f(m_1,\dots,m_n)$ and the derivation length
of this computation is bounded by a function from $\C$. 

We study \emph{term-rewriting characterisations} of the 
complexity class $\PTIME$. In particular, our starting point is a clever 
characterisation $\RBp$ of $\PTIME$ introduced by Beckmann and 
Weiermann. In~\cite{BeckmannWeiermann:1996}
the feasibility of $\RBp$ is established and conclusively shown that
any reduction strategy for $\RBp$ yields an algorithm for 
$f \in \PTIME$ that runs in polytime.
We provide a slight generalisation of the fact that $\RBp$ is
feasible. Moreover, we flesh out the 
crucial ingredients of the TRS $\RBp$ by defining a
\emph{path order for $\PTIME$}, denoted as $\lpop$. 
We show that for a finite TRS $\RS$, compatible with $\lpop$, 
the \emph{derivation length function} $\Drf{\RS}{f}$
is bounded by a polynomial in the length of the inputs
for any defined function symbol $f$.
Furthermore $\lpop$ is \emph{complete} 
in the sense that for any function $f \in \PTIME$, 
there exists a TRS $\RS$ computing $f$ such that termination of 
$\RS$ can be shown by $\lpop$. 

\section{A Rewrite System for $\PTIME$} \label{PREC}

In the following we need some notions from term rewriting and assume (at least nodding) 
acquaintance with term rewriting. (For background information, please see~\cite{BaaderNipkow:1998}.)
Let $\VS$ denote a countably infinite set of variables and $\Sigma$ a 
signature. The set of terms over $\Sigma$ and $\VS$ is denoted as 
$\T{\Sigma}{\VS}$, while the set of ground terms is
written as $\GTA{\Sigma}$. 
The rewrite relation induced by a rewrite system $\RS$ is
denoted as $\rsrew{\RS}$, and its transitive closure by $\rssrew{\RS}$.
We write $\norm{t}$ to denote the \emph{size} of a term $t$, i.e.\ the 
number of symbols in $t$.

\emph{Conventions}:
Terms are denoted by $r,s,t$, possibly extended by subscripts. 
We write $\vect$, to denote sequences of terms 
$t_1,\ldots,t_k \in \T{\Sigma}{\VS}$ and
$\vecg$ to denote sequences of function symbols $g_1,\ldots,g_k$, respectively.
The letters $i,j,k,l,m,n$, possible extended by subscripts will always
refer to natural numbers. 
The set of natural numbers is denoted as usual by $\N$.

We consider the class $\PTIME$ of \emph{polytime computable functions},
i.e.\ those functions computable by a deterministic Turing machine
$M$, such that $M$ runs in time $\leq p(n)$ for all inputs of length $n$, 
where $p$ denotes a polynomial. 
We consider equivalent formulations of
the class of polytime computable functions in terms of recursion schemes. 

Recursion schemes such as \emph{bounded recursion} due to Cobham~\cite{Cobham:1965} generate
exactly the functions computable in polytime.
In contrast to this, Bellantoni-Cook~\cite{BellantoniCook:1992} introduce certain \emph{unbounded} recursion schemes that distinguish between arguments as to their position in a function.
This separation of
variables gives rise to the following definition of the 
\emph{predicative recursive functions} $\mathcal{B}$;
for further details see~\cite{BellantoniCook:1992}. 
We fix a suitable signature of \emph{predicative recursive function symbols} 
$B$.
\begin{definition}
  For $k,l \in \N$ we define $B^{k,l}$ inductively.
  \begin{itemize}
  \item $S^{0,1}_i \in B^{0,1}$, where $i \in [0,1]$.
  \item $O^{k,l} \in B^{k,l}$.
  \item $U^{k,l}_r \in B^{k,l}$, for all $r \in [1,k+l]$.
  \item $P^{0,1} \in B^{0,1}$.
  \item $C^{0,3} \in B^{0,3}$.
  \item If $f \in B^{k',l'}$, 
    $g_1,\dots,g_{k'} \in B^{k,0}$, and 
    $h_1,\dots,h_{l'} \in B^{k,l}$, \\
    then $\SUB^{k,l}_{k',l'}[f,\vecg,\vech] \in B^{k,l}$.
  \item If $g \in B^{k,l}$, $h_0,h_1 \in B^{k+1,l+1}$,
    then $\PREC^{k+1,l}[g,h_{1},h_{2}] \in B^{k+1,l}$.
  \end{itemize}  
Set $B \defsym \bigcup_{k,l \in \N} B^{k,l}$.
\end{definition}

To simplify notation we usually drop the superscripts, 
when denoting predicative
recursive function symbols.
Occasionally, we even write $\SUB$ (, $\PREC$),  
instead of $\SUB^{k,l}[f,\vecg]$ (,$\PREC^{n+1}[g,h]$).
No confusion will arise from this.

The binary successor function $m \mapsto 2m+i$, $i \in \{0,1\}$ is
denoted as $\mathcal{S}_i$. Every natural number can be buildt
up from $0$ with repeated applications of $\mathcal{S}_i$. 
The binary length of a number $m$ is defined as follows:
$\len{0} \defsym 0$ and $\len{\mathcal{S}_i(m)} \defsym \len{m} + 1$.

We write $\N^{k,l}$ for $\N^k \times \N^l$ and for
$f\colon \N^{k,l} \to \N$, write 
$f(m_1,\dots,m_k;n_1,\dots,n_l)$ instead of
$f(\langle m_1,\dots,m_k \rangle, \langle n_1,\dots,n_l \rangle)$. 
The arguments occurring to the left of the semi-colon are called 
\emph{normal}, while the arguments to the right are called \emph{safe}. 
We define the following functions:
$\mathcal{S}^{0,1}_i$, $i \in \{0,1\}$ denotes the function 
$\langle ; m \rangle \mapsto 2m + i$.
$\mathcal{O}^{k,l}$ denotes the function
$\langle \vecm;\vecn \rangle \mapsto 0$.
$\mathcal{U}^{k,l}_r$ denotes the function
$\langle m_1,\dots,m_k;m_{k+1},\dots,m_{k+l} \rangle \mapsto m_r$.
$\mathcal{P}^{0,1}$ denotes the unique number-theoretic function
satisfying the following equations:
$f(;0) = 0$, $f(;\mathcal{S}_i(m)) = m$.
$\mathcal{C}^{0,3}$ denotes the unique function
satisfying:
$f(;0,m_0,m_1) = m_0$, $f(;\mathcal{S}_i(m),m_0,m_1) = m_i$.
If $f \colon \N^{k',l'} \mathrel{\to} \N$, $g_i \colon \N^{k,0} \mathrel{\to} \N$ for 
$i \in [1,k']$, $h_j \colon \N^{k,l}  \mathrel{\to}\N$ for $j \in [1,l']$, then
$\mathcal{SUB}^{k,l}_{k',l'}[f,\vecg,\vech]$ denotes the function
\begin{equation*}
  \langle \vecm;\vecn \rangle \mathrel{\mapsto} f(g_1(\vecm;), \dots, g_{k'}(\vecm;);
  h_1(\vecm;\vecn), \\
\dots, h_{l'}(\vecm;\vecn)) \tpkt
\end{equation*}
% $\langle \vecm;\vecn \rangle \mathrel{\mapsto} f(g_1(\vecm;), \dots, g_{k'}(\vecm;);
%   h_1(\vecm;\vecn), \\
% \dots, h_{l'}(\vecm;\vecn))$.
%
If $g \colon \N^{k,l} \to \N$, $h_i \colon \N^{k+1,l+1} \to \N$
for $i \in [0,1]$ then $\mathcal{PREC}^{k+1,l}[g,h_{1},h_{2}]$ denotes
the number-theoretic function $f$ satisfying:
$f(0,\vecm;\vecn) = g(\vecm;\vecn)$ and
$f(\mathcal{S}_i(m),\vecm;\vecn) = 
h_{i}(m,\vecm;\vecn,f(m,\vecm;\vecn))$.

\begin{definition}
  For $k,l \in \N$ we define $\B^{k,l}$ inductively.
  \begin{itemize}
  \item $\mathcal{S}^{0,1}_i \in \B^{0,1}$, where $i \in [0,1]$.
  \item $\mathcal{O}^{k,l} \in \B^{k,l}$.
  \item $\mathcal{U}^{k,l}_r \in \B^{k,l}$, for all $r \in [1,k+l]$.
  \item $\mathcal{P}^{0,1} \in \B^{0,1}$.
  \item $\mathcal{C}^{0,3} \in \B^{0,3}$.
  \item If $f \in \B^{k',l'}$, 
    $g_1,\dots,g_{k'} \in \B^{k,0}$, and 
    $h_1,\dots,h_{l'} \in \B^{k,l}$, then 
    $\mathcal{SUB}^{k,l}_{k',l'}[f,\vecg,\vech] \in \B^{k,l}$.
  \item If $g \in \B^{k,l}$, $h_0,h_1 \in \B^{k+1,l+1}$,
    then $\mathcal{PREC}^{k+1,l}[g,h_{1},h_{2}] \in \B^{k+1,l}$.
  \end{itemize}
The set of \emph{predicative recursive functions} is defined as $\B = \bigcup_{k,l} \B^{k,l}$.
\end{definition}

It follows from the definitions that for each $f \in B$, there exists a
unique predicative recursive function $f^{\B}$; the latter is called
the \emph{interpretation} of $f$ in $\B$.
For every number $m$
we define its \emph{numeral} $\overline{m} \in \T{B}{\VS}$ as follows:
$\overline{0} \defsym 0$, 
$\overline{\mathcal{S}_i(;m)} \defsym S_i(;m)$ for $i \in [0,1]$.
We write $\overline{\vecm}$ to denote a sequence of
numerals $\overline{m}_1,\dots,\overline{m}_k$.
Now the polytime computable functions $\PTIME$ can be defined as follows,
see~\cite{BellantoniCook:1992}:
\begin{equation*}
\PTIME = \bigcup_{k} \, \B^{k,0} \tpkt  
\end{equation*}
In~\cite{BeckmannWeiermann:1996} a clever \emph{feasible} term-rewriting characterisation $\RBp$ of 
the predicative recursive functions $\B$ is given. By Bellantoni's result this yields
a feasible term-rewriting characterisation of the class of polytime computable
functions $\PTIME$. The (infinite) TRS is given in Table~1.

\begin{table}[t]  \label{tab:predrec-trs}
\begin{center}
\noindent\fbox{\begin{minipage}{72ex}
\caption{A Feasible Term-Rewriting Characterisation of the Predicative Recursive Functions}
\vspace{-5ex}
\begin{align*}
  & O^{k,l}(\bfx;\bfa) \rew 0 \tkom && \textbf{[zero]} \\
  & U^{k,l}(x_{1},\ldots,x_{k};x_{k+1},\ldots,x_{k+l}) \rew x_{r} \tkom && \textbf{[projection]}\\
  & P^{0,1}(;0) \rew 0\tkom && \textbf{[predecessor]}\\
  & P^{0,1}(;S_i(;a)) \rew a \tkom \\
  & C^{0,3}(;0,a_{0},a_{1}) \rew a_{0} \tkom && \textbf{[conditional]}\\
  & C^{0,3}(;S_{i}(;a),a_{1},a_{0}) \rew a_{2-i} \tkom \\
  & \SUB^{k,l}[f,\bfg,\bfh](\bfx;\bfn)  \rew f(\bfg(\bfx;);\bfh(\bfx;\bfn)) \tkom && \textbf{[safe composition]}\\
  & \PREC^{k+1,l}[g,h_{1},h_{2}](0,\bfx;\bfn) \rew g(\bfx;\bfn) \tkom && \textbf{[predicative recursion}\\
  & \PREC^{k+1,l}[g,h_{1},h_{2}](S_{i}(;b),\bfx;\bfn) \rew  && \textbf{ on notation]} \\
  & \qquad \rew h_{i}(b,\bfx;\bfn,\PREC^{k+1,l}[g,h_{1},h_{2}](b,\bfx;\bfn)) \tpkt
\end{align*}
We use the following notation: $i \in [0,1]$ and $r \in [1,k+l]$.
\end{minipage}
}
\end{center}
\end{table}

The TRS $\RBp$ is terminating and confluent. 
Termination follows by the multiset path order.
Confluence is a consequence of the fact that $\RBp$ is orthogonal. 
Note the restriction in the rewrite rules for \emph{safe composition} and \emph{predicative recursion}. These rules only apply if all \emph{safe} arguments are 
numerals, i.e.\ in normal-form.
This peculiar restriction is necessary as the canonical term-rewriting characterisation $\RB$ of $\B$, admits exponential lower-bounds, hence $\RB$ is
\emph{non-feasible}, compare.~\cite{BeckmannWeiermann:1996}.

Let $\RS$ denote a TRS.
A \emph{derivation} is a sequence of terms $t_i$, $i \in \N$, such that for all
$i$, $t_i \rsrew{\RS} t_{i+1}$. The $(i+1)^{th}$ element of a 
sequence $a$ is denoted as $(a)_i$.
We write ${}\cons{}$ for the concatenation of sequences and define the
length $\len{a}$ of a sequence $a$ as usually.
We define a partial order $\subseteq$ on 
pairs of sequences. $a \subseteq b$, if $b$ is an \emph{extension} of $a$, 
i.e.\ $\len{a} \leq \len{b}$ and for all $i < \len{a}$ we have $(a)_i = (b)_i$.
A derivation $d$ with $(d)_0 = t$ is called 
\emph{derivation starting with $t$}.
The \emph{derivation tree} $\mathcal{T}_{\RS}(t)$ of $t$ is defined as the
structure $(T(t),\subseteq)$, where 
$T(t) \defsym \{ d | \text{$d$ is a derivation starting with $t$}\}$.
The root of $\mathcal{T}_{\RS}(t)$ is denoted by $t$ (instead of $(t)$).

We measure the \emph{complexity} or \emph{derivation length} 
of the computation of $f(\overline{\vecm})$ 
by the \emph{height} of $\mathcal{T}_{\RS}(f(\overline{\vecm}))$,
i.e., we define the \emph{derivation length function} 
$\Drf{\RS}{f} \colon \GTA{\Sigma} \to \N$:
 \begin{equation*}
\Drf{\RS}{f}(\overline{\vecm}) \defsym 
\max\{n \;|\; \exists \; t_0,\dots,t_n \in \GTA{\Sigma} 
\left( t_n \rsrewr{\RS} \dots \rsrewr{\RS} t_0 = f(\overline{\vecm}) \right)\} \tpkt
\end{equation*}

Based on these definitions we make
the notion of \emph{feasible} term-rewriting characterisation precise. 
  A term-rewriting characterisation $\RC$ of a function class $\C$ is called
\emph{feasible}, if for each $n$-ary function $f \in \C$, there exists a 
function symbol $g$ in the signature of $\RC$ such that 
$g(\overline{m}_1,\dots,\overline{m}_n)$ computes the value of 
$f(m_1,\dots,m_n)$ and $\Drf{\RC}{f}$ is bounded by a function from $\C$.
For the rewrite system $\RBp$ we have the following proposition.

\begin{proposition}\label{p:am2004}
  For every $f\in \B$, $\Drf{\RBp}{f}$ is bounded by a monotone polynomial 
in the length of the normal inputs. Specifically for each $f$ we can find a number $\ell(f)$ so that
$\Drf{\RBp}{f}(\overline{\vecm};\overline{\vecn}) \leq(2+\len{\vecm})^{\ell(f)}$,
where $\len{\bfm}$ denotes the sum of the length normal inputs $m_i$.
\end{proposition}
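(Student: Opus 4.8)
The plan is to prove the proposition by induction on the construction of $f \in \B$, establishing directly the stated bound $\Drf{\RBp}{f}(\overline{\vecm};\overline{\vecn}) \le (2+\len{\vecm})^{\ell(f)}$ with an explicit exponent $\ell(f)$ read off from the definition tree of $f$. The essential feature, already visible in the statement, is that the bound is \emph{uniform in the safe inputs} $\overline{\vecn}$: their lengths never enter. Since $\RBp$ is terminating (via the multiset path order), $\Drf{\RBp}{f}$ is well defined as the height of the derivation tree, and it suffices to bound the length of an arbitrary reduction issuing from $f(\overline{\vecm};\overline{\vecn})$. As an auxiliary I would first record a Bellantoni–Cook growth bound~\cite{BellantoniCook:1992}: for each $f\in\B$ there is a monotone polynomial $p_f$ with $\len{f^{\B}(\vecm;\vecn)} \le p_f(\len{\vecm}) + \max_j \len{n_j}$, by a routine structural induction (normal arguments contribute polynomially, safe ones only additively through the $\max$). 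This is kept separate because the facts interlock in only one direction: the length induction invokes the growth bound, not conversely. The base cases $O,U,P,C$ (and the non-reducing constructors $0,S_i$) are immediate — each contracts in a single step on numeral arguments — so a small constant exponent dominates, as $2+\len{\vecm}\ge 2$.

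For safe composition, a reduction from $\SUB[f,\vecg,\vech](\overline{\vecm};\overline{\vecn})$ contracts the composition rule and must then reduce the arguments $g_i(\overline{\vecm};)$ and $h_j(\overline{\vecm};\overline{\vecn})$ to numerals before the head $f$ can act, because the safe slots of $f$ must become numerals. Hence
\[
\Drf{\RBp}{\SUB}(\overline{\vecm};\overline{\vecn}) \le 1 + \sum_i \Drf{\RBp}{g_i}(\overline{\vecm};) + \sum_j \Drf{\RBp}{h_j}(\overline{\vecm};\overline{\vecn}) + \Drf{\RBp}{f}(\overline{p};\overline{q}),
\]
where $\overline{p}_i$, $\overline{q}_j$ are the computed values. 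The growth bound is needed exactly once, to control the \emph{normal} inputs $\overline{p}_i = g_i(\overline{\vecm};)$ fed to $f$: since $g_i \in \B^{k,0}$ we get $\len{p_i}\le p_{g_i}(\len{\vecm})$, so the induction hypothesis for $f$ applies with $2+\len{\overline{p}} \le (2+\len{\vecm})^{c}$ for a suitable constant $c$. Substituting the inductive bounds and absorbing the finite sum into a bump of the exponent yields $(2+\len{\vecm})^{\ell(\SUB)}$ for an $\ell(\SUB)$ fixed by the component exponents and the degrees of the $p_{g_i}$.

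The predicative-recursion case is the crux, and is where predicativity pays off. A reduction from $\PREC[g,h_1,h_2](a,\overline{\vecm};\overline{\vecn})$ unfolds the recursion rule once per bit of $a$, at most $\len{a}\le\len{\vecm}$ times, reaching the base instance $g(\overline{\vecm};\overline{\vecn})$; each unfolding applies some $h_i$ whose \emph{normal} slots are $b,\overline{\vecm}$ (with $\len{b}\le\len{a}$) and whose \emph{safe} slots carry $\overline{\vecn}$ together with the previously computed recursive value. By the induction hypothesis $\Drf{\RBp}{h_i}$ depends only on the lengths of its normal inputs, so each stage costs at most $(2+\len{\vecm})^{\ell(h_i)}$ no matter how large the safe recursive value has grown. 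Summing the $\le\len{\vecm}$ stages and the base cost gives $\len{\vecm}\cdot(2+\len{\vecm})^{\max_i\ell(h_i)} + (2+\len{\vecm})^{\ell(g)}$, absorbed into $(2+\len{\vecm})^{\ell(\PREC)}$ with $\ell(\PREC)=\max(\ell(h_1),\ell(h_2),\ell(g))+1$; the $+1$ records the single linear factor from the recursion length.

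The main obstacle — the one genuinely delicate point rather than bookkeeping — is justifying this stage-by-stage decomposition of the \emph{maximal} derivation against the nondeterminism of $\rsrew{\RBp}$. The danger is exactly the mechanism behind the exponential blow-up of the unrestricted system $\RB$: if an $h_i$ could duplicate its safe recursive argument into several positions before that argument were a numeral, the recursive computation would be replicated and the length would explode. What rescues the bound is precisely the restriction that the safe-composition and predicative-recursion rules fire only when all safe arguments are numerals; this forces each recursive value to be reduced to a numeral \emph{once}, before any $h_i$ can unfold and copy it, after which duplication of numerals is free. I would therefore first show (or cite from Beckmann–Weiermann~\cite{BeckmannWeiermann:1996}) that under this firing discipline every reduction sequence is majorised by one in which safe arguments are evaluated innermost, making the decomposition valid for the tree height; the remaining exponent arithmetic is then routine.
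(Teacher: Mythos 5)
Your overall strategy---structural induction on $f$, with the decisive observation that the numeral restriction on safe arguments makes the cost of each recursion stage depend only on the normal inputs---is the same idea the paper relies on. (The paper's own proof is essentially a pointer to~\cite{AraiMoser:2004}, recording only that the derivation trees $\mathcal{T}_{\RBp}(f(\bfm;\bfn))$ are isomorphic as the safe numerals vary.) Your treatment of predicative recursion is sound: the recursive call occupies a \emph{safe} slot of $h_i$, so $h_i$ cannot fire at the root, and in particular cannot copy the unevaluated recursive call, before that call has become a numeral.

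The gap is in the safe-composition case. You assert that after contracting the composition rule one ``must then reduce the arguments $g_i(\overline{\vecm};)$ and $h_j(\overline{\vecm};\overline{\vecn})$ to numerals before the head $f$ can act, because the safe slots of $f$ must become numerals.'' That is true only of the safe slots: the rules of $\RBp$ restrict \emph{safe} arguments to numerals and impose no condition whatsoever on normal arguments. Hence $f$ (and the heads it subsequently unfolds into) can fire while its normal arguments are still the unevaluated terms $g_i(\overline{\vecm};)$, and these terms are then \emph{duplicated}---for instance by a further $\SUB$ whose component functions receive the same normal slot several times, or by a $\PREC$ whose step contracts $\bfx$ into both the $h_i$-call and the recursive call at every unfolding. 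A maximal derivation reduces every copy separately, so your additive recurrence $1+\sum_i \Drf{\RBp}{g_i}+\sum_j \Drf{\RBp}{h_j}+\Drf{\RBp}{f}(\overline{p};\overline{q})$ undercounts, and the induction hypothesis for $f$---stated only for numeral inputs---does not apply to $f(\bfg(\overline{\vecm};);\overline{q})$ at all. The same defect affects your proposed majorisation lemma: rearranging \emph{safe} arguments to be evaluated first is harmless, but for \emph{normal} arguments the innermost rearrangement strictly shortens derivations, so arbitrary reductions are not majorised by it and the lemma cannot bound the tree height. The proposition itself survives, because for fixed $f$ the number of copies of each $g_i(\overline{\vecm};)$ produced in any derivation is polynomially bounded in $\len{\vecm}$; but to close the argument you must either strengthen the induction hypothesis to cover non-numeral normal arguments (bounding the derivation length of $f(t_1,\dots,t_k;\overline{\vecn})$ in terms of the derivation lengths of the $t_i$ and the lengths of their values), or argue directly with the derivation-tree isomorphism as in~\cite{AraiMoser:2004}. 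This is precisely the technical content your sketch elides.
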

\begin{proof}
  See~\cite{AraiMoser:2004} for a proof, essentially we employ the observation that the 
derivation trees $\mathcal{T}_{\RBp}(f(\bfm;\bfn))$ are \emph{isomorphic} no matter 
how the safe input numerals $\bfn$ vary, to drop the dependency on the length of the
normal inputs.
\end{proof}

\section{A Path Ordering for $\PTIME$ } \label{POP}

To extend the above results and to facilitate the study of the 
polytime computable functions in a term-rewriting framework, 
we introduce in this section a new
\emph{path order for $\PTIME$}, which is a \emph{miniaturisation} of the 
recursive path order, cf.~\cite{BaaderNipkow:1998}, see also~\cite{Buchholz:1995}.

In the definition we make use of an auxiliary varyadic 
function symbol `$\seq$' of arbitrary, but finite arity, 
to denote sequences $s_0,\ldots,s_n$ of terms. 
Instead of $\seq(s_0,\ldots,s_n)$ we write $(s_0,\ldots,s_n)$. 
We write $a \cons b$ for sequences $a=(s_0,\ldots,s_n)$, 
$b=(s_{n+1},\ldots,s_{n+m})$ to denote the concatenation 
$(s_0,\dots,s_{n+m})$ of $a$ and $b$.

Let $\Sigma$ be a signature. 
We write $\Tseq{\Sigma}{\VS}$ to denote the set of all finite sequences 
of terms in $\T{\Sigma}{\VS}$.
To ensure that $\T{\Sigma}{\VS} \subset \Tseq{\Sigma}{\VS}$, any term 
is identified with the sequence $\seq(t) = (t)$. 
We denote sequences by $a,b,c$, both possible extended with subscripts.
Sometimes we write $fa$ as abbreviations of $f(t_0,\dots,t_n)$, if 
$a=(t_0,\dots,t_n)$.

We suppose a partial well-founded relation on $S$, 
the \emph{precedence}, denoted as $<$. We write $f \sim g$ if 
$(f \lesssim g) \land (g \lesssim f)$ and
we write $f > g$ and $g < f$ interchangeably.
Further, we suppose that the signature $\Sigma$ contains two unary
symbols $S_0,S_1$ of lowest
rank in the precedence. I.e.\ $\Sigma = \{S_0,S_1\} \cup \Sigma'$ and
$S_0 \sim S_1$ and for all $f \in \Sigma'$, $S_0,S_1 < f$. 
Moreover, we define $0 \defsym ()$.
For every number $m$ we define its \emph{numeral}
$\overline{m} \in \T{\Sigma}{\VS}$ as follows: $\overline{0} \defsym ()$;
$\overline{\mathcal{S}_i(m)} \defsym S_i(\overline{m})$ for 
$i \in [0,1]$.

The definition of the \emph{path order for $\PTIME$} (\emph{POP} for short) 
$\lpop$ (induced by $<$) is based on an auxiliary
order $\sqsubset$. The separation in two orders is necessary to break
the strength of the recursive path order that induces primitive recursive
derivation length, cf.~\cite{Hofbauer:1992}.

\begin{definition} \label{d:pop0}
Inductive definition of $\sqsubseteq$ induced by $<$.
\begin{enumerate}
\item \label{en:pop:sbs} 
  $\exi j\in [1,n] \, 
  (s\sqsubseteq t_{j}) \folgt s\sqsubset f(t_{1},\ldots,t_{n}) \tkom$
  \smallskip
\item $t = f(t_{1},\ldots,t_{n}) \spand 
s = g(s_{1},\ldots,s_{m}) \ \text{with} \ g < f
\spand \fal i\in [1,m] \, (s_{i}\sqsubset t) \\
\folgt s\sqsubset t \tpkt$
\end{enumerate}
\end{definition}

\begin{definition} \label{d:pop1}
Inductive definition of $\lpop$ induced by $<$; $\lpop$ is based on 
$\sqsubset$.
\begin{enumerate}
\item 
  $s\sqsubset t \folgt s \lpop t \tkom$
  \smallskip
\item 
  $\exi j \in [1,n] \; (s \leqpop t_{j}) \folgt s \lpop f(t_{1},\ldots,t_{n}) 
  \spand s \lpop (t_1,\dots,t_n) \tkom$
  \smallskip
\item \label{en:pop:inaccess} 
  $t = f(t_{1},\ldots,t_{n}) 
  \spand \left(  m=0 \; \text{or} \; ( \exists i_0 \; (
  \fal i\neq i_0 \; (s_{i}\sqsubset t) \spand 
  s_{i} \lpop t) \right) \\
  \folgt (s_{1},\ldots,s_{m})\lpop t \tkom$
  \smallskip
\item \label{en:pop:monoton} 
  $t = f(t_{0},\ldots,t_{n}) \spand s = g(s_{0},\ldots,s_{m}) \ \text{with} \ f \sim g 
  \spand (s_{0},\ldots,s_{m}) \lpop (t_{0},\ldots,t_{n})\\
  \folgt s \lpop t \tkom$
  \smallskip
\item \label{en:pop:mult} 
  $a \approx a_{0} \cons \cdots \cons a_{n}\spand \fal i\leq n \; (a_{i}\leqpop b_{i}) \spand 
  \exi i\leq n \, (a_{i} \lpop b_{i}) \\
  \folgt a \lpop (b_{0},\ldots,b_{n}) \; \text{if} \ n\geq 1 \tkom$

 \medskip
    $a \approx a_{0} \cons \cdots \cons a_{n}$ denotes the fact that the sequence $a$ of 
    terms is obtained from the concatenated 
    $a_{0} \cons \cdots \cons a_{n}$ by permutation.
\end{enumerate}
\end{definition}
Note that due to rule~\ref{en:pop:inaccess} $() \lpop a$ for any sequence $a \in \Tseq{\Sigma}{\VS}$.
Further, we write $s \gpop t$ for $t \lpop s$.
It is not difficult to argue that $\lpop$ is a reduction order. 
A number of relations are missing; we mention only the following: 
\begin{itemize}
\item $t = f(t_{1},\ldots,t_{n}) \spand 
  s = g(s_{1},\ldots,s_{m}) \ \text{with} \ g < f
  \spand \fal i\in [1,m] \, (s_{i} \lpop t) \folgt s\lpop t$.
\end{itemize}
We indicate the reasons for the omission of this clause.

\begin{example}
Consider the following TRS, where $\Sigma$ contains 
additionally the symbols $a,g,h,f$ with
precedence $a,h < f$, $g < h$. 
\begin{equation*}
  f(0) \rew a \qquad f(S_i(x)) \rew h(f(x)) \qquad
  h(x) \rew g(x,x) \tpkt
\end{equation*}
\end{example}

It is easy to see that $\lpop$ cannot handle the TRS in the
example, but would if rule above is included. However,
note that the TRS admits an \emph{exponential lower-bound} on 
the derivation length function. 

We introduce suitable \emph{approximations} $\prec_{k}$ of $\lpop$. 
\begin{definition} \label{d:approx:sq}
Inductive definition of $\sqsubset_{k}^{l}$ induced by $<$;
we write $\sqsubset_{k}$ to abbreviate $\sqsubset_{k}^{k}$.
\begin{enumerate}
\item 
  $\exi j\in[1,n] \, (s\sqsubseteq^{l}_{k} t_{j}) \folgt s\sqsubset^{l}_{k} f(t_{0},\ldots,t_{n}) \tkom$
  \smallskip
\item \label{en:sq:ii}
  $t = f(t_{0},\ldots,t_{n}) \spand 
  s= g(s_{0},\ldots,s_{m}) \ \text{with} \ g < f \spand m<k
  \spand \fal i \, (s_{i}\sqsubset^{l}_{k} t) \\
  \folgt s\sqsubset^{l+1}_{k} t \tpkt$
\end{enumerate}
\end{definition}

\begin{definition} \label{d:approx}
Inductive definition of $\prec_k$ induced by $<$; $\prec_k$ is based on $\sqsubset_k$.
\begin{enumerate}
\item \label{en:a:i}
  $s\sqsubset_{k} t \folgt s\prec_{k} t \tkom$
  \smallskip
\item \label{en:a:ii}
  $\exi j \in [1,n] \; ( s\preceq_{k} t_{j}) \folgt s\prec_{k} f(t_{1},\ldots,t_{n}) \tkom$
  \smallskip
\item \label{en:a:iii}
  $t = f(t_{1},\ldots,t_{n}) 
  \spand \left( 
    m=0 \; \text{or} \; \exi i_{0} \in [1,m] \, (\fal i\neq i_{0} \,
  (s_{i}\sqsubset_{k} t) \spand s_{i_{0}}\prec_{k} t) \right) \\
  \spand m<k 
  \folgt (s_{1},\ldots,s_{m})\prec_{k} t \tkom$
  \smallskip
\item \label{en:a:iv}
  $t = f(t_{0},\ldots,t_{n}) \spand s = g(s_{0},\ldots,s_{m}) \ \text{with} \ f \sim g 
  \spand 
  (s_{0},\ldots,s_{m}) \prec_{k} \\ 
  (t_{0},\ldots,t_{n}) \spand m<\max\{k,n\} \folgt s\prec_{k} t \tkom$
  \smallskip
\item \label{en:a:v}
  $a \approx a_{0} \cons \dots \cons a_{n} \spand \fal i\leq n \; (a_{i}\preceq_{k} b_{i}) \spand 
  \exi i\leq n \, (a_{i}\prec_{k} b_{i}) \folgt 
  a \prec_{k} (b_{0},\ldots,b_{n}) \ \text{if} \ n\geq 1 \tpkt$
\end{enumerate}
\end{definition}

In the following we prove that if for a finite rewrite system $\RS$,
$\RS \subseteq \lpop$, then it even holds that $\rsrew{\RS} \subseteq \prec_k$,
where $k$ depends on $\RS$ only. 

\begin{lemma} \label{l:approx:1}
  If $s \prec_{k} t$ and $k<l$, then $s \prec_{l} t$.
\end{lemma}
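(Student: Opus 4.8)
The plan is to prove Lemma~\ref{l:approx:1} (in the slightly more general monotone form, for $k \le l$) by induction on the derivation witnessing $s \prec_k t$, following the clauses of Definition~\ref{d:approx}. The guiding observation is that every place where the index $k$ enters Definitions~\ref{d:approx:sq} and~\ref{d:approx} is a side condition bounding an arity: $m < k$ in clause~\ref{en:sq:ii} and in clause~\ref{en:a:iii}, and $m < \max\{k,n\}$ in clause~\ref{en:a:iv}. Each of these is merely \emph{relaxed} when $k$ grows, since $m < k$ implies $m < l$ and $m < \max\{k,n\}$ implies $m < \max\{l,n\}$ whenever $k \le l$. Hence, once the recursive premises have been lifted by the induction hypothesis, re-applying the same defining clause at index $l$ is immediate.

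Because $\prec_k$ is layered over the auxiliary order $\sqsubset_k$ (clauses~\ref{en:a:i} and~\ref{en:a:iii} invoke it), I would first settle a companion monotonicity statement for $\sqsubset_k$. The subtlety is that $\sqsubset_k$ abbreviates the two-parameter relation $\sqsubset^k_k$, and clause~\ref{en:sq:ii} produces a conclusion whose premises carry a strictly smaller superscript; a naive induction on $\sqsubset_k$ alone therefore does not close. I would strengthen to the claim that \emph{for all $p \le p'$ and all $q \le q'$, $s \sqsubset^q_p t$ implies $s \sqsubset^{q'}_{p'} t$} (and correspondingly for the reflexive closure $\sqsubseteq^q_p = {\sqsubset^q_p} \cup {=}$), proved by induction on the derivation of $s \sqsubset^q_p t$. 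If the last step is the first clause of Definition~\ref{d:approx:sq}, the superscript is unchanged and the subterm premise $s \sqsubseteq^q_p t_j$ lifts to $s \sqsubseteq^{q'}_{p'} t_j$ by the induction hypothesis (splitting off the trivial equality case). If the last step is clause~\ref{en:sq:ii}, then $q = q_0+1$, the bound $m < p \le p'$ still holds, and, choosing the premises at superscript $q' - 1 \ge q_0$, the induction hypothesis lifts each $s_i \sqsubset^{q_0}_p t$ to $s_i \sqsubset^{q'-1}_{p'} t$, so clause~\ref{en:sq:ii} reproduces $s \sqsubset^{q'}_{p'} t$. Specialising this with source indices $(p,q) = (k,k)$ and target indices $(p',q') = (l,l)$ yields exactly $s \sqsubset_k t \Rightarrow s \sqsubset_l t$.

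With this in hand I would run the main induction over Definition~\ref{d:approx}. Clause~\ref{en:a:i} follows directly from the companion lemma. Clauses~\ref{en:a:ii} and~\ref{en:a:v} carry no $k$-dependent arity condition, so they follow by applying the induction hypothesis to their premises, again treating $\preceq_k = {\prec_k} \cup {=}$ by isolating the equality case. Clause~\ref{en:a:iii} uses the companion lemma to lift each $s_i \sqsubset_k t$ to $s_i \sqsubset_l t$ and the induction hypothesis to lift $s_{i_0} \prec_k t$ to $s_{i_0} \prec_l t$, while the side condition $m < k \le l$ persists. Clause~\ref{en:a:iv} lifts the sequence comparison $(s_0,\dots,s_m) \prec_k (t_0,\dots,t_n)$ to $\prec_l$ by the induction hypothesis, with $m < \max\{k,n\} \le \max\{l,n\}$ preserved. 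In each case the chosen clause then delivers $s \prec_l t$.

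The main obstacle is entirely the bookkeeping of the two indices on $\sqsubset^q_p$: identifying that monotonicity must be stated jointly in subscript and superscript, and proved by induction on the derivation rather than on the numerical index, is the only genuinely non-mechanical step, since clause~\ref{en:sq:ii} is the unique rule that alters the superscript. Once that generalisation is fixed, all remaining cases reduce to observing that the arity side conditions survive the passage from $k$ to $l$ and re-invoking the same defining clause.
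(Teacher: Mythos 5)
The paper states this lemma without proof, so there is no official argument to compare against; your proof is correct and supplies exactly the routine induction over the defining clauses that the authors evidently considered too standard to record. In particular, you correctly isolated the one non-mechanical point --- that monotonicity of $\sqsubset_k = \sqsubset_k^k$ must be strengthened to simultaneous monotonicity in the subscript and the superscript (since the second clause of the auxiliary definition is the only rule that changes the superscript) --- and the remaining cases indeed reduce to the observation that the arity side conditions $m<k$ and $m<\max\{k,n\}$ are only relaxed when $k$ increases.
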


We introduce the auxiliary measure $\bN{.} \colon \Tseq{\Sigma}{\VS} \to \N$:
(i) $\bN{x} \defsym 1$, $x \in \VS$,
(ii) $\bN{(s_1,\dots,s_n)} \defsym \max \{n,\bN{s_1},\dots,\bN{s_n}\}$,
(iii) $\bN{fa} \defsym \bN{a} + 1$.
\begin{lemma} \label{l:approx:2}
  If $s \lpop t$, then for any substitution $\sigma$, 
  $s\sigma \prec_{\bN{s}} t\sigma$.
\end{lemma}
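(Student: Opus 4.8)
The plan is to prove the statement by a simultaneous induction on the derivations of $s \sqsubset t$ and of $s \lpop t$, matching each clause of Definitions~\ref{d:pop0} and~\ref{d:pop1} with its bounded counterpart in Definitions~\ref{d:approx:sq} and~\ref{d:approx}. Alongside Lemma~\ref{l:approx:2} I would carry a companion claim for the auxiliary order, namely that $s \sqsubset t$ implies $s\sigma \sqsubset_{\bN{s}}^{\bN{s}} t\sigma$ (hence $s\sigma \sqsubset_{\bN{s}} t\sigma$). First I would record the monotonicity facts that make the indices line up: Lemma~\ref{l:approx:1} for $\prec$, together with its analogue for $\sqsubset_{k}^{l}$, which is increasing in \emph{both} indices $k$ and $l$. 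These let me raise a relation obtained from a smaller subterm up to the index $\bN{s}$ demanded by the conclusion.

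For the companion claim the only interesting clause is the precedence-descent rule~\ref{en:sq:ii}: if $s = g(s_1,\dots,s_m) \sqsubset t$ with $g < f$ and each $s_i \sqsubset t$, then $m < \bN{s}$ because $\bN{g(s_1,\dots,s_m)} = \bN{(s_1,\dots,s_m)}+1 > m$, while each $\bN{s_i} \leq \bN{s}-1$ supplies enough superscript budget. Applying the induction hypothesis, lifting both indices to $(\bN{s},\bN{s}-1)$ by monotonicity, and firing clause~\ref{en:sq:ii} then yields $s\sigma \sqsubset_{\bN{s}}^{\bN{s}} t\sigma$; the subterm clause is immediate.

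For $\lpop$ I would treat the clauses in turn: clause~1 reduces to the companion claim and Definition~\ref{d:approx}.\ref{en:a:i}; the subterm clause~2 goes to~\ref{en:a:ii} (and, for its sequence conclusion $s \lpop (t_1,\dots,t_n)$, to~\ref{en:a:v} after padding the partition with empty sequences, using $() \prec_{k} b$ for any term $b$ and $k \geq 1$); the lexicographic clause~\ref{en:pop:monoton} goes to~\ref{en:a:iv}; the ``inaccessible'' clause~\ref{en:pop:inaccess} goes to~\ref{en:a:iii}; and the multiset clause~\ref{en:pop:mult} goes to~\ref{en:a:v} via the same permutation $a \approx a_0 \cons \cdots \cons a_n$. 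The decisive observation throughout is that whenever a term $s = f(s_0,\dots,s_m)$ is compared through its argument sequence, the inherited index is $\bN{s} = \bN{(s_0,\dots,s_m)}+1$, one more than the measure of the sequence itself; this extra unit is precisely what the strict width side-conditions $m < k$ in~\ref{en:a:iii} and $m < \max\{k,n\}$ in~\ref{en:a:iv} require.

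The main obstacle is making the induction hypothesis strong enough for \emph{sequences}. The bound $\prec_{\bN{s}}$ is too tight for a bare sequence: e.g.\ $(c,c,c) \lpop f(c)$ holds by~\ref{en:pop:inaccess} (each $c$ is a subterm of $f(c)$), yet $\bN{(c,c,c)} = 3$ and clause~\ref{en:a:iii} needs $3 < k$, so $(c,c,c) \prec_{3} f(c)$ is not derivable. The resolution is that such a sequence never has to be bounded in isolation: it always occurs beneath a head symbol (through~\ref{en:pop:monoton}) or as a block of a partition (through~\ref{en:pop:mult}), and in both situations the inherited index already exceeds its length. I would therefore phrase the invariant so that a proper sequence $a$ is sent to $\prec_{\bN{a}+1}$ while a term $s$ is sent to $\prec_{\bN{s}}$ — the two being consistent because $\bN{f(a)} = \bN{a}+1$ — and check that every clause propagates the correct one, the term case specialising to the statement of Lemma~\ref{l:approx:2}. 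The remaining fiddly points, the bookkeeping of the $\sqsubset_{k}^{l}$ superscripts and the matching of the permutation/splitting of~\ref{en:pop:mult} against~\ref{en:a:v}, are routine once this index invariant is in place.
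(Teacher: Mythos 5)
The paper states Lemma~\ref{l:approx:2} without proof, so there is no official argument to measure you against. Judged on its own terms, your plan --- simultaneous induction on the derivations of $\sqsubset$ and $\lpop$, a companion claim $s\sigma \sqsubset^{\bN{s}}_{\bN{s}} t\sigma$, monotonicity of $\sqsubset^l_k$ and $\prec_k$ in their indices, and the arithmetic $\bN{f(s_1,\dots,s_m)} \geq m+1$ which supplies exactly the slack that the width conditions $m<k$ and $m<\max\{k,n\}$ demand --- is the natural and essentially correct one. Your strengthened invariant ($\prec_{\bN{a}+1}$ for proper sequences, $\prec_{\bN{s}}$ for terms, consistent via $\bN{f(a)}=\bN{a}+1$) correctly diagnoses the one genuine subtlety of the induction, and the bookkeeping you sketch for clauses~\ref{en:pop:inaccess}, \ref{en:pop:monoton} and~\ref{en:pop:mult} checks out.

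There is, however, one concretely false step: the claim that $() \prec_{k} b$ holds for every term $b$ and $k\geqslant 1$. Clause~\ref{en:a:iii} of Definition~\ref{d:approx} (like clause~\ref{en:pop:inaccess} of Definition~\ref{d:pop1}) derives $() \prec_k t$ only when $t$ has the shape $f(t_1,\dots,t_n)$; it yields nothing when $t$ is a variable, and since $\sigma$ is arbitrary the padded slots $t_i\sigma$ in your simulation of the sequence conclusion of Definition~\ref{d:pop1}(2) may remain variables. This is not cosmetic: that sequence conclusion $s \lpop (t_1,\dots,t_n)$ has no counterpart in Definition~\ref{d:approx}(\ref{en:a:ii}), and the only $\prec_k$-clause with a sequence target is~\ref{en:a:v}, whose premises force $() \preceq_k t_i\sigma$ for every padded slot. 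Concretely, for $g \sim f$ one has $g(c) \lpop f(c,y)$ via $(c) \lpop (c,y)$, yet $g(c) \prec_k f(c,y)$ is underivable for \emph{every} $k$, since $(c) \prec_k (c,y)$ would require either $() \preceq_k y$ or $c \preceq_k y$. So under a literal reading the lemma fails for substitutions that leave a variable in the right-hand side; you should either restrict to ground substitutions (which is all that Lemma~\ref{l:approx:4} and Theorem~\ref{th:A} ever use, derivation lengths being measured on ground terms) or record explicitly that Definition~\ref{d:approx}(\ref{en:a:ii}) must be read as also yielding $s \prec_k (t_1,\dots,t_n)$, mirroring the unbounded order. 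With that one repair, your argument goes through.
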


\begin{lemma} \label{l:approx:3}
  If $t= f(t_1,\dots,v,\dots,t_n)$, $s=f(t_1,\dots,u,\dots,t_n)$ with 
  $u \prec_{k} v$, where
  $k \geq \max \{\ar(f) \colon f\in \Sigma\}$, then $s \prec_{k} t$. 
\end{lemma}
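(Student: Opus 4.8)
The plan is to read $s\prec_k t$ off the defining clauses of Definition~\ref{d:approx} directly; no induction is needed, because the relation was engineered to be a congruence in each argument, and the required monotonicity is exactly what clauses~(iv) and~(v) encode. Since $s$ and $t$ share the root symbol $f$, we have $f\sim f$, so the natural route is clause~(iv): it reduces the goal $s\prec_k t$ to comparing the two argument sequences, namely establishing $(t_1,\dots,u,\dots,t_n)\prec_k(t_1,\dots,v,\dots,t_n)$, subject to the side condition on the number of arguments.

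First I would clear the side condition. Both $s$ and $t$ are headed by $f$, hence have the same number of arguments $\ar(f)$, so the length restriction imposed by clause~(iv) amounts to $\ar(f)-1<\max\{k,\ar(f)-1\}$. This holds precisely when $k\geq\ar(f)$, which the hypothesis delivers: $k\geq\max\{\ar(g)\mid g\in\Sigma\}\geq\ar(f)$, whence $\ar(f)-1<\ar(f)\leq k$.

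The core step is the sequence comparison, which I would obtain from clause~(v), the multiset clause. Decompose the target sequence into its singleton components $b_i$ and the source sequence into the matching singletons $a_i$, taking the identity permutation: leave each unchanged argument as $(t_i)$ and replace the distinguished $v$ by $(u)$. At every unchanged position $a_i=(t_i)=b_i$, so $a_i\preceq_k b_i$ by reflexivity of $\preceq_k$; at the distinguished position the hypothesis $u\prec_k v$ gives $(u)\prec_k v$, using the paper's identification of a term with its singleton sequence. This simultaneously supplies the componentwise $\preceq_k$ requirement and the single strictly decreasing component demanded by clause~(v), yielding $(t_1,\dots,u,\dots,t_n)\prec_k(t_1,\dots,v,\dots,t_n)$. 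Feeding this back through clause~(iv) gives $s\prec_k t$.

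The one genuine wrinkle, which I expect to be the main obstacle, is the case where $f$ is unary: then the target argument sequence $(v)$ is a single element and clause~(v) is unavailable, since it insists on a target of length at least two ($n\geq1$). This is resolved by the very identification used above: for unary $f$ the two argument sequences are the singletons $(u)$ and $(v)$, identified with the terms $u$ and $v$, so the needed $(u)\prec_k(v)$ is \emph{literally} the hypothesis $u\prec_k v$, and clause~(iv) applies directly, its side condition reading $0<\max\{k,0\}=k$, which holds because $k\geq\ar(f)=1$.
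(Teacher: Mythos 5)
The paper states Lemma~\ref{l:approx:3} without giving a proof, so there is nothing to compare against; judged on its own, your argument is correct. Reading the claim off clause~(iv) of Definition~\ref{d:approx} (with $f\sim f$), discharging the length side condition via $k\geq\ar(f)$, establishing the argument-sequence comparison by the multiset clause~(v) with the identity decomposition into singletons, and handling the unary case separately through the paper's identification of a term with its singleton sequence, is exactly the intended reading of the definition and contains no gaps.
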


Recall that $\lpop$ is
a reduction order. Hence the assumption $\RS \subseteq \lpop$ implies
$\rsrew{\RS} \subseteq \lpop$.
\begin{lemma} \label{l:approx:4}
  Let $k = \max \{ \max\{ \norm{r} | (l \rew r) \in \RS \}, 
  \max \{\ar(f) | f\in S\} \}$. Then $t \rsrew{\RS} s$ implies $s \prec_{k} t$.
\end{lemma}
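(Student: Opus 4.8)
The plan is to reduce the statement to the two monotonicity lemmas already proved, by splitting an arbitrary rewrite step into its root contraction and the surrounding context. Concretely, a step $t \rsrew{\RS} s$ factors as $t = C[l\sigma]$ and $s = C[r\sigma]$ for some rule $l \rew r \in \RS$, some substitution $\sigma$, and some context $C$ over $\Sigma$. First I would establish $r\sigma \prec_{k} l\sigma$ for the contracted redex, and then propagate this inequality outward through $C$ by induction on the depth of the redex position, each layer being handled by Lemma~\ref{l:approx:3}.

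For the root step, note that since $\RS$ is compatible with the reduction order $\lpop$, the applied rule satisfies $r \lpop l$. Lemma~\ref{l:approx:2} then yields $r\sigma \prec_{\bN{r}} l\sigma$. To replace the index $\bN{r}$ by the uniform bound $k$, I would prove the elementary estimate $\bN{r} \leq \norm{r}$ by induction on $r$: for a variable both sides equal $1$, and for $r = f(r_{1},\ldots,r_{n})$ one has $\bN{r} = \max\{n,\bN{r_{1}},\ldots,\bN{r_{n}}\}+1 \leq 1 + \sum_{i}\norm{r_{i}} = \norm{r}$, using $n \leq \sum_{i}\norm{r_{i}}$ and the induction hypothesis. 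By the definition of $k$ we have $k \geq \norm{r} \geq \bN{r}$, so either $\bN{r} = k$ and the relation is already in the desired form, or $\bN{r} < k$ and Lemma~\ref{l:approx:1} upgrades it; in both cases $r\sigma \prec_{k} l\sigma$.

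It remains to lift this to $s \prec_{k} t$, which I would do by induction on $C$. If $C$ is the empty context (the hole), then $s = r\sigma \prec_{k} l\sigma = t$ is exactly what was shown. If $C = f(u_{1},\ldots,C',\ldots,u_{n})$ with the hole in argument $i$, then the induction hypothesis supplies $C'[r\sigma] \prec_{k} C'[l\sigma]$, and Lemma~\ref{l:approx:3}, applied with $u = C'[r\sigma]$ and $v = C'[l\sigma]$, produces $s \prec_{k} t$; its side condition $k \geq \max\{\ar(f) \mid f \in \Sigma\}$ holds by the definition of $k$.

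The step I expect to be delicate is the bookkeeping of the approximation index, not any single computation. The approximations $\prec_{k}$ are not transparently closed under increasing $k$ inside the congruence step, so the argument must fix one $k$ that \emph{simultaneously} dominates every right-hand side size (needed to absorb $\bN{r}$ via Lemmas~\ref{l:approx:2} and~\ref{l:approx:1}) and every arity in the signature (needed at each context layer for Lemma~\ref{l:approx:3}). The two arguments of the maximum defining $k$ are tailored precisely to these two uses, so once the estimate $\bN{r} \leq \norm{r}$ is in place the two remaining inductions are routine.
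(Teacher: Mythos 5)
Your proposal is correct and assembles the argument exactly as the paper intends: the paper states Lemma~\ref{l:approx:4} without proof, immediately after Lemmas~\ref{l:approx:1}--\ref{l:approx:3}, and your decomposition into a root step (handled by Lemma~\ref{l:approx:2} together with the estimate $\bN{r} \leq \norm{r}$ and the upgrade via Lemma~\ref{l:approx:1}) followed by an induction on the surrounding context using Lemma~\ref{l:approx:3} is the evidently intended assembly of those three lemmas. You also correctly identify the one delicate point, namely that applying Lemma~\ref{l:approx:2} to the whole step $s \lpop t$ would only yield the index $\bN{s}$, which is not uniformly bounded, so the split into redex and context is genuinely necessary.
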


We set
\begin{eqnarray*}
  \Slow{k}(\sigma) &  \defsym  & \max\{n\in \N \mid \exists (a_0,\ldots,a_{n}) \; \left(a_{n}\prec_{k} \cdots \prec_{k} a_{0} = a \right)\} \tkom \\
F_{k,p}(n) & \defsym & \max \{ \Slow{k}(fa) \colon \rk(f) = p \spand \Slow{k}(a) \leq n \} \tkom
\end{eqnarray*}
where $\rk(f)\colon \Sigma \to \N$ is defined inductively:
$\rk(f) \defsym \max \{\rk(g) + 1 \colon g \in \Sigma \land g \prec f\}$.
We collect some properties of the function $\Slow{k}$ in the next lemma.
\begin{lemma} \label{l:pointwise} 
  \begin{enumerate}
  \item \label{en:pw:i}  $\Slow{k}((s_0,\ldots,s_n)) = \sum_{i=0}^{n} \Slow{k}(a_i)$.
  \item \label{en:pw:iii} $\Slow{k}(\overline{m}) = \len{m}$ for any natural number $m$.
  \end{enumerate}
\end{lemma}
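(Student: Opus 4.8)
The plan is to establish both identities by well-founded induction, treating the additivity statement~\ref{en:pw:i} first since~\ref{en:pw:iii} will rely on it. The decisive structural observation for~\ref{en:pw:i} is that, among the clauses of Definition~\ref{d:approx}, only clause~\ref{en:a:v} can carry a sequence of length $\geq 2$ on its right-hand side; hence every $\prec_k$-step below $(s_0,\ldots,s_n)$ must replace it by a permutation of a concatenation $b_0 \cons \cdots \cons b_n$ with $b_i \preceq_k s_i$ for all $i$ and $b_j \prec_k s_j$ for at least one $j$. This reduces~\ref{en:pw:i} to the classical fact that heights are additive under a componentwise (multiset-style) order. I would first record an auxiliary monotonicity property — if $u \preceq_k v$ then $\Slow{k}(u) \leq \Slow{k}(v)$, with strict inequality when $u \prec_k v$ — proved by induction along $\prec_k$, drawing on the approximation lemmas~\ref{l:approx:1}–\ref{l:approx:3}.

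For the lower bound $\Slow{k}((s_0,\ldots,s_n)) \geq \sum_i \Slow{k}(s_i)$ I would exhibit an explicit chain: descend the first coordinate through its own maximal $\prec_k$-chain while keeping the others fixed (each step being an instance of clause~\ref{en:a:v} with reflexivity in the remaining slots), then the second, and so on, yielding a descending sequence of length $\sum_i \Slow{k}(s_i)$. For the matching upper bound I would argue by induction on $\sum_i \Slow{k}(s_i)$: any first step passes to some $b \approx b_0 \cons \cdots \cons b_n$ as above, and by the monotonicity property together with the induction hypothesis applied to $b$ one obtains $\Slow{k}(b) = \sum_i \Slow{k}(b_i) \leq \sum_i \Slow{k}(s_i) - 1$, whence $\Slow{k}((s_0,\ldots,s_n)) = 1 + \max_b \Slow{k}(b) \leq \sum_i \Slow{k}(s_i)$.

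Statement~\ref{en:pw:iii} I would prove by induction on $m$ along the numeral representation $\overline{0} = ()$, $\overline{\mathcal{S}_i(m)} = S_i(\overline{m})$. The base case is immediate, since $()$ has no $\prec_k$-predecessor, so $\Slow{k}(()) = 0 = \len{0}$. For the step it suffices to show $\Slow{k}(S_i(\overline{m})) = \Slow{k}(\overline{m}) + 1$. The lower bound is clause~\ref{en:a:ii}: from $\overline{m} \preceq_k \overline{m}$ we get $\overline{m} \prec_k S_i(\overline{m})$, so prepending $S_i(\overline{m})$ to a maximal chain of $\overline{m}$ gives one of length $\Slow{k}(\overline{m}) + 1$. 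The upper bound reduces to showing that every $s \prec_k S_i(\overline{m})$ satisfies $\Slow{k}(s) \leq \Slow{k}(\overline{m})$, which I would prove by classifying $s$ according to the clause deriving $s \prec_k S_i(\overline{m})$: if $s \preceq_k \overline{m}$ (clauses~\ref{en:a:i},~\ref{en:a:ii}) the monotonicity property applies; if $s = S_j(u)$ with $u \prec_k \overline{m}$ (clause~\ref{en:a:iv}) the induction hypothesis bounds $\Slow{k}(u)$ and the step-identity itself closes the case; and if $s$ is a proper sequence (clauses~\ref{en:a:iii},~\ref{en:a:v}) I would invoke part~\ref{en:pw:i}.

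The hard part will be exactly this last sequence case. Here I expect to need the full force of the POP design: because $S_0,S_1$ have lowest rank, the auxiliary relation $\sqsubset_k$ collapses to the subterm relation on numerals, and the side-conditions ($m < k$ in clause~\ref{en:a:iii}, together with the componentwise structure of clause~\ref{en:a:v}) must be combined with the additivity of part~\ref{en:pw:i} to keep the measure of such predecessors from exceeding $\len{m}$. To make the induction go through cleanly I anticipate strengthening the hypothesis so that the bound $\Slow{k}(t) \leq \len{m}$ is asserted for \emph{every} term $t$ lying $\prec_k$-below $\overline{m}$, not merely for the numerals; verifying this invariant — and checking that the subterm-and-length constraints genuinely prevent any duplication-induced blow-up in the additive measure — is where the argument is most delicate.
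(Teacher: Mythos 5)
The paper states Lemma~\ref{l:pointwise} without proof, so there is nothing of the authors' to compare your argument against; I can only judge it on its own merits. Your plan for part~\ref{en:pw:i} is sound: the only descents from a sequence of length at least two are instances of Definition~\ref{d:approx}(\ref{en:a:v}), the lower bound comes from interleaving the maximal chains of the components, and the upper bound follows by induction on $\sum_i\Slow{k}(s_i)$ using that any first step lands on some $b\approx b_0\cons\cdots\cons b_n$ with $b_i\preceq_k s_i$ and one comparison strict. Two small repairs: your ``monotonicity property'' needs no induction at all ($u\prec_k v$ gives $\Slow{k}(v)\geq\Slow{k}(u)+1$ simply by prepending $v$ to a maximal chain below $u$), and in the lower-bound chain an intermediate element of the first coordinate's own chain may itself be a sequence of length other than one, so the successor is the flattened concatenation rather than literally $(t_{j+1},s_1,\ldots,s_n)$; one checks that each such step is still an instance of clause~(\ref{en:a:v}) by matching the whole of $t_{j+1}$ against the chunk descending from $t_j$ and taking reflexivity in the remaining slots.

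The step you defer at the end of part~\ref{en:pw:iii} is, however, exactly where the argument breaks, and on the definitions as printed it cannot be closed. Let $\len{m}\geq 1$ and $k\geq 3$, and put $t=S_i(\overline{m})$. Since $\overline{m}\sqsubseteq_k\overline{m}$ and $\overline{m}$ is the argument of $t$, the subterm clause of Definition~\ref{d:approx:sq} gives $\overline{m}\sqsubset_k t$; clause~(\ref{en:a:ii}) of Definition~\ref{d:approx} gives $\overline{m}\prec_k t$ as well. Hence clause~(\ref{en:a:iii}), with $i_0=2$ and $2<k$, yields $(\overline{m},\overline{m})\prec_k t$. By the lower-bound half of part~\ref{en:pw:i} we get $\Slow{k}((\overline{m},\overline{m}))\geq 2\,\Slow{k}(\overline{m})$, hence $\Slow{k}(S_i(\overline{m}))\geq 1+2\,\Slow{k}(\overline{m})$, and iterating this recurrence forces $\Slow{k}(\overline{m})\geq 2^{\len{m}}-1$, contradicting part~\ref{en:pw:iii}. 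So the strengthened invariant you propose ($\Slow{k}(s)\leq\len{m}$ for \emph{every} $s\prec_k\overline{m}$) is false under a literal reading: the subterm-and-length constraints do not prevent the immediate subterm $\overline{m}$ from being duplicated, one copy entering through $\sqsubset_k$ and the other through clause~(\ref{en:a:ii}), after which the additivity of part~\ref{en:pw:i} doubles the measure. To complete the proof you must first fix a reading (or a repair) of clause~(\ref{en:a:iii}) that excludes such descents below the minimal-rank symbols $S_0,S_1$ --- for instance restricting that clause to terms whose root symbol is not of lowest rank, which is consistent with how the clause is actually used in the proof of Lemma~\ref{l:upper} --- and only then can the case analysis you sketch be carried out. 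As it stands, the point you flag as merely delicate is the point at which the proof fails.
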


\begin{lemma}\label{l:upper}
  Inductively we define $d_{k,0} \defsym 2$ and 
$d_{k,p-1} \defsym (d_{k,p})^k+1$. Then there exists a constant $c$ (depending only
on $k$ and $p$) such that
$  F_{k,p}(n) \leq c \cdot n^{d_{k,p}} + c$.
\end{lemma}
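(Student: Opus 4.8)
The plan is to prove the bound by induction on the rank $p$, after first reducing the analysis of arbitrary sequences to that of single terms. By Lemma~\ref{l:pointwise}.\ref{en:pw:i} the measure $\Slow{k}$ is additive over the components of a sequence, so a longest $\prec_{k}$-descending chain issuing from a sequence can be reassembled from chains below its individual terms; hence it suffices to bound $\Slow{k}(fa)$ for a single term $fa$ with $\rk(f)=p$ and $\Slow{k}(a)\leq n$, which is precisely what $F_{k,p}(n)$ measures.

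For the base case $p=0$ the head $f$ is one of the minimal symbols $S_0,S_1$, below which the precedence contains nothing, so clause~\ref{en:sq:ii} of Definition~\ref{d:approx:sq} can never fire. Inspecting Definition~\ref{d:approx}, every $b\prec_{k} S_i(a)$ then reduces (via clauses~\ref{en:a:i}, \ref{en:a:ii}, \ref{en:a:v}) to something $\preceq_{k}$ the single argument $a$, and Lemma~\ref{l:pointwise}.\ref{en:pw:i} gives $\Slow{k}(S_i(a))\leq\Slow{k}(a)+1\leq n+1$. This is comfortably below $c\,n^{d_{k,0}}+c=c\,n^{2}+c$, so the base case holds with room to spare.

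For the inductive step, fix $fa$ with $\rk(f)=p$ and argument sequence $a=(t_1,\dots,t_m)$ satisfying $\Slow{k}(a)\leq n$, and take a longest $\prec_{k}$-chain from $fa$. Each predecessor $b\prec_{k} fa$ falls into one of the cases of Definition~\ref{d:approx}. If $b$ arises from the subterm/argument clauses~\ref{en:a:i} or~\ref{en:a:ii}, then either $b\preceq_{k} t_j$ for some $j$, contributing at most $\Slow{k}(t_j)\leq n$, or $b$ is headed by a symbol $g<f$ of rank strictly below $p$ (forced by clause~\ref{en:sq:ii}, where also the width guard $m<k$ applies), which is controlled by the induction hypothesis $F_{k,p-1}$ applied to arguments of $\Slow{k}$-value polynomially bounded in $n$. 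The delicate case is clause~\ref{en:a:iv}, where $b=g(s_0,\dots,s_{m'})$ with $g\sim f$ of the same rank $p$ and $(s_0,\dots,s_{m'})\prec_{k}(t_1,\dots,t_m)$: the head stays at rank $p$ and only the argument sequence descends. The number of such same-rank steps is bounded by the length of a $\prec_{k}$-descent of the argument sequence, hence by $\Slow{k}(a)\leq n$; meanwhile the width restrictions $m<k$ in clauses~\ref{en:a:iii},~\ref{en:sq:ii} and $m'<\max\{k,n\}$ in clause~\ref{en:a:iv} cap the number of argument positions that may spawn fresh lower-rank subterms. Each such position contributes a factor bounded by $F_{k,p-1}$ of a polynomial in $n$, so the $k$-fold bound on positions multiplies the exponents while the single linear factor counting the same-rank steps adds one to the exponent. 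This is exactly the arithmetic of the degree recursion fixed in the statement, and after absorbing the finitely many constants (depending only on $k$ and $p$) one obtains $F_{k,p}(n)\leq c\,n^{d_{k,p}}+c$.

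The main obstacle I expect is the honest bookkeeping around clause~\ref{en:a:iv} in combination with the multiset/permutation clause~\ref{en:a:v}: one must show that repeatedly descending in the same-rank arguments, while permuting and splitting the argument sequence, neither inflates the $\Slow{k}$-value of the arguments beyond a fixed polynomial in $n$ nor produces more than $O(k)$ lower-rank subterms per level. It is precisely this estimate that pushes the width guards $m<k$ into the exponent, and it is the technical heart of the lemma; by contrast, the additivity reduction via Lemma~\ref{l:pointwise} and the applications of the induction hypothesis to the lower-rank pieces are routine.
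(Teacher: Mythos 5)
Your skeleton matches the paper's: main induction on the rank $p$, a side induction on the argument sequence, a case distinction on the clauses defining $\prec_k$, and the reduction of sequences to their components via the additivity of $G_k$ (Lemma~\ref{l:pointwise}). The base case and the same-rank/multiset cases are handled in essentially the paper's way. However, you have misidentified where the difficulty lies and, more importantly, misattributed the mechanism that produces the exponent $(d_{k,p})^k$. You locate the delicate case in the same-rank descent (clause~\ref{en:a:iv}) and claim that ``the $k$-fold bound on positions multiplies the exponents.'' That arithmetic does not work: the width guards $m<k$ bound the \emph{number} of subterms at a given level, and summing (or even multiplying) $k$ quantities each bounded by a polynomial of degree $d$ yields degree at most $kd$, never $d^k$. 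In the paper the same-rank case is in fact the easy one, dispatched by the side induction with the estimate $c'(n-1)^{d'}+(k-1)\cdot c\cdot n^{d^k}<c'n^{d'}$.

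The actual source of the degree recursion is the case $p>0$ with all components $r_i\sqsubset_k t$, and specifically the \emph{iteration depth} of the auxiliary relation: $\sqsubset_k$ abbreviates $\sqsubset_k^k$, and each application of clause~\ref{en:sq:ii} of Definition~\ref{d:approx:sq} decrements the superscript, so the nesting of strictly-smaller head symbols is at most $k$ deep. The paper's claim $(\star)$ shows by induction on $l$ that $s\sqsubset_k^l t$ implies $G_k(s)\leq g^{(l)}(n)$, where $g^{(l+1)}(a)=f(g^{(l)}(a)\cdot k)$ with $f(a)=ca^d$ the bound supplied by the main induction hypothesis. It is this $k$-fold \emph{composition} of the degree-$d$ bound with itself---each nesting level feeding its output into the next application of the induction hypothesis---that raises the degree from $d$ to $d^k$; the width guard only contributes the factor $k$ inside the composition and hence affects constants. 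Your proposal never mentions the superscript $l$ or the bounded iteration of $\sqsubset_k^l$, and treats the lower-rank subterms as a single application of $F_{k,p-1}$ to ``a polynomial in $n$'' of unspecified degree. Without tracking that nested composition you cannot derive $d_{k,p+1}=(d_{k,p})^k+1$, so the heart of the lemma is missing.
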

\begin{proof}
The lemma is proven by main induction on $p$ and
side induction on $\sigma$. 

Set $a \defsym (t_0,\ldots,t_n)$ and 
 let $w \prec_k f(t_0,\ldots,t_n) =: t$, $\rk(f) = p$ and $w$ maximal. 
 By assumption $\Slow{k}(a) \leq n$.
 We prove
   \begin{equation*}
     \Slow{k}(w) < c n^{d_{k,p}} \quad \text{for almost all $n$} \tkom
   \end{equation*}
 by case-distinction on the definition of $\prec_k$. Without loss of generality, 
we only consider the case $w = (r_0,\ldots,r_m)$.

 \medskip
 \noindent
 \textsc{Case}. $p=0$ and $\forall i\leq m \; (r_i \sqsubset_k t)$. 
 By definition of $\lpop$ we have
 $\forall i\leq m \; \exists j \leq n \; (r_i \preceq_{k} t_j)$. 
 Then $\Slow{k}(w) \leq \Slow{k}(a) =  n$. 
 Hence
 \begin{equation*}
   \Slow{k}(w) \leq kn <  c n^{2} \tkom
 \end{equation*}
 where we set $c \defsym k$.

 \medskip
 \noindent
 \textsc{Case}. $p=0$, $\forall i\neq i_0 \; (r_i \sqsubset_k t)$, and $r_{i_0} \prec_{k} t$. 
 By definition of $\lpop$ we have
 $\forall i\leq m \; \exists j \leq n \; (r_i \preceq_{k} t_j)$ and 
 $r_{i_0} = f(s_0,\ldots,s_l)$, $\rk(f)=0$, 
 with $(s_0,\ldots,s_l) \prec_{k} a$. 
 Hence by induction hypothesis on $a$, there exists a constant $c$, such that 
 $G_k(r_{i_0}) \leq c (n-1)^2$ a.e. Employing Lemma~\ref{l:pointwise}(\ref{en:pw:i}) we obtain:
 \begin{equation*}
   \Slow{k}(w) = \Slow{k}((r_0,\ldots,r_m)) = 
   \sum_{i=0}^{m} \Slow{k}(r_i) \leq c (n-1)^2 + (k-1)n <  c n^{2} \tkom
 \end{equation*}
 as we can assume $c > k$.

 \medskip
 \noindent
 \textsc{Case}. $p > 0$ and $\forall i \leq m \; (r_i \sqsubset_k t) $. 
 Let $i$ be arbitrary. We can assume $r_i = g(s_0,\ldots,s_l)$, $g \prec f$, and $\forall i \leq l \; (s_i \sqsubset_k^{k-1} t)$.
Otherwise, if $r_i = g(s_0,\ldots,s_l)$ with $g \succ f$ s.t.\ there 
 $\exists j \leq n \; (r_i \sqsubseteq t_j)$ we proceed as in the first case.
 By induction hypothesis there exists $c$ and $d =d_{k,p}$ s.t.\ 
 $F_{k,p}(n) \leq c n^{d}$ a.e.

 We show the existence of a constant $c'$ s.t.\ $F_{k,p+1}(n) \leq c' n^{d'}$, where 
 $d' = d_{k,p+1}$.
 We define $f(a) \defsym c a^d$ and $g^{(0)}(a) \defsym a$, $g^{(l+1)}(a) = f(g^{(l)}(a)\cdot k)$;
 we obtain:
 \begin{equation}
   s \sqsubset^l_k t \folgt \Slow{k}(s) \leq g^{(l)}(n) \ \text{a.e.} \tag{$\star$}
 \end{equation}
 
 To see $(\star)$ we show by induction on $l$, that 
 $s \sqsubset^l_k t$ implies $\Slow{k}(s) \leq g^{(l)}(n)$, where
 $g^{(l)}(n) = c_0 a^{d^{(l)}}$ with 
 $c_0 = {c}^{\sum_{i=0}^{l-1} d^i} k^{\sum_{i=1}^{l} d^i}$. Suppose $l > 0$, then 
 we obtain by induction hypothesis on the claim and $F_{k,p}(n) \leq c n^{d}$ we obtain:
 \begin{equation*}
   \Slow{k}(s) \leq c [(c_0 n^{d^{l}})\cdot k]^{d} = c_1 n^{d^{l+1}} \text{a.e.} \tkom 
 \end{equation*}
 where $c_1 = {c}^{\sum_{i=0}^{l} d^i} k^{\sum_{i=1}^{l+1} d^i}$. This accomplishes
 the claim.

 Now the upper-bound for $\Slow{k}(w)$ follows:
 \begin{equation*}
 \Slow{k}(w) \leq k g^{(k)}(n)  < c' n^{d'} \ \text{a.e.} \tkom 
 \end{equation*}
 where $c' = {c}^{\sum_{i=0}^{k-1} d^i} k^{\sum_{i=0}^{k} d^i}$ and $d' = d^{k+1}+1 = d_{k,p+1}$. 

 \medskip
 \noindent
 \textsc{Case}. $p > 0$, $\forall i\neq i_0 \; (r_i \sqsubset_k t)$, 
and $r_{i_0} \prec_{k} t$. 
By definition $\forall i\leq m \; \exists j \leq n \; (r_i \preceq_{k} t_j)$,
and $r_{i_0} = f_(s_0,\ldots,s_l)$
so that $(s_0,\ldots,s_l) \prec_{k} a$. 
 Let $c, c', d'$ be defined as above.
 By induction hypothesis on $\sigma$ we obtain $\Slow{k}(r_{i_0}) \leq c' (n-1)^{d'}$ and thus
 \begin{equation*}
   \Slow{k}(w) \leq c' (n-1)^{d'} + (k-1) \cdot c \cdot n^{d^k} < c' n^{d'} \tpkt
 \end{equation*} 
\end{proof}

Recall the definition of the derivation length function:
\begin{equation*}
\Drf{\RS}{f}(\overline{\vecm}) = \max\{l \;|\; \exists \; t_0,\dots,t_n \in \GTA{\Sigma} 
\left( t_n \rsrewr{\RS} \dots \rsrewr{\RS} t_0 = f(\overline{\vecm}) \right)\}
\end{equation*}
We have established the following theorem.

\begin{theorem}\label{th:A}
  If for a finite TRS $\RS$ defined over $\T{\Sigma}{\VS}$, $\RS \subseteq \lpop$ then
 for each $f \in \Sigma$, $\Drf{\RS}{f}$ is bounded by a \emph{monotone polynomial} 
 in the sum of the binary length of the inputs.
\end{theorem}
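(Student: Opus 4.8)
The plan is to read the theorem off the approximation lemmas together with the polynomial estimate of Lemma~\ref{l:upper}. First I would invoke Lemma~\ref{l:approx:4} to fix a single number $k = \max\{\max\{\norm{r} \mid (l \rew r) \in \RS\}, \max\{\ar(f) \mid f \in \Sigma\}\}$, depending only on $\RS$, such that $t \rsrew{\RS} s$ implies $s \prec_{k} t$. Because this $k$ is uniform — it does not depend on the particular term being contracted — every $\RS$-derivation $f(\overline{\vecm}) = t_0 \rsrew{\RS} t_1 \rsrew{\RS} \cdots \rsrew{\RS} t_n$ translates step by step into a strictly $\prec_{k}$-descending chain $t_n \prec_{k} \cdots \prec_{k} t_1 \prec_{k} t_0 = f(\overline{\vecm})$. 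By the definition of $\Slow{k}$ as the length of a longest such chain, this yields
\begin{equation*}
\Drf{\RS}{f}(\overline{\vecm}) \leq \Slow{k}(f(\overline{\vecm})) \tpkt
\end{equation*}

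Next I would reduce $\Slow{k}(f(\overline{\vecm}))$ to the auxiliary function $F_{k,p}$. Setting $p \defsym \rk(f)$, the definition of $F_{k,p}$ gives $\Slow{k}(f(\overline{\vecm})) \leq F_{k,p}(n)$ whenever the argument sequence $\overline{\vecm}$ satisfies $\Slow{k}(\overline{\vecm}) \leq n$. Here Lemma~\ref{l:pointwise} does the bookkeeping: part~(\ref{en:pw:i}) gives $\Slow{k}(\overline{\vecm}) = \sum_i \Slow{k}(\overline{m}_i)$, and part~(\ref{en:pw:iii}) evaluates each summand as $\Slow{k}(\overline{m}_i) = \len{m_i}$, so that $\Slow{k}(\overline{\vecm})$ is exactly the sum of the binary lengths of the inputs. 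Choosing $n \defsym \sum_i \len{m_i}$ therefore gives $\Slow{k}(f(\overline{\vecm})) \leq F_{k,p}(\sum_i \len{m_i})$.

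Finally I would apply Lemma~\ref{l:upper}, which supplies a constant $c$ (depending only on $k$ and $p$, hence only on $\RS$ and $f$) with $F_{k,p}(n) \leq c \cdot n^{d_{k,p}} + c$. Chaining the three estimates produces
\begin{equation*}
\Drf{\RS}{f}(\overline{\vecm}) \leq c \cdot \Bigl(\sum_i \len{m_i}\Bigr)^{d_{k,p}} + c \tkom
\end{equation*}
which is a polynomial in the sum of the binary lengths of the inputs; it is moreover monotone, since $F_{k,p}$ and the bounding polynomial are monotone in $n$, giving exactly the claimed monotone polynomial bound.

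The bulk of the technical effort is not in this theorem — which is essentially assembly over the preceding lemmas — but in Lemma~\ref{l:upper}, where the degree $d_{k,p}$ must be controlled by the main induction on the rank $p$, and where the auxiliary relation $\sqsubset_{k}$ is precisely what prevents the exponential blow-up that the unrestricted recursive path order would incur (this is visible in the iteration $g^{(l)}$ occurring in its proof). Accordingly, the only points requiring care in the present assembly are that a single uniform $k$ governs the whole derivation (Lemma~\ref{l:approx:4}), so that one and the same $\prec_{k}$ bounds every step, and that $\Slow{k}$ restricted to numerals collapses exactly to binary length (Lemma~\ref{l:pointwise}), so that the resulting bound is genuinely expressed in the intended input measure rather than in term size.
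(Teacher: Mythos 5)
Your proof is correct and follows essentially the same route as the paper's: Lemma~\ref{l:approx:4} to fix a uniform $k$ with $\rsrew{\RS}\subseteq{\succ_k}$, the definition of $\Slow{k}$ to bound $\Drf{\RS}{f}$, Lemma~\ref{l:upper} for the polynomial estimate on $F_{k,p}$, and Lemma~\ref{l:pointwise} to collapse $\Slow{k}$ on numerals to the sum of binary lengths. The only difference is that you spell out the intermediate step through $F_{k,p}$ explicitly, which the paper leaves implicit.
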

\begin{proof}
  Let $\RS$ be a finite TRS defined over $\T{\Sigma}{\VS}$, such that
for every rule $(l \to r) \in \RS$, $r \lpop l$ holds. This implies
that for any two terms $t,s$, $t \rsrew{\RS} s$ implies $s \lpop t$. 
Hence by Lemma~\ref{l:approx:4} there exists $k \in \N$, s.t.\ 
$\rsrewr{\RS} \subseteq \prec_k$.
Suppose $f$ is an $n$-ary function symbol and set 
$t \defsym f(\overline{m}_1,\dots,\overline{m}_n)$.
By definition it follows that
\begin{equation*}
  \Drf{\RS}{f}(\overline{m}_1,\dots,\overline{m}_n) \leq 
  \Slow{k}(f(\overline{m}_1,\dots,\overline{m}_n)) \tpkt 
\end{equation*}
By Lemma~\ref{l:upper} there exists a polynomial $p$, depending only on $k$ and
the rank of $f$, s.t.\
\begin{equation*}
  \Slow{k}(f(\overline{m}_1,\dots,\overline{m}_n)) \leq
  p(\Slow{k}((\overline{m}_1,\dots,\overline{m}_n)) \tpkt
\end{equation*}
Employing with Lemma~\ref{l:pointwise}, we obtain 
$\Drf{\RS}{f}(\overline{m}_1,\dots,\overline{m}_{n}) \leq p(\sum_{i=1}^{n} \len{m_i})$. 
\end{proof}

\section{Predicative Recursion and POP}

In the previous section we have shown that if for a finite TRS $\RS$, 
defined over $\Tseq{\Sigma}{\VS}$, $\RS \subseteq \lpop$, 
then the derivation length function $\Drf{\RS}{f}$ is 
bounded by a monotone polynomial in the binary length of the inputs. 
As an application of Theorem~\ref{th:A}, we prove in this section 
that $\Drf{\RBp}{f}$ is bounded by a monotone polynomial in the binary length of the normal inputs. I.e.\ we give an alternative
proof of Prop.~\ref{p:am2004}.
As $\RBp$ exactly characterises the functions in $\PTIME$ this yields that
$\lpop$---via the mapping $\ints$ defined below---exactly 
characterises the class of polytime computable functions $\PTIME$.

It suffices to define a mapping 
$\ints \colon \GT{B}\to \GTseq{\Sigma}$, such that $\ints$ is a monotone
interpretation such that $\ints(l\sigma) \gpop \ints(r\sigma)$ holds
for all $(l \to r) \in \RBp$. 
We suppose the signature $\Sigma$ is defined such that for 
any function symbol $f\in B^{k,l}$ there is 
a function symbol $f' \in \Sigma$ of arity $k$. Moreover,
$\Sigma$ includes two constants $S_0,S_1$ and a
varyadic function symbol $\li$ of lowest rank.
We need a few auxiliary notions:
$\sn(\overline{n}) \defsym n$ for numerals $\overline{n}$;
$\sn(f(\bft;\bfs))=\sum_{j}(\sn(s_{j}))$, otherwise.
For every number $m$ we define its representation 
$\widehat{m} \in \T{\Sigma}{\VS}$ as follows: $\widehat{0} \defsym \li$;
$\widehat{\mathcal{S}_i(m)} \defsym \li(S_i) * \widehat{m}$ for 
$i \in [0,1]$, where 
$\li(s_0,\dots,s_i) * \li(s_{i+1},\dots,s_n) \defsym \li(s_0,\dots,s_n)$.
We define $\ints \colon \GT{B}\to \GTseq{\Sigma}$ by mutual induction together
with the interpretation $\intn \colon \GT{B}\to \GTseq{\Sigma}$. 
\begin{definition} \mbox{}
\begin{itemize}
\item $\ints(\overline{n}) \defsym ()$ and
$\ints(S_{i}(;t)) \defsym (S_{i}) \cons \ints(t)$ for $t\not\equiv\overline{n}$ (i.e.\ $t$ is not a numeral).
\smallskip
\item For $f\not= S_{i}$, define
$\ints(f(\bft;\bfs)) \defsym (f(\intn(t_{0}),\ldots,\intn(t_{n})), \ints(s_{0}),\ldots,\ints(s_{m}))$.
\smallskip
\item $\intn(t) \defsym \li\ints(t) * \widehat{sn(t)}$.
\end{itemize}
\end{definition}

First we show that for $\Q \in \{\ints,\intn\}$, 
$\Q(l\sigma) \gpop \Q(r\sigma)$. More precisely we show the following lemma.
\begin{lemma} \label{l:RBP:i}
Let $(l \to r) \in \RBp$, $\sigma$ a
ground substitution, such that $l\sigma,r\sigma \in \GT{B}$. Then there exists
$k$, depending on the rule $(l \to r)$, such that $\Q(r\sigma) \prec_k \Q(l\sigma)$.
\end{lemma}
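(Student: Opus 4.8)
The plan is to argue by case analysis on the shape of the rule $(l\to r)\in\RBp$, treating both interpretations $\ints$ and $\intn$. First I fix the \emph{precedence} $<$ on $\Sigma$ so that it reflects the definitional dependencies of $\B$: for composition I put $\SUB[f,\vecg,\vech] > f,g_1,\dots,g_{k'},h_1,\dots,h_{l'}$, for recursion $\PREC[g,h_1,h_2] > g,h_1,h_2$, and I keep $S_0\sim S_1$ together with the list symbol $\li$ of lowest rank. For each rule I shall exhibit the $\lpop$-decrease $\Q(r\sigma)\lpop\Q(l\sigma)$ and then pass to $\prec_k$ by the approximation machinery (Lemmas~\ref{l:approx:2} and~\ref{l:approx:3}). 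The point that $k$ can be chosen depending on $(l\to r)$ only, and not on $\sigma$, is that the unbounded data — the sizes of the substituted numerals — enter the comparison exclusively through $\sqsubset$-subterm steps, which carry no width restriction, whereas the width bounds in Definition~\ref{d:approx} constrain only the $\B$-function symbols occurring in the rule, and these have arities fixed by $(l\to r)$.

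A convenient preliminary is that the $\intn$-case reduces to the $\ints$-case. Since $\intn(t)=\li\ints(t)*\widehat{\sn(t)}$ merely appends to the list $\ints(t)$ a tail $\widehat{\sn(t)}$ built from the lowest-rank symbols $S_0,S_1$ under $\li$, it suffices that the distinguished head term of $\ints(l\sigma)$ — the term produced by the outermost $\B$-symbol of $l$ — strictly $\lpop$-dominates all of $\ints(r\sigma)$; such a head then $\sqsubset$-dominates every $\li$-tail element, so the extra tail length on the right is inaccessible in the sense of clause~\ref{en:a:iii} of Definition~\ref{d:approx}, \emph{even when} $\sn$ happens to grow (as it does for predicative recursion). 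Thus it is enough to control the $\ints$-parts and to note that $\sn$ enters only through such dominated tails.

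The base rules \textbf{[zero]}, \textbf{[projection]}, \textbf{[predecessor]} and \textbf{[conditional]} are immediate: there $\ints(r\sigma)$ is the empty sequence, a subsequence of $\ints(l\sigma)$, or a numeral of strictly smaller $\sn$-value, so the subterm clause of $\sqsubset$ (respectively the relation $()\lpop a$ noted after Definition~\ref{d:pop1}) applies. For \textbf{[safe composition]} the precedence $f,g_i,h_j<\SUB[f,\vecg,\vech]$ makes the head $f(\intn(g_1(\vecx;)),\dots)$ of $\ints(r\sigma)$ and the sequences $\ints(h_j(\vecx;\vecn))$ each $\sqsubset$-below the head term of $\ints(l\sigma)$; they are reassembled by the permutation clause~\ref{en:a:v}.

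The genuine difficulty is \textbf{[predicative recursion on notation]}, where $l\sigma=F(S_i(;b),\vecx;\vecn)$ and $r\sigma=h_i(b,\vecx;\vecn,F(b,\vecx;\vecn))$ with $F=\PREC[g,h_1,h_2]$. The crucial mechanical fact is that $\intn(b)$ is a proper sub-list of $\intn(S_i(;b))$ — whether or not $b$ is a numeral, $\ints(S_i(;b))$ carries one extra leading $S_i$ over $\ints(b)$ while the $\sn$-tail only grows — so $\intn(b)\prec_k\intn(S_i(;b))$, and hence by monotonicity (Lemma~\ref{l:approx:3}) the interpreted recursive call descends: $F(\intn(b),\dots)\prec_k F(\intn(S_i(;b)),\dots)$, the latter being exactly the head term of $\ints(l\sigma)$. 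The remaining obstacle is combinatorial: the right-hand side carries, alongside this descending recursive call, the term $h_i(\intn(b),\dots)$ (root below $F$, arguments $\sqsubset$-bounded), a \emph{duplicated} copy of the safe lists $\ints(\vecn)$, and a $\sn$-tail that has in fact doubled, and all of this must be dominated by the \emph{single} head term $F(\intn(S_i(;b)),\dots)$. This is precisely the configuration that POP's restriction is built for: by clause~\ref{en:a:iii} every immediate component other than the one genuine descent must be $\sqsubset$-inaccessible, and since $h_i(\intn(b),\dots)$, the duplicated $\ints(\vecn)$ and the $\li$-tail are all $\sqsubset$-below $F(\intn(S_i(;b)),\dots)$, the permutation clause~\ref{en:a:v} assembles the whole right-hand sequence below the left-hand one. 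I expect the accounting of these duplicated safe arguments against the single dominating recursion term — and the verification that their total width stays below the rule-determined $k$ — to be the main obstacle, and it is here that the separation of $\lpop$ into the bounded auxiliary order $\sqsubset$ and the single $\lpop$-descent does the essential work.
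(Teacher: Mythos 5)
Your proposal takes essentially the same route as the paper's proof: the precedence induced by the definitional order of $B$, the key descent $\mathrm{N}(b)\prec_k \mathrm{N}(S_i(;b))$ (one extra leading $S_i$) propagated to the recursive call by monotonicity, absorption of the $h_i'$-term, the safe arguments and the length-tail via $\sqsubset_k$ under the one-strict-descent clause of $\prec_k$, and a $k$ read off from the arities occurring in the rule (the paper takes $k=1+\max\{3,p+1,q+1\}$ for predicative recursion). The paper's argument is itself only a sketch of the recursion case for the $\mathrm{S}$-interpretation, declaring the $\mathrm{N}$-case and the remaining rules analogous, so your explicit head-domination treatment of the $\mathrm{N}$-interpretation and of the doubled $\mathrm{sn}$-tail fills in, rather than departs from, the paper's proof.
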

\begin{proof}
Let $(l \to r)$ and $\sigma$ as in the assumptions of the lemma. We 
sketch the proof by considering the rule:
\begin{equation*}
\PREC^{p+1,q}[g,h_{1},h_{2}](S_{i}(;t),\bft;\bfn) \rew
  h_{i}(t,\bft;\bfn,\PREC[g,h_{1},h_{2}](t,\bft;\bfn)) \tpkt
\end{equation*}
We abbreviate $F \defsym \PREC^{p+1,q}[g,h_{1},h_{2}]$ and set $k \defsym 1+\max\{3,p+1,q+1\}$. 
 Let $\lh(f)$, $f \in B$ be defined as follows:
 $\lh(f) \defsym 1$,  for $f \in \{S_i, O, U, P \}$.
 $\lh(\SUB[f,\bfg,\bfh]) \defsym 
1 + \lh(f) + \lh(g_1) + \cdots + \lh(g_{k'}) + \lh(h_1) + \cdots + \lh(h_{l'})$.
 $\lh(\PREC[g,h_1,h_2]) \defsym 1 + \lh(g) + \lh(h_1) + \lh(h_2)$. 
Then we define the precedence $<$ over $\Sigma$ 
compatible with $\lh$, i.e. $f' < g'$ if $\lh(f) < \lh(g)$. 
For $\Q = \ints$, we employ the following sequence of comparisons:
\begin{align*}
&  \ints(F(S_{i}(;t),\bft;\bfn)) \\
& = (F'(\intn(S_i(;t)),\intn(t_1),\ldots,\intn(t_p)),\ints(\overline{n}_1),\ldots,\ints(\overline{n}_q)) \\
& = F'(\intn(S_i(;t)),\intn(t_1),\ldots,\intn(t_p)) \\
& = F'(\li(S_i)*\intn(t)),\intn(t_1),\ldots,\intn(t_p)) \tpkt
\intertext{By definition $\ints(\overline{n}_i)=()$ and for each $t \in \T{\Sigma}{\VS}$,
$t=(t)$. Moreover it is a direct consequence of the definitions
that $N(S_i(;t)) = \li(S_i)*N(t)$. Further:}
& F'(\li(S_i)*\intn(t),\intn(t_1),\ldots,\intn(t_p)) \\
& \succ_k (h'_i(\intn(t),\intn(t_1),\ldots,\intn(t_p)),F'(\intn(t),\intn(t_1),\ldots,\intn(t_p))) \tkom \\
\intertext{By Definition~\ref{d:approx}(\ref{en:a:iv}) 
we obtain 
$\li(S_i)*\intn(t) \succ_k \intn(t)$. This yields by 
rule~\ref{d:approx}(\ref{en:a:iv}) and rule~\ref{d:approx}(\ref{en:a:v}), using $k > p+1$: 
\begin{equation*}
  F'(\li(S_i)*\intn(t),\intn(t_1),\ldots,\intn(t_p)) \succ_k 
F'(\intn(t),\intn(t_1),\ldots,\intn(t_p)) \tpkt
\end{equation*}
% $F'(\li(S_i)*\intn(t),\intn(t_1),\ldots,\intn(t_p)) \succ_k 
% F'(\intn(t),\intn(t_1),\ldots,\intn(t_p))$.
%
Finally applying Definition~\ref{d:approx}(\ref{en:a:iii}) together with 
rule~\ref{d:approx}(\ref{en:a:ii}) and~\ref{d:approx:sq}(\ref{en:sq:ii}) yields the inequality.
In these rule applications we employ $k > q+1$ and $F' > h'_i$.}
& (h'_i(\intn(t),\intn(t_1),\ldots,\intn(t_p)),F'(\intn(t),\intn(t_1),\ldots,\intn(t_p)))  \\ 
& = (h'_i(\intn(t),\intn(t_1),\ldots,\intn(t_p))),\ints(n_1),\ldots,\ints(n_l),F'(\intn(t),\intn(t_1),\ldots,\intn(t_p))) \\
& = \ints(h_{i}(t,\bft;\bfn,F(t,\bft;\bfn)))\tpkt
\end{align*}

%It remains to show that 
%\begin{equation*}
%  \intn(F(S_{i}(;t),\bft;\bfn)) \succ_k \intn(h_{i}(t,\bft;\bfn,F(t,\bft;\bfn))) \tpkt
%\end{equation*}
%%
%However this is not difficult to see:
%%
%\begin{align*}
%  \sn(F(S_{i}(;t),\bft;\bfn)) & = \max_{j} \; \sn(\overline{n}_j) \tkom \\
%  & = \sn(F(t,\bft;\bfn)) \tkom \\
%  & = \max_{j} \; \sn(h_{i}(t,\bft;\bfn,F(t,\bft;\bfn)) \tpkt
%\end{align*}
Finally, it is easy to see that 
$\intn(F(S_{i}(;t),\bft;\bfn)) \succ_k \intn(h_{i}(t,\bft;\bfn,F(t,\bft;\bfn))$. 
We established the lemma for the rule
$F(S_{i}(;t),\bft;\bfn) \rew h_{i}(t,\bft;\bfn,F(t,\bft;\bfn))$.
The other rules follow similar. 

Note that the definition of $k$ in
all cases depends on the arity-information encoded in the head function symbol
on the left-hand side. Moreover at most $3$ iterated applications of $\sqsubset_k$ are necessary. 
\end{proof}

The next lemma establish monotonicity for the interpretations $\ints,\intn$. 
\begin{lemma} \label{l:RBP:ii}
For $k\in\N$ and for $u,v \in \GT{\Sigma}$, $\Q(u) \prec_k \Q(v)$
for $\Q \in \{\ints,\intn\}$. Suppose $f \in B^{p,q}$ and $\overline{t},\overline{s} \in \GT{\Sigma}$.
Then 
\begin{itemize}
\item $\Q(f(t_1,\dots,u,\dots,t_p;\overline{s}) \prec_k 
  \Q(f(t_1,\dots,v,\dots,t_p;\overline{s})$ for $\Q \in \{\ints,\intn\}$, and
\item $\Q(f(\overline{t};s_1,\dots,u,\dots,s_q) \prec_k 
  \Q(f(\overline{t};s_1,\dots,v,\dots,s_q))$ for $\Q \in \{\ints,\intn\}$.
\end{itemize}
\end{lemma}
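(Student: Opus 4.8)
The plan is to prove both bullets simultaneously, exploiting that each interpretation of $f(\bft;\bfs)$ is assembled from the $\prec_k$-monotone pieces supplied by its subterms. The two workhorses will be Lemma~\ref{l:approx:3} (congruence of $\prec_k$ under a single function symbol, available once we take $k$ at least the maximal arity of $\Sigma$; if the given $k$ is smaller we may enlarge it by Lemma~\ref{l:approx:1}) together with an auxiliary \emph{sequence-congruence} fact that I would first isolate from Definition~\ref{d:approx}(\ref{en:a:v}): if $a \prec_k b$ as sequences then $c \cons a \cons d \prec_k c \cons b \cons d$ for any common context $c,d$. This follows by reusing the decomposition witnessing $a \prec_k b$ (rule~\ref{en:a:v} presents $a$ as a permutation of blocks each $\preceq_k$ an element of $b$, with one strict), matching every element of $c$ and of $d$ to itself, and matching the blocks of $a$ to the elements of $b$; the strict descent is inherited from $a \prec_k b$.

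For the \textbf{normal-argument case} I would observe that $\sn(f(\bft;\bfs)) = \sum_j \sn(s_j)$ does not mention the normal arguments, so replacing $t_i = u$ by $v$ leaves the trailing factor $\widehat{\sn(\cdot)}$ untouched and alters only the head term $f(\intn(t_1),\dots,\intn(t_p))$ inside $\ints$. From the hypothesis $\intn(u) \prec_k \intn(v)$, Lemma~\ref{l:approx:3} gives $f(\dots,\intn(u),\dots) \prec_k f(\dots,\intn(v),\dots)$; lifting this through the surrounding sequence by the sequence-congruence fact yields the $\ints$-bullet, and wrapping the result under $\li$ (again Lemma~\ref{l:approx:3}) while the appended $\widehat{\sn}$-block stays fixed yields the $\intn$-bullet. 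The \textbf{safe-argument case for $\ints$} is equally direct: since $\ints(f(\bft;\ldots,u,\ldots))$ contains $\ints(u)$ as a sub-block, the hypothesis $\ints(u) \prec_k \ints(v)$ combined with sequence-congruence finishes it (the degenerate subcase $\ints(u)=()$ is absorbed, as $() \prec_k$ any nonempty block by Definition~\ref{d:approx}(\ref{en:a:iii})).

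I expect the \textbf{safe-argument case for $\intn$} to be the main obstacle, because here $\sn$ genuinely changes: $\intn(t) = \li\ints(t) * \widehat{\sn(t)}$, so replacing a safe $u$ by $v$ perturbs both the $\ints(t)$ block and the trailing $\widehat{\sn(t)}$ block at once inside a single $\li$-term. I would resolve this by two observations. First, since $S_0 \sim S_1$ are equivalent constants of lowest rank, two numeral-encodings $\widehat{m},\widehat{m'}$ are compared by $\prec_k$ purely by length, so $\widehat{m} \preceq_k \widehat{m'}$ whenever $\len{m} \leq \len{m'}$; because binary length is monotone it then suffices to know $\sn(u) \leq \sn(v)$. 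Second, I would secure $\sn(u) \leq \sn(v)$ by strengthening the induction to carry, alongside $\Q(u) \prec_k \Q(v)$, the side condition $\sn(u) \leq \sn(v)$ on the replaced subterms — which is exactly what the rewrite rules of $\RBp$ provide, their right-hand sides never increasing the summed safe-size. With this in hand the $\ints$-block increases strictly while the $\widehat{\sn}$-block increases weakly, so descending to the argument sequences via Definition~\ref{d:approx}(\ref{en:a:iv}) and applying the sequence-congruence fact through Definition~\ref{d:approx}(\ref{en:a:v}) delivers the strict $\prec_k$ for $\intn$. The delicate bookkeeping is keeping the strict descent located in the $\ints$-block while the $\widehat{\sn}$-block is only permitted to grow.
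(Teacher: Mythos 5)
The paper states this lemma without any proof, so there is nothing to compare your argument against line by line; I can only assess it on its own terms. Your treatment of the normal-argument case and of the safe-argument case for $\mathsf{S}$ is sound: a normal position sits inside the head term $f'(\ldots)$ and is handled by Lemma~\ref{l:approx:3} together with your sequence-congruence observation, a safe position contributes a sub-block of the outer sequence, and the trailing $\widehat{\mathrm{sn}(\cdot)}$ block is untouched by a normal-argument replacement because $\mathrm{sn}$ sums only over safe arguments. You are also right to single out the safe-argument case for $\mathsf{N}$ as the crux.

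Your resolution of that case, however, rests on a false claim. You propose to carry the side condition $\mathrm{sn}(u)\leq\mathrm{sn}(v)$ through the induction and to discharge it by asserting that the right-hand sides of $R'_B$ never increase the summed safe-size. They do: for predicative recursion the redex $\textsc{PREC}[g,h_1,h_2](S_i(;b),\mathbf{x};\mathbf{n})$ has $\mathrm{sn}$ equal to $\sum_j \mathrm{sn}(n_j)$, while its reduct $h_i(b,\mathbf{x};\mathbf{n},\textsc{PREC}[g,h_1,h_2](b,\mathbf{x};\mathbf{n}))$ has $\mathrm{sn}$ equal to $2\sum_j \mathrm{sn}(n_j)$, since the recursive call occupies a safe position of $h_i$; safe composition likewise multiplies $\mathrm{sn}$ by the number $l'$ of safe substituents, and a projection onto a normal argument can increase $\mathrm{sn}$ arbitrarily. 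So in exactly the application this lemma exists for (propagating Lemma~\ref{l:RBP:i} through contexts, with $u$ the reduct side) your side condition fails, and your picture is inverted: it is the $\prec_k$-\emph{smaller} term that acquires the longer $\widehat{\mathrm{sn}}$-block. What has to happen instead is that the surplus $S$-constants on the smaller side are absorbed, via clauses (\ref{en:a:iii}) and (\ref{en:a:v}) of Definition~\ref{d:approx}, into elements of the larger side that strictly dominate them --- the same mechanism visible in the displayed computation for $\textsc{PREC}$ in the proof of Lemma~\ref{l:RBP:i}, where one element on the left covers two on the right. A secondary point: the comparison of $\widehat{m}$ with $\widehat{m'}$ is not ``purely by length'', since $S_0\not\prec_k S_1$ (the two constants are equivalent, not strictly ordered), so even the weak inequality between numeral blocks requires reading $\preceq_k$ modulo the symbol equivalence $\sim$, which the chapter leaves implicit.
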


We define the derivation length function $\Drf{\RB'}{f}$ over the ground term-set
$\GT{\Sigma}$:
\begin{equation*}
  \Drf{\RBp}{f}(\overline{\vecm};\overline{\vecn}) \defsym 
  \max\{n \;|\; \exists \; t_0,\dots,t_n \in \GT{B} 
  \left( t_n \rsrewr{\RBp} \dots \rsrewr{\RBp} t_0 = f(\overline{\vecm};\overline{\vecn}) \right)\} \tpkt
\end{equation*}
Recall the definition of the derivation tree $\mathcal{T}_{\RBp}$. Note that for each
$t \in \T{B}{\VS}$, $\mathcal{T}_{\RBp}(t)$ is finite. This follows from the fact that
$\RBp$ is terminating and $\mathcal{T}_{\RBp}(t)$ is finitely branching. The latter is shown
by well-founded induction on $\rsrew{\RBp}$. 
Let $f \in B$ be a fixed predicative recursive function symbol. As the derivation
tree $\mathcal{T}_{\RBp}(f(\overline{\vecm};\overline{\vecn}))$ is finite only finitely
many function symbols occur in $\mathcal{T}_{\RBp}(f(\overline{\vecm};\overline{\vecn}))$. 
This allows to define a finite subset $F \subset B$, such that all terms occurring in
$\mathcal{T}_{\RBp}(f(\overline{\vecm};\overline{\vecn}))$ belong to $\GT{F}$. 
We define
\begin{equation*}
k \defsym 1+\max (\{3\} \cup \{p,q+1 | \text{$f^{p,q} \in B$ occurs in 
    $\mathcal{T}_{\RBp}(f(\overline{\vecm};\overline{\vecn}))$}\}) \tpkt
\end{equation*}
Let $R'$ denote the restriction of $\RBp$ to $\GT{F}$. Then, we have
$\Drf{\RBp}{f}(\overline{\vecm};\overline{\vecn}) = \Drf{R'}{f}(\overline{\vecm};\overline{\vecn})$.
From these observations together with Lemma~\ref{l:RBP:i} and~\ref{l:RBP:ii} we conclude
\begin{lemma}
  Let $s,t \in \GT{F}$ such that $t \rsrew{R} s$. Then $\ints(s) \prec_k \ints(t)$.
\end{lemma}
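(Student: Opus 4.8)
The plan is to reduce the statement to the two immediately preceding lemmas by splitting a single rewrite step into the contraction of a redex and its propagation through the surrounding context. First I would write the step $t \rsrew{R} s$ explicitly: since $R$ is the restriction of $\RBp$ to $\GT{F}$, there are a rule $(l \to r) \in \RBp$, a context $C$ built entirely from function symbols of $F$, and a ground substitution $\sigma$ with $t = C[l\sigma]$, $s = C[r\sigma]$, and $l\sigma, r\sigma \in \GT{F} \subseteq \GT{B}$. Applying Lemma~\ref{l:RBP:i} to this rule and $\sigma$ then yields a number $k_0$, depending only on $(l \to r)$, such that $\Q(r\sigma) \prec_{k_0} \Q(l\sigma)$ for both $\Q \in \{\ints,\intn\}$.

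The essential bookkeeping step is to check that the single, globally fixed $k$ dominates every such $k_0$. The proof of Lemma~\ref{l:RBP:i} shows that $k_0$ is determined solely by the arity information $(p,q)$ of the head symbol of $l$, together with the observation that at most three nested applications of $\sqsubset_k$ are ever needed; this is precisely the data that $k = 1 + \max(\{3\} \cup \{p, q+1 \mid f^{p,q} \text{ occurs in } \mathcal{T}_{\RBp}(f(\overline{\vecm};\overline{\vecn}))\})$ is built to bound. Since the head of $l$ occurs in the derivation tree, its arity contributes to this maximum, so $k_0 \leq k$. Lemma~\ref{l:approx:1}, the monotonicity of $\prec$ in its approximation index, then upgrades $\Q(r\sigma) \prec_{k_0} \Q(l\sigma)$ to $\Q(r\sigma) \prec_k \Q(l\sigma)$ at the uniform level $k$.

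It remains to close this under the context $C$. I would proceed by induction on the depth of the contracted position, maintaining as invariant that both $\ints$ and $\intn$ of the current subterm decrease under $\prec_k$ — which is exactly what Lemma~\ref{l:RBP:i} supplies at the redex (it is stated for $\Q \in \{\ints,\intn\}$) and what the monotonicity of Lemma~\ref{l:RBP:ii} both consumes and reproduces. At each step I peel off one enclosing function symbol and invoke Lemma~\ref{l:RBP:ii}, in either its normal-argument or its safe-argument form as appropriate. Because every symbol of $C$ lies in $F$ and hence occurs in $\mathcal{T}_{\RBp}(f(\overline{\vecm};\overline{\vecn}))$, its arity is already accounted for in the definition of $k$, so the monotonicity lemma is applicable at exactly this level $k$ throughout the induction. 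Carrying the comparison up through $C$ yields $\ints(s) = \ints(C[r\sigma]) \prec_k \ints(C[l\sigma]) = \ints(t)$, as required.

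The main obstacle is not any single inequality but the \emph{uniformity} of $k$: the relation $\prec_k$ is not a priori closed under contexts, and the bound $k_0$ from Lemma~\ref{l:RBP:i} varies with the rule applied. The whole argument hinges on the fact that both $k_0$ and the arities needed for the context closure depend only on the finitely many function symbols that can appear in the derivation tree, so that the single $k$ fixed above simultaneously dominates all of them. The delicate point to get right is confirming that the constant $\{3\}$ in the definition of $k$ genuinely absorbs the iterated-$\sqsubset_k$ depth arising in Lemma~\ref{l:RBP:i}, and that the arity contributions $p$ and $q+1$ suffice for every application of Lemma~\ref{l:RBP:ii} encountered while traversing $C$.
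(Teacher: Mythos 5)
Your proposal is correct and follows exactly the route the paper intends: the paper gives no explicit proof but states that the lemma follows "from these observations together with Lemma~\ref{l:RBP:i} and~\ref{l:RBP:ii}", i.e.\ redex-level decrease from Lemma~\ref{l:RBP:i}, lifted to the uniform $k$ via Lemma~\ref{l:approx:1} and the finiteness of the symbols occurring in the derivation tree, then closed under the context by Lemma~\ref{l:RBP:ii}. Your attention to the uniformity of $k$ (absorbing the rule-dependent $k_0$ and the arities met while traversing the context) is precisely the bookkeeping the paper's definition of $k$ is designed to handle.
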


%Note that $\Slow{k}(\widehat{n}) = \len{n}+1$. 
In summary we obtain, by following the pattern of the proof of Thm.~\ref{th:A}:
\begin{theorem} \label{t:B}
  For every $f\in B$, 
$\Drf{\RBp}{f}(\overline{m}_1,\dots,\overline{m}_p;\overline{n}_1,\dots,\overline{n_q})$ is bounded by a 
monotone polynomial in the sum of the length of the normal inputs $m_1,\dots,m_p$. 
\end{theorem}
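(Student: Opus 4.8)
The plan is to mirror the proof of Theorem~\ref{th:A}, inserting the interpretation $\ints$ as an intermediary so that an $\RBp$-derivation is transported into a strictly $\prec_k$-descending chain whose length is then controlled by the polynomial bound of Lemma~\ref{l:upper}. First I would dispose of the fact that $\RBp$ is infinite. Fixing $f\in B$ and a ground instance $f(\overline{\bfm};\overline{\bfn})$, the derivation tree $\mathcal{T}_{\RBp}(f(\overline{\bfm};\overline{\bfn}))$ is finite, so only finitely many function symbols occur in it, and these form a finite set $F\subset B$ determined by $f$ alone: the rules of $\RBp$ only decompose the compound symbols $\SUB[\dots]$ and $\PREC[\dots]$ and never introduce symbols outside the closure of $f$. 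Consequently the number $k\defsym 1+\max(\{3\}\cup\{p,\,q+1\})$ formed from the arities of the symbols occurring, and the restriction $R'$ of $\RBp$ to $\GT{F}$, both depend only on $f$ (not on the inputs), and one has $\Drf{\RBp}{f}(\overline{\bfm};\overline{\bfn}) = \Drf{R'}{f}(\overline{\bfm};\overline{\bfn})$.

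Next I would invoke the lemma just established (namely $t\rsrew{R'}s$ implies $\ints(s)\prec_k\ints(t)$), which is exactly Lemma~\ref{l:RBP:i} closed under contexts via the monotonicity of Lemma~\ref{l:RBP:ii}. A maximal derivation $f(\overline{\bfm};\overline{\bfn}) = t_0 \rsrew{R'} t_1 \rsrew{R'} \cdots \rsrew{R'} t_n$ then induces a $\prec_k$-chain of the same length $n$ below $\ints(f(\overline{\bfm};\overline{\bfn}))$, whence $\Drf{R'}{f}(\overline{\bfm};\overline{\bfn}) \leq \Slow{k}(\ints(f(\overline{\bfm};\overline{\bfn})))$. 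Applying Lemma~\ref{l:upper} bounds this slow-growing value by $c\cdot N^{d_{k,\rk(f')}}+c$, where $N$ is the $\Slow{k}$-value of the argument sequence of $\ints(f(\overline{\bfm};\overline{\bfn}))$, and Lemma~\ref{l:pointwise}(\ref{en:pw:i}) splits $N$ into the sum of the $\Slow{k}$-values of the individual arguments.

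The crux, and the step I expect to be the main obstacle, is showing that this sum involves only the \emph{normal} inputs. Here the design of $\ints$ pays off: by definition $\ints(\overline{n}_j) = ()$ for a safe numeral argument, so each safe position contributes the empty sequence, of $\Slow{k}$-value $0$, giving $\Slow{k}(\ints(f(\overline{\bfm};\overline{\bfn}))) = \Slow{k}(f'(\intn(\overline{m}_1),\dots,\intn(\overline{m}_p)))$. For a normal numeral one computes $\intn(\overline{m}_i) = \widehat{m_i}$, and a routine analogue of Lemma~\ref{l:pointwise}(\ref{en:pw:iii}) for the representation $\widehat{\cdot}$ yields $\Slow{k}(\intn(\overline{m}_i)) = \len{m_i}$. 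Assembling these with Lemma~\ref{l:upper} gives $\Drf{\RBp}{f}(\overline{\bfm};\overline{\bfn}) \leq p\bigl(\sum_{i=1}^{p}\len{m_i}\bigr)$ for a monotone polynomial $p$ depending only on $k$ and $\rk(f')$, which is the claim. The delicate bookkeeping is twofold: verifying that $k$ (hence $p$) is genuinely independent of the inputs, so that a single polynomial works uniformly, and checking that the collapse of the safe arguments is preserved along the entire derivation, i.e.\ that the monotonicity of Lemma~\ref{l:RBP:ii} never reintroduces a dependence on the safe numerals.
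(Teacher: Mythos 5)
Your proposal follows essentially the same route as the paper: restrict $\RBp$ to the finite subsignature occurring in the derivation tree of $f(\overline{\bfm};\overline{\bfn})$, fix $k$ from the arities there, use Lemmata~\ref{l:RBP:i} and~\ref{l:RBP:ii} to turn $R'$-derivations into $\prec_k$-chains below $\ints(f(\overline{\bfm};\overline{\bfn}))$, and then repeat the argument of Theorem~\ref{th:A} via Lemma~\ref{l:upper} and Lemma~\ref{l:pointwise}, with the safe arguments vanishing because $\ints$ sends numerals to $\varnothing$. Your explicit remarks that $k$ is independent of the inputs and that the safe positions contribute nothing are exactly the points the paper leaves implicit in its closing sentence, so the argument is correct and matches the intended proof.
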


\section{Conclusion}

The main contribution of this paper is the definition of 
a \emph{path order for $\PTIME$}, denoted as $\lpop$. 
This path order has the property that for a finite TRS $\RS$ 
compatible with $\lpop$, the \emph{derivation length function} $\Drf{\RS}{f}$
is bounded by a polynomial in the length of the inputs
for any defined function symbol $f$ in the signature of $\RS$.
Moreover $\lpop$ is \emph{complete} in the sense that for a function
$f \in \PTIME$, there exists a TRS $\RS$ computing $f$
such that such that termination of 
$\RS$ follows by $\lpop$. 
Another feature of $\lpop$ is, that its definition is devoid
of the separation of normal and safe arguments, present in the
definition of the predicative recursive functions and therefore in
the definition of the term-rewriting characterisation $\RBp$. 

We briefly relate our findings to the notion of the 
\emph{light multiset path order}, denoted
as $\llmpo$, introduced by Marion in~\cite{Marion:2003}. 
It is possible to define a variant of $\lpop$---denoted as $\lpopp$---%
such that Theorem~\ref{th:A} remains true for $\lpopp$ when suitably 
reformulated. While Definition~\ref{d:pop0} and~\ref{d:pop1} are based on
an arbitrary signature, the definition of $\lpopp$ assumes that 
normal and safe arguments are separated as in 
Section~\ref{PREC}. 
It is easy to see that $\lpopp \subset \llmpo$ and
this inclusion is strict as $\llmpo$ proves termination of the
non-feasible rewrite system $\RB$, while $\lpopp$ clearly does not. 
On the other hand let $\RS$ be a functional
program (i.e.\ a constructor TRS) computing a 
number-theoretic function $f$. 
A termination proof of $\RS$ via $\llmpo$ guarantees the
existence of a polytime algorithm for $f$. However, a termination proof of 
$\RS$ via or the introduced path order $\lpopp$ (or $\lpop$) guarantees that
$\RS$ itself is already a polytime algorithm for $f$. It seems clear
to us that the latter property is of more practical value.

}

\chapter{Derivational Complexity of Knuth-Bendix Orders revisited}
\label{lpar06}

\subsection*{Publication Details}
G.~Moser.
\newblock Derivational complexity of {K}nuth {B}endix orders revisited.
\newblock In \emph{Proceedings of the 13th International Conference on Logic
  for Programming Artificial Intelligence and Reasoning}, number 4246 in LNCS,
  pages 75--89. Springer Verlag, 2006.

\subsection*{Ranking}
The Conference on Logic Programming and Automated Reasoning has been ranked
\textbf{A} by CORE in 2007.

{
\input{lpar06.sty}

\subsection*{Abstract}

We study the derivational complexity of rewrite systems $\RS$ compatible
with Knuth-Bendix orders (KBOs for short), if the signature of $\RS$ is infinite. We show that the known bounds on the
derivation height are preserved,  if $\RS$ fulfils some mild conditions. 
This allows us to obtain bounds on the derivational height of non simply terminating TRSs.
Furthermore, we re-establish the $2$-recursive upper-bound on the derivational complexity of 
finite rewrite systems $\RS$ compatible with KBO.

\section{Introduction} \label{lpar06:Introduction}

One of the main themes in rewriting is \emph{termination}. Over the years powerful methods
have been introduced to establish termination of a given term rewrite system (TRS) $\RS$.
Earlier research mainly concentrated on inventing suitable reduction orders---% 
for example simplification orders, see Chapter 6, authored by Zantema in~\cite{Terese}---capable of
proving termination directly.
In recent years the emphasis shifted towards transformation techniques like 
the \emph{dependency pair method} or \emph{semantic labelling}, see~\cite{Terese}. 
The dependency pair method is easily automatable and lies at the heart of many 
successful termination provers like \ttt~\cite{TTT:2005} or \aprove~\cite{Aprove:2004}.
Semantic labelling with infinitely labels was conceived to be unsuitable for automation. Hence, only the variant 
with finitely many elements was incorporated (for example in \aprove ~\cite{Aprove:2004} or \torpa~\cite{Torpa:2005}). 
Very recently this belief was proven wrong. 
\tpa~\cite{TPA:2006} implements semantic labelling with natural numbers, in combination with multiset path
orders (MPOs)  efficiently. 
As remarked in~\cite{KoprowskiZantema:2006} a sensible extension of this implementation is the combination of 
semantic labelling with Knuth-Bendix orders (KBOs for short).

In order to assess the power and weaknesses of different termination techniques it is natural to look at the
length of derivation sequences, induced by different techniques. This program has been suggested 
in~\cite{HofbauerLautemann:1989}. The best known result is that for finite rewrite systems, 
MPO induces primitive recursive derivational complexity. This bound is essentially optimal, 
see~\cite{Hofbauer:1991,Hofbauer:1992}. Similar optimal results 
have been obtained for lexciographic path orders (LPOs) and KBOs.
Weiermann~\cite{Weiermann:1995:tcs} showed that LPO induce multiply recursive derivational complexity. 
In~\cite{Lepper:2001a} Lepper showed that 
for \TRSs\ compatible with KBO, the derivational complexity is bounded by the Ackermann function.

These results not only assess different proof techniques for termination, but constitute an a priori
complexity analysis for term rewrite systems (TRSs for short) provably terminating by MPO, LPO or KBO.
The application of termination provers as basis for the termination analysis of logic or functional programs is currently a very hot topic. Applicability of an a priori complexity analysis for TRSs in this direction seems likely.

While the aforementioned program has spawned a number of impressive
results, not much is known about the derivational complexity induced by
the dependency pair method or semantic labelling (for fixed base orders, obviously). 
We indicate the situation with an example. 
\begin{example} \label{ex:1}
  Consider the \TRS\ $(\FS,\RS)$~\cite{Bachmair:1987} consisting of the following rewrite rules:
  \begin{align*}
    &f(h(x)) \rew f(i(x)) && h(a) \rew b\\
    &g(i(x)) \rew g(h(x)) && i(a) \rew b \tpkt
  \end{align*}
\end{example}

It is not difficult to see that termination of $\RS$ cannot be established directly with path orders
or \KBOs. On the other hand, termination is easily shown via the dependency pair method or
via semantic labelling. For the sake of the argument we show termination via semantic labelling
with \KBOs.

We use natural numbers as semantics and as labels. 
As interpretation for the function symbols we use $a_{\N} = b_{\N} = g_{\N}(n) = f_{\N}(n) = 1$, $i_{\N}(n) = n$, and
$h_{\N}(n) = n+1$. The resulting algebra $(\N,>)$ is
a quasi-model for $\RS$. It suffices to label the symbol $f$. 
We define the labelling function $\ell_f \colon \N \to \N$ as $\ell_f(n)=n$.
Replacing
\begin{equation*}
  f(h(x)) \rew f(i(x)) \tkom
\end{equation*}
by the infinitely many rules
\begin{equation*}
  f_{n+1}(h(x)) \rew f_{n}(i(x)) \tkom
\end{equation*}
we obtain the labelled \TRS, $(\FSlab,\Rlab)$. Further the \TRS\ $(\FSlab,\Dec)$ consists of all
rules
\begin{equation*}
  f_{n+1}(x) \rew f_{n}(x) \tpkt
\end{equation*}
Now we can show termination of $\RS' \defsym \Rlab \cup \Dec$ by an instance $\gkbo$
of \KBO. We set the weight for all occurring function symbols to $1$. Further, the precedence is defined
as 
\begin{equation*}
  f_{n+1} \succ f_n \succ \dots \succ f_0 \succ i \succ h \succ g \succ a \succ b \tpkt
\end{equation*}
It is easy to see that $\RS' \subseteq \gkbo$. Thus termination of
$\RS$ is guaranteed.

As the rewrite system $\RS'$ is infinite we cannot directly apply the aforementioned result
on the derivational complexity induced by Knuth-Bendix orders. A careful study of~\cite{Lepper:2001a} reveals
that the crucial problem is not that $\RS'$ is infinite, but that the signature $\FSlab$ is infinite, as
Lepper's proof makes explicit use of the finiteness of the signature: To establish an upper-bound on the
derivational complexity of a \TRS\ $\RS$, compatible with \KBO, 
an interpretation function $\mathcal{I}$ is defined, where the cardinality of the underlying
signature is hard-coded into $\mathcal{I}$, cf.~\cite{Lepper:2001a}.

We study the situation by giving an alternative proof of Lepper's result 
compare~\cite{Lepper:2001a}.
The outcome of this study is that the assumption of finiteness of the rewrite system can be weakened. 
By enforcing conditions that are still weak enough to treat interesting rewrite systems, 
we show that for (possibly infinite) \TRSs\ $\RS$ over infinite signatures, compatible with \KBO, the 
derivation height of $\RS$ can be bounded by the Ackermann function. 
Using an example that stems from~\cite{Hofbauer:1991} we show that this upper-bound is essentially optimal. 

Specialised to Example~\ref{ex:1}, our results provide an upper bound on the derivation height function
with respect to $\RS$: 
For every $t \in \GTA{\FS}$ there exists a constant $c$ (depending only on $t$, $\RS'$, and $\gkbo$) such that 
the derivation height $\Dh{\RS}(t)$ with respect to $\RS$ is ${} \leq \Ack(c^n,0)$. 
As the constant $c$ can be made precise, the method is capable of automation. 

\medskip
This paper is organised as follows: In Section~\ref{Preliminaries} and~\ref{KBO} some basic facts
on rewriting, set theory and \KBOs\ are recalled. In Section~\ref{OrderType} we define 
an embedding from $\gkbo$ into $\glex$, the lexicographic comparison of sequences of natural numbers.
This embedding renders an alternative description of the derivation height of a term, based on
the partial order $\glex$. This description is discussed in Section~\ref{DerivationLength} and
linked to the Ackermann function in Section~\ref{Ackermann}. The above mentioned central result is
contained in Section~\ref{Sum}. Moreover in Section~\ref{Sum} we apply our result to a non simply terminating
\TRS, whose derivational complexity cannot be primitive recursively bounded.

\section{Preliminaries} \label{Preliminaries}

We assume familiarity with term rewriting. For further details see~\cite{Terese}.
Let $\VS$ denote a countably infinite set of variables and $\FS$ a signature. 
We assume that $\FS$ contains at least one constant.
The set of terms over $\F$ and $\VS$ is denoted as 
$\TA{\F}{\VS}$, while the set of ground terms is
written as $\GTA{\F}$. The set of variables occurring in a term $t$ is denoted
as $\var{t}$. The set of function symbols occurring in $t$ is denoted as $\fs{t}$.
The \emph{size} of a term $t$, written as $\size{t}$, is the number of variables and
functions symbols in it. The number of occurrences of a symbol $a \in \FS \cup \VS$ in $t$ is denoted
as $\Anz{a}{t}$.
A \emph{\TRS} $(\FS,\RS)$ over $\TA{\FS}{\VS}$ is a set of rewrite rules. 
The smallest rewrite relation that contains $\RS$ is denoted as $\rsrew{\RS}$. 
The transitive closure of $\rsrew{\RS}$ is denoted by $\rstrew{\RS}$, and
its transitive and reflexive closure by $\rssrew{\RS}$. 
A \TRS\ $(\FS,\RS)$ is called {\em terminating} if there is no infinite rewrite sequence.
As usual, we frequently drop the reference to the signature $\FS$.

A \emph{partial order} $\succ$ is an irreflexive and transitive relation. The converse of
$\succ$ is written as $\prec$. A partial order
$\succ$ on a set $A$ is \emph{well-founded} if there exists no infinite descending sequence 
$a_1 \succ a_2 \succ \cdots$ of elements of $A$. 
%
% A partial order is called \emph{linear} (or
% \emph{total}) on $A$ if for all $a,b \in A$, $a$ different from $b$, $a$ and $b$ are comparable by $\succ$.
%
A rewrite relation that is also a partial order is called \emph{rewrite order}. A well-founded 
rewrite order is called \emph{reduction order}. A \TRS\ $\RS$ and
a partial order $\succ$ are \emph{compatible} if $\RS\subseteq \succ$. We also say
that $\RS$ is compatible with $\succ$ or vice versa. A \TRS\ $\RS$ is terminating
iff it is compatible with a reduction order $\succ$.

Let $(\A,>)$ denote a well-founded weakly monotone $\FS$-algebra. 
$(\A,>)$ consists of a carrier $A$, interpretations $f_\A$ for each function symbol in $\FS$, 
and a well-founded partial order $>$ on $A$ such that every $f_\A$ is weakly monotone in all arguments.
We define a quasi-order $\geqord{\A}$: $s \geqord{\A} t$ if for all assignments $\alpha \colon \VS \to A$
$\eval{\alpha}{\A}{s} \Geqord \eval{\alpha}{\A}{t}$. Here $\Geqord$ denotes the reflexive closure of $>$.
The algebra $(\A,>)$ is a \emph{quasi-model} of a \TRS\ $\RS$, if $\RS \subseteq \geqord{\A}$. 

A \emph{labelling} $\ell$ for $\A$ consists of a set of labels
$L_f$ together with mappings $\ell_f \colon A^n \to L_f$ for every $f \in \FS$, $f$ $n$-ary. 
A labelling is called \emph{weakly monotone} if all labelling functions $\ell_f$ are weakly monotone
in all arguments. The labelled signature $\FSlab$ consists of $n$-ary functions symbols $f_a$ 
for every $f \in \FS$, $a \in L_f$, together with all $f \in \FS$, such that $L_f = \emptyset$. 
The \TRS\ $\Dec$ consists of all rules
\begin{equation*}
  f_{a+1}(x_1,\dots,x_n) \rew f_a(x_1,\dots,x_n) \tkom
\end{equation*}
for all $f \in \FS$. The $x_i$ denote pairwise different variables. 
Our definition of $\Dec$ is motivated by a similar definition in~\cite{KoprowskiZantema:2006}.
Note that the rewrite relation $\rssrew{\Dec}$ is not changed by this modification of $\Dec$.
For every assignment $\alpha$, we inductively
define a mapping $\lab{\alpha} \colon \TA{\FS}{\VS} \to \TA{\FSlab}{\VS}$:
\begin{equation*}
  \lab{\alpha}(t) \defsym 
  \begin{cases}
    t & \text{if $t \in \VS$} \tkom \\
    f(\lab{\alpha}(t_1),\dots,\lab{\alpha}(t_n)) & \text{if $t=f(t_1,\dots,t_n)$ and $L_f = \emptyset$} \tkom \\
    f_a(\lab{\alpha}(t_1),\dots,\lab{\alpha}(t_n)) & \text{otherwise} \tpkt
  \end{cases}
\end{equation*}
The label $a$ in the last case is defined as $l_f(\eval{\alpha}{\A}{t_1},\dots,\eval{\alpha}{\A}{t_n})$.
The \emph{labelled} \TRS\ $\Rlab$ over $\FSlab$ is defined as
\begin{equation*}
  \{\lab{\alpha}(l) \rew \lab{\alpha}(r) \mid \text{$l \rew r \in \RS$ and $\alpha$ an assignment}\} \tpkt
\end{equation*}

\begin{theorem}[Zantema~\cite{Zantema:1995}]
Let $\RS$ be a \TRS, $(\A,>)$ a well-founded weakly monotone quasi-model for $\RS$, and
$\ell$ a weakly monotone labelling for $(\A,>)$. Then $\RS$ is terminating iff
$\Rlab \cup \Dec$ is terminating.
\end{theorem}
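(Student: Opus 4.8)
The plan is to establish the two implications through two projections linking reductions over $\FS$ and over $\FSlab$. The first is the syntactic \emph{label-erasing} map $e \colon \TA{\FSlab}{\VS} \to \TA{\FS}{\VS}$ that replaces every labelled symbol $f_a$ by $f$ and commutes with contexts and substitutions; the second is, for a fixed assignment $\alpha$, the labelling map $\lab{\alpha}$. Two technical ingredients are needed. First, a \emph{substitution lemma} for labelling: writing $\beta(x) \defsym \eval{\alpha}{\A}{x\sigma}$ and $\hat{\sigma}(x) \defsym \lab{\alpha}(x\sigma)$, one has $\lab{\alpha}(t\sigma) = \lab{\beta}(t)\hat{\sigma}$ for every term $t$; this follows by induction on $t$ from the evaluation identity $\eval{\alpha}{\A}{t\sigma} = \eval{\beta}{\A}{t}$. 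Second, the \emph{quasi-model inequality}: for a rule $l \rew r \in \RS$ and arbitrary $\alpha,\sigma$, instantiating $\RS \subseteq {\geqord{\A}}$ at the assignment $\beta$ gives $\eval{\alpha}{\A}{l\sigma} \Geqord \eval{\alpha}{\A}{r\sigma}$.

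For the implication ``$\RS$ terminating $\Rightarrow$ $\Rlab \cup \Dec$ terminating'' I would use $e$. Every rule of $\Rlab$ has the shape $\lab{\gamma}(l) \rew \lab{\gamma}(r)$ with $l \rew r \in \RS$, and since $e(\lab{\gamma}(l)) = l$ and $e(\lab{\gamma}(r)) = r$, a step $u \rsrew{\Rlab} v$ projects to $e(u) \rsrew{\RS} e(v)$, whereas a step $u \rsrew{\Dec} v$ merely lowers one label, so $e(u) = e(v)$. Hence an infinite $\Rlab \cup \Dec$ derivation would erase to an $\RS$-derivation in which only the $\Rlab$-steps are genuine; termination of $\RS$ bounds these, so the derivation has an infinite suffix consisting solely of $\Dec$-steps. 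But $\Dec$ is terminating, as each step strictly decreases the finite multiset of labels occurring in the term with respect to the well-founded order on the label sets $L_f$; this contradiction proves termination of $\Rlab \cup \Dec$.

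For the converse ``$\Rlab \cup \Dec$ terminating $\Rightarrow$ $\RS$ terminating'' I argue by contraposition, simulating each $\RS$-step by a nonempty $\Rlab \cup \Dec$-reduction. Fix an assignment $\alpha$ and suppose $s = C[l\sigma] \rsrew{\RS} C[r\sigma] = t$. At the redex, the substitution lemma gives $\lab{\alpha}(l\sigma) = \lab{\beta}(l)\hat{\sigma}$ and $\lab{\alpha}(r\sigma) = \lab{\beta}(r)\hat{\sigma}$; as $\lab{\beta}(l) \rew \lab{\beta}(r) \in \Rlab$, this yields $\lab{\alpha}(l\sigma) \rsrew{\Rlab} \lab{\alpha}(r\sigma)$. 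Performing this step inside $\lab{\alpha}(C)$ leaves the context labels computed from the \emph{old} value $\eval{\alpha}{\A}{l\sigma}$, whereas $\lab{\alpha}(t)$ carries context labels computed from the \emph{new} value $\eval{\alpha}{\A}{r\sigma}$. By the quasi-model inequality the redex value decreases weakly; weak monotonicity of $\A$ propagates this up the context, and weak monotonicity of $\ell$ turns it into a weak decrease of each context label. The $\Dec$-rules then repair the context, giving $\lab{\alpha}(s) \rsrew{\Rlab} u \rssrew{\Dec} \lab{\alpha}(t)$ for some $u$. Consequently an infinite $\RS$-derivation from $s_1$ yields an infinite reduction $\lab{\alpha}(s_1) \rstrew{\Rlab \cup \Dec} \lab{\alpha}(s_2) \rstrew{\Rlab \cup \Dec} \cdots$, refuting termination of $\Rlab \cup \Dec$.

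The \emph{main obstacle} is exactly the context-repair step in the converse direction: one must verify rigorously that the weak decrease $\eval{\alpha}{\A}{l\sigma} \Geqord \eval{\alpha}{\A}{r\sigma}$ forces, for every symbol on the path from the root of $C$ down to the redex, a weak decrease of its computed label, which requires the combined weak monotonicity of both the algebra operations and the labelling functions $\ell_f$. This must be coupled with the auxiliary fact that $a \Geqord b$ implies $f_a \rssrew{\Dec} f_b$; with $\Dec$ given only by the successor rules $f_{a+1} \rew f_a$, this uses precisely the observation recorded before the theorem that $\rssrew{\Dec}$ is unchanged by restricting to successor steps. The remaining ingredients---the two substitution and evaluation identities and the well-foundedness argument for $\Dec$---are routine inductions.
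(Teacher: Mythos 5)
Your proposal is correct and follows essentially the route the paper relies on: the converse direction is built around exactly the simulation statement recorded as Lemma~\ref{l:lpar06:zantema}, the only (immaterial) difference being that you factor each lifted step as $\rsrew{\Rlab}\cdot\rssrew{\Dec}$ (rewrite first, then repair the context labels) whereas the paper states it as $\rssrew{\Dec}\cdot\rsrew{\Rlab}$ (lower the context labels first). Your treatment of the easy direction via label erasure plus well-foundedness of $\Dec$, and your identification of the context-repair step as the point where weak monotonicity of both the algebra and the labelling is consumed, are exactly the standard argument the paper defers to Zantema's work for.
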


The proof of the theorem uses the following lemma.
\begin{lemma} \label{l:lpar06:zantema}
  Let $\RS$ be a \TRS, $(\A,>)$ a quasi-model of $\RS$, and $\ell$ a weakly monotone labelling for
$(\A,>)$. If $s \rsrew{\RS} t$, then $\lab{\alpha}(s) \rssrew{\Dec} \cdot \rsrew{\Rlab} \lab{\alpha}(t)$
for all assignments $\alpha$.
\end{lemma}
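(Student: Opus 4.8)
The plan is to argue by induction on the depth of the position at which the step $s \rsrew{\RS} t$ takes place. Writing $s = C[l\sigma]$ and $t = C[r\sigma]$ for a rule $l \rew r \in \RS$, a context $C$ and a substitution $\sigma$, I induct on $C$ with the assignment $\alpha$ fixed. The whole argument rests on one commutation lemma and on the interplay between the quasi-model property and weak monotonicity; everything else is just context closure of the two labelled rewrite relations $\rssrew{\Dec}$ and $\rsrew{\Rlab}$.

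First I would establish a \emph{substitution lemma}: for every term $u$, every assignment $\alpha$ and every substitution $\sigma$,
\[
\lab{\alpha}(u\sigma) = \lab{\beta}(u)\,\sigma_\alpha \tkom
\]
where $\beta$ is the assignment $x \mapsto \eval{\alpha}{\A}{x\sigma}$ and $\sigma_\alpha$ is the substitution $x \mapsto \lab{\alpha}(x\sigma)$. This is a routine induction on $u$; the only point is that the head label produced on the left, $\ell_f(\eval{\alpha}{\A}{u_1\sigma},\dots)$, agrees with the head label of $\lab{\beta}(u)$ because evaluation commutes with substitution, i.e.\ $\eval{\alpha}{\A}{u_j\sigma} = \eval{\beta}{\A}{u_j}$. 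Applying this to $l$ and $r$, and using that $\lab{\beta}(l) \rew \lab{\beta}(r)$ is by definition a rule of $\Rlab$, the base case (empty context) follows at once: $\lab{\alpha}(l\sigma) = \lab{\beta}(l)\sigma_\alpha \rsrew{\Rlab} \lab{\beta}(r)\sigma_\alpha = \lab{\alpha}(r\sigma)$, so here no $\Dec$-steps are needed.

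For the inductive step suppose the redex sits inside the $i$-th argument, $s = f(s_1,\dots,s_i,\dots,s_n)$, $t = f(s_1,\dots,t_i,\dots,s_n)$ with $s_i \rsrew{\RS} t_i$. By the induction hypothesis $\lab{\alpha}(s_i) \rssrew{\Dec} u_i \rsrew{\Rlab} \lab{\alpha}(t_i)$ for some $u_i$, and since both $\rssrew{\Dec}$ and $\rsrew{\Rlab}$ are closed under contexts, the only genuine difficulty is the label attached to the head symbol $f$ (when $L_f \neq \emptyset$): $\lab{\alpha}(s)$ carries $f_a$ with $a = \ell_f(\dots,\eval{\alpha}{\A}{s_i},\dots)$, whereas $\lab{\alpha}(t)$ carries $f_b$ with $b = \ell_f(\dots,\eval{\alpha}{\A}{t_i},\dots)$. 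This is exactly where $\Dec$ is used. Because $(\A,>)$ is a weakly monotone quasi-model we have ${\rsrew{\RS}} \subseteq {\geqord{\A}}$, so $s_i \rsrew{\RS} t_i$ forces $\eval{\alpha}{\A}{s_i} \Geqord \eval{\alpha}{\A}{t_i}$, and weak monotonicity of the labelling $\ell_f$ then yields $a \geqslant b$ in $L_f$, so that $f_a(\dots) \rssrew{\Dec} f_b(\dots)$. Hence I first lower the head label by $\Dec$, then push the $\Dec$-part of the induction hypothesis into the $i$-th argument (still via $\Dec$), and finally apply the single $\Rlab$-step in that argument, giving $\lab{\alpha}(s) \rssrew{\Dec} f_b(\dots,u_i,\dots) \rsrew{\Rlab} \lab{\alpha}(t)$, as required. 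The case $L_f = \emptyset$ is identical without the head-label adjustment.

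The main obstacle is precisely this head-label bookkeeping: one must check that the weak decrease of the semantic value under a rewrite step (quasi-model plus weak monotonicity of $\A$) transports through the labelling functions to a weak decrease of labels, and that $\rssrew{\Dec}$ can realise any such weak decrease even though $\Dec$ is formulated only with successor steps $f_{a+1} \to f_a$ — this is guaranteed by the earlier remark that passing to the reflexive–transitive closure $\rssrew{\Dec}$ recovers the full ``decrease to any smaller label'' relation. Once this is in place, the placement of all $\rssrew{\Dec}$-steps before the unique $\rsrew{\Rlab}$-step matches the claimed shape $\rssrew{\Dec} \cdot \rsrew{\Rlab}$ automatically.
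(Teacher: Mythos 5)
Your proof is correct. The paper itself states this lemma without proof --- it is the key lemma behind Zantema's semantic-labelling theorem, imported from \cite{Zantema:1995} --- and your argument (induction on the context, the substitution lemma identifying $\lab{\alpha}(u\sigma)$ with $\lab{\beta}(u)\sigma_\alpha$, and the use of the quasi-model property plus weak monotonicity of $\ell_f$ to realise each head-label decrease by $\rssrew{\Dec}$ before the single $\rsrew{\Rlab}$-step) is exactly the standard one.
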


We briefly review a few basic concepts from set-theory in particular ordinals, 
see~\cite{Jech}. We write $>$ to denote the well-ordering of ordinals.
Any ordinal $\alpha \not= 0$, smaller than $\epsilon_0$, can uniquely be represented by its 
\emph{Cantor Normal Form} (\emph{CNF} for short)
\begin{equation*}
  \omega^{\alpha_1}n_1 + \dots \omega^{\alpha_k}n_k \qquad \text{with} \ \alpha_1 > \dots > \alpha_k \tpkt
\end{equation*}
To each well-founded partial order $\succ$ on a set $A$ we can associate a (set-theoretic) ordinal, 
its \emph{order type}. First we associate an ordinal to each element $a$ of $A$ by setting
$\otp_{\succ}(a) \defsym \sup \{ \otp_{\succ}(b) + 1 \colon b \in A
\ \text{and} \ b \succ a \}$.
The \emph{order type} of $\succ$, denoted by $\otp(\succ)$, is the supremum of 
$\otp_{\succ}(a) + 1$ with $a \in A$.
For two partial orders $\succ$ and $\succ'$ on $A$ and $A'$, respectively, a mapping $\opm \colon A \to A'$
\emph{embeds} $\succ$ into $\succ'$ if for all $p,q \in A$, $p \succ q$ implies $\opm(p) \succ' \opm(q)$.
Such a mapping is an \emph{order-isomorphism} if it is bijective and the partial orders 
$\succ$ and $\succ'$ are linear .

\section{The Knuth Bendix Orders} \label{KBO}

%
% Weight
%
A \emph{weight function} for $\FS$ is a pair $(\Weight,w_0)$ consisting of 
a function $\Weight \colon \FS \to \N$ and a minimal weight $w_0 \in \N$, $w_0 >0$ such that
$\weight{c} \geq w_0$ if $c$ is a constant.
A weight function $(\Weight,w_0)$ is called \emph{admissible} for a precedence $\succ$
if $\spsym \succ g$ for all $g \in\FS$ different from $\spsym$, when $\spsym$ is unary with 
$\weight{\spsym} = 0$. The function symbol $\spsym$ (if present) is called \emph{special}.
The \emph{weight} of a term $t$, denoted as $\wei{t}$
is defined inductively. Assume $t$ is a
variable, then set $\wei{t} \defsym w_0$,
otherwise if $t=\term{g}{t}{1}{n}$, we define
$\wei{t} \defsym \weight{g} + \wei{t_1} + \dots + \wei{t_n}$.

%
% Ingo's Def.
%
The following definition of \KBO{} is tailored to our purposes. 
It is taken from~\cite{Lepper:2001a}. We write $s = \spsym^a s'$ if
$s = \spsym^a(s')$ and the root symbol of $s'$ is distinct from the special symbol $\spsym$. 
Let $\succ$ be a precedence. The \emph{rank} of a function symbol is defined as: 
$\rk(f) \defsym \max \{\rk(g)+1 \mid f \succ g\}$. (To assert well-definedness we
stipulate $\max(\emptyset) = 0$.)

\begin{definition}\label{d:kbo}
Let $(\Weight,w_0)$ denote an admissible weight function for $\FS$ and let 
$\succ$ denote a precedence on $\FS$.
We write $\spsym$ for the special symbol. 
The \emph{Knuth Bendix order} $\gkboVar$ on $\TA{\FS}{\VS}$ is inductively
defined as follows:
$s \gkboVar t$ if $\Anz{x}{s} \geq \Anz{x}{t}$ for all $x \in \VS$ and
\begin{enumerate}
\item $\wei{s} > \wei{t}$, or
\item $\wei{s} = \wei{t}$, $s = \spsym^a s'$, $t = \spsym^b t'$, where
  $s' = \term{g}{s}{1}{n}$, $t'=\term{h}{t}{1}{m}$, and one of the 
  following cases holds.
  \begin{enumerate}
  \item $a > b$, or
  \item $a=b$ and $g \succ h$, or 
  \item $a=b$, $g=h$, and $(s_1,\dots,s_n) \gkboVarlex (t_1,\dots,t_n)$.
  \end{enumerate}
\end{enumerate}
\end{definition}

Let $\gkbo$ denote the \KBO\ on terms in its usual definition, see~\cite{Terese}.
The following lemma, taken from~\cite{Lepper:2001a}, states that both orders are
interchangeable.
\begin{lemma}[Lepper~\cite{Lepper:2001a}]
  The orders $\gkbo$ and $\gkboVar$ coincide.
\end{lemma}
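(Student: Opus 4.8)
The plan is to prove the equivalence $s \gkbo t \iff s \gkboVar t$ by induction on $\size{s} + \size{t}$. The first observation is that the two definitions literally share two ingredients: the variable–occurrence side condition $\Anz{x}{s} \geq \Anz{x}{t}$ for all $x \in \VS$, and the weight–dominance case $\wei{s} > \wei{t}$. Hence it suffices to treat terms satisfying the occurrence condition with $\wei{s} = \wei{t}$, and to show that in this situation the tie-breaking (structural) parts of $\gkbo$ and $\gkboVar$ agree. The crucial bookkeeping fact is that the special symbol $\spsym$ has $\weight{\spsym} = 0$, so writing $s = \spsym^a s'$ and $t = \spsym^b t'$ (with $s'$, $t'$ having root distinct from $\spsym$) we have $\wei{s} = \wei{s'}$ and $\wei{t} = \wei{t'}$; thus stripping a layer of $\spsym$ from each side preserves equality of weights.

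The core of the argument is to show that the standard order $\gkbo$, on equal-weight terms, reproduces exactly the lexicographic test on $(a,g,(s_1,\dots,s_n))$ versus $(b,h,(t_1,\dots,t_m))$ that $\gkboVar$ performs, where $s' = g(s_1,\dots,s_n)$ and $t' = h(t_1,\dots,t_m)$. I would do this by peeling off the $\spsym$-powers one at a time. While $a,b \geq 1$, both $s$ and $t$ have root $\spsym$, so $\gkbo$ can only proceed through its equal-root lexicographic clause; since $\spsym$ is unary this reduces $s \gkbo t$ to $\spsym^{a-1}s' \gkbo \spsym^{b-1}t'$, again at equal weight. Iterating $\min(a,b)$ times leaves three cases. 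If $a > b$, the comparison becomes $\spsym^{a-b}s' \gkbo t'$ with the left root $\spsym$ and the right root $h \neq \spsym$; admissibility forces $\spsym \succ h$, so $\gkbo$ succeeds through its precedence clause, matching case (2a) of $\gkboVar$. The case $a < b$ is symmetric and fails on both sides, again because $\spsym$ is maximal in $\succ$. If $a = b$, the comparison reduces to $s' \gkbo t'$ with both roots distinct from $\spsym$: when $g \succ h$ both orders succeed (precedence clause versus case (2b)); when $g = h$ both fall back to comparing the argument tuples, where the lexicographic extension of $\gkbo$ coincides with $\gkboVarlex$ by the induction hypothesis applied to the strictly smaller subterms $s_i$, $t_i$ (this matches case (2c)); and otherwise both fail.

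The step I expect to be the main obstacle is the collection of degenerate base cases in which $s'$ or $t'$ is a variable rather than a proper term $g(s_1,\dots,s_n)$, so that the generic clause (2) of Definition~\ref{d:kbo} does not literally apply and one must instead reconcile the variant with the standard clause $\spsym^k(x) \gkbo x$. Here I would argue by a short weight analysis: the equality $\wei{t} = w_0$ (which is forced as soon as a bare variable sits under the $\spsym$-power, since $w_0$ is the minimal weight) pins down the admissible shapes of $s$, and then the occurrence condition $\Anz{x}{s} \geq \Anz{x}{t}$ determines in each case whether the comparison should hold; checking that $\gkbo$ and $\gkboVar$ return the same verdict in each of these constrained configurations completes the induction. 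Everything else is routine, and the equivalence of $\gkbo$ and $\gkboVar$ then follows.
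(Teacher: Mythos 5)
The paper offers no proof of this lemma: it is imported verbatim from Lepper~\cite{Lepper:2001a}, so there is no internal argument to compare yours against. Your main line of reasoning is nonetheless the natural one and is sound where it applies: the variable-occurrence condition and the weight-dominance case are literally shared, $\weight{\spsym}=0$ makes weight equality invariant under stripping the special symbol, and when both roots are $\spsym$ the standard order can only proceed through its equal-root lexicographic clause, so peeling $\min(a,b)$ layers and matching the residual configurations $a>b$, $a<b$, $a=b$ against clauses (2a)--(2c) of Definition~\ref{d:kbo} (using admissibility to get $\spsym \succ h$ in the first case) is exactly right.

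The gap sits precisely where you flagged it, but your proposed resolution does not close it. Under the literal reading of Definition~\ref{d:kbo}, clause (2) presupposes $s' = g(s_1,\dots,s_n)$ \emph{and} $t' = h(t_1,\dots,t_m)$, so it is inapplicable whenever $s'$ or $t'$ is a variable. Hence $\spsym(x) \gkboVar x$ fails --- the weights are both $w_0$, so clause (1) is unavailable, and clause (2) does not apply to $t'=x\in\VS$ --- whereas the standard order has the dedicated clause yielding $\spsym^k(x)\gkbo x$ for all $k\geq 1$. So the "short weight analysis" you defer to would not confirm agreement; it would expose a configuration in which the two orders genuinely disagree, and no amount of exploiting $\wei{t}=w_0$ repairs this. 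The equivalence as stated holds only if one either restricts both orders to ground terms (which is how the paper subsequently uses the lemma: the derivation height is defined on $\GTA{\FS}$ only) or reads clause (2) with the convention that a variable counts as a constant minimal in the precedence. Your proof must commit explicitly to one of these readings; once that is done, the degenerate cases do reduce to the routine check you describe and the induction goes through.
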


In the literature \emph{real-valued} \KBOs{} and other generalisations of \KBOs{} are studied as well, 
cf.~\cite{Martin:1987,Dershowitz:1987}. 
However, as established in~\cite{KorovinVoronkov:2003}
any \TRS{} shown to be terminating by a real-valued \KBO{} 
can be shown to be terminating by a integer-valued \KBO{}. 

\section{Exploiting the Order-Type of \KBOs} \label{OrderType}

We write $\N^\ast$ to denote the set of finite sequences of natural numbers. Let $p \in \N^\ast$,
we write $\length{p}$ for the \emph{length} of $p$, i.e.\ the number of positions in the sequence $p$. 
The $i^{\mathrm{th}}$ element of the sequence $a$ is denoted as $(p)_{i-1}$. We write $p \cons q$ 
to denote the concatenation of the sequences $p$ and $q$.
The next definition is standard but included here, for sake of completeness.
\begin{definition}
We define the \emph{lexicographic order} on $\N^\ast$. If $p,q \in \N^\ast$, then $p \glex q$ if,
\begin{itemize}
\item $\length{p} > \length{q}$, or
\item $\length{p} = \length{q}=n$ and there exists $i \in [0,n-1]$, such that
  for all $j \in [0,i-1]$ $(p)_j = (q)_j$ and $(p)_i > (q)_i$.
\end{itemize}
\end{definition}
It is not difficult to see that $\otp(\glex) = \omega^\omega$, moreover in~\cite{Lepper:2001a} it
is shown that $\otp(\gkbo) = \omega^{\omega}$. Hence $\otp(\glex)=\otp(\gkbo)$, a fact we exploit below.
However, to make this work, we have to restrict our attention to signatures $\FS$ with bounded
arities. The maximal arity of $\FS$ is denoted as $\Ar{\FS}$.

\begin{definition}
  Let the signature $\FS$ and a weight function $(\Weight,w_0)$ for $\FS$ be fixed. 
  We define an embedding $\tw \colon \TA{\FS}{\VS} \to \N^\ast$. Set $b \defsym \max\{\Ar{\FS},3\}+1$.
\begin{equation*}
  \tw(t) \defsym 
  \begin{cases}
    (w_0,a,0)\cons 0^m & \text{if $t = \spsym^a x$, $x \in \VS$}  \tkom \\
    (\wei{t},a,\rk(g)) \cons \tw(t_1) \cons \dots \cons \tw(t_n) \cons 0^m & \text{if $t=\spsym^a g(t_1,\dots,t_n)$} \tpkt
  \end{cases}
\end{equation*}
The number $m$ is set suitably, so that $\length{\tw(t)} = b^{\wei{t}+1}$.
\end{definition}

The mapping $\tw$ flattens a term $t$ by transforming it
into a concatenation of triples. Each triple holds the weight of the considered subterm $r$, the
number of leading special symbols and the rank of the first non-special function symbol of $r$. In this
way all the information necessary to compare two terms via $\gkbo$ is expressed as a very simple
data structure: a list of natural numbers. 

\begin{lemma} \label{l:lpar06:lemma1}
 $\tw$ embeds $\gkbo$ into $\glex$: If $s \gkbo t$, then $\tw(s) \glex \tw(t)$.
\end{lemma}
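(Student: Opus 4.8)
The plan is to prove the statement by induction on $\size{s}$, with $t$ universally quantified, following the case structure of Definition~\ref{d:kbo}; recall that by Lepper's lemma $\gkbo$ and $\gkboVar$ coincide, so I may reason with $\gkboVar$. Throughout I would use the defining property $\length{\tw(u)} = b^{\wei{u}+1}$ and the fact that, since $b = \max\{\Ar{\FS},3\}+1 \geq 2$, the map $n \mapsto b^{n+1}$ is strictly increasing. I write $s = \spsym^{a} s'$ and $t = \spsym^{a'} t'$, using $a'$ in place of the symbol $b$ of Definition~\ref{d:kbo} so as not to clash with the base $b$ of $\tw$; note also that $(\tw(u))_0 = \wei{u}$ for every term $u$, directly from the two clauses defining $\tw$.

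First I would dispatch the weight case: if $\wei{s} > \wei{t}$, then $\length{\tw(s)} = b^{\wei{s}+1} > b^{\wei{t}+1} = \length{\tw(t)}$, so $\tw(s) \glex \tw(t)$ by the length clause of $\glex$. This leaves $\wei{s} = \wei{t}$, where $\length{\tw(s)} = \length{\tw(t)}$ and I must instead exhibit the first position at which the sequences differ. The leading triples are $(\wei{s}, a, \rk(g))$ and $(\wei{t}, a', \rk(h))$, with $g,h$ the roots of $s',t'$; their $0$-th entries coincide since $\wei{s} = \wei{t}$. If $a > a'$ (subcase 2(a)), the sequences first differ at index $1$, where $\tw(s)$ carries $a > a'$, so $\tw(s) \glex \tw(t)$. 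If $a = a'$ and $g \succ h$ (subcase 2(b)), indices $0$ and $1$ agree and the sequences differ at index $2$; here I would invoke the elementary fact that $g \succ h$ implies $\rk(g) \geq \rk(h)+1 > \rk(h)$ (immediate from $\rk(f) = \max\{\rk(g)+1 \mid f \succ g\}$), and conclude $\tw(s) \glex \tw(t)$.

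The remaining subcase 2(c) — equal weights, $a = a'$, $g = h$, and $(s_1,\dots,s_n) \gkboVarlex (t_1,\dots,t_n)$ — is where the real work lies and the step I expect to be the main obstacle. Here the leading triples agree, and $\tw(s) = T \cons \tw(s_1) \cons \dots \cons \tw(s_n) \cons 0^{m_s}$, $\tw(t) = T \cons \tw(t_1) \cons \dots \cons \tw(t_n) \cons 0^{m_t}$ with $T = (\wei{s}, a, \rk(g))$. I would pick the least $k$ with $s_i = t_i$ for $i < k$ and $s_k \gkbo t_k$. The blocks $\tw(s_i) = \tw(t_i)$ for $i < k$ are then literally equal, so $\tw(s)$ and $\tw(t)$ share a common prefix $u$ consisting of $T$ together with these blocks. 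By the induction hypothesis applied to the proper subterm $s_k$ I obtain $\tw(s_k) \glex \tw(t_k)$, and it remains to lift this comparison through the common prefix and the equal total length.

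To carry out the lifting I would isolate a short auxiliary observation: if $\tw(s) = u \cons A \cons p$ and $\tw(t) = u \cons B \cons q$ with $A = \tw(s_k)$, $B = \tw(t_k)$, $A \glex B$, and $\length{\tw(s)} = \length{\tw(t)}$, then $\tw(s) \glex \tw(t)$. This splits according to how $A \glex B$ arises. If $\length{A} > \length{B}$, i.e.\ $\wei{s_k} > \wei{t_k}$, then $(A)_0 = \wei{s_k} > \wei{t_k} = (B)_0$, so the very first position past $u$ already witnesses $\tw(s) \glex \tw(t)$, irrespective of how the later blocks fail to align. If instead $\length{A} = \length{B}$, the first index where $A$ and $B$ differ sits at the same offset in both whole sequences and is the first place $\tw(s)$ and $\tw(t)$ differ, again with $\tw(s)$ larger. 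That $\wei{s_k} \geq \wei{t_k}$, so these are the only two possibilities, follows from $s_k \gkbo t_k$ via the weight condition of Definition~\ref{d:kbo}. Finally I would remark that the degenerate cases in which $s'$ or $t'$ is a variable reduce, after stripping special symbols, to a comparison of triples of the form $(w_0, a, 0)$ and are subsumed by subcase 2(a); and that the variable-occurrence side condition of $\gkbo$ is not needed, since $\tw$ records only weights, special-symbol counts and ranks. The one genuinely delicate point is precisely the block misalignment of subcase 2(c), handled above by the observation that whenever the weights of the decisive arguments differ, the witnessing difference already occurs at the first entry of the $k$-th block.
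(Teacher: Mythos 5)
Your proof is correct, and it rests on the same essential insight as the paper's: the leading entry of each block $\tw(u)$ is $\wei{u}$, and $\length{\tw(u)} = b^{\wei{u}+1}$ is strictly monotone in the weight, so a length mismatch between the decisive blocks forces a mismatch already at their first entries. Where you genuinely diverge is in how the block misalignment of subcase 2(c) is packaged. The paper strengthens the statement to
$s \gkbo t \land \length{\tw(s) \cons r} = \length{\tw(t) \cons r'} \folgt \tw(s) \cons r \glex \tw(t) \cons r'$
for arbitrary appended tails $r,r'$ and inducts on the derivation of $s \gkbo t$; in subcase 2(c) the trailing blocks $\tw(s_{k+1}) \cons \cdots$ and $\tw(t_{k+1}) \cons \cdots$ are simply absorbed into the tails of the inductive call on $s_k \gkbo t_k$, so no separate alignment argument is needed. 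You instead keep the unstrengthened statement, induct on $\size{s}$, and discharge the misalignment with your prefix-lifting observation, splitting on whether $\wei{s_k} > \wei{t_k}$ (first entries of the blocks already differ) or $\wei{s_k} = \wei{t_k}$ (blocks have equal length, so the first difference sits at the same offset). Both devices are sound and of comparable effort; yours makes explicit \emph{why} the misalignment is harmless, while the paper's tail trick is the more routine ``strengthen the induction hypothesis'' move and avoids the case split. The one item the paper's proof covers that you pass over is the verification that $\tw$ is well defined, i.e.\ that the padding exponent $m$ can always be chosen nonnegative because $3 + b^{\wei{t_1}+1} + \dots + b^{\wei{t_n}+1} \leq b^{\wei{t}+1}$; this really belongs to the definition rather than to the embedding argument, but since your case analysis quietly relies on every $\tw(t_i)$ being a well-formed block of the prescribed length, it deserves a line.
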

\begin{proof}
The proof follows the pattern of the proof of Lemma 9 in~\cite{Lepper:2001a}.

Firstly, we make sure that the mapping $\tw$ is \emph{well-defined}, i.e., we show 
that the length restriction can be met. 
We proceed by induction on $t$; let $t=\spsym^a t'$. 
We consider two cases (i) $t' \in \VS$ or (ii) $t' = g(t_1,\dots,t_n)$. Suppose the former:
\begin{equation*}
  \length{(w_0,a,0)} = 3 \leq b^{\wei{t}+1} \tpkt
\end{equation*}
Now suppose case (ii): Let $j = \rk(g)$, we obtain
\begin{align*}
  \length{(\wei{t},a,j) \cons \tw(t_1) \cons \dots \cons \tw(t_n)} & = 
  3 + b^{\wei{t_1}+1} + \dots + b^{\wei{t_n}+1} \\
  & \leq 3+ n \cdot b^{\wei{t}} \leq b^{\wei{t}+1} \tpkt
\end{align*}

Secondly, we show the following, slight generalisation of the lemma:
\begin{equation}
  \label{eq:lpar06:embedding}
  s \gkbo t \land \length{\tw(s) \cons r} = \length{\tw(t) \cons r'} \folgt \tw(s) \cons r \glex \tw(t) \cons r' \tpkt
\end{equation}
To prove~\eqref{eq:lpar06:embedding} we proceed by induction on $s \gkbo t$. Set $p = \tw(s) \cons r$, $q = \tw(t) \cons r'$.

\medskip
\noindent
\textsc{Case} $\wei{s} > \wei{t}$: By definition of the mapping $\tw$, we have: If $\wei{s} > \wei{t}$, then
$(\tw(s))_0 > (\tw(t))_0$. Thus  $p \glex q$ follows.

\medskip
\noindent
\textsc{Case} $\wei{s} = \wei{t}$: We only consider the sub-case where $s= \spsym^a g(s_1,\dots,s_n)$ and
$t=\spsym^a g(t_1,\dots,t_n)$ and there exists $i \in [1,n]$ such that
$s_1=t_1,\dots,s_{i-1}=t_{i-1}$, and $s_i \gkbo t_i$. (The other cases are treated as in the case above.)
The induction hypothesis expresses that if $\length{\tw(s_i) \cons v} = \length{\tw(t_i) \cons v'}$, then
$\tw(s_i) \cons v \glex \tw(t_i) \cons v'$. For $j = \rk(g)$, we obtain
\begin{align*}
  & p = \overbrace{(\wei{s},a,j) \cons \tw(s_1) \cons \dots \cons \tw(s_{i-1})}^{w} \cons \tw(s_i) \cons \dots \cons \tw(s_n) \cons r \tkom \\
  & q = \underbrace{(\wei{s},a,j) \cons \tw(s_1) \cons \dots \cons \tw(s_{i-1})}_{w} \cons \tw(t_i) \cons \dots \cons \tw(t_n) \cons r' \tpkt
\end{align*}
Due to $\length{p} = \length{q}$, we conclude
\begin{equation*}
  \length{\tw(s_i) \cons \dots \cons \tw(s_n) \cons r} = \length{\tw(t_i) \cons \dots \cons \tw(t_n) \cons r'} \tpkt
\end{equation*}
Hence induction hypothesis is applicable and we obtain
\begin{equation*}
  \tw(s_i) \cons \dots \cons \tw(s_n) \cons r \glex \tw(t_i) \cons \dots \cons \tw(t_n) \cons r' \tkom
\end{equation*}
which yields $p \glex q$. This completes the proof of~\eqref{eq:lpar06:embedding}. 

Finally, to establish the lemma, we assume
$s \gkbo t$. By definition either $\wei{s} > \wei{t}$ or $\wei{s} = \wei{t}$. In the latter case $\tw(s) \glex \tw(t)$ follows by~\eqref{eq:lpar06:embedding}. While in the former $\tw(s) \glex \tw(t)$ follows as $\wei{s} > \wei{t}$ implies $\length{\tw(s)} > \length{\tw(t)}$.
\end{proof}

\section{Derivation Height of Knuth-Bendix Orders} \label{DerivationLength}

Let $\RS$ be a \TRS\ and $\gkbo$ a \KBO\ such that
$\gkbo$ is compatible with $\RS$. The \TRS\ $\RS$ and the \KBO\ $\gkbo$ are fixed
for the remainder of the paper.
We want to extract an upper-bound on the length of derivations in $\RS$. We recall the central definitions. Note that we can restrict the definition to the set ground terms.
The \emph{derivation height} function $\Dh{\RS}$ (with respect to
$\RS$ on $\GTA{\FS}$) is defined as follows.
\begin{align*}
  \Dh{\RS}(t) & \defsym \max (\{n \mid 
  \exists (t_0,\dots,t_n) \ t=t_0 \rsrew{\RS} t_1 \rsrew{\RS} \dots \rsrew{\RS} t_n \}) \tpkt 
\end{align*}

We introduce a couple of \emph{measure functions} for term and sequence complexities,
respectively. The first measure $\spec \colon \TA{\F}{\VS} \to \N$ bounds the maximal nesting of special symbols in
the term:
\begin{align*}
    \spec(t) &\defsym 
    \begin{cases}
      a & \text{if $t = \spsym^a x$, $x \in \VS$} \tkom\\
      \max(\{a\}\cup\{\spec(t_j) \mid j \in [1,n]\}) & \text{if $t = \spsym^a g(t_1,\dots,t_n)$} \tpkt
    \end{cases}\\[2ex]
\intertext{The second and third measure $\rk \colon \TA{\F}{\VS} \to \N$ and $\mrk \colon \TA{\FS}{\VS} \to \N$ 
collect information on the ranks of non special function symbols occurring:
}
    \rk(t) &\defsym 
    \begin{cases}
      0 & \text{if $t = \spsym^a x$, $x \in \VS$} \tkom \\
      j & \text{if $t = \spsym^a g(t_1,\dots,t_n)$ and $\rk(g) = j$} \tkom
    \end{cases}\\[2ex]
     \mrk(t) &\defsym 
    \begin{cases}
      0 & \text{if $t = \spsym^a x$, $x \in \VS$} \tkom \\
      \max(\{j\} \cup \{\mrk(t_i) \mid i \in [1,n]\}) & \text{if $t = \spsym^a g(t_1,\dots,t_n)$, $\rk(g) = j$} \tpkt
    \end{cases}
  \end{align*}

The fourth measure $\mc \colon \N^\ast \to \N$ considers sequences $p$ and bounds the maximal number 
occurring in $p$:
  \begin{equation*}
    \mc(p) \defsym \max (\{ (p)_i \mid i \in [0,\length{p}-1]\}) \tpkt
  \end{equation*}
It is immediate from the definitions that for any term $t$: $\spec(t),\rk(t), \mrk(t) \leq \mc(\tw(t))$. We
write $r \subterm t$ to denote the fact that $r$ is a subterm of $t$.

\begin{lemma} \label{l:lpar06:help1}
  If $r \subterm t$, then $\mc(\tw(t)) \geq \mc(\tw(r))$.
\end{lemma}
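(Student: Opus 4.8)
The plan is to prove the statement by induction on the structure of $t$ (equivalently on $\size{t}$), after first recording two elementary facts about $\mc$ and $\tw$ that render the argument almost mechanical. The first fact is that $\mc$ is insensitive to the padding blocks $0^m$ inserted by $\tw$: since $w_0 > 0$, every sequence $\tw(s)$ contains a strictly positive entry, so adding or deleting trailing zeros never changes its maximal entry. The second fact is that $\mc$ distributes over concatenation as a maximum, i.e.\ $\mc(p \cons q) = \max\{\mc(p),\mc(q)\}$ for non-empty $p,q$, which is immediate from the reading of $\mc(p)$ as the largest number occurring in $p$.

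From these two facts I would first read off the shape of $\mc(\tw(t))$. Writing $t = \spsym^a g(t_1,\dots,t_n)$ (the variable case $t = \spsym^a x$ being analogous), the definition of $\tw$ gives $\tw(t) = (\wei{t},a,\rk(g)) \cons \tw(t_1) \cons \dots \cons \tw(t_n) \cons 0^m$, whence $\mc(\tw(t)) = \max\{\wei{t},a,\rk(g),\mc(\tw(t_1)),\dots,\mc(\tw(t_n))\}$. In particular $\mc(\tw(t)) \geq \mc(\tw(t_i))$ for every argument $t_i$, because each block $\tw(t_i)$ is a contiguous part of $\tw(t)$ and hence all its entries occur in $\tw(t)$.

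With this in hand the induction is a case analysis on how the subterm $r$ sits inside $t$. If $r = t$ there is nothing to prove. If $r$ is a subterm of one of the arguments $t_i$, then the induction hypothesis applied to $t_i$ yields $\mc(\tw(t_i)) \geq \mc(\tw(r))$, and combining this with $\mc(\tw(t)) \geq \mc(\tw(t_i))$ from the previous paragraph gives the claim. The remaining possibility is $r = \spsym^b g(t_1,\dots,t_n)$ with $0 \leq b < a$, obtained by stripping some of the leading special symbols.

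This last case is the only point needing care, and I expect it to be the main (if modest) obstacle, since here $\tw(r)$ is \emph{not} literally a contiguous substring of $\tw(t)$: its head triple is $(\wei{r},b,\rk(g))$ rather than $(\wei{t},a,\rk(g))$. The observation that saves the argument is that the special symbol has weight $0$, so $\wei{r} = \wei{t}$; thus the two head triples agree except in the middle entry, where $b < a$, while the argument blocks $\tw(t_1),\dots,\tw(t_n)$ are shared verbatim. Hence $\mc(\tw(r)) = \max\{\wei{t},b,\rk(g),\mc(\tw(t_1)),\dots,\mc(\tw(t_n))\} \leq \mc(\tw(t))$, using $b < a$. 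The base case $t = \spsym^a x$ is handled identically: its subterms are exactly the $\spsym^b x$ with $b \leq a$, and $\mc(\tw(\spsym^b x)) = \max\{w_0,b\} \leq \max\{w_0,a\} = \mc(\tw(t))$. This exhausts all cases and establishes the lemma.
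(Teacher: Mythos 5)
Your proof is correct and rests on the same observation as the paper's own (informal) argument: every entry of $\tw(r)$ --- a weight, a nesting count of special symbols, or a rank --- is dominated by a corresponding entry of $\tw(t)$, the only delicate point being that stripping leading special symbols leaves the weight unchanged because $\spsym$ has weight $0$. The paper merely sketches this, whereas you carry out the structural induction and the case analysis in full; the substance is the same.
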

We informally argue for the correctness of the lemma.
Suppose $r$ is a subterm of $t$. Then clearly $\wei{r} \leq \wei{t}$. The
maximal occurring nesting of special symbols in $r$ is smaller (or equal) than in $t$. 
And the maximal rank of a symbol in $r$ is smaller (or equal) than in $t$. 
The mapping $\tw$ transforms $r$ to a sequence $p$ whose coefficients
are less than $\wei{t}$, less than the maximal nesting of special symbols and less than the maximal rank of
non-special function symbol in $r$ . Hence $\mc(\tw(t)) \geq \mc(\tw(r))$ holds.

\begin{lemma}  \label{l:lpar06:help2}
  If $p=\tw(t)$ and $q=\tw(\spsym^a t)$, then $\mc(p) + a \geq \mc(q)$.
\end{lemma}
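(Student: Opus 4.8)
The plan is to unfold the definition of $\tw$ on both $t$ and $\spsym^a t$ and to observe that the two resulting sequences differ in exactly one coordinate. First I would write $t = \spsym^{a_0}t'$, where $t'$ is not headed by the special symbol $\spsym$ (so either $t' \in \VS$ or $t' = g(t_1,\dots,t_n)$), and note that then $\spsym^a t = \spsym^{a+a_0}t'$. Since $\spsym$ is the special symbol we have $\weight{\spsym}=0$, hence $\wei{\spsym^a t} = \wei{t}$; consequently the length prescription defining $\tw$ (namely $\length{\tw(s)}=b^{\wei{s}+1}$) forces the same amount $0^m$ of trailing zeros for $t$ and for $\spsym^a t$.

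With this in hand, a direct comparison of $p = \tw(t)$ and $q = \tw(\spsym^a t)$ shows that they coincide entry by entry except in the second coordinate of the leading triple: in $p$ this coordinate is $a_0$, while in $q$ it is $a+a_0$. The leading weight $\wei{t}$, the leading rank ($\rk(g)$, or $0$ in the variable case), the tail $\tw(t_1)\cons\dots\cons\tw(t_n)$, and the padding $0^m$ are literally identical in $p$ and $q$. In the base case $t'\in\VS$ the two sequences are $(w_0,a_0,0)\cons 0^m$ and $(w_0,a+a_0,0)\cons 0^m$, which is the same situation with the rank coordinate equal to $0$ and empty tail.

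Finally I would read off the bound on $\mc$. Writing $R$ for the common tail, we get $\mc(p)=\max\{\wei{t},a_0,\rk(g),\mc(R)\}$ and $\mc(q)=\max\{\wei{t},a+a_0,\rk(g),\mc(R)\}$, so that $\mc(q)=\max\{\mc(p),a+a_0\}$. Since $a_0$ is an entry of $p$ we have $a_0\leq\mc(p)$, whence $a+a_0\leq\mc(p)+a$, and trivially $\mc(p)\leq\mc(p)+a$; together these give $\mc(q)\leq\mc(p)+a$, as required. The statement thus reduces to pure bookkeeping about the single altered coordinate, and the only point demanding care is the weight-preservation observation $\wei{\spsym^a t}=\wei{t}$, which guarantees that prefixing special symbols alters neither the length nor the tail of the encoding, so that $\mc$ is affected only through the increase of the leading special-symbol count from $a_0$ to $a+a_0$.
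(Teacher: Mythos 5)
Your proof is correct and follows exactly the route the paper intends: the paper's own proof is just the one-line remark that one proceeds by case distinction on $t$, and your argument is precisely that case distinction (variable case versus $t=\spsym^{a_0}g(t_1,\dots,t_n)$) carried out in full, with the key observations that $\weight{\spsym}=0$ forces $\wei{\spsym^a t}=\wei{t}$ (so lengths and padding agree) and that only the special-symbol count in the leading triple changes. No gaps.
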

\begin{proof}
  The proof of the lemma proceeds by a case distinction on $t$. 
\end{proof}

\begin{lemma} \label{l:lpar06:lemma2}
We write $m \modminus n$ to denote $\max(\{m-n,0\})$.
Assume $s \gkbo t$ with $\spec(t) \leq K$ and $(\mrk(t) \modminus \rk(s)) \leq K$.
Let $\sigma$ be a substitution and set $p = \tw(s\sigma)$, 
$q = \tw(t\sigma)$. Then $p \glex q$ and $\mc(p)+K \geq \mc(q)$.
\end{lemma}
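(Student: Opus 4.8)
The plan is to treat the two conclusions separately, since $p \glex q$ follows almost immediately from the machinery already in place, whereas $\mc(p)+K \geq \mc(q)$ carries the real content. For the first conclusion I would recall that $\gkbo$ (equivalently $\gkboVar$, by Lepper's Lemma) is a reduction order and hence closed under substitutions, so $s \gkbo t$ gives $s\sigma \gkbo t\sigma$. Applying Lemma~\ref{l:lpar06:lemma1} to $s\sigma \gkbo t\sigma$ then yields $\tw(s\sigma) \glex \tw(t\sigma)$, that is, $p \glex q$.

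For the bound on $\mc$ the key observation is that, by the definition of $\tw$, every coefficient occurring in $\tw(u)$ is either the weight $\wei{v}$, the leading special-symbol count, or the rank of the first non-special symbol of some subterm $v$ of $u$. Since weight, nesting of special symbols, and ranks are all monotone under the subterm relation (Lemma~\ref{l:lpar06:help1}), this gives the clean identity $\mc(\tw(u)) = \max\{\wei{u},\spec(u),\mrk(u)\}$; the remark following Lemma~\ref{l:lpar06:help1} already records the lower bounds $\spec(u),\rk(u),\mrk(u) \leq \mc(\tw(u))$, and $\wei{u} \leq \mc(\tw(u))$ holds for the same reason. Consequently it suffices to bound each of $\wei{t\sigma}$, $\spec(t\sigma)$, $\mrk(t\sigma)$ by $\mc(p)+K$. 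Throughout I would exploit two consequences of $s \gkbo t$: the weight inequality $\wei{s\sigma} \geq \wei{t\sigma}$ and the variable condition $\Anz{x}{s} \geq \Anz{x}{t}$, which forces $\var{t} \subseteq \var{s}$ and hence $x\sigma \subterm s\sigma$ for every $x \in \var{t}$. The weight estimate is then immediate, $\wei{t\sigma} \leq \wei{s\sigma} \leq \mc(p)$.

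The hard part will be the special-symbol and rank estimates, where the substitution can enlarge the corresponding measures. For $\spec$ I would decompose a deepest chain of $\spsym$'s in $t\sigma$ into an initial segment lying in the skeleton of $t$ (of length $\leq \spec(t) \leq K$) and, if the chain reaches a variable $x$, a continuation that is a leading $\spsym$-chain of $x\sigma$ (of length $\leq \spec(x\sigma)$); since $x\sigma \subterm s\sigma$, Lemma~\ref{l:lpar06:help1} gives $\spec(x\sigma) \leq \mc(\tw(x\sigma)) \leq \mc(p)$, so the whole chain has length $\leq \mc(p)+K$, with Lemma~\ref{l:lpar06:help2} supplying the bookkeeping for prepended special symbols. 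For $\mrk$ I would use $\mrk(t\sigma) = \max\bigl(\mrk(t),\, \max_{x \in \var{t}} \mrk(x\sigma)\bigr)$: the second term is again $\leq \mc(p)$ by Lemma~\ref{l:lpar06:help1}, while for the first I would invoke the hypothesis $\mrk(t)\modminus\rk(s) \leq K$ together with $\rk(s) \leq \mrk(s\sigma) \leq \mc(p)$ (the root rank of $s\sigma$ equals $\rk(s)$ when $s$ is not variable-headed, and is $0$ otherwise) to obtain $\mrk(t) \leq \rk(s)+K \leq \mc(p)+K$.

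Combining the three estimates then gives $\mc(q) = \max\{\wei{t\sigma},\spec(t\sigma),\mrk(t\sigma)\} \leq \mc(p)+K$, completing the proof. I expect the main delicacy to be precisely this interaction between $\sigma$ and the nesting and rank measures: one must ensure that the ``straddling'' special-symbol chains and the variable occurrences are correctly accounted for, which is exactly what the subterm monotonicity of Lemma~\ref{l:lpar06:help1} and the prefix estimate of Lemma~\ref{l:lpar06:help2} are designed to control, and why the somewhat technical hypothesis $\mrk(t)\modminus\rk(s) \leq K$ (rather than a bound on $\mrk(t)$ alone) appears in the statement.
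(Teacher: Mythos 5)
Your proof is correct, and it reaches the same estimates as the paper's proof but by a noticeably different route. The paper proves $\mc(p)+K \geq \mc(q)$ by an explicit induction on the structure of $t$ (case distinction on $t = \spsym^a x$ versus $t = \spsym^a g(t_1,\dots,t_n)$), unwinding the recursive definition of $\tw$ level by level and bounding the resulting maxima directly; you instead first isolate the closed-form identity $\mc(\tw(u)) = \max\{\wei{u},\spec(u),\mrk(u)\}$ and then bound the three measures of $t\sigma$ separately via the decompositions $\spec(t\sigma) \leq \spec(t) + \max_x \spec(x\sigma)$ and $\mrk(t\sigma) = \max(\mrk(t), \max_x \mrk(x\sigma))$. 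The ingredients are identical in both arguments --- the variable condition of $\gkbo$ giving $x\sigma \subterm s\sigma$ and hence $\mc(\tw(x\sigma)) \leq \mc(p)$ via Lemma~\ref{l:lpar06:help1}, the weight comparison $\wei{t\sigma} \leq \wei{s\sigma}$, and the hypothesis $\mrk(t) \modminus \rk(s) \leq K$ combined with $\rk(s) \leq \rk(s\sigma) \leq \mc(p)$ --- but your packaging makes the role of each hypothesis more transparent and confines the structural recursion to the two easily verified decomposition facts, at the cost of having to justify the identity for $\mc(\tw(\cdot))$ and the substitution decompositions as separate (routine) claims. The paper's induction, by contrast, never needs to name these identities but redoes the bookkeeping in each case. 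Both are valid; yours is arguably the cleaner exposition, and I see no gap in it.
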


\begin{proof}
It suffices to show $\mc(p)+K \geq \mc(q)$ as $p \glex q$ follows from
Lemma~\ref{l:lpar06:lemma1}. We proceed by induction on $t$; let $t = \spsym^a t'$.

\medskip
\noindent
\textsc{Case} $t' \in \VS$: Set $t' = x$. We consider two sub-cases: Either
(i) $x\sigma = \spsym^b y$, $y \in \VS$ or (ii) $x\sigma = \spsym^b g(u_1,\dots,u_m)$. It
suffices to consider sub-case (ii), as sub-case (i) is treated in a similar way. 
From $s \gkbo t$, we know that for all $y \in \VS$, $\Anz{y}{s} \geq \Anz{y}{t}$, hence $x \in \var{s}$ and 
$x\sigma \subterm s\sigma$. 
Let $l \defsym \rk(g)$; by Lemma~\ref{l:lpar06:help1} we conclude $\mc(\tw(x\sigma)) \leq \mc(p)$.
I.e.\ $b,l,\mc(\tw(u_1)),\dots,\mc(\tw(u_m)) \leq \mc(p)$. We obtain
\begin{eqnarray*}
  \mc(q) & = & \max (\{w_0,a+b,l\} \cup \{\mc(\tw(u_j)) \mid i \in [1,m]\}) \\
  & \leq & \max (\{\wei{s\sigma},\spec(t)+\mc(p),\mc(p)\} \cup \{\mc(p)\}) \\
  & \leq & \max (\{\wei{s\sigma},\mc(p)+K\} \cup \{\mc(p)\}) = \mc(p) + K \tpkt
\end{eqnarray*}

\noindent
\textsc{Case} $t' = g(t_1,\dots,t_n)$: Let $j = \rk(g)$.
By Definition~\ref{d:kbo} we obtain $s \gkbo t_i$. Moreover $\spec(t_i) \leq \spec(t) \leq K$ and
$\mrk(t_i) \leq \mrk(t)$. Hence for all $i$: $\spec(t_i) \leq K$ and $(\mrk(t_i) \modminus \rk(s)) \leq K$ holds.
Thus induction hypothesis is applicable: For all $i$: $\mc(\tw(t_i\sigma)) \leq \mc(p)+K$.
By using the assumption $(\mrk(t) \modminus \rk(s)) \leq K$ we obtain:
\begin{eqnarray*}
   \mc(q) & = & \max (\{\wei{t\sigma},a,j\} \cup \{\mc(\tw(t_i\sigma)) \mid i \in [1,n]\}) \\
     & \leq & \max (\{\wei{t\sigma},\spec(t),\rk(s)+K\} \cup \{\mc(p)+K\}) \\
     & \leq & \max (\{\wei{s\sigma},\spec(t),\rk(s\sigma)+K\} \cup \{\mc(p)+K\}) \\
     & \leq & \max (\{\wei{s\sigma},K,\mc(p)+K\} \cup \{\mc(p)+K\}) = \mc(p)+K \tpkt
 \end{eqnarray*}
\end{proof}

In the following, we assume that the set
\begin{equation} \label{eq:lpar06:M}
  M \defsym \{\spec(r) \mid l \to r \in \RS\} \cup \{(\mrk(r) \modminus \rk(l)) \mid l \to r \in \RS\}
\end{equation}
is finite. We set $K \defsym \max(M)$ and let $K$ be fixed for the remainder.

\begin{example} \label{ex:2}
  With respect to the \TRS\ $\RS' \defsym \Rlab \cup \Dec$ from Example~\ref{ex:1}, we have
  $M = \{(\mrk(r) \modminus \rk(l)) \mid l \to r \in \RS'\}$.
  Note that the signature of $\RS'$ doesn't contain a special symbol.

Clearly $M$ is finite and it is easy to see that $\max(M) = 1$. 
Exemplary, we consider the rule schemata $f_{n+1}(h(x)) \rew f_{n}(i(x))$. Note that the rank of
$i$ equals $4$, the rank of $h$ is $3$, and the rank of $f_n$ is given by $n+5$. Hence
$\mrk(f_{n}(i(x))) = n+5$ and $\rk(f_{n+1}(h(x))) = n+6$. Clearly $(n+5 \modminus n+6) \leq 1$.
\end{example}

\begin{lemma} \label{l:lpar06:lemma3}
  If $s \rsrew{\RS} t$, $p = \tw(s)$, $q = \tw(t)$, then $p \glex q$ and $u(\mc(p),K) \geq \mc(q)$, where $u$ denotes a monotone polynomial such that $u(n,m) \geq 2n+m$.
\end{lemma}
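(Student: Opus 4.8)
The plan is to reduce the statement for a single rewrite step $s \rsrew{\RS} t$ to the term-level comparison already proved in Lemma~\ref{l:lpar06:lemma2}. Since $\RS$ is compatible with $\gkbo$, any rewrite step factors through a rule and a context: there is a rule $l \to r \in \RS$, a substitution $\sigma$, and a context position such that $s = C[l\sigma]$ and $t = C[r\sigma]$. First I would handle the base case where the redex is at the root, i.e. $s = l\sigma$ and $t = r\sigma$. By the definition of $K = \max(M)$ in~\eqref{eq:lpar06:M}, both $\spec(r) \leq K$ and $(\mrk(r) \modminus \rk(l)) \leq K$ hold for every rule, so the hypotheses of Lemma~\ref{l:lpar06:lemma2} are met. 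That lemma immediately yields $p \glex q$ together with $\mc(p) + K \geq \mc(q)$, and since $u(n,m) \geq 2n + m \geq n + m$, we certainly get $u(\mc(p),K) \geq \mc(p) + K \geq \mc(q)$.

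The main work is then the induction over the context $C$ to lift the root case to an arbitrary position. I would proceed by induction on the structure of $s$ (equivalently on the depth of the redex). In the step case $s = \spsym^a g(s_1,\dots,s_i,\dots,s_n)$ with the redex inside the argument $s_i$, so that $t = \spsym^a g(s_1,\dots,s_i',\dots,s_n)$ with $s_i \rsrew{\RS} s_i'$, I would apply the induction hypothesis to obtain $\tw(s_i) \glex \tw(s_i')$ and $u(\mc(\tw(s_i)),K) \geq \mc(\tw(s_i'))$. The relation $p \glex q$ for the full terms follows from Lemma~\ref{l:lpar06:lemma1}, since $s \rsrew{\RS} t$ implies $s \gkbo t$ by compatibility of $\RS$ with $\gkbo$. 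The estimate on $\mc$ then needs: $\mc(\tw(s))$ dominates $\mc(\tw(s_i))$ by Lemma~\ref{l:lpar06:help1} (as $s_i$ is a subterm of $s$), while $\mc(\tw(t))$ is the maximum of the head triple of $t$ and the $\mc$-values of the translated arguments. The only argument that changes is $s_i'$, and for it $\mc(\tw(s_i')) \leq u(\mc(\tw(s_i)),K) \leq u(\mc(\tw(s)),K)$; the unchanged arguments and the head triple are bounded by $\mc(\tw(s)) \leq u(\mc(\tw(s)),K)$ using monotonicity of $u$.

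The one point requiring care is controlling the head triple of $t$, namely $(\wei{t},a,\rk(g))$. Since $s \gkbo t$ we have $\wei{s} \geq \wei{t}$, so the weight component does not increase; the leading special count $a$ and the rank $\rk(g)$ are identical for $s$ and $t$ because the root symbol is unchanged under an argument rewrite. Hence the head contribution to $\mc(\tw(t))$ is bounded by $\mc(\tw(s))$, and everything assembles into $u(\mc(\tw(s)),K) \geq \mc(\tw(t))$. The coefficient $2n + m$ in the bound on $u$ is exactly what gives the slack needed so that a single application of $u$ absorbs both the additive constant $K$ from the root estimate and the monotone closure across the context; I expect the cleanest formulation is to verify the inductive inequality using only $u(n,K) \geq n + K$ and $u(n,K) \geq n$, both of which follow from $u(n,m) \geq 2n + m$.

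The step I expect to be the genuine obstacle is making the context induction interact correctly with the \emph{nesting of special symbols}: when the redex lies below a block of special symbols $\spsym^a$, one must ensure that passing from $\tw(s_i)$ to $\tw(\spsym^a s_i)$ does not spoil the bound. Here Lemma~\ref{l:lpar06:help2} is the right tool, giving $\mc(\tw(\spsym^a s_i)) \leq \mc(\tw(s_i)) + a$, so the induction has to carry the additive contribution of $a$ and reconcile it with the constant $K$; because $\spec(r) \leq K$ already accounts for the special nesting introduced by rule right-hand sides, this contribution stays within the budget allowed by $u$, and the induction closes.
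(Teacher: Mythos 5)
Your overall route is the paper's: factor the step as $s = C[l\sigma]$, $t = C[r\sigma]$, settle the root case by Lemma~\ref{l:lpar06:lemma2} (whose hypotheses hold by the definition of $K$ via~\eqref{eq:lpar06:M}), obtain $p \glex q$ from Lemma~\ref{l:lpar06:lemma1} together with ${\RS} \subseteq {\gkbo}$, and close a context induction using Lemmata~\ref{l:lpar06:help1} and~\ref{l:lpar06:help2}. Your inductive step for $C = \spsym^a g(u_1,\dots,C'[\Box],\dots,u_n)$ — only one argument changes, the head triple of $t$ is dominated because $\wei{t} \leq \wei{s}$ and the root data are unchanged, everything else is bounded by $\mc(p)$ via Lemma~\ref{l:lpar06:help1} — is exactly the paper's assembly.

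The one place you flag as delicate is also the one place your accounting goes wrong. When the redex sits directly below a tower of special symbols, i.e.\ $C = \spsym^a[\Box]$ with $a > 0$, the chain is $\mc(q) \leq \mc(\tw(r\sigma)) + a \leq \mc(\tw(l\sigma)) + K + a \leq \mc(p) + K + a$. The exponent $a$ belongs to the \emph{context}, not to the right-hand side $r$, so the bound $\spec(r) \leq K$ — which you invoke to argue that this contribution ``stays within the budget'' — is irrelevant here; that bound has already been consumed inside Lemma~\ref{l:lpar06:lemma2}. The only available estimate on $a$ is $a \leq \spec(s) \leq \mc(\tw(s)) = \mc(p)$, which yields $\mc(q) \leq 2\mc(p) + K$. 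This is precisely why the lemma insists on $u(n,m) \geq 2n + m$: your suggestion that the induction closes ``using only $u(n,K) \geq n+K$ and $u(n,K) \geq n$'' is false, since with those properties alone the case $a > 0$ does not go through. Replacing the appeal to $\spec(r) \leq K$ by $a \leq \mc(p)$ and using the full strength $u(\mc(p),K) \geq 2\mc(p)+K$ repairs the argument, after which it coincides with the paper's proof.
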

\begin{proof}
By definition of the rewrite relation there exists a context $C$, a substitution $\sigma$ and 
a rule $l \rew r \in R$ such that $s = C[l\sigma]$ and $t = C[r\sigma]$. We prove
$\mc(q) \leq u(\mc(p),K)$ by induction on $C$. Note that $C$ can only have the form
(i) $C=\spsym^a[\Box]$ or (ii) $C=\spsym^a g(u_1,\dots,C'[\Box],\dots,u_n)$. 

\medskip
\noindent
\textsc{Case} $C = \spsym^a[\Box]$:
By Lemma~\ref{l:lpar06:lemma2} we see $\mc(\tw(r\sigma)) \leq \mc(\tw(l\sigma)) + K$. Employing
in addition Lemma~\ref{l:lpar06:help2} and Lemma~\ref{l:lpar06:help1}, we obtain:
\begin{eqnarray*}
  \mc(q) &=& \mc(\tw(\spsym^a r\sigma)) \leq \mc(\tw(r\sigma)) + a\\
  & \leq & \mc(\tw(l\sigma)) + K + a \\
  & \leq & \mc(p) + K + \mc(p) \leq u(\mc(p),K) \tpkt
\end{eqnarray*}

\medskip
\noindent
\textsc{Case} $C = \spsym^a g(u_1,\dots,C'[\Box],\dots,u_n)$: 
As $C'[l\sigma] \rsrew{\RS} C'[r\sigma]$, induction hypothesis is applicable: Let $p' = \tw(C'[l\sigma])$, $q'=\tw(C'[r\sigma])$. Then
$\mc(q') \leq u(\mc(p'),K)$. For $\rk(g) = l$, we obtain by application of induction hypothesis and Lemma~\ref{l:lpar06:help1}:
\begin{eqnarray*}
  \mc(q) & = & \max(\{\wei{t},a,l\} \cup \{\mc(\tw(u_1)),\dots,\mc(q'),\dots,\mc(\tw(u_n))\}) \\
  & \leq & \max(\{\wei{s},a,l\} \cup \\
  & & \cup \{\mc(\tw(u_1)),\dots,u(\mc(p'),K),\dots,\mc(\tw(u_n))\}) \\
  & \leq & \max(\{\wei{s},a,l\} \cup \{\mc(p),u(\mc(p),K)\}) = u(\mc(p),K) \tpkt
\end{eqnarray*}
\end{proof}

We define \emph{approximations} of the partial order $\glex$. 
\begin{equation*}
  p \glexapprox{n} q \quad \gdw \quad p \glex q \ \text{and} \ u(\mc(p),n) \geq \mc(q) \tkom
\end{equation*}
where $u$ is defined as in Lemma~\ref{l:lpar06:lemma3}.
Now Lemma~\ref{l:lpar06:lemma2} can be concisely expressed as 
follows, for $K$ as above.
\begin{proposition} \label{p:prop1}
  If $s \rsrew{\RS} t$, then $\tw(s) \glexapprox{K} \tw(t)$.
\end{proposition}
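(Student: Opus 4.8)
Looking at Proposition~\ref{p:prop1}, I need to show that if $s \rsrew{\RS} t$, then $\tw(s) \glexapprox{K} \tw(t)$, where $K = \max(M)$ is the fixed constant from equation~\eqref{eq:lpar06:M}.

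The plan is to unfold the definition of $\glexapprox{K}$ and reduce the claim to results already established. By definition, $\tw(s) \glexapprox{K} \tw(t)$ means precisely two things: first, that $\tw(s) \glex \tw(t)$, and second, that $u(\mc(\tw(s)),K) \geq \mc(\tw(t))$, where $u$ is the monotone polynomial from Lemma~\ref{l:lpar06:lemma3}. So the entire proposition amounts to verifying these two conjuncts.

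First I would dispatch the lexicographic comparison. Since $\RS$ is compatible with $\gkbo$ (this is the standing assumption fixed at the start of Section~\ref{DerivationLength}), the single rewrite step $s \rsrew{\RS} t$ gives $s \gkbo t$, because $\gkbo$ is a rewrite order and hence closed under contexts and substitutions. Then Lemma~\ref{l:lpar06:lemma1} immediately yields $\tw(s) \glex \tw(t)$, establishing the first conjunct. Second, for the quantitative bound, I would invoke Lemma~\ref{l:lpar06:lemma3} directly: it states that $s \rsrew{\RS} t$ implies both $p \glex q$ and $u(\mc(p),K) \geq \mc(q)$ for $p = \tw(s)$, $q = \tw(t)$. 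The key point here is that Lemma~\ref{l:lpar06:lemma3} already isolates the correct constant $K = \max(M)$, since its appeal to Lemma~\ref{l:lpar06:lemma2} in the base case of the context induction requires exactly that $\spec(r) \leq K$ and $(\mrk(r) \modminus \rk(l)) \leq K$ for every rule $l \to r \in \RS$, which holds by the definition of $M$ and $K$.

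Putting this together, the proof is essentially a restatement: Lemma~\ref{l:lpar06:lemma3} gives both of the two conditions that constitute $\glexapprox{K}$, so the proposition follows immediately by the definition of the approximation $\glexapprox{K}$. The one subtlety I would take care to flag is that the proposition is really just a compact reformulation of Lemma~\ref{l:lpar06:lemma3} in the language of the approximations $\glexapprox{n}$ that were defined just above it; there is no genuine new obstacle, only the bookkeeping of matching $K = \max(M)$ with the constant required by the base case of Lemma~\ref{l:lpar06:lemma2}. Thus I expect the proof to read: ``Immediate from Lemma~\ref{l:lpar06:lemma3} and the definition of $\glexapprox{K}$,'' with perhaps a line noting that compatibility of $\RS$ with $\gkbo$ ensures $s \gkbo t$ so that the hypotheses on $K$ are met for the applied rule.
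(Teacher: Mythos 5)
Your proof is correct and matches the paper, which offers no separate argument for this proposition but presents it as a direct restatement of Lemma~\ref{l:lpar06:lemma3} in the notation of the approximations $\glexapprox{n}$ (the text's reference to Lemma~\ref{l:lpar06:lemma2} at that point is evidently a slip for Lemma~\ref{l:lpar06:lemma3}). Your extra detour through Lemma~\ref{l:lpar06:lemma1} for the first conjunct is harmless but redundant, since Lemma~\ref{l:lpar06:lemma3} already supplies both conditions in the definition of $\glexapprox{K}$.
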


In the spirit of the definition of derivation height, we define a family of functions $\Ah{n} \colon \N \to \N$:
\begin{equation*}
  \Ah{n}(p) \defsym \max(\{m \mid 
  \exists (p_0,\dots,p_m) \ p = p_0 \glexapprox{n} p_1 \glexapprox{n} \dots \glexapprox{n} p_m \}) \tpkt
\end{equation*}

The following proposition is an easy consequence of the definitions and Proposition~\ref{p:prop1}.
\begin{theorem} \label{t:thm1}
 Let $(\FS,\RS)$ be a \TRS, compatible with \KBO. Assume the set
$M \defsym \{\spec(r) \mid l \to r \in \RS\} \cup \{(\mrk(r) \modminus \rk(l)) \mid l \to r \in \RS\}$ 
is finite and the arities in of the symbols in $\FS$ are bounded; set $K \defsym \max(M)$.
Then $\Dh{\RS}(t) \leq \Ah{K}(\tw(t))$.
\end{theorem}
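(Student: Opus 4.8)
The plan is to deduce the theorem directly from the single-step estimate already isolated as Proposition~\ref{p:prop1}, by lifting it from one rewrite step to an entire derivation and then reading off the bound from the very definition of $\Ah{K}$. The role of the hypotheses---that the set $M$ from~\eqref{eq:lpar06:M} is finite (so that $K \defsym \max(M)$ exists) and that the arities in $\FS$ are bounded (so that the flattening $\tw$ of the previous section is well-defined and embeds $\gkbo$ into $\glex$ by Lemma~\ref{l:lpar06:lemma1})---is precisely to guarantee that a \emph{single} constant $K$ controls the growth of $\mc$ under every rule of $\RS$ simultaneously. Thus no new combinatorial work is required: all the effort has gone into Lemma~\ref{l:lpar06:lemma2} and Lemma~\ref{l:lpar06:lemma3}, which together yield Proposition~\ref{p:prop1}.

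Concretely, first I would fix a term $t \in \GTA{\FS}$ and an arbitrary derivation $t = t_0 \rsrew{\RS} t_1 \rsrew{\RS} \dots \rsrew{\RS} t_n$ issuing from it. For each individual step $t_i \rsrew{\RS} t_{i+1}$, Proposition~\ref{p:prop1} gives $\tw(t_i) \glexapprox{K} \tw(t_{i+1})$, that is, both $\tw(t_i) \glex \tw(t_{i+1})$ and $u(\mc(\tw(t_i)),K) \geq \mc(\tw(t_{i+1}))$. Concatenating these $n$ relations produces a single $\glexapprox{K}$-chain $\tw(t) = \tw(t_0) \glexapprox{K} \tw(t_1) \glexapprox{K} \dots \glexapprox{K} \tw(t_n)$ of length $n$ starting at $\tw(t)$. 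By the definition of $\Ah{K}$ as the maximal length of such a chain, this forces $n \leq \Ah{K}(\tw(t))$. Since the derivation was arbitrary, passing to the supremum over all derivations issuing from $t$ yields $\Dh{\RS}(t) \leq \Ah{K}(\tw(t))$, which is the claim.

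The step I expect to require the most care is a matter of well-definedness rather than of mathematical depth. One must make sure that the passage from ``every finite derivation has length $\leq \Ah{K}(\tw(t))$'' to a bound on $\Dh{\RS}(t)$ is legitimate: since $\RS \subseteq \gkbo$ and $\gkbo$ is a reduction order, $\RS$ is terminating, so each derivation from $t$ is finite, and the inequality $n \leq \Ah{K}(\tw(t))$ for every such $n$ already bounds the supremum $\Dh{\RS}(t)$ (interpreted in $\N \cup \{\infty\}$ should $\RS$ fail to be finitely branching). It is also worth recording explicitly that $\Ah{K}$ is itself well-defined: along any $\glexapprox{K}$-chain from $\tw(t)$ the lengths $\length{\cdot}$ are non-increasing---because $\glex$ compares lengths first---and hence remain bounded by $\length{\tw(t)}$, so that, combined with well-foundedness of $\glex$ (of order type $\omega^\omega$), chains from a fixed sequence have bounded length and the maximum defining $\Ah{K}(\tw(t))$ exists. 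With these points noted, the theorem follows at once from Proposition~\ref{p:prop1}.
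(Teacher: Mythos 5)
Your proposal is correct and follows exactly the route the paper intends: the paper states Theorem~\ref{t:thm1} as ``an easy consequence of the definitions and Proposition~\ref{p:prop1}'', and your argument---applying Proposition~\ref{p:prop1} stepwise to turn any $\RS$-derivation from $t$ into a $\glexapprox{K}$-chain from $\tw(t)$ and reading off the bound from the definition of $\Ah{K}$---is precisely that consequence spelled out. Your additional remarks on termination (via compatibility with $\gkbo$) and on the well-definedness of $\Ah{K}$ are sound and only make explicit what the paper leaves implicit.
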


In the next section we show that $\Ah{n}$ is bounded by the Ackermann function $\Ack$. Thus
providing the sought upper-bound on the derivation height of $\RS$.

\section{Bounding the Growth of $\Ah{n}$} \label{Ackermann}

Instead of directly relating the functions $\Ah{n}$ to the Ackermann function, we make
use of the fast-growing \emph{Hardy} functions, cf.~\cite{RoseSubrecursion}. The Hardy functions form a hierarchy of 
unary functions $\Hardy{\alpha} \colon \N \to \N$ indexed by ordinals. We will only be interested
in a small part of this hierarchy, namely in the set of functions $\{\Hardy{\alpha} \mid \alpha < \omega^{\omega}\}$.

\begin{definition}
  We define the embedding $\wo \colon \N^\ast \to \omega^{\omega}$ as follows:
  \begin{equation*}
    \wo(p) \defsym \omega^{\ell-1}(p)_0 + \dots \omega (p)_{\ell-2} + (p)_{\ell-1} \tkom
  \end{equation*}
where $\ell=\length{p}$.
\end{definition}

The next lemma follows directly from the definitions.
\begin{lemma}
  If $p \glex q$, then $\wo(p) > \wo(q)$.
\end{lemma}

We associate with  every $\alpha < \omega^{\omega}$ in \emph{\CNF}
an ordinal $\alpha_n$, where $n \in \N$. The sequence $(\alpha_n)_{n}$ is called \emph{fundamental sequence} of
$\alpha$. (For the connection between 
rewriting and fundamental sequences see e.g.~\cite{MoserWeiermann:2003} or Chapter~\ref{rta03}.)
\begin{equation*}
  \alpha_n \defsym
  \begin{cases}
    0 & \text{if $\alpha = 0$} \tkom \\
    \beta & \text{if $\alpha = \beta + 1$} \tkom \\
    \beta + \omega^{\gamma+1} \cdot (k-1) + \omega^{\gamma} \cdot (n+1) & \text{if $\alpha = \beta + \omega^{\gamma + 1} \cdot k$} \tpkt \\
  \end{cases}
\end{equation*}

Based on the definition of 
$\alpha_n$, we define $\Hardy{\alpha} \colon \N \to \N$, for $\alpha < \omega^{\omega}$ 
  by transfinite induction on $\alpha$:
\begin{equation*}
    \Hardy{0}(n) \defsym n \qquad \Hardy{\alpha}(n) \defsym \Hardy{\alpha_n}(n+1) \tpkt
\end{equation*}
Let $\gapprox{n}$ denote the transitive closure of $(.)_{n}$, i.e.\ $\alpha \gapprox{n} \beta$ iff
$\alpha_n \gapprox{n} \beta$ or $\alpha_n = \beta$.
Suppose $\alpha,\beta < \omega^{\omega}$. Let
$\alpha = \omega^{\alpha_1}n_1 + \dots \omega^{\alpha_k}n_k$ and
$\beta = \omega^{\beta_1}m_1 + \dots \omega^{\beta_l}m_l$.
Recall that any ordinal $\alpha \not= 0$ can be uniquely written in \CNF, hence we can assume that $\alpha_1 > \dots > \alpha_k$ and
$\beta_1 > \dots > \beta_l$. Furthermore by our assumption that $\alpha,\beta < \omega^{\omega}$, we have $\alpha_i,\beta_j \in \N$.
We write $\NF(\alpha,\beta)$ if $\alpha_k \geq \beta_1$. 

Before we proceed in our estimation of the functions $\Ah{n}$, we state some simple facts that help us to calculate with 
the function $\Hardy{\alpha}$. 
\begin{lemma} \label{l:lpar06:lemma3a}
  \begin{enumerate}
  \item \label{en:lemma3a:1} If $\alpha \gapprox{n} \beta$, then $\alpha \gapprox{n+1} \beta + 1$ or $\alpha = \beta+1$.
  \item \label{en:lemma3a:2} If $\alpha \gapprox{n} \beta$ and $n \geq m$, then $\Hardy{\alpha}(n) > \Hardy{\beta}(m)$.
  \item \label{en:lemma3a:3} If $n > m$, then $\Hardy{\alpha}(n) > \Hardy{\alpha}(m)$.
  \item \label{en:lemma3a:4} If $\NF(\alpha,\beta)$, then $\Hardy{\alpha+\beta}(n) = \Hardy{\alpha} \circ \Hardy{\beta}(n)$; $\circ$ denotes function composition.
  \end{enumerate}
\end{lemma}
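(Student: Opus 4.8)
The four items are the standard growth and monotonicity properties of the Hardy functions below $\omega^{\omega}$, and the natural plan is to establish each by transfinite induction on the relevant ordinal, splitting according to the three cases ($0$, successor, limit) of the fundamental-sequence assignment $\alpha \mapsto \alpha_n$. Before attacking the four parts I would record two preparatory facts that the inductions use repeatedly: first, $\Hardy{\delta}(x) \geq x$ for every $\delta$ and every $x$, by a one-line transfinite induction on $\delta$ (both the successor and the limit clause reduce, via the inductive hypothesis, to a Hardy value that is $\geq x+1 > x$); and second, the remark read straight off the definition that the last Cantor-normal-form summand of $\alpha$ governs $\alpha_n$, so that under $\NF(\alpha,\beta)$ one has $(\alpha+\beta)_n = \alpha + \beta_n$, with $\NF(\alpha,\beta_n)$ still holding.

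I would prove part~\ref{en:lemma3a:4} first, as it is a pure identity independent of any monotonicity and is convenient for the remaining parts. By transfinite induction on $\beta$: the cases $\beta=0$ and $\beta=\beta'+1$ are immediate from the definition of $\Hardy{}$, and for limit $\beta$ the identity $(\alpha+\beta)_n = \alpha+\beta_n$ together with $\Hardy{\beta_n}(n+1)=\Hardy{\beta}(n)$ and the induction hypothesis at $\beta_n<\beta$ gives $\Hardy{\alpha+\beta}(n) = \Hardy{\alpha+\beta_n}(n+1) = \Hardy{\alpha}(\Hardy{\beta_n}(n+1)) = \Hardy{\alpha}(\Hardy{\beta}(n))$.

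Next I would establish part~\ref{en:lemma3a:3} (strict monotonicity in the argument) by transfinite induction on $\alpha$. The cases $\alpha=0$ and $\alpha=\alpha'+1$ are trivial. For limit $\alpha$ and $m<n$, the fundamental sequence yields $\alpha_n = \alpha_m + \omega^{\gamma}(n-m)$, where $\gamma$ is the exponent peeled off; since $\NF(\alpha_m,\omega^{\gamma}(n-m))$ holds, part~\ref{en:lemma3a:4} together with the preparatory bound give $\Hardy{\alpha}(n)=\Hardy{\alpha_n}(n+1) = \Hardy{\alpha_m}(\Hardy{\omega^{\gamma}(n-m)}(n+1))$, whose inner argument is $\geq n+1 > m+1$; the induction hypothesis (strict monotonicity of $\Hardy{\alpha_m}$, $\alpha_m<\alpha$) then yields $\Hardy{\alpha}(n) > \Hardy{\alpha_m}(m+1)=\Hardy{\alpha}(m)$. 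Part~\ref{en:lemma3a:2} follows quickly afterwards: since $\gapprox{n}$ is the transitive closure of $(.)_n$, it suffices to treat a single step $\beta=\alpha_n$, where $\Hardy{\alpha}(n)=\Hardy{\alpha_n}(n+1)=\Hardy{\beta}(n+1) > \Hardy{\beta}(m)$ for $m\leq n$ by part~\ref{en:lemma3a:3}, and then to iterate along the descent chain using part~\ref{en:lemma3a:3} once more and transitivity of $>$.

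The main obstacle is part~\ref{en:lemma3a:1}, the level-shift statement, which is the analogue here of the auxiliary step used for the slow-growing hierarchy in Chapter~\ref{rta03}. I would reduce to a single step $\beta=\alpha_n$ and argue by induction on the form of the limit $\alpha$: writing $\alpha=\mu+\omega^{\gamma+1}k$ one computes $\alpha_{n+1}=\alpha_n+\omega^{\gamma}$, so the claim $\alpha\gapprox{n+1}\beta+1$ amounts to $\alpha_n+\omega^{\gamma}\gapprox{n+1}\alpha_n+1$ (an equality when $\gamma=0$). For $\gamma\geq 1$ this is a separate sub-lemma: whenever $\NF(\alpha_n,\omega^{\gamma})$, the descent $(.)_{n+1}$ carries $\alpha_n+\omega^{\gamma}$ down to $\alpha_n+1$, proved by induction on $\gamma$ (the fundamental sequence replaces $\omega^{\gamma}$ by $\omega^{\gamma-1}(n+2)$, and repeated finite descents strip the remaining coefficients down to the base). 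The delicate point, and the reason this is the hardest item, is precisely that $\gapprox{k}$ does \emph{not} reach every smaller ordinal, so one must verify that the successor-of-the-base $\alpha_n+1$ genuinely lies on the $(n+1)$-descent of $\alpha$; the full transitive-closure statement then follows from the one-step case by induction on the length of the descent chain, using transitivity of $\gapprox{n+1}$ and the trivial step $\delta+1\gapprox{n+1}\delta$.
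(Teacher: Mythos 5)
The paper states this lemma without proof --- it is introduced merely as a list of ``simple facts'' about the Hardy functions --- so there is no argument of the paper's to compare against; your proof is correct and complete, and it is the standard argument one would supply (part (iv) first by induction on $\beta$ using $(\alpha+\beta)_n=\alpha+\beta_n$, then (iii) by induction on $\alpha$, then (ii) from (iii) along the descent chain). You also correctly isolate part (i) as the only non-routine item and handle it properly: the reduction to the one-step case $\beta=\alpha_n$, the identity $\alpha_{n+1}=\alpha_n+\omega^{\gamma}$ in the limit case, and the auxiliary descent $\delta+\omega^{\gamma}\gapprox{n+1}\delta+1$ for $\gamma\geq 1$ (proved by stripping coefficients) together give exactly what is needed, with the trivial steps $\xi+1\gapprox{n+1}\xi$ and transitivity closing the induction on chain length.
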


We relate the Hardy functions with the Ackermann function. The stated upper-bound is a gross one,
but a more careful estimation is not necessary here. 
\begin{lemma} \label{l:lpar06:lemma4}
  For $n \geq 1$: $\Hardy{\omega^n}(m) \leq \Ack(2n,m)$.
\end{lemma}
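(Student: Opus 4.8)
The plan is to argue by induction on $n$, reducing everything to the single identity $\Hardy{\omega^{n+1}}(m) = (\Hardy{\omega^n})^{m+1}(m+1)$, where $h^{j}$ denotes the $j$-fold composition of a unary function $h$ with itself. First I would compute the relevant fundamental sequences: since $\omega^{n+1} = 0 + \omega^{n+1}\cdot 1$, the definition of $\alpha_m$ gives $(\omega^{n+1})_m = \omega^n \cdot (m+1)$, and likewise $\omega_m = m+1$. Hence the definition of the Hardy functions yields $\Hardy{\omega^{n+1}}(m) = \Hardy{\omega^n \cdot (m+1)}(m+1)$. Because $\NF(\omega^n,\omega^n\cdot j)$ holds for every $j$ (the last \CNF-exponent of $\omega^n$ equals the first \CNF-exponent of $\omega^n\cdot j$, namely $n$), repeated application of Lemma~\ref{l:lpar06:lemma3a}.\ref{en:lemma3a:4} gives $\Hardy{\omega^n\cdot(m+1)}(x) = (\Hardy{\omega^n})^{m+1}(x)$, which establishes the key identity.

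For the base case $n=1$ I would compute $\Hardy{\omega}(m)$ directly. From $\Hardy{k}(x) = x+k$ for finite $k$ (an immediate induction using the successor clause of the fundamental sequence) and $\omega_m = m+1$, one gets $\Hardy{\omega}(m) = \Hardy{m+1}(m+1) = 2m+2$, which is bounded by $\Ack(2,m)$.

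For the inductive step I would combine the hypothesis $\Hardy{\omega^n}(x) \le \Ack(2n,x)$ with monotonicity. Strict monotonicity of $\Hardy{\omega^n}$ (Lemma~\ref{l:lpar06:lemma3a}.\ref{en:lemma3a:3}) and of $\Ack(2n,\cdot)$ propagate the pointwise bound through iteration: a short induction on $j$ shows $(\Hardy{\omega^n})^{j}(x) \le (\Ack(2n,\cdot))^{j}(x)$. Applying this to the key identity gives $\Hardy{\omega^{n+1}}(m) \le (\Ack(2n,\cdot))^{m+1}(m+1)$. Now I absorb the starting argument and the extra iteration into two further Ackermann levels: using $\Ack(2n,x) \ge x+1$ we have $m+1 \le (\Ack(2n,\cdot))^{m}(1)$, so $(\Ack(2n,\cdot))^{m+1}(m+1) \le (\Ack(2n,\cdot))^{2m+1}(1)$, and by the defining recursion $\Ack(k+1,j) = (\Ack(k,\cdot))^{j+1}(1)$ this equals $\Ack(2n+1,2m)$. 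Finally, monotonicity of $\Ack$ in both arguments yields $\Ack(2n+1,2m) \le \Ack(2n+1,\Ack(2n+2,m-1)) = \Ack(2n+2,m)$, since $\Ack(2n+2,m-1) \ge \Ack(2,m-1) \ge 2m$ (handling $m=0$ separately). Chaining these gives $\Hardy{\omega^{n+1}}(m) \le \Ack(2(n+1),m)$.

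The routine content is the Hardy/Ackermann arithmetic, which I would not grind through in full. The one place demanding genuine care is fixing the convention for $\Ack$ so that both the recursion $\Ack(k+1,j) = (\Ack(k,\cdot))^{j+1}(1)$ and monotonicity in each argument hold, and then checking that doubling the index (from $2n$ to $2n+2$) really leaves enough slack to swallow \emph{both} the starting value $m+1$ and the doubled iteration count $2m+1$; this is exactly why the bound is, as remarked, a gross one. The small-argument cases ($m=0$) and the base of the auxiliary iteration-domination induction should be verified separately, but present no difficulty.
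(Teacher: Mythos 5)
Your proof is correct, and all the key computations check out: the fundamental sequence $(\omega^{n+1})_m = \omega^n\cdot(m+1)$, the identity $\Hardy{\omega^{n+1}}(m) = (\Hardy{\omega^n})^{m+1}(m+1)$ via repeated use of Lemma~\ref{l:lpar06:lemma3a}(\ref{en:lemma3a:4}), the base value $\Hardy{\omega}(m)=2m+2\leq\Ack(2,m)=2m+3$, and the final Ackermann arithmetic $(\AckVar{2n})^{2m+1}(1)=\Ack(2n+1,2m)\leq\Ack(2n+2,m)$. The route is, however, organised differently from the paper's. You fully unfold the iteration and run a single induction on $n$, transporting the bound through the iterates by a pointwise iteration-domination argument, and then spend the two extra Ackermann levels on absorbing the starting value $m+1$ and the doubled iteration count. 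The paper instead performs a lexicographic induction on $(n,m)$ and peels off one composition per step, using the chain $\Hardy{\omega^{n+1}}(m+1)\leq\Hardy{\omega^n}^2\Hardy{\omega^{n+1}}(m)\leq\AckVar{2n}^2\AckVar{2(n+1)}(m)\leq\AckVar{2n+1}\AckVar{2(n+1)}(m)=\Ack(2(n+1),m+1)$, so that the slack of two levels is consumed by the auxiliary inequality $\AckVar{k}^2(x+1)\leq\AckVar{k+1}(x+1)$. Your version makes the quantitative bookkeeping (where exactly the factor $2$ in $2n$ is spent) more explicit, at the cost of the extra monotonicity lemma for iterates and the separate inequality $\Ack(2n+1,2m)\leq\Ack(2n+2,m)$; the paper's version keeps each inductive step local and avoids reasoning about $m$-fold compositions, at the cost of a lexicographic induction. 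Both are equally rigorous, and since the lemma only claims a gross bound, nothing is lost either way.
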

\begin{proof}
  We recall the definition of the Ackermann function:
  \begin{eqnarray*}
    \Ack(0,m) & = & m+1\\
    \Ack(n+1,0) & = & \Ack(n,1) \\
    \Ack(n+1,m+1) & = & \Ack(n,\Ack(n+1,m))
  \end{eqnarray*}
In the following we sometimes denote the Ackermann function as a unary function, indexed by its first argument:
$\Ack(n,m) = \AckVar{n}(m)$. To prove the lemma, we proceed by induction on the lexicographic
comparison of $n$ and $m$. 
We only present the case, where $n$ and $m$ are greater than $0$. As preparation
note that $m+1 \leq \Hardy{\omega^n}(m)$ holds for any $n$ and $\AckVar{n}^2(m+1) \leq \AckVar{n+1}(m+1)$ holds
for any $n,m$.
\begin{align*}
  \Hardy{\omega^{n+1}}(m+1) & = \Hardy{\omega^n (m+2)}(m+2) \\
      & \leq \Hardy{\omega^n (m+2)+\omega^n}(m+1) && \text{Lemma~\ref{l:lpar06:lemma3a}(\ref{en:lemma3a:3},\ref{en:lemma3a:4})}\\
      & = \Hardy{\omega^n}^2\Hardy{\omega^n (m+1)}(m+1) && \text{Lemma~\ref{l:lpar06:lemma3a}(\ref{en:lemma3a:4})}\\
      & = \Hardy{\omega^n}^2\Hardy{\omega^{n+1}}(m) \\
      & \leq \AckVar{2n}^2 \AckVar{2(n+1)}(m) && \text{induction hypothesis} \\
      & \leq \AckVar{2n+1} \AckVar{2(n+1)}(m)\\
      & = \Ack(2(n+1),m+1) \tpkt
\end{align*}
\end{proof}

\begin{lemma} \label{l:lpar06:lemma5}
Assume $u(m,n) \leq 2m +n $ and set $\ell = \length{p}$. For all $n \in \N$:
\begin{equation}\label{eq:lpar06:eq1}
  \Ah{n}(p) \leq \Hardy{\omega^2 \cdot \wo(p)}(u(\mc(p),n)+1) < \Hardy{\omega^{4+\ell}}(\mc(p)+n) \tpkt
\end{equation}
\end{lemma}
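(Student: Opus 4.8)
The plan is to prove the two inequalities of~\eqref{eq:lpar06:eq1} separately: the left one by transfinite induction on the ordinal $\wo(p)$, the right one by crude ordinal and argument estimates. Throughout I would use the order-preserving property of $\wo$ (if $p\glex q$ then $\wo(p)>\wo(q)$) and the calculus of Lemma~\ref{l:lpar06:lemma3a}.

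For the bound $\Ah{n}(p)\leq\Hardy{\omega^2\cdot\wo(p)}(u(\mc(p),n)+1)$ I would argue by well-founded induction on $\wo(p)$. If no $\glexapprox{n}$-step issues from $p$, then $\Ah{n}(p)=0$ and the claim is trivial since every $\Hardy{\alpha}$ dominates its argument. Otherwise I would fix a maximal chain $p=p_0\glexapprox{n}p_1\glexapprox{n}\dots\glexapprox{n}p_m$ realising $\Ah{n}(p)=m$, so that $\Ah{n}(p)\leq 1+\Ah{n}(p_1)$; since $\wo(p_1)<\wo(p_0)$ the induction hypothesis yields $\Ah{n}(p_1)\leq\Hardy{\omega^2\cdot\wo(p_1)}(u(\mc(p_1),n)+1)$. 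It then remains to establish the one-step estimate $1+\Hardy{\omega^2\cdot\wo(p_1)}(u(\mc(p_1),n)+1)\leq\Hardy{\omega^2\cdot\wo(p_0)}(u(\mc(p_0),n)+1)$.

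To prove that step I would write $\gamma=\omega^2\cdot\wo(p_0)$, $\delta=\omega^2\cdot\wo(p_1)$, $x_0=u(\mc(p_0),n)+1$ and $x_1=u(\mc(p_1),n)+1$. From $\wo(p_0)\geq\wo(p_1)+1$ one gets $\gamma\geq\delta+\omega^2$, and since every Cantor-normal-form exponent of $\delta$ is at least $2$ the condition $\NF(\delta,\omega^2)$ holds, so Lemma~\ref{l:lpar06:lemma3a} gives $\Hardy{\delta+\omega^2}(x_0)=\Hardy{\delta}(\Hardy{\omega^2}(x_0))$. A short computation with the fundamental sequences shows $\Hardy{\omega^2}(x)=2^{x+1}(x+3)-2$, which outgrows every linear function; combined with $\mc(p_1)\leq u(\mc(p_0),n)=x_0-1$ and the hypothesis $u(m,n)\leq 2m+n$ this delivers $\Hardy{\omega^2}(x_0)\geq x_1+1$. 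Feeding this into the increasing function $\Hardy{\delta}$ and using $\Hardy{\delta}(x_1+1)\geq\Hardy{\delta}(x_1)+1$ gives the wanted bound, provided one also has the subscript monotonicity $\Hardy{\gamma}(x_0)\geq\Hardy{\delta+\omega^2}(x_0)$ at the fixed index $x_0$. This last point is the main obstacle: Lemma~\ref{l:lpar06:lemma3a} only supplies strict monotonicity under the stepping relation $\gamma\gapprox{x_0}\delta+\omega^2$, and $\gamma>\delta+\omega^2$ by itself does not guarantee that $\delta+\omega^2$ is reached along the fundamental sequence \emph{at that particular index}. Reachability holds precisely because the Cantor-normal-form coefficients of $\gamma$ and $\delta$ are the entries $(p_0)_i$ and $(p_1)_i$, bounded by $\mc(p_0)$ and $\mc(p_1)$, while the index $x_0=u(\mc(p_0),n)+1$ was chosen to exceed $\mc(p_0)$; this is exactly why $\mc(p)$ is carried into the Hardy argument. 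I would therefore isolate, as an auxiliary observation, the fact that for multiples of $\omega^2$ below $\omega^\omega$ whose coefficients are smaller than $x$ one has $\alpha\gapprox{x}\beta$ whenever $\alpha>\beta$, and then conclude via Lemma~\ref{l:lpar06:lemma3a}.

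For the right inequality I would use $\wo(p)<\omega^{\length{p}}$, whence $\omega^2\cdot\wo(p)<\omega^{\ell+2}<\omega^{4+\ell}$, together with $u(\mc(p),n)+1\leq 2\mc(p)+n+1$. The surplus factor $\omega^2$ in the exponent dominates the merely linear growth of the argument: unfolding $\Hardy{\omega^{4+\ell}}(\mc(p)+n)$ by one fundamental-sequence layer and applying the same subscript- and argument-monotonicity of Lemma~\ref{l:lpar06:lemma3a} yields $\Hardy{\omega^{4+\ell}}(\mc(p)+n)>\Hardy{\omega^{\ell+2}}(2\mc(p)+n+1)\geq\Hardy{\omega^2\cdot\wo(p)}(u(\mc(p),n)+1)$, which is the claimed strict inequality. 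Combining this with Lemma~\ref{l:lpar06:lemma4} then reads off the intended Ackermannian bound on $\Ah{n}$.
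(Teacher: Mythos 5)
Your proof is correct and essentially coincides with the paper's: both bound $\Ah{n}(p)$ by the Hardy function with subscript $\omega^{2}\cdot\wo(p)$ via induction along the underlying well-ordering, reduce everything to a one-step estimate in which the identity $\Hardy{\alpha+\omega^{2}}=\Hardy{\alpha}\circ\Hardy{\omega^{2}}$ absorbs the possible growth of the maximal coefficient from $\mc(p)$ to $\mc(q)$, and both hinge on the same reachability fact~\eqref{eq:lpar06:claim} (that lexicographic descent becomes descent along fundamental sequences once the index dominates the coefficients of the target), which you correctly isolate as the crux. The second inequality is likewise obtained exactly as in the paper, by trading the two extra powers of $\omega$ in the exponent against the linear blow-up of the argument.
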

\begin{proof}
  To prove the first half of~\eqref{eq:lpar06:eq1} , we make use of the following fact:
  \begin{equation} \label{eq:lpar06:claim}
    p \glex q \land n \geq \mc(q) \folgt \wo(p) \gapprox{n} \wo(q) \tpkt
  \end{equation}
To prove \eqref{eq:lpar06:claim}, one proceeds by induction on $\glex$ and uses that 
the embedding $\wo \colon \N^\ast \to \omega^{\omega}$
is essentially an order-isomorphism. We omit the details.

\medskip
By definition, we have $\Ah{n}(p) = \max(\{ \Ah{n}(q) + 1 \mid p \glexapprox{n} q \})$.
Hence it suffices to prove
\begin{equation} \label{eq:lpar06:star}
  p \glex q \land  u(\mc(p),n) \geq \mc(q) \folgt \Ah{n}(q) < \Hardy{\omega^2 \cdot \wo(p)}(u(\mc(p),n)+1) 
\end{equation}
We fix $p$ fulfilling the assumptions in~\eqref{eq:lpar06:star}; let $\alpha = \wo(p)$, $\beta = \wo(q)$, 
$v = u(\mc(q),n)$. We use~\eqref{eq:lpar06:claim} to obtain $\alpha \gapprox{v} \beta$. We proceed
by induction on $p$.

Consider the case $\alpha_v = \beta$. As $p \glex q$, we can employ induction hypothesis to conclude
$\Ah{n}(q) \leq \Hardy{\omega^2 \cdot \wo(q)}(u(\mc(q),n)+1)$. It is not difficult to see that
for any $p \in \N^\ast$ and $n \in \N$, $4\mc(p)+2n+1 \leq \Hardy{\omega^2}(u(\mc(p),n))$. In sum, we obtain:
\begin{align*}
  \Ah{n}(q) &\leq \Hardy{\omega^2 \cdot \wo(q)}(u(\mc(q),n)+1) \\
  & \leq \Hardy{\omega^2 \cdot \alpha_v}(u(u(\mc(p),n),n)+1) && \mc(q) \leq u(\mc(p),n)\\
  & \leq \Hardy{\omega^2 \cdot \alpha_v}(4\mc(p)+2n+1) && \text{Definition of $u$}\\
  & \leq \Hardy{\omega^2 \cdot \alpha_v} \Hardy{\omega^2}(u(\mc(p),n)) \\
  & = \Hardy{\omega^2 \cdot (\alpha_v+1)}(u(\mc(p),n)) &&\text{Lemma~\ref{l:lpar06:lemma3a}(\ref{en:lemma3a:4})}\\
  & < \Hardy{\omega^2 \cdot (\alpha_v+1)}(u(\mc(p),n)+1) &&\text{Lemma~\ref{l:lpar06:lemma3a}(\ref{en:lemma3a:3})}\\
  & \leq \Hardy{\omega^2 \cdot \alpha}(u(\mc(p),n)+1) &&\text{Lemma~\ref{l:lpar06:lemma3a}(\ref{en:lemma3a:2})}
\end{align*}
The application of Lemma~\ref{l:lpar06:lemma3a}(\ref{en:lemma3a:2}) in the last step is feasible as by definition
$\alpha \gapprox{v} \alpha_v$. An application of Lemma~\ref{l:lpar06:lemma3a}(\ref{en:lemma3a:1}) yields
$\alpha_v + 1 \leqapprox{v+1} \alpha$. From which we deduce 
$\omega^2 \cdot (\alpha_v+1) \leqapprox{v+1} \omega^2 \cdot \alpha$.

Secondly, consider the case $\alpha_v \gapprox{v} \beta$. In this case the proof follows the pattern of the above proof,
but an additional application of Lemma~\ref{l:lpar06:lemma3a}(\ref{en:lemma3a:4}) is required. This completes
the proof of\eqref{eq:lpar06:star}. 

\medskip
To prove the second part of~\eqref{eq:lpar06:eq1}, we proceed as follows: 
The fact that $\omega^{\ell} > \wo(p)$ is immediate from the definitions. Induction on $p$ reveals
that even $\omega^{\ell} \gapprox{\mc(p)} \wo(p)$ holds. Thus in conjunction with the first part of \eqref{eq:lpar06:eq1}, 
we obtain:
\begin{eqnarray*}
  \Ah{n}(p) & \leq & \Hardy{\omega^2 \cdot \wo(p)}(u(\mc(p),n)+1) \leq \Hardy{\omega^{2+\ell}}(u(\mc(p),n)+1) \\
  & \leq & \Hardy{\omega^{4+\ell}}(\mc(p)+n)  \tpkt
\end{eqnarray*}
The last step follows as $2\mc(p)+n+1 \leq \Hardy{\omega^2}(\mc(p)+n)$.
\end{proof}

As a consequence of Lemma~\ref{l:lpar06:lemma4} and~\ref{l:lpar06:lemma5}, we obtain the following proposition.
\begin{theorem} \label{t:thm2}
For all $n \geq 1$: If $\ell = \length{p}$, then $\Ah{n}(p) \leq \Ack(2\ell+8,\mc(p)+n)$.
\end{theorem}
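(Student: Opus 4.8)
The plan is to read Theorem~\ref{t:thm2} as a direct corollary obtained by composing the two preceding lemmas, so no new combinatorial work is required: essentially all of the content has already been discharged in Lemma~\ref{l:lpar06:lemma5} (which collapses the approximation chain measured by $\Ah{n}$ into a single Hardy function through the order-isomorphic embedding $\wo$) and in Lemma~\ref{l:lpar06:lemma4} (which dominates the relevant Hardy functions by the Ackermann function). First I would fix the polynomial $u$ to be $u(m,n) = 2m+n$; this is the one choice simultaneously meeting the lower bound $u(n,m) \geq 2n+m$ demanded in Lemma~\ref{l:lpar06:lemma3} and the upper bound $u(m,n) \leq 2m+n$ hypothesised in Lemma~\ref{l:lpar06:lemma5}, so both lemmas are available for the same $u$.

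With $u$ so fixed, the argument proceeds in two moves. From the second inequality of Lemma~\ref{l:lpar06:lemma5} I obtain, writing $\ell = \length{p}$,
\begin{equation*}
\Ah{n}(p) < \Hardy{\omega^{4+\ell}}(\mc(p)+n) \tpkt
\end{equation*}
Since $4+\ell \geq 4 \geq 1$, the exponent lies in the range required by Lemma~\ref{l:lpar06:lemma4}, which I apply with that exponent and argument $\mc(p)+n$ to get
\begin{equation*}
\Hardy{\omega^{4+\ell}}(\mc(p)+n) \leq \Ack(2(4+\ell),\mc(p)+n) = \Ack(2\ell+8,\mc(p)+n) \tpkt
\end{equation*}
Chaining the two estimates yields $\Ah{n}(p) < \Ack(2\ell+8,\mc(p)+n)$, and weakening the strict inequality to $\leq$ gives exactly the claimed bound. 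The hypothesis $n \geq 1$ is inherited from the surrounding development and keeps us safely in the regime of the earlier lemmas; it is not itself what triggers Lemma~\ref{l:lpar06:lemma4}, whose side condition is met because the exponent $4+\ell$ is automatically at least $1$.

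I do not expect any genuine obstacle in this final step, since it is purely a matter of composing inequalities and checking index arithmetic ($2(4+\ell)=2\ell+8$). The only point demanding care is the bookkeeping around $u$ noted above, together with keeping track of which quantity plays the role of the Hardy exponent versus its argument. The substantive difficulties—establishing $\alpha \gapprox{v} \beta$ from $p \glex q$ via the embedding $\wo$, the induction underlying Lemma~\ref{l:lpar06:lemma5}, and the lexicographic induction on the Ackermann recursion in Lemma~\ref{l:lpar06:lemma4}—have already been carried out, so the proof of Theorem~\ref{t:thm2} is short and mechanical.
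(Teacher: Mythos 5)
Your proposal is correct and matches the paper exactly: the paper derives Theorem~\ref{t:thm2} precisely as a composition of the second inequality of Lemma~\ref{l:lpar06:lemma5} with Lemma~\ref{l:lpar06:lemma4} applied to the exponent $4+\ell$, giving $\Ack(2(4+\ell),\mc(p)+n)=\Ack(2\ell+8,\mc(p)+n)$. Your bookkeeping about fixing $u(m,n)=2m+n$ and the role of the side condition on the Hardy exponent is consistent with how the paper handles these points (the same choice of $u$ is made explicitly in the proof of Theorem~\ref{t:main}).
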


\section{Derivation Height of TRSs over Infinite Signatures Compatible with KBOs} \label{Sum}

Based on Theorem~\ref{t:thm1} and~\ref{t:thm2} we obtain that the derivation height of $t \in \GTA{\FS}$ 
is bounded in the Ackermann function.

\begin{theorem} \label{t:main}
  Let $(\FS,\RS)$ be a \TRS, compatible with \KBO. Assume the set
  $M \defsym\{\spec(r) \mid l \to r \in \RS\} \cup \{(\mrk(r) \modminus \rk(l)) \mid l \to r \in \RS\}$
  is finite and the arities of the symbols in $\FS$ are bounded; set $K \defsym \max(M)$.
  Then $\Dh{\RS}(t) \leq \Ack(\bO(\length{\tw(t)})+\mc(\tw(t))+K,0)$. 
\end{theorem}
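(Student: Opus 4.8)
The plan is to obtain Theorem~\ref{t:main} as a short corollary of Theorem~\ref{t:thm1} and Theorem~\ref{t:thm2}, the only genuine work being a shift lemma for the Ackermann function that moves its second argument into the first. First I would invoke Theorem~\ref{t:thm1}: under exactly the present hypotheses (bounded arities, $M$ finite, $K \defsym \max(M)$) it already delivers $\Dh{\RS}(t) \leq \Ah{K}(\tw(t))$. Abbreviating $p \defsym \tw(t)$ and $\ell \defsym \length{p}$, it then remains only to bound $\Ah{K}(p)$ by the claimed Ackermannian expression.

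Next I would apply Theorem~\ref{t:thm2}, which gives $\Ah{n}(p) \leq \Ack(2\ell+8,\mc(p)+n)$ for every $n \geq 1$. To cover the degenerate case $K=0$ I would put $n \defsym \max\{K,1\}$ and use that $\Ah{n}$ is monotone in $n$: a larger $n$ only enlarges the relation $\glexapprox{n}$, since $p \glexapprox{n} q$ means $p \glex q$ and $u(\mc(p),n) \geq \mc(q)$ and $u$ is monotone in its second argument, so longer $\glexapprox{n}$-chains become admissible. Hence $\Ah{K}(p) \leq \Ah{n}(p)$ and therefore $\Dh{\RS}(t) \leq \Ack(2\ell+8,\mc(p)+K+1)$.

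The key remaining ingredient is the shift lemma $\Ack(a,b) \leq \Ack(a+b,0)$ for all $a,b \in \N$, which I would state and prove as a small auxiliary fact since it is the one piece not literally contained in the preceding sections. It follows by iterating the inequality $\Ack(n+1,m) \geq \Ack(n,m+1)$. That inequality holds with equality for $m=0$, by the defining clause $\Ack(n+1,0)=\Ack(n,1)$; and for $m \geq 1$ it follows from $\Ack(n+1,m)=\Ack(n,\Ack(n+1,m-1))$ together with $\Ack(n+1,m-1) \geq \Ack(1,m-1) = m+1$ (monotonicity in the first argument) and monotonicity in the second argument. Iterating $b$ times turns $\Ack(a,b)$ into $\Ack(a+b,0)$. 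Applying this with $a = 2\ell+8$ and $b = \mc(p)+K+1$ yields $\Dh{\RS}(t) \leq \Ack(2\ell+9+\mc(p)+K,0)$, and since $2\ell+9 = \bO(\ell) = \bO(\length{\tw(t)})$, monotonicity of $\Ack$ in its first argument gives exactly $\Dh{\RS}(t) \leq \Ack(\bO(\length{\tw(t)}) + \mc(\tw(t)) + K,0)$, as claimed.

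Because both heavy estimates are already in place---Theorem~\ref{t:thm1} tying $\Dh{\RS}$ to the abstract length function $\Ah{K}$, and Theorem~\ref{t:thm2} bounding $\Ah{n}$ through the Hardy functions and Lemma~\ref{l:lpar06:lemma4}---the main obstacle here is genuinely minor: it is only the bookkeeping of the shift lemma and the monotonicity of $\Ah{n}$ in $n$, together with checking that the additive constants $2\ell+9$ and the $+1$ introduced by $\max\{K,1\}$ are correctly absorbed into the big-Oh term while $\mc(\tw(t))$ and $K$ are retained separately, as in the statement.
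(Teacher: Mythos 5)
Your proposal is correct and follows essentially the same route as the paper: apply Theorem~\ref{t:thm1} to get $\Dh{\RS}(t) \leq \Ah{K}(\tw(t))$, then Theorem~\ref{t:thm2}, then the shift fact $\Ack(n,m) \leq \Ack(n+m,0)$ (which the paper simply asserts as ``easy to see'' and you prove in detail). Your extra care with the case $K=0$ via monotonicity of $\Ah{n}$ in $n$ addresses a detail the paper silently glosses over, but it does not change the substance of the argument.
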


\begin{proof}
We set $u(n,m) = 2n+m$ and keep the polynomial $u$ fixed for the remainder. 
Let $p = \tw(t)$ and $\ell = \length{p}$. 
Due to Theorem~\ref{t:thm1} we conclude that $\Dh{\RS}(t) \leq \Ah{K}(p)$.
It is easy to see that $\Ack(n,m) \leq \Ack(n+m,0)$. 
Using this fact and Theorem~\ref{t:thm2} we obtain:
$\Ah{K}(p) \leq \Ack(\bO(\ell),\mc(p)+K) \leq \Ack(\bO(\ell)+\mc(p)+K,0)$.
Thus the theorem follows.
\end{proof}

For fixed $t \in \GTA{\FS}$ we can bound the argument of the Ackermann function in
the above theorem in terms of the size of $t$. 
We define
\begin{align*}
  r_{\max} \defsym \mrk(t) && 
  w_{\max} \defsym \max(\{\weight{u} \mid u \in \fs{t} \cup \var{t}\} \tpkt
\end{align*}

\begin{lemma} \label{l:lpar06:sum}
  For $t \in \GTA{\FS}$, let $r_{\max}$, $w_{\max}$ be as above. Let $b \defsym \max\{\Ar{\FS},3\}+1$,
  and set $n \defsym \size{t}$. Then $\weight{t} \leq w_{\max} \cdot n$, $\spec(t) \leq n$, 
  $\mrk(t) \leq r_{\max}$. Hence $\length{\tw(t)} \leq b^{w_{\max}(n) \cdot n + 1}$ and
  $\mc(\tw(t)) \leq w_{\max}(n) \cdot n + r_{\max}$.
\end{lemma}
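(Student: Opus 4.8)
The statement bundles together five elementary estimates, so my plan is to prove the first three by a straightforward structural induction on $t$ and then read off the last two directly from the definition of $\tw$. None of these steps needs the machinery of the preceding sections: the whole lemma is a bookkeeping exercise that converts the abstract measures $\wei{\cdot}$, $\spec$, and $\mrk$ into bounds phrased in the size $n = \size{t}$.

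First I would establish $\wei{t} \leq w_{\max} \cdot n$ by induction on the structure of $t$. In the base case $t$ is a constant (for ground $t$ the variable clause is vacuous) and $\wei{t} = \weight{t} \leq w_{\max} = w_{\max}\cdot\size{t}$. For $t = g(t_1,\dots,t_m)$ we have $\wei{t} = \weight{g} + \sum_{i=1}^{m} \wei{t_i}$; since $\weight{g} \leq w_{\max}$ by the definition of $w_{\max}$ and $\wei{t_i} \leq w_{\max}\cdot\size{t_i}$ by the induction hypothesis, this is at most $w_{\max}\bigl(1 + \sum_{i} \size{t_i}\bigr) = w_{\max}\cdot\size{t}$. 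The bound $\spec(t) \leq n$ follows by the same induction: in the clause $t = \spsym^a x$ we have $\spec(t) = a \leq a+1 = \size{t}$, and in the clause $t = \spsym^a g(t_1,\dots,t_m)$ the leading count $a$ and each $\spec(t_i) \leq \size{t_i}$ are all bounded by $a + 1 + \sum_{i}\size{t_i} = \size{t}$, using that the $a$ special symbols are themselves counted in $\size{t}$. The third estimate $\mrk(t) \leq r_{\max}$ is immediate, since $r_{\max}$ was defined to be $\mrk(t)$.

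Next, for the bound on $\length{\tw(t)}$ I would simply invoke the defining length property $\length{\tw(t)} = b^{\wei{t}+1}$ of the embedding together with $\wei{t} \leq w_{\max}\cdot n$ from the first step and the monotonicity of $x \mapsto b^{x+1}$ (note $b = \max\{\Ar{\FS},3\}+1 \geq 4$), giving $\length{\tw(t)} \leq b^{\,w_{\max}\cdot n + 1}$. For the final bound on $\mc(\tw(t))$ I would dissect $\tw(t)$ into its constituent triples: inspecting the recursive clauses for $\tw$, every entry of the sequence $\tw(t)$ is either a weight $\wei{r}$ of a subterm $r \subterm t$, a special-symbol count $a \leq \spec(t)$, a rank $\rk(g) \leq \mrk(t)$, or a padding zero. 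The weight entries satisfy $\wei{r} \leq \wei{t} \leq w_{\max}\cdot n$, the count entries satisfy $a \leq \spec(t) \leq n \leq w_{\max}\cdot n$ (using $w_{\max} \geq w_0 \geq 1$, which holds since $t$ contains a constant and constants have weight at least $w_0$), and the rank entries satisfy $\rk(g) \leq \mrk(t) = r_{\max}$; hence $\mc(\tw(t)) \leq w_{\max}\cdot n + r_{\max}$.

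The only point requiring a little care is the last one: one must check that no entry of $\tw(t)$ escapes the three categories above, which is exactly what the recursive clauses for $\tw$ guarantee and what Lemma~\ref{l:lpar06:help1} encapsulates for the subterm weights. Beyond that the lemma is entirely routine, so I do not anticipate a genuine obstacle.
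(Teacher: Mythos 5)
Your proof is correct and follows the same route as the paper, which simply states that the lemma is proved by induction on $t$; your write-up supplies the routine details of that induction (the additivity of $\wei{\cdot}$, the fact that every entry of $\tw(t)$ is a subterm weight, a special-symbol count, a rank, or a padding zero, and the defining length property $\length{\tw(t)} = b^{\wei{t}+1}$). No gaps.
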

\begin{proof}
  The proof proceeds by induction on $t$. 
\end{proof}

\begin{corollary} \label{c:main}
  Let $(\FS,\RS)$ be a \TRS, compatible with a \KBO\ $\gkbo$. Assume 
  the set $\{\spec(r) \mid l \to r \in \RS\} \cup \{(\mrk(r) \modminus \rk(l)) \mid l \to r \in \RS\}$
  is finite and the arites of the symbols in $\FS$ are bounded. 
  Then for $t \in \GTA{\FS}$, there exists a constant $c$---depending on $t$, $(\FS,\RS)$, and $\gkbo$---such that
  $\Dh{\RS}(t) \leq \Ack(c^n,0)$.
\end{corollary}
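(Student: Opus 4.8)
The plan is to derive Corollary~\ref{c:main} as a direct consequence of Theorem~\ref{t:main} together with the size estimates of Lemma~\ref{l:lpar06:sum}. The work of bounding the derivation height in terms of the Ackermann function has already been done in Theorem~\ref{t:main}, which gives
\begin{equation*}
  \Dh{\RS}(t) \leq \Ack(\bO(\length{\tw(t)})+\mc(\tw(t))+K,0) \tpkt
\end{equation*}
So the only remaining task is to absorb all the quantities $\length{\tw(t)}$, $\mc(\tw(t))$ and $K$ occurring in the first argument of $\Ack$ into a single bound of the shape $c^n$, where $n \defsym \size{t}$ and $c$ is a constant depending only on $t$, the \TRS\ $(\FS,\RS)$, and $\gkbo$.

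First I would fix $t \in \GTA{\FS}$ and set $n \defsym \size{t}$, $b \defsym \max\{\Ar{\FS},3\}+1$, and let $r_{\max}$ and $w_{\max}$ be as defined just before Lemma~\ref{l:lpar06:sum}. Invoking Lemma~\ref{l:lpar06:sum} I obtain the two key estimates
\begin{equation*}
  \length{\tw(t)} \leq b^{w_{\max} \cdot n + 1} \qquad\text{and}\qquad \mc(\tw(t)) \leq w_{\max} \cdot n + r_{\max} \tpkt
\end{equation*}
The dominant term is clearly $\length{\tw(t)}$, which is already exponential in $n$, while $\mc(\tw(t))$ grows only linearly and $K = \max(M)$ is a fixed constant independent of $t$ (it depends only on the rules of $\RS$). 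Since the hypotheses of the corollary are exactly those of Theorem~\ref{t:main}---finiteness of $M$ and boundedness of the arities---the theorem applies, and it remains to show that $\bO(\length{\tw(t)})+\mc(\tw(t))+K$ is bounded by $c^n$ for a suitable constant $c$.

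The routine final step is then to choose $c$ large enough to dominate the exponential base. Writing $C$ for the implicit constant hidden in the big-Oh of Theorem~\ref{t:main}, the first argument of $\Ack$ is at most
\begin{equation*}
  C \cdot b^{w_{\max} \cdot n + 1} + w_{\max} \cdot n + r_{\max} + K \tpkt
\end{equation*}
All three summands are bounded by some constant times $b^{w_{\max} \cdot n}$ for almost all $n$, and since $w_{\max}$ is a constant depending only on $t$ and the weight function, we may take $c$ to be any integer with $c \geq b^{w_{\max}}$ inflated by a further constant factor to swallow $C$; then $c^n$ dominates the whole expression for all relevant $n$. Because $\Ack$ is monotone in its first argument, this yields $\Dh{\RS}(t) \leq \Ack(c^n,0)$, as required. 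The constant $c$ manifestly depends only on $t$ (through $w_{\max}$, $r_{\max}$), on $(\FS,\RS)$ (through $K$ and $b$), and on $\gkbo$ (through the weight function entering $\tw$ and $w_{\max}$), which is precisely the dependence claimed.

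I do not expect a genuine obstacle here, since the substantive analysis is entirely contained in Theorems~\ref{t:main}, \ref{t:thm1}, \ref{t:thm2} and Lemma~\ref{l:lpar06:sum}. The only point requiring a little care is the bookkeeping of the constant: one must verify that $w_{\max}$, $r_{\max}$, and $K$ are all independent of the \emph{size} $n$ of $t$ (they depend on $t$ and $\RS$ but not on how $n$ scales), so that the base of the exponential is a genuine constant and the bound $c^n$ is meaningful. Once this is observed the corollary follows immediately by monotonicity of the Ackermann function.
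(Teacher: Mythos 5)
Your proposal is correct and follows exactly the paper's route: the paper's own proof consists of the single sentence that the corollary is a direct consequence of Theorem~\ref{t:main} and Lemma~\ref{l:lpar06:sum}, and your write-up merely makes explicit the bookkeeping that absorbs $\bO(\length{\tw(t)})+\mc(\tw(t))+K$ into $c^n$ using monotonicity of $\Ack$. The one point you rightly flag---that $w_{\max}$, $r_{\max}$ and $K$ are constants not scaling with $n$---is indeed the only thing to check, and your handling of it is fine.
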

\begin{proof}
  The corollary is a direct consequence of Theorem~\ref{t:main} and Lemma~\ref{l:lpar06:sum}. 
\end{proof}

\begin{remark}
  Note that it is not straight-forward to apply Theorem~\ref{t:main} to classify the derivational complexity of $\RS$,
over infinite signature, compatible with \KBO. This is only possible in the (unlikely)
case that for every term $t$ the maximal rank $\mrk(t)$ and the weight $\weight{t}$ of $t$ 
can be bounded uniformly, i.e.\ independent of the size of $t$. 
\end{remark}

We apply Corollary~\ref{c:main} to the motivating example introduced in Section~\ref{lpar06:Introduction}.
\begin{example} \label{ex:3}
  Recall the definition of $\RS$ and $\RS' \defsym \Rlab \cup \Dec$ from Example~\ref{ex:1} and~\ref{ex:2} respectively. 
Let $s \in \GTA{\FSlab}$ be fixed and set $n \defsym \size{s}$. 

Clearly the arities of the symbols in $\FSlab$
are bounded. In Example~\ref{ex:2} we indicated that the set $M = \{(\mrk(r) \modminus \rk(l)) \mid l \to r \in \RS'\}$
is finite. Hence, Corollary~\ref{c:main} is applicable to conclude the existence of $c \in \N$ with
$\Dh{\RS'}(s) \leq \Ack(c^n,0)$.
In order to bound the derivation height of $\RS$, we employ Lemma~\ref{l:lpar06:zantema} to observe
that for all $t \in \GTA{\FS}$: $\Dh{\RS}(t) \leq \Dh{\RS'}(\lab{\alpha}(t))$, for arbitrary $\alpha$.
As $\size{t} = \size{\lab{\alpha}(t)}$ the above calculation yields
\begin{equation*}
  \Dh{\RS}(t) \leq \Dh{\RS'}(\lab{\alpha}(t)) \leq \Ack(c^n,0) \tpkt
\end{equation*}
Note that $c$ depends only on $t$, $\RS'$ and the \KBO\ $\gkbo$ employed.
\end{example}

The main motivation of this work was to provide an alternative proof of Lepper's result
that the derivational complexity of any \emph{finite} \TRS, compatible with \KBO, is bounded
by the Ackermann function, see \cite{Lepper:2001a}. We recall the definition of the
\emph{derivational complexity}:
\begin{equation*}
  \Dc{\RS}(n) \defsym \max (\{\Dh{\RS}(t) \mid \size{t} \leq n\}) \tpkt
\end{equation*}

\begin{corollary}
  Let $(\FS,\RS)$ be a \TRS, compatible with \KBO, such that $\FS$ is finite.
  Then $\Dh{\RS}(n) \leq \Ack(2^{\bO(n)},0)$.
\end{corollary}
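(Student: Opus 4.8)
The final statement to prove is the corollary: for a finite TRS $(\FS,\RS)$ compatible with KBO, $\Dc{\RS}(n) \leq \Ack(2^{\bO(n)},0)$.

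The plan is to derive this as a straightforward consequence of the already-established Corollary~\ref{c:main} together with the definition of derivational complexity. First I would observe that when $\FS$ is finite, the hypotheses of Corollary~\ref{c:main} are automatically satisfied: a finite signature trivially has bounded arities, and the set $M = \{\spec(r) \mid l \to r \in \RS\} \cup \{(\mrk(r) \modminus \rk(l)) \mid l \to r \in \RS\}$ is finite because $\RS$ itself is finite (being a finite TRS). Hence Corollary~\ref{c:main} applies, yielding for each ground term $t$ a constant $c$ with $\Dh{\RS}(t) \leq \Ack(c^{\size{t}},0)$.

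The one subtlety to address is the \emph{uniformity} of the constant $c$. In Corollary~\ref{c:main} the constant $c$ is said to depend on $t$, $(\FS,\RS)$, and $\gkbo$, whereas to bound $\Dc{\RS}(n) = \max\{\Dh{\RS}(t) \mid \size{t} \leq n\}$ we need a single constant working for all $t$ simultaneously. The key step is therefore to trace back through Theorem~\ref{t:main} and Lemma~\ref{l:lpar06:sum} and check that, once $\FS$ is finite, $c$ can be chosen independently of $t$. Indeed, by Lemma~\ref{l:lpar06:sum} the relevant quantities are controlled by $w_{\max} = \max\{\weight{u} \mid u \in \fs{t} \cup \var{t}\}$, $r_{\max} = \mrk(t)$, and $b = \max\{\Ar{\FS},3\}+1$. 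When $\FS$ is finite, both $\max\{\weight{u} \mid u \in \FS\}$ and $\max\{\rk(g) \mid g \in \FS\}$ are global constants of the system (not of the particular term $t$), and $b$ is fixed. Thus $w_{\max}$ and $r_{\max}$ are bounded by constants depending only on $(\FS,\RS)$ and $\gkbo$, and the base $c$ emerging from Theorem~\ref{t:main} via $\length{\tw(t)} \leq b^{w_{\max}\cdot n + 1}$ and $\mc(\tw(t)) \leq w_{\max}\cdot n + r_{\max}$ is genuinely uniform.

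With uniformity secured, the conclusion follows by simple arithmetic: from Theorem~\ref{t:main}, $\Dh{\RS}(t) \leq \Ack(\bO(\length{\tw(t)}) + \mc(\tw(t)) + K, 0)$, and substituting the bounds of Lemma~\ref{l:lpar06:sum} gives an argument of the form $\bO(b^{w_{\max}\cdot n}) + w_{\max}\cdot n + r_{\max} + K$ in the first slot of $\Ack$. Since $b^{w_{\max}\cdot n} = 2^{(\log_2 b)\cdot w_{\max}\cdot n} = 2^{\bO(n)}$ and the remaining linear terms are dominated by this exponential, the whole first argument collapses to $2^{\bO(n)}$, where the hidden constant depends only on $(\FS,\RS)$ and $\gkbo$. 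Taking the maximum over all $t$ with $\size{t} \leq n$ preserves this bound and yields $\Dc{\RS}(n) \leq \Ack(2^{\bO(n)},0)$. I expect the main (though modest) obstacle to be the careful bookkeeping in the second paragraph — confirming that no term-dependent quantity sneaks into $c$ once finiteness of $\FS$ is assumed — rather than any deep combinatorial argument, since all the heavy lifting has already been done in Theorems~\ref{t:main},~\ref{t:thm1},~\ref{t:thm2} and Lemma~\ref{l:lpar06:sum}.
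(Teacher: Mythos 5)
Your proposal is correct and follows essentially the same route as the paper: invoke Theorem~\ref{t:main}, note that finiteness of $\FS$ makes $K$, $\Ar{\FS}$, the maximal weight and the maximal rank into global constants independent of $t$, and then redo the Lemma~\ref{l:lpar06:sum}-style estimation to see that $\length{\tw(t)}$ is $2^{\bO(\size{t})}$ while the remaining summands are dominated. Your explicit attention to the uniformity of the constant is exactly the point the paper's proof compresses into the phrase that $\mrk(t)$ and $\weight{t}$ ``can be estimated independent of $t$.''
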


\begin{proof}
As $\FS$ is finite, the $K = \max(\{(\mrk(r) \modminus \rk(l)) \mid l \to r \in \RS'\})$
and $\Ar{\FS}$ are obviously well-defined. Theorem~\ref{t:main} yields that 
$\Dh{\RS}(t) \leq \Ack(\bO(\length{\tw(t)})+\mc(\tw(t))+K,0)$. 
Again due to the finiteness of $\FS$, for any $t \in \GTA{\FS}$, $\mrk(t)$ and 
$\weight{t}$ can be estimated independent of $t$. A similar argument calculation as in Lemma~\ref{l:lpar06:sum}
thus yields 
$\Dh{\RS}(t) \leq \Ack(2^{\bO(\size{t})},0)$.
Hence the result follows.
\end{proof}

\begin{remark}
Note that if we compare the above corollary to Corollary 19 in~\cite{Lepper:2001a}, we see that Lepper could even
show that $\Dc{\RS}(n) \leq \Ack(\bO(n),0)$. On the other hand, as already remarked above, Lepper's
result is not admissible if the signature is infinite.  
\end{remark}
 
% \medskip
In concluding, we want to stress that the method is also applicable to obtain bounds
on the derivational height of non simply terminating \TRSs, a
feature only shared by Hofbauer's approach to utilise context-dependent interpretations, cf.~\cite{Hofbauer:2001}.
\begin{example}
  Consider the \TRS\ consisting of the following rules:
  \begin{align*}
    & f(x) \circ (y \circ z) \to x \circ (f^2(y) \circ z) && a(a(x)) \to a(b(a(x))) \\
    & f(x) \circ (y \circ (z \circ w)) \to x \circ (z \circ (y \circ w)) \\
    & f(x) \to x
  \end{align*}
Let us call this \TRS\ $\RS$ in the following.
Due to the rule $a(a(x)) \to a(b(a(x)))$, $\RS$ is not simply terminating. And due to the
three rules, presented on the left, the derivational complexity of $\RS$ cannot be bounded by
a primitive recursive function, compare~\cite{Hofbauer:1991}.
\end{example}

Termination can be shown by semantic labelling, where the natural numbers are used as semantics and as labels. 
The interpretations $a_\N(n) = n+1$, $b_\N(n) = \max(\{0, n-1\})$, $f_\N(n) = n$, and $m \circ_\N n = m+n$
give rise to a quasi-model. Using the labelling function $\ell_{a}(n) = n$,
termination of $\RS' \defsym \Rlab \cup \Dec$ can be shown by an instance $\gkbo$ of \KBO\ with 
weight function $(\Weight,1)$: 
$\Weight(\circ) = \Weight(f) = 0$, $\Weight(b) = 1$, and $\Weight(a_n) = n$
and precedence: $f \succ \circ \succ \dots a_{n+1} \succ a_n \succ \dots \succ a_0 \succ b$. The
symbol $f$ is special. Clearly the arities of the symbols in $\FSlab$
are bounded. Further, it is not difficult
to see that the set 
$M = \{\spec(r) \mid l \to r \in \RS'\} \cup \{(\mrk(r) \modminus \rk(l)) \mid l \to r \in \RS'\}$
is finite and $K \defsym \max(M) = 2$. 

Proceeding as in Example~\ref{ex:3}, we see that for each $t \in \GTA{\FS}$, there exists a 
constant $c$ (depending on $t$, $\RS'$ and $\gkbo$) such that $\Dh{\RS}(t) \leq \Ack(c^n,0)$.

}

\chapter{Complexity Analysis by Rewriting}
\label{flops08}

\subsection*{Publication Details}
M.~Avanzini and G.~Moser.
\newblock Complexity analysis by rewriting.
\newblock In \emph{Proceedings of the 9th International Symposium on Functional
  and Logic Programming}, number 4989 in LNCS, pages 130--146. Springer Verlag,
  2008.%
\footnote{This research was partially supported by FWF (Austrian Science Fund) project P20133.}

\subsection*{Ranking}
The International Symposium on Functional and Logic Programming has been ranked
\textbf{A} by CORE in 2007.

{
\input{flops08.sty}

\subsection*{Abstract}
In this paper we introduce a restrictive version of the multiset path order, 
called \emph{polynomial path order}. This recursive path order 
induces polynomial bounds on the maximal number of innermost rewrite steps.
This result opens the way to automatically verify for a given program, 
written  in an eager functional programming language, 
that the maximal number of evaluation steps starting from any function call is polynomial 
in the input size.
To test the feasibility of our approach we have implemented 
this technique and compare its applicability to existing methods.

\section{Introduction}
Term rewriting is a conceptually simple but powerful abstract model of
computation that underlies much of declarative programming. 
In rewriting, proving termination is an important research field. 
Powerful methods have been introduced to establish termination of a given term
rewrite system.
One of the most natural ways to proof termination is the 
use of \emph{interpretations}. Consequentially this technique has been introduced 
quite early. Moreover, if one is interested in automatically proving termination,
\emph{polynomial interpretations} provide a natural starting point, cf.~\cite{contejean:2005}. 
However, termination proofs via polynomial interpretations are 
limited as the longest possible rewrite sequences admitted by
rewrite systems compatible with a polynomial interpretation are 
double-exponential (in the size of the initial term), see~\cite{HofbauerLautemann:1989}. Another well-studied
(and direct) termination technique is the use of reduction orders---for example 
simplification orders. Still this technique is limited, which can again
be shown by the analysis of the induced derivation length, cf.~\cite{Hofbauer:1992,Weiermann:1995,Lepper:2001a}. 
In recent years the emphasis shifted towards 
transformation techniques like the dependency pair method or 
semantic labeling.
Transformation techniques have significantly 
increased the possibility to automatically prove termination. 

Once we have established termination of a given rewrite system $\RS$, 
it seems natural to direct the attention to the analysis of the \emph{complexity} of $\RS$. 
In rewriting the complexity of a rewrite system $\RS$ is measured as the \emph{maximal derivation length}
with respect to $\RS$. As mentioned above for \emph{direct} termination methods a significant
amount of investigations has been conducted, providing a suitable foundation
for further research. Unfortunately, almost nothing is known 
about the length of derivations induced by 
state-of-the-art termination techniques like 
the dependency pair method or semantic labeling.
For the dependency pair method no results on the induced derivation
length are known. Partial result with respect to
semantic labeling are reported in~\cite{Moser:2006lpar}.

In this paper we introduce a restriction of the multiset path order, called
\emph{polynomial path order} (denoted as $\gpop$). 
Our main result states that this recursive path order induces polynomial bounds on the 
maximal length of \emph{innermost} rewrite steps. As we have successfully 
implemented this technique, we thus can \emph{automatically} verify 
for a given term rewrite system $\RS$ that $\RS$ admits at most 
polynomial innermost derivation length (on the set of \emph{constructor-based} terms).
This opens the way to automatically verify for a given program---%
written  in an eager functional programming language---%
that its runtime complexity is polynomial (in the input size). 
% Here we employ the close connection between functional programming
% and rewriting.
The only restrictions in the applicability of the result are that (i) the
functional program $\mathsf{P}$ is transformable into a term rewrite system $\RS$ and (ii) 
a feasible (i.e., polynomial)  derivation length with respect to $\RS$ 
gives rise to a feasible runtime complexity of $\mathsf{P}$. 
In short the transformation has to be 
\emph{non-termination} and \emph{complexity preserving}.

The definition of polynomial path orders employs the idea of 
\emph{tiered recursion}~\cite{BellantoniCook:1992}. Syntactically this amount to a 
separation of arguments into \emph{normal} and \emph{safe} argument. 
(Below this will be governed by the presences of mappings $\safe$ and $\normal$
associating with each function symbol a list of argument positions.)
We explain our approach by an example rewrite system
that clearly admits at most polynomial derivation length.
\begin{example}
Consider the following rewrite system $\RSmult$. 
\begin{align*}
    \add(x,\Null) &\rew x & \mult(\Null,y) &\rew \Null \\
    \add(\mS(x),y) &\rew \mS(\add(x,y)) & \mult(\mS(x),y) &\rew \add(y,\mult(x,y))
  \end{align*}  
We suppose that all arguments of the successor ($\mS$) are
safe ($\safe(\mS) = \{1\}$), that the second argument of addition ($\add$) is
safe ($\safe(\add) = \{2\}$) and that all arguments of multiplication $(\mult)$
are normal ($\safe(\mult) = \varnothing$). Furthermore let the
(strict) precedence $>$ be defined as $\mult > \add > \mS$. 
Then $\RSmult$ is compatible with $\gpop$ (see Definition~\ref{d:pop3}) and as a 
consequence of our main theorem (see Section~\ref{Main}) 
we conclude that the number of rewrite steps
starting from $\mult(\mS^n(\Null),\mS^m(\Null))$ is polynomially bounded in
$n$ and $m$. (Here we write $\mS^n(\Null)$ as abbreviation of 
$\mS(\dots(\mS(\Null)\dots))$ with $n$ occurrences of the successor symbol $\mS$.)
\end{example}

The polynomial path order is an extension of the 
\emph{path order for $\mathbf{FP}$} introduced by Arai and the 
second author in~\cite{fsttcs:2005} (see also Chapter~\ref{fsttcs05}). 
A central motivation of this research
is the observation that the direct application of the latter order 
is only successful on a handful  of (very simple) rewrite systems. 
The path order for $\mathbf{FP}$  gains only power if additional 
transformations are performed. Unfortunately, such 
powerful transformations are difficult to find automatically.

Further note that the polynomial path order is to some extent related to the 
\emph{light multiset path order} introduced
by Marion~\cite{Marion:2003}. Roughly speaking the light multiset path order 
is a tamed version of the multiset path order, characterising the functions computable
in polytime. It seems important to stress that the below stated main theorem fails for the 
light multiset path order. This can be easily seen from the next example.
\begin{example}
Consider the following rewrite system  $\RSbin$. 
(This is Example 2.21 about binomial coefficients from \cite{SteinbachKuehler:1990}.)
  \begin{align*}
    \bin(x,\Null) &\rew \mS(\Null) & \bin(\mS(x),\mS(y)) &\rew 
    \mP(\bin(x,\mS(y)),\bin(x,y)) \\
    \bin(\Null,\mS(y)) &\rew \Null
  \end{align*}
For a precedence that fulfills $\bin > \mS$, $\bin > \mP$ 
and separations of arguments $\safe(\bin) = \varnothing$, $\safe(\mP) = \{1,2\}$,
we obtain that $\RSbin$ is compatible with the light multiset path order, cf.~\cite{Marion:2003}.
However it is straightforward to verify that the (innermost) derivation
height of $\bin(\mS^n(\Null),\mS^m(\Null))$ is exponential in $n$. 
\end{example}

% The definition of polynomial path order has been carefully calibrated 
% to prevent such behaviour. 
To test the feasibility of our approach we have implemented a small complexity analyser
based on the polynomial path order and compare its applicability to existing techniques.
To do so, we also have implemented the light multiset path order and a restricted form
of polynomial interpretations, so-called \emph{additive polynomial interpretations}, cf.~\cite{BCMT:2001}.
%\footnote{To the best of our knowledge no experimental results for 
%these techniques are available in the literature.}
Note that compatibility with addivite polynomial interpretations 
induces polynomial derivation length for constructor-based terms, cf.~\cite{BCMT:2001}.

The research in~\cite{BCMT:2001,Marion:2003} falls into the realm of
\emph{implicit complexity theory}. In this context related work to our
research is due to Bonfante \textsl{et al.}~\cite{BMM:2009:tcs} but see
also seminal work by Hofmann~\cite{Hofmann:1999} and Schwichtenberg~\cite{Schwichtenberg:2006}. 
While~\cite{Hofmann:1999,Schwichtenberg:2006} are incomparable to our techniques, a comparison 
to~\cite{BMM:2009:tcs} is also not straightforward. Our principal concern is that the
\emph{termination} techniques employed allow for an \emph{complexity analysis} of the
subjected program. On the other hand the crucial feature 
of \emph{quasi-interpretations} (the central contribution of~\cite{BMM:2009:tcs}) 
is their weak monotonicity, hence termination can only be shown in 
conjunction with other termination techniques. 
For example the class of polytime computable functions can
be characterised as the class of functions computable by 
confluent constructor rewrite systems compatible with the multiset path order \emph{and}
that admit only additive quasi-interpretations, cf.~\cite{BMM:2009:tcs}. This
interesting result renders an insightful implicit characterisation of
the polytime computable function, but it is of little help, if one wants
to obtain a complexity analysis of a term rewrite system subjected to a modern termination
prover. 
Recently an interesting application of quasi-interpretations
has been reported by Lucas and Pe\~na~\cite{LP:2007}. Here the dependency pair
method is used in conjunction with quasi-interpretations to obtain bounds on
the \emph{memory consumption} of Safe programs. This method is easily
automatable, but new ideas are necessary 
to yield bounds on the \emph{runtime behaviour} of functional programs.

\medskip
The remainder of this paper is organised as follows.
In the next section we recall basic notions and starting points of
this paper. In Section~\ref{Main} we have collected our main results. In order
to prove these results we extend results originally presented in~\cite{fsttcs:2005}.
Our findings in this direction are presented in Section~\ref{Preparations}. The
central argument to prove the main theorem is then given in Section~\ref{Predicative}.
In Section~\ref{flops08:Experiments} we give the experimental evidence mentioned above.
In Section~\ref{Application} we touch upon an
application of our main theorem in recent work (together with Hirokawa and Middeldorp) 
where we study the termination behaviour of Scheme programs. Finally
in Section~\ref{flops08:Conclusion} we conclude and mention possible future work.

\section{Preliminaries} \label{Basics}

We assume familiarity with term rewriting~\cite{BaaderNipkow:1998,Terese}.
Let $\VS$ denote a countably infinite set of variables and $\FS$ a 
signature. The set of terms over $\FS$ and $\VS$ is denoted by 
$\TERMS$. We always assume that $\FS$ contains at least one constant.
The \emph{arity} of a function symbol $f$ is denoted as $\ar(f)$.
Let $>$ be a precedence on the signature $\FS$. 
The \emph{rank} of a function symbol is defined inductively as follows:
$\rk(f) = 1 + \max\{\rk(g) \mid g \in \FS \land f > g\}$.
(Here we employ the convention that the maximum of an empty set equals $0$.)
We write $\subterm$ to denote the subterm relation
and $\superterm$ for its converse.
The strict part of $\superterm$ is denoted by $\rhd$.
$\Var(t)$ denotes the set of variables occurring in a term $t$. 
The \emph{size} (\emph{depth}) of a term $t$ is denoted as $\size{t}$ ($\depth(t)$). 
The width of a term $t$ is defined inductively as follows:
$\width(t) = 1$, if $t$ is a variable or a constant, otherwise if
$t = f(t_1,\dots,t_n)$ with $n > 0$, we set $\width(t) = \max\{n,\width(t_1),\dots,\width(t_n)\}$.
The \emph{Buchholz norm} of a term $t$ is defined inductively as follows:
$\bN{t} = 1$, if $t$ is a variable and for $t=f(\seq{t})$ we set 
$\bN{t} = 1+ \max\{n,\bN{t_1},\dots,\bN{t_n}\}$.
We write $[t_1,\dots,t_n]$ to denote multisets
and $\uplus$ for the summation of multisets.

A \emph{term rewrite system} (\emph{TRS} for short) $\RS$ over
$\TERMS$ is a set of rewrite
rules $l \to r$, such that $l \notin \VS$ and $\Var(l) \supseteq \Var(r)$.
(If not mentioned otherwise, we assume $\RS$ is \emph{finite}.)
The root symbols of left-hand sides of rewrite rules
are called \emph{defined}, while all other function symbols are called 
\emph{constructors}. For a given signature $\FS$ the defined symbols are denoted
as $\DS$, while the constructor symbol are collected in $\CS$.
% A relation on $\TERMS$ is a \emph{rewrite relation} if it is
% compatible with $\FS$-operations
% and closed under substitutions. 
The smallest rewrite relation that contains $\RS$ is denoted by
$\to_{\RS}$. We simply write $\to$ for $\to_{\RS}$ if $\RS$ is clear from context.
Let $s$ and $t$ be terms. 
If exactly $n$ steps are preformed to contract $s$
to $t$ we write $s \to^n t$.
A term $s \in \TERMS$ is called a \emph{normal form} if there is no
$t \in \TERMS$ such that $s \to t$. 
%With $\NF(\RR)$ we denote the set of all normal forms of a term rewrite
%system $\RR$.
The \emph{innermost rewrite relation} $\irew_\RS$
of a TRS $\RS$ is defined on terms as follows: $s \irew_\RS t$ if 
there exist a rewrite rule $l \to r \in \RS$, a context $C$, and
a substitution $\sigma$ such that $s = C[l\sigma]$, $t = C[r\sigma]$,
and all proper subterms of $l\sigma$ are normal forms of $\RS$.
A TRS is called \emph{confluent} if for all $s, t_1, t_2 \in \TERMS$
with $s \to^* t_1$ and $s \to^* t_2$ there exists a term $t_3$ such that
$t_1 \to^* t_3$ and $t_2 \to^* t_3$.
A TRS is \emph{non-overlapping} if
it has no critical pairs, cf.~\cite{BaaderNipkow:1998}. 
A TRS $\RS$ is \emph{left-linear} if for all rules $l \to r \in \RS$, 
all variables in $l$ occur at most once. If $\RS$ is additionally 
non-overlapping, then $\RS$ is called \emph{orthogonal}. Note that
every orthogonal TRS is confluent.
A \emph{constructor} TRS is a TRS whose signature $\FS$
can be partitioned into the defined symbols $\DS$ and constructor symbols $\CS$ 
in such a way that the left-hand side of each rule has the form $f(\seq{s})$ with
$f \in \DS$ and for all $i$: $s_i \in \TA(\CS,\VS)$. 
A defined function symbol is \emph{completely defined} if it does not occur
in any ground term in normal form.
A TRS is \emph{completely defined} if each defined symbol is completely defined.
An element of $\TA(\CS,\VS)$ is called a \emph{value};
we set $\Val(\RS) = \TA(\CS,\VS)$. 
%(We simply write $\Val$ if $\RS$ is clear from context.)
We call a TRS \emph{terminating} if no infinite rewrite sequence
exists.
The \emph{derivation length} of a term $t$ with respect to a terminating
TRS $\RS$ and rewrite relation $\rsrew{\RS}$ is defined as usual: 
$\Dh{\RS}{\rew}(s) = \max\{ n \mid \exists t \; s \to^n t \}$. 
% The \emph{derivational complexity} (with respect to $\RS$) 
% is defined as follows: $\Dc{\RS}(m) = \max\{\Dh{\RS}{\rsrew{\RS}}(t) \mid \size{t} \leqslant m\}$.
We call a term $t=f(\seq{t})$ \emph{constructor-based} if all its arguments $t_i$ 
are values, i.e., $t_i \in \TA(\CS,\VS)$ for all $1 \leqslant i \leqslant n$. The
set $\TB$ collects all constructor-based terms. 

A proper order is a transitive and irreflexive relation. 
The \emph{reflexive closure} of a proper order $\succ$ is denoted
as $\succ^=$.
A proper order $\succ$ is \emph{well-founded} if there is 
no infinite decreasing sequence $t_1 \succ t_2 \succ t_3 \cdots$.
A well-founded proper order that is also a rewrite relation is
called a \emph{reduction order}. 
We say a reduction order $\succ$ and a TRS $\RS$ are \emph{compatible}
if $\RS \subseteq {\succ}$. It is well-known that a TRS is terminating
if and only if there exists a compatible reduction order.

\section{Main Result} \label{Main}

In the sequel $\RS$ denotes a constructor TRS over 
a (possible variadic) signature $\FS$. Let $>$ denote a precedence on 
$\FS$ such that for all $f \in \DS$ 
we have for all $c \in \CS$: $f > c$. (Recall that $\FS$ contains at least one constant.) 
We assume that $\RS$ is completely defined, i.e.,
ground normal forms and ground values coincide.%
\footnote{The assumption that $\RS$
is completely defined arises naturally in the context of implicit characterisation of 
complexity classes. We follow this convention to some extent, but show 
that this restriction is not necessary.}

For each $n$-ary function symbol $f \in \DS$ of fixed arity,
we suppose the existence of a mapping $\safe$ that associates with $f$ a
(possibly empty) list $\{i_1,\dots,i_m\}$ with $1 \leqslant i_1 < \dots < i_m \leqslant n$. 
For a mapping $\safe$ and a term $t=f(\seq{t})$, 
$\safe(f)$ denotes the \emph{safe argument positions} of $t$. 
The argument positions of $t$ \emph{not} included in $\safe(f)$ are 
called \emph{normal} and are denoted by $\normal(f)$.
The mapping $\safe$ ($\normal$) is referred to as \emph{safe}
(\emph{normal}) mapping.
We generalise safe (normal) mappings to constructor symbols and variadic
function symbols as follows: For each function symbol $f \in \CS$, 
we fix $\safe(f) = \{1,\dots,\ar(f)\}$ and for each variadic function symbol $f \in \DS$ 
we assert $\safe(f) = \varnothing$.
The \emph{normalised signature} $\FSn$ contains a function symbol $f^{\Normal}$
for each $f \in \FS$. If $f$ is of fixed-arity and $\normal(f) = \{i_1,\dots,i_p\}$, 
then $\ar(f) = p$. The normalised signature $\CS^{\Normal}$ is defined accordingly.

\begin{definition} \label{d:pop2}
Let $>$ be a precedence and $\safe$ a safe mapping. 
We define $\gsq$ inductively as follows: $s = f(\seq{s}) \gsq t$ if one of the following
alternatives holds:
\begin{enumerate}
\item $f$ is a constructor and $s_i \geqsq t$
  for some $i \in \{1,\dots,n\}$,
\item $s_i \geqsq t$ for some $i \in\normal(f)$, or
\item $t = g(\seq[m]{t})$ with $f\in\DS$ and $f > g$ and $s \gsq t_i$
  for all $1 \leqslant i \leqslant m$.
\end{enumerate}
\end{definition}

We write $s \gsq t~\cl{i}$ if $s \gsq t$ follows by application of clause (i) in Definition~\ref{d:pop2}. 
A similar notation will be used for the orders defined below.

\begin{definition} \label{d:pop3}
Let $>$ be a precedence and $\safe$ a safe mapping. 
We define the \emph{polynomial path order} $\gpop$ (\emph{\POPSTAR} for short) 
inductively as follows: $s = f(\seq{s}) \gpop t$ if one of the following
alternatives holds:
\begin{enumerate}
\item $s \gsq t$, 
\item $s_i \geqpop t$ for some $i \in \{1,\dots,n\}$,
\item $t = g(\seq[m]{t})$, with $f\in\DS$, $f > g$, and the following properties hold:
\begin{itemize}
\item $s \gpop t_{i_0}$ for some $i_0 \in \safe(g)$ and
\item either $s \gsq t_i$ or $s \rhd t_i$ and $i\in\safe(g)$ for all $i \neq i_0$,
\end{itemize}
\item \label{en:pop3:iv} $t = f(\seq[m]{t})$ and for 
  $\normal(f) = \{i_1,\dots,i_p\}$, $\safe(f) = \{j_1,\dots,j_q\}$ the following properties hold:
  \begin{itemize}
  \item $[s_{i_1},\dots,s_{i_p}]~(\gpop)_{\mul}~[t_{i_1},\dots,t_{i_p}]$,
  \item $[s_{j_1},\dots,s_{j_q}]~(\geqpop)_{\mul}~[t_{j_1},\dots,t_{j_q}]$.
  \end{itemize}    
\end{enumerate}
Here $(\gpop)_{\mul}$ denotes the multiset extension of $\gpop$ and
recall that for variadic function symbols, the set of safe arguments is empty.
\end{definition}

\begin{example}
Consider the following TRS $\RSinsert$ (This is a simplification of an example from~\cite{Marion:2003}.)
%> insert(0,x,y) -> cons(x,y)
%> insert(s(n),x,cons(y,xs)) -> ite(x<y,cons(x,cons(y,xs)),cons(y,insert(n,x,xs))))
\begin{equation*}
\begin{array}{rcl@{\hspace{10ex}}rcl}
\mif(\mtrue,x,y) &\rew & x  & x \mgeq \Null &\rew &\mtrue \\[1ex]
\mif(\mfalse,x,y) &\rew & y & \Null \mgeq \mS(x) &\rew &\mfalse \\[1ex]
\mins(x,\mnil) &\rew & \mcs(x,\mnil) & \mS(x) \mgeq \mS(y) &\rew & x \mgeq y \\[1ex]
\mins(x,\mcs(y,ys)) &\rew & 
\multicolumn{4}{l}{\mif(y \mgeq x,\mcs(x,\mcs(y,ys)),\mcs(y,\mins(x,ys)))}
%\msort(\mnil,\Null) &\rew & \mnil & \msort(\mcs(x,xs),\mS(y)) &\rew &\mins(y,x,\msort(xs,y))
\end{array}
\end{equation*}
We represent lists with the help of the constructors $\mnil$ and $\mcs$. 
To show compatibility with \POPSTAR, we assume a precedence $\succ$ that fulfills 
$\mins \succ \mif$, $\mins \succ \mgeq$,
$\mins \succ \mcs$,  $\Null \succ \mtrue$, and $\Null \succ \mfalse$. 
% GM
% also zum beispiel hülle von:
% $\msort > \mins > \mless > \mins > \mif > \mS > \Null > \mcons > \mtrue > \mfalse$
%
Further we define a safe mapping $\safe$ as follows:
\begin{align*}
  \safe(\mS) & = \{1\} &  \safe(\mif) & = \{1,2,3\} & \safe(\mins) &= \varnothing \\
  \safe(\mcs) &= \{1,2\} & \safe(\mgeq) & = \{2\} & 
\end{align*}
It is straightforward to verify that the induced polynomial path order $\gpop[\succ]$ 
is compatible with $\RSinsert$.
\end{example}

An easy inductive argument shows that if $s\in\Val(\RS)$ and $s \gpop t$, 
then $t \in \Val(\RS)$.
Note that $\gpop$ is \emph{not} a reduction order. Although $\gpop$ is a 
well-founded proper order that is closed under substitutions, 
the order is not closed under contexts due to
the restrictive definition of clause~\ref{en:pop3:iv} in the above definition. However
we still have the following theorem, which follows as the multiset path order extends $\gpop$. 

\begin{theorem}
  Every TRS $\RS$ that is compatible with $\gpop$ for some well-founded precedence $>$ is
terminating. 
\end{theorem}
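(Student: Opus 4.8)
The plan is to prove the inclusion ${\gpop} \subseteq {\gmpo}$, where $\gmpo$ denotes the multiset path order induced by the same precedence $>$, obtained by simply forgetting the safe/normal annotation. Once this inclusion is available, compatibility $\RS \subseteq {\gpop}$ yields $\RS \subseteq {\gmpo}$; since the multiset path order induced by a well-founded precedence is a reduction order, compatibility of $\RS$ with $\gmpo$ entails termination of $\RS$ (the fact that ${\RS} \subseteq {\gmpo}$ implies termination was recalled in the introductory material). Thus the whole argument reduces to a purely syntactic comparison of the two orders, carried out clause by clause.

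First I would record an auxiliary lemma for the order $\gsq$ of Definition~\ref{d:pop2}: $s \gsq t$ implies $s \gmpo t$, and hence $s \geqsq t$ implies $s \gmpo t$ or $s = t$. This is proved by induction on the derivation of $s \gsq t$. In the cases (i) and (ii) we have $s_i \geqsq t$ for some argument $s_i$ of $s$; the induction hypothesis gives $s_i \gmpo t$ or $s_i = t$, and the subterm clause of $\gmpo$ yields $s \gmpo t$. In case (iii) we have $t = g(t_1,\dots,t_m)$ with $f > g$ and $s \gsq t_i$ for all $i$, so the induction hypothesis gives $s \gmpo t_i$ for every $i$, and the precedence clause of $\gmpo$ gives $s \gmpo t$.

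The main inclusion is then established by induction on the derivation of $s \gpop t$ according to Definition~\ref{d:pop3}. Clause~(i) is handled directly by the auxiliary lemma, and clause~(ii) by the subterm clause of $\gmpo$ together with the induction hypothesis. For clause~(iii), with $t = g(t_1,\dots,t_m)$ and $f > g$, I would check $s \gmpo t_i$ for every $i$: for $i = i_0$ this is the induction hypothesis applied to $s \gpop t_{i_0}$, while for $i \neq i_0$ either $s \gsq t_i$ (so $s \gmpo t_i$ by the auxiliary lemma) or $s \rhd t_i$, in which case $t_i$ is a proper subterm of $s$ and $s \gmpo t_i$ follows from the subterm property of $\gmpo$; the precedence clause then gives $s \gmpo t$. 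For clause~(iv), $s$ and $t$ share their root $f$ and the arguments split into normal and safe positions; the induction hypothesis propagates through multiset extensions, giving a strict decrease $(\gmpo)_{\mul}$ on the multiset of normal arguments and a weak multiset decrease on the multiset of safe arguments. As the full argument multiset of $s$ (respectively $t$) is the disjoint union $\uplus$ of its normal and safe parts, compatibility of the multiset extension with $\uplus$ combines these into a strict decrease of the whole argument multiset, whence the multiset clause of $\gmpo$ gives $s \gmpo t$.

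The only genuinely delicate points are clause~(iii), where the side condition $s \rhd t_i$ must be converted into an $\gmpo$-comparison through the subterm property of simplification orders, and clause~(iv), where a strict and a weak multiset decrease have to be merged into a single strict decrease over the union; both are standard but should be stated explicitly rather than glossed over. With the inclusion ${\gpop} \subseteq {\gmpo}$ in hand the conclusion is immediate: from $\RS \subseteq {\gpop} \subseteq {\gmpo}$ and the fact that $\gmpo$ over a well-founded precedence is a reduction order, we obtain that $\RS$ is terminating.
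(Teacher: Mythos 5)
Your proposal is correct and takes exactly the route the paper intends: the paper justifies this theorem in one line by asserting that the multiset path order extends $\gpop$, which is precisely the inclusion ${\gpop} \subseteq {\gmpo}$ you establish clause by clause. Your write-up merely supplies the details (the auxiliary lemma for $\gsq$, the use of the subterm property for the $s \rhd t_i$ side condition, and the combination of strict and weak multiset decreases) that the paper leaves implicit.
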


% Note that a polynomial runtime complexity implies that for all $f \in \FS$ of arity $n$, 
% and for all values $s_1,\dots,s_n$: $\Dh{\RS}{\innermost}(f(s_1,\dots,s_n))$ is bounded by a polynomial
% in the sum of the sizes of the argument terms $s_{1},\dots,s_{n}$. 
As \emph{normal} and \emph{safe} arguments are distinguisable, we strengthen the 
notion of runtime complexity as follows:
\begin{equation*}
  \Rcn{\RS}(m) = \max\{\Dh{\RS}{\irew}(t) \mid t=f(\seq{t})\in\TB \ \text{and} \
\sum_{i \in \normal(f)}\size{t_i} \leqslant m \} \tpkt
\end{equation*}
This function is called the \emph{normal} runtime complexity.

\begin{mainth}
  Let $\RS$ be a finite, completely defined constructor TRS. Assume further $\RS$ is compatible with
$\gpop$, i.e., $\RS\subseteq~\gpop$.
Then the induced (normal) runtime complexity is polynomial.
% Then for all $f \in \FS$ of arity $n$, with $\normal(f) = \{i_1,\dots,i_p\}$ and 
% for all values $s_1,\dots,s_n$: $\Dh{\RS}{\irew}(f(s_1,\dots,s_n))$ is bounded by a polynomial
%  in the sum of the sizes of the normal argument terms $s_{i_1},\dots,s_{i_p}$.
\end{mainth}

Assume $\RS$ is a finite, constructor TRS that is not completely defined;
i.e., at least one defined function symbol occurs in a ground normal form.
To obtain a completely defined TRS it suffices to add suitable rules, 
thus we arrive at the following corollary, see~\cite{TR:2008} for the proof.

\begin{corollary} \label{c:1}
    Let $\RS$ be a finite, constructor TRS. Assume further $\RS$ is compatible with
$\gpop$, i.e., $\RS\subseteq~\gpop$.
Then the induced (normal) runtime complexity is polynomial.
\end{corollary}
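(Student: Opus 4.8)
The plan is to reduce the statement to the Main Theorem by \emph{completing} $\RS$ into a completely defined constructor TRS without increasing its normal runtime complexity. Since $\RS$ need not be completely defined, some defined symbol may head a ground normal form. To repair this I would introduce a fresh constructor constant $\bot$, set $\FS' \defsym \FS \cup \{\bot\}$ and $\CS' \defsym \CS \cup \{\bot\}$ with $\safe(\bot) = \varnothing$, and define
\[ \RS' \defsym \RS \cup \{\, f(x_1,\dots,x_n) \rew \bot \mid f \in \DS,\ \ar(f)=n \,\} \tpkt \]
I would extend the precedence $>$ to $\FS'$ by making $\bot$ minimal, so that $f > \bot$ for every $f \in \DS$; this preserves well-foundedness and keeps the standing assumption that defined symbols dominate all constructors. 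Then $\RS'$ is a finite constructor TRS over $\FS'$, and it is completely defined: a ground $\RS'$-normal form can contain no defined symbol, since an innermost defined-headed subterm $f(\vec v)$ with $\vec v$ values would be reducible by the corresponding $\bot$-rule.

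Next I would check compatibility, which is the routine part. The original rules remain oriented by $\gpop$, as extending the precedence below all existing symbols does not affect any comparison among symbols of $\FS$. For a new rule $f(x_1,\dots,x_n) \rew \bot$ we have $f \in \DS$ and $f > \bot$, and since $\bot$ is a constant the requirement ``$s \gsq t_i$ for all arguments $t_i$'' of clause~(3) in Definition~\ref{d:pop2} is vacuously met; hence $f(x_1,\dots,x_n) \gsq \bot$ and therefore $f(x_1,\dots,x_n) \gpop \bot$ by clause~(1) of Definition~\ref{d:pop3}. Thus $\RS' \subseteq~\gpop$, and the Main Theorem applies to $\RS'$, yielding that $\Rcn{\RS'}$ is polynomial.

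The crux, and the step I expect to be the main obstacle, is the complexity comparison $\Rcn{\RS}(m) \leqslant \Rcn{\RS'}(m)$. The difficulty is that $\RS$ and $\RS'$ induce different innermost rewrite relations: a subterm that is $\RS$-normal but not a value (a ``stuck'' defined-headed term) is reducible in $\RS'$, so an innermost $\RS$-step need not be an innermost $\RS'$-step. I would overcome this by a simulation. On ground terms, define a map $\hat{(\cdot)}$ that replaces every maximal non-value $\RS$-normal subterm by $\bot$, and show that each step $s \irew_\RS t$ is mirrored by $\hat s \irew_{\RS'}^{+} \hat t$. The key observation making this work is that $\RS$ is a constructor TRS: in an innermost redex $f(\vec u)$ every argument $u_i$ is either a value or stuck, and a stuck subterm, being headed by a defined symbol, can only be matched against a variable of the constructor left-hand side; hence replacing it by the value $\bot$ leaves the matching rule applicable and keeps all proper subterms of the redex values, so the mirrored step is genuinely innermost in $\RS'$. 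Any stuck terms freshly created on the instantiated right-hand side are then collapsed to $\bot$ by additional innermost $\bot$-steps, which only lengthens the $\RS'$-derivation.

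Finally I would assemble the bound. A constructor-based starting term $t = f(\vec t)$ over $\FS$ contains no stuck subterms, so $\hat t = t$; consequently an innermost $\RS$-derivation from $t$ of length $\ell$ yields an innermost $\RS'$-derivation from $t$ of length at least $\ell$, giving $\Dh{\RS}{\irew}(t) \leqslant \Dh{\RS'}{\irew}(t)$. Since $\bot$ does not occur in $t$, the normal size $\sum_{i\in\normal(f)}\size{t_i}$ is unchanged, whence $\Rcn{\RS}(m) \leqslant \Rcn{\RS'}(m)$ for all $m$, and the Main Theorem completes the proof. The careful bookkeeping of this simulation is where the real work lies, and I would defer its full details to~\cite{TR:2008}.
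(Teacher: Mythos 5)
Your proposal is correct and takes essentially the same route as the paper, which likewise obtains this corollary by completing $\mathcal{R}$ with rules $f(x_1,\dots,x_n) \to \bot$ for a fresh minimal constructor constant $\bot$ and defers the bookkeeping of the simulation to~\cite{TR:2008}. The one point to tighten is the definition of $\hat{(\cdot)}$: it should replace the maximal \emph{defined-symbol-headed} normal subterms rather than the maximal non-value normal subterms, since the latter reading would also erase the constructor context above a stuck subterm and thereby break exactly the matching argument you rely on.
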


\begin{definition} \label{d:predicative}
The \emph{predicative} rewrite relation $s \prew t$ is defined as follows: $s \prew t$ if
$s \rew t$ by contracting safe argument positions first, i.e., if
there exist a rewrite rule $l \to r \in \RS$, a context $C$, and 
a substitution $\sigma$ such that $s = C[l\sigma]$, $t = C[r\sigma]$ 
and all safe argument position of $l\sigma$ are in normal form.
\end{definition}

Clearly predicative rewriting is a generalisation of innermost rewriting.
% The derivation height function $\Dh{\RS}{\prew}$ with respect to
% $\prew$ is defined as follows:
% $\Dh{\predicative}(s) = \max\{n \mid \exists t \; s \xrightarrow{\raisebox{-1pt}[0pt][0pt]{\scriptsize\ensuremath{\mathsf{p}}}}^n t \}$.
%
Essentially following the pattern of the proof of the theorem,
we arrive at the following corollary.

\begin{corollary}
  Let $\RS$ be a finite constructor TRS. Assume further $\RS$ is compatible with
$\gpop$, i.e., $\RS\subseteq~\gpop$. 
Then for all $f \in \FS$ of arity $n$, with $\normal(f) = \{i_1,\dots,i_p\}$ and 
for all values $s_1,\dots,s_n$: $\Dh{\RS}{\prew}(f(s_1,\dots,s_n))$ is 
bounded by a polynomial
 in the sum of the sizes of the normal argument terms $s_{i_1},\dots,s_{i_p}$.
\end{corollary}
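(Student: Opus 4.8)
The plan is to mirror the order-theoretic argument underlying the Main Theorem, but to carry it out for the predicative rewrite relation $\prew$ in place of the innermost relation $\irew$. Since every innermost step is predicative we have $\irew \subseteq \prew$, so the corollary genuinely strengthens the Main Theorem; what makes it go through is that the whole argument only ever exploits that a rewrite step whose \emph{safe} argument positions are already normalised is reflected as a strict descent in a graded approximation of $\gpop$. First I would reduce to the completely defined case exactly as in Corollary~\ref{c:1}, by adjoining rules so that ground normal forms coincide with ground values, which does not affect the derivation length of the terms of interest. Then, for the fixed finite $\RS$ with $\RS \subseteq \gpop$, I would introduce a family of approximations $\prec_k$ of $\gpop$ in the style of Chapter~\ref{fsttcs05}, together with the associated step-counting function $\Slow$ (the analogue of the $\mathsf{G}_k$ used there), where $\prec_k$ weakens $\gpop$ by bounding the number of admissible descents.

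The argument then proceeds in three steps. The first is an approximation lemma: there is a single $k$, depending only on $\RS$ (essentially the maximum of the Buchholz norms $\bN{r}$ of the right-hand sides and the maximal arity, as in Chapter~\ref{fsttcs05}, Lemma~\ref{l:approx:4}), such that $s \prew t$ implies $t \prec_k s$. The delicate point is that $\prew$ contracts a redex $l\sigma$ inside a context $C$ while $\gpop$ is not closed under contexts. I would handle this precisely as for \POPSTAR: push the descent through $C$ by induction on $C$, using that the safe positions of $l\sigma$ are in normal form and hence values, so that clause~\ref{en:pop3:iv} of Definition~\ref{d:pop3} applies with the safe positions absorbed by $\geqpop$ and only the contracted position driving the strict multiset decrease. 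The second step is the counting bound, the analogue of Chapter~\ref{fsttcs05}, Lemma~\ref{l:upper}: $\Slow(f(t_1,\dots,t_n))$ is bounded by a polynomial in $\Slow$ of the \emph{normal} argument terms, the degree being controlled by $\rk(f)$ and $k$, while the safe arguments contribute only additively. This is the tiered-recursion heart of the proof, and it is where the separation $\safe/\normal$ is indispensable.

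The third step assembles these. For a start term $f(s_1,\dots,s_n)$ with all $s_i$ values, each $\prew$-step strictly decreases $\prec_k$ by step one, so $\Dh{\RS}{\prew}(f(s_1,\dots,s_n)) \leqslant \Slow(f(s_1,\dots,s_n))$, the latter being the length of the longest $\prec_k$-chain below the start term. Since values contain no defined symbols, $\Slow$ of a value is linear in its size (Chapter~\ref{fsttcs05}, Lemma~\ref{l:pointwise}), and the counting bound of step two then majorises $\Slow(f(s_1,\dots,s_n))$ by a polynomial in $\sum_{i\in\normal(f)}\size{s_i}$, which is exactly the claimed bound in the sum of the sizes of the normal argument terms.

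The principal obstacle is step one: controlling the descent under a context for predicative rather than innermost steps. Under $\irew$ all proper subterms of the redex are values, whereas under $\prew$ only the safe positions are, and one must verify that this weaker hypothesis still suffices to carry clause~\ref{en:pop3:iv} through every enclosing context without inflating the constant $k$ along the induction on $C$. The second genuine technical point is to keep the polynomial degree in step two independent of the safe arguments, so that they never enter the final bound; this is precisely what separates $\gpop$ from the light multiset path order and is the reason the estimate depends only on the normal arguments.
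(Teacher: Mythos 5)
Your three-step architecture (approximation lemma, counting bound, assembly) has the right shape, but the route you take in step one does not work, and the device you omit is precisely the one the chapter introduces to make the argument go through. You propose to define approximations $\prec_k$ of $\gpop$ \emph{directly on terms} and to prove that $s \prew t$ yields a $\prec_k$-descent by ``pushing the descent through the context'' via clause~(\ref{en:pop3:iv}) of Definition~\ref{d:pop3}. This breaks down whenever the contracted redex sits inside a \emph{safe} argument position of an enclosing symbol. Clause~(\ref{en:pop3:iv}) only gives $f(\ldots,s_i,\ldots) \gpop f(\ldots,t_i,\ldots)$ when the strict decrease occurs among the \emph{normal} arguments of $f$; if $i \in \safe(f)$ the normal multiset is unchanged, so that clause is inapplicable, clause~(iii) fails because the roots are equal, and the subterm clause would have to compare a proper subterm of $s$ with the whole of $t$. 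Concretely, with $\safe(h)=\{1\}$, $\normal(h)=\{2\}$ and a redex $l\sigma$ in the first argument, $h(l\sigma,u) \not\gpop h(r\sigma,u)$ in general. Such configurations arise at once in derivations from basic terms (recursive calls are typically placed at safe positions, e.g.\ $\mult(x,y)$ at the safe position of $\add$ in $\RSmult$), so your step one --- ``there is a $k$ with $s \prew t \Rightarrow t \prec_k s$'' --- is false for any approximation of $\gpop$ itself. Note that this obstacle is not created by passing from $\irew$ to $\prew$; it already blocks the innermost case, which is why the Main Theorem is not proved this way either.

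The paper's proof of the corollary instead runs entirely through the \emph{predicative interpretation} $(\ints,\intn)$ of Definition~\ref{d:pi} and the order $\gpopv_k$ on \emph{sequences} over the normalised signature: $\ints$ pulls the safe arguments of every symbol out of the term and places them as siblings at the sequence level, so a rewrite step at a safe position becomes a step in a sequence component, and $\gpopv_k$ --- unlike $\gpop$ --- is closed under sequence contexts (Lemma~\ref{l:approx:popalt}). One then establishes the analogue of Lemma~\ref{l:main} with $\prew$ in place of $\irew$; the proofs of Lemmata~\ref{l:3} and~\ref{l:4} only use that the relevant argument positions are instantiated by values, which is exactly what predicative rewriting guarantees for the safe positions. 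The counting is then done by Theorem~\ref{t:2} on the sequence side, where the arguments of $f^{\Normal}$ are precisely the normal arguments, which is how the safe arguments drop out of the final bound. The remaining parts of your outline --- reduction to the completely defined case, the polynomial degree governed by $k$ and the rank, and the linear estimate on values --- are consistent with the paper, but the proposal is not repairable without reinstating this interpretation layer (or an equivalent context-closed order on a transformed syntax).
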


\begin{remark}
  Beckmann and Weiermann observed in~\cite{BeckmannWeiermann:1996} that
general rewriting is too powerful to serve as a suitable computation model
to characterise the class of polytime computable functions as a TRS. Their
notion of a \emph{feasible} rewrite system is reflected adequately in the notion
of predicative rewriting.
\end{remark}

\section{Polynomial Path Order on Sequences} \label{Preparations}

In this section we extend definitions and results originally presented in~\cite{fsttcs:2005}
(see also Chapter~\ref{fsttcs05}).
The main aim is to define a \emph{polynomial path order} $\gpopv$ \emph{on sequences of terms} such that
$\gpopv$ induces polynomial derivation length with respect to a compatible TRS $\RS$.

Let $\odot \not \in \FSn$ be a variadic function symbol. 
We extend the normalised signature $\FSn$ by $\odot$ and define
$\SE(\FSn,\VS) = \TA(\FSn\cup\{\odot\},\VS)$. Elements of $\SE(\FSn,\VS)$
are sometimes referred to as \emph{sequences}.
Instead of $\odot(\seq{s})$, we usually write $(\sexpr{s})$ and
denote the empty sequence $()$ as $\varnothing$.
Let $a=(a_1 \dots a_n)$ and $b=(b_1 \dots b_m)$ be elements of 
$\SE(\FSn,\VS)$. For $a\not=\varnothing$ and $b\not=\varnothing$ define 
$a \cons b = (a_1 \dots a_n~b_1 \dots b_m)$. If $a=\varnothing$ ($b = \varnothing$) 
we set $a\cons b = b$ ($a \cons b = a$).

Let $>$ denote the precedence on $\FSn$ induced by 
the total precedence $>$ on $\FS$.
Buchholz~\cite{Buchholz:1995} was the first to observe 
that finite term rewrite systems compatible with recursive path 
orders $\succ$ are even compatible to \emph{finite} approximations of $\succ$. 
This observation carries over to polynomial path orders. 
The following definitions generalise the \emph{path order on $\mathbf{FP}$} (\emph{POP} for short) 
as defined in~\cite{fsttcs:2005}. To keep this exposition short, we 
only state the definition of  \emph{approximations} of 
the \emph{polynomial path order} $\gpopv$ \emph{on sequences}. 
The general definitions for $\gppv$ and $\gpopv$ is obtained 
by dropping the restrictions on depth and width, cf.~\cite{TR:2008}. 
Note that $\gpopv$ can be conceived as the \emph{limit} of
the finite approximations $\gpopv_k$.
We use the convention that $f \in \FSn$, i.e., $s = f(\seq{s})$
implicitly indicates that $f \not= \odot$.

\begin{definition} 
Let $k,l \geqslant 1$ and let $>$ be a precedence. 
We define $\gppv_{k}^{l}$ inductively as follows: 
$s \gppv_{k}^{l} t$ for $s=f(\seq{s})$ or $s=(\sexpr{s})$ if one of the following alternatives holds:
\begin{enumerate}
\item $s_i~(\geqppv)_{k}^{l}~t$ for some $i \in \{1,\dots,n\}$,
\item $s=f(\seq{s})$ such that of the following two possibilities holds:
  \begin{itemize}
  \item $t = g(\seq[m]{t})$ with $f > g$ or
  \item $t=(\sexpr[m]{t})$,
  \end{itemize}
and $s \gppv_{k}^{l-1} t_i$ for all $1 \leqslant i \leqslant m$, 
and $m < k+\width(s)$, or
\item $s=(\sexpr{s})$, $t = (\sexpr[m]{t})$ and the following properties hold:
  \begin{itemize}
  \item $[\seq[m]{t}] = N_1 \uplus \cdots \uplus N_n$,
  \item there exists $i \in \{1,\ldots,n\}$ such that $[s_i] \not= N_i$,
  \item for all $1 \leqslant i \leqslant n$ such that $[s_i] \not= N_i$ 
    we have $s_i \gppv_k^l r$ for all $r \in N_i$
  \item $m < k + \width(s)$.
  \end{itemize}
\end{enumerate}
We write $\gppv_k$ to abbreviate $\gppv_k^k$.
\end{definition}

\begin{definition} 
Let $k,l \geqslant 1$ and let $>$ be a precedence.
We define the \emph{approximation of the polynomial path order}
$\gpopv^l_k$ \emph{on sequences} inductively as follows: $s \gpopv_k^l t$ 
for $s=f(\seq{s})$ or $s=(\sexpr{s})$ if one of the following
alternatives holds:
\begin{enumerate}
\item $s \gppv_k^l t$, 
\item $s_i~(\geqpopv)_k^l~t$ for some $i \in \{1,\dots,n\}$,
\item $s = f(\seq{s})$, $t = (\sexpr[m]{t})$, and the following
  properties hold:
  \begin{itemize}
  \item $s \gpopv_k^{l-1} t_{i_0}$ for some $i_0 \in\{1,\dots,n\}$, 
  \item $s \gppv_k^{l-1} t_i$ for all $i \neq i_0$, and
  \item $m < k + \width(s)$,
  \end{itemize}

\item $s = f(\seq{s})$, $t = f(\seq[m]{t})$ with
$(\sexpr{s}) \gpopv_k^l (\sexpr[m]{t})$, or

\item $s=(\sexpr{s})$, $t = (\sexpr[m]{t})$ and the following properties hold:
  \begin{itemize}
  \item $[\seq[m]{t}] = N_1 \uplus \cdots \uplus N_n$,
  \item there exists $i \in \{1,\ldots,n\}$ such that $[s_i] \not= N_i$,
  \item for all $1 \leqslant i \leqslant n$ such that $[s_i]\not=N_i$: 
    $s_i \gppv_k^l r$ for all $r \in N_i$, and
  \item $m < k + \width(s)$.
  \end{itemize}
\end{enumerate}
We write $\gpopv_k$ to abbreviate $\gpopv_k^k$.
\end{definition}

Note that $\varnothing$ is the minimal element of $\gppv_k$ and $\gpopv_k$
and that $\gpopv$ is a reduction order. 
The following lemmas are direct consequences of the definitions.

\begin{lemma} \label{l:approx:popalt}
\mbox{}
\begin{enumerate}
\item If $s \gpopv_{k} t$ and $k<l$, then $s \gpopv_{l} t$.
\item If $s \gpopv_k t$, then $C[s] \gpopv_k C[t]$, where $C[\ctx]$ denotes
a context over $\SE(\FSn,\VS)$.
\end{enumerate}
\end{lemma}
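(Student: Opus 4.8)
The plan is to prove both assertions by induction, handling the index monotonicity (first assertion) by a simultaneous induction over the two approximation orders, and the closure under contexts (second assertion) by induction on the context.

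For the first assertion I would prove the stronger statement that both approximations are monotone in each index separately: for all $k \leqslant k'$ and $l \leqslant l'$, $s \gppv_k^l t$ implies $s \gppv_{k'}^{l'} t$, and $s \gpopv_k^l t$ implies $s \gpopv_{k'}^{l'} t$. This is carried out by a simultaneous induction on the derivations of $s \gppv_k^l t$ and $s \gpopv_k^l t$ (equivalently, on the pair $(l,\size{s})$). The only side conditions mentioning the indices are the width bounds of the form $m < k + \width(s)$; since $\width(s)$ does not depend on $k$, the inequality $m < k + \width(s) \leqslant k' + \width(s)$ is immediate, and every recursive premise is stated at superscript $l-1 \leqslant l'-1$, so the induction hypothesis applies directly in each clause. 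The first assertion then follows by instantiating $k' = l' = l$ and using $k < l$ for both index increases, since $\gpopv_k$ abbreviates $\gpopv_k^k$ and $\gpopv_l$ abbreviates $\gpopv_l^l$.

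For the second assertion I would proceed by induction on the context $C[\ctx]$ over $\SE(\FSn,\VS)$. The base case $C[\ctx] = \ctx$ is the hypothesis. In the inductive step the hole sits below some symbol, and it suffices to establish the one-step replacement property: if $a \gpopv_k b$ then replacing a single element of a sequence by a $\gpopv_k$-smaller one preserves the order, and likewise for an argument below a function symbol $f \in \FSn$. The latter reduces to the former via clause (4) of $\gpopv_k$, which compares $f(\dots)$ with $f(\dots)$ through their argument sequences. The width bounds are harmless here precisely because they are stated relative to $\width(s)$: a sequence of $n$ elements has width at least $n$, so $n < k + \width(s)$ holds for every $k \geqslant 1$. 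This relative bound (in contrast to the absolute bound $m < k$ of the earlier path order) is exactly what lets closure hold for a \emph{fixed} index rather than forcing an increase.

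The main obstacle is the replacement property inside a sequence. The only clause of $\gpopv_k$ that decreases a single component of a sequence is the multiset clause (5), and there the decrease of the distinguished component is required with respect to the auxiliary order $\gppv_k$, not $\gpopv_k$ itself; indeed clause (5) of $\gpopv_k$ literally coincides with the multiset clause (3) of $\gppv_k$. Consequently a naive appeal to clause (5) needs $a \gppv_k b$, whereas the induction hypothesis only supplies $a \gpopv_k b$. I expect to resolve this by strengthening the induction to carry both relations simultaneously, proving $s \gppv_k t \Rightarrow C[s] \gppv_k C[t]$ alongside $s \gpopv_k t \Rightarrow C[s] \gpopv_k C[t]$, and by observing that the contexts arising from terms route the hole through an $\FSn$-symbol, so that clause (4) applies and the residual sequence comparison involves exactly one mismatched singleton position, for which the needed $\gppv_k$-premise is available from the strengthened hypothesis. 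Checking that this case analysis is exhaustive and that the width side conditions propagate through the nested sequence structure is the delicate part of the argument.
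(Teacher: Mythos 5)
Your treatment of assertion~(1) is fine and matches what the paper intends (the paper itself offers no proof, recording both assertions as direct consequences of the definitions): the indices enter only through the width bounds $m < k + \width(s)$, which are monotone in $k$, and through the superscript decrements $l-1$ in the recursive premises, which your induction hypothesis absorbs; instantiating $k'=l'=l$ gives the claim for $\gpopv_k = \gpopv_k^k$.

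For assertion~(2) you have located exactly the right obstacle, but your proposed repair does not close it. The crux is to derive $(u_1,\dots,s,\dots,u_n) \gpopv_k (u_1,\dots,t,\dots,u_n)$ from $s \gpopv_k t$, and, as you say, every clause that can register a decrease in a single component of a sequence --- clause~(5) of $\gpopv_k$, or clause~(1) via the multiset clause of $\gppv_k$ --- demands $s \gppv_k t$ for the changed component. Carrying the implication $s \gppv_k t \Rightarrow C[s] \gppv_k C[t]$ alongside the $\gpopv_k$-statement does not manufacture that premise: at the innermost replacement step of the $\gpopv_k$-half you still only possess $s \gpopv_k t$, and since $\gpopv_k$ is strictly stronger than $\gppv_k$ (a pair related only through clause~(4), say $g(\dots) \gpopv_k g(\dots)$ by a genuine multiset descent of the arguments, is not $\gppv_k$-related at all), the required $\gppv_k$-premise is simply unavailable; no closure property of $\gppv_k$ under contexts can supply it. The way out is the observation you make in passing but do not exploit: clause~(5) of $\gpopv_k$ as printed is literally subsumed by clause~(1), so the non-redundant (and intended) reading --- the one under which the unabridged definition in the cited technical report is to be understood --- compares the strictly decreasing components with $\gpopv_k$ rather than $\gppv_k$. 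Under that reading the one-step replacement property is immediate (take the cover matching each unchanged $u_j$ with itself and sending $t$ to the changed position; the bound $n < k + \width(\cdot)$ holds, as you correctly observe, precisely because the bound is relative to the width), and closure under arbitrary contexts over $\SE(\FSn,\VS)$ then follows by your induction on $C$, alternating clauses~(4) and~(5). As it stands, however, your argument for assertion~(2) cannot be completed.
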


\begin{lemma} \label{l:approx}
  If $s \gpopv_k^l t$, then $\depth(t) \leqslant \depth(s)+l$ and
  $\width(t) \leqslant k + \width(s)$. Moreover, if $s \gpopv_k^l t$, 
  then $\bN{t} \leqslant \bN{s} +k+l$. 
\end{lemma}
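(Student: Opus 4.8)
The plan is to prove all three bounds simultaneously by induction on the derivation witnessing $s \gpopv_k^l t$ (equivalently $s \gppv_k^l t$), since these two relations are defined by mutual recursion and the superscript $l$ is decremented in several clauses; induction on the derivation tree is the cleanest measure because every premise refers either to a strictly smaller term or to a comparison with a smaller $l$, and all such premises are sub-derivations. Before starting I would record three routine auxiliary facts about the measures, each provable by a trivial induction on structure: (a) for an immediate argument $s_i$ of $s = f(\seq{s})$ or $s = (\sexpr{s})$ one has $\depth(s_i) \leqslant \depth(s) - 1$, $\width(s_i) \leqslant \width(s)$ and $\bN{s_i} \leqslant \bN{s} - 1$; (b) the three measures do not distinguish a function head from the sequence head $\odot$, i.e.\ $\depth(f(\seq{s})) = \depth((\sexpr{s}))$ and likewise for $\width$ and $\bN{\cdot}$; and (c) $\width(u) \leqslant \bN{u}$ for every $u \in \SE(\FSn,\VS)$. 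Fact (c) is what links the width side-condition $m < k + \width(s)$ appearing throughout the definitions to the Buchholz-norm bound, and I expect it to be the one genuinely load-bearing observation.

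First I would dispatch the subterm clauses (clause (1) of $\gppv_k^l$ and clause (2) of $\gpopv_k^l$, via $\geqppv$ resp.\ $\geqpopv$): the reflexive case is covered by (a) directly, while in the strict case the induction hypothesis applied to $s_i$ together with (a) yields, e.g., $\depth(t) \leqslant \depth(s_i) + l \leqslant \depth(s) + l$, and analogously for width and norm. Clause (1) of $\gpopv_k^l$, which merely asserts $s \gppv_k^l t$, is subsumed because its premise is a strictly smaller derivation, so the $\gppv$ part of the same simultaneous induction applies. The \emph{flattening} clauses that build a smaller-head term or a sequence from $s$ — clause (2) of $\gppv_k^l$ and clause (3) of $\gpopv_k^l$ — are then handled by unfolding the measure of $t = g(\seq[m]{t})$ or $t = (\sexpr[m]{t})$: for the depth one uses $\depth(t) = 1 + \max_i \depth(t_i)$ and the induction hypothesis $\depth(t_i) \leqslant \depth(s) + (l-1)$, where the decrement of $l$ is exactly what absorbs the added outer layer; for the width one combines $m < k + \width(s)$ with $\width(t_i) \leqslant k + \width(s)$.

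The hard part will be the Buchholz-norm bound in these flattening clauses, as it is the only place where the bookkeeping is delicate. Here $\bN{t} = 1 + \max\{m, \bN{t_1}, \dots, \bN{t_m}\}$, so I must bound both $1 + m$ and $1 + \bN{t_i}$ by $\bN{s} + k + l$. The second bound follows from the induction hypothesis $\bN{t_i} \leqslant \bN{s} + k + (l-1)$. For the first I would invoke fact (c): from $m < k + \width(s)$ and $\width(s) \leqslant \bN{s}$ I obtain $1 + m \leqslant k + \bN{s} \leqslant \bN{s} + k + l$, using $l \geqslant 1$. The multiset clauses — clause (3) of $\gppv_k^l$ and the syntactically identical clause (5) of $\gpopv_k^l$ — are treated the same way: under the partition $[\seq[m]{t}] = N_1 \uplus \cdots \uplus N_n$, each element $t_j$ of the target sequence is either equal to some argument $s_i$ (when $N_i = [s_i]$, covered by (a)) or satisfies $s_i \gppv_k^l t_j$ (covered by the induction hypothesis), while $m < k + \width(s)$ again controls the arity contribution through fact (c).

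Finally, the congruence clause (clause (4) of $\gpopv_k^l$) follows with no new calculation: since $s = f(\seq{s})$, $t = f(\seq[m]{t})$ and $(\sexpr{s}) \gpopv_k^l (\sexpr[m]{t})$, fact (b) lets me transport the three bounds obtained for the sequence comparison by the induction hypothesis verbatim to $s$ and $t$. Collecting all cases of the two relations completes the simultaneous induction and hence establishes the lemma.
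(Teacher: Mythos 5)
Your proof is correct, and it coincides with what the paper intends: the paper gives no explicit argument for this lemma, merely remarking that it is a direct consequence of the definitions, and the routine verification it has in mind is exactly your simultaneous induction on the derivation of $s \gppv_k^l t$ respectively $s \gpopv_k^l t$. Your identification of $\width(u) \leqslant \bN{u}$ as the one observation linking the arity bound $m < k + \width(s)$ to the Buchholz-norm estimate, and of the decrement of $l$ as what absorbs the extra layer in the flattening clauses, is precisely the bookkeeping the authors leave implicit.
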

  
By Lemma~\ref{l:approx}, there exists a (uniform) constant $c$ such that 
$\bN{t} \leqslant \bN{s} + c$, whenever $s \gpopv_k t$. And thus if
we have a $\gpopv_k$-descending sequence $s = t_0 \gpopv_k t_1 \gpopv_k \dots \gpopv_k t_\ell$
we conclude that $\bN{t_i} \leqslant ci + \bN{s}$ for all $i \geqslant 1$.

\begin{definition} \label{d:slow}
We define
\begin{align*}
  \Slow(s) & \defsym \max\{\ell\in \N \mid \exists (t_0,\dots,t_\ell) \colon 
  s = t_0 \gpopv_k t_1 \gpopv_k \dots \gpopv_k t_\ell \} \\
  \Fpop_{k,p}(m) &\defsym \max\{ \Slow(f(\seq{t})) \colon \rk(f) = p \land \sum_{i}\Slow(t_i) \leqslant m \}
 \end{align*}
In the definition of $\Fpop_{k,p}$, we assume $f \in \FSn$.
\end{definition}

A direct consequence of Definition~\ref{d:slow} is that 
$\Slow((\sexpr{t})) = n+\sum_{i=1}^{n} \Slow(t_i)$ holds.
% \begin{lemma} \label{l:pointwise} 
%   $\Slow((\sexpr{t})) = n+\sum_{i=1}^{n} \Slow(t_i)$.
% \end{lemma}
The following lemma is generalisation of a similar lemma in~\cite{fsttcs:2005} and
the proof given in~\cite{fsttcs:2005} can be easily adapted.
\begin{lemma}\label{l:flops08:upper}
We define $d_{k,0} \defsym k+1$ and $d_{k,p+1} \defsym (d_{k,p})^{k}+1$. 
Then for all $k,p$ there exists a constant $c$ (depending only on $k$ and $p$)
such that for all $m$: $\Fpop_{k,p}(m) \leqslant c (m+2)^{d_{k,p}}$.
\end{lemma}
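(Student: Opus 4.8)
The plan is to adapt the proof of Lemma~\ref{l:upper} from Chapter~\ref{fsttcs05} to the order $\gpopv_k$ on sequences, proceeding by main induction on the rank $p$ and side induction on the bound $m$. Fix $s = f(\seq{t})$ with $f \in \FSn$, $\rk(f) = p$ and $\sum_i \Slow(t_i) \leqslant m$. Since a maximal $\gpopv_k$-descent from $s$ takes its first step to some immediate successor $t$ with $s \gpopv_k t$ and $\Slow(s) = 1 + \max\{\Slow(t) \mid s \gpopv_k t\}$, it suffices to bound $\Slow(t)$ by (essentially) $c\,(m+2)^{d_{k,p}}$ for every such $t$, and then to carry the argument through the clauses defining $\gpopv_k$. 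Throughout I would use the decomposition $\Slow((\sexpr{r})) = n + \sum_i \Slow(r_i)$ together with the norm estimate of Lemma~\ref{l:approx} to keep the bounds uniform (valid for almost all $m$), and the base case ($p$ minimal) is settled directly as in Chapter~\ref{fsttcs05}.

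The clauses that do not increase the rank are routine. The subterm clause $s_i \geqpopv_k t$ is immediate, since then $\Slow(t) \leqslant \Slow(s_i) \leqslant m$. The rank-preserving clause, where $t = f(\seq[m']{t})$ and the argument sequence of $t$ is $\gpopv_k$-below that of $s$, is dispatched by the side induction on $m$: the total $\Slow$ of the arguments strictly decreases while the rank stays at $p$. The sequence-distribution clause reduces, again via $\Slow((\sexpr{r})) = n + \sum_i \Slow(r_i)$ and the width bound $m' < k + \width(s)$, to finitely many instances with strictly smaller $\Slow$-value, also handled by the side induction.

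The real work, and the main obstacle, is the clause $s \gppv_k t$ when $f$ has positive rank, which forces a careful analysis of the auxiliary order $\gppv_k^l$. Here I would establish the analogue of the key estimate $(\star)$ of Chapter~\ref{fsttcs05}: if $s \gppv_k^l r$, then $\Slow(r) \leqslant g^{(l)}(m)$, where $g(x) = c\,x^{d_{k,p-1}}$ is a constant multiple of the polynomial supplied by the main induction hypothesis at the strictly smaller ranks of the symbols reachable under $\gppv_k$, and $g^{(l)}$ is its $l$-fold iteration; this is proved by an inner induction on $l$. The decisive observation is that $l$ is bounded by $k$ (one works with $\gppv_k = \gppv_k^k$) and that, by the definitions of $\gppv_k^l$ and $\gpopv_k^l$, every sequence occurring along the descent has width strictly below $k + \width(s)$, hence bounded by a constant. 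Thus the polynomial $g$ is iterated at most $k$ times, so $g^{(k)}$ has degree $(d_{k,p-1})^{k}$; collecting the resulting factors and adding the contributions of the at most $k+\width(s)$ many branches yields a polynomial of degree $(d_{k,p-1})^{k}+1 = d_{k,p}$, exactly matching the defining recursion $d_{k,p} = (d_{k,p-1})^{k}+1$. This completes the inductive step, and hence the lemma.
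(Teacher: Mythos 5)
Your proposal is correct and takes essentially the same route as the paper: the paper itself proves this lemma only by remarking that the argument of Lemma~\ref{l:upper} in Chapter~\ref{fsttcs05} ``can be easily adapted'', and your sketch is precisely that adaptation, with the same main induction on the rank, side induction on the $\Slow$-value, case analysis on the clauses of $\gpopv_k$, and the iterated estimate $(\star)$ for $\gppv_k^l$ giving the degree bookkeeping $(d_{k,p-1})^{k}+1 = d_{k,p}$. The only substantive adjustment you make, replacing the width bound $k$ by $k+\width(s)$, is exactly the one the generalisation to sequences requires.
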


As a consequence of Lemma~\ref{l:flops08:upper} we obtain that $\Fpop_{k,p}(m)$ is asymptotically
bounded by $m^{d_{k,p}}$ for large enough $m$. 
The following lemma follows by a standard inductive argument.
\begin{lemma} \label{l:value}
  For all $k$, there exists a constant $c$ such that for $s \in \TA(\CS^{\Normal} \cup \{\odot\},\VS)$:
  $\Slow(s) \leqslant c \cdot \size{s}^2$.
\end{lemma}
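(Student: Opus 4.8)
The plan is to exploit the fact that, under the normal mapping, every constructor symbol $f \in \CS$ satisfies $\safe(f) = \{1,\dots,\ar(f)\}$, hence $\normal(f) = \varnothing$ and the corresponding symbol $f^{\Normal} \in \CS^{\Normal}$ is nullary. Consequently every $s \in \TA(\CS^{\Normal}\cup\{\odot\},\VS)$ is built exclusively from nullary constants, variables, and (nested) applications of the variadic symbol $\odot$, i.e.\ $s$ is either a variable, a nullary constant, or a sequence $(\sexpr{t})$. I would first record that this set is closed under $\gpopv_k$: the only symbols introduced on a right-hand side of $\gpopv_k$ (or $\gppv_k$) are $\odot$ and symbols $g$ with $c > g$ for some constant $c$, and since every defined symbol dominates every constructor in the precedence, such a $g$ is again a (smaller) constructor constant. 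Thus the whole descending chain stays inside $\TA(\CS^{\Normal}\cup\{\odot\},\VS)$, and I may freely apply the additive identity $\Slow((\sexpr{t})) = n + \sum_{i=1}^{n}\Slow(t_i)$ stated after Definition~\ref{d:slow}.

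Next I would bound $\Slow$ on the leaves. For a variable $x$ no clause of $\gpopv_k$ has a left-hand side of the form required, so $\Slow(x) = 0$. For a nullary constant $f$ of rank $p = \rk(f)$, I would invoke Lemma~\ref{l:flops08:upper}: since $\Slow(f) \leqslant \Fpop_{k,p}(0)$ (take the empty argument list, so $\sum_i \Slow(t_i) = 0$), and $\Fpop_{k,p}(0) \leqslant c'(0+2)^{d_{k,p}}$ is a fixed number, I obtain a uniform bound $\Slow(f) \leqslant c_1$, where $c_1 \defsym \max\{\Fpop_{k,p}(0) \mid p \leqslant \max_{g}\rk(g)\}$ depends only on $k$ and the (finite, fixed) signature. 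This step is the main obstacle: a constant is \emph{not} minimal, since it can reduce to a wide sequence of still smaller constants, so $\Slow(f)$ is genuinely larger than $\rk(f)$; what saves us is that the width blow-up is capped by $k$ at each level and the rank is bounded, which is exactly what Lemma~\ref{l:flops08:upper} quantifies at argument $0$.

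Finally I would combine the two facts by a straightforward counting argument. Repeatedly unfolding the additive identity down to the leaves gives $\Slow(s) = \sum_{v}\ar(v) + \sum_{\ell}\Slow(\ell)$, where $v$ ranges over the $\odot$-nodes of $s$ and $\ell$ over its leaf occurrences. The total arity summed over all sequence nodes equals the number of non-root positions, hence is at most $\size{s}$, while the number of leaves is also at most $\size{s}$ and each contributes at most $c_1$. Therefore $\Slow(s) \leqslant \size{s} + c_1\cdot\size{s} = (1+c_1)\,\size{s}$; equivalently one proves $\Slow(s) \leqslant \size{s} - 1 + c_1\cdot\mathrm{lv}(s)$ by a direct structural induction using the identity, where $\mathrm{lv}(s)$ counts leaves. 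Setting $c \defsym 1+c_1$ yields the (in fact linear) bound $\Slow(s) \leqslant c\cdot\size{s} \leqslant c\cdot\size{s}^2$, which is the claimed estimate.
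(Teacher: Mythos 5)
Your proof is correct and supplies exactly the ``standard inductive argument'' that the paper leaves implicit (no proof is actually given for this lemma). The two ingredients you use are the ones the surrounding text provides: since every symbol of the normalised constructor signature is nullary, a term over it and $\odot$ is a tree of sequence nodes with constant or variable leaves, so the additive identity $\mathsf{G}_k((t_1\cdots t_n)) = n + \sum_i \mathsf{G}_k(t_i)$ pushes everything to the leaves, whose values are uniformly bounded by the constant $\mathsf{F}_{k,p}(0)$ from the preceding upper-bound lemma (the finiteness of the signature bounds the relevant ranks $p$). Your edge/leaf bookkeeping $\mathsf{G}_k(s)\leqslant (\lvert s\rvert - 1) + c_1\cdot\mathrm{lv}(s)$ in fact yields a \emph{linear} bound, slightly stronger than the stated quadratic one; the quadratic form is what the cruder induction hypothesis $c\cdot\lvert s\rvert^2$ gives directly (there the summand $n$ is absorbed without separating leaves from internal nodes) and is all that the subsequent theorem requires, so the discrepancy is not a flaw in your argument.
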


We arrive at the main theorem of this section.

\begin{theorem} \label{t:2}
For all $f \in \FSn$ of arity $n$, for all $s_1,\dots,s_n \in \TA(\CSn\cup\{\odot\})$, and for all $k$:
$\Slow(f(s_1,\dots,s_n))$ is bounded by a polynomial in the sum of the sizes of $s_1,\dots,s_n$.   
The polynomial depends only on $k$ and the rank of $f$.
 \end{theorem}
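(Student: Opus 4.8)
The plan is to obtain Theorem~\ref{t:2} by directly composing the two preparatory lemmas, Lemma~\ref{l:flops08:upper} and Lemma~\ref{l:value}, with the definition of $\Fpop_{k,p}$ from Definition~\ref{d:slow}. The statement is essentially a packaging result: the genuine combinatorial work---the polynomial bound on the auxiliary function $\Fpop_{k,p}$ and the quadratic bound on $\Slow$ restricted to values---has already been discharged in those lemmas, so no fresh induction is required here.

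First I would fix $f \in \FSn$ of arity $n$, set $p \defsym \rk(f)$, and take $s_1,\dots,s_n \in \TA(\CSn\cup\{\odot\})$. Writing $m \defsym \sum_{i=1}^n \Slow(s_i)$, the term $f(s_1,\dots,s_n)$ is one of the terms ranged over in the definition of $\Fpop_{k,p}(m)$ (its head symbol has rank $p$, and $\sum_i \Slow(s_i) = m \leqslant m$). Hence
\begin{equation*}
  \Slow(f(s_1,\dots,s_n)) \leqslant \Fpop_{k,p}\Big(\sum_{i=1}^n \Slow(s_i)\Big).
\end{equation*}

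Next I would invoke Lemma~\ref{l:flops08:upper}, which supplies a constant $c$ depending only on $k$ and $p$ with $\Fpop_{k,p}(m) \leqslant c(m+2)^{d_{k,p}}$. To re-express the abstract argument $m = \sum_i \Slow(s_i)$ in terms of the sizes $\size{s_i}$, I would use Lemma~\ref{l:value}: since each $s_i$ lies in $\TA(\CS^{\Normal}\cup\{\odot\},\VS)$, there is a constant $c'$ depending only on $k$ with $\Slow(s_i) \leqslant c' \size{s_i}^2$, and therefore $\sum_i \Slow(s_i) \leqslant c'\sum_i \size{s_i}^2 \leqslant c'\big(\sum_i \size{s_i}\big)^2$. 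Combining these, and using that both $\Fpop_{k,p}$ and the map $t \mapsto c(t+2)^{d_{k,p}}$ are monotone, yields
\begin{equation*}
  \Slow(f(s_1,\dots,s_n)) \leqslant c\Big(c'\big(\textstyle\sum_{i=1}^n \size{s_i}\big)^2 + 2\Big)^{d_{k,p}},
\end{equation*}
which is a polynomial of degree $2d_{k,p}$ in $\sum_i \size{s_i}$ whose coefficients depend only on $k$ and $p = \rk(f)$. This is exactly the asserted bound.

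The only subtlety I would make explicit is the monotonicity of $\Fpop_{k,p}$ in its argument: from Definition~\ref{d:slow}, enlarging $m$ merely enlarges the set over which the maximum is taken, so $\Fpop_{k,p}$ is nondecreasing, which legitimises substituting the upper bound $c'\big(\sum_i \size{s_i}\big)^2$ for $m$. Beyond this bookkeeping there is no real obstacle; all the analytic difficulty has been pushed into Lemma~\ref{l:flops08:upper}, whose proof (by main induction on $p$ with a side induction, following the pattern of the corresponding lemma in~\cite{fsttcs:2005}) is where the interaction of the approximation clauses of $\gpopv_k$ is actually controlled.
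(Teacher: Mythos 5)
Your proposal is correct and follows essentially the same route as the paper's own proof: bound $\Slow(f(s_1,\dots,s_n))$ by $\Fpop_{k,\rk(f)}$ applied to $\sum_i\Slow(s_i)$, invoke Lemma~\ref{l:flops08:upper} for the polynomial bound on $\Fpop_{k,p}$, and then Lemma~\ref{l:value} to convert $\sum_i\Slow(s_i)$ into a quadratic function of $\sum_i\size{s_i}$, yielding a polynomial of degree $2d_{k,p}$ depending only on $k$ and $\rk(f)$. Your explicit remark on the monotonicity of $\Fpop_{k,p}$ is a harmless refinement of a step the paper leaves implicit.
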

\begin{proof}
Let $f \in \FSn$ and let $s_1,\dots,s_n \in \TA(\CS^{\Normal} \cup \{\odot\})$. 
By Lemma~\ref{l:flops08:upper} there exists $c_1 \in \N$ depending on $k$ and $\rk(f)$ such that 
  \begin{equation} \label{eq:t:2}
    \Slow(f(s_1,\dots,s_n)) \leqslant m^{c_1} 
  \end{equation}
if $\sum_i \Slow(s_i) \leqslant m$ and $m$ is large enough. 
By Lemma~\ref{l:value}, there exists a constant $c_2$ 
(depending on the rank of the function symbols in $s_i$) 
such that $\Slow(s_i) \leqslant c_2 \cdot \size{s_i}^2$.
Replacing $m$ in~\eqref{eq:t:2} by $c_2 \cdot (\sum_i \size{s_i})^2$ and setting 
$c = c_2^{c_1}$ yields:
\begin{equation*}
 \Slow(f(s_1,\dots,s_n)) \leqslant [c_2 \cdot (\sum_i \size{s_i})^2]^{c_1} = 
c \cdot (\sum_i \size{s_i})^{2c_1} 
\end{equation*}
\end{proof}

\section{Predicative Interpretation} \label{Predicative}

The purpose of this section is to prove our main theorem. Let $\RS$ denote
a completely defined, constructor TRS. We embed the order $\gpop$ into 
$\gpopv_k$ such that $k$ depends only on $\RS$. 
This becomes possible if we represent the information on normal and safe arguments 
underlying the definition of $\gpop$ explicitly by interpreting the signature $\FS$
in the normalised signature $\FSn$.

% corrected, 11/08
% \begin{definition}
% \label{d:sv}
% The \emph{maximal size} of the safe values of a term $t$ is defined as follows:
% \begin{equation*}
%   \sv(t) =
%   \begin{cases}
%     \bN{t} & \text{$t \in \Val(\RS)$}\\
%     \max\{\sv(t_i) \mid i \in \safe(f)\} & \text{$t = f(t_1,\dots,t_n)$ and $t\not\in\Val(\RS)$}
%   \end{cases}
% \end{equation*}
% \end{definition}
% We represent $\sv(t)$ unary. Let $\mS$ denote a fresh constant that
% is minimal in the precedence $>$ on $\FSn$. 
% We define $\SV(t) = \U(\sv(t))$, where $\U \colon \N \to \TA(\{\mS,\odot\})$
% denotes the representation of $n$ as a sequence $(\mS~\cdots~\mS)$ 
% with $n$ occurrences of the constant $\mS$.
% As a direct consequence of the definition, we have: 
% $s \rhd t$ implies $\SV(s) \gpopv_k \SV(t)$ for any $k$.
Let $t$ be a term and recall that $\bN{t}$ denotes its (Buchholz) norm. 
We represent the norm unary. Let $\mS$ denote a fresh nullary function symbol that
is minimal in the precedence $>$ on $\FSn$. 
We define $\SV(t) = \U(\bN{t})$, where $\U \colon \N \to \TA(\{\mS,\circ\})$
denotes the representation of $n$ as a sequence $(\mS~\cdots~\mS)$ 
with $n$ occurrences of the constant $\mS$.
As a direct consequence of the definition, we have: 
$s \rhd t$ implies $\SV(s) \gpopv_k \SV(t)$ for any $k$.

\begin{definition} \label{d:pi}
Let $\safe$ denote a safe mapping. A \emph{predicative interpretation} (with respect to  $\TA(\FS,\VS)$) 
is a pair $(\ints,\intn)$ of mappings $\ints \colon \TA(\FS,\VS) \to \TA(\FSn,\VS)$ and 
$\intn \colon \TA(\FS,\VS) \to \TA(\FSn,\VS)$, defined as follows:
  \begin{align*}
    \ints(t) & =
    \begin{cases}
       \varnothing & \text{if $t \in \Val(\RS)$} \\
       (f^{\Normal}(\intn(s_{j_1}),\dots,\intn(s_{j_p}))~\ints(s_{i_1})~\dots~\ints(s_{i_q})) 
      & \text{if $t \not\in \Val(\RS)$} 
    \end{cases}\\
    \intn(t) & = (\ints(t))\cons\SV(t)
  \end{align*}
In the definition of $\ints$, we assume 
$t = f(\seq{s})$, $\normal(f) = \{j_1,\dots,j_p\}$ and $\safe(f) = \{i_1,\dots,i_q\}$.
(Recall that $\safe(f) \cup \normal(f) = \{1,\dots,n\}$.)
\end{definition}

Note that $\intn(s) \gppv_k \ints(s)$ (and thus
$\intn(s) \gpopv_k \ints(s)$) holds for any $k$. 
Moreover, observe that for any term $t$, we have
$\width(\intn(t)) = 1 + \bN{t}$ which follows by a simple
inductive argument.
We arrive at the two main lemmas  of this section.

\begin{lemma} \label{l:3}
Let $f(l_1,\dots,l_n) \rew r \in \RS$, let $\sigma \colon \VS \to \Val(\RS)$ be a substitution
and let $k = 2 \cdot \max\{\size{r} \mid l \rew r \in \RS\}$.
If $f(l_1,\dots,l_n) \gsq r$ then 
$f^{\Normal}(\intn(l_{i_1}\sigma),\dots,\intn(l_{i_p}\sigma)) \gppv_{k} \Q(r\sigma)$ 
for $\Q \in \{\ints,\intn\}$, where $\normal(f) = \{i_1,\dots,i_p\}$.
\end{lemma}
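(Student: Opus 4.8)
The plan is to prove the lemma by induction on the structure of $r$ (equivalently, on the derivation of $f(l_1,\dots,l_n)\gsq r$), establishing the $\Q=\ints$ and $\Q=\intn$ assertions simultaneously and in the sharper form $s'\gppv_k^{l}\Q(r\sigma)$ for a budget $l$ linear in the Buchholz norm $\bN{r}$; here I abbreviate $s'=f^{\Normal}(\intn(l_{i_1}\sigma),\dots,\intn(l_{i_p}\sigma))$. Since $k=2\max\{\size{r}\mid l\rew r\in\RS\}\geq 2\size{r}\geq 2\bN{r}$, the linear budget never exceeds $k$, so the approximation $\gppv_k=\gppv_k^k$ carries enough descent depth; monotonicity of the approximations in the superscript together with Lemma~\ref{l:approx:popalt} then yields the unindexed conclusion. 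Because $s=f(l_1,\dots,l_n)$ has a defined root, only clauses (ii) and (iii) of Definition~\ref{d:pop2} can witness $s\gsq r$, and these drive the case analysis.

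In the first case $l_i\geqsq r$ for some normal position $i\in\normal(f)$. Then $l_i\sigma$ is a value (arguments of left-hand sides are constructor terms and $\sigma$ ranges over values), and $\intn(l_i\sigma)$ is literally an argument of $s'$. If $l_i=r$ I obtain $s'\gppv_k\intn(r\sigma)$ directly from the subterm clause of the approximation, and $s'\gppv_k\ints(r\sigma)$ by composing with the recorded fact $\intn(r\sigma)\gppv_k\ints(r\sigma)$; if $l_i\gsq r$ I instead feed this into the induction. In the second, main, case $r=g(r_1,\dots,r_m)$ with $f>g$ and $s\gsq r_j$ for every $j$. Unfolding $\ints(r\sigma)$ and $\intn(r\sigma)$ exposes them as sequences whose components are $g^{\Normal}(\intn(r_{j}\sigma)\dots)$, the interpretations $\ints(r_{j}\sigma)$ of the safe arguments, and (for $\intn$) a block of copies of the minimal constant $\mS$. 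I would descend from $s'$ into each component by clause (ii) of the approximation: the head component is reached using $f^{\Normal}>g^{\Normal}$ and the induction hypothesis on the $r_j$ (which are structurally smaller and still satisfy $s\gsq r_j$), the safe components by the induction hypothesis directly, and each $\mS$ trivially since $\mS$ is minimal. The budget drops by a bounded amount per structural layer, which is exactly where $\bN{r_j}<\bN{r}$ and the slack in $k$ are consumed.

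The main obstacle is the width side-condition imposed by clause (ii) of the approximation: the target sequence must have fewer than $k+\width(s')$ components. For $\ints(r\sigma)$ this length is bounded by $\ar(g)$ and is harmless, but for $\intn(r\sigma)$ the recorded identity $\width(\intn(r\sigma))=1+\bN{r\sigma}$ shows its length is governed by $\bN{r\sigma}$, which is unbounded once $\sigma$ substitutes large values. The resolution rests on a separate observation, proved by the same induction on $\gsq$: every variable occurring in $r$ occurs in some normal argument $l_i$ with $i\in\normal(f)$, because in Definition~\ref{d:pop2} a variable can only be reached through clause (ii), which is confined to normal positions. Consequently $\bN{\sigma(x)}\leq\max_{i\in\normal(f)}\bN{l_i\sigma}$ for every $x\in\Var(r)$, and the standard substitution estimate $\bN{r\sigma}\leq\bN{r}+\max_{x\in\Var(r)}\bN{\sigma(x)}$ gives $\bN{r\sigma}\leq\bN{r}+\max_{i}\bN{l_i\sigma}$. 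Since $\width(s')=1+\max_{i}\bN{l_i\sigma}$, the requirement $1+\bN{r\sigma}<k+\width(s')$ collapses to $\bN{r}<k$, which holds by the choice of $k$.

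This is precisely the point where the normal/safe separation built into $\gsq$ pays off, and it is the only genuinely delicate step. The remaining verifications — the minimality of $\mS$, the degenerate value case where $\ints(r\sigma)=\varnothing$, and the bookkeeping that keeps the linear budget below $k$ — are routine and follow the pattern of the analogous approximation lemmas for the path order for $\mathbf{FP}$.
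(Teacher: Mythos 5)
Your proposal is correct and follows essentially the same route as the paper, whose published proof is itself only a sketch: it strengthens the statement to one proved by induction on $\gsq$ with a descent budget $2\ell$ tied to the Buchholz norm $\ell=\bN{r}$ (which is dominated by $k$), exactly as you do, deferring the details to a technical report. Your identification and resolution of the width side-condition for $\intn(r\sigma)$ --- via the observation that every variable of $r$ must be reachable through a normal argument position of the left-hand side, so that $\bN{r\sigma}$ is controlled by $\bN{r}$ plus $\width$ of the interpreted left-hand side --- is precisely the delicate point the sketch suppresses, and your argument for it is sound.
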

\begin{proof}
We sketch the proof plan: Instead of showing the lemma directly, one shows the 
following stronger property for terms $s,t \in \TA(\FS,\VS)$ 
where $s$ is either a value or of form $f(s_1,\dots,s_n)$ such that $s_i\sigma \in \Val(\RS)$ 
for all $1 \leqslant i \leqslant n$.
\begin{center}
\begin{tabular}{l@{\hspace{5ex}}c}
$(\dag)$ &
\parbox{60ex}{%
Let $\ell = \bN{t}$, if $f \in \DS$, then $s \gsq  t$ implies
$\Q(s\sigma) \gppv_{2\ell} f^{\Normal}(\intn(s_1\sigma),\dots,\intn(s_p\sigma)) \gppv_{2\ell} \Q(t\sigma)$;
otherwise $\intn(s\sigma) \gppv_{2\ell} \intn(t\sigma)$ holds.%
}
\end{tabular}
\end{center}
Here we suppose $\safe(f) = \{p+1,\dots,n\}$. 
To show $(\dag)$ one proceeds by induction on $\gsq$. 
See~\cite{TR:2008} for the complete proof.
\end{proof}

\begin{lemma} \label{l:4}
Let $l \rew r \in \RS$, let $\sigma \colon \VS \to \Val(\RS)$ be a substitution,
and let $k = 2\cdot\max\{\size{r} \mid l \rew r \in \RS\}$.
If $l \gpop r$ then $\Q(l\sigma) \gpopv_{k} \Q(r\sigma)$ for $\Q\in\{\ints,\intn\}$.
\end{lemma}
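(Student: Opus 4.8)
The statement to prove is Lemma~\ref{l:4}: for a rule $l \rew r \in \RS$, a substitution $\sigma \colon \VS \to \Val(\RS)$, and $k = 2\cdot\max\{\size{r} \mid l \rew r \in \RS\}$, if $l \gpop r$ then $\Q(l\sigma) \gpopv_{k} \Q(r\sigma)$ for $\Q\in\{\ints,\intn\}$. The natural approach is induction on the definition of $\gpop$ (Definition~\ref{d:pop3}), performing a case analysis on which of the four clauses justifies $l \gpop r$. Since $l$ is the left-hand side of a rule we know $l = f(l_1,\dots,l_n)$ with $f \in \DS$, and since $\sigma$ maps variables to values, each $l_i\sigma$ is a value precisely when $l_i \in \Val(\RS) \cup \VS$; in general the arguments $l_i\sigma$ are constructor terms. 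The key point to keep in mind throughout is the relationship $\intn(s) \gppv_k \ints(s)$ noted after Definition~\ref{d:pi}, and the fact that for values $\ints(t\sigma) = \varnothing$ is the minimal element.

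\medskip

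\noindent\textbf{The case analysis.} First I would dispatch clause~(i), where $l \gsq r$: this is exactly the content of Lemma~\ref{l:3}, which gives $f^{\Normal}(\intn(l_{i_1}\sigma),\dots,\intn(l_{i_p}\sigma)) \gppv_{k} \Q(r\sigma)$ for $\normal(f)=\{i_1,\dots,i_p\}$. Since $\ints(l\sigma) = (f^{\Normal}(\intn(l_{i_1}\sigma),\dots)\;\ints(\dots)\dots)$ has the head element $f^{\Normal}(\dots)$, the subterm property together with the embedding of $\gppv_k$ into $\gpopv_k$ (via Definition clause~(i) of $\gpopv$) yields $\Q(l\sigma) \gpopv_k \Q(r\sigma)$; for $\Q = \intn$ one additionally appends $\SV(l\sigma)$ and uses that $\SV$ is monotone under $\rhd$. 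For clause~(ii), where some $l_i \geqpop r$, I would use the induction hypothesis on $l_i$ together with the fact that $\ints(l\sigma)$ contains (the interpretation of) $l_i\sigma$ as a component, invoking the subterm/context behaviour of $\gpopv_k$ from Lemma~\ref{l:approx:popalt}. For clause~(iii), where $r = g(\seq[m]{t})$ with $f > g$, I would apply the induction hypothesis to each $l \gpop t_{i_0}$ and $l \gsq t_i$, then assemble the pieces using that $f^{\Normal} > g^{\Normal}$ in the induced precedence on $\FSn$ and clause~(3) of the definition of $\gpopv_k^l$, being careful that the width bound $m < k + \width(\cdot)$ is satisfied — this is where the choice $k = 2\cdot\max\{\size{r}\}$ is used, since $m \leqslant \size{r} \leqslant k$.

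\medskip

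\noindent\textbf{The recursive clause.} The genuinely delicate case is clause~(iv), where $r = f(\seq[m]{t})$ with the same head symbol and the multiset conditions $[l_{i_1},\dots,l_{i_p}]\,(\gpop)_{\mul}\,[t_{i_1},\dots,t_{i_p}]$ on normal arguments and $(\geqpop)_{\mul}$ on safe arguments. Here the interpretation must translate the multiset comparison on normal/safe arguments into the sequence comparison $\gpopv_k$ on $\ints(l\sigma)$ versus $\ints(r\sigma)$, using clause~(5) of $\gpopv_k$ (the multiset-style decomposition $[\seq[m]{t}] = N_1 \uplus \cdots \uplus N_n$). The normal arguments sit inside the head symbol $f^{\Normal}$ as $\intn$-interpretations, while the safe arguments appear as trailing $\ints$-components of the sequence; I would handle these two blocks separately, applying the induction hypothesis componentwise and then lifting via the monotonicity of $\gpopv_k$ under sequence-contexts (Lemma~\ref{l:approx:popalt}(2)). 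The main obstacle I anticipate is precisely the bookkeeping in this clause: ensuring that the multiset extension on the object level is faithfully reflected by clause~(5) of $\gpopv_k$ after interpretation, that strictness is preserved (at least one strict descent among the normal arguments forces a strict sequence descent), and that all the width and depth side-conditions remain within the uniform bound $k$. This is a routine but intricate induction; I would relegate the full componentwise verification to the technical report, as the authors do in Lemma~\ref{l:3}, and present only the structural skeleton here.
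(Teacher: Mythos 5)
Your proposal is correct and follows essentially the same route as the paper: the paper's own proof is a sketch that states a strengthened claim $(\ddag)$ about general terms $s \gpop t$ (with the case split on whether the root of $s$ is defined or a constructor), proves it by induction on the derivation of $\gpop$ with a case analysis on the clauses of Definition~\ref{d:pop3}, and defers the detailed componentwise verification to the technical report~\cite{TR:2008}, exactly as you propose. The only point worth making explicit in your write-up is the strengthened induction statement itself (your clause-(ii) step applies the induction hypothesis to $l_i \geqpop r$ where $l_i$ is not a left-hand side, so the statement being inducted on must range over arbitrary terms whose arguments instantiate to values, which is what the paper's $(\ddag)$ records).
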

\begin{proof}
  Similar to the proof of Lemma~\ref{l:3} one shows the following property
for terms $s,t \in \TA(\FS,\VS)$ where $s$ is either a value or of form $f(s_1,\dots,s_n)$
such that $s_i\sigma \in \Val(\RS)$ for all $1 \leqslant i \leqslant n$.

\begin{center}
\begin{tabular}{l@{\hspace{5ex}}c}
$(\ddag)$
&
\parbox{60ex}{%
Let $\ell = \bN{t}$. If $f\in\DS$, then $s = f(s_1,\dots,s_n) \gpop  t$ implies 
(i) $f^{\Normal}(\intn(s_1\sigma),\dots,\intn(s_p\sigma)) \gpopv_{2\ell} \ints(t\sigma)$
and 
(ii) $(f^{\Normal}(\intn(s_1\sigma),\dots,\intn(s_p\sigma)))\cons\SV(s\sigma) \gpopv_{2\ell} \intn(t\sigma)$. 
Otherwise if $f\in\CS$ then $\intn(s\sigma) \gpopv_{2\ell} \intn(t\sigma)$ holds.
}
\end{tabular}
\end{center}
Here we suppose $\safe(f) = \{p+1,\dots,n\}$. 
To show $(\ddag)$ one proceeds by induction on $\gpop$. 
See~\cite{TR:2008} for the complete proof.
\end{proof}

From Lemmata~\ref{l:3} and~\ref{l:4} the main lemma of this section follows.

\begin{lemma} \label{l:main}
Let $s$ and $t$ be terms such that $s \irew t$ and let 
$k = 2\cdot\max\{\size{r} \mid l \rew r \in \RS\}$.
Then $\Q(s) \gpopv_k \Q(t)$ for $\Q\in\{\ints,\intn\}$.
\end{lemma}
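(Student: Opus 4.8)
Let $s$ and $t$ be terms such that $s \irew t$ and let $k = 2\cdot\max\{\size{r} \mid l \rew r \in \RS\}$. Then $\Q(s) \gpopv_k \Q(t)$ for $\Q\in\{\ints,\intn\}$.

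Let me think about how I would prove this.

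We have Lemma \ref{l:3} and Lemma \ref{l:4} which deal with the case where $s = l\sigma$ and $t = r\sigma$ directly (the redex is at the root, with $\sigma$ mapping variables to values). Specifically Lemma \ref{l:4} says: if $l \gpop r$, $\sigma: \VS \to \Val(\RS)$, and $k = 2\cdot\max\{\size{r}\}$, then $\Q(l\sigma) \gpopv_k \Q(r\sigma)$.

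Now for the general statement we have $s \irew t$. By definition of innermost rewriting, there exist a rule $l \to r \in \RS$, a context $C$, and a substitution $\sigma$ such that $s = C[l\sigma]$, $t = C[r\sigma]$, and all proper subterms of $l\sigma$ are normal forms.

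The key observation: since $\RS$ is completely defined (ground normal forms = ground values), and since the redex is innermost, the arguments inside $l\sigma$ are normal forms, hence values. So $\sigma$ effectively maps into values at the relevant positions. This lets us invoke Lemma \ref{l:4}.

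So the plan is: First, I would use the hypothesis $\RS \subseteq \gpop$ (from the Main Theorem setup) to get $l \gpop r$. Since $\RS$ is compatible with $\gpop$, every rule satisfies $l \gpop r$. Combined with the innermost restriction — proper subterms of $l\sigma$ are normal forms, and since $\RS$ is completely defined these are values — the substitution $\sigma$ restricted to $\Var(l)$ maps variables to values (as variables in $l$ occur in proper subterm positions). Hence Lemma \ref{l:4} applies at the root: $\Q(l\sigma) \gpopv_k \Q(r\sigma)$ for $\Q \in \{\ints, \intn\}$.

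Second, I would lift this through the context $C$. This is where I expect the main work to lie. I would proceed by induction on the structure of $C$. For the base case $C = \ctx$ we are done by the root case above. For the inductive step, $C = f(u_1,\dots,C'[\ctx],\dots,u_n)$ where the hole sits in the $j$-th argument. Here the crucial point is how the predicative interpretations $\ints,\intn$ behave under taking a one-step-deeper context, and I would use Lemma \ref{l:approx:popalt}(2) (closure of $\gpopv_k$ under contexts over $\SE(\FSn,\VS)$) together with the recursive structure of $\ints$ and $\intn$ from Definition \ref{d:pi}.

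The delicate subtlety — and the main obstacle — is that $\ints$ and $\intn$ are not plain homomorphisms: the interpretation of $f(s_1,\dots,s_n)$ packages normal arguments via $\intn$ (nested inside $f^{\Normal}$) and safe arguments via $\ints$ (juxtaposed in the sequence), and it also appends $\SV(t)$, which depends on the Buchholz norm $\bN{t}$. When I replace a subterm $C'[l\sigma]$ by $C'[r\sigma]$, I must check three things: (a) the recursive descent of $\Q$ produces either $\intn$ or $\ints$ of the changed subterm in an argument slot, so the inductive hypothesis gives a $\gpopv_k$-decrease there; (b) this decrease propagates through clause \ref{en:a:iv}-style monotonicity of $\gpopv_k$ (Lemma \ref{l:approx:popalt}(2) and monotonicity of the multiset/sequence extensions); and (c) the norm component $\SV$ does not increase problematically — but since $\bN{t} \leqslant \bN{s}$ when $s \gpop t$ at the inner position, and $\SV$ is monotone under $\rhd$, the appended norm sequence stays below, so the descent is preserved. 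I would need to verify that the context position being normal versus safe does not break the argument; in both cases the image under $\Q$ embeds the changed argument into a position where $\gpopv_k$'s context-closure applies. Since the redex depth only affects which of $\ints/\intn$ wraps the subterm, and Lemma \ref{l:4} already furnishes the decrease for \emph{both} $\Q \in \{\ints,\intn\}$ simultaneously, the induction goes through uniformly regardless of the normal/safe status of the enclosing position, and the constant $k$ remains fixed because it depends only on $\RS$, not on $C$.
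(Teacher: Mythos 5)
Your decomposition is exactly the paper's: the paper proves this lemma in one line from Lemmata~\ref{l:3} and~\ref{l:4}, relying on the observation that an innermost redex in a completely defined constructor TRS instantiates variables by elements of $\Val(\RS)$, and on closure of $\gpopv_k$ under $\SE(\FSn,\VS)$-contexts (Lemma~\ref{l:approx:popalt}) to lift the root-step decrease through the surrounding context and through the recursive clauses of Definition~\ref{d:pi}. So the route is right. However, your justification of point (c) rests on a false claim: it is \emph{not} true that $\bN{r\sigma} \leqslant \bN{l\sigma}$ whenever $l \gpop r$ and $\sigma$ maps into values. Take the rule $\add(\mS(x),y) \rew \mS(\add(x,y))$ of $\RSmult$ (compatible with $\gpop$) and $\sigma = \{x \mapsto \Null,\, y \mapsto \mS^m(\Null)\}$: then $\bN{l\sigma} = m+2$ while $\bN{r\sigma} = m+3$. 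Consequently $\SV(r\sigma)$, and likewise $\SV(C[r\sigma])$, may be strictly longer than $\SV(l\sigma)$ resp.\ $\SV(C[l\sigma])$, so the argument ``the appended norm sequence stays below'' does not go through as stated, and the $\intn$-case of your induction is unsupported at exactly the point where the two interpretations differ.

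The repair is not monotonicity of $\SV$ but the width slack built into the order: the clauses of $\gpopv_k$ on sequences only require $m < k + \width(s)$ on the right-hand side, and Lemma~\ref{l:approx} records that a $\gpopv_k^l$-step may increase the Buchholz norm by up to $k+l$. Since every variable of $r$ occurs in $l$, one has $\bN{r\sigma} \leqslant \bN{l\sigma} + \size{r}$, and the surplus copies of the minimal constant $\mS$ in $\SV(r\sigma)$ can be absorbed by the multiset-cover clause because every non-minimal entry of the left sequence is $\gppv_k$-above $\mS$. This bounded increase is precisely what the factor $2$ in $k = 2\cdot\max\{\size{r} \mid l \rew r \in \RS\}$ is there to accommodate, and it is the content of the clause for $\intn$ in properties $(\dag)$ and $(\ddag)$ of Lemmata~\ref{l:3} and~\ref{l:4}; you should invoke those statements (which already cover $\Q = \intn$) rather than re-deriving the norm behaviour from a monotonicity claim that fails.
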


\begin{mainth} 
Let $\RS$ be a finite, completely defined constructor TRS. 
Assume further $\RS$ is compatible with $\gpop$.
Then the induced (normal) runtime complexity is polynomial.
\end{mainth}
\begin{proof}
Let $t=f(\seq{t})$ be term in $\TB$ and without loss of generality 
let $\safe(f) = \{p+1,\dots,n\}$. We set $k = 2\cdot\max\{\size{r} \mid l \rew r \in \RS\}$. 
By Lemma~\ref{l:main} any innermost rewrite steps $t \irew u$ induces
$\ints(t) \gpopv_{k} \ints(u)$. Thus we obtain:
\begin{align*}
    \Dh{\RS}{\irew}(f(t_1,\dots,t_n)) 
    & = \max\{\ell \mid 
    \exists u \; t \irew^{\ell} u \} \\
    & \leqslant \max\{\ell \mid 
    \exists \; (s'_1,\dots,s'_{\ell}) \colon \ints(t)
    \gpopv_k s'_1 \gpopv_k \dots \gpopv_k s'_{\ell} \}\\
    & \leqslant \Slow(\ints(f(t_1,\dots,t_n)))
\end{align*}
Next note that $\ints(f(t_1,\dots,t_n)) = (f^{\Normal}(\intn(t_1),\dots,\intn(t_p))~\varnothing \dots \varnothing)$. 
By Theorem~\ref{t:2} and the observation following Definition~\ref{d:slow} we see that 
\begin{equation*}
  \Slow((f^{\Normal}(\intn(t_1),\dots,\intn(t_p))~\varnothing \dots \varnothing)) 
  \leqslant n + 1 + \Slow(f^{\Normal}(\intn(t_1),\dots,\intn(t_p)))
\end{equation*}
Employing Lemma~\ref{l:flops08:upper}, we see (for a fixed $f$) 
that $n + 1 + \Slow(f^{\Normal}(\intn(t_1),\dots,\intn(t_p)))$ is asymptotically
bounded by a polynomial in the sum of the sizes of the arguments $\intn(t_1)$,\dots,$\intn(t_p)$.  
By definition $\size{\intn(t_i)} = \bN{t_i} \leqslant \size{t_i}$ for all $1 \leqslant i \leqslant p$. 

Hence for each term $t \in \TB$, $\Dh{\RS}{\irew}(t)$ is bounded by a polynomial in the sum of the sizes of the
normal argument terms of $t$. In particular, as the signature $\FS$ is finite, the 
normal runtime complexity function is polynomial.
\end{proof}

\begin{remark} \label{r:constructor}
In the above theorem we assume a constructor TRS. It is not difficult to see that this
restriction is not necessary. (Essentially one replaces the application of Lemmata~\ref{l:3} and~\ref{l:4}
by the application of the properties $(\dag)$ and $(\ddag)$ respectively.)
However, the restriction that the arguments of $f$ are in normal form is necessary. 
Hence we prefer the given formulation of the theorem.
\end{remark}

\section{Experimental Data} \label{flops08:Experiments}

To prove compatibility of a given TRS $\RS$ with recursive path orders
we have to find a \emph{precedence} $>$ such that the induced order is compatible with $\RS$.
When we want to orient $\RS$ by a polynomial path order $\gpop$ we additionally
require a suitable \emph{safe mapping}.
To automate this search we encode the constraint $s \gpop t$ into a propositional formula:
\begin{equation*}
    \enc(s \gpop t) = \enc_1(s \gpop t) \vee \enc_2(s \gpop t) \vee \enc_3(s \gpop t) \vee \enc_4(s \gpop t)
\end{equation*}
Here $\enc_i(\cdot)$ is designed to encode clause $(i)$ from Definition \ref{d:pop3}. 
Based on such an encoding, compatibility of a TRS with $\gpop$ 
becomes expressible as the satisfiability of the formula
$\bigl(\bigwedge_{l \rew r \in \RS} \enc(l \gpop r)\bigr) \wedge P \wedge S$. 
Here the subformula $P$ is satisfiable 
if and only if all the variables $>_{f,g}$ (defined below) encode a strict precedence,
see~\cite{ZM:2007} for a suitable definition of $P$. The subformula $S$ is used to cover 
the additional conditions imposed on safe mappings defined in the beginning of 
Section \ref{Main}.

We only describe cases $(2)$--$(4)$, 
the encoding for case $(1)$---the comparison using the weaker order $\gsq$---% 
can be easily derived in a similar fashion. 
If $s = f(\seq{s})$ we set $\enc_2(s \gpop t) =  \bigvee_{i} {s_i \geqpop t}$, 
otherwise $\enc_2(s \gpop t) = \bot$.
For case $(3)$ we introduce for every function symbol $f$ and argument 
position $i$ of $f$ the (propositional) variables $\beta_{f,i}$, such that 
$\beta_{f,i} = \true$ represents the assertion $i \in \safe(f)$.
Moreover, for all function symbols $f,g$ we introduce variables $>_{f,g}$ 
such that truth of $>_{f,g}$ expresses that $f > g$ holds.
If $s=f(\seq{s})$ and $t=g(\seq[m]{t})$ for $f \in \DS$ with $f \not= g$, we define %MA $f \in \DS$
$\enc_3(s \gpop t)$ as:
\begin{equation*}
  {>}_{f,g}~\land~\bigvee_{i_0=1}^m (\enc(s\gpop t_i)~\land~\beta_{g,i_0}~\land 
  \bigwedge_{i = 1,i \not=i_0}^m (\enc(s \gsq t_i) \lor (\beta_{g,i} \land (s \rhd t_i)))
\end{equation*}
(For $s$, $t$ of different shape, we set $\enc_3(s \gpop t)=\bot$.)
% MA anmerkung: \gamma(j) = i => t_j is covered by s_i (also gamma : {1,..,m} -> {1,..,n}), 
% \varepsilon(i) = true => s_i covers by equality
% indices berichtigt in der def. und im encoding.
To deal with case $(4)$ we follow~\cite{SK:2007}. The main idea is to describe 
a multiset comparison in terms of \emph{multiset covers}. 
Formally, a multiset  cover is a pair of mappings 
$\gamma \colon \set{1,\dots,m} \to \set{1,\dots,n}$ 
and $\varepsilon \colon \set{1,\dots,n} \to \set{\true,\false}$
such that for all $i$, $j$ ($1 \leqslant i \leqslant n$, $1 \leqslant j \leqslant m$):
if $\varepsilon(i) = \true$ then the set $\set{j \mid \gamma(j) = i}$ is a singleton. 
It is easy to see that  $[\seq{s}]~(\succ^{=})_{\mul}~[\seq[m]{t}]$ if there 
exists a multiset cover $(\gamma,\varepsilon)$ such that for each $j$ there exists
an $i$ with $\gamma(j) = i$ and $\varepsilon(i) = \true$ implies $s_i = t_j$,
while $\varepsilon(i) = \false$ implies $s_i \succ t_j$.
Similarly we obtain $[\seq{s}] \succ_{\mul} [\seq[m]{t}]$ if  
$[\seq{s}]~(\succ^{=})_{\mul}~[\seq[m]{t}]$ and $\varepsilon(i) = \false$ for some 
$i\in\{1,\dots,n\}$.

This definition allows an easy encoding of multiset comparisons and
based on it, clause $(4)$ of Definition~\ref{d:pop3} becomes representable 
(for terms $s = f(\seq{s})$ and $t = f(\seq[m]{t})$) as the conjunction of the 
following two conditions together with the assumption that there exists a suitable
multiset cover $(\gamma,\varepsilon)$:
\begin{itemize}
\item whenever $\gamma(j) = i$ then the indicated argument positions $i$ and $j$, 
  are either both normal or both safe,
%   to represent this we extend (\ref{d:mulset_enc}) by
%   $\gamma_{i,j} \to (\beta_{f,i} \leftrightarrow \beta_{f,j})$ for 
%   all argument positions $i,j$ (of $f$.
\item 
  at least one cover is strict ($\varepsilon(i) = \false$) for some
  normal argument position $i$ of $f$.
% ; i.e., we add the formula
%   $\bigvee_{i=1}^n \bigl( \neg \beta_{f,i} \wedge \neg \varepsilon_i \bigr)$.
\end{itemize}
We introduce variables $\gamma_{i,j}$ and 
$\varepsilon_i$, where $\gamma_{i,j} = \true$ represents $\gamma(j) = i$ and
$\varepsilon_i = \true$ denotes $\varepsilon(i) = \true$
($1 \leqslant i \leqslant n$, $1 \leqslant j \leqslant m$).
Summing up, we set $\enc_4(s~(\gpop)_{\mul}~t)$ ($s = f(\seq{s})$ and $t = f(\seq[m]{t})$)
equal to:
\begin{multline*}
  \bigwedge_{i=1}^n \bigwedge_{j=1}^m \Bigl( \gamma_{i,j} \to \bigl( \varepsilon_i \to \enc(s_i = t_j) \bigr)
  \wedge \bigl( \neg \varepsilon_i \to \enc(s_i \succ t_j) \bigr)
  \wedge \bigl( \beta_{f,i} \leftrightarrow \beta_{f,j} \bigr) \Bigr) \\
  \wedge \bigwedge_{j=1}^m \one(\gamma_{1,j},\dots,\gamma_{n,j}) %MA jedes t_j ist von einem s_i "gecovered"
  \wedge \bigwedge_{i=1}^n \bigl(\varepsilon_i \to \one(\gamma_{i,1},\dots,\gamma_{i,m})\bigr) % wenn s_i fuer "cover by equality" dann wird 
  % von "s_i" genau ein t_j "gecovered"
  \wedge  \bigvee_{i=1}^n \bigl( \neg \beta_{f,i} \wedge \neg \varepsilon_i \bigr) 
\end{multline*}
Here $\one(\alpha_1,\dots,\alpha_n)$ is satisfiable if and only if 
exactly one of the variables $\alpha_1,\dots,\alpha_n$ is $\true$. And if
$s$, $t$ do not have the assumed form, we set $\enc_4(s~(\gpop)_{\mul}~t)=\bot$.

We compare the polynomial path order \POPSTAR\
to a restricted class of polynomial interpretations (SMC for short)~\cite{BCMT:2001} 
and to LMPO~\cite{Marion:2003}.
SMC refers to simple-mixed polynomial interpretations where 
constructor symbols are interpreted by a strongly linear
(also called \emph{additive}) polynomial~\cite{BCMT:2001}. 
Defined symbols on the other hand are interpreted by simple-mixed polynomials~\cite{contejean:2005}. 
Since \POPSTAR\ and LMPO~are in essence syntactic 
restrictions of MPO we also provide a 
comparison to MPO. 
\POPSTAR\ is implemented using the previously described propositional encoding; 
while the implementation of SMC rests on a propositional encoding of the
techniques described in~\cite{contejean:2005}.
To check satisfiability we employ MiniSat.%
\footnote{Available online at \url{http://minisat.se}.}
LMPO and MPO are implemented using an extension of the constraint 
solving technique described in~\cite{HirokawaMiddeldorp:2003}, which 
allows us to compare different implementation techniques at the same
time. 

As testbed we use those TRSs from the termination problem data base version 4.0
that can be shown terminating with at least one of the 
tools that participated in the termination competition 2007.%
\footnote{These 957 systems can be found online: \url{http://www.lri.fr/~marche/termination-competition/2007/webform.cgi?command=trs&file=trs-standard.db&timelimit=120}}
We use three different testbeds: \textbf{T} collects the 957 terminating
TRSs from TPDB, \textbf{TC} collects the 449 TRSs from the TPDB that are also
constructor systems, and \textbf{TCO} collects the 236 TRSs that are terminating, constructor based
and orthogonal.%
\footnote{The main reason for this delineation is that in related
work~\cite{BCMT:2001,Marion:2003} \emph{confluent} constructor TRS are considered.}
The results of our comparisons are given in Table~\ref{tab:exp}.
The tests presented below were conducted on a small complexity analyser
running single-threaded on a 2.1 GHz Intel Core 2 Duo with 1 GB of memory.
For each system we used a timeout of 30 seconds.%

\begin{table}[h]
\caption{Experimental results}
\label{tab:exp}
\smallskip
  \centering
  \begin{tabular}{l|@{\hspace{7ex}}lccccc}
    \hline
    \multicolumn{2}{c}{\TOP \BOT}  & \POPSTAR & LMPO    & SMC     & MPO \\
    \hline
    \textbf{T} & Yes \TOP    & 65       & 74       & 156      & 106   \\
    & Maybe                  & 892      & 812      & 395      & 847  \\
    & Timeout (30 sec.) \BOT & 0        & 71       & 406      & 4    \\
    \hline\hline
    \textbf{TC} & Yes \TOP   & 41       & 54       & 83       & 65  \\
    & Maybe                  & 408      & 372      & 271      & 381  \\
    & Timeout (30 sec.) \BOT & 0        & 23       & 95       & 3    \\
    \hline\hline
    \textbf{TCO} & Yes \TOP  & 19       & 25       & 38       & 29   \\
    & Maybe                  & 217      & 201      & 147      & 207  \\
    & Timeout (30 sec.) \BOT & 0        & 10       & 51       & 0    \\

    \hline
    \multicolumn{2}{l}{\TOP \BOT Average yes time (milliseconds)} & 15 & 14 & 1353 & 10 \\
    \hline
  \end{tabular}
 \end{table}

Some comments: 
What is noteworthy is the good performance of \POPSTAR
as a direct \emph{termination} method in comparison to MPO. It is well-known
that MPO implies primitive recursive derivation length, cf.~\cite{Hofbauer:1992}. In contrast to
this \POPSTAR\ implies polynomial runtime complexity and is thus a much weaker
order. Still more than half of the TRSs compatible with MPO are also compatible with \POPSTAR.
On the other hand the comparison between \POPSTAR\ and LMPO is quite favourable 
for our approach. Compatibility with LMPO tells us that the given TRS is 
(in principle)  polytime computable, while compatibility with \POPSTAR\ tells 
additionally that the runtime of a straightforward implementation (using an
innermost strategy) is polytime computable. Hence compatibility with \POPSTAR\ provides
us with a theoretical stronger result, while the difference on the experimental
data appears negligible.

The good performance of SMC in strength is a clear indication that currently
(restrictions of) semantic termination techniques (like polynomial interpretations) are of some interest
in automatically estimating the runtime complexity of TRSs. This may be surprising, 
as for additive polynomial interpretations it is (almost) trivial
to check that the induced upper bound on the derivation height is polynomial. 
However, the significant increase in the time necessary to find an additive polynomial interpretation, as indicated in Table~\ref{tab:exp},
clearly shows the limits of semantic methods for large examples.

\section{An Application: Complexity of Scheme Programs}
\label{Application}

In recent work together with Hirokawa and Middeldorp (see~\cite{AHMM:2007})
we study the runtime complexity of (a subset of)
Scheme programs by a translation into so-called \emph{S-expression rewrite systems}
(\emph{SRS} for short).
By designing the translation to be complexity preserving, 
the complexity of the initial Scheme program can be estimated by analysing 
the complexity of the resulting SRS.
 Here we indicate how our main theorem is applicable to 
(a subset of) S-expression rewrite systems, cf.~\cite{Toyama:2004}.

\begin{definition}
  Let $\CC$ be a set of constants, $\VV$ be a set of
variables such that $\VV \cap \CC = \varnothing$, and
$\circ \notin \CC \cup \VV$ a variadic function symbol. 
We define the set $\SEXPRS$ of \emph{S-expressions} built from $\CC$ and $\VV$ 
as $\TA(\CC \cup \{ \circ \},\VS)$.
We write $(\sexpr{s})$ instead of $\circ(\seq{s})$.
%Constants are written in \C{sans\ serif} font.
An \emph{S-expression rewrite system} (\emph{SRS} for short) 
is a TRS with the property that the left- and right-hand
sides of all rewrite rules are S-expressions.
\end{definition}

Let $\RSS$ be an SRS over $\SEXPRS$ and let $\CC = \DD \cup \KK$ such
that $\DD \cap \KK = \varnothing$. We call the elements of $\KK$ \emph{constructor}
constants and the elements of $\DD$ \emph{defined} constants.
We momentarily redefine the notion of \emph{value} in the context of SRSs.
%
% \begin{definition}
% \label{d:value_srs}
The set of \emph{values} $\Val(\RSS)$ of $\RSS$ with respect to $\KK$ is
inductively defined as follows:
\begin{enumerate}
\item
if $v \in \CC$ then $v \in \Val(\RSS)$,
\item
if $\seq{v} \in \Val(\RSS)$ and $\C{c} \in \KK$ then
$(\C{c} \ \V{v_1} \dots  \V{v_n}) \in \Val(\RSS)$.
\end{enumerate}
%\end{definition}

Observe that (defined) constants are values, this reflects that in Scheme
procedures are values, cf.~\cite{R6RS} and allows for a representation
of higher-order programs.
Scheme programs are conceivable as SRSs allowing conditional
\C{if} expressions in conjunction with an eager, i.e., innermost
rewrite strategy. 
Thus we can delineate a class of SRSs that easily 
accommodate a suitably large subset of Scheme programs.

\begin{definition}
$\RSS$ is called a \emph{constructor} if, for every $l \to r \in \RSS$,
$l = (l_0 \cdots\ l_n)$ with
$l_0 \in \DD$ and
$l_i \in \Val(\RSS)$ for all $i \in \{ 1, \dots, n \}$.
(Here the set of values $\Val(\RSS)$ is defined with respect to  $\KK$.)
\end{definition}

\begin{corollary} \label{t:srs}
Let $>$ denote a precedence on $\CC$ such that for all $f \in \DS$ 
we have for all $c \in \KK$: $f > c$ and let $\gpop$ denote the
induced \POPSTAR. Let $\RSS$ be a constructor SRS compatible with $\gpop$. 
Then for all $f \in \DS$ of arity $n$ and for all values $s_1,\dots,s_n$: 
$\Dh{\RSS}{\irew}((\C{f} \ \V{s_1} \ \dots \ \V{s_n}))$ is bounded by a polynomial
in the sum of the sizes of the arguments $s_1,\dots,s_n$.
\end{corollary}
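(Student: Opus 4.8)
The plan is to obtain Corollary~\ref{t:srs} as a direct application of the Main Theorem, without reproving anything; the only real work is to realise a constructor SRS $\RSS$ as a genuine (variadic) constructor TRS $\RS$ of the shape required in Section~\ref{Main}. The subtlety is that the single variadic symbol $\circ$ plays two roles in $\RSS$: it heads every function application, such as a redex $(\C{f}\ \V{s_1}\ \dots\ \V{s_n})$ with $\C{f}\in\DD$, and it also builds compound values such as $(\C{cons}\ \V{v_1}\ \V{v_2})$ with $\C{cons}\in\KK$. Since the constructor-TRS definition demands that the arguments of a left-hand side lie in $\TA(\CS,\VS)$, I cannot simply declare $\circ$ defined and leave the defined constant $\C{f}$ sitting as its first argument. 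First I would therefore fix a translation $\Phi$ that separates these roles: an application headed by a defined constant becomes a fixed-arity defined symbol $f_n$ applied to the translated arguments, an application headed by a constructor constant becomes a fixed-arity constructor symbol, a bare defined constant occurring as data becomes a nullary constructor, and an application whose head is not a constant is translated using a single variadic defined application symbol $\mathsf{ap}$ with $\safe(\mathsf{ap})=\varnothing$.

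Having fixed $\Phi$, I would verify three facts. (i) $\Phi$ maps $\Val(\RSS)$ onto the ground normal forms, equivalently the values, of $\RS$, and it maps a constructor-based S-expression $(\C{f}\ \V{s_1}\ \dots\ \V{s_n})$ to a constructor-based term of $\TB$. (ii) $\Phi$ is a step-for-step isomorphism of innermost derivations: because innermost reduction contracts only redexes whose proper subterms are normal forms, and $\Phi$ preserves this property, $\Dh{\RSS}{\irew}$ and $\Dh{\RS}{\irew}$ agree on translated terms, while $\size{\Phi(t)}$ is bounded by a constant multiple of $\size{t}$. (iii) Compatibility is preserved, i.e.\ $\RSS\subseteq\gpop$ implies $\RS\subseteq\gpop$ for the induced precedence and safe mapping; here the precedence hypothesis that every $\C{f}\in\DD$ dominates every $c\in\KK$ supplies the order between the corresponding defined and constructor symbols, and Definitions~\ref{d:pop2} and~\ref{d:pop3}, being already stated for variadic signatures, transfer without change.

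With (i)--(iii) in place I would invoke the Main Theorem, in the form of Corollary~\ref{c:1} so as not to assume $\RS$ completely defined, to conclude that the normal runtime complexity of $\RS$ is polynomial. Since the application symbols carry an empty safe mapping, every argument position of a function call is normal, so the resulting bound on $\Dh{\RS}{\irew}(\Phi((\C{f}\ \V{s_1}\ \dots\ \V{s_n})))$ is a polynomial in $\sum_i \size{\Phi(s_i)}$, hence in $\sum_i \size{s_i}$. Transporting this along the isomorphism of~(ii) gives the claimed polynomial bound on $\Dh{\RSS}{\irew}((\C{f}\ \V{s_1}\ \dots\ \V{s_n}))$, and tracing Theorem~\ref{t:2} shows its degree depends only on the rank of $\C{f}$ and on the constant $k = 2\cdot\max\{\size{r}\mid l\rew r\in\RSS\}$.

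The hardest part will be the genuinely higher-order features that the clean picture above hides: a defined constant may occur as a value (a procedure used as data, as $\V{f}$ in the \MAP\ example) and an application may have a variable head, so whether an occurrence of $\circ$ builds a value or triggers a call is not decided by its top symbol alone. I expect the delicate step to be proving fact~(i) --- that $\Phi$ correctly and stably classifies bare defined constants and applications with non-constant heads as values versus redexes, and that this classification is invariant under the substitutions arising in innermost reduction. This is exactly the point at which the constructor-SRS restriction (every left-hand side is a defined constant applied to values) is indispensable, since it guarantees that every contracted redex has the shape on which $\Phi$ is designed to act.
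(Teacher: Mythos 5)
Your overall strategy---translate the SRS into a fixed-arity constructor TRS and transport Corollary~\ref{c:1} along an isomorphism of innermost derivations---is not the paper's route, and it has a genuine gap at the step you label (ii). The classification on which your translation $\Phi$ rests (defined-headed application $\mapsto$ $f_n(\dots)$, constructor-headed application $\mapsto$ fixed-arity constructor application, non-constant head $\mapsto$ $\mathsf{ap}(\dots)$) is determined by the head symbol of an application, and that head may be a \emph{variable} in a rewrite rule but a \emph{defined constant} in its instance. Consequently $\Phi$ does not commute with substitution: in the $\mathsf{map}$ rule the right-hand side contains the subterm $(f\ x)$ with variable head, so $\Phi$ sends it to $\mathsf{ap}(f,x)$, whereas for a substitution $\sigma$ with $f\sigma = \mathsf{g}$ a defined constant the instance $(\mathsf{g}\ x\sigma)$ is a redex that $\Phi$ would have to send to $\mathsf{g}_1(\Phi(x\sigma))$. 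The translated system instead produces the stuck term $\mathsf{ap}(\mathsf{g}',\Phi(x\sigma))$ (with $\mathsf{g}'$ your nullary constructor for $\mathsf{g}$-as-data) and cannot simulate the subsequent call to $\mathsf{g}$; the claimed step-for-step isomorphism fails on precisely the higher-order programs the corollary is meant to cover. The constructor-SRS restriction does not rescue this: it constrains only left-hand sides, while the offending variable-headed applications live in right-hand sides. Patching $\Phi$ with dispatch rules $\mathsf{ap}(\mathsf{g}',x_1,\dots,x_n)\to \mathsf{g}_n(x_1,\dots,x_n)$ would reintroduce all the work you hoped to avoid (orienting them, preserving the bound, re-establishing compatibility).

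The paper sidesteps the translation entirely. An SRS is \emph{already} a first-order TRS over $\mathcal{T}(\mathcal{C}\cup\{\circ\},\mathcal{V})$ whose single defined symbol is the variadic $\circ$, so Corollary~\ref{c:1} applies almost verbatim; the only adjustment needed is that the SRS notion of value admits defined constants (procedures are values) and thus does not coincide with the set of constructor terms, so the predicative interpretation of Definition~\ref{d:pi} must be redefined accordingly and Lemmata~\ref{l:3} and~\ref{l:4} re-checked for the modified interpretation. If you want a correct proof along your general lines, that is the adaptation to make: keep $\circ$ as the one defined symbol and do the work at the level of the interpretation rather than at the level of the signature.
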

\begin{proof}
It is important to note that the set of S-expressions $\SEXPRS$ equals
$\TA(\CC \cup \{\circ\},\VS)$, i.e., SRSs are \emph{first-order} rewrite
systems, whose single defined symbol is the variadic function symbol $\circ$.

Hence Theorem~\ref{t:srs} follows almost immediately from Corollary~\ref{c:1}.
However the fact that according to the above definition
values may contain defined symbol need to be taken into account. For that is
suffices to redefine Definition~\ref{d:pi} in the natural
way. It is not difficult to argue that suitable adaption of Lemmata~\ref{l:3} 
and~\ref{l:4} to SRSs are provable.
\end{proof}

\section{Conclusion} \label{flops08:Conclusion}

In this paper we have introduced a restriction of the multiset path order, called
\emph{polynomial path order} (\emph{\POPSTAR} for short). 
Our main result states that \POPSTAR\ induces 
polynomial runtime complexity. 
In Section~\ref{flops08:Experiments} we have provided evidence 
that our approach performs well in comparison to related methods.
In Section~\ref{Application} the necessary theory to
apply our main theorem in the context of (higher-order) 
functional languages with eager evaluations has been developed. 
In related work (together with Hirokawa and Middeldorp),
studying the termination behaviour and the runtime complexity 
of (a subclass of higher-order) Scheme programs, this basis
has proven quite useful, cf.~\cite{AHMM:2007}.

In concluding we also want to mention that 
as an easy corollary to our main theorem we obtain that 
\POPSTAR\ also characterises the polytime computable
functions. To be precise the polytime computable functions 
are exactly the functions computable by an orthogonal constructor 
TRS (based on a simple signature) compatible with \POPSTAR.
(Here \emph{simple} signature means that the size
of any constructor term depends linearly on its depth, 
an equivalent restriction is necessary in~\cite{Marion:2003}.) See~\cite{TR:2008}
for details.

In future work we will strengthen the applicability of our method. 
The experimental evidence presented in Section~\ref{flops08:Experiments} shows that 
compatibility of rewrite systems with \POPSTAR\ can be easily and quickly
tested. However, the strength of the method seems to be improvable. One 
possible field of future work is to extend \POPSTAR\ to quasi-precedences. The
theoretical changes necessary to accomodate quasi-precedences seem to be manageable.
Another natural extension is to combine \POPSTAR\ with the transformation
technique of semantic labeling, cf.~\cite{Zantema:1995}. 
It is easy to see that semantic labeling (in the basic form) does 
not affect the derivation length. Furthermore for \emph{finite} models 
the main theorem remains directly applicable. 

}

\chapter{Proving Quadratic Derivational Complexities using Context Dependent Interpretations}
\label{rta08}

\subsection*{Publication Details}
G.~Moser and A.~Schnabl.
\newblock Proving quadratic derivational complexities using context dependent
  interpretations.
\newblock In \emph{Proceedings of the 19th International Conference on Rewrite
  Technques and Applications}, number 5117 in LNCS, pages 276--290. Springer
  Verlag, 2008.%
\footnote{This research was partially supported by FWF (Austrian Science Fund) project P20133.}

\subsection*{Ranking}
The International Conference on Rewriting Techniques and Applications has been ranked
\textbf{A} by CORE in 2007.

\subsection*{Abstract}
In this paper we study \emph{context dependent interpretations}, a
semantic termination method extending interpretations over the natural numbers, 
introduced by Hofbauer. 
We present two subclasses of context dependent interpretations and
establish tight upper bounds on the induced derivational complexities. In
particular we delineate a class of interpretations that induces 
\emph{quadratic derivational complexity}. 
Furthermore, we present an algorithm for mechanically proving termination 
of rewrite systems with context dependent interpretations. 
This algorithm has been implemented and we present ample numerical data
for the assessment of the viability of the method.

{
\input{rta08.sty}

\section{Introduction}

In order to assess the complexity of a (terminating) term rewrite system (TRS for short) 
it is natural to look at the maximal length of derivation sequences, as
suggested by Hofbauer and Lautemann in~\cite{HofbauerLautemann:1989}.
To be precise, let $\RS$ denote a finitely branching and terminating TRS over
a finite signature.   
The \emph{derivational complexity function} 
with respect to $\RS$ (denoted as $\dc{\RS}$) 
relates the length of the longest derivation sequence to the size of the initial term. 
For \emph{direct} termination techniques it is often possible to infer an upper bound on
$\dc{\RS}(n)$ from the termination proof of $\RS$, cf.~\cite{HofbauerLautemann:1989,Hofbauer:1992,Weiermann:1995,Moser:2006lpar,GHWZ07}.
(Currently it is unknown how to estimate the \emph{derivational complexity} of 
a TRS $\RS$, if termination of $\RS$ has been shown via 
transformation methods like the dependency pair method or semantic labeling, but see~\cite{Moser:2006lpar,HM:2008}
for partial results in this direction.)
For example \emph{linear} derivational complexity can be
verified by the use of automata techniques: linear match-bounded TRSs 
induce linear derivational complexity, see~\cite{GHWZ07}. 
Unfortunately such a feasible growth rate is not typical. Already 
termination proofs by polynomial interpretations imply a double-exponential
upper bound on the derivational complexity, cf.~\cite{HofbauerLautemann:1989}.
In both cases the upper bounds are tight. 

However, the tightness of the mentioned bounds does not imply 
that the upper bounds are \emph{always} optimal.
In particular polynomial interpretations typically overestimate the derivational complexity.
In~\cite{Hofbauer:2001} Hofbauer introduced so-called \emph{context dependent interpretations} as a remedy. 
These interpretations extend traditional interpretations by introducing an
additional parameter. The parameter changes in the course of evaluating a term, 
which makes the interpretation dependent on the context. 
The crucial advantage is that context dependent interpretations typically improve the
induced bounds on the derivational complexity of TRSs. Furthermore this technique
allows the handling of non-simple terminating systems. 
(See~\cite{Hofbauer:2001} and Section~\ref{rta08:Preliminaries} for further details.)

In this paper, we establish theoretical and practical extensions of Hofbauer's approach. 
As theoretic contributions, we present two subclasses of context dependent interpretations,
i.e., we introduce \emph{$\Delta$-linear} and \emph{$\Delta$-restricted interpretations}. 
We show that $\Delta$-linear interpretations induce \emph{exponential} derivational complexity,
while $\Delta$-restricted interpretations induce \emph{quadratic} derivational complexity.
Furthermore, we provide examples showing that these bounds are tight. In~\cite{Hofbauer:2001} it
is shown that context dependent interpretations are expressive enough to show termination of TRSs
that are not simply terminating. 
We improve upon this and show that $\Delta$-restricted interpretations suffice here.
On the practical side, we design an algorithm that automatically searches for 
$\Delta$-linear interpretations and $\Delta$-restricted interpretations, which
shows that the technique can be mechanised.
This answers a question posed by Hofbauer in~\cite{Hofbauer:2001}.
The procedure has been implemented and we provide ample numerical data to 
assess its viability. TRSs with \emph{polynomial} derivational complexity appear to be of special
interest. Thus, we finally compare the applicability of our method 
to other termination techniques that also induce polynomial derivational complexity. 

\medskip
The remainder of this paper is organised as follows. In the next
section we recall basic notions and starting points of this paper. In
Section~\ref{Automation} we introduce the class of $\Delta$-linear interpretations and
describe the algorithm that mechanises the search for 
$\Delta$-linear and $\Delta$-restricted interpretations. 
In Section~\ref{rta08:Complexity}, we obtain the mentioned results on the 
derivational complexities induced by either of these interpretations. 
Furthermore, we show in this section that already $\Delta$-restricted interpretations allow the treatment of
non-simple terminating TRSs. 
Section~\ref{rta08:Experiments} provides experimental data and finally in Section~\ref{rta08:Conclusion}
we conclude and mention future work.

\section{Context Dependent Interpretations} \label{rta08:Preliminaries}

We assume familiarity with the basics of term rewriting, see \cite{BaaderNipkow:1998,Terese}.
Knowledge of context dependent interpretations~\cite{Hofbauer:2001} will be helpful.
Below we recall the basic results from the latter paper in a slightly different,
but equivalent way, compare~\cite{Hofbauer:2001,Schnabl:2007}. See~\cite{Hofbauer:2001} for the motivation
and intuition underlying the introduced concepts. 

Let $\FS$ be a finite signature, let $\VS$ be a set of variables 
and let $\RS$ denote a terminating TRS over $\FS$. The induced relation $\rsrew{\RS}$ is assumed to be 
finitely branching. We simply write $\to$ for $\to_{\RS}$ if $\RS$ is clear from context.
The \emph{derivation length} of a term $t$ with respect to $\RS$ 
is defined as follows: 
$\dl{\RS}(t) = \max\{ n \mid \exists u \; t \to^n u \}$. 
The \emph{derivational complexity} (with respect to $\RS$) 
is defined as: $\dc{\RS}(n) = \max\{\dl{\RS}(t) \mid \size{t} \leqslant n\}$,
where $\size{t}$ denotes the size of $t$, i.e., the number of symbols of $t$ as usual. 
(For example the size of the term $\mf(\ma,x)$ is $3$.)
We say the derivational complexity of $\RS$ is linear, quadratic, double-exponential, if
$\dc{\RS}(n)$ is bounded by a linear, quadratic, double-exponential function in $n$,
respectively. 
A \emph{context dependent $\FS$-algebra} (\emph{CDA} for short) 
$\C$ is a family of $\FS$-algebras over the reals parametrised by a set 
$D \subseteq \RPOS$ of positive reals. 
A CDA $\C$ associates to each function symbol $f\in\FS$ of arity $n$,
a collection of $n+1$ mappings: 
$f_{\C} \colon D \times (\RNNEG)^n  \to \RNNEG$ and $f^i_{\C} \colon D \to D$ for
all $1 \leqslant i \leqslant n$. As usual $f_{\C}$ is called \emph{interpretation
function}, while the mappings $f^i_{\C}$ are called \emph{parameter functions}.
In addition $\C$ is equipped with a 
set $\{>_{\Delta}\mid\Delta\in D\}$ of proper orders, 
where we define: $z >_{\Delta} z'$ if and only if $z - z' \geqslant \Delta$.

Let $\C$ be a CDA and let a \emph{$\Delta$-assignment} denote 
a mapping: $\alpha:D\times\VS \to \RNNEG$. 
We inductively define a mapping $[\alpha,\Delta]_{\C}$ from the set of terms into 
the set $\RNNEG$ of non-negative reals:
\begin{equation*}
[\alpha,\Delta]_{\C}(t) \defsym 
\begin{cases}
  \alpha(\Delta,t) & 
  \text{if $t \in \VS$} \\
  f_{\C}(\Delta,[\alpha,f_{\C}^1(\Delta)]_{\C}(t_1), \dots, [\alpha,f_{\C}^n(\Delta)]_{\C}(t_n)) &
  \text{if $t = f(t_1,\dots,t_n)$} \tpkt
\end{cases} 
\end{equation*}

We fix some notational conventions: Due to the special role of the additional variable
$\Delta$, we often write $f_{\C}[\Delta](z_1,\dots,z_n)$ 
instead of $f_{\C}(\Delta,z_1,\dots,z_n)$. 
Furthermore, we usually denote the evaluation of $t$ as 
$\eval{t}$, 
if the respective algebra is clear from context.

We say that a CDA $\C$ is \emph{\emph{$\Delta$}-monotone} if 
for all $\Delta \in D$ and for all $a_1,\ldots,a_n,b\in \RNNEG$ with $a_i >_{f_{\C}^i(\Delta)} b$
for some $i\in\{1,\ldots,n\}$, we have
\begin{equation*}
f_{\C}[\Delta](a_1,\ldots,a_i,\ldots,a_n) 
>_{\Delta} f_{\C}[\Delta](a_1,\ldots,b,\ldots,a_n) \tpkt
\end{equation*}
Note that if all interpretation functions $\cdifunc{f}{\Delta}$ are weakly
monotone with respect to the standard ordering on $\RNNEG$, then
validity of the inequalities
\begin{equation*}
f_{\C}[\Delta](z_1,\ldots,z_i+f^i_{\C}(\Delta),\ldots,z_n)-
f_{\C}[\Delta](z_1,\ldots,z_i,\ldots,z_n)\geqslant\Delta \tkom
\end{equation*}
suffices in order to conclude $\Delta$-monotonicity of $\C$, cf.~\cite{Hofbauer:2001}.

A CDA $\C$ is \emph{compatible} with a TRS $\RS$ (or 
$\RS$ is compatible with $\C$) if for every rewrite rule $l\rew r\in\RS$, 
every $\Delta \in D$, and any assignment $\alpha$:
$[\alpha,\Delta](l) >_{\Delta} [\alpha,\Delta](r)$ holds.

\begin{example}[\cite{Hofbauer:2001}] \label{ex:rta08:1}
As running example, we consider the TRS $\RSa$ with the single rewrite rule
$\ma(\mb(x))\rew\mb(\ma(x))$. We assume $D = \RPOS$.
The following interpretation and parameter functions 
\begin{align*}
\cdifunc{\ma}{\Delta}(z)&=(1+\Delta)z & \cdictx{\ma}{1}(\Delta)&=\frac{\Delta}{1+\Delta}\\
\cdifunc{\mb}{\Delta}(z)&=z+1 & \cdictx{\mb}{1}(\Delta)&=\Delta \tkom
\end{align*}
define a CDA $\C$ that  is $\Delta$-monotone and
compatible with $\RSa$, compare~\cite{Hofbauer:2001}.
\end{example}

\begin{theorem}[\cite{Hofbauer:2001}] \label{t:cdi}
  Let $\RS$ be a TRS and suppose that there exists a $\Delta$-monotone and compatible 
CDA $\C$. Then $\RS$ is terminating and 
\begin{equation} \label{eq:cdi}
  \dl{\RS}(t)\leqslant\inf_{\Delta\in D}\frac{\eval{t}}{\Delta}
\end{equation}
holds for all terms $t \in \TA(\FS,\VS)$.
\end{theorem}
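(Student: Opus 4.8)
The plan is to establish the two assertions of Theorem~\ref{t:cdi} --- termination of $\RS$ together with the bound~\eqref{eq:cdi} on the derivation length --- simultaneously, by showing that a single rewrite step strictly decreases the evaluation of a term by at least $\Delta$, uniformly for every $\Delta \in D$. Concretely, the central claim I would prove is the following \emph{step lemma}: if $s \rsrew{\RS} t$, then for every $\Delta \in D$ and every $\Delta$-assignment $\alpha$ we have $[\alpha,\Delta](s) >_{\Delta} [\alpha,\Delta](t)$, i.e.\ $\eval{s} - \eval{t} \geqslant \Delta$. Once this is in hand, both conclusions follow at once: a chain $t = t_0 \rsrew{\RS} t_1 \rsrew{\RS} \cdots \rsrew{\RS} t_n$ forces $\eval{t_0} \geqslant \eval{t_n} + n\Delta \geqslant n\Delta$, so $n \leqslant \eval{t}/\Delta$; since the values are non-negative reals and each step drops by a fixed positive amount $\Delta$, no infinite derivation can exist (termination), and taking the infimum over $\Delta \in D$ yields $\dl{\RS}(t) \leqslant \inf_{\Delta\in D} \eval{t}/\Delta$.

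First I would prove the step lemma in the base case where the rewrite occurs at the root, i.e.\ $s = l\sigma$ and $t = r\sigma$ for some rule $l \rew r \in \RS$ and substitution $\sigma$. Here the key observation is that evaluating an instance $l\sigma$ under the $\Delta$-assignment $\alpha$ coincides with evaluating $l$ under a modified assignment $\alpha'$ that reads off the values $\eval{x\sigma}$ at the appropriate parameter for each variable $x$; this is essentially a substitution-compatibility statement for the evaluation map, which I would record as an auxiliary lemma proved by induction on the term structure, tracking how the parameter $\Delta$ is transformed by the parameter functions $f^i_{\C}$ as one descends into subterms. Granting this, compatibility of $\C$ with $\RS$ --- which by definition states $[\alpha,\Delta](l) >_{\Delta} [\alpha,\Delta](r)$ for \emph{every} assignment and every $\Delta$ --- applies directly with the assignment $\alpha'$, giving the root case.

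Next I would lift the root case to an arbitrary position by induction on the context $C$ surrounding the redex, writing $s = C[l\sigma]$ and $t = C[r\sigma]$. The nontrivial point is propagating the $\Delta$-gap upward through one layer of context: if the redex sits in the $i$-th argument of a symbol $f$, then by the definition of the evaluation map the relevant comparison at the $f$-node is between $f_{\C}[\Delta](\dots, a_i, \dots)$ and $f_{\C}[\Delta](\dots, b_i, \dots)$, where $a_i$ and $b_i$ are the evaluations of the $i$-th arguments at the \emph{shifted} parameter $f^i_{\C}(\Delta)$. The induction hypothesis supplies $a_i >_{f^i_{\C}(\Delta)} b_i$, and precisely here I would invoke $\Delta$-monotonicity of $\C$, whose defining inequality converts a gap of at least $f^i_{\C}(\Delta)$ in the $i$-th argument into a gap of at least $\Delta$ in the value of $f_{\C}[\Delta]$. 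Iterating up the context chains these one-step propagations together to conclude $\eval{s} >_{\Delta} \eval{t}$.

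The main obstacle I anticipate is the careful bookkeeping in the substitution lemma and in the context step: the parameter $\Delta$ is not fixed but is rewritten by the parameter functions as one moves through the term, so one must be scrupulous that the monotonicity inequality is applied at exactly the shifted parameter $f^i_{\C}(\Delta)$ rather than at $\Delta$, and that the inductive comparison is stated relative to the correct order $>_{f^i_{\C}(\Delta)}$. Getting the quantifier structure right --- that compatibility and $\Delta$-monotonicity are assumed for \emph{all} $\Delta \in D$, which is exactly what makes the induction on contexts go through at the shifted parameters --- is the delicate part; the arithmetic itself (subtracting, comparing reals, summing over the derivation) is routine once the step lemma is secured.
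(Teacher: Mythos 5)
Your proposal is correct. The paper does not actually prove this theorem---it is imported verbatim from Hofbauer's work via the citation---and your argument is precisely the standard one used there: a step lemma showing every rewrite step decreases the evaluation by at least $\Delta$ (root case via a substitution-compatible reassignment of the $\Delta$-parametrised assignment, context case via $\Delta$-monotonicity applied at the shifted parameter $f^i_{\C}(\Delta)$), followed by telescoping over a derivation and taking the infimum over $\Delta \in D$.
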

The next example clarifies the impact of Theorem~\ref{t:cdi}, compare~\cite{Hofbauer:2001}.
\begin{example} \label{ex:1:dh}
  Consider the TRS $\RSa$ together with the CDA $\C$ in Example~\ref{ex:rta08:1}. 
Suppose $\mc \in \FS$ is a constant and $\cdifunc{\mc}{\Delta} =0$.
We assert $D = \RPOS$. Then we obtain $\eval{\ma^n(\mb^m(\mc))} = (1+\Delta n) m$ 
and hence:
\begin{equation*}
  \inf_{\Delta > 0} \frac{\eval{\ma^n(\mb^m(\mc))}}{\Delta} = 
    \inf_{\Delta > 0} \big( \frac{1}{\Delta} + n \big)m = nm \geqslant \dl{\RSa}(\ma^n(\mb^m(\mc))) \tpkt
\end{equation*}

Furthermore, an easy inductive argument reveals:
$\dl{\RSa}(\ma^n(\mb^m(\mc))) = nm$. 
Hence with respect to the term $\ma^n(\mb^m(\mc))$, compatibility with $\C$
entails an optimal upper bound on the derivation length of $\RSa$. 
This is also true for all ground terms. A proof of 
$\inf_{\Delta > 0} \frac{\eval{t}}{\Delta} = \dl{\RSa}(t)$ for all $t \in \TA(\FS)$
can be found in~\cite{Hofbauer:2001}. 
\end{example} 

\begin{definition} \label{d:delta-quotient}
A \emph{$\Delta$-quotient} is an expression of the form 
\begin{equation*}
 \frac{\Delta}{a+b\Delta} \tkom
\end{equation*}
where $a, b \in \N$ and either $a>0$ or $b>0$. A $\Delta$-quotient $d$
is \emph{nontrivial}, if $d \not= \Delta$.
\end{definition}

\begin{lemma} \label{l:Delta-quotient}
Let $d_1$, $d_2$ be $\Delta$-quotients and let $d = d_1[\Delta \defsym d_2]$ 
denote the result of substituting  $d_2$ for $\Delta$ in $d_1$. Then $d$ is
a $\Delta$-quotient.
\end{lemma}

As usual a \emph{polynomial} $P$ in the variables $z_1,\dots,z_n$ (over the
reals) is a finite sum $\sum_{i=1}^{m} c_i z_1^{i_1} \dots z_n^{i_n}$. 
To accommodate $\Delta$-quotients we slightly generalise polynomials.

\begin{definition}
An \emph{extended monomial} $M$ in the variables $\Delta$ and $z_1,\dots,z_n$ 
is a finite product $c\cdot\prod_{i}v_i$ such that
$c$ is an integer and $v_i$ is $x^n$, $x \in \{\Delta,z_1,\dots,z_n\}$ or
$v_i$ is a $\Delta$-quotient. 
The integer $c$ is called the \emph{coefficient} and the expression
$v_i$ a \emph{literal}.
Finally, an \emph{extended polynomial} $P$ over $\Delta \in D$ and $z_1,\dots,z_n \in \RNNEG$ 
is a finite sum $\sum_{i} M_i$ of extended monomials $M_i$ (in $\Delta$ and $z_1,\dots,z_n$).
\end{definition}
Note that the coefficients of an extended polynomial are integers. 
If the context clarifies what is meant, we will drop the qualifier ``extended''.
Examples~\ref{ex:rta08:1} and~\ref{ex:1:dh} as well as the examples
studied in~\cite{Hofbauer:2001} suggest a restricted notion of context dependent algebras.
This is the subject of the next definition.
\begin{definition} \label{d:pcdi}
A \emph{polynomial context dependent interpretation} of $\FS$ is a CDA
$(\C,\{>_{\Delta}\mid\Delta\in D\})$ satisfying the following
properties:
\begin{itemize}
\item the interpretation function $f_{\C}$ is an extended polynomial,
\item the parameter set $D$ equals $\RPOS$, and
\item for each $f \in\FS$ the parameter functions $f^i_{\C}$ are $\Delta$-quotients.
\end{itemize}
\end{definition}

\begin{lemma} \label{l:eval}
  Let $\C$ denote a polynomial context dependent interpretation, let
$\alpha$ be a $\Delta$-assignment, and let $t$ be a term. Then
$\eval{t}$ is an extended polynomial.
\end{lemma}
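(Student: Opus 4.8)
The plan is to prove Lemma~\ref{l:eval} by structural induction on the term $t$, following the recursive definition of the evaluation mapping $[\alpha,\Delta]_{\C}$. The statement to establish is that $\eval{t}$ is an extended polynomial; since the variables of an extended polynomial are $\Delta$ together with $z_1,\dots,z_n$, I read the claim as asserting that $\eval{t}$ is an extended polynomial in $\Delta$ and in the variables occurring in $t$ (with the $\Delta$-assignment $\alpha$ supplying extended-polynomial values for those variables). The key structural facts I will need are already available: by Definition~\ref{d:pcdi} each interpretation function $f_{\C}$ is an extended polynomial and each parameter function $f^i_{\C}$ is a $\Delta$-quotient, and by Lemma~\ref{l:Delta-quotient} the class of $\Delta$-quotients is closed under substitution of one $\Delta$-quotient for $\Delta$ in another.

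First I would treat the base case $t \in \VS$: here $\eval{t} = \alpha(\Delta,t)$, which is an extended polynomial by the hypothesis on the $\Delta$-assignment. For the induction step, let $t = f(t_1,\dots,t_n)$. By definition,
\begin{equation*}
\eval{t} = f_{\C}(\Delta,[\alpha,f_{\C}^1(\Delta)]_{\C}(t_1),\dots,[\alpha,f_{\C}^n(\Delta)]_{\C}(t_n)) \tpkt
\end{equation*}
The crucial observation is that each argument $[\alpha,f_{\C}^i(\Delta)]_{\C}(t_i)$ is obtained from the inductive value $[\alpha,\Delta]_{\C}(t_i)$ by substituting the $\Delta$-quotient $f_{\C}^i(\Delta)$ for the parameter $\Delta$. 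By the induction hypothesis $[\alpha,\Delta]_{\C}(t_i)$ is an extended polynomial in $\Delta$, so I must argue that substituting a $\Delta$-quotient for $\Delta$ in an extended polynomial yields again an extended polynomial. This reduces, literal by literal, to two cases: a power $\Delta^k$ becomes a power of a $\Delta$-quotient, and a $\Delta$-quotient literal becomes the result of substituting $f_{\C}^i(\Delta)$ into a $\Delta$-quotient, which is again a $\Delta$-quotient by Lemma~\ref{l:Delta-quotient}. Finitely many such factors and sums are then recombined, and since the outer $f_{\C}$ is itself an extended polynomial in its arguments, composing an extended polynomial with extended-polynomial arguments produces an extended polynomial; this uses only that extended polynomials are closed under finite sums and products, which is immediate from the definition.

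The main obstacle, and the point that deserves care rather than hand-waving, is the closure of the extended-polynomial class under the substitution $\Delta \mapsto f^i_{\C}(\Delta)$ and under the subsequent composition with $f_{\C}$. A power $(\Delta/(a+b\Delta))^k$ is not literally of the form $x^n$ nor a single $\Delta$-quotient, so I must verify that it can be rewritten as a finite product of admissible literals (a $\Delta$-quotient raised to an integer power expands, upon clearing, into a product of $\Delta$-quotient literals, or one extends the notion of literal to allow such powers exactly as integer powers of the base cases already permitted). I would therefore record the precise closure properties needed as a short auxiliary claim: extended polynomials are closed under $(i)$ sums and products and $(ii)$ substitution of a $\Delta$-quotient for $\Delta$. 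Granting these two closure properties, the induction goes through routinely; isolating and justifying them cleanly against Definition~\ref{d:pcdi} and Lemma~\ref{l:Delta-quotient} is where the real work lies.
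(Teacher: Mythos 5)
Your proof is correct and takes essentially the same route as the paper, which simply records the lemma as ``a direct consequence of the definitions and Lemma~\ref{l:Delta-quotient}''; your structural induction, together with the closure of extended polynomials under sums, products, and the substitution $\Delta \mapsto f^i_{\C}(\Delta)$, is exactly the argument the paper leaves implicit. Your care about powers of $\Delta$-quotients is well placed but harmless, since such a power is just a finite product of $\Delta$-quotient literals and hence already an admissible extended monomial.
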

\begin{proof}
  The lemma is a direct consequence of the definitions and
Lemma~\ref{l:Delta-quotient}. 
\end{proof}

\begin{remark}
Hofbauer showed in \cite{Hofbauer:2001} that for any monotone polynomial interpretation 
compatible with a TRS $\RS$, there exists a polynomial context dependent interpretation which is
$\Delta$-monotone and compatible with $\RS$ and induces at least the
same upper bound on the derivational complexity as the polynomial interpretation.
\end{remark}

\section{Automated Search for Context Dependent Interpretations}
\label{Automation}

One approach to find context dependent interpretations
(semi-)automatically was already mentioned in Hofbauer's paper
\cite{Hofbauer:2001}. 
A given polynomial interpretation is suitably lifted to
a context dependent interpretation 
such that monotonicity and compatibility are preserved, but
the upper bound on the derivational complexity is often improved. 
Unfortunately, experimental evidence suggests that the applicability of this
heuristics is limited, if one is interested in automatically
finding complexity bounds, see Section~\ref{rta08:Experiments} for further details.
However, the standard approach for automatically
proving termination via polynomial interpretations as stipulated by
Contejean et al.~\cite{contejean:2005} can be adapted.
The description of this adaption is the topic of this section.
We restrict the form of parametric interpretations that we consider.
\begin{definition} \label{d:delta-linear}
A \emph{(parametric) $\Delta$-linear interpretation} 
is a polynomial context dependent interpretation $\C$ whose 
interpretation functions and parameter functions have the following form:
\begin{align*}
f_{\C}(\Delta,\seq{z}) &= \sum_{i=1}^n a_{(f,i)} z_i + \sum_{i=1}^n b_{(f,i)} 
z_i\Delta + c_f\Delta+ d_f \\
\cdictx{f}{i}(\Delta) &= \frac{\Delta}{a_{(f,i)}+b_{(f,i)}\Delta} 
\end{align*}
where the occurring coefficients are supposed to be natural numbers.
For a \emph{parametric} $\Delta$-linear interpretation, 
$a_{(f,i)}$, $b_{(f,i)}$, $c_f$, and $d_f$ ($f\in\FS$, $1 \leqslant i \leqslant n$)
are called \emph{coefficient variables}.
\end{definition}
Note that for any $\Delta$-linear interpretation, we have 
$a_{(f,i)} > 0$ or $b_{(f,i)} > 0$ ($f\in\FS$, $1 \leqslant i \leqslant n$):
Any $\Delta$-linear interpretation is a
polynomial context dependent interpretation by
definition. And hence the parameter functions
have to be $\Delta$-quotients, cf.~Definition~\ref{d:pcdi}.
Moreover the coefficients $a_{(f,i)}$, $b_{(f,i)}$ 
are used in the interpretation function \emph{and} the parameter functions. 
This is necessary for the correctness of Lemma~\ref{l:monotone} below. 
\begin{example} \label{ex:1:para}
Consider the TRS $\RSa$ from Example~\ref{ex:rta08:1}. 
The parametric interpretation and parameter functions have the form:
\begin{align*}
\ma_{\C}[\Delta](z)&=az +bz \Delta + c\Delta + d
&\ma_{\C}^1(\Delta)&=\frac{\Delta}{a+b\Delta} \\
\mb_{\C}[\Delta](z)&= ez + f z \Delta + g\Delta + h
&\mb_{\C}^1(\Delta)&=\frac{\Delta}{e + f\Delta} \tpkt
\end{align*}
\end{example}

The following lemma is a direct consequence of the definitions.
\begin{lemma} \label{l:monotone}
Let $\C$ be an $\Delta$-linear interpretation.
Then $\C$ is $\Delta$-monotone.
\end{lemma}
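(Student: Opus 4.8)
The plan is to reduce $\Delta$-monotonicity of a $\Delta$-linear interpretation $\C$ to the simpler criterion recorded immediately after the definition of $\Delta$-monotonicity in Section~\ref{rta08:Preliminaries}: if every interpretation function $f_{\C}[\Delta]$ is weakly monotone with respect to the standard order on $\RNNEG$, then it suffices to verify, for each $f\in\FS$ of arity $n$ and each position $i\in\{1,\ldots,n\}$, the single inequality
\begin{equation*}
f_{\C}[\Delta](z_1,\ldots,z_i+f_{\C}^i(\Delta),\ldots,z_n)-f_{\C}[\Delta](z_1,\ldots,z_i,\ldots,z_n)\geqslant\Delta \tpkt
\end{equation*}
So first I would establish weak monotonicity, and then compute this increment explicitly, invoking the precise shape of the interpretation and parameter functions from Definition~\ref{d:delta-linear}.

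For weak monotonicity: since $\C$ is $\Delta$-linear, its interpretation functions have the form $f_{\C}[\Delta](z_1,\ldots,z_n)=\sum_{j=1}^n a_{(f,j)}z_j+\sum_{j=1}^n b_{(f,j)}z_j\Delta+c_f\Delta+d_f$, where all coefficient variables are instantiated to natural numbers. As $\C$ is a polynomial context dependent interpretation we have $D=\RPOS$, hence $\Delta>0$, so the coefficient $a_{(f,i)}+b_{(f,i)}\Delta$ of each argument $z_i$ is nonnegative. Thus $f_{\C}[\Delta]$ is weakly increasing in every argument, which is exactly the first hypothesis of the criterion.

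For the increment: the only summands of $f_{\C}[\Delta]$ that depend on $z_i$ are $a_{(f,i)}z_i$ and $b_{(f,i)}z_i\Delta$, so replacing $z_i$ by $z_i+f_{\C}^i(\Delta)$ changes the value by exactly $(a_{(f,i)}+b_{(f,i)}\Delta)\cdot f_{\C}^i(\Delta)$. By Definition~\ref{d:delta-linear} the parameter function is $f_{\C}^i(\Delta)=\frac{\Delta}{a_{(f,i)}+b_{(f,i)}\Delta}$, whence this increment equals $(a_{(f,i)}+b_{(f,i)}\Delta)\cdot\frac{\Delta}{a_{(f,i)}+b_{(f,i)}\Delta}=\Delta$, which is in particular $\geqslant\Delta$.

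The one point demanding care — and precisely the reason, flagged in the remark preceding the lemma, that the same coefficients $a_{(f,i)},b_{(f,i)}$ occur in both the interpretation function and the parameter function — is the legitimacy of this cancellation: the denominator $a_{(f,i)}+b_{(f,i)}\Delta$ must be strictly positive. This follows because the parameter functions of a $\Delta$-linear interpretation are $\Delta$-quotients, so by Definition~\ref{d:delta-quotient} one of $a_{(f,i)}>0$ or $b_{(f,i)}>0$ holds, and together with $\Delta>0$ this gives $a_{(f,i)}+b_{(f,i)}\Delta>0$. I expect this positivity bookkeeping, rather than any genuine computation, to be the only real obstacle. With both hypotheses of the criterion verified, the conclusion is immediate: whenever $a_i-b\geqslant f_{\C}^i(\Delta)$, weak monotonicity together with the exact increment $\Delta$ yields
\begin{equation*}
f_{\C}[\Delta](\ldots,a_i,\ldots)\;\geqslant\;f_{\C}[\Delta](\ldots,b+f_{\C}^i(\Delta),\ldots)\;\geqslant\;f_{\C}[\Delta](\ldots,b,\ldots)+\Delta \tkom
\end{equation*}
that is $f_{\C}[\Delta](\ldots,a_i,\ldots)>_{\Delta}f_{\C}[\Delta](\ldots,b,\ldots)$, so $\C$ is $\Delta$-monotone.
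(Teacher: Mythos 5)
Your proof is correct and follows exactly the route the paper intends: the paper states this lemma without proof ("a direct consequence of the definitions"), and your argument fills in those details using precisely the sufficiency criterion the paper records after the definition of $\Delta$-monotonicity (weak monotonicity plus the increment inequality) together with the observation, made right after Definition~\ref{d:delta-linear}, that the shared coefficients force the telescoping product $(a_{(f,i)}+b_{(f,i)}\Delta)\cdot f_{\C}^i(\Delta)=\Delta$ and that $a_{(f,i)}>0$ or $b_{(f,i)}>0$ keeps the denominator positive. Nothing is missing.
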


Due to Lemma~\ref{l:monotone}, in order to prove termination
of a given TRS $\RS$, it suffices to find a $\Delta$-linear interpretation 
compatible with $\RS$. This observation is reflected in the following definition.
\begin{definition}
\label{ruleconstraintsdef}
Let $\RS$ be a TRS and let $\C$ be a parametric $\Delta$-linear interpretation. 
The \emph{compatibility constraints} of $\RS$ with respect to $\C$ are
defined as
\begin{align*}
  \CC(\RS,\C) = & \{\eval{l}- \eval{r}-\Delta \geqslant 0 \mid l\rew r\in\RS\} \cup \\
  & \cup \{a_{(f,i)}+b_{(f,i)}-1\geqslant0 \mid f\in\FS, 1 \leqslant i \leqslant \arity(f)\} \tpkt
\end{align*}
Here $\arity(f)$ denotes the arity of $f$ and $\alpha$ refers to
a \emph{symbolic $\Delta$-assignment}: Expressions of the
form $\eval{x}$ for $x \in \VS$ remain unevaluated.
\end{definition}

While the first half of $\CC(\RS,\C)$ represents compatibility with
$\RS$, the second set of constraints guarantees that the
denominators of the occurring $\Delta$-quotients are different from $0$. 
Thus any solution to $\CC(\RS,\C)$, instantiating coefficients with natural
numbers, represents a polynomial context dependent interpretation compatible with $\RS$.
\begin{example} \label{ex:1:constraint}
Consider the (parametric) CDA $\C$ from Example~\ref{ex:1:para} and 
set $\Delta_1 = \ma_{\C}^1(\Delta)$ and $\Delta_2 = \mb_{\C}^1(\Delta)$.
Let $\alpha_1 = [\alpha,\Delta_2[\Delta \defsym \Delta_1]](x)$ and 
let $\alpha_2 = [\alpha,\Delta_1[\Delta \defsym \Delta_2]](x)$. 
Then the constraint $\eval{\ma(\mb(x))} - \eval{\mb(\ma(x))} - \Delta \geqslant 0$ becomes:
\begin{gather*}
\bigl( ae \alpha_1 + af \alpha_1 \Delta_1 + ag \Delta_1 + 
 be \alpha_1 \Delta + bf \alpha_1 \Delta_1 \Delta + bg \Delta_1 \Delta + (bh + c) \Delta + \\
+ ah + d \bigr) - \bigl( ae \alpha_2 + be \alpha_2 \Delta_2 + ce \Delta_2 + 
af \alpha_2 \Delta + bf \alpha_2 \Delta_2 \Delta + cf \Delta_2 \Delta + \\
+ (df + g) \Delta + de + h \bigr) - \Delta \geqslant 0 \tpkt
\end{gather*} 
\end{example}

For all constraints $(P \geqslant 0) \in \CC(\RS,\C)$, 
$P$ is an extended polynomial, cf.~Lemma~\ref{l:eval}. 
It is easy to see how an extended polynomial
(over $\Delta,z_1,\dots,z_n$) is transferable into a (standard) polynomial 
(over $\Delta,z_1,\dots,z_n$): Multiply (symbolically) with
denominators of (nontrivial) $\Delta$-quotients till all (nontrivial)
$\Delta$-quotients are eliminated. This simple procedure is denoted as $\Alg$.
Correctness and termination of the procedure follow trivially.

\begin{definition} \label{d:pcc}
Let $\RS$ be a TRS and let $\C$ be a parametric $\Delta$-linear interpretation. 
The \emph{polynomial compatibility constraints} of $\RS$ with respect to $\C$ are
defined as follows:
$\PCC(\RS,\C) \defsym \{ P' \geqslant 0 \mid \text{$P \geqslant 0 \in \CC(\RS,\C)$ and
    $P' \defsym \Alg(P)$}\}$.
\end{definition}

\begin{example} \label{ex:1:pcc}
Consider the constraint $P \geqslant 0$ depicted in 
Example~\ref{ex:1:constraint}. 
To apply the algorithm $\Alg$ we first have to symbolically multiply with the
expression $a + b\Delta$ and later with $e + f\Delta$. 
The resulting constraint $P' \geqslant 0$
(with the polynomial $P'$ in the ``variables'' $\Delta$, $\alpha_1$, and $\alpha_2$) has the form:
\begin{gather*}
\bigl((b^2ef + bf^2) \alpha_1 \Delta^3 + (2abef + af^2 + b^2e^2 + bef) \alpha_1 \Delta^2 \\
+ (2abe^2 + a^2ef + aef) \alpha_1 \Delta + (a^2 e^2) \alpha_1 \bigr) \\
-  \bigl( (abf^2 + b^2f) \alpha_2 \Delta^3 + (a^2f^2 + 2abef + abf + b^2e) \alpha_2 \Delta^2 \\
+  (2a^2ef + abe^2 + aeb) \alpha_2 \Delta + (a^2 e^2) \alpha_2 \bigr) 
+ \bigl( (b^2fh - bdf^2 - bf) \Delta^3  \\
+  (2abfh + b^2eh +bdf - adf^2 - 2bdef - bfh -be - af) \Delta^2 \\
+  (a^2fh + 2abeh + adf + bde - 2adef - afh - bde^2 - beh - ae)) \Delta \\
+  (a^2eh + ade - ade^2 - aeh) \bigr) \geqslant 0 \tpkt
\end{gather*}
We obtain $\PCC(\RSa,\C) = \{P' \geqslant 0, a+b-1 \geqslant 0, e + f- 1 \geqslant 0 \}$,
where the last two constraints reflect that all denominators of $\Delta$-quotients 
are non-zero.
\end{example}

Let $P \geqslant 0$ be a constraint in $\PCC(\RS,\C)$ such that
$n$ distinct symbolic assignments $[\alpha,d](x)$ occur in $P$
($x \in \VS$, $d$ a $\Delta$-quotient). (In Example~\ref{ex:1:pcc} two
symbolic assignments occur: $\alpha_1$ and $\alpha_2$.)
Then $P$ is conceivable as a polynomial in $\Z[\Delta,z_1,\dots,z_n]$. 
It remains to verify 
that (a suitable instance of) $P$ is \emph{positive}, i.e., we have to prove that 
$P(\Delta,z_1,\dots,z_n) \geqslant 0$ for any values $\Delta > 0$, $z_i \geqslant 0$. 
This is achieved by testing for absolute positivity instead of positivity, compare~\cite{contejean:2005}.

A polynomial $P$ is \emph{absolutely positive} if $P$ has non-negative
coefficients only. A parametric polynomial $P$ is called \emph{absolutely positive}
if there exists an instance $P'$ of $P$ such that $P'$ is absolutely positive.
Clearly any absolutely positive polynomial is positive. Thus for a given 
constraint $P \geqslant 0 \in \PCC(\RS,\C)$ it suffices to find instantiations
of the coefficient variables such that all coefficients are natural numbers. 
This is achieved through the construction of suitable Diophantine inequalities over the
coefficients. 

\begin{lemma} \label{l:compatible}
  Let $\RS$ be a TRS and let $\C$ denote a parametric $\Delta$-linear interpretation. 
If for all $P \geqslant 0 \in \PCC(\RS,\C)$, $P$ is absolutely positive then
there exists an instantiation of $\C$ compatible with $\RS$.
\end{lemma}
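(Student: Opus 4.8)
The plan is to convert a common solution of the Diophantine inequalities encoding absolute positivity into a concrete instance of $\C$, and then to descend from the cleared-denominator polynomials in $\PCC(\RS,\C)$ back to the genuine compatibility constraints in $\CC(\RS,\C)$. Concretely, I would first fix an instantiation $\theta$ of the coefficient variables $a_{(f,i)},b_{(f,i)},c_f,d_f$ by natural numbers that simultaneously witnesses the absolute positivity of every member of $\PCC(\RS,\C)$; this is exactly what a common solution of the associated Diophantine inequalities over the coefficients provides, and under $\theta$ each $P\geqslant 0\in\PCC(\RS,\C)$ becomes a polynomial with only non-negative coefficients. Writing $\C'\defsym\C\theta$ for the resulting concrete CDA, I would check before anything else that $\C'$ is a bona fide polynomial context dependent interpretation in the sense of Definition~\ref{d:pcdi}: its interpretation functions are extended polynomials by construction, and the denominator constraints $a_{(f,i)}+b_{(f,i)}-1\geqslant 0$ (which lie in $\CC(\RS,\C)$ and are left untouched by $\Alg$) give $a_{(f,i)}\theta+b_{(f,i)}\theta\geqslant 1$, so that every base parameter function $f^i_{\C'}(\Delta)=\frac{\Delta}{a_{(f,i)}\theta+b_{(f,i)}\theta\,\Delta}$ has a strictly positive denominator for $\Delta>0$ and maps $D=\RPOS$ into $D$. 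Since $\C'$ is $\Delta$-linear, Lemma~\ref{l:monotone} supplies $\Delta$-monotonicity for free.

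The heart of the argument is compatibility. Fixing a rule $l\rew r\in\RS$, I set $P\defsym\eval{l}-\eval{r}-\Delta$, which by Lemma~\ref{l:eval} is an extended polynomial in $\Delta$ and in the finitely many pairwise distinct symbolic assignments $[\alpha,d](x)$; I treat the latter as independent non-negative variables $z_1,\dots,z_m$ (this over-approximation is exactly what makes the absolute-positivity test a sufficient condition). The procedure $\Alg$ then produces $P'\defsym\Alg(P)$ by multiplying $P$ through with the denominators of its nontrivial $\Delta$-quotients until all quotients vanish, so that $P'=P\cdot Q$, where $Q$ is a finite product of powers of those denominators. The key observation I would establish is that $Q\theta$ is strictly positive for every $\Delta>0$: each factor of $Q$ is the denominator of a $\Delta$-quotient occurring in $\eval{l}$ or $\eval{r}$; every such quotient really is a $\Delta$-quotient, since by Lemma~\ref{l:Delta-quotient} the quotients obtained by composing parameter functions remain $\Delta$-quotients; and by Definition~\ref{d:delta-quotient} a $\Delta$-quotient denominator $a+b\Delta$ satisfies $a+b\geqslant 1$, hence is positive for $\Delta>0$. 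Absolute positivity of $P'\theta$ then yields $P'\theta\geqslant 0$ for all $\Delta>0$ and all $z_j\geqslant 0$, and dividing by $Q\theta>0$ gives $P\theta\geqslant 0$ on the same domain.

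Finally I would specialise this inequality to actual data. For any assignment $\alpha\colon D\times\VS\to\RNNEG$ and any $\Delta\in\RPOS$, each symbolic variable $z_j=[\alpha,d](x)$ takes a value in $\RNNEG$, so $P\theta\geqslant 0$ reads $[\alpha,\Delta]_{\C'}(l)-[\alpha,\Delta]_{\C'}(r)-\Delta\geqslant 0$, i.e.\ $[\alpha,\Delta]_{\C'}(l)>_{\Delta}[\alpha,\Delta]_{\C'}(r)$. As $l\rew r$, $\alpha$ and $\Delta$ were arbitrary, $\C'$ is compatible with $\RS$, and being an instance of $\C$ it is the required interpretation (so that Theorem~\ref{t:cdi} becomes applicable to it). The main obstacle I anticipate is the bookkeeping around $\Alg$: making precise that clearing denominators genuinely multiplies $P$ by a single positive factor $Q$ rather than altering it in a sign-sensitive way, and that all denominators arising from iterated composition of parameter functions remain strictly positive. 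This is exactly the place where Lemma~\ref{l:Delta-quotient} together with the $a+b\geqslant 1$ convention of Definition~\ref{d:delta-quotient} does the decisive work, and where the distinction between positivity and the stronger, mechanically checkable notion of absolute positivity must be handled with care.
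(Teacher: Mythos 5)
Your proposal is correct and follows essentially the same route as the paper's own (very terse) proof: instantiate the coefficient variables by a common solution witnessing absolute positivity, observe that absolute positivity implies positivity of each $P'$, and conclude that the original constraints in $\CC(\RS,\C)$ — and hence compatibility — hold. The only difference is that you explicitly justify the step the paper leaves implicit, namely that $P'=\Alg(P)$ differs from $P$ by a factor $Q$ that is strictly positive for $\Delta>0$ (thanks to the denominator constraints $a_{(f,i)}+b_{(f,i)}\geqslant 1$ and Lemma~\ref{l:Delta-quotient}), so positivity of $P'$ descends to positivity of $P$.
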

\begin{proof}
If $P$ is absolutely positive, there exist natural numbers that can be substituted to the
coefficient variables in $P$ such that the resulting polynomial $P'$ is
absolutely positive and thus positive. By definition this implies that the
constraints in $\CC(\RS,\VS)$ are fulfilled. 
We define an instantiation $\C'$ of $\C$ by applying the same 
substitution to the coefficient variables in $\C$. 
Then $\C'$ is compatible with $\RS$.
\end{proof}

As an immediate consequence of Lemmata~\ref{l:monotone},~\ref{l:compatible},  
and Theorem~\ref{t:cdi} we obtain the following theorem. 
\begin{theorem} \label{t:correctness}
Let $\RS$ be a TRS and let $\C$ denote a 
parametric $\Delta$-linear interpretation.
Suppose for all $P \geqslant 0 \in \PCC(\RS,\C)$, $P$ is absolutely positive.
Then $\RS$ is terminating and property~\eqref{eq:cdi} holds for $D = \RPOS$.
\end{theorem}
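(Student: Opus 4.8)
The plan is to chain together the three already-established facts, namely Lemma~\ref{l:monotone}, Lemma~\ref{l:compatible}, and Theorem~\ref{t:cdi}, in exactly that order. The theorem is explicitly advertised in the text as ``an immediate consequence'' of these, so the proof should be short; the real content has already been done upstream. First I would assume the hypothesis: $\RS$ is a TRS, $\C$ is a parametric $\Delta$-linear interpretation, and for every constraint $(P \geqslant 0) \in \PCC(\RS,\C)$ the (parametric) polynomial $P$ is absolutely positive.

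The first step is to invoke Lemma~\ref{l:compatible}. Its hypothesis is precisely absolute positivity of every $P$ with $(P\geqslant 0)\in\PCC(\RS,\C)$, so it applies directly and yields an \emph{instantiation} $\C'$ of $\C$ that is compatible with $\RS$ in the sense of the compatibility definition in Section~\ref{rta08:Preliminaries}. The second step is to observe that $\C'$, being an instantiation of a parametric $\Delta$-linear interpretation, is itself a $\Delta$-linear interpretation; hence by Lemma~\ref{l:monotone} it is $\Delta$-monotone. At this point $\C'$ is a $\Delta$-monotone CDA compatible with $\RS$, so the hypotheses of Theorem~\ref{t:cdi} are met. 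Applying Theorem~\ref{t:cdi} gives both conclusions at once: $\RS$ is terminating, and the bound
\begin{equation*}
\dl{\RS}(t)\leqslant\inf_{\Delta\in D}\frac{\eval{t}}{\Delta}
\end{equation*}
holds for all terms $t$, which is exactly property~\eqref{eq:cdi}. Since every $\Delta$-linear interpretation has parameter set $D = \RPOS$ by Definition~\ref{d:pcdi} (a $\Delta$-linear interpretation is in particular a polynomial context dependent interpretation), the instance $\C'$ also has $D = \RPOS$, giving the claimed specialisation of~\eqref{eq:cdi} to $D = \RPOS$.

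There is no serious obstacle here, since the three cited results do all the heavy lifting; the only point requiring a little care is the \textbf{bookkeeping between parametric and instantiated interpretations}. One must check that the object $\C'$ produced by Lemma~\ref{l:compatible} still satisfies Definition~\ref{d:delta-linear} (so that Lemma~\ref{l:monotone} is genuinely applicable) and still satisfies Definition~\ref{d:pcdi} with $D=\RPOS$ (so that the parameter set in the conclusion is correct). Both follow because the instantiation merely replaces the coefficient variables $a_{(f,i)}, b_{(f,i)}, c_f, d_f$ by concrete natural numbers, preserving the required form of the interpretation and parameter functions and, by the second family of constraints in $\CC(\RS,\C)$, the condition $a_{(f,i)}+b_{(f,i)} \geqslant 1$ that keeps the $\Delta$-quotient denominators nonzero. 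With this checked, the statement follows immediately.
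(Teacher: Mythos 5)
Your proof is correct and follows exactly the route the paper intends: the paper states the theorem as ``an immediate consequence of Lemmata~\ref{l:monotone}, \ref{l:compatible}, and Theorem~\ref{t:cdi},'' and you chain precisely those three results, with the added (and harmless) care of checking that the instantiation $\C'$ remains a $\Delta$-linear interpretation with parameter set $\RPOS$.
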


It is easy to see that the Diophantine inequalities 
induced by Example~\ref{ex:1:pcc} cannot be solved, if
the symbolic assignments $\alpha_1$ and $\alpha_2$ are 
treated as different variables.
This motivates the next definition.
\begin{definition} \label{d:VE}
Given a TRS $\RS$ and a $\Delta$-linear interpretation
$\C$, the \emph{equality constraints} of $\RS$
with respect to $\C$ are defined as follows:
% GM:
\begin{equation*}
\VE(\RS,\C)=\{(a+b\Delta) - (c+d\Delta) =0 \mid \text{Property \emph{($\ast$)} is fulfilled}\}
\end{equation*}
\begin{description}
  \item[($\ast$)] There exists $P \geqslant 0 \in \PCC(\RS,\C)$, $x \in \VS$ such that
$[\alpha,d_1](x)$ and $[\alpha,d_2](x)$ occur in $P$ and
$d_1 = \frac{\Delta}{a + b\Delta} \not= \frac{\Delta}{c + d\Delta} = d_2$.
\end{description}
\end{definition}

\begin{example} \label{ex:1:EC}
  Consider Example~\ref{ex:1:pcc}. Property $(\ast)$ is
applicable to the $\Delta$-quotients $d_1$, $d_2$ 
in the $\Delta$-assignments $\alpha_1 = [\alpha,d_1]$ and 
$\alpha_2 = [\alpha,d_2]$ as
\begin{equation*}
  d_1 = \frac{\Delta}{ae + (be + f) \Delta} \not= 
  \frac{\Delta}{ae + (af + b) \Delta} = d_2 \tpkt
\end{equation*}
Thus the constraint $(ae + (be + f) \Delta) - (ae + (af + b) \Delta) = 0$
occurs in  $\VE(\RSa,\C)$. This is the only constraint in $\VE(\RSa,\CS)$.
\end{example}

Let $P \geqslant 0 \in \PCC(\RS,\C)$, assume the equality constraints in $\VE(\RS,\C)$
are fulfilled and assume we want to test for absolute positivity of $P$.
By assumption distinct symbolic assignments can be treated as equal, which
may change the coefficients we need to consider in $P$.
This is expressed by writing $P \geqslant 0 \in \PCC(\RS,\C) \cup \VE(\RS,\C)$.
Furthermore, we call a parametric polynomial a \emph{zero polynomial}
if there exists an instance $P'$ of $P$ such that $P' = 0$. 
\begin{corollary} \label{c:correctness}
Let $\RS$ be a TRS and let $\C$ denote a parametric $\Delta$-linear interpretation. 
Suppose for all $P \geqslant 0 \ (P = 0) \in \PCC(\RS,\C) \cup \VE(\RS,\C)$, 
$P$ is absolutely positive ($P$ is a zero polynomial). 
Then $\RS$ is terminating and property~\eqref{eq:cdi} holds for $D = \RPOS$.
\end{corollary}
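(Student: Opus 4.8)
The plan is to follow the proof of Theorem~\ref{t:correctness} — which itself rests on Lemma~\ref{l:compatible}, Lemma~\ref{l:monotone} and Theorem~\ref{t:cdi} — and to account for the extra equality constraints collected in $\VE(\RS,\C)$. Concretely, I would first extract from the suppositions a common instantiation of the (globally shared) coefficient variables of $\C$, then exhibit that the resulting non-parametric $\Delta$-linear interpretation $\C'$ is $\Delta$-monotone and compatible with $\RS$, and finally invoke Theorem~\ref{t:cdi} to obtain termination together with the bound~\eqref{eq:cdi} for $D = \RPOS$. The only genuinely new ingredient compared with Theorem~\ref{t:correctness} is the rôle of $\VE(\RS,\C)$, so the argument concentrates there.

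First I would fix the reading of the suppositions: there is an instantiation of the coefficient variables under which every equality constraint $(P = 0) \in \VE(\RS,\C)$ becomes a zero polynomial and every $(P \geqslant 0) \in \PCC(\RS,\C) \cup \VE(\RS,\C)$ becomes absolutely positive, hence positive. I would then read off the effect of the equality constraints. Whenever Property~($\ast$) of Definition~\ref{d:VE} fires, two distinct $\Delta$-quotients $d_1 = \frac{\Delta}{a+b\Delta}$ and $d_2 = \frac{\Delta}{c+d\Delta}$ occur as symbolic assignments $[\alpha,d_1](x)$ and $[\alpha,d_2](x)$ to the same variable $x$ inside some compatibility polynomial $P$. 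The associated constraint $(a+b\Delta)-(c+d\Delta)=0$ being a zero polynomial under the instantiation means precisely $a=c$ and $b=d$; thus $d_1$ and $d_2$ coincide as functions of $\Delta$, whence $[\alpha,d_1](x) = [\alpha,d_2](x)$ for every $\Delta$-assignment $\alpha$ and every $\Delta \in \RPOS$.

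This identity is exactly what legitimises treating the two a priori distinct symbolic assignments as one indeterminate, which is the content of writing $P \geqslant 0 \in \PCC(\RS,\C) \cup \VE(\RS,\C)$. I expect this merging step to be the main obstacle, since one must keep apart two notions of positivity: absolute positivity delivers $P \geqslant 0$ for the \emph{formal} polynomial in \emph{independent} indeterminates, whereas compatibility only needs $P \geqslant 0$ at the \emph{actual} evaluation points $z_i = [\alpha,d_i](x)$. The equality of the $\Delta$-quotients collapses these actual points onto the diagonal $z_1 = z_2$, so after identifying the corresponding indeterminates each compatibility polynomial may be viewed as an element of $\Z[\Delta,z_1,\dots,z_m]$ in the reduced variable set, and absolute positivity of this reduced polynomial yields nonnegativity on the whole nonnegative orthant, in particular at the genuine values. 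The merging is indispensable: as Example~\ref{ex:1:EC} shows, the contributions of $d_1$ and $d_2$ may carry opposite signs that cancel only once the indeterminates are identified.

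Finally I would assemble the pieces as in Lemma~\ref{l:compatible}. Clearing denominators via the already correct procedure $\Alg$, and using that the constraints $a_{(f,i)}+b_{(f,i)}-1 \geqslant 0$ keep all occurring denominators strictly positive, the verified nonnegativity of each reduced polynomial gives $\eval{l} - \eval{r} \geqslant \Delta$, that is $\eval{l} >_{\Delta} \eval{r}$, for every rule $l \rew r \in \RS$, every $\Delta \in \RPOS$ and every assignment; hence $\C'$ is compatible with $\RS$. Being an instance of a parametric $\Delta$-linear interpretation, $\C'$ is itself $\Delta$-linear, so it is $\Delta$-monotone by Lemma~\ref{l:monotone}. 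Theorem~\ref{t:cdi} then yields termination of $\RS$ and the bound~\eqref{eq:cdi} for $D = \RPOS$, completing the argument.
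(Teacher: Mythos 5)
Your proof is correct and follows exactly the route the paper intends: the paper leaves the corollary without an explicit proof, relying on Theorem~\ref{t:correctness}/Lemma~\ref{l:compatible} together with the preceding remark that satisfied equality constraints let distinct symbolic assignments be identified, and your write-up merely makes that identification step (zero polynomial $\Rightarrow$ $a=c$, $b=d$ $\Rightarrow$ coinciding $\Delta$-quotients $\Rightarrow$ merged indeterminates) explicit before invoking Lemma~\ref{l:monotone} and Theorem~\ref{t:cdi}. No gaps.
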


Corollary~\ref{c:correctness} opens the way to efficiently search for 
CDAs: Finding a $\Delta$-monotone and compatible CDA $\C$ amounts to solving the Diophantine constraints in 
$\PCC(\RS,\C) \cup \VE(\RS,\C)$. Recall that solvability of Diophantine
constraints is undecidable~\cite{Matiyasevich:1970}. However, there is an easy remedy for this: 
we restrict the domain of the coefficient variables to a finite one. 

\begin{example} \label{ex:1:CP}
Consider the TRS $\RSa$ from Example~\ref{ex:rta08:1} and the
$\Delta$-linear interpretation $\C$ from Example~\ref{ex:1:para}. Applying the above described
algorithm, the following Diophantine (in)equalities need to be solved.
\begin{align*}
&  \begin{aligned}
&  b^2ef + bf^2 - abf^2 - b^2f \geqslant 0 & &abe^2 + a^2ef + aef -2a^2ef - aeb \geqslant 0 \\
&  b^2fh - bdf^2 -bf \geqslant 0 &  &a^2eh + ade - ade^2 - aeh \geqslant 0\\
&  a+b-1 \geqslant 0 & &e + f- 1 \geqslant 0 \\ 
&  be + f - af - b = 0 & &af^2 + b^2e^2 + bef -a^2f^2 - abf - b^2e  \geqslant 0
  \end{aligned} \\ 
&  2abfh + b^2eh +bdf - adf^2 - 2bdef -bfh -be - af \geqslant 0 \\
&  a^2fh + 2abeh + adf + bde - 2adef - afh - bde^2 - beh -ae \geqslant 0 \tpkt
\end{align*}  
Here the constraints $a+b-1 \geqslant 0$, $e + f- 1 \geqslant 0$ guarantee that
the denominators of occurring $\Delta$-quotients are positive,
and the equality $be + f - af - b = 0$ expresses the 
equality constraint in $\VE(\RSa,\C)$.
Our below discussed implementations of the algorithm 
presented in this section find the following satisfying 
assignments for the coefficient variables fully automatically:\\[-2ex]
\begin{equation*}
  a = b = e = h = 1 \qquad c = d = f = g = 0 \tpkt
\end{equation*}
\end{example}

\section{Derivational Complexities Induced by Polynomial Context Dependent Interpretations}
\label{rta08:Complexity}

In this section we show that the derivational complexity induced by $\Delta$-linear interpretations
is exponential and that this bound is tight. Furthermore, we introduce a 
restricted subclass of $\Delta$-linear interpretations that induces (tight)
quadratic derivational complexity. 

Recall the TRS $\RSa$ considered in Example~\ref{ex:rta08:1}. This TRS belongs to a family of TRSs $\RS_k$ for
$k > 0$: $\ma(\mb(x))\rew\mb^k(\ma(x))$ and it is not difficult to see that for $k \geqslant 2$
the derivational complexity of $\RS_k$ is exponential. In~\cite{Hofbauer:2001}
$\Delta$-linear interpretations $\C_k$ were introduced such that 
\begin{equation*}
 \inf_{\Delta > 0} \frac{[\alpha,\Delta]_{\C_k}(t)}{\Delta} = \dl{\RS_k}(t) \tkom 
\end{equation*}
holds for any ground term. I.e., for all $k > 0$ there exist 
$\Delta$-linear interpretations that optimally bound the derivational complexities of $\RS_k$.
This triggers the question whether we can find such context dependent interpretations automatically.
The next example answers this question affirmatively, for $k=2$.%
\footnote{The answer remains positive for $k=3$. Detailed
experimental evidence and additional information on the considered
constraints are available 
at~\url{http://cl-informatik.uibk.ac.at/~aschnabl/experiments/cdi/}.
}
\begin{example} \label{ex:rta08:2}
Consider the TRSs $\RSb$: $\ma(\mb(x))\rew\mb(\mb(\ma(x)))$.%
\footnote{This is Example 2.50 in Steinbach and Kühler's collection~\cite{SteinbachKuehler:1990}.}    
To find a $\Delta$-linear interpretation, we employ the same parametric
interpretation $\C$, as in Example~\ref{ex:1:para} and build the
set of constraints $\CC(\RSb,\C)$ and consecutively the polynomial
compatibility constraints $\PCC(\RSb,\C)$ together with the equality
constraints $\VE(\RSb,\C)$. We only state the (automatically) obtained interpretation
and parameter functions:%
\begin{align*}
  \cdifunc{\ma}{\Delta}(z) &=(2 + 2\Delta)z & \cdictx{\ma}{1}(\Delta)& =\frac{1}{2 + 2\Delta}\\
  \cdifunc{\mb}{\Delta}(z)& =z + 1 & \cdictx{\mb}{1}(\Delta)&=\Delta \tpkt
\end{align*}
\end{example}
As a consequence of Example~\ref{ex:rta08:2} we see the existence of TRSs, compatible with
$\Delta$-linear interpretations, whose derivational complexity function is exponential.
Moreover, we have the following lemma.

\begin{lemma} \label{l:exponential}
  Let $\C$ denote a $\Delta$-linear interpretation and let $K$ denote the maximal
coefficient occurring in $\C$. Further let $t$ be a ground term, $\alpha$ 
a $\Delta$-assignment and $\Delta > 0$. Then
$\eval{t} \leqslant (K+2)^{\size{t}} (\Delta + 1)$.
\end{lemma}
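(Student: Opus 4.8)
The plan is to prove the bound $\eval{t} \leqslant (K+2)^{\size{t}}(\Delta+1)$ by structural induction on the ground term $t$, exploiting the explicit shape of $\Delta$-linear interpretations from Definition~\ref{d:delta-linear}. The key observation I would record first is a uniformity property of the parameter functions: since each $\cdictx{f}{i}(\Delta) = \frac{\Delta}{a_{(f,i)}+b_{(f,i)}\Delta}$ with $a_{(f,i)}+b_{(f,i)}\geqslant 1$, we always have $\cdictx{f}{i}(\Delta)\leqslant\Delta$, and more importantly $\cdictx{f}{i}(\Delta)\leqslant\frac{1}{b_{(f,i)}}$ when $b_{(f,i)}>0$ and $\cdictx{f}{i}(\Delta)\leqslant\frac{\Delta}{a_{(f,i)}}$ otherwise. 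In either case the argument passed down in the recursion is a $\Delta$-quotient bounded by $\Delta$, so that $\cdictx{f}{i}(\Delta)+1\leqslant\Delta+1$. This is the fact that lets the inductive factor $(\Delta'+1)$ coming from a subterm be absorbed back into $(\Delta+1)$.

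First I would treat the base case $t = c$ for a constant $c\in\FS$: here $\eval{c} = c_c\Delta + d_c \leqslant K\Delta + K \leqslant K(\Delta+1) \leqslant (K+2)^{1}(\Delta+1)$, using $\size{c}=1$ and that all coefficients are bounded by $K$. For the induction step, let $t = f(t_1,\dots,t_n)$ and write $\Delta_i \defsym \cdictx{f}{i}(\Delta)$. By the induction hypothesis applied to each $t_i$ with parameter $\Delta_i$, we obtain $[\alpha,\Delta_i](t_i)\leqslant (K+2)^{\size{t_i}}(\Delta_i+1)\leqslant(K+2)^{\size{t_i}}(\Delta+1)$, where the last inequality uses $\Delta_i\leqslant\Delta$. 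Substituting these bounds into the $\Delta$-linear interpretation function
\begin{equation*}
\eval{t} = \sum_{i=1}^n a_{(f,i)}[\alpha,\Delta_i](t_i) + \sum_{i=1}^n b_{(f,i)}[\alpha,\Delta_i](t_i)\Delta + c_f\Delta + d_f
\end{equation*}
and bounding every coefficient by $K$ yields $\eval{t}\leqslant K(\Delta+1)\bigl(\sum_{i=1}^n(K+2)^{\size{t_i}} + \sum_{i=1}^n(K+2)^{\size{t_i}}\Delta\bigr) + K\Delta + K$. The crucial inequality $b_{(f,i)}\Delta_i\leqslant 1$ (a consequence of $\Delta_i=\frac{\Delta}{a_{(f,i)}+b_{(f,i)}\Delta}$) should be used to tame the mixed $z_i\Delta$ terms cleanly; I would prefer to apply it before the final coefficient bound so that the $\Delta$ introduced by $b_{(f,i)}$ does not produce a spurious extra factor of $\Delta$.

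The arithmetic obstacle, and the step I expect to require the most care, is collecting these contributions into exactly $(K+2)^{\size{t}}(\Delta+1)$. The size satisfies $\size{t} = 1 + \sum_{i=1}^n\size{t_i}$, so $(K+2)^{\size{t}} = (K+2)\prod\!$-style growth is governed by $\sum_i(K+2)^{\size{t_i}}\leqslant (K+2)^{\sum_i\size{t_i}} = (K+2)^{\size{t}-1}$, the last inequality holding because each $\size{t_i}\geqslant 1$ makes the sum of powers dominated by a single power of the total (here the factor $(K+2)\geqslant n$ is what supplies room for the $n$-fold sum and for the additive $c_f\Delta+d_f$ terms; if $n$ could exceed $K+2$ I would instead note $K\geqslant$ the maximal coefficient forces $K+2$ large enough only after checking, so I would phrase the counting so that the $+2$ in the base absorbs both the summation over children and the constant terms). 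I would finish by verifying $K(\Delta+1)\cdot(K+2)^{\size{t}-1}\cdot(1+\Delta)$-type products telescope to $(K+2)^{\size{t}}(\Delta+1)$ after using $\Delta+1\leqslant(\Delta+1)$ trivially and $b_{(f,i)}\Delta_i\leqslant1$ to kill the extra $\Delta$; the bookkeeping of where the factor $(\Delta+1)$ versus $(\Delta+1)^2$ appears is the genuine pitfall, and I would resolve it precisely by the parameter-function bound $\Delta_i+1\leqslant\Delta+1$ rather than a naive $\Delta_i\leqslant\Delta$ estimate. This lemma then feeds directly, via Theorem~\ref{t:cdi} and $\inf_{\Delta>0}\eval{t}/\Delta$, into the exponential derivational complexity bound for $\Delta$-linear interpretations.
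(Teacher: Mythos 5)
Your overall strategy -- structural induction on $t$, with the base $K+2$ chosen large enough to absorb both the branching and the constant part $c_f\Delta+d_f$ -- is the same one the paper has in mind (its proof is literally ``straightforward induction on $t$''). But the ``uniformity property'' you put at the centre of the argument is false. You claim $f_{\mathcal{C}}^{i}(\Delta)=\frac{\Delta}{a_{(f,i)}+b_{(f,i)}\Delta}\leqslant\Delta$, equivalently $\Delta_i+1\leqslant\Delta+1$, and you use this both to ``absorb the inductive factor $(\Delta_i+1)$ back into $(\Delta+1)$'' and, at the end, as the resolution of the $(\Delta+1)$ versus $(\Delta+1)^2$ bookkeeping. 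This inequality holds only when $a_{(f,i)}+b_{(f,i)}\Delta\geqslant 1$; for $a_{(f,i)}=0$ and $b_{(f,i)}=1$ the parameter function is constantly $1$, so for every $\Delta<1$ one has $\Delta_i=1>\Delta$. Since the lemma is quantified over all $\Delta>0$ (and the whole point of Theorem~\ref{t:cdi} is to take $\inf_{\Delta>0}$, i.e.\ small $\Delta$ matters), this step genuinely fails.

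Even where $\Delta_i\leqslant\Delta$ does hold, the order of estimation you propose does not close. Replacing $(\Delta_i+1)$ by $(\Delta+1)$ first and multiplying by the coefficients afterwards turns the mixed term $b_{(f,i)}[\alpha,\Delta_i](t_i)\,\Delta$ into something carrying $(\Delta+1)\Delta$; indeed your displayed intermediate bound contains $K(\Delta+1)\sum_i(K+2)^{\lvert t_i\rvert}\Delta$ and is therefore quadratic in $\Delta$, while the target $(K+2)^{\lvert t\rvert}(\Delta+1)$ is linear -- no further bookkeeping can recover the lemma from there. The repair is the one you gesture at with $b_{(f,i)}\Delta_i\leqslant 1$ but never actually deploy: do not detach $\Delta_i$ from its coefficient. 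Because $(a_{(f,i)}+b_{(f,i)}\Delta)\cdot\Delta_i=\Delta$ exactly, each summand satisfies $(a_{(f,i)}+b_{(f,i)}\Delta)\,[\alpha,\Delta_i](t_i)\leqslant(K+2)^{\lvert t_i\rvert}\bigl(\Delta+a_{(f,i)}+b_{(f,i)}\Delta\bigr)\leqslant(K+2)^{\lvert t_i\rvert}(K+1)(\Delta+1)$, with no comparison of $\Delta_i$ against $\Delta$ needed at all. Summing with $\sum_i(K+2)^{\lvert t_i\rvert}\leqslant(K+2)^{\lvert t\rvert-1}$ (valid because each factor is at least $2$, not because $K+2\geqslant n$) and adding $c_f\Delta+d_f\leqslant K(\Delta+1)\leqslant(K+2)^{\lvert t\rvert-1}(\Delta+1)$ yields $(K+1)(K+2)^{\lvert t\rvert-1}(\Delta+1)+(K+2)^{\lvert t\rvert-1}(\Delta+1)=(K+2)^{\lvert t\rvert}(\Delta+1)$, closing the induction.
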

\begin{proof}
  Straightforward induction on $t$.
\end{proof}

\begin{theorem} \label{t:exponential}
Let $\RS$ be a TRS and let $\C$ denote a $\Delta$-linear interpretation
compatible with $\RS$. Then $\RS$ is terminating and 
$\dc{\RS}(n) = 2^{\bO(n)}$. Moreover there exists
a TRS $\RS$ such that $\dc{\RS}(n) = 2^{\Omega(n)}$.
\end{theorem}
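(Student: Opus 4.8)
The plan is to prove the theorem in two parts. For the upper bound, I would combine Theorem~\ref{t:cdi} with Lemma~\ref{l:exponential}. Assume $\RS$ is compatible with a $\Delta$-linear interpretation $\C$, and let $K$ be the maximal coefficient occurring in $\C$. By Lemma~\ref{l:monotone}, $\C$ is $\Delta$-monotone, so Theorem~\ref{t:cdi} applies and yields $\dl{\RS}(t) \leqslant \inf_{\Delta \in D} \frac{\eval{t}}{\Delta}$ for all terms $t$, with $D = \RPOS$. Applying Lemma~\ref{l:exponential}, for every $\Delta > 0$ we have $\eval{t} \leqslant (K+2)^{\size{t}}(\Delta + 1)$, hence $\frac{\eval{t}}{\Delta} \leqslant (K+2)^{\size{t}} \cdot \frac{\Delta+1}{\Delta}$. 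Taking the infimum over $\Delta > 0$, the factor $\frac{\Delta+1}{\Delta} = 1 + \frac{1}{\Delta}$ tends to $1$ as $\Delta \to \infty$, so $\inf_{\Delta > 0} \frac{\eval{t}}{\Delta} \leqslant (K+2)^{\size{t}}$. Since $K$ depends only on $\C$ (not on $t$), setting $c = K+2$ gives $\dl{\RS}(t) \leqslant c^{\size{t}}$, and therefore $\dc{\RS}(n) = \max\{\dl{\RS}(t) \mid \size{t} \leqslant n\} \leqslant c^n = 2^{\bO(n)}$.

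For the lower bound, the natural witness is already present in the surrounding text: the TRS $\RSb$ with the single rule $\ma(\mb(x)) \rew \mb(\mb(\ma(x)))$ from Example~\ref{ex:rta08:2}, which is compatible with a $\Delta$-linear interpretation. I would show directly that $\dl{\RSb}$ grows exponentially. The key combinatorial observation is that starting from $\ma^j(\mb^m(\mc))$ (for a constant $\mc$), each application of the rule replaces one $\mb$ below an $\ma$ by two $\mb$'s pushed past that $\ma$, doubling the relevant count. A clean way to carry this out is to establish by induction that $\dl{\RSb}(\ma(\mb^m(\mc)))$ is at least $2^m - 1$ (or a similar exponential expression in $m$): rewriting $\ma(\mb^m(\mc))$ eventually produces $\mb^{2^m}(\ma(\mc))$-like terms, and counting the single-step contractions needed yields a bound of the form $2^{\Omega(m)}$. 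Since $\size{\ma(\mb^m(\mc))} = m + 2 = \bO(m)$, this gives $\dc{\RSb}(n) = 2^{\Omega(n)}$.

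I expect the main obstacle to be the lower bound, specifically making the exponential derivation-length count fully rigorous. The upper bound is essentially a clean assembly of earlier results, but the lower bound requires a careful induction tracking how the number of $\mb$-symbols evolves under the rewriting relation and verifying that a longest derivation genuinely realizes the exponential blow-up rather than merely admitting exponentially large intermediate terms. In particular I would need to argue that no shorter normalizing strategy exists, which for this single-rule (non-overlapping) system follows because the rule strictly increases the total number of symbols and every redex must eventually be contracted; counting the symbols created establishes the number of steps exactly. Once the invariant relating step count to the doubling of $\mb$'s is pinned down, the bound $\dc{\RSb}(n) = 2^{\Omega(n)}$ follows, completing the claim that the exponential upper bound from the first part is tight.
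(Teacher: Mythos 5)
Your upper bound is correct and coincides with the paper's argument: the paper's proof simply defers to the proof of Theorem~\ref{t:quadratic}, which performs exactly the computation you give --- $\Delta$-monotonicity via Lemma~\ref{l:monotone}, termination and the bound $\dl{\RS}(t)\leqslant\inf_{\Delta>0}\eval{t}/\Delta$ via Theorem~\ref{t:cdi}, the estimate of Lemma~\ref{l:exponential}, and finally $\inf_{\Delta>0}(K+2)^{\size{t}}(\Delta+1)/\Delta=(K+2)^{\size{t}}$. For the lower bound you also pick the same witness TRS as the paper, namely the system of Example~\ref{ex:rta08:2}; the problem is your instantiation of it.

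Concretely: for the rule $\mathsf{a}(\mathsf{b}(x))\to\mathsf{b}(\mathsf{b}(\mathsf{a}(x)))$, the terms $\mathsf{a}(\mathsf{b}^m(\mathsf{c}))$ do \emph{not} have derivation length $2^m-1$; they have derivation length exactly $m$. The single $\mathsf{a}$ passes each of the $m$ occurrences of $\mathsf{b}$ exactly once, turning that one occurrence into two, so the normal form is $\mathsf{b}^{2m}(\mathsf{a}(\mathsf{c}))$ and not $\mathsf{b}^{2^m}(\mathsf{a}(\mathsf{c}))$. Your own counting principle --- every step increases the term size by exactly one, and by orthogonality plus termination the normal form is unique, so every normalizing derivation has length equal to the size of the normal form minus the size of the start term --- already refutes the claim: $(2m+2)-(m+2)=m$, which is linear. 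The doubling compounds across stacked $\mathsf{a}$'s, not across stacked $\mathsf{b}$'s: a single $\mathsf{b}$ passing through $n$ occurrences of $\mathsf{a}$ is doubled $n$ times. The correct witness family is therefore $\mathsf{a}^n(\mathsf{b}(\mathsf{c}))$, whose normal form is $\mathsf{b}^{2^n}(\mathsf{a}^n(\mathsf{c}))$, so the same size-counting argument gives derivation length $1+2+\cdots+2^{n-1}=2^n-1$; this is precisely the fact recorded in Example~\ref{ex.SK} of this thesis, going back to Example~2.50 of Steinbach and K{\"u}hler~\cite{SteinbachKuehler:1990}. Since $\mathsf{a}^n(\mathsf{b}(\mathsf{c}))$ has size $n+2$, substituting this family makes the rest of your reasoning go through and yields the required $2^{\Omega(n)}$ bound on the derivational complexity of the witness TRS.
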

\begin{proof}
The proof of the upper bound follows 
the pattern of the proof of Theorem~\ref{t:quadratic} below. 
To show that this upper bound is tight, we consider the
TRS $\RSb$ from Example~\ref{ex:rta08:2}. 
It is easy to see that $\dc{\RSb}(n) = 2^{\Omega(n)}$ holds.
\end{proof}

In order to establish a termination method that induces 
\emph{polynomial} derivational complexity, 
we restrict the class of $\Delta$-linear interpretations.

\begin{definition} \label{d:delta-restricted}
A \emph{$\Delta$-restricted interpretation} is a 
$\Delta$-linear interpretation. In addition we require that 
for the interpretation functions and parameter functions
\begin{align*}
f_{\C}(\Delta,\seq{z}) &= \sum_{i=1}^n a_{(f,i)} z_i + \sum_{i=1}^n b_{(f,i)} 
z_i\Delta + c_f\Delta+ d_f \\
\cdictx{f}{i}(\Delta) &= \frac{\Delta}{a_{(f,i)}+b_{(f,i)}\Delta} \tkom
\end{align*}
we have $a_{(f,i)} \in \{0,1\}$ for all $1 \leqslant i \leqslant n$.
\end{definition}

\begin{example} \label{ex:1:quadratic}
Consider the TRS $\RSa$ from Example~\ref{ex:rta08:1}. The assignment
of coefficient variables as defined in Example~\ref{ex:1:CP} induces
a $\Delta$-restricted interpretation. 
\end{example}

\begin{lemma} \label{l:quadratic}
 Let $\C$ denote a $\Delta$-restricted interpretation with coefficients
 $a_{(f,i)}$, $b_{(f,i)}$, $c_f$, $d_f$ ($f \in \FS$, $1 \leqslant i \leqslant \arity(f)$) and we set
 \begin{align*}
   M &\defsym \max (\{c_f,d_f \mid f \in \FS\} \cup \{1\}) \\
   N &\defsym \max (\{ b_{(f,i)} \mid f \in \FS, 1 \leqslant i \leqslant \arity(f) \} \cup \{1\}) \tpkt
 \end{align*}
Further let $t$ be a ground term, $\alpha$ a $\Delta$-assignment and let 
$\Delta > 0$. Then $\eval{t} \leqslant M ( \size{t} + N \size{t}^2 \Delta )$.
\end{lemma}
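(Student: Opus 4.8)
The plan is to prove the bound $\eval{t} \leqslant M(\size{t} + N\size{t}^2\Delta)$ by structural induction on the ground term $t$, exploiting the crucial feature of $\Delta$-restricted interpretations, namely that $a_{(f,i)} \in \{0,1\}$. The base case is a constant $\mc$, where $\eval{\mc} = c_{\mc}\Delta + d_{\mc} \leqslant M + M\Delta \leqslant M(1 + N\Delta) = M(\size{\mc} + N\size{\mc}^2\Delta)$ since $\size{\mc} = 1$ and $N \geqslant 1$. The inductive step for $t = f(\seq{t})$ is where the real work lies. By definition,
\begin{equation*}
\eval{t} = \sum_{i=1}^n a_{(f,i)}\,[\alpha,\cdictx{f}{i}(\Delta)](t_i) + \sum_{i=1}^n b_{(f,i)}\,[\alpha,\cdictx{f}{i}(\Delta)](t_i)\,\Delta + c_f\Delta + d_f \tpkt
\end{equation*}

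The key observation I would exploit is the behaviour of the parameter function $\cdictx{f}{i}(\Delta) = \frac{\Delta}{a_{(f,i)} + b_{(f,i)}\Delta}$ under the restriction $a_{(f,i)} \in \{0,1\}$. When $a_{(f,i)} = 1$ the parameter function is $\frac{\Delta}{1 + b_{(f,i)}\Delta} \leqslant \Delta$, and when $a_{(f,i)} = 0$ it is $\frac{1}{b_{(f,i)}}$, a constant independent of $\Delta$. I would apply the induction hypothesis to each $t_i$ with the substituted parameter $\Delta_i \defsym \cdictx{f}{i}(\Delta)$ in place of $\Delta$, obtaining $[\alpha,\Delta_i](t_i) \leqslant M(\size{t_i} + N\size{t_i}^2\Delta_i)$. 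The decisive point is that in the first (linear, $a_{(f,i)}$-weighted) sum only terms with $a_{(f,i)} = 1$ survive, and there $\Delta_i \leqslant \Delta$; while in the second ($b_{(f,i)}\Delta$-weighted) sum the factor $\Delta_i \cdot \Delta$ must be controlled, using $b_{(f,i)}\Delta_i \leqslant 1$ (which holds since $\Delta_i = \frac{\Delta}{a_{(f,i)} + b_{(f,i)}\Delta}$ gives $b_{(f,i)}\Delta_i = \frac{b_{(f,i)}\Delta}{a_{(f,i)} + b_{(f,i)}\Delta} \leqslant 1$). This bounds $b_{(f,i)}[\alpha,\Delta_i](t_i)\Delta$ by $M(\size{t_i} + N\size{t_i}^2\Delta)$ after absorbing the $b_{(f,i)}\Delta_i$ factor, which is what converts a potentially quadratic-in-$\Delta$ contribution into one linear in $\Delta$.

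Summing these estimates, the contributions assemble into $M\sum_i \size{t_i} + MN\Delta\sum_i\size{t_i}^2 + c_f\Delta + d_f$ (plus controlled cross-terms), and I would then bound this by $M(\size{t} + N\size{t}^2\Delta)$ using $\size{t} = 1 + \sum_i\size{t_i}$ together with the superadditivity inequality $\big(\sum_i\size{t_i}\big)^2 + \sum_i\size{t_i} + 1 \leqslant \size{t}^2$ and the fact that $M,N \geqslant 1$ dominate $c_f, d_f, b_{(f,i)}$ by construction. Here the elementary arithmetic fact $\sum_i\size{t_i}^2 \leqslant \big(\sum_i\size{t_i}\big)^2$ is what feeds the squared size through the recursion.

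The main obstacle I anticipate is the bookkeeping in the inductive step: one must track how $\Delta$ and $\Delta_i$ interact across the two sums simultaneously and verify that the $b_{(f,i)}\Delta_i \leqslant 1$ bound is genuinely applicable in every summand regardless of whether $a_{(f,i)}$ is $0$ or $1$ (the case $a_{(f,i)} = 0$, $b_{(f,i)} > 0$ requires $b_{(f,i)} > 0$, which is guaranteed by the $\Delta$-quotient well-definedness condition $a_{(f,i)} + b_{(f,i)} \geqslant 1$). Ensuring that the linear-in-$\Delta$ character of the bound is preserved — rather than degenerating to quadratic-in-$\Delta$ as in the general $\Delta$-linear case — hinges entirely on this restriction $a_{(f,i)} \leqslant 1$, and making that dependence explicit in the estimate is the crux. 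Once the bound is in place, the desired quadratic derivational complexity follows via Theorem~\ref{t:cdi} by taking $\Delta = \frac{1}{\size{t}}$ (or the infimum over $\Delta$) in $\frac{\eval{t}}{\Delta} \leqslant M\big(\frac{\size{t}}{\Delta} + N\size{t}^2\big)$, which is the content of the subsequent complexity theorem.
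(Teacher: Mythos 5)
Your overall plan --- structural induction on $t$ with the invariant $[\alpha,\Delta](t) \leqslant M(\size{t}+N\size{t}^2\Delta)$ --- is the same as the paper's, and your base case is fine, but the inductive step as you describe it does not go through. The problem is that you bound the $a_{(f,i)}$-weighted sum and the $b_{(f,i)}\Delta$-weighted sum \emph{separately}, using $f_{\mathcal{C}}^{i}(\Delta)\leqslant\Delta$ in the first and $b_{(f,i)}\cdot f_{\mathcal{C}}^{i}(\Delta)\leqslant 1$ in the second. Each of these estimates is individually sharp, so applying both double-counts the quadratic contribution of every subterm: writing $\Delta_i=f_{\mathcal{C}}^{i}(\Delta)$ and $n_i=\size{t_i}$, the induction hypothesis feeds $MNn_i^2\Delta_i$ into both sums, and your two bounds each turn this into $MNn_i^2\Delta$, for a total of $2MNn_i^2\Delta$ per argument. (Your intermediate claim that $b_{(f,i)}[\alpha,\Delta_i](t_i)\Delta\leqslant M(n_i+Nn_i^2\Delta)$ is also false in general, since the linear part becomes $Mb_{(f,i)}n_i\Delta$, not $Mn_i$.) With the extra factor $2$ the closing arithmetic fails: setting $s=\sum_i n_i=\size{t}-1$ one would need $2s^2+s+1\leqslant(s+1)^2$, which is false for $s\geqslant 2$; the invariant is not preserved, and no weakening of the constant $N$ rescues it, because the same doubling recurs at every level of the induction.

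The paper avoids this by never separating the two sums: it keeps the coefficient of $[\alpha,\Delta_i](t_i)$ as the single factor $(a_{(f,i)}+b_{(f,i)}\Delta)$ and uses that this is \emph{exactly} the denominator of the parameter function, so that
\begin{equation*}
(a_{(f,i)}+b_{(f,i)}\Delta)\cdot f_{\mathcal{C}}^{i}(\Delta)=\Delta
\end{equation*}
holds as an identity and the quadratic term propagates with coefficient exactly $MNn_i^2\Delta$ --- no loss and no case distinction on $a_{(f,i)}$. Note that this cancellation is available already for arbitrary $\Delta$-linear interpretations; the restriction $a_{(f,i)}\in\{0,1\}$ is needed only to tame the \emph{linear} term, via $(a_{(f,i)}+b_{(f,i)}\Delta)Mn_i\leqslant(1+N\Delta)Mn_i$, which is precisely where the exponential blow-up of Theorem~\ref{t:exponential} is cut off --- you locate the role of the restriction in the wrong place. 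Your argument could be repaired by a case split on $a_{(f,i)}$ (for $a_{(f,i)}=1$ the two quadratic contributions sum to $MNn_i^2\Delta_i(1+b_{(f,i)}\Delta)=MNn_i^2\Delta$; for $a_{(f,i)}=0$ only the second survives and $b_{(f,i)}\Delta_i=1$ exactly), but that is just the cancellation identity in disguise.
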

\begin{proof}
We proceed by induction on $t$. As $t \in \TA(\FS)$, 
the evaluation is independent of the assignment. Hence we write
$\groundeval{t}$ instead of $\eval{t}$. If $t = f\in \FS$, then 
% $\groundeval{t} = c_f \Delta + d_f \leqslant  M(\Delta + 1) \leqslant 
%  M(\size{t} + N \size{t}^2 \Delta)$.
\begin{equation*}
 \groundeval{t} = c_f \Delta + d_f \leqslant  M(\Delta + 1) \leqslant 
 M(\size{t} + N \size{t}^2 \Delta) \tpkt
\end{equation*} 
If on the other hand $t = f(\seq{t})$, then 
\begin{align}
\groundeval{t} &= \sum_{i} (a_{f_i} + b_{f_i}\Delta) [f^i_\C(\Delta)](t_i) + c_f \Delta + d_f 
\label{eq:quadratic:1}  \\ 
&\leqslant \sum_{i} (a_{f_i} + b_{f_i}\Delta) \bigl( M(\size{t_i} + N \size{t_i}^2 \frac{\Delta}{a_{f_i} + b_{f_i}\Delta} \bigr)
  + c_f\Delta + d_f 
\label{eq:quadratic:2}  \\
&=  \sum_{i} \bigl( (a_{f_i} + b_{f_i}\Delta) M \size{t_i} + MN \size{t_i}^2 \Delta \bigr) + c_f\Delta + d_f 
\label{eq:quadratic:3}  \\ 
  &\leqslant \sum_{i} \bigl( (1 + N\Delta) M \size{t_i} + MN \size{t_i}^2 \Delta \bigr) + M(\Delta + 1)
\label{eq:quadratic:4}  \\ 
  &\leqslant \sum_i \size{t_i} \bigl( (1 + N\Delta) M +  MN (\size{t}-1) \Delta \bigr) + M(\Delta + 1) 
\label{eq:quadratic:6}  \\ 
  & = (\size{t}-1) \bigl( (1 + N\Delta) M +  MN (\size{t}-1) \Delta \bigr) + M(\Delta + 1) 
\label{eq:quadratic:8}  \\ 
  & = M \bigl( (\size{t}-1)(1 + N\Delta) + N (\size{t}-1)^2 \Delta + (\Delta + 1) \bigr) 
\label{eq:quadratic:9} 
\\
 &\leqslant M (\size{t} + N \size{t}^2 \Delta) \tpkt
\label{eq:quadratic:10} 
\end{align}
In line~\eqref{eq:quadratic:2} we employ the induction hypothesis, in~\eqref{eq:quadratic:6} 
we use $\size{t_i} \leqslant \size{t} - 1$ and for~\eqref{eq:quadratic:10} 
a simple calculation reveals:
$(\size{t}-1)(1 + N\Delta) + N \Delta (\size{t}-1)^2 + (\Delta + 1) = \\
  \size{t} + N \size{t}^2 \Delta + \Delta - N \size{t} \Delta \leqslant
  \size{t} + N \size{t}^2 \Delta$.
\end{proof}

\begin{theorem} \label{t:quadratic}
Let $\RS$ be a TRS and let $\C$ denote a $\Delta$-restricted interpretation
compatible with $\RS$. Then $\RS$ is terminating and 
$\dc{\RS}(n) = \bO(n^2)$. Moreover there exists a TRS $\RS$ such
that $\dc{\RS}(n) = \Omega(n^2)$.
\end{theorem}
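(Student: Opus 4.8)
The plan is to prove Theorem~\ref{t:quadratic} in two parts, mirroring the structure of Theorem~\ref{t:exponential}. The upper bound $\dc{\RS}(n) = \bO(n^2)$ will follow almost immediately from the machinery already assembled, while the lower bound $\dc{\RS}(n) = \Omega(n^2)$ requires exhibiting a concrete witness TRS together with a compatible $\Delta$-restricted interpretation. First I would establish the upper bound: by Theorem~\ref{t:correctness} (via Lemma~\ref{l:monotone} and Lemma~\ref{l:compatible}), compatibility of $\RS$ with the $\Delta$-restricted interpretation $\C$ yields that $\RS$ is terminating and that property~\eqref{eq:cdi} holds, i.e.\ $\dl{\RS}(t) \leqslant \inf_{\Delta > 0} \frac{\eval{t}}{\Delta}$. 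The key new ingredient is Lemma~\ref{l:quadratic}, which bounds $\eval{t} \leqslant M(\size{t} + N\size{t}^2 \Delta)$ for any ground term $t$, any $\Delta$-assignment $\alpha$, and any $\Delta > 0$.

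For the upper bound I would then combine these two facts. Since the estimate in Lemma~\ref{l:quadratic} holds for every $\Delta > 0$, we may divide by $\Delta$ and take the infimum:
\begin{equation*}
\dl{\RS}(t) \leqslant \inf_{\Delta > 0} \frac{\eval{t}}{\Delta} \leqslant
\inf_{\Delta > 0} \frac{M(\size{t} + N\size{t}^2\Delta)}{\Delta} =
\inf_{\Delta > 0}\Bigl( \frac{M\size{t}}{\Delta} + MN\size{t}^2 \Bigr) = MN\size{t}^2 \tpkt
\end{equation*}
Here the $\frac{M\size{t}}{\Delta}$ term vanishes in the limit $\Delta \to \infty$, leaving the constant $MN\size{t}^2$. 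Since $M$ and $N$ depend only on the finitely many coefficients of $\C$ and not on $t$, taking the maximum over all ground terms $t$ with $\size{t} \leqslant n$ gives $\dc{\RS}(n) \leqslant MN n^2 = \bO(n^2)$. (A small technical point I would address is that the derivational complexity is defined over all terms, not just ground terms; but since $\FS$ is finite and contains a constant, any term can be instantiated to a ground term of the same or larger size by substituting the constant for variables, and the derivation length does not decrease under such instantiation, so the ground-term bound suffices.)

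For the lower bound I would reuse the running example. The TRS $\RSa$ with the single rule $\ma(\mb(x)) \rew \mb(\ma(x))$ is compatible with a $\Delta$-restricted interpretation by Example~\ref{ex:1:quadratic} (the coefficient assignment from Example~\ref{ex:1:CP}), so it is a legitimate witness in the class. It then suffices to observe that $\dl{\RSa}(\ma^n(\mb^n(\mc))) = n \cdot n = n^2$, which is exactly the computation carried out in Example~\ref{ex:1:dh} (where $\dl{\RSa}(\ma^n(\mb^m(\mc))) = nm$ was established). Since $\size{\ma^n(\mb^n(\mc))} = 2n + 1$, we obtain a family of terms of linear size whose derivation lengths grow quadratically, giving $\dc{\RSa}(n) = \Omega(n^2)$.

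The main obstacle is not any single hard step but rather verifying that Lemma~\ref{l:quadratic} genuinely captures $\Delta$-restricted (as opposed to merely $\Delta$-linear) interpretations: the quadratic bound hinges critically on the restriction $a_{(f,i)} \in \{0,1\}$ from Definition~\ref{d:delta-restricted}, which is what prevents the coefficients from compounding multiplicatively across nesting depth (the mechanism responsible for the exponential blow-up in Lemma~\ref{l:exponential} and Theorem~\ref{t:exponential}). I would therefore make sure the proof of the upper bound explicitly invokes Lemma~\ref{l:quadratic} rather than Lemma~\ref{l:exponential}, and I would double-check that the witness interpretation in Example~\ref{ex:1:quadratic} indeed satisfies $a_{(\ma,1)}, a_{(\mb,1)} \in \{0,1\}$ so that $\RSa$ legitimately lies in the $\Delta$-restricted class. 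Everything else is a routine transfer of the already-proven lemmas.
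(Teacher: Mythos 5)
Your proof is correct and follows essentially the same route as the paper: termination and the bound $\dl{\RS}(t)\leqslant\inf_{\Delta>0}\eval{t}/\Delta$ from the $\Delta$-monotonicity of $\C$ (the paper cites Theorem~\ref{t:cdi} directly rather than Theorem~\ref{t:correctness}, which is the cleaner reference since compatibility is already a hypothesis), then Lemma~\ref{l:quadratic} and the infimum computation letting $\Delta\to\infty$ for the $\bO(n^2)$ upper bound, and Example~\ref{ex:rta08:1} with $\dl{\RSa}(\ma^n(\mb^m(\mc)))=nm$ for tightness. Your explicit treatment of the passage from ground terms to arbitrary terms is a detail the paper's proof glosses over, and is handled correctly.
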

\begin{proof}
By Theorem~\ref{t:cdi} $\RS$ is terminating and by Lemma~\ref{l:quadratic}, there
exists $K \in \N$, such that for any ground term $t$:
$\groundeval{t} \leqslant K ( \size{t} + K \size{t}^2 \Delta) \leqslant K^2 \size{t}^2 (\Delta + 1)$
and hence
\begin{equation*}
  \dl{\RS}(t) \leqslant\inf_{\Delta > 0}\frac{\groundeval{t}}{\Delta} \leqslant
  \inf_{\Delta > 0}\frac{K^2 \size{t}^2 (\Delta + 1)}{\Delta} = K^2 \size{t}^2 \tpkt
\end{equation*}
We obtain $\dl{\RS}(t) = \bO(\size{t}^2)$ for any $t \in \TA(\FS,\VS)$
and thus $\dc{\RS}(n) = \bO(n^2)$. 
The tightness of the bound follows by Example~\ref{ex:rta08:1}.
\end{proof}

By definition the constant employed in Theorem~\ref{t:quadratic} depends only
on the employed interpretation functions. Moreover this dependence is linear.
In concluding this section, we want to stress that $\Delta$-restricted interpretation
are even strong enough to handle non-simple terminating TRSs. 
\begin{example}[\cite{Hofbauer:2001}]
Consider the TRS $\RS$ with the one rule $\ma(\ma(x))\rew \ma(\mb(\ma(x)))$.
By applying the algorithm described in Section~\ref{Automation}, we
find the below given $\Delta$-restricted interpretation $\C$ automatically:
\begin{equation*}
\cdifunc{\ma}{\Delta}(z)=2z\Delta +2 \qquad
\cdifunc{\mb}{\Delta}(z)=z\Delta \qquad
\cdictx{\ma}{1}(\Delta)=\frac{1}{2} \qquad
\cdictx{\mb}{1}(\Delta)=1 \tpkt
\end{equation*}
By Theorem~\ref{t:correctness}, $\C$ is compatible with $\RS$.
Hence Theorem~\ref{t:quadratic} implies that 
the derivational complexity of $\RS$ is (at most) quadratic.
\end{example}

\section{Experimental Results} \label{rta08:Experiments}

In this section we describe the programs \cdiA, \cdiB, and \cdiC. These
programs provide search procedures for context dependent interpretations. 
The program~\cdiA\ implements the heuristics of Hofbauer in~\cite{Hofbauer:2001},
mentioned in Section~\ref{Automation} above.
On the other hand, programs~\cdiB\ and~\cdiC\ implement the 
algorithm presented in Section~\ref{Automation} 
and incorporate constraint solvers for Diophantine (in)equalities.
The program \cdiA\ searches for $\Delta$-linear interpretations, while 
\cdiB\ and \cdiC\ can search for $\Delta$-linear and $\Delta$-restricted interpretations. 
We summarise further differences below:
\begin{description}
\item[\cdiA] Firstly, the program searches for a polynomial interpretation
  compatible with a TRS $\RS$.
  This interpretation is then lifted to a polynomial context dependent interpretation 
  $\C$ as follows: Coefficients of the form $k+1$ are replaced by $k + \Delta$. 
  Finally  \textsc{Mathematica}%
\footnote{\url{http://www.wolfram.com/products/mathematica/}.}
is invoked to verify that the resulting CDA $\C$ is 
$\Delta$-monotone and compatible with $\RS$. 
 
\item[\cdiB] This programs employs a \emph{constraint propagation procedure}
to solve the Diophantine constraints in $\PCC(\RS,\C) \cup \VE(\RS,\C)$. 
Essentially the implementation follows the technique suggested in~\cite{contejean:2005}. 

\item[\cdiC] The Diophantine (in)equalities in $\PCC(\RS,\C) \cup \VE(\RS,\C)$
  are translated into \emph{propositional logic} and suitable assignments are found by employing a SAT solver, in our case
Mini\-Sat%
\footnote{\url{http://minisat.se/}.}. 
The implementation follows ideas presented in~\cite{FGMSTZ07} and
employs the \texttt{plogic} library of \TTTT.%
\footnote{\url{http://colo6-c703.uibk.ac.at/ttt2/}.}
\end{description}
The implementation of the transformation steps as described in Section~\ref{Automation}, 
is the same for \cdiB\ and \cdiC.
The programs~\cdiA, \cdiB, and \cdiC\ are written in OCaml%
\footnote{\url{http://www.caml.inria.fr/}.}
 (and parts of \cdiA\ in C). 
All three programs are fairly small: \cdiA\ consists of about 2000
lines of code, while \cdiB\ and \cdiC\ use roughly 3000 lines of code each.%
In Table~\ref{tab:rta08:exp} we summarise the comparison between the different 
programs \cdiA, \cdiB, and \cdiC. 
The numbers in the third line of the table refer to the number of bits 
maximally used in \cdiC\ to encode coefficients. Correspondingly for \cdiB\ we used 
32 as strict bound on the coefficients.
We are interested in automatically verifying the complexity of terminating TRSs. 
Consequentially, as testbed we employ those 957 TRSs from the version 4.0 of the Termination Problem Data Base 
(TPDB for short) that can be shown terminating with at least one of the 
tools that participated in the termination competition 2007.%
\footnote{These 957 systems and full experimental evidence can be found 
at \url{http://cl-informatik.uibk.ac.at/~aschnabl/experiments/cdi/}.}
The presented tests were performed single-threaded on a 
2.40 GHz Intel\textregistered\ Core\texttrademark\ 2 Duo with 2 GB of memory.
For each system we used a timeout of 60 seconds, the times in the tables are given in 
milliseconds. 

% table had to be simplified 11/08
% \begin{table}
% \caption{957 terminating TRSs}
% \label{tab:rta08:exp}
% \smallskip
%   \centering
%   \begin{tabular}{l|@{\hspace{1ex}}c|cc|@{\hspace{1ex}}cccc@{\hspace{1ex}}|@{\hspace{1ex}}cccc}
%     & \multicolumn{1}{c}{\cdiA} & \multicolumn{2}{c}{\cdiB} & \multicolumn{8}{c}{\cdiC} \\
%     & $\Delta$-linear & $\Delta$-restr.\ &  $\Delta$-linear & 
%     \multicolumn{4}{c}{$\Delta$-restricted} & \multicolumn{4}{c}{$\Delta$-linear} \\
%     & & & & 2 & 3 & 4 & 5 & 2 & 3 & 4 & 5 \\
%     \hline\hline
%     \# success & 19 & 61 & 62 & 83 & 86 & 86 & 86 & 82 & 82 & 82 & 83\\
%     average time & - & 3132 & 3595 & 3652 & 4041 & 4008 & 3986 & 5496 & 4981 & 5010 & 5527 \\
%     \# timeout & - & 276 & 782 & 144 & 189 & 222 & 238 & 525 & 687 & 751 & 797 \\
%   \end{tabular}
% \end{table}
\begin{table}
\caption{957 terminating TRSs}
\label{tab:rta08:exp}
\medskip
  \centering
  \begin{tabular}{l|@{\hspace{1ex}}c|cc|@{\hspace{1ex}}ccc@{\hspace{1ex}}|@{\hspace{1ex}}ccc}
    & \multicolumn{1}{c}{\cdiA} & \multicolumn{2}{c}{\cdiB} & \multicolumn{6}{c}{\cdiC} \\
    & $\Delta$-lin.\ & $\Delta$-restr.\ &  $\Delta$-lin.\ & 
    \multicolumn{3}{c}{$\Delta$-restricted} & \multicolumn{3}{c}{$\Delta$-linear} \\
    & & & & 3 & 4 & 5 & 3 & 4 & 5 \\
    \hline\hline
    \# success & 19 & 61 & 62 & 86 & 86 & 86 & 82 & 82 & 83\\
    average time & - & 3132 & 3652 & 4041 & 4008 & 5496 & 4981 & 5010 & 5527 \\
    \# timeout & - & 276 & 782 & 189 & 222 & 238 & 687 & 751 & 797 \\
  \end{tabular}
\end{table}

Observe that the heuristic proposed in~\cite{Hofbauer:2001} is not suitable
as an automatic procedure. (We have not indicated the time spent by \cdiA\ as the timing is
incomparable to the stand-alone approach of \cdiB\ or \cdiC.) With respect to
the comparison between \cdiB\ and \cdiC, the latter outperforms the former, if
at least 2 bits are used. Perhaps surprisingly the performance of
\cdiB\ and \cdiC\ on $\Delta$-restricted and $\Delta$-linear 
is almost identical.  
This can be explained by the strong impact of larger
bounds for the coefficients $a_{(f,i)}$ ($f \in \FS, 1 \leqslant i \leqslant \ar(f)$)
in the complexity of the issuing Diophantine (in)equalities.
However, for both programs \cdiB\ and \cdiC, 
the stronger technique gains one crucial system: Example~\ref{ex:rta08:2}.

Table~\ref{tab:comp} relates existing methods that induce
polynomial derivational complexities of TRSs to \cdiC. \SL\ refers to
\emph{strongly linear} interpretations, i.e., only interpretation
functions of the form $f_{\A}(x_1,\dots,x_n) = \sum_i x_i + c$, $c \in \N$ are allowed.  
Clearly compatibility with strongly linear interpretations induces \emph{linear}
derivational complexity. 
Secondly, \TTTBOX\ refers to the implementation of the match-bound technique as in~\cite{KM:2007}: Linear TRSs 
are tested for match-boundedness, non-linear, but non-duplicating TRSs are
tested for match-raise-boundedness. This technique
again implies \emph{linear} derivational complexity.
(Employing~\cite{HW:2004} (as in~\cite{GHWZ07}) one sees that any match-raise bounded TRS has linear derivational
complexity. Then the claim follows from Lemma~8 in~\cite{KM:2007}.)
Note that the restriction to non-duplicating TRS is harmless, as 
any duplicating TRS induces at least exponential derivational complexity.
No further termination methods that induce at most 
polynomial derivational complexities for TRSs 
have previously been known. In particular related work on implicit complexity 
(for example~\cite{BCMT:2001,Marion:2003,Marion:2006,AM:2008,BMM:2009:tcs}) 
does not provide methods that 
induce polynomial \emph{derivational complexities},
even if sometimes the derivation length can
be bounded polynomially, if the set of start terms
is suitably restricted.
Finally \CDI\ denotes our standard strategy: First, 
we search for a strongly linear interpretation. 
If such an interpretation cannot be found, then a $\Delta$-restricted interpretation
is sought (with $5$ bits as bound). 

\begin{table}
  \caption{Termination Methods as Complexity Analysers}
  \label{tab:comp}
  \medskip
  \centering
  \begin{tabular}{l|@{\hspace{1ex}}c@{\hspace{1ex}}|@{\hspace{1ex}}
      c@{\hspace{1ex}}|@{\hspace{1ex}}c@{\hspace{1ex}}|@{\hspace{1ex}}c}
    & \SL & \TTTBOX & \cdiC---$\Delta$-restricted & \CDI---$\Delta$-restricted\\
    \hline\hline
    \# success & 41 & 125 & 86 & 87 \\
     average time & 20 & 577 & 3986 & 3010 \\
     \# timeout & 0 & 225 & 238 & 237
  \end{tabular}
\end{table}

Some comments on the results reported in Table~\ref{tab:comp}: By definition the set of TRSs compatible with
a strongly linear interpretation is a (strict) subset of those treatable with \CDI. 
On the other hand the comparison between \TTTBOX\ and \CDI\ (or \cdiC) may appear
not very favourable for our approach. However, \CDI\ (and \cdiC) can handle
TRSs that cannot be handled by \TTTBOX.  
More precisely with respect to $\Delta$-restricted interpretations \CDI\ (and \cdiC)
can handle 38 (37) TRSs that cannot be handled with \TTTBOX. 
For instance the following example can only be handled with \CDI\ (and \cdiC).
\begin{example}
Consider the following rewrite system $\RSam$.
(This is Example 2.11 in Steinbach and Kühler's collection~\cite{SteinbachKuehler:1990}.) 
\begin{align*}
  \mN \m{+} y & \rew y & \mN \m{-} y & \rew y
  & \m{s}(x) \m{-} \m{s}(y) &\rew x \m{-} y\\
  \m{s}(x) \m{+} y & \rew \m{s}(x \m{+} y) & x \m{-} 0 & \rew x 
\end{align*}
It is easy to see that $\RSam$ is compatible with the following
(automatically generated) $\Delta$-restricted interpretation $\C$.
\begin{align*}
  \cdifunc{\m{-}}{\Delta}(x,y) &= x + y + 3y\Delta + 2\Delta &
  \cdifunc{\mN}{\Delta} &= 0\\
  \cdifunc{\m{+}}{\Delta}(x,y) &= x + y + x\Delta + \Delta &
  \cdifunc{\m{s}}{\Delta}(x) &= x + 2 \tkom
\end{align*}
with parameter functions: 
$\cdictx{\m{-}}{1}(\Delta) = \cdictx{\m{+}}{2}(\Delta) 
= \cdictx{\m{s}}{1}(\Delta) = \Delta$, 
$\cdictx{\m{-}}{2}(\Delta) = \frac{\Delta}{1 + 3\Delta}$, and
$\cdictx{\m{+}}{1}(\Delta) = \frac{\Delta}{1 + \Delta}$. Due to
Theorem~\ref{t:quadratic} we conclude \emph{quadratic} derivational
complexity, while the standard polynomial
interpretation would only allow to conclude an \emph{exponential} upper bound.
Note that the deduced quadratic derivational complexity provides an
optimal upper bound.
\end{example}
Another issue is the high average yes time (and the higher
number of timeouts) of \cdiC\ and \CDI\ in relation to existing techniques. 
Although a closer look reveals that the total times spent by \TTTBOX\ and
\CDI\ (or \cdiC) is relatively equal, an improvement of the efficiency
of the introduced tools seems worthwhile.

\begin{remark}
Note that \CDI\ in conjunction with 
\TTTBOX\ can automatically verify that 163 TRSs in the testbed 
are of at most \emph{quadratic} derivational complexity. 
Put differently more than 10\% of all 1381 TRSs (and more than a third of the 445 
non-duplicating TRSs) in version 4.0 of the TPDB are of quadratic derivational complexity.
\end{remark}

\section{Conclusion} \label{rta08:Conclusion}

In this paper we have presented 
two subclasses of context dependent interpretations, and
established tight upper bounds on the induced derivational complexities. 
More precisely, we have delineated two subclasses: 
$\Delta$-linear and $\Delta$-restricted context 
dependent interpretations that induce exponential and quadratic derivational
complexity, respectively.
Further, we introduced an algorithm for mechanically proving termination 
of rewrite systems with context dependent interpretations. As a consequence
we established a technique to automatically verify quadratic derivational
complexity of TRSs. 
Finally, we reported on different implementations of this algorithm and 
presented numerical data to compare these implementations with existing
methods that allow to automatically verify polynomial derivational
complexity of TRSs.

We believe the here presented approach can be extended further. 
A starting point for future work would be to decide whether it is possible 
to define additional subclasses of context dependent interpretations 
inducing polynomial derivational complexities that grow faster than quadratic. 
One possible approach is to drop the restriction to \emph{integer} coefficients
and thus generalise the notion of polynomial context dependent interpretations.
By Tarski's quantifier elimination method, such an extension turns the undecidable 
\emph{positivity problem} for Diophantine (in)equalities into a decidable 
problem. Further research will clarify the impact of this extension.
A crucial problem in practical considerations is the known 
ineffectivity of quantfier elimination, see for example~\cite{CJ:2004}.

}

\chapter{Automated Complexity Analysis Based on the Dependency Pair Method} 
\label{ijcar08a}

\subsection*{Publication Details}
N.~Hirokawa and G.~Moser.
\newblock Automated complexity analysis based on the dependency pair method.
\newblock In \emph{Proceedings of the 4th International Joint Conference on
  Automated Reasoning}, number 5195 in LNAI, pages 364--380. Springer Verlag,
  2008{\natexlab{a}}.%
\footnote{This research was partially supported by FWF (Austrian Science Fund) project P20133.}

\medskip
I am indebted to Dieter Hofbauer, who spotted an unfortunate mistake in 
Lemma~22 of the published version. This mistake has been rectified below. 
The mistake influenced the given experimental evidence and I would
like to thank Andreas Schnabl and Martin Avanzini for providing me 
with adjusted experimental data.

\subsection*{Ranking}
The International Joint Conference on Automated Reasoning has been ranked
\textbf{A+} by CORE in 2007.

\subsection*{Abstract}
In this paper, we present a variant of the dependency pair method for
analysing runtime complexities of term rewrite systems automatically.
This method is easy to implement, but significantly extends
the analytic power of existing direct methods.
Our findings extend the class of TRSs
whose linear or quadratic runtime complexity can be detected automatically.
We provide ample numerical data for assessing the viability of the method.

{
\input{ijcar08a.sty}

\section{Introduction} \label{ijcar08a:Intro}

Term rewriting is a conceptually simple but powerful abstract model of
computation that underlies much of declarative programming. 
In order to assess the complexity of a (terminating) term rewrite system (TRS for short) 
it is natural to look at the maximal length of derivation sequences, as
suggested by Hofbauer and Lautemann in~\cite{HofbauerLautemann:1989}. More precisely, 
the \emph{derivational complexity function} 
with respect to a (terminating and finitely-branching) TRS $\RS$ 
relates the length of the longest derivation sequence to the size of the initial term. 
For direct termination techniques it is often possible to 
establish upper-bounds on the
growth rate of the derivational complexity function 
from the termination proof of $\RS$, see for 
example~\cite{HofbauerLautemann:1989,Hofbauer:1992,Weiermann:1995,Hofbauer:2001,Moser:2006lpar,GHWZ07}.

However, if one is interested in methods that induce feasible (i.e., polynomial)
complexity, the existing body of research is not directly applicable. 
On one hand this is due to the fact that for standard techniques 
the derivational complexity cannot be contained by polynomial growth rates. 
(See~\cite{GHWZ07} for the exception to the rule.)
Already termination proofs by polynomial interpretations induce a double-exponential
upper-bound on the derivational complexity, cf.~\cite{HofbauerLautemann:1989}. 
On the other hand this is---to some extent---the consequence of the \emph{definition} 
of derivational complexity as this measure
does not discriminate between different types of initial terms, while in modelling
declarative programs the type of the initial term is usually quite restrictive. 
The following example clarifies the situation.
\begin{example} \label{ex:ijcar08a:1}
% this is example TRS/Cime/append
%
Consider the TRS $\RR$
\begin{alignat*}{4}
1\colon &\;& \isempty(\nil) &\rew {\top} & \hspace{5ex}
5\colon && \app(x,y) &\rew \ifapp(x,y,x) \\
2\colon && \isempty({x} \cons {y}) &\rew {\bottom} &
6\colon && \ifapp(x,y,\nil) & \rew y\\
3\colon && \hd({x} \cons {y}) &\rew x  &
7\colon && \ifapp(x,y,{u} \cons {v}) & \rew {u} \cons {\app(v,y)}\\
4\colon &\;& \tl({x} \cons {y}) &\rew y
\end{alignat*}
Although the functions \emph{computed} by $\RS$ are obviously feasible this
is not reflected in the derivational complexity of $\RS$. Consider rule 5, which
we abbreviate as $C[x] \to D[x,x]$. 
Since the maximal derivation length starting with $C^n[x]$ equals $2^{n-1}$ for all $n > 0$,
$\RS$ admits (at least) exponential derivational complexity.
\end{example}
After a moment one sees that this behaviour is forced upon us, 
as the TRS $\RS$ may duplicate variables, i.e., $\RS$ is
\emph{duplicating}. Furthermore, in general 
the applicability of the above results is typically limited to simple termination.
(But see~\cite{Hofbauer:2001,Moser:2006lpar,GHWZ07} for exceptions to this rule.)
To overcome the first mentioned restriction we propose to 
study \emph{runtime complexities} of rewrite systems.
The \emph{runtime complexity function} with respect to a TRS $\RS$ 
relates the length of the longest derivation sequence to the size
of the arguments of the initial term, where the arguments are supposed to be in normal form.
In order to overcome the second restriction, we base our study
on a fresh analysis of the \emph{dependency pair method}. 
The dependency pair method~\cite{ArtsGiesl:2000}
is a powerful (and easily automatable) method for proving termination of term rewrite systems.
In contrast to the above cited direct termination methods, 
this technique is a \emph{transformation} technique, allowing for 
applicability beyond simple termination.

Studying (runtime) complexities induced by the dependency pair method is challenging. 
Below we give an (easy) example showing that 
the direct translations of original theorems formulated 
in the context of termination analysis is destined to failure in the context of 
runtime complexity analysis. 
If one recalls that the dependency pair method is based on 
the observation that from an arbitrary non-terminating term 
one can extract a minimal non-terminating subterm, this is not surprising.
Through a very careful investigation of the original formulation of
the dependency pair method (see~\cite{ArtsGiesl:2000,GAO:2002}, but also \cite{HirokawaMiddeldorp:2007}), we 
establish a runtime complexity analysis based on the dependency pair method.
In doing so, we introduce \emph{weak dependency pairs} and \emph{weak innermost dependency pairs}
as a general adaption of dependency pairs to (innermost) runtime complexity analysis. 
Here the \emph{innermost} runtime complexity function with respect to a TRS $\RS$ 
relates the length of the longest innermost derivation sequence to the size
of the arguments of the initial term, where again the arguments are supposed to be in normal form.

Our main result shows how natural improvements of the dependency pair method, 
like 
\emph{usable rules}, 
\emph{reduction pairs}, and 
\emph{argument filterings} become applicable in this context.
Moreover, for innermost rewriting, 
we establish an easy criterion to decide when 
weak innermost dependency pairs can be replaced 
by ``standard'' dependency pairs without introducing fallacies. 
Thus we establish (for the first time) a method to analyse
the derivation length induced by the (standard) dependency pair method for innermost rewriting.
We have implemented the technique and experimental evidence shows that the
use of weak dependency pairs significantly increases the applicability of the body
of existing results on the estimation of derivation length via
termination techniques. In particular, our findings extend the class of TRSs
whose linear or quadratic runtime complexity can be detected automatically.

The remainder of this paper is organised as follows.
In the next section we recall basic notions and starting points of
this paper. Sections~\ref{dependency pairs} and~\ref{usable} 
introduce weak dependency pairs and discuss the employability of the
usable rule criterion. 
In Section~\ref{semantical gap} we show how to estimate runtime 
complexities through relative rewriting and in 
Section~\ref{Refinements} we state our Main Theorem.
The presented technique
has been implemented and we provide ample numerical data for assessing 
the viability of the method. This evidence can be found in Section~\ref{ijcar08a:Experiments}.
Finally in Section~\ref{ijcar08a:Conclusion} we conclude and mention possible 
future work.

\section{Preliminaries}

We assume familiarity with term rewriting~\cite{BaaderNipkow:1998,Terese} but briefly
review basic concepts and notations.
Let $\VS$ denote a countably infinite set of variables and $\FS$ a 
signature. The set of terms over $\FS$ and $\VS$ is denoted by 
$\TERMS$. The \emph{root symbol} of a term $t$ is either $t$ itself, if
$t \in \VS$, or the symbol $f$, if $t = f(\seq{t})$. 
The \emph{set of position} $\Pos(t)$ of a term $t$ is defined as
usual. We write $\Pos_{\GG}(t) \subseteq \Pos(t)$ for the set of
positions of subterms, whose root symbol is contained in $\GG \subseteq \FS$.
The subterm relation is denoted as $\subterm$.
$\Var(t)$ denotes the set of variables occurring in a term $t$ 
and the \emph{size} $\size{t}$ of a term is defined 
as the number of symbols in $t$. 

A \emph{term rewrite system} (\emph{TRS} for short) $\RS$ over
$\TERMS$ is a \emph{finite} set of rewrite
rules $l \to r$, such that $l \notin \VS$ and $\Var(l) \supseteq \Var(r)$.
The smallest rewrite relation that contains $\RS$ is denoted by
$\to_{\RS}$. 
The transitive closure of $\to_{\RS}$ is denoted by $\rstrew{\RS}$, and
its transitive and reflexive closure by $\rssrew{\RS}$.
We simply write $\to$ for $\to_{\RS}$ if $\RS$ is clear from context.
A term $s \in \TERMS$ is called a \emph{normal form} if there is no
$t \in \TERMS$ such that $s \to t$. 
With $\NF(\RR)$ we denote the set of all normal forms of a term rewrite
system $\RR$.
The \emph{innermost rewrite relation} $\irew{\RR}$
of a TRS $\RR$ is defined on terms as follows: $s \irew{\RR} t$ if 
there exist a rewrite rule $l \to r \in \RR$, a context $C$, and
a substitution $\sigma$ such that $s = C[l\sigma]$, $t = C[r\sigma]$,
and all proper subterms of $l\sigma$ are normal forms of $\RR$.  
The set of defined function symbols is denoted as $\DS$, while the constructor
symbols are collected in $\CS$.
We call a term $t = f(\seq{t})$ \emph{basic} if $f \in \DS$ and $t_i \in \TA(\CS,\VS)$
for all $1 \leqslant i \leqslant n$. A TRS $\RS$ is called \emph{duplicating} if 
there exists a rule $l \to r \in \RS$ such that a variable occurs more often in $r$ than in $l$. 
We call a TRS \emph{terminating} if no infinite rewrite sequence
exists. Let $s$ and $t$ be terms. If exactly $n$ steps are performed to rewrite $s$
to $t$ we write $s \to^n t$.
The \emph{derivation length} of a terminating term $t$ with respect to a 
TRS $\RS$ and rewrite relation $\rsrew{\RS}$ is defined as: 
$\dheight(s,\rsrew{\RS}) = \max\{ n \mid \exists t \; s \to^n t \}$. 
Let $\RR$ be a TRS and $T$ be a set of terms. 
The \emph{runtime complexity function with respect to a relation $\to$ on $T$}
is defined as follows:
\begin{equation*}
\Rc{}(n, T, \rew) = \max\{ \dheight(t, \rew) \mid 
\text{$t \in T$ and $\size{t} \leqslant n$}\} \tpkt
\end{equation*}
In particular we are interested in the 
(innermost) runtime complexity with respect to $\rsrew{\RS}$ ($\irew{\RS}$)
on the set $\TB$ of all \emph{basic} terms, as defined above.%
\footnote{We can replace $\TB$ by the set of terms $f(\seq{t})$ with $f\in \DS$,
whose arguments $t_i$ are in normal form, while keeping all results in this paper.
\label{FN}} 
More precisely, the \emph{runtime complexity function} 
(with respect to $\RS$) is defined as 
$\Rc{\RS}(n) \defsym \Rc{}(n, {\TB}, \rsrew{\RS})$ and we 
define the \emph{innermost runtime complexity function} 
as $\Rci{\RS}(n) \defsym \Rc{}(n, {\TB}, \irew{\RS})$.
Finally, the \emph{derivational complexity function} (with respect to $\RS$) 
becomes definable as follows: $\Dc{\RS}(n) = \Rc{}(n, \TA, \rsrew{\RS})$, where
$\TA$ denotes the set of all terms $\TA(\FS,\VS)$. 
We sometimes say the (innermost) runtime complexity of $\RR$ is \emph{linear},
\emph{quadratic}, or \emph{polynomial} if $\rc_\RR^{(\m{i})}(n)$ is bounded 
linearly, quadratically, or polynomially in $n$, respectively.
Note that the derivational complexity and the runtime complexity of a 
TRS $\RS$ may be quite different: In general
it is not possible to bound $\Dc{\RS}$ polynomially in $\Rc{\RS}$, 
as witnessed by Example~\ref{ex:ijcar08a:1} and the observation that
the runtime complexity of $\RS$ is linear 
(see~Example~\ref{ex:ijcar08a:1:final}, below).

A \emph{proper order} is a transitive and irreflexive relation and
a \emph{preorder} is a transitive and reflexive relation. 
A proper order $\succ$ is \emph{well-founded} if there is 
no infinite decreasing sequence $t_1 \succ t_2 \succ t_3 \cdots$.
A well-founded proper order that also is a rewrite relation is
called a \emph{reduction order}. 
We say a reduction order $\succ$ and a TRS $\RS$ are \emph{compatible}
if $\RS \subseteq {\succ}$. It is well-known that a TRS is terminating
if and only if there exists a compatible reduction order.
An $\FS$-\emph{algebra} $\A$ consists of a carrier set $A$ and a collection
of interpretations $f_\A$ for each function symbol in $\FS$. 
A \emph{well-founded} and \emph{monotone} algebra (\emph{WMA} for short) 
is a pair $(\A,>)$, where $\A$ is an algebra and $>$ is a well-founded proper order on $A$ 
such that every $f_\A$ is monotone in all arguments.
An \emph{assignment} $\alpha \colon \VS \to A$ is a function mapping
variables to elements in the carrier.
A WMA naturally induces a proper order $\gord{\A}$ on terms: 
$s \gord{\A} t$ if $\eval{\alpha}{\A}(s) > \eval{\alpha}{\A}(t)$
for all assignments $\alpha \colon \VS \to A$.

\section{The Dependency Pair Method}
\label{dependency pairs}

The purpose of this section is to take a fresh look at the dependency
pair method from the point of complexity analysis.
Familiarity with~\cite{ArtsGiesl:2000,HirokawaMiddeldorp:2007} will be helpful. 
The dependency pair method for termination analysis is based on 
the observation that from an arbitrary non-terminating term
one can extract a minimal non-terminating subterm.
For complexity analysis we employ a similar observation:
From a given term $t$ one can extract a list of subterms 
whose sum of the derivation lengths is equal to the derivational 
length of $t$.

Let $X$ be a set of symbols.
We write $\SC{t_1, \ldots, t_n}_X$ to denote $C[t_1,\ldots,t_n]$,
whenever $\rt(t_i) \in X$ for all $1 \leqslant i \leqslant n$
and $C$ is an $n$-hole context containing no $X$-symbols.  
(Note that the context $C$ may be degenerate and does not contain a hole $\ctx$
or it may be that $C$ is a hole.)
Then, every term $t$ can be uniquely written in
the form $\SC{t_1, \ldots, t_n}_X$.

\begin{lemma}
\label{l:sum:1}
Let $t$ be a terminating term, and let $\sigma$ be a substitution.
Then we have $\dheight(t\sigma, \to_\RS) = \sum_{1 \leqslant i \leqslant n} \dheight(t_i\sigma, \rsrew{\RS})$,
whenever $t = \SC{t_1, \ldots, t_n}_{\DD \cup \VV}$.
\end{lemma}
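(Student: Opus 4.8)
The plan is to exploit the defining property of the decomposition $t = \SC{\seq{t}}_{\DD \cup \VV}$: since $C$ contains no symbols from $\DD \cup \VV$, the context $C$ is built \emph{solely} from constructor symbols in $\CS$ and, crucially, contains no variables. Hence $C\sigma = C$, so that $t\sigma = C[t_1\sigma,\dots,t_n\sigma]$ with the subterms $t_1\sigma,\dots,t_n\sigma$ sitting at pairwise parallel hole positions $\seq{p}$ of $C$. First I would establish the central \emph{locality} claim: in $t\sigma$, and in every one of its reducts, each redex is rooted at a position at or below some hole $p_i$, i.e.\ strictly inside a copy of some $t_i\sigma$. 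This holds because the root symbol of every left-hand side of $\RS$ is defined, so any redex is rooted at a symbol of $\DD$, whereas every position of $t\sigma$ strictly above the holes carries a constructor symbol of $C$. Consequently no rewrite step ever touches $C$, and an easy induction on derivation length shows that each term reachable from $t\sigma$ has the shape $C[s_1,\dots,s_n]$, where each $s_i$ is a reduct of $t_i\sigma$ and exactly one component changes per step. (When $t_i \in \VV$ the subterm $t_i\sigma = \sigma(t_i)$ is handled identically.) The same confinement also shows that $t\sigma$ terminates iff each $t_i\sigma$ does, so all derivation lengths in the statement are well defined and finite.

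For the inequality $\dheight(t\sigma,\to_\RS) \geqslant \sum_{1 \leqslant i \leqslant n} \dheight(t_i\sigma,\rsrew{\RS})$, I would choose for each $i$ a maximal derivation of $t_i\sigma$ of length $\dheight(t_i\sigma,\rsrew{\RS})$. Because $\to_\RS$ is closed under contexts and the holes of $C$ are pairwise parallel, each such derivation can be carried out inside the $i$-th hole of $t\sigma$ without affecting the other subterms. Concatenating these $n$ blocks yields a single derivation from $t\sigma$ of length exactly $\sum_i \dheight(t_i\sigma,\rsrew{\RS})$.

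For the converse inequality I would take an arbitrary derivation $t\sigma = u_0 \to u_1 \to \cdots \to u_m$ and project it onto the holes. By the locality claim, $u_k = C[s_1^k,\dots,s_n^k]$ with $s_i^0 = t_i\sigma$, and the $k$-th step rewrites exactly one component, say the $i_k$-th, leaving the others fixed. Thus for each $i$ the subsequence of genuine steps performed at hole $i$ constitutes a valid derivation of $t_i\sigma$ of some length $m_i \leqslant \dheight(t_i\sigma,\rsrew{\RS})$, and the steps partition so that $m = \sum_i m_i$. Hence $m \leqslant \sum_i \dheight(t_i\sigma,\rsrew{\RS})$; since the derivation was arbitrary, $\dheight(t\sigma,\to_\RS) \leqslant \sum_i \dheight(t_i\sigma,\rsrew{\RS})$. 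Combining the two inequalities gives the claimed equality.

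The hard part will be the locality claim, and in particular verifying that the constructor context $C$ is preserved along the \emph{entire} derivation rather than merely at the first step — this is exactly what licenses both the concatenation used for the lower bound and the projection used for the upper bound. Everything rests on the two facts that $C$ is variable-free (so $\sigma$ cannot inject a defined symbol into $C$) and that a defined symbol heads every redex; once these are secured, the independence of the parallel subterms $t_i\sigma$, and hence the additivity of derivation lengths, follows routinely.
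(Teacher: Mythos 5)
Your proof is correct: the paper states this lemma without any proof, and your argument --- the context $C$ separating off the $t_1,\dots,t_n$ contains neither variables nor defined symbols, so $C\sigma = C$, every redex is headed by a defined symbol and hence sits at or below one of the parallel hole positions, and therefore every derivation from $t\sigma$ decomposes into independent, interleavable derivations of the $t_i\sigma$, giving both inequalities --- is exactly the justification the authors leave implicit. The only overstatement is the claim that all derivation lengths are ``well defined and finite'': termination of $t$ does not by itself force termination of $t\sigma$ for an arbitrary $\sigma$, but this imprecision is inherited from the lemma as stated and is harmless in its applications, where $t\sigma$ is terminating (and in any case your confinement argument shows that both sides are infinite together).
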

We define the function $\COM$ as a mapping from tuples of terms to terms 
as follows: $\COM(\seq{t})$ is $t_1$ if $n = 1$, and
$c(t_1,\ldots,t_n)$ otherwise.
Here $c$ is a fresh $n$-ary function symbol called \emph{compound symbol}.
The above lemma motivates the next definition of 
\emph{weak dependency pairs}. 

\begin{definition}
\label{d:WDP}
Let $t$ be a term. We set $t^\sharp \defsym t$ if $t \in \VV$, and 
$t^\sharp \defsym f^\sharp(t_1,\dots,t_n)$ if $t = f(\seq{t})$.
Here $f^\sharp$ is a new $n$-ary function symbol called 
\emph{dependency pair symbol}. For a signature
$\FS$, we define $\FS^\sharp = \FS \cup \{f^\sharp \mid f\in \FS\}$.
Let $\RR$ be a TRS. 
If $l \rew r \in \RS$ and $r = \SC{\seq{u}}_{\DD \cup \VV}$ then 
the rewrite rule $l^\sharp \to \COM(u_1^\sharp,\ldots,u_n^\sharp)$ is called
a \emph{weak dependency pair} of $\RS$.
The set of all weak dependency pairs is denoted by $\WDP(\RS)$.
\end{definition}
%
%GM: clarification
While dependency pair symbols are defined with respect to $\WDP(\RS)$,
these symbols are not defined with respect to the original system $\RS$. In the
sequel defined symbols, refer to the defined function symbols of $\RS$.

\begin{example}[continued from Example~\ref{ex:ijcar08a:1}] \label{ex:ijcar08a:1:WDP}
The set $\WDP(\RR)$ consists of the next seven weak dependency pairs:
\begin{alignat*}{4}
5\colon & \; & \isempty^\sharp(\nil) & \rew \m{c} & \hspace{5ex}
9\colon & \; & \app^{\sharp}(x,y) &\rew \ifapp^{\sharp}(x,y,x)\\
6\colon && \isempty^\sharp({x} \cons {y}) & \rew \m{d} &
10\colon && \ifapp^\sharp(x,y,\nil) &\rew y\\ 
7\colon && \hd^\sharp({x} \cons {y}) & \rew x &
11\colon && \ifapp^\sharp(x,y,{u} \cons {v}) &\rew \m{e}(u,\app^\sharp(v,y))\\
8\colon && \tl^\sharp({x} \cons {y}) & \rew y \tpkt
\end{alignat*}
\end{example}
                
\begin{lemma}
\label{l:dp:1}
Let $t \in \TA(\FS,\VS)$ be a terminating term with $\rt(t) \in \DD$. 
We have
$\dheight(t, \rsrew{\RS}) = \dheight(t^\sharp, \rsrew{\WDP(\RR) \cup \RR})$.
\end{lemma}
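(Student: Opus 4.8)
The plan is to prove the equality by strengthening it to an invariant that is preserved along $\rsrew{\WDP(\RS)\cup\RS}$-derivations and that pins down the derivation height exactly. Call a term $s$ \emph{layered} if all occurrences of compound symbols form a top prefix, i.e.\ $s = C[b_1,\ldots,b_k]$ where $C$ is a (possibly empty) context built solely from compound symbols and each \emph{block} $b_i$ is not rooted by a compound symbol. By freshness of the compound symbols and the shape of $\WDP(\RS)$, each block is either (i) a term over the original signature $\FS$, or (ii) a term $g^\sharp(\vec a)$ carrying a single dependency pair symbol at its root with arguments over $\FS$. For a block $b$ let $\flat(b)$ denote the term over $\FS$ obtained by deleting the sharp from its root, if present, and define the potential $\Phi(s) = \sum_{i=1}^{k}\dheight(\flat(b_i),\rsrew{\RS})$. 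First I would note that $t^\sharp$ is layered with the single block $t^\sharp$ and $\flat(t^\sharp)=t$, so the lemma is exactly the instance $\dheight(t^\sharp,\rsrew{\WDP(\RS)\cup\RS}) = \Phi(t^\sharp) = \dheight(t,\rsrew{\RS})$ of the claim ``$\dheight(s,\rsrew{\WDP(\RS)\cup\RS}) = \Phi(s)$ for every layered $s$ all of whose blocks $b_i$ satisfy that $\flat(b_i)$ is terminating''.

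Next I would analyse a single $\rsrew{\WDP(\RS)\cup\RS}$-step out of a layered term. Since no rule has a compound symbol in its left-hand side, every redex sits at or below the root of some block $b_i$, and the reduct is again layered. I would then distinguish three cases. An $\RS$-step strictly below the root of $b_i$ mirrors an $\RS$-step $\flat(b_i)\rsrew{\RS}\flat(b_i')$; an $\RS$-step at the root of a type-(i) block is a genuine root $\RS$-step of $\flat(b_i)$; and a $\WDP(\RS)$-step at the root of a type-(ii) block $b_i=l^\sharp\sigma$ corresponds to the root step $\flat(b_i)=l\sigma\rsrew{\RS}r\sigma$ and replaces $b_i$ by $\COM(u_1^\sharp\sigma,\ldots,u_m^\sharp\sigma)$, where $r=\SC{u_1,\ldots,u_m}_{\DD\cup\VV}$. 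The crucial point here is that $\flat(u_j^\sharp\sigma)=u_j\sigma$ and, by Lemma~\ref{l:sum:1} applied to $r=\SC{u_1,\ldots,u_m}_{\DD\cup\VV}$, the new blocks contribute exactly $\sum_{j}\dheight(u_j\sigma,\rsrew{\RS}) = \dheight(r\sigma,\rsrew{\RS})$ to the potential. In each case $\Phi$ drops by at least one, because one step of some $\flat(b_i)$ is simulated; this yields $\dheight(s,\rsrew{\WDP(\RS)\cup\RS})\leqslant\Phi(s)$ by induction on $\Phi(s)\in\N$ (finite, since each $\flat(b_i)$ is terminating and $\RS$ is finitely branching).

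For the reverse inequality I would, conversely, reconstruct a $\WDP(\RS)\cup\RS$-derivation of length exactly $\Phi(s)$, again by induction on $\Phi(s)$. If $\Phi(s)>0$, pick a block $b_i$ with $\dheight(\flat(b_i),\rsrew{\RS})>0$ together with a first $\RS$-step of $\flat(b_i)$ whose reduct has derivation height $\dheight(\flat(b_i),\rsrew{\RS})-1$. If that step is below the root I perform the matching $\RS$-step inside $b_i$; if it is a root step I perform the corresponding $\RS$-step (type (i)) or $\WDP(\RS)$-step (type (ii)) at the root of $b_i$, the latter splitting $b_i$ into its compound pieces. Using Lemma~\ref{l:sum:1} exactly as above, each such step lowers $\Phi$ by precisely one and keeps $s$ layered with terminating flattened blocks, so the induction hypothesis supplies a continuation of length $\Phi(s)-1$, giving $\dheight(s,\rsrew{\WDP(\RS)\cup\RS})\geqslant\Phi(s)$. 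Combining the two inequalities proves the claim, and hence the lemma.

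I expect the main obstacle to be the careful bookkeeping of this layered form, in particular the fact that a variable position in a right-hand side is copied \emph{without} a sharp by $\COM$, so that a block $u_j\sigma$ can itself be defined-rooted and must be treated as a pure-$\FS$ block undergoing ordinary $\RS$-rewriting. Verifying that this creates no spurious $\WDP(\RS)$-redexes—no sharp symbol ever occurs below a block root, and $\RS$ never introduces one—and that Lemma~\ref{l:sum:1} nevertheless accounts for its contribution is the delicate part of the argument.
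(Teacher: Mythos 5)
Your proof is correct, and it is organized differently from the one in the paper. The paper proves $\dheight(t, \rsrew{\RS}) \leqslant \dheight(t^\sharp, \rsrew{\WDP(\RR) \cup \RR})$ by induction on $\dheight(t,\rsrew{\RS})$: a first step below the root keeps the root defined and the induction hypothesis applies directly, while a root step $l\sigma \rsrew{\RS} r\sigma$ is matched by the corresponding weak dependency pair, Lemma~\ref{l:sum:1} splits $\dheight(r\sigma,\rsrew{\RS})$ into the heights of the pieces $u_i\sigma$, and the induction hypothesis is applied to each piece separately; the converse inequality is then dispatched with the remark that it follows ``similarly'' by induction on $\dheight(t^\sharp,\rsrew{\WDP(\RS)\cup\RS})$. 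You instead generalize the statement to all layered terms, introduce the potential $\Phi$, and establish $\dheight(s,\rsrew{\WDP(\RS)\cup\RS})=\Phi(s)$ by a single-step simulation in both directions, inducting on $\Phi(s)$. The essential content is shared---the case split between root and non-root positions and the use of Lemma~\ref{l:sum:1} to account for the splitting of $r\sigma$ under a compound symbol---but your version makes explicit two things the paper leaves implicit: the shape invariant of the terms reachable from $t^\sharp$ (dependency pair symbols only at block roots, compound symbols only in the top context), which is precisely what justifies that a derivation below a compound symbol decomposes into independent derivations from its arguments, and, as a consequence, a fully detailed proof of the reverse inequality rather than an appeal to symmetry. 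The price is the bookkeeping of the layered form; in particular, your observation that a variable $u_j$ in $r=\SC{\seq{u}}_{\DD\cup\VV}$ produces an unsharped block $u_j\sigma$ which may be defined-rooted and must then be rewritten as an ordinary $\RS$-term is exactly the point where a careless version of this argument would break, and you handle it correctly.
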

\begin{proof}
We show 
$\dheight(t, \rsrew{\RS}) \leqslant
 \dheight(t^\sharp, \rsrew{\WDP(\RR) \cup \RR})$ by 
induction on $\ell = \dheight(t, \rsrew{\RS})$.
If $\ell = 0$, the inequality is trivial.
Suppose $\ell > 0$.  Then there exists a term $u$ such that
$t \rsrew{\RS} u$ and $\dheight(u, \rsrew{\RS}) = \ell - 1$.
We distinguish two cases depending on the rewrite position $p$.
\begin{itemize}
\item
If $p$ is a position below the root, then clearly $\rt(u) = \rt(t) \in \DD$ and
$t^\sharp \rsrew{\RS} u^\sharp$.  The induction hypothesis yields
$\dheight(u, \rsrew{\RS}) \leqslant
 \dheight(u^\sharp, \rsrew{\WDP(\RS) \cup \RS})$,
and we obtain
$\ell \leqslant \dheight(t^\sharp, \rsrew{\WDP(\RS) \cup \RS})$.
\item
If $p$ is a root position, then there exist a rewrite rule $l \to r \in \RS$ and
a substitution $\sigma$ such that $t = l\sigma$ and $u = r\sigma$.
We have $r = \SC{\seq{u}}_{\DD \cup \VV}$ and thus by definition
$l^\sharp \to \COM(u_1^\sharp,\ldots,u_n^\sharp) \in \WDP(\RR)$ such that
$t^\sharp = l^\sharp\sigma$. Now, either $u_i \in \VV$ or $\rt(u_i) \in \DD$ 
for every $1 \leqslant i \leqslant n$.
Suppose $u_i \in \VV$. Then $u_i^\sharp\sigma = u_i\sigma$ and clearly no
dependency pair symbol can occur and thus,
\begin{equation*}
\dheight(u_i\sigma, \rsrew{\RS})
= \dheight(u_i^\sharp\sigma, \rsrew{\RS})
= \dheight((u_i\sigma)^\sharp, \rsrew{\WDP(\RS) \cup \RS}) \tpkt
\end{equation*}
Otherwise, if $\rt(u_i) \in \DD$ then $u_i^\sharp\sigma = (u_i\sigma)^\sharp$.
Hence $\dheight(u_i\sigma, \rsrew{\RS}) \leqslant \dheight(u, \rsrew{\RS}) < l$,
and we conclude
$\dheight(u_i\sigma, \rsrew{\RS}) \leqslant
 \dheight(u_i^\sharp\sigma, \rsrew{\WDP(\RS) \cup \RS})$
from the induction hypothesis. Therefore,
\begin{align*}
\ell 
& = \dheight(u, \rsrew{\RS}) + 1
  = \sum_{1 \leqslant i \leqslant n} \dheight(u_i\sigma, \rsrew{\RR}) + 1
  \leqslant \sum_{1 \leqslant i \leqslant n} \dheight(u_i^\sharp\sigma, \rsrew{\WDP(\RS) \cup \RR}) + 1\\
& \leqslant \dheight(\COM(u_1^\sharp,\ldots,u_n^\sharp)\sigma, 
                     \rsrew{\WDP(\RS) \cup \RR}) + 1
  = \dheight(t^\sharp, \rsrew{\WDP(\RS) \cup \RR}) \tpkt
\end{align*}
Here we used Lemma~\ref{l:sum:1} for the second equality.
\end{itemize}
Note that $t$ is $\RS$-reducible if and only if $t^\sharp$ is
$\WDP(\RS)\cup\RS$-reducible. Hence as $t$ is terminating, 
$t^\sharp$ is terminating on $\rsrew{\WDP(\RS)\cup\RS}$. Thus, similarly, 
$\dheight(t, \rsrew{\RS}) \geqslant
 \dheight(t^\sharp, \rsrew{\WDP(\RR) \cup \RR})$ 
is shown by induction on $\dheight(t^{\sharp}, \rsrew{\WDP(\RS)\cup\RS})$.
\end{proof}

\begin{lemma}
\label{l:sum:2}
Let $t$ be a terminating term and $\sigma$ a substitution such that $x\sigma$ is a normal form of $\RS$ 
for all $x \in \Var(t)$.
Then $\dheight(t\sigma, \rsrew{\RS}) =  \sum_{1 \leqslant i \leqslant n} \dheight(t_i\sigma, \rsrew{\RS})$,
whenever $t = \SC{t_1, \ldots, t_n}_\DD$.
\end{lemma}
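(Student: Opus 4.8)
The plan is to reduce the claim to a single structural observation: under the stated hypotheses, \emph{every} redex of $t\sigma$ occurs at a position lying at or below one of the holes of the context $C$, hence entirely within one of the subterms $t_1\sigma,\dots,t_n\sigma$. Once this localisation is established, the identity follows from a routine disjointness argument, since $t_1\sigma,\dots,t_n\sigma$ sit at pairwise parallel positions of $t\sigma$ and therefore cannot interfere during rewriting. This mirrors the reasoning behind Lemma~\ref{l:sum:1}; the essential difference is that there the decomposition is taken with respect to $\DD\cup\VV$, so the surrounding context is built from constructors alone, whereas here we decompose only with respect to $\DD$ and must in addition control the variable positions of $C$ — which is exactly what the assumption that $x\sigma\in\NF(\RS)$ for all $x\in\Var(t)$ provides.

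First I would make the decomposition explicit, writing $t = \SC{t_1,\dots,t_n}_{\DD}$, so that $C$ is an $n$-hole context containing no defined symbol; thus every symbol of $C$ is a constructor from $\CS$, a variable of $\Var(t)$, or a hole. As holes are untouched by substitution, $t\sigma = C\sigma[t_1\sigma,\dots,t_n\sigma]$. I would then partition $\Pos(t\sigma)$ into three classes: (i) positions at or below a hole of $C$; (ii) context positions of $C$ carrying a constructor symbol; and (iii) positions at or below a variable occurrence of $C$, i.e.\ positions inside some $x\sigma$ with $x\in\Var(t)$. A redex position must carry a defined root symbol. Class~(ii) is excluded because its root is a constructor; class~(iii) is excluded because $(t\sigma)$ restricted to such a position is a subterm of the normal form $x\sigma$, hence irreducible. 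Consequently every redex of $t\sigma$ lies in class~(i), that is, strictly within some $t_i\sigma$.

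Having localised all reductions, I would conclude the derivation-length identity by the standard parallel-independence argument for disjoint subterms. For the inequality $\leqslant$, any reduction starting from $t\sigma$ only contracts redexes inside the $t_i\sigma$; since these occupy pairwise disjoint positions, projecting the reduction onto the $i$-th component yields an independent reduction of $t_i\sigma$, so the length of the original reduction equals the sum of the lengths of its projections, each bounded by $\dheight(t_i\sigma,\rsrew{\RS})$. For the reverse inequality, maximal reductions of the individual $t_i\sigma$ can be interleaved into a single reduction of $t\sigma$ of the corresponding total length, because rewriting in one component leaves the others and the constructor/normal-form part of the context unchanged. Moreover the localisation immediately shows that $t\sigma$ is terminating if and only if each $t_i\sigma$ is, so both sides of the asserted equation are simultaneously finite or infinite and the identity is well posed; in the terminating case the two inequalities give $\dheight(t\sigma,\rsrew{\RS}) = \sum_{1\leqslant i\leqslant n}\dheight(t_i\sigma,\rsrew{\RS})$.

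The main obstacle is the rigorous justification of the localisation step, specifically ruling out redexes that could arise at the variable positions of $C$ after substitution. This is the only place where the hypothesis on $\sigma$ is used, and it is precisely what distinguishes Lemma~\ref{l:sum:2} from Lemma~\ref{l:sum:1}: admitting variables into the context $C$ (by decomposing with respect to $\DD$ rather than $\DD\cup\VV$) is harmless for derivation length only because the substituted values are already normal forms and hence contribute no redexes. The remaining parallel-independence bookkeeping is entirely routine.
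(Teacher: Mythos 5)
Your proof is correct, and it supplies exactly the argument the paper leaves implicit (Lemma~\ref{l:sum:2} is stated there without proof, in analogy to Lemma~\ref{l:sum:1}): since $C$ contains no defined symbols and every $x\sigma$ with $x \in \Var(t)$ is a normal form, every redex of any reduct of $t\sigma$ lies within one of the pairwise parallel subterms $t_i\sigma$, so derivations from $t\sigma$ are precisely interleavings of independent derivations from the $t_i\sigma$ and the derivation heights add up. Your observation that the normal-form hypothesis on $\sigma$ is what compensates for decomposing with respect to $\DD$ rather than $\DD \cup \VV$ is precisely the intended point of the lemma.
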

\begin{definition}
Let $\RR$ be a TRS. 
If $l \rew r \in \RS$ and $r = \SC{\seq{u}}_\DD$ then 
the rewrite rule $l^\sharp \to \COM(u_1^\sharp,\ldots,u_n^\sharp)$
is called a \emph{weak innermost dependency pair} of $\RS$.
The set of all weak innermost dependency pairs is denoted by $\WIDP(\RS)$.
\end{definition}

\begin{example}[continued from Example~\ref{ex:ijcar08a:1}] \label{ex:ijcar08a:1:WIDP}
The set $\WIDP(\RR)$ consists of the following seven weak dependency pairs
(with respect to $\ito$):
\begin{alignat*}{4}
&\;& \isempty^\sharp(\nil) & \rew \m{c} & 
&\;& \app^{\sharp}(x,y) &\rew \ifapp^\sharp(x,y,x)\\
&& \isempty^\sharp({x} \cons {y}) & \rew \m{d} & \hspace{5ex}
&& \ifapp^\sharp(x,y,\nil) &\rew \m{g}\\
&& \hd^\sharp({x} \cons {y}) &\rew \m{e} & 
&& \ifapp^\sharp(x,y,{u} \cons {v}) &\rew \app^\sharp(v,y) \\
&& \tl^\sharp({x} \cons {y}) &\rew \m{f} \tpkt
\end{alignat*}
\end{example}

The next lemma adapts Lemma~\ref{l:dp:1} 
to innermost rewriting.
\begin{lemma}
\label{l:dp:2}
Let $t$ be an innermost terminating term in $\TA(\FS,\VS)$ with $\rt(t) \in \DD$.
We have $\dheight(t, \irew{\RS}) = \dheight(t^\sharp, \irew{\WIDP(\RR) \cup \RR})$.
\end{lemma}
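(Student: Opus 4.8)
The plan is to mirror the proof of Lemma~\ref{l:dp:1} exactly, adapting each step from full rewriting to innermost rewriting and from $\WDP(\RS)$ to $\WIDP(\RS)$. First I would establish the inequality $\dheight(t, \irew{\RS}) \leqslant \dheight(t^\sharp, \irew{\WIDP(\RR) \cup \RR})$ by induction on $\ell = \dheight(t, \irew{\RS})$. The base case $\ell = 0$ is trivial. For $\ell > 0$, pick a term $u$ with $t \irew{\RS} u$ and $\dheight(u, \irew{\RS}) = \ell - 1$, and do a case distinction on whether the contracted redex sits below the root or at the root.

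The below-root case is essentially unchanged: if the step occurs at a position $p$ strictly below the root, then $\rt(u) = \rt(t) \in \DD$ and $t^\sharp \irew{\RS} u^\sharp$, so the induction hypothesis applied to $u$ closes the case. The root case is where innermost rewriting must be handled carefully. Here $t = l\sigma$, $u = r\sigma$ for some rule $l \to r \in \RS$, and because the step is innermost, \emph{all proper subterms of $l\sigma$ are normal forms of $\RS$}. This is precisely the hypothesis needed to invoke Lemma~\ref{l:sum:2} (rather than Lemma~\ref{l:sum:1}): writing $r = \SC{\seq{u}}_{\DD}$, the decomposition is over $\DD$ alone, and since $x\sigma$ is a normal form for every $x \in \Var(r) \subseteq \Var(l)$, Lemma~\ref{l:sum:2} gives $\dheight(u, \irew{\RS}) = \sum_i \dheight(u_i\sigma, \irew{\RS})$. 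The corresponding weak innermost dependency pair is $l^\sharp \to \COM(u_1^\sharp,\dots,u_n^\sharp) \in \WIDP(\RS)$, and I would split into the subcases $u_i \in \VV$ (where $u_i\sigma$ is already a normal form, contributing zero derivation length and no dependency symbol) and $\rt(u_i) \in \DD$ (where $u_i^\sharp\sigma = (u_i\sigma)^\sharp$ and the induction hypothesis applies, since $\dheight(u_i\sigma,\irew{\RS}) < \ell$). Summing and using the decomposition yields $\ell \leqslant \dheight(t^\sharp, \irew{\WIDP(\RS)\cup\RS})$.

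For the reverse inequality I would again note that $t$ is $\RS$-reducible (innermost) if and only if $t^\sharp$ is $\WIDP(\RS)\cup\RS$-reducible, so innermost termination of $t$ transfers to $t^\sharp$, and then run the symmetric induction on $\dheight(t^\sharp, \irew{\WIDP(\RS)\cup\RS})$. The main obstacle I anticipate is bookkeeping around the innermost condition: one must verify that the normal-form requirement on $l\sigma$'s proper subterms is genuinely preserved when passing to the sharped system and when decomposing via $\COM$, so that every rewrite counted on the $t^\sharp$ side corresponds to a legitimate innermost step and vice versa. In particular I would check that introducing the compound symbol $\m c$ and the dependency pair symbols does not create new redexes that spoil the innermost strategy, which is why the decomposition over $\DD$ (not $\DD \cup \VV$) and the appeal to Lemma~\ref{l:sum:2} in place of Lemma~\ref{l:sum:1} are the crucial points. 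Everything else is a routine transcription of the full-rewriting argument.
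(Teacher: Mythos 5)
Your proposal is correct and follows exactly the route the paper intends: the paper omits the proof of this lemma, remarking only that it ``adapts Lemma~\ref{l:dp:1} to innermost rewriting'', and your adaptation --- replacing Lemma~\ref{l:sum:1} by Lemma~\ref{l:sum:2}, using the innermost condition to guarantee that $x\sigma$ is a normal form for all $x \in \Var(l)$, and decomposing $r$ over $\DD$ rather than $\DD \cup \VV$ --- is precisely the intended argument. The only cosmetic point is that Lemma~\ref{l:sum:2} is stated for $\rsrew{\RS}$ rather than $\irew{\RS}$, but its hypotheses are tailored to the innermost setting and the identity transfers verbatim.
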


We conclude this section by discussing the applicability of 
standard dependency pairs (\cite{ArtsGiesl:2000}) in complexity analysis.
For that we recall the standard definition of dependency pairs.

\begin{definition}[\cite{ArtsGiesl:2000}]
\label{d:DP}
The set $\DP(\RS)$ of (standard) \emph{dependency pairs} of a TRS $\RS$ 
is defined as 
$\{ l^\sharp \to u^{\sharp} \mid l \to r \in \RS, 
u \subterm r, \rt(u) \in \DD \}$.
\end{definition}

The following example shows that Lemma~\ref{l:dp:1} (Lemma~\ref{l:dp:2})
does not hold if we replace weak (innermost) dependency pairs with 
standard dependency pairs.
\begin{example} 
Consider the one-rule TRS $\RR$:
$\m{f}(\m{s}(x)) \to \m{g}(\m{f}(x), \m{f}(x))$.
$\DP(\RR)$ is the singleton of $\m{f}^\sharp(\m{s}(x)) \to \m{f}^\sharp(x)$.
Let $t_n = \m{f}(\m{s}^n(x))$ for each $n \geqslant 0$.
Since $t_{n+1} \rsrew{\RS} \m{g}(t_n,t_n)$ holds for
all $n \geqslant 0$, it is easy to see 
$\dheight(t_{n+1}, \rsrew{\RS}) \geqslant 2^n$, while
$\dheight(t_{n+1}^\sharp, \rsrew{\DP(\RR) \cup \RS}) = n$.
\end{example}

Hence, in general we cannot replace weak dependency pairs with (standard) dependency
pairs. However, if we restrict our attention to innermost rewriting,
we can employ dependency pairs in complexity analysis without introducing fallacies,
when specific conditions are met.
\begin{lemma} \label{l:dp:4}
Let $t$ be an innermost terminating term with $\rt(t) \in \DD$.
If all compound symbols in $\WIDP(\RS)$ are nullary, 
$\dheight(t, \irew{\RS}) \leqslant
 \dheight(t^\sharp,\irew{\DP(\RS) \cup \RS}) + 1$ holds.  
\end{lemma}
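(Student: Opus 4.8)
The plan is to reduce the statement to Lemma~\ref{l:dp:2} and then to compare innermost derivations with respect to $\WIDP(\RS) \cup \RS$ and $\DP(\RS) \cup \RS$. By Lemma~\ref{l:dp:2} we have $\dheight(t, \irew{\RS}) = \dheight(t^\sharp, \irew{\WIDP(\RR) \cup \RR})$, so it suffices to establish $\dheight(t^\sharp, \irew{\WIDP(\RR) \cup \RR}) \leqslant \dheight(t^\sharp, \irew{\DP(\RR) \cup \RR}) + 1$. First I would unwind the hypothesis that all compound symbols in $\WIDP(\RS)$ are nullary. Since $\COM(\seq{u})$ introduces an $n$-ary compound symbol exactly when $n \geqslant 2$, the hypothesis says that every rule $l \to r \in \RS$, written $r = \SC{\seq{u}}_\DD$, satisfies $n \leqslant 1$. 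Consequently each weak innermost dependency pair has one of two shapes: a \emph{type~(a)} pair $l^\sharp \to u^\sharp$ with $u \subterm r$ and $\rt(u) \in \DD$, which by Definition~\ref{d:DP} also belongs to $\DP(\RS)$; or a \emph{type~(b)} pair $l^\sharp \to c$ whose right-hand side is the nullary compound constant $c$.

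Next I would analyse the shape of an arbitrary innermost $\WIDP(\RS) \cup \RS$ derivation $t^\sharp = s_0 \irew{} s_1 \irew{} \cdots \irew{} s_\ell$. Since $t^\sharp = f^\sharp(\seq{t})$ carries a dependency pair symbol only at its root and no rule of $\RS$ contains such a symbol, every $\WIDP(\RS)$ step must take place at the root, and a routine induction shows that each $s_j$ is either the constant $c$ or again of the form $g^\sharp(\seq{v})$ with $g^\sharp$ a dependency pair symbol and $\seq{v}$ built over $\FS$. A type~(a) step preserves this form, whereas a type~(b) step rewrites the whole term to $c$, a normal form. Hence a type~(b) step, if it occurs at all, occurs exactly once and as the final step $s_{\ell-1} \irew{} s_\ell = c$ of the derivation.

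Finally I would take $D$ to be a longest such derivation, so $\ell = \dheight(t^\sharp, \irew{\WIDP(\RR) \cup \RR})$, and delete its trailing type~(b) step if present. Every remaining step is either an $\RS$-step or a type~(a) step, and type~(a) pairs lie in $\DP(\RS)$; thus all remaining steps use rules of $\DP(\RS) \cup \RS$. The crucial point to verify is that these steps are still \emph{innermost} with respect to $\DP(\RS) \cup \RS$: this holds because $\WIDP(\RS) \cup \RS$ and $\DP(\RS) \cup \RS$ share the same set of left-hand side patterns $\{\, l, l^\sharp \mid l \to r \in \RS \,\}$, so they have identical normal forms and the side condition ``all proper subterms of the redex are normal forms'' is the same for both. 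Therefore the truncated derivation is a genuine innermost $\DP(\RS) \cup \RS$ derivation of length at least $\ell - 1$, which yields $\dheight(t^\sharp, \irew{\DP(\RR) \cup \RR}) \geqslant \ell - 1$ and hence the desired inequality. I expect the main obstacle to be precisely this last bookkeeping: arguing cleanly that type~(b) steps are terminal and that innermostness transfers across the two rule sets, rather than any deep combinatorial estimate.
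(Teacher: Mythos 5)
Your argument is correct, and the paper in fact states Lemma~\ref{l:dp:4} without any proof, so there is nothing to compare it against; the route you take---reduce to Lemma~\ref{l:dp:2}, classify the weak innermost dependency pairs into those of the form $l^\sharp \to u^\sharp$ (which lie in $\DP(\RS)$) and those of the form $l^\sharp \to c$ with $c$ a nullary compound symbol, observe that the latter rewrite the entire term to the normal form $c$ and hence occur at most once and only as the final step, and then truncate---is exactly the natural argument, and the ``$+1$'' in the statement is accounted for precisely by that one deleted step.

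One justification is imprecise, although it does not break the proof. You assert that $\WIDP(\RS)\cup\RS$ and $\DP(\RS)\cup\RS$ share the same set of left-hand sides and therefore have identical normal forms. That is not quite true: a rule $l \to r$ whose right-hand side contains no defined symbol contributes the pair $l^\sharp \to c$ to $\WIDP(\RS)$ but contributes \emph{no} pair at all to $\DP(\RS)$, so $\DP(\RS)\cup\RS$ may have strictly more normal forms than $\WIDP(\RS)\cup\RS$. Fortunately the inclusion runs in the direction you need: fewer left-hand sides means $\NF(\WIDP(\RS)\cup\RS) \subseteq \NF(\DP(\RS)\cup\RS)$, so a step whose proper redex subterms are normal forms of the former system automatically satisfies the innermost side condition for the latter. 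Even more directly, your own structural analysis shows that every proper subterm of a redex occurring in the derivation is free of dependency pair symbols, and on such terms reducibility with respect to $\RS$, $\WIDP(\RS)\cup\RS$ and $\DP(\RS)\cup\RS$ coincides, since all dependency pair left-hand sides are rooted in marked symbols. Either of these observations replaces the ``identical normal forms'' claim and makes the final bookkeeping step airtight.
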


\begin{example}[continued from Example~\ref{ex:ijcar08a:1:WIDP}] \label{ex:ijcar08a:1:DP}
The occurring compound symbols are nullary. 
$\DP(\RR)$ consists of the following two dependency pairs:
\begin{alignat*}{4}
\app^\sharp(x,y) &\rew \ifapp^\sharp(x,y,x) & \hspace{5ex}
\ifapp^\sharp(x,y,{u} \cons {v}) & \rew \app^\sharp(v,y) \tpkt
\end{alignat*}
\end{example}

\section{Usable Rules} \label{usable}

In the previous section, we studied the dependency pair method in the light
of complexity analysis. Let $\RS$ be a TRS and $\PP$ a set of 
weak dependency pairs, weak innermost dependency pairs, or
standard dependency pairs of $\RS$. 
Lemmata~\ref{l:dp:1},~\ref{l:dp:2}, and~\ref{l:dp:4} describe
a strong connection between the length of derivations 
in the original TRSs $\RS$ and the transformed
(and extended) system $\PP \cup \RS$.
In this section we show how we can simplify the new TRS $\PP \cup \RS$
by employing \emph{usable rules}.

\begin{definition}
We write $f \depends g$ if there exists a rewrite rule
$l \to r \in \RR$ such that $f = \rt(l)$ and $g$ is a defined function
symbol in $\Fun(r)$. 
For a set $\GG$ of defined function symbols we
denote by $\RR{\restriction}\GG$ the set of
rewrite rules $l \to r \in \RR$ with $\rt(l) \in \GG$. 
The set $\UU(t)$ of usable rules of a term $t$ is defined as
$\RR{\restriction}\{ g \mid \text{$f \depends^* g$ for some $f \in \Fun(t)$} \}$.
Finally, if $\PP$ is a set of (weak) dependency pairs
then $\UU(\PP) = \bigcup_{l \to r \in \PP} \UU(r)$.
\end{definition}

\begin{example}[continued from Examples~\ref{ex:ijcar08a:1:WDP} and~\ref{ex:ijcar08a:1:WIDP}]
The sets of usable rules are empty (and thus equal) for the weak dependency pairs and
for the weak innermost dependency pairs, 
i.e., we have $\UU(\WDP(\RR)) = \UU(\WIDP(\RR)) = \varnothing$.
\end{example}

The usable rule criterion in termination analysis (cf.~\cite{GTSF06,HirokawaMiddeldorp:2007})
asserts that a non-terminating rewrite sequence of $\RR \cup \DP(\RR)$
can be transformed into a non-terminating rewrite sequence of 
$\UU(\DP(\RR)) \cup \DP(\RR) \cup \{ \m{g}(x,y) \to x, \m{g}(x,y) \to y \}$,
where $\m{g}$ is a fresh function symbol.
Because $\UU(\DP(\RR))$ is a (small) subset of $\RR$ and
most termination methods can handle $\m{g}(x,y) \to x$ and
$\m{g}(x,y) \to y$ easily, the termination  analysis often becomes easy
by switching the target of analysis from the former TRS to the latter TRS.
Unfortunately the transformation used in~\cite{GTSF06,HirokawaMiddeldorp:2007} increases
the size of starting terms, therefore we cannot adopt this transformation
approach. Note, however that the usable rule criteria for innermost 
termination~\cite{GAO:2002} can be directly applied in the context of complexity analysis.
Nevertheless, one may show a new type of usable rule criterion
by exploiting the basic property of a starting term.  
Recall that $\TB$ denotes the set of basic terms;
we set $\TBS = \{t^{\sharp} \mid t \in \TB \}$.

\begin{lemma} \label{l:usable}
Let $\PP$ be a set of (weak) dependency pairs and let 
$(t_i)_{i = 0, 1, \ldots}$ be a (finite or infinite) derivation of
$\RR \cup \PP$. If $t_0 \in \TBS$ then $(t_i)_{i = 0, 1, \ldots}$ is 
a derivation of $\UU(\PP) \cup \PP$.
\end{lemma}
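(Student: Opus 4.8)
The statement asserts that any derivation of $\RR \cup \PP$ starting from a term $t_0 \in \TBS$ actually only uses rules from $\UU(\PP) \cup \PP$. The intuition is that a basic term $f^\sharp(t_1,\dots,t_n)$ has constructor arguments, so the only rule applicable at the root is a dependency pair from $\PP$; after one such step we land in a term built from right-hand sides of $\PP$, and from then on only the function symbols \emph{reachable} (via $\depends^*$) from those right-hand sides can ever appear, and reductions at those symbols are by definition usable rules. The proof plan is therefore to identify a syntactic invariant that is preserved along the whole derivation and that guarantees every contracted redex lies in $\UU(\PP) \cup \PP$.

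First I would introduce the set $\GG$ of defined symbols reachable from the right-hand sides of $\PP$, i.e. the symbols $g$ with $f \depends^* g$ for some $f \in \Fun(r)$ and $l \to r \in \PP$, so that $\UU(\PP) = \RR{\restriction}\GG$ by definition. The key invariant I would formulate is: every term $t_i$ in the derivation can be written so that all of its defined-symbol-rooted subterms, \emph{except possibly a single root occurrence of a dependency pair symbol}, are rooted by symbols in $\GG$ (equivalently, lie in the ``usable cone''), while all genuinely new non-$\GG$ defined symbols of $\RR$ occur only inside argument positions that are normal forms inherited from $t_0$. Since $t_0 \in \TBS$ is $f^\sharp$ applied to constructor terms, the invariant holds at the start: the arguments contain no defined symbols at all.

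Second, I would prove the invariant is preserved by a single rewrite step, by case analysis on whether the contracted redex is a $\PP$-step or an $\RR$-step. A $\PP$-step can only occur at the (unique) dependency-pair-symbol position and replaces it by $\COM(\dots)$, whose defined symbols are precisely those in the right-hand sides of $\PP$ and hence in $\GG$; this preserves the invariant and the applied rule is in $\PP$. An $\RR$-step must occur at a position rooted by a defined symbol of $\RR$; by the invariant such a symbol lies in $\GG$, so the applied rule $l \to r$ has $\rt(l) \in \GG$, whence $l \to r \in \RR{\restriction}\GG = \UU(\PP)$; moreover $\Fun(r)$ adds only symbols $\depends$-reachable from $\rt(l) \in \GG$, which are again in $\GG$, so the invariant survives. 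Chaining these single-step facts over the whole (possibly infinite) derivation by induction on the position in the sequence then yields the claim.

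The main obstacle I anticipate is making the invariant precise enough to rule out the ``escape'' of a non-usable symbol into an active position. The delicate point is the interaction between the dependency pair symbol at the root and the substitution: the arguments substituted for variables originate from $t_0$'s constructor subterms and from earlier reducts, and I must argue carefully that no variable instance ever carries a defined symbol outside $\GG$ into a redex position. Formalizing that defined symbols appearing in argument positions stay ``inert'' (either constructor-built or already in $\GG$) is where the bookkeeping is heaviest; I expect this to require tracking the $\depends^*$-closure explicitly and using that $\UU$ is closed under $\depends$, i.e. $\UU(\UU(\PP)) \subseteq \UU(\PP)$, which follows directly from transitivity of $\depends^*$.
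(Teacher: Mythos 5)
Your proposal is correct and is essentially the paper's own proof: the paper formulates your invariant in its complementary form, showing by induction on $i$ that every subterm of $t_i$ at a position rooted by a \emph{non-usable} symbol is a normal form of $\mathcal{R}$ (which immediately forces every contracted redex to be a $\mathcal{P}$-redex or a usable-rule redex), and the preservation step is exactly your case analysis --- a redex above such a position would contradict the invariant for $t_{i-1}$, parallel positions are untouched, and a position below the redex must lie in the substitution part of the right-hand side, since right-hand sides of $\mathcal{P}$ and of usable rules contain no non-usable symbols, hence was already a normal-form subterm of $t_{i-1}$. The only slip is your claim that the dependency-pair-symbol position is unique: after one step the term has the form $\textsc{com}(u_1^\sharp,\dots,u_n^\sharp)\sigma$ and may contain several dependency pair symbols, but this does not affect the argument, since every $\mathcal{P}$-step still introduces only usable symbols.
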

\begin{proof}
Let $\GG$ be the set of all non-usable symbols with respect to $\PP$.
We write $P(t)$ if ${\atpos{t}{q}} \in {\NF(\RS)}$ for all $q \in \Pos_\GG(t)$.
Since $t_i \rsrew{\UU(\PP) \cup \PP} t_{i+1}$ holds
whenever $P(t_i)$ and $t_i \rsrew{\RS \cup \PP} t_{i+1}$, 
it is sufficient to show $P(t_i)$ for all $i$.
We perform induction on $i$.
\begin{enumerate}
\item 
Assume $i = 0$.  Since $t_0 \in \TBS$, we have $t_0 \in \NF(\RS)$
and thus ${\atpos{t}{p}} \in {\NF(\RS)}$ for all positions $p$. 
The assertion $P$ follows trivially.
\item
Suppose $i > 0$.  
By induction hypothesis, there exist $l \rew r \in \UU(\PP) \cup \PP$,
$p \in \Pos(t_{i-1})$, and a substitution $\sigma$ such that
${\atpos{t_{i-1}}{p}} = l\sigma$ and $\atpos{t_i}{p} = r\sigma$.
In order to show property $P$ for $t_i$, we fix a position $q \in \GG$. We
have to show $\atpos{t_i}{q} \in \NF(\RS)$. We distinguish three cases:
\begin{itemize}
\item
Suppose that $q$ is above $p$. Then $\atpos{t_{i-1}}{q}$ is 
reducible, but this contradicts the induction hypothesis $P(t_{i-1})$.
\item
Suppose $p$ and $q$ are parallel but distinct.
Since $\atpos{t_{i-1}}{q} = \atpos{t_i}{q} \in \NF(\RS)$ holds,
we obtain $P(t_i)$.
\item
Otherwise, $q$ is below $p$. Then, $\atpos{t_i}{q}$ is a subterm of $r\sigma$.  
Because $r$ contains no $\GG$-symbols by the definition of usable symbols,
$\atpos{t_i}{q}$ is a subterm of $x\sigma$ for some 
$x \in \Var(r) \subseteq \Var(l)$.
Therefore, $\atpos{t_i}{q}$ is also a subterm of $t_{i-1}$, 
from which $\atpos{t_i}{q} \in \NF(\RS)$ follows.  We obtain $P(t_i)$.
\end{itemize}
\end{enumerate}
\end{proof}

The following theorem follows from Lemmata~\ref{l:dp:1},~\ref{l:dp:2}, 
and~\ref{l:dp:4} in conjunction with the above Lemma~\ref{l:usable}.
It adapts the usable rule criteria to complexity analysis.%
\footnote{Note that Theorem~\ref{t:dp:usable} only holds 
for \emph{basic} terms $t \in \TBS$. 
In order to show this, we need some additional technical lemmas, which are the
subject of the next section.}
\begin{theorem} \label{t:dp:usable}
Let $\RR$ be a TRS and let $t \in \TB$. 
If $t$ is terminating with respect to $\rew$ then
$\dheight(t, \rew) \leqslant \dheight(t^{\sharp},\rsrew{\UU(\PP)\,\cup \, \PP})$, 
where $\rew$ denotes $\rsrew{\RS}$ or $\irew{\RS}$
depending on whether $\PP = \WDP(\RS)$ or $\PP = \WIDP(\RS)$.
Moreover, suppose all compound symbols in $\WIDP(\RS)$ are nullary then
$\dheight(t, \irew{\RS}) \leqslant 
 \dheight(t^{\sharp},\rsrew{\UU(\DP(\RS))\,\cup \, \DP(\RS)}) + 1$.
\end{theorem}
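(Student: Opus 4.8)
The plan is to assemble the statement from the lemmas already at hand: Lemmata~\ref{l:dp:1},~\ref{l:dp:2} and~\ref{l:dp:4} convert derivations of $\RS$ (respectively of $\irew{\RS}$) into derivations of the extended system $\PP \cup \RS$, and the usable-rule criterion of Lemma~\ref{l:usable} then lets us discard every non-usable rule. The only genuinely new observation required is that a basic term $t = f(\seq{t}) \in \TB$ satisfies $\rt(t) \in \DD$ and $t^\sharp = f^\sharp(\seq{t}) \in \TBS$. This makes the dependency pair lemmas (which demand $\rt(t) \in \DD$) and Lemma~\ref{l:usable} (which demands the start term to lie in $\TBS$) simultaneously applicable, so the rest of the argument is essentially a bookkeeping of inclusions.

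First I would treat the case $\PP = \WDP(\RS)$ with $\rew = \rsrew{\RS}$. By Lemma~\ref{l:dp:1}, $\dheight(t, \rsrew{\RS}) = \dheight(t^\sharp, \rsrew{\WDP(\RS) \cup \RS})$, and the proof of that lemma also gives termination of $t^\sharp$ under $\rsrew{\WDP(\RS) \cup \RS}$, so the right-hand $\dheight$ is well defined. Fixing a maximal derivation issuing from $t^\sharp$ and using $t^\sharp \in \TBS$, Lemma~\ref{l:usable} (with $\PP = \WDP(\RS)$) shows that each of its steps is in fact a $\UU(\PP) \cup \PP$ step. Since $\UU(\PP) \cup \PP \subseteq \RS \cup \PP$, the two rewrite relations have the same derivations issuing from $t^\sharp$, whence $\dheight(t^\sharp, \rsrew{\WDP(\RS) \cup \RS}) = \dheight(t^\sharp, \rsrew{\UU(\PP) \cup \PP})$, which yields the claimed inequality (in fact an equality).

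Next I would handle $\PP = \WIDP(\RS)$ with $\rew = \irew{\RS}$. By Lemma~\ref{l:dp:2}, $\dheight(t, \irew{\RS}) = \dheight(t^\sharp, \irew{\WIDP(\RS) \cup \RS})$, and this value is realised by a \emph{finite} innermost derivation of $\WIDP(\RS) \cup \RS$ from $t^\sharp$. The key step is to regard this innermost derivation as an ordinary derivation of $\WIDP(\RS) \cup \RS$ (innermost rewriting being a restriction of full rewriting) and apply Lemma~\ref{l:usable} to it, producing a $\UU(\PP) \cup \PP$ derivation of the same length from $t^\sharp$. Hence $\dheight(t, \irew{\RS}) \leqslant \dheight(t^\sharp, \rsrew{\UU(\PP) \cup \PP})$, now with full rewriting on the right. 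The third assertion is analogous: assuming all compound symbols of $\WIDP(\RS)$ are nullary, Lemma~\ref{l:dp:4} gives $\dheight(t, \irew{\RS}) \leqslant \dheight(t^\sharp, \irew{\DP(\RS) \cup \RS}) + 1$, and the same innermost-to-full passage followed by Lemma~\ref{l:usable} instantiated with $\PP = \DP(\RS)$ bounds the right-hand side by $\dheight(t^\sharp, \rsrew{\UU(\DP(\RS)) \cup \DP(\RS)}) + 1$.

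The main obstacle, and the point needing most care, is the passage from innermost to full rewriting in the last two cases. On the left one measures innermost derivation length, whereas the right counts \emph{full} derivations of $\UU(\PP) \cup \PP$, so the estimate can only run in one direction: a longest innermost derivation is just one particular full derivation. Concretely I must argue that the finite innermost derivation realising the left-hand value survives as a full $\UU(\PP) \cup \PP$ derivation of identical length — which is exactly what Lemma~\ref{l:usable} supplies, as its proof never presupposes an innermost strategy — so that $\dheight(t^\sharp, \rsrew{\UU(\PP) \cup \PP})$ is at least this length (harmlessly equal to $\infty$ should $t^\sharp$ fail to terminate under full rewriting). A secondary point is to confirm that Lemma~\ref{l:usable}, though phrased for ``(weak) dependency pairs'', applies verbatim to the standard pairs $\DP(\RS)$ used in the third assertion; this holds because the proof only exploits that right-hand sides of rules in $\UU(\PP) \cup \PP$ contain no non-usable symbols, a property shared by $\DP(\RS)$.
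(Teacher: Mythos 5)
Your proof is correct and follows exactly the route the paper intends: the paper gives no explicit argument but states that the theorem ``follows from Lemmata~\ref{l:dp:1},~\ref{l:dp:2}, and~\ref{l:dp:4} in conjunction with Lemma~\ref{l:usable}'', which is precisely the assembly you carry out, including the key observations that $t \in \TB$ guarantees $\rt(t) \in \DD$ and $t^\sharp \in \TBS$ so that all four lemmas apply simultaneously. Your care about the innermost-to-full passage and about Lemma~\ref{l:usable} covering $\DP(\RS)$ as well as the weak variants supplies details the paper leaves implicit, and both points are handled correctly.
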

It is worth stressing that it is (often) easier to analyse the 
complexity of $\UU(\PP)\, \cup \, \PP$  than the complexity of $\RS$. 
To clarify the applicability  of the theorem in complexity analysis, we consider 
two restrictive classes of polynomial interpretations, 
whose definitions are motivated by~\cite{BCMT:2001}.

A polynomial $P(x_1,\dots,x_n)$ (over the natural numbers) is called
\emph{strongly linear} if $P(x_1,\dots,x_n) = x_1 + \cdots + x_n + c$
where $c \in \NN$. A polynomial interpretation is called \emph{linear restricted} 
if all constructor symbols are interpreted by strongly linear polynomials
and all other function symbols by a linear polynomial (monotone with respect to the
standard order $>$ on $\NN$). If on the other
hand the non-constructor symbols are interpreted by quadratic polynomials,
the polynomial interpretation is called \emph{quadratic restricted}.
Here a polynomial is \emph{quadratic} if it is a sum of monomials of degree at most $2$.
It is easy to see that if a TRS $\RS$ is compatible with a 
linear or quadratic restricted interpretation, 
the runtime complexity of $\RS$ is linear or quadratic, 
respectively (see also~\cite{BCMT:2001}).

\begin{corollary} \label{c:dp:usable}
Let $\RS$ be a TRS and let $\PP = \WDP(\RS)$ or $\PP = \WIDP(\RS)$. 
If $\UU(\PP) \,\cup \, \PP$ is compatible with a linear 
or quadratic restricted interpretation, 
the (innermost) runtime complexity function $\rc^{(\m{i})}_{\RS}$ 
with respect to $\RS$ is linear or
quadratic, respectively. 
Moreover, suppose all compound symbols in $\WIDP(\RS)$ are nullary and
$\UU(\DP(\RS))\,\cup \, \DP(\RS)$ is compatible with a 
linear (quadratic) restricted interpretation, then $\RS$ 
admits at most linear (quadratic) innermost runtime complexity.
\end{corollary}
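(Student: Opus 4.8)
The plan is to derive Corollary~\ref{c:dp:usable} as an essentially immediate consequence of Theorem~\ref{t:dp:usable} combined with the observation (recalled just before the corollary) that compatibility with a linear or quadratic restricted interpretation bounds derivation lengths linearly respectively quadratically. First I would fix a basic term $t \in \TB$ and set $\PP = \WDP(\RS)$ or $\PP = \WIDP(\RS)$; by Theorem~\ref{t:dp:usable} we have $\dheight(t,\rew) \leqslant \dheight(t^{\sharp}, \rsrew{\UU(\PP)\,\cup\,\PP})$, where $\rew$ is $\rsrew{\RS}$ in the weak case and $\irew{\RS}$ in the weak innermost case. Thus it suffices to bound the right-hand side in terms of $\size{t}$, and since $\size{t^{\sharp}} = \size{t}$, I only need a bound on $\dheight(u, \rsrew{\UU(\PP)\,\cup\,\PP})$ for $u \in \TBS$ with $\size{u} \leqslant n$.

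The core step is to make precise the asserted fact that a linear restricted interpretation forces linear runtime complexity (and quadratic restricted forces quadratic). I would let $(\A,>)$ be the WMA given by the interpretation, which is compatible with $\UU(\PP)\,\cup\,\PP$, so that $\rsrew{\UU(\PP)\,\cup\,\PP} \subseteq {\gord{\A}}$ and hence $\dheight(u, \rsrew{\UU(\PP)\,\cup\,\PP}) \leqslant \eval{\alpha}{\A}(u)$ for any assignment, because each rewrite step strictly decreases the interpretation value by at least one over $\NN$. The remaining arithmetic is a standard induction on term structure: because constructors are interpreted by strongly linear polynomials $x_1+\cdots+x_n+c$, one shows $\eval{\alpha}{\A}(s) \leqslant a\cdot\size{s}+b$ for a constructor term $s$, and then a single application of the linear (respectively quadratic) interpretation of the root defined symbol of $u=f^{\sharp}(\seq{u})$ — whose arguments are values — yields a linear (respectively quadratic) bound in $\sum_i \size{u_i} \leqslant \size{u} \leqslant n$. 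This is precisely the kind of estimate already carried out in the excerpt (compare Lemma~\ref{l:quadratic} in Chapter~\ref{rta08} for the quadratic shape of such an induction), so I would cite~\cite{BCMT:2001} and only sketch it.

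For the final sentence of the corollary, concerning innermost rewriting with all compound symbols of $\WIDP(\RS)$ nullary, I would use the second inequality of Theorem~\ref{t:dp:usable}, namely $\dheight(t, \irew{\RS}) \leqslant \dheight(t^{\sharp}, \rsrew{\UU(\DP(\RS))\,\cup\,\DP(\RS)}) + 1$, and apply the same interpretation-bound argument to $\UU(\DP(\RS))\,\cup\,\DP(\RS)$; the additive constant $+1$ does not affect the asymptotic linear or quadratic classification. Assembling these pieces gives $\Rc{\RS}(n) = \bO(n)$ or $\bO(n^2)$ (respectively $\Rci{\RS}(n)$), as claimed.

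The main obstacle I anticipate is not conceptual but bookkeeping: one must verify that the interpretation bound genuinely depends only on the sizes of the \emph{arguments} of the initial basic term and not on the (bounded) arity or on the compound/dependency-pair symbols introduced by $\PP$. Since $\PP$ and the fresh compound symbols $c,\m{d},\m{e},\dots$ enlarge the signature, I must confirm that the compatible interpretation assigns them admissible (linear respectively quadratic) interpretations too, so that the global bound on $\UU(\PP)\,\cup\,\PP$ is uniform; this is exactly what the hypothesis ``$\UU(\PP)\,\cup\,\PP$ is compatible with a linear/quadratic restricted interpretation'' supplies, so the obstacle is handled by the hypothesis itself, provided one is careful that dependency pair symbols and compound symbols are treated as (non-constructor) defined symbols when the restricted interpretation is defined.
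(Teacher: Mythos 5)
Your proposal is correct and follows essentially the same route as the paper: apply Theorem~\ref{t:dp:usable} (respectively its $\DP$ variant with the additive $+1$) and then bound $\dheight(t^\sharp,\rsrew{\UU(\PP)\,\cup\,\PP})$ by the value of the compatible linear or quadratic restricted interpretation on the basic term $t^\sharp$, using $\size{t^\sharp}=\size{t}$. The only step the paper makes explicit that you leave implicit is the termination precondition of Theorem~\ref{t:dp:usable}: compatibility with the well-founded algebra gives well-foundedness of $\rsrew{\UU(\PP)\,\cup\,\PP}$, which via Lemma~\ref{l:usable} yields termination of $\rsrew{\RS}$ on basic terms, so the theorem is indeed applicable.
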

\begin{proof}
Let $\RS$ be a TRS. For simplicity we suppose $\PP = \WDP(\RS)$ and assume the
existence of a linear restricted interpretation $\A$, compatible with $\UU(\PP) \cup \PP$. 
Clearly this implies the well-foundedness of the relation $\rsrew{\UU(\PP) \cup \PP}$, which
in turn implies the well-foundedness of $\rsrew{\RS}$, cf.~Lemma~\ref{l:usable}.
Hence Theorem~\ref{t:dp:usable} is applicable and we 
conclude $\dheight(t,\rsrew{\RS}) \leqslant \dheight(t^\sharp, \rsrew{\WDP(\RR) \cup \RR})$.
On the other hand, compatibility with $\A$ implies that
$\dheight(t^\sharp, \rsrew{\WDP(\RR) \cup \RR}) = \bO(\size{t^\sharp})$. 
As $\size{t^\sharp} = \size{t}$, we can combine these equalities 
to conclude linear runtime complexity of $\RS$.
\end{proof}

The below given example applies Corollary~\ref{c:dp:usable} to the
motivating Example~\ref{ex:ijcar08a:1} introduced in Section~\ref{ijcar08a:Intro}.
%
%GM: moved here
\begin{example}[continued from Example~\ref{ex:ijcar08a:1:WDP}.] 
\label{ex:ijcar08a:1:final}
We take a quadratic restricted interpretation $\BB$ into $\NN \setminus \{0\}$ 
with
$\m{c}_\BB = \m{d}_\BB  =  \nil_\BB = 1$, $\m{e}_\BB(x,y) = x + y$, 
${x} \cons_\BB {y} = x + y$, 
$\hd^\sharp_\BB(x) = x + 1$, $\tl^\sharp_\BB(x) = x + 1$,
$\isempty^\sharp_\BB (x) = x + 1$,
$\app^\sharp_\BB (x,y) = x^2 + 3x + y + 1$, and
$\ifapp^\sharp_\BB (x,y,z) = 2x + y + z^2 + z$.
Then $\BB$ interprets $\WDP(\RR)$ as follows:
\begin{equation*}
  \begin{array}{ll@{\hspace{5ex}}ll}
    5\colon & 2 > 1 & 8\colon & x+y + 1 > y\\[1ex]
    6\colon & x + y + 1 > 1 & 9\colon &x^2 + 3x + y + 1 > 3x + y + x^2\\[1ex]
    7\colon & x + y + 1 > x & 10\colon & 2x+y + 2 >y \\[1ex]
    11\colon & \multicolumn{3}{l}{2x + y + u^2 + 2uv +v^2 + u+v > u + v^2 + 3v + y + 1}
  \end{array}
\end{equation*}
Therefore, ${\WDP(\RR)} \subseteq {\gord{\BB}}$
holds. Hence, the runtime complexity of $\RR$ for full rewriting 
is quadratic. (Recall that $\UU(\WDP(\RR)) = \varnothing$.)
\end{example}

\section{The Weight Gap Principle}
\label{semantical gap}

We recall the notion of \emph{relative rewriting} (\cite{Geser:1990,Terese}).
\begin{definition}
Let $\RR$ and $\SS$ be TRSs.
We write $\to_{\RR / \SS}$ for $\to_\SS^* \cdot \to_\RR \cdot \to_\SS^*$ and
we call $\to_{\RR / \SS}$ the \emph{relative rewrite relation} of $\RR$ over $\SS$.%
\footnote{Note that ${\to_{\RR / \SS}} = {\rsrew{\RS}}$, if $\SS = \varnothing$.}
\end{definition}
Since $\dheight(t, {\to_{\RR / \SS}})$ corresponds to
the number of $\to_\RS$-steps in a maximal derivation of 
$\to_{\RR \cup \SS}$ from $t$, we easily see the bound
$\dheight(t, {\to_{\RR / \SS}}) \leqslant \dheight(t, {\to_{\RR \cup \SS}})$.
In this section we study the opposite, i.e., we figure out a way to give an 
upper-bound of $\dheight(t, {\to_{\RR \cup \SS}})$ by a function of 
$\dheight(t, {\to_{\RR / \SS}})$.

First we introduce the key ingredient, \emph{strongly linear interpretations},
a very restrictive form of polynomial interpretations.
Let $\FS$ denote a signature. 
A \emph{strongly linear interpretation} (\emph{SLI} for short) 
is a WMA $(\A,\succ)$ that satisfies the following properties:
(i) the carrier of $\A$ is the set of natural numbers $\N$,
(ii) all interpretation functions $f_{\A}$ are strongly linear,
(iii) the proper order $\succ$ is the standard order $>$ on $\N$.
Note that an SLI $\A$ is conceivable as a weight function. We
define the maximum weight $\MA$ of $\A$ as 
$\max \{ f_\AA(0,\ldots,0) \mid f \in \FF \}$.
Let $\A$ denote an SLI, let $\alpha_0$ denote the assignment mapping 
any variable to $0$, i.e., $\alpha_0(x) = 0$ for all $x \in \VS$, and 
let $t$ be a term. 
We write $[t]$ as an abbreviation for $\eval{\alpha_0}{\A}(t)$.

\begin{lemma} \label{l:SLIa}
Let $\AA$ be an SLI and let $t$ be a term. Then $[t] \leqslant \MA \cdot \size{t}$ holds.
\end{lemma}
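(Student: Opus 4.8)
The plan is to prove the statement by a routine structural induction on the term $t$, exploiting the very rigid shape of strongly linear interpretation functions. The only facts I need are that every $f_\AA$ has the form $f_\AA(x_1,\dots,x_n) = x_1 + \cdots + x_n + c_f$ for some constant $c_f \in \NN$, that this constant is exactly $c_f = f_\AA(0,\dots,0)$ and hence satisfies $c_f \leqslant \MA$ by definition of the maximum weight, and that $\size{\cdot}$ obeys $\size{x} = 1$ for $x \in \VS$ and $\size{f(\seq{t})} = 1 + \sum_{i} \size{t_i}$.

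First I would handle the base case. If $t \in \VS$, then by definition $[t] = \eval{\alpha_0}{\A}(t) = \alpha_0(t) = 0$, and since $\MA \geqslant 0$ and $\size{t} = 1$ we trivially get $[t] = 0 \leqslant \MA = \MA \cdot \size{t}$.

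For the inductive step I would take $t = f(\seq{t})$ and compute, using strong linearity of $f_\AA$,
\begin{equation*}
[t] = \eval{\alpha_0}{\A}(f(\seq{t})) = [t_1] + \cdots + [t_n] + c_f \tkom
\end{equation*}
where $c_f = f_\AA(0,\dots,0) \leqslant \MA$. Applying the induction hypothesis $[t_i] \leqslant \MA \cdot \size{t_i}$ to each argument and bounding $c_f$ by $\MA$ yields
\begin{equation*}
[t] \leqslant \MA \cdot \sum_{i=1}^{n} \size{t_i} + \MA = \MA \cdot \Bigl(1 + \sum_{i=1}^{n} \size{t_i}\Bigr) = \MA \cdot \size{t} \tpkt
\end{equation*}
This closes the induction and establishes the lemma. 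There is no real obstacle here: the claim is essentially a bookkeeping argument, and the single point worth stating explicitly is that strong linearity forces the additive constant of each symbol to coincide with its value at the all-zero argument, so that the uniform bound $c_f \leqslant \MA$ is available for every $f \in \FF$ at once. The choice of the assignment $\alpha_0$ mapping all variables to $0$ is what makes the variable contributions vanish and keeps the bound purely in terms of the symbol weights.
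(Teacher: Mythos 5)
Your proof is correct and follows essentially the same route as the paper's: structural induction on $t$, with the base case handled by the zero assignment and the inductive step using strong linearity together with the bound $f_\AA(0,\dots,0) \leqslant \MA$. The only difference is that you spell out the bookkeeping (the size recursion and the identification of the additive constant) slightly more explicitly than the paper does.
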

\begin{proof}
By induction on $t$. 
If $t \in \VV$ then $[t] = 0 \leqslant \MA \cdot \size{t}$.
Otherwise, suppose $t = f(t_1,\ldots,t_n)$, where
$f_\AA(x_1,\ldots,x_n) = x_1 + \ldots + x_n + c$.
By the induction hypothesis and $c \leqslant \MA$ we obtain the
following inequalities:
\begin{align*}
[t]
& = f_\AA([t_1],\ldots,[t_n]) 
  \leqslant [t_1] + \cdots + [t_n] + c \\
& \leqslant \MA \cdot \size{t_1} + \cdots + \MA \cdot \size{t_n} + \MA 
% = \MA \cdot (\size{t_1} + \cdots + \size{t_n} + 1) 
  = \MA \cdot \size{t} \tpkt
\end{align*}
\end{proof}

The conception of strongly linear interpretations as weight
functions allows us to study (possible) weight increase throughout
a rewrite derivation. This observation is reflected in the
next definition. 
\begin{definition} \label{d:weightgap}
Let $\AA$ be an algebra and let $\RR$ be a TRS.  
The \emph{weight gap}
$\WG(\A,\RS)$ \emph{of $\A$ with respect to $\RR$} is defined on $\NN$ 
as follows: $\WG(\A,\RS) = \max \{[r] \modminus [l] \mid l \to r \in \RR \}$,
where $\modminus$ is defined as usual: $m \modminus n \defsym \max \{m-n,0\}$
\end{definition}

The following \emph{weight gap principle} is a direct consequence of the
definitions.

%GM: corrected, 2. 12. 2008
\begin{lemma} \label{l:WG}
Let $\RR$ be a non-duplicating TRS and $\AA$ an SLI. 
If $s \to_\RR t$ then $[s]+\WG(\A,\RS) \geqslant [t]$.
\end{lemma}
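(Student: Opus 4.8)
The plan is to reduce a single rewrite step to a comparison of the two sides of the applied rule, exploiting that a strongly linear interpretation behaves purely additively. First I would record the shape of $\eval{\alpha}{\AA}(u)$ for an arbitrary term $u$: since every interpretation function has the form $f_\AA(x_1,\dots,x_n)=x_1+\cdots+x_n+c$, a routine induction on the structure of $u$ yields
\[
\eval{\alpha}{\AA}(u) = [u] + \sum_{x \in \Var(u)} N_x(u)\cdot\alpha(x)\tkom
\]
where $N_x(u)$ denotes the number of occurrences of the variable $x$ in $u$ and $[u]=\eval{\alpha_0}{\AA}(u)$ is the constant part (the base cases being a variable and a constant, the step case using that $N_x$ and $[\cdot]$ add over the arguments). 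This occurrence formula is the crux of the argument; everything else is arithmetic bookkeeping.

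From it I would extract the two facts I actually need. Instantiating $\alpha$ by $x\mapsto[x\sigma]$ and using the standard substitution lemma $\eval{\alpha_0}{\AA}(u\sigma)=\eval{\beta}{\AA}(u)$ with $\beta(x)=[x\sigma]$ gives the substitution identity $[u\sigma]=[u]+\sum_{x}N_x(u)\cdot[x\sigma]$. Reading the hole of a one-hole context $C$ as a fresh variable occurring exactly once gives the context identity $[C[p]]=c_C+[p]$, where $c_C$ is a constant independent of the term $p$ plugged in (the degenerate case $C=\ctx$ being covered by $c_C=0$).

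Next I would unfold the step $s\to_\RR t$: by definition $s=C[l\sigma]$ and $t=C[r\sigma]$ for some rule $l\to r\in\RR$, context $C$ and substitution $\sigma$. The context identity collapses the surrounding context and removes all dependence on $C$, leaving $[t]-[s]=[r\sigma]-[l\sigma]$, and the substitution identity then expands this into
\[
[r\sigma]-[l\sigma] = ([r]-[l]) + \sum_{x}\bigl(N_x(r)-N_x(l)\bigr)\cdot[x\sigma]\tpkt
\]
Finally I would invoke non-duplication together with nonnegativity of the carrier: since $\RR$ is non-duplicating we have $N_x(r)\leqslant N_x(l)$ for every $x$, and since the carrier is $\NN$ we have $[x\sigma]\geqslant 0$, so each summand is $\leqslant 0$ and the whole sum is $\leqslant 0$. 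Hence $[t]-[s]\leqslant[r]-[l]\leqslant[r]\modminus[l]\leqslant\WG(\AA,\RR)$, which rearranges to the claim $[s]+\WG(\AA,\RR)\geqslant[t]$. I expect no genuine obstacle here: the only delicate point is stating the occurrence formula cleanly and keeping the occurrence counts across $C$, $l$, $r$ and $\sigma$ straight, after which the inequality chain is immediate.
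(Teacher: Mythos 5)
Your proof is correct, and it is precisely the argument the paper has in mind: the paper states this lemma without proof (calling it ``a direct consequence of the definitions''), and your occurrence formula $\eval{\alpha}{\AA}(u)=[u]+\sum_x N_x(u)\cdot\alpha(x)$, combined with the context and substitution identities, is the standard way to make that remark precise. You also correctly avoid assuming compatibility of $\AA$ with $\RR$, which matches the paper's explicit observation that only non-duplication is needed.
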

We stress that the lemma does only require the non-duplicating condition. 
Indeed, the implication in the lemma holds even if the TRS $\RR$ is 
\emph {not} compatible with a strongly linear interpretation.
This principle brings us to the next theorem.

%GM:  corrected, 2. 12. 2008
\begin{theorem} \label{t:relative}
Let $\RR$ and $\SS$ be TRSs, $\AA$ an SLI compatible with $\SS$ and
$\RR$ non-duplicating.
Then we have
$\dheight(t, \to_{\RR \cup \SS}) \leqslant 
(1 + \WG(\A,\RS)) \cdot \dheight(t,\to_{\RR/\SS}) + \MA \cdot \size{t}$,
whenever $t$ is terminating on $\RS \cup \SS$.
\end{theorem}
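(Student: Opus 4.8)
The plan is to fix a single maximal rewrite derivation of $\to_{\RR \cup \SS}$ starting from $t$ and to account separately for the $\RR$-steps and the $\SS$-steps it contains. Since $t$ is terminating on $\RR \cup \SS$, such a derivation is finite; write it as $t = t_0 \to t_1 \to \cdots \to t_n$ with $n = \dheight(t, \to_{\RR \cup \SS})$, let $m$ be the number of $\RR$-steps and $k$ the number of $\SS$-steps, so that $n = m + k$. The whole estimate then reduces to bounding $m$ and $k$ individually.

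First I would bound $m$ by $\dheight(t, \to_{\RR/\SS})$. Collecting the (possibly empty) blocks of $\SS$-steps around each of the $m$ individual $\RR$-steps rewrites the given sequence in the shape $t_0 \to_\SS^* \cdot \to_\RR \cdot \to_\SS^* \cdots \to_\SS^* \cdot \to_\RR \cdot \to_\SS^* t_n$. Grouping each $\RR$-step together with the $\SS$-block immediately preceding it exhibits a $\to_{\RR/\SS}$ derivation of length exactly $m$ issuing from $t_0$ (the trailing $\SS$-block after the last $\RR$-step is simply discarded, since it forms no further $\to_{\RR/\SS}$ step). Hence $m \leqslant \dheight(t, \to_{\RR/\SS})$.

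Next I would bound $k$ by tracking the weight $[\cdot]$ along the derivation. Because $\AA$ is compatible with $\SS$, every $\SS$-step strictly decreases the weight, so $[t_{i+1}] \leqslant [t_i] - 1$ at each $\SS$-step; because $\RR$ is non-duplicating, the weight gap principle (Lemma~\ref{l:WG}) gives $[t_{i+1}] \leqslant [t_i] + \WG(\A,\RS)$ at each $\RR$-step. Telescoping $[t_n] - [t_0] = \sum_{i}([t_{i+1}] - [t_i])$ over the two kinds of steps yields $[t_n] \leqslant [t_0] + m \cdot \WG(\A,\RS) - k$. Using $[t_n] \geqslant 0$ and Lemma~\ref{l:SLIa}, which supplies $[t_0] \leqslant \MA \cdot \size{t}$, I obtain $k \leqslant \MA \cdot \size{t} + m \cdot \WG(\A,\RS)$. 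Adding the bound on $m$ then gives $n = m + k \leqslant (1 + \WG(\A,\RS)) \cdot m + \MA \cdot \size{t}$, and substituting $m \leqslant \dheight(t, \to_{\RR/\SS})$ delivers the claimed inequality.

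I expect the only delicate point to be the bookkeeping in the second paragraph: making precise that the chunking counts each $\RR$-step exactly once and that leading and trailing $\SS$-steps are handled correctly, so that the number of $\RR$-steps really is dominated by the relative derivation height. The weight-tracking in the third paragraph is routine given Lemmata~\ref{l:SLIa} and~\ref{l:WG}, the former contributing the additive $\MA \cdot \size{t}$ term and the latter controlling the weight increase across $\RR$-steps, while the compatibility of $\AA$ with $\SS$ is precisely what forces each $\SS$-step to pay for itself with a unit weight decrease.
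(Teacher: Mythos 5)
Your proof is correct and follows essentially the same route as the paper: both arguments bound the number of $\RR$-steps by the relative derivation height, charge each $\SS$-step a unit weight decrease via compatibility with the SLI, control the weight increase across $\RR$-steps by the weight gap principle (Lemma~\ref{l:WG}), and close with Lemma~\ref{l:SLIa} to obtain the additive $\MA \cdot \size{t}$ term. The only cosmetic difference is that the paper sums the $\SS$-blocks $k_0, \dots, k_m$ directly whereas you telescope over individual steps; the bookkeeping is identical.
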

\begin{proof}
Let $m = \dheight(t, {\to_{\RR/\SS}})$, let $n = \size{t}$, and set
$\Delta = \WG(\A,\RS)$. 
Any derivation of $\to_{\RR \cup \SS}$ is representable as follows
\begin{equation*}
s_0 \to_\SS^{k_0} 
t_0 \to_\RR 
s_1 \to_\SS^{k_1} 
t_1 \to_\RR \cdots \to_\SS^{k_m}
t_m \tkom
\end{equation*}
and without loss of generality we may assume that the derivation is
maximal. 
We observe the next two facts.
\begin{itemize}
\item[$(\m{a})$]
$k_i \leqslant [s_i] - [t_i]$ holds for all $0 \leqslant i \leqslant m$. 
This is because $[s] \geqslant [t] + 1$ whenever $s \rsrew{\SS} t$
by the assumption $\SS \subseteq {\gord{\AA}}$,
and we have $s_i \rsrew{\SS}^{k_i} t_i$.

\item[$(\m{b})$] 
$[s_{i+1}] - [t_i] \leqslant \Delta$ holds for all $0 \leqslant i < m$ as
due to Lemma~\ref{l:WG} we have $[t_i] + \Delta \geqslant [s_{i+1}]$.

\end{itemize}
We obtain the following inequalities:
\begin{align*}
  \dheight(s_0, \to_{\RR \cup \SS}) & = m + k_0 + \dots + k_m \\
  & \leqslant m + ([s_0] - [t_0]) + \dots + ([s_m] - [t_m]) \\
  & = m + [s_0] + ([s_1] - [t_0]) + \dots + ([s_m] - [t_{m-1}]) - [t_m] \\
  & \leqslant m + [s_0] + m \Delta - [t_m] \\
  & \leqslant m + [s_0] + m \Delta \\
  & \leqslant m + \MA \cdot n + m \Delta 
    = (1 + \Delta) m + \MA \cdot n \tpkt
\end{align*}
Here we used $(\m{a})$ $m$-times in the second line, $(\m{b})$ 
$m-1$-times in the fourth line, and Lemma~\ref{l:SLIa} in the last line.
\end{proof}

The next example clarifies that the conditions expressed in Theorem~\ref{t:relative}
are essentially optimal: We cannot replace the assumption that the algebra 
$\A$ is \emph{strongly} linear with a weaker assumption:
Already if $\A$ is a linear polynomial interpretation, 
the derivation height of $\RS \cup \SS$ cannot be bounded \emph{polynomially} in 
$\dheight(t, \rsrew{\RR/\SS})$ and $\size{t}$ alone.
\begin{example}
Consider the  TRSs $\RS$
\begin{align*}
 \m{exp}(\mN) & \to \ms(\mN) & \m{d}(\mN) & \to \mN \\
\m{exp}(\m{r}(x)) & \to \m{d}(\m{exp}(x)) & \m{d}(\ms(x)) & \to \ms(\ms(\m{d}(x))) 
\end{align*}
This TRS formalises the exponentiation
function. Setting $t_n = \m{exp}(\m{r}^n(\mN))$ we obtain
$\dheight(t_n, \rsrew{\RS}) \geqslant 2^n$ for each $n \geqslant 0$. Thus
the runtime complexity of $\RS$ is (at least) exponential.
In order to show the claim, we split $\RS$ into two TRSs 
$\RS_1 = \{\m{exp}(\mN) \to \ms(0), \m{exp}(\m{r}(x)) \to \m{d}(\m{exp}(x))\}$
and $\RS_2 = \{\m{d}(\mN) \to \mN, \m{d}(\ms(x)) \to \ms(\ms(\m{d}(x))) \}$.
Then it is easy to verify that the next linear polynomial interpretation $\AA$ 
is compatible with $\RS_2$: 
$\mN_{\A} = 0$, $\m{d}_{\A}(x) = 3x$, and $\ms_{\A}(x) = x + 1$.
Moreover an upper-bound of $\dheight(t_n ,\rsrew{\RS_1/\RS_2})$ can be estimated 
by using the following polynomial interpretation $\BB$:
$\mN_{\BB} = 0$, $\m{d}_{\BB}(x) = \ms_{\BB}(x) = x$, and 
$\m{exp}_{\BB}(x) = \m{r}_{\BB}(x) = x + 1$.
Since ${\to_{\RR_1}} \subseteq {\gord{\BB}}$ and 
${\to_{\RS_2}^*} \subseteq {\geqord{\BB}}$ hold, we have 
${\rsrew{{\RS_1}/{\RS_2}}} \subseteq {\gord{\BB}}$.  Hence
$\dheight(t_n, \rsrew{{\RS_1}/{\RS_2}}) \leqslant \eval{\alpha_0}{\BB}(t_n) = n+2$.
But clearly from this
we cannot conclude a polynomial bound on the derivation length of $\RS_1 \cup \RS_2 = \RS$,
as the runtime complexity of $\RS$ is exponential, at least.
\end{example}

To conclude this section, we show that Theorem~\ref{t:dp:usable} 
can only hold for \emph{basic} terms $t \in \TBS$. 
\begin{example} \label{ex.SK}
Consider the one-rule TRS 
$\RS  = \{\m{a}(\m{b}(x)) \to \m{b}(\m{b}(\m{a}(x)))\}$
from \cite[Example 2.50]{SteinbachKuehler:1990}.
It is not difficult to see that
$\dheight(\m{a}^n(\m{b}(x)),\rsrew{\RS}) = 2^n -1$, 
see~\cite{Hofbauer:2001}. 
The set $\WDP(\RS)$ consists of just one dependency pair 
$\m{a}^\sharp(\m{b}(x)) \to \m{a}^\sharp(x)$.
In particular the set of usable rules is empty.
The following SLI $\A$ is compatible with $\WDP(\RS)$:
$\m{a}^\sharp_{\A}(x) = \m{a}_{\A}(x) = x$ and $\m{b}_{\A}(x) = x+1$.
Hence, due to Lemma~\ref{l:SLIa} we can conclude the existence of a constant $K$ such that 
$\dheight(t^\sharp,\rsrew{\WDP(\RS)}) \leqslant K \cdot \size{t}$. 
Due to Theorem~\ref{t:dp:usable} we conclude linear runtime complexity of $\RS$. 
\end{example}

\section{Reduction Pairs and Argument Filterings} \label{Refinements}

In this section we study the consequences of combining
Theorem~\ref{t:dp:usable} and Theorem~\ref{t:relative}. 
In doing so, we adapt reduction pairs and argument filterings
(\cite{ArtsGiesl:2000}) to runtime complexity analysis.
Let $\RS$ be a TRS, and let $\A$ be a strongly linear interpretation and
suppose we consider weak, weak innermost, or (standard) dependency pairs $\PP$,
such that $\PP$ is non-duplicating.
If $\UU(\PP) \subseteq {>_\A}$ then there exist constants $K, L \geqslant 0$
(depending on $\PP$ and $\A$ only) such that
\begin{equation*}
\dheight(t, \rsrew{\RS}) \leqslant 
K \cdot \dheight(t^\sharp, \rsrew{\PP / \UU(\PP)}) +  L \cdot \size{t^\sharp} \tkom
\end{equation*}
for all terminating basic terms $t \in \TB$. 
This follows from the combination of Theorems~\ref{t:dp:usable} and~\ref{t:relative}.
Thus, in order to estimate the derivation length of $t$ with respect to $\RS$ 
it suffices to estimate the maximal $\PP$ steps, i.e., we have to
estimate $\dheight(t^\sharp, \rsrew{\PP / \UU(\PP)})$ suitably.
Consider a maximal derivation $(t_i)_{i=0,\ldots,n}$ of 
$\rsrew{\PP / \UU(\PP)}$ with $t_0 = t^\sharp$. 
For every $0 \leqslant i < n$ 
there exist terms $u_i$ and $v_i$ such that
\begin{equation} \label{eq:refine}
t_i \to_{\UU(\PP)}^* u_i \to_{\PP} v_i \to_{\UU(\PP)}^* t_{i+1} \tpkt
\end{equation}
Let $\gtrsim$ and $\succ$ be a pair of orders with
${\gtrsim} \cdot {\succ} \cdot {\gtrsim} \subseteq {\succ}$.
If $t_i \gtrsim u_i \succ v_i \gtrsim t_{i+1}$ holds for all $0 \leqslant i < n$,
we obtain $t^\sharp = t_0 \succ t_1 \succ \cdots \succ t_n$.  Therefore, 
$\dheight(t^\sharp, \rsrew{\PP / \UU(\PP)})$ can be bounded in the
maximal length of $\succ$-descending steps. 
We formalise these observations through the use of \emph{reduction pairs} and 
\emph{collapsible orders}.

\begin{definition} \label{d:collapsible}
Let $\RS$ be a TRS, let $\PP$ be a set of weak dependency pairs of $\RS$
and let $\Slow$ denote a mapping associating a term (over $\FS^\sharp$ and $\VS$) 
and a proper order $\succ$ with a natural number.
An order $\succ$ on terms is $\Slow$-\emph{collapsible} 
for a TRS $\RR$ 
if $s \rsrew{\PP \cup \UU(\PP)} t$  and $s \succ t$ implies $\Slow(s,\succ) > \Slow(t,\succ)$. 
An order $\succ$ is \emph{collapsible} for a TRS $\RS$, 
if there is a mapping $\Slow$ such that $\succ$ is $\Slow$-collapsible 
for $\RR$.
\end{definition}
Note that most reduction orders are collapsible.  For instance, 
if $\A$ is a polynomial interpretation then $\gord{\A}$ is collapsible, 
as witnessed by the evaluation function $\eval{\alpha_0}{\A}$.
Furthermore, simplification orders like MPO, LPO and KBO are collapsible
(cf.~\cite{Hofbauer:1992,Weiermann:1995,Moser:2006lpar}).%
\footnote{On the other hand it is easy to construct non-collapsible
orders: Suppose we extend the natural numbers $\N$ by a non-standard element $\infty$ such that
for any $n \in \N$ we set $\infty > n$. Clearly we cannot collapse $\infty$ to a
natural number.}

\begin{definition}
A \emph{rewrite preorder} is a preorder on terms which is closed under contexts and
substitutions. A \emph{reduction pair} $(\gtrsim, \succ)$ consists of a
rewrite preorder $\gtrsim$ and a compatible well-founded order $\succ$
which is closed under substitutions.  Here compatibility means 
the inclusion ${\gtrsim \cdot \succ \cdot \gtrsim} \subseteq {\succ}$.
A reduction pair $({\gtrsim}, {\succ})$ is called \emph{collapsible} 
for a TRS $\RR$
if $\succ$ is collapsible for $\RR$.
\end{definition}

Recall the derivation in~\eqref{eq:refine}: Due to compound symbols the
rewrite step $u_i \rsrew{\PP} v_i$ may take place below the root.
Hence $\PP \subseteq {\succ}$ does not ensure $u_i \succ v_i$.  
To address this problem we introduce a notion of \emph{safety} that is based on the next definitions.
\begin{definition}
The set $\TBC$ is inductively defined as follows (i) $\TTs \cup \TT \subseteq \TBC$,
where $\TTs = \{t^{\sharp} \mid t \in \TT \}$ and
(ii) $c(t_1,\ldots,t_n) \in \TBC$, whenever $t_1,\ldots,t_n \in \TBC$ and $c$ is a compound symbol.
\end{definition}

\begin{definition}
A proper order $\succ$ on $\TBC$ 
is called \emph{safe} if
$c(s_1,\ldots,s_i,\ldots,s_n) \succ c(s_1,\ldots,t,\ldots,s_n)$
for all $n$-ary compound symbols $c$ and all terms $s_1,\ldots,s_n,t$ 
with $s_i \succ t$.
A reduction pair $({\gtrsim}, {\succ})$ is called \emph{safe}
if $\succ$ is safe.
\end{definition}

\begin{lemma} \label{l:5}
Let $\PP$ be a set of weak, weak innermost, or standard dependency pairs, and 
let $(\gtrsim, \succ)$ be a safe reduction pair such that
$\UU(\PP) \subseteq {\gtrsim}$ and $\PP \subseteq {\succ}$. 
If $s \in \TBC$ and $s \rsrew{\PP / \UU(\PP)} t$ then
$s \succ t$ and $t \in \TBC$.
\end{lemma}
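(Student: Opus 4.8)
The plan is to unfold the relative rewrite step and treat its three phases separately, exploiting that the reduction pair is compatible, i.e. ${\gtrsim}\cdot{\succ}\cdot{\gtrsim} \subseteq {\succ}$. By definition of relative rewriting there are terms $u,v$ with $s \rssrew{\UU(\PP)} u \rsrew{\PP} v \rssrew{\UU(\PP)} t$. I would show that along the two $\UU(\PP)$-phases the term stays in $\TBC$ and that $s \gtrsim u$ and $v \gtrsim t$ hold, while the single $\PP$-step yields $u \succ v$ together with $v \in \TBC$. Chaining these facts through compatibility gives $s \succ t$, and $t \in \TBC$ follows from the stepwise preservation.

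For the $\UU(\PP)$-phases the key observation is structural: every $\UU(\PP)$-rule is an ordinary rule of $\RS$, so its left-hand side is rooted by a symbol in $\DS$ and contains neither compound nor dependency pair symbols. In a $\TBC$-term the compound symbols sit only at the top, and the sole dependency pair symbols occur at the roots of the $\TTs$-leaves; hence any $\UU(\PP)$-redex must lie strictly inside one of the $\TT\cup\TTs$-leaves. Rewriting there maps a $\TT$-leaf to a $\TT$-leaf and an argument of a $\TTs$-leaf to an ordinary term, so membership in $\TBC$ is preserved. Since $\gtrsim$ is a rewrite preorder (closed under contexts and substitutions) and $\UU(\PP)\subseteq{\gtrsim}$, each single step gives $s'\gtrsim s''$, and transitivity yields $s\gtrsim u$ and $v\gtrsim t$.

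The heart of the argument is the $\PP$-step, and this is where the safety assumption is indispensable: because of compound symbols the step $u \rsrew{\PP} v$ may take place below the root, so $\PP\subseteq{\succ}$ does not by itself give $u\succ v$. I would first locate the redex. A $\PP$-rule has left-hand side $l^\sharp$ rooted by a sharp symbol, which in a $\TBC$-term can only match at a $\TTs$-leaf; thus $u = C[l^\sharp\sigma]$ and $v = C[r\sigma]$, where $C$ is a context built solely from compound symbols. From $\PP\subseteq{\succ}$ and closure of $\succ$ under substitutions one obtains $l^\sharp\sigma \succ r\sigma$, and then a straightforward induction on $C$, applying safety at each compound symbol, lifts this to $u\succ v$.

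It remains to verify $v\in\TBC$, i.e. $r\sigma\in\TBC$. Here I would analyse the shape of the right-hand side $r=\COM(u_1^\sharp,\dots,u_m^\sharp)$ (and $r=u^\sharp$ in the standard case). Matching $l^\sharp\sigma$ against a $\TTs$-leaf forces $\sigma$ to send every variable of $l$—and hence, since $\Var(r)\subseteq\Var(l)$, every variable of $r$—to an ordinary term in $\TT$. Consequently each $u_i^\sharp\sigma$ lies in $\TT\cup\TTs$: a variable component becomes a $\TT$-term and a defined component becomes a $\TTs$-term. Thus $r\sigma$ is either a single such term ($m=1$) or a compound symbol applied to $\TBC$-terms ($m\neq 1$, including the nullary case, which gives a compound constant), so $r\sigma\in\TBC$; grafting it into the compound context $C$ keeps $v\in\TBC$. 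The main obstacle is precisely this interplay in the $\PP$-step—confining the redex below compound symbols, lifting $\succ$ through them by safety, and checking that the substituted right-hand side is again a legal $\TBC$-term—whereas the $\UU(\PP)$-phases are routine.
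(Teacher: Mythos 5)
Your proof is correct, and it is exactly the argument the paper intends: the surrounding discussion (the remark that a $\PP$-step may occur below the root because of compound symbols, which is what motivates the safety condition) sketches precisely your decomposition, although the paper itself states Lemma~\ref{l:5} without an explicit proof. All the delicate points are handled properly — confining $\UU(\PP)$-redexes strictly inside the $\TT\cup\TTs$-leaves, locating the $\PP$-redex at a $\TTs$-leaf so that the surrounding context consists of compound symbols only, lifting $l^\sharp\sigma \succ r\sigma$ through that context via safety, checking that $\sigma$ maps $\Var(r)\subseteq\Var(l)$ into $\TT$ so that $r\sigma\in\TBC$ (including the nullary compound case), and finally chaining $s \gtrsim u \succ v \gtrsim t$ through compatibility.
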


Employing Theorem~\ref{t:dp:usable}, 
Theorem~\ref{t:relative}, and Lemma~\ref{l:5} we arrive at our Main Theorem.
\begin{theorem} \label{t:ijcar08a:main}
Let $\RR$ be a TRS, let $\AA$ be an SLI, let $\PP$ be the set of weak, weak
innermost, or (standard) dependency pairs, such that $\PP$ is non-duplicating, 
and let $(\gtrsim, \succ)$ be a safe and 
$\Slow$-collapsible reduction pair such that $\UU(\PP) \subseteq {\gtrsim}$ and
$\PP \subseteq {\succ}$. 
If in addition $\UU(\PP) \subseteq {>_\A}$ then 
for any $t \in \TB$, we have $\dheight(t, \rew) \leqslant p(\Slow(t^{\sharp},\succ), \size{t})$,
where $p(m,n) \defsym (1 + \WG(\A,\PP)) \cdot m + \MA \cdot n$
and $\rew$ denotes $\rsrew{\RS}$ or $\irew{\RS}$
depending on whether $\PP = \WDP(\RS)$ or $\PP = \WIDP(\RS)$.
Moreover if all compound symbols in $\WIDP(\RR)$ are nullary  
we have $\dheight(t, \irew{\RS}) \leqslant p(\Slow(t^{\sharp},\succ), \size{t}) + 1$.
\end{theorem}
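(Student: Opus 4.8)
The plan is to chain the three main results of the preceding sections. First I would apply Theorem~\ref{t:dp:usable}: since $t \in \TB$ and (as argued below) $t$ is terminating, it gives $\dheight(t, \rew) \leqslant \dheight(t^{\sharp}, \rsrew{\UU(\PP) \cup \PP})$, where $\rew$ is $\rsrew{\RS}$ or $\irew{\RS}$ according to the choice of $\PP$. This replaces the original system by $\UU(\PP) \cup \PP$ and shifts the analysis to the marked term $t^{\sharp}$. It thus suffices to bound $\dheight(t^{\sharp}, \rsrew{\UU(\PP) \cup \PP})$ by $p(\Slow(t^{\sharp}, \succ), \size{t})$.

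Next I would instantiate Theorem~\ref{t:relative} with $\RR := \PP$ and $\SS := \UU(\PP)$. Its hypotheses are met: $\AA$ is an SLI, it is compatible with $\UU(\PP)$ because the additional assumption $\UU(\PP) \subseteq {>_\AA}$ is exactly compatibility with $\AA$, and $\PP$ is non-duplicating by assumption. Termination of $t^{\sharp}$ on $\PP \cup \UU(\PP)$ follows from the well-foundedness of $\succ$ and of $>_\AA$ together with Lemmata~\ref{l:5} and~\ref{l:usable} (each $\PP$-step strictly decreases $\succ$, so there are finitely many, while runs consisting only of $\UU(\PP)$-steps terminate since $\UU(\PP) \subseteq {>_\AA}$); this in turn licenses the use of Theorem~\ref{t:dp:usable} above. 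Theorem~\ref{t:relative} then yields
\begin{equation*}
\dheight(t^{\sharp}, \rsrew{\UU(\PP) \cup \PP}) \leqslant (1 + \WG(\AA,\PP)) \cdot \dheight(t^{\sharp}, \to_{\PP / \UU(\PP)}) + \MA \cdot \size{t^{\sharp}} \tpkt
\end{equation*}
Since $\size{t^{\sharp}} = \size{t}$, the remaining task is to bound the relative derivation length $\dheight(t^{\sharp}, \to_{\PP / \UU(\PP)})$ by $\Slow(t^{\sharp}, \succ)$.

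The crux is this last bound, and it is where safety, collapsibility, and the set $\TBC$ all enter. Because $t \in \TB \subseteq \TT$ we have $t^{\sharp} \in \TTs \subseteq \TBC$. Consider a maximal derivation $t^{\sharp} = s_0 \to_{\PP / \UU(\PP)} s_1 \to_{\PP / \UU(\PP)} \cdots \to_{\PP / \UU(\PP)} s_\ell$. By Lemma~\ref{l:5}, applied inductively along the derivation, every $s_i$ lies in $\TBC$ and $s_i \succ s_{i+1}$; the safety of $\succ$ is precisely what guarantees the $\succ$-descent despite $\PP$-steps occurring below compound symbols. Since $(\gtrsim,\succ)$ is $\Slow$-collapsible (Definition~\ref{d:collapsible}), each such $\succ$-descent forces $\Slow(s_i, \succ) > \Slow(s_{i+1}, \succ)$, so $(\Slow(s_i,\succ))_{i}$ is a strictly decreasing sequence of natural numbers and hence $\ell \leqslant \Slow(t^{\sharp}, \succ)$. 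Combining with the displayed inequality gives $\dheight(t, \rew) \leqslant (1 + \WG(\AA,\PP)) \cdot \Slow(t^{\sharp}, \succ) + \MA \cdot \size{t} = p(\Slow(t^{\sharp}, \succ), \size{t})$, as desired.

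For the ``moreover'' claim I would repeat the argument with $\PP = \DP(\RS)$ and $\rew = \irew{\RS}$: when all compound symbols in $\WIDP(\RS)$ are nullary, the second inequality of Theorem~\ref{t:dp:usable} applies and contributes an extra summand $1$, i.e.\ $\dheight(t, \irew{\RS}) \leqslant \dheight(t^{\sharp}, \rsrew{\UU(\DP(\RS)) \cup \DP(\RS)}) + 1$; the relative-rewriting and collapsibility steps are unchanged, yielding $p(\Slow(t^{\sharp}, \succ), \size{t}) + 1$. I expect the main obstacle to be the bookkeeping in the crux paragraph: verifying that the relative derivation never leaves $\TBC$ so that Lemma~\ref{l:5} stays applicable throughout, and confirming that $\Slow$-collapsibility genuinely transfers each relative $\succ$-step (which internally mixes $\gtrsim$- and $\succ$-comparisons through $\UU(\PP)$ and $\PP$) into a strict drop of $\Slow(\cdot,\succ)$; the termination justification needed to license Theorems~\ref{t:dp:usable} and~\ref{t:relative} is a secondary but necessary point.
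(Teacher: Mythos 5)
Your proposal is correct and follows essentially the same route as the paper's own proof: termination of basic terms via Lemmata~\ref{l:usable} and~\ref{l:5}, then the chain $\dheight(t,\rew) \leqslant \dheight(t^\sharp,\rsrew{\UU(\PP)\cup\PP}) \leqslant p(\dheight(t^\sharp,\rsrew{\PP/\UU(\PP)}),\size{t})$ via Theorems~\ref{t:dp:usable} and~\ref{t:relative}, and finally $\dheight(t^\sharp,\rsrew{\PP/\UU(\PP)}) \leqslant \Slow(t^\sharp,\succ)$ via Lemma~\ref{l:5} together with safety and collapsibility. The bookkeeping concern you flag about transferring collapsibility through the mixed $\gtrsim$/$\succ$ steps of a relative rewrite step is present in the paper's (terser) proof as well, so your treatment is if anything more explicit.
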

\begin{proof}
First, observe that the assumptions imply that any basic term $t \in \TB$
is terminating with respect to $\RS$. This is a direct consequence of
Lemma~\ref{l:usable} and Lemma~\ref{l:5} in conjunction with the
assumptions of the theorem.
Without loss of generality, we assume $\PP = \WDP(\PP)$. 
By Theorem~\ref{t:dp:usable} 
and~\ref{t:relative} we obtain:
\begin{align*}
  \dheight(t, \rew)  & \leqslant \dheight(t^\sharp,\rsrew{\UU(\PP) \, \cup \, \PP}) \leqslant p(\dheight(t^\sharp,\rsrew{\PP/\UU(\PP)}),\size{t^{\sharp}})\\ 
  & \leqslant p(\Slow(t^{\sharp},\succ),\size{t^{\sharp}}) = p(\Slow(t^{\sharp},\succ),\size{t}) \tpkt
\end{align*}
In the last line we exploit that $\size{t^{\sharp}} = \size{t}$.
\end{proof}

%GM. extended 2.12.2008
Note that there exist two subtle disadvantages of Theorem~\ref{t:ijcar08a:main} 
in comparison to Theorem~\ref{t:dp:usable}. 
First the Main Theorem requires that the set of weak, weak innermost, or 
(standard) dependency pairs $\PP$ is non-duplicating. 
Second, the requirement that the usable rules are compatible with some SLI, implies
that all usable rules must be non-duplicating. Hence the set
$\UU(\PP) \cup  \PP$ must not contain duplicating rules.
This is not necessary to meet the requirements of Theorem~\ref{t:dp:usable}.

In order to construct safe reduction pairs 
one may use \emph{safe algebras}, i.e., weakly monotone well-founded 
algebras $(\AA, \succ)$ such that the interpretations of compound symbols
are strictly monotone with respect to $\succ$.
Another way is to apply an argument filtering to a reduction pair.
\begin{definition}
An \emph{argument filtering} for a signature $\FF$ is a
mapping $\pi$ that assigns to every $n$-ary function symbol $f \in \FF$
an argument position $i \in \{ 1, \dots, n \}$ or a (possibly empty)
list $[ \seq[m]{i} ]$ of argument positions with
$1 \leqslant i_1 < \cdots < i_m \leqslant n$.
The signature $\FFpi$ consists of all function symbols $f$ such that
$\pi(f)$ is some list $[ \seq[m]{i} ]$, where in $\FFpi$ the arity of
$f$ is $m$. Every argument filtering $\pi$ induces a mapping from
$\TERM$ to $\TERMpi$, also denoted by $\pi$:
\[
\pi(t) = \begin{cases}
t & \text{if $t$ is a variable} \\
\pi(t_i) & \text{if $t = f(\seq{t})$ and $\pi(f) = i$} \\
f(\pi(t_{i_1}),\dots,\pi(t_{i_m})) &
\text{if $t = f(\seq{t})$ and $\pi(f) = [ \seq[m]{i} ]$}
\end{cases}
\]
An argument filtering $\pi$ is called \emph{safe} if 
$\pi(c) = [1,\ldots, n]$ for all $n$-ary compound symbols.
For a relation $R$ on $\TERM$ we define $R^\pi$ on $\TERMpi$ as follows: 
$s \mathrel{R^\pi} t$ if and only if $\pi(s) \mathrel{R} \pi(t)$.
\end{definition}

% korrigiert, 16. 4. 2009
\begin{lemma}
If $(\AA, \succ)$ is a safe algebra then
$(\succcurlyeq_{\A},\succ_{\A})$ is a safe reduction pair, where
$\succcurlyeq$ denotes the reflexive closure of $\succ$.
Furthermore, $(\gtrsim^\pi,\succ^\pi)$ is a safe reduction pair
if $(\gtrsim, \succ)$ is a safe reduction pair and $\pi$ is 
a safe argument filtering. 
\end{lemma}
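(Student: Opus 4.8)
The plan is to treat the two assertions separately, since each reduces to verifying, one by one, the four defining properties of a \emph{safe} reduction pair: that the weak component is a rewrite preorder, that the strict component is a well-founded order closed under substitutions, that the two are compatible in the sense ${\gtrsim \cdot \succ \cdot \gtrsim} \subseteq {\succ}$, and finally that the strict component is safe on $\TBC$. In both cases the strategy is the same, namely to transport these properties either from the carrier order of the algebra (first assertion) or from the given reduction pair through the filtering map (second assertion).

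For the first assertion I would unfold the definition of the induced orders: $s \succ_{\A} t$ iff $\eval{\alpha}{\A}(s) \succ \eval{\alpha}{\A}(t)$ for every assignment $\alpha$, and $s \succcurlyeq_{\A} t$ iff $\eval{\alpha}{\A}(s) \succcurlyeq \eval{\alpha}{\A}(t)$ for every $\alpha$. That $\succcurlyeq_{\A}$ is a preorder closed under substitutions is routine, and closure under contexts is exactly where \emph{weak monotonicity} of $(\AA,\succ)$ enters: from $\eval{\alpha}{\A}(s) \succcurlyeq \eval{\alpha}{\A}(t)$, applying the weakly monotone interpretation $f_\A$ in the relevant argument preserves $\succcurlyeq$. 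Well-foundedness of $\succ_{\A}$ follows because an infinite $\succ_{\A}$-chain yields, via a fixed assignment, an infinite $\succ$-chain on the carrier. Compatibility is immediate from transitivity on the carrier, i.e. from ${\succcurlyeq \cdot \succ \cdot \succcurlyeq} \subseteq {\succ}$, applied pointwise for every $\alpha$. Safety is the only place that uses the extra hypothesis defining a safe algebra: since every compound symbol $c$ is interpreted \emph{strictly} monotonically, $s_i \succ_{\A} t$ forces $\eval{\alpha}{\A}(c(\dots,s_i,\dots)) \succ \eval{\alpha}{\A}(c(\dots,t,\dots))$ for all $\alpha$, hence $c(\dots,s_i,\dots) \succ_{\A} c(\dots,t,\dots)$.

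For the second assertion I would rely on the standard commutation properties of argument filterings, namely $\pi(t\sigma) = \pi(t)\sigma_\pi$ for the substitution $\sigma_\pi$ defined by $x \mapsto \pi(\sigma(x))$, together with a case analysis showing that $\pi$ applied to a one-hole context is again a context over $\FFpi$ (or collapses the hole away). Using these, closure of $\gtrsim^\pi$ under substitutions and contexts is inherited from that of $\gtrsim$, and symmetrically $\succ^\pi$ is closed under substitutions; well-foundedness of $\succ^\pi$ and compatibility of the pair transport through $\pi$, because an infinite $\succ^\pi$-chain (respectively a ${\gtrsim^\pi \cdot \succ^\pi \cdot \gtrsim^\pi}$ factorisation) maps under $\pi$ to the corresponding situation for $(\gtrsim,\succ)$. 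Safety finally exploits that $\pi$ is a \emph{safe} filtering: because $\pi(c) = [1,\dots,n]$ for every $n$-ary compound symbol, $\pi$ commutes with compound contexts, so $s_i \succ^\pi t$ gives $\pi(s_i) \succ \pi(t)$ and, by safety of $\succ$, $\pi(c(\dots,s_i,\dots)) \succ \pi(c(\dots,t,\dots))$, that is $c(\dots,s_i,\dots) \succ^\pi c(\dots,t,\dots)$.

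The bulk of the first part is essentially bookkeeping that follows the blueprint of the classical correspondence between weakly monotone algebras and reduction pairs, the single new ingredient being the role of strict monotonicity of compound symbols for safety. I expect the main obstacle to lie in the second part, specifically in the closure-under-contexts step for $\gtrsim^\pi$: one must carefully distinguish whether the argument position carrying the step is kept by the list $\pi(f) = [\seq[m]{i}]$, is the surviving position of a collapsing filtering $\pi(f) = i$, or is deleted altogether, and check that in each case the filtered terms are either related by $\gtrsim$ (via closure of $\gtrsim$ under the filtered context) or syntactically equal (so that reflexivity of $\gtrsim$ suffices). Keeping this case analysis aligned with the commutation lemma, while simultaneously ensuring that safety is not destroyed by the filtering, is the delicate point; everything else is a direct transport of properties along $\pi$.
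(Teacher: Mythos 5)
The paper states this lemma without proof, regarding it as a routine verification, and your argument correctly supplies exactly the standard argument it has in mind: the first half is the usual correspondence between weakly monotone well-founded algebras and reduction pairs, with strict monotonicity of the compound-symbol interpretations delivering safety, while the second half transports each defining property along $\pi$ via $\pi(t\sigma)=\pi(t)\sigma_\pi$ and the case analysis on $\pi(f)$ (position kept, collapsed, or deleted), with the condition $\pi(c)=[1,\dots,n]$ on compound symbols delivering safety. I see no gaps.
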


Following the pattern of the proof of Corollary~\ref{c:dp:usable}
it is an easy exercise to extend Theorem~\ref{t:ijcar08a:main} to a method 
for complexity analysis.  
%
%GM: corrected
\begin{corollary} \label{c:main1}
Let $\RR$ be a TRS, let $\AA$ be an SLI, 
let $\PP$ be the set of weak, weak innermost, or standard dependency pairs,
such that $\PP$ is non-duplicating,
where the compound symbols in $\WIDP(\RR)$ are nullary, if $\PP = \DP(\RS)$.
Moreover let $\BB$ be a linear or quadratic restricted interpretation 
such that $(\geqord{\BB},\gord{\BB})$ forms a safe reduction pair with
$\UU(\PP) \subseteq {\geqord{\BB}}$ and $\PP \subseteq {\gord{\BB}}$.
If $\UU(\PP) \subseteq {>_\A}$ then 
the (innermost) runtime complexity function $\rc^{(\m{i})}_{\RS}$
with respect to $\RS$ is linear or
quadratic, respectively. 
\end{corollary}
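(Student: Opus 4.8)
The plan is to instantiate the Main Theorem (Theorem~\ref{t:ijcar08a:main}) with the reduction pair $(\geqord{\BB}, \gord{\BB})$ and then to bound the abstract complexity measure $\Slow(t^{\sharp}, \gord{\BB})$ it provides in terms of $\size{t}$. First I would verify that all hypotheses of Theorem~\ref{t:ijcar08a:main} are met. The set $\PP$ is non-duplicating and, in the case $\PP = \DP(\RR)$, the compound symbols of $\WIDP(\RR)$ are nullary---both by assumption. The pair $(\geqord{\BB}, \gord{\BB})$ is a safe reduction pair by hypothesis, and the inclusions $\UU(\PP) \subseteq {\geqord{\BB}}$, $\PP \subseteq {\gord{\BB}}$, and $\UU(\PP) \subseteq {>_\A}$ are likewise assumed. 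The only condition requiring a separate (but routine) argument is $\Slow$-collapsibility of $\gord{\BB}$: since $\BB$ is a polynomial interpretation, the evaluation map $t \mapsto \eval{\alpha_0}{\BB}(t)$ witnesses collapsibility, as already noted after Definition~\ref{d:collapsible}. Concretely, $s \gord{\BB} t$ entails $\eval{\alpha_0}{\BB}(s) > \eval{\alpha_0}{\BB}(t)$, so one may take $\Slow(\cdot, \gord{\BB}) = \eval{\alpha_0}{\BB}(\cdot)$.

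With these checks in place, Theorem~\ref{t:ijcar08a:main} yields, for every basic term $t \in \TB$,
\begin{equation*}
\dheight(t, \rew) \;\leqslant\; (1 + \WG(\A, \PP)) \cdot \Slow(t^{\sharp}, \gord{\BB}) + \MA \cdot \size{t} \tkom
\end{equation*}
with an additional harmless summand $1$ in the innermost case $\PP = \DP(\RR)$, and where $\rew$ is $\rsrew{\RS}$ or $\irew{\RS}$ according to the choice of $\PP$. The coefficients $1 + \WG(\A, \PP)$ and $\MA$ are constants determined by the SLI $\AA$ and by $\PP$ alone. Hence the whole claim reduces to estimating $\Slow(t^{\sharp}, \gord{\BB}) = \eval{\alpha_0}{\BB}(t^{\sharp})$.

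The next step is this estimate, which is where the shape of a restricted interpretation enters. For a basic term $t = f(\seq{t})$ we have $t^{\sharp} = f^{\sharp}(\seq{t})$ with each $t_i \in \TA(\CS, \VS)$ a constructor term. Since $\BB$ interprets constructors by strongly linear (additive) polynomials, an induction on term structure entirely analogous to Lemma~\ref{l:SLIa} gives $\eval{\alpha_0}{\BB}(t_i) = \bO(\size{t_i})$. Feeding these into the interpretation of $f^{\sharp}$---a linear polynomial in the linear-restricted case and a quadratic polynomial in the quadratic-restricted case---yields
\begin{equation*}
\Slow(t^{\sharp}, \gord{\BB}) = \eval{\alpha_0}{\BB}(t^{\sharp}) =
\begin{cases}
\bO(\size{t}) & \text{if $\BB$ is linear restricted,} \\
\bO(\size{t}^2) & \text{if $\BB$ is quadratic restricted.}
\end{cases}
\end{equation*}

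Combining this with the displayed bound and absorbing the fixed constants gives $\dheight(t, \rew) = \bO(\size{t})$ (resp.\ $\bO(\size{t}^2)$) uniformly over $t \in \TB$; taking the maximum over basic terms of size at most $n$ then produces $\Rc{\RS}(n) = \bO(n)$ or $\Rci{\RS}(n) = \bO(n)$ in the linear case and $\bO(n^2)$ in the quadratic case, as claimed. I expect the only genuinely delicate point to be the estimate $\eval{\alpha_0}{\BB}(t^{\sharp}) = \bO(\size{t}^{\,d})$ with $d \in \{1,2\}$: one must exploit that only the root symbol $f^{\sharp}$ may be interpreted non-additively, so that the higher degree is confined to a single outermost application rather than compounding with the nesting depth---it is precisely the additivity of the constructor interpretations that keeps each argument interpretation linear in its size and thereby pins the overall growth to $\size{t}$ or $\size{t}^2$.
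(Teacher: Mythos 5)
Your proposal is correct and follows exactly the route the paper intends: the text only remarks that the corollary is obtained ``following the pattern of the proof of Corollary~\ref{c:dp:usable}'' by instantiating Theorem~\ref{t:ijcar08a:main}, and you carry this out faithfully---checking the hypotheses, using $\eval{\alpha_0}{\BB}$ as the collapsing function $\Slow(\cdot,\gord{\BB})$, and bounding $\eval{\alpha_0}{\BB}(t^{\sharp})$ for basic $t$ by exploiting the strongly linear constructor interpretations. Your closing observation about why the degree does not compound with nesting depth is precisely the point that makes the restricted interpretations work, and your treatment of the extra summand $1$ in the $\DP(\RS)$ case is also as intended.
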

Note that if $\UU(\PP) = \varnothing$, the compatibility of $\UU(\PP)$ with an SLI is
trivially satisfiable. In this special case by taking the SLI $\A$ that 
interprets all symbols with the identity function,
we obtain $\dheight(t,\rew) \leqslant \Slow(t^\sharp,\gord{\BB}) + \size{t}$ 
because $\WG(\A,\varnothing) = 0$ and $\MA = 1$. 

As a consequence of Theorem~\ref{t:ijcar08a:main} and Lemma~\ref{l:dp:4} we obtain
the following corollary.

\begin{corollary} \label{c:main2}
Let $\RR$ be a TRS, let $\AA$ be an SLI, let all compound symbols in 
$\WIDP(\RS)$ be nullary, let $\WIDP(\RS)$ be non-duplicating, 
and let $\BB$ be a linear or quadratic
restricted interpretation such that $(\geqord{\BB},\gord{\BB})$ forms 
a reduction pair with $\UU(\DP(\RS)) \subseteq {\geqord{\BB}}$ 
and $\DP(\RS) \subseteq {\gord{\BB}}$.
If in addition $\UU(\DP(\RS)) \subseteq {>_\A}$ then the innermost 
runtime complexity function $\Rci{\RS}$ with respect to $\RS$ is
linear or quadratic, respectively. 
\end{corollary}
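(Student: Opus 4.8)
The plan is to obtain Corollary~\ref{c:main2} as a direct instance of the Main Theorem (Theorem~\ref{t:ijcar08a:main}), invoking its \emph{moreover} clause for innermost rewriting with \emph{standard} dependency pairs, and then bounding the collapsing measure $\Slow(t^{\sharp},\gord{\BB})$ linearly (respectively quadratically) in $\size{t}$. Concretely, I would take $\PP=\DP(\RS)$ together with the reduction pair $(\geqord{\BB},\gord{\BB})$ furnished by the linear (quadratic) restricted interpretation $\BB$, and choose the SLI $\A$ supplied by the hypotheses. Since the moreover clause of Theorem~\ref{t:ijcar08a:main} already folds in the additive constant coming from Lemma~\ref{l:dp:4}, applying it yields, for every basic term $t\in\TB$,
\begin{equation*}
\dheight(t,\irew{\RS}) \leqslant p(\Slow(t^{\sharp},\gord{\BB}),\size{t}) + 1, \qquad p(m,n) = (1+\WG(\A,\DP(\RS)))\,m + \MA\,n,
\end{equation*}
provided the structural hypotheses of that theorem are met.

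First I would discharge those hypotheses. Non-duplication of $\DP(\RS)$ follows from the assumed non-duplication of $\WIDP(\RS)$: every $\DD$-rooted subterm $u$ of a right-hand side $r$ lies inside one of the maximal $\DD$-rooted subterms collected in $\COM(u_1^{\sharp},\ldots,u_n^{\sharp})$, because the surrounding context is free of defined symbols; hence each variable occurs in $u^{\sharp}$ at most as often as in $\COM(u_1^{\sharp},\ldots,u_n^{\sharp})$, and thus at most as often as in the left-hand side. Safety of $(\geqord{\BB},\gord{\BB})$ holds vacuously, since $\DP(\RS)$ carries no compound symbols of positive arity (and, by assumption, those of $\WIDP(\RS)$ are nullary), so the compound-context condition of safety is empty. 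Collapsibility is witnessed by the evaluation function $\eval{\alpha_0}{\BB}$, as recorded for polynomial interpretations after Definition~\ref{d:collapsible}, so I may set $\Slow(\cdot,\gord{\BB})=\eval{\alpha_0}{\BB}(\cdot)$. The compatibility inclusions $\UU(\DP(\RS))\subseteq\geqord{\BB}$, $\DP(\RS)\subseteq\gord{\BB}$ and $\UU(\DP(\RS))\subseteq{>_\A}$ are given directly.

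The quantitative heart is then to estimate $\eval{\alpha_0}{\BB}(t^{\sharp})$ for a basic term $t=f(t_1,\ldots,t_n)$ with $t_i\in\TA(\CS,\VS)$. Because $\BB$ is \emph{restricted}, its constructor symbols receive strongly linear interpretations, so a Lemma~\ref{l:SLIa}-style induction gives $\eval{\alpha_0}{\BB}(t_i)\leqslant M\cdot\size{t_i}$ for a constant $M$. As $t^{\sharp}=f^{\sharp}(t_1,\ldots,t_n)$ and $f^{\sharp}$ is interpreted by a linear (respectively quadratic) polynomial, substituting these linear bounds yields $\eval{\alpha_0}{\BB}(t^{\sharp})=\bO(\sum_i\size{t_i})$ (respectively $\bO((\sum_i\size{t_i})^2)$), and $\sum_i\size{t_i}\leqslant\size{t}$. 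Feeding this into the displayed inequality, and noting that $\WG(\A,\DP(\RS))$ and $\MA$ are constants, I conclude $\dheight(t,\irew{\RS})=\bO(\size{t})$ (respectively $\bO(\size{t}^2)$); taking the maximum over all $t\in\TB$ with $\size{t}\leqslant n$ gives $\Rci{\RS}(n)=\bO(n)$ (respectively $\bO(n^2)$).

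Since almost all of the work is already carried out inside Theorem~\ref{t:ijcar08a:main}, the proof is largely a matter of matching hypotheses. The only genuinely non-routine points I anticipate are the transfer of non-duplication from $\WIDP(\RS)$ to $\DP(\RS)$ (which relies on the subterm observation above) and the closing size estimate for $\eval{\alpha_0}{\BB}(t^{\sharp})$, which is exactly where the \emph{restricted} shape of $\BB$ is indispensable: dropping it would permit double-exponential growth, in line with Corollary~\ref{c:dp:usable} and the surrounding discussion.
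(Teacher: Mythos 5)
Your proposal is correct and follows essentially the same route the paper intends: the paper derives this corollary directly from Theorem~\ref{t:ijcar08a:main} together with Lemma~\ref{l:dp:4}, exactly as you do, and your closing size estimate for $\eval{\alpha_0}{\BB}(t^{\sharp})$ mirrors the argument already given for Corollary~\ref{c:dp:usable}. The two details you fill in explicitly --- that non-duplication of $\WIDP(\RS)$ transfers to $\DP(\RS)$ via maximality of the $\DD$-rooted subterms, and that safety is vacuous since no compound symbol of positive arity occurs --- are exactly the points the paper leaves implicit, and your treatment of both is sound.
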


Corollary~\ref{c:main2} establishes (for the first time) a method to analyse
the derivation length induced by the standard dependency pair method for 
innermost rewriting. More general, if  all compound symbols in 
$\WIDP(\RS)$ are nullary and there exists a collapsible reduction pair
$(\gtrsim,\succ)$ such that $\UU(\PP) \subseteq {\gtrsim}$ and
$\PP \subseteq {\succ}$, then the innermost runtime complexity of
$\RS$ is linear in the maximal length of $\succ$-descending steps.
Clearly for \emph{string rewriting} (cf. \cite{Terese}) 
the compound symbols in $\WIDP(\RS)$ are always nullary and
all rules in $\WIDP(\RS)$ are non-duplicating. Hence the syntactic
requirements are always met.
%GM: dropped
%and the conditions works quite well for TRSs, too (see Section~\ref{ijcar08a:Experiments}).

\section{Experiments} \label{ijcar08a:Experiments}

In order to test the practical feasibility of the here established methods,
we implementated a complexity analyser based on
syntactical transformations for dependency pairs and usable rules 
together with polynomial orders (based on \cite{contejean:2005}).
To deal efficiently with polynomial interpretations, the issuing
constraints are encoded in \emph{propositional logic} in a similar
spirit as in~\cite{FGMSTZ07}.  
Assignments are found by employing a state-of-the-art SAT solver, in our case
Mini\-Sat%
\footnote{\url{http://minisat.se/}.}. 
Furthermore, strongly linear interpretations are handled by 
a decision procedure for Presburger arithmetic.

In a similar way, the new techniques have also been incorportated into the 
\emph{Tyrolean Complexity Tool} (\tct\ for short) that 
incorporates the most powerful techniques to analyse the complexity of 
rewrite systems that are currently at hand.%
\footnote{\url{http://cl-informatik.uibk.ac.at/software/tct/}.}
For compilation of the here presented experimental data we used the
latter implementation.

As suitable test bed we used the rewrite systems in the Termination Problem Data Base
version 4.0.%
\footnote{\url{http://colo5-c703.uibk.ac.at:8080/termcomp/}.} 
This test bed comprises 1739 TRSs.
The presented tests were performed on a server with 8 Dual-Core 2.6 GHz AMD\textsuperscript\textregistered\ 
Opteron\texttrademark\ Processor 8220 CPUs, for a total of 16 cores. 64 GB of RAM are
available. For each system we used a timeout of 60 seconds, the times 
in the tables are given in seconds. 
Tables~\ref{tab:lc} and~\ref{tab:qc} summarise the results of the conducted experiments.%
\footnote{For full experimental evidence see~\url{http://www.jaist.ac.jp/~hirokawa/08a/}
or \url{http://cl-informatik.uibk.ac.at/software/tct/}.}
Text written in~\textit{italics} below the number of successes or failures
indicates total time of success cases or failure cases, respectively.%
\footnote{Sum of numbers in each column may be less than 1739 because of
stack overflow.}

\begin{table}
\begin{center}
\caption{Results for Linear Runtime Complexities}
\label{tab:lc}
\medskip
\begin{tabular}{@{\hspace{0pt}}lr@{~~}r@{~}r@{~}r@{\qquad}rl@{~}rl@{~}l@{\hspace{0pt}}}
\hline\\[-2ex]
& \multicolumn{4}{c}{\emph{full rewriting}}
& \multicolumn{5}{c}{\emph{innermost rewriting}} \\
& \textsf{LR} 
& Cor.~\ref{c:dp:usable} & Cor.~\ref{c:main1}
& both
& Cor.~\ref{c:dp:usable} & ($\DP$) & Cor.~\ref{c:main1} & ($\DP$) 
& both \\
\hline\\[-2ex]
S
& 139 
& 139 & 93 & 147
& 144 & (136) & 102 & (91) & 166 \\
& \emph{7} 
& \emph{9} & \emph{14} & \emph{31}
& \emph{8} & (\emph{7}) & \emph{16} & (\emph{13}) & \emph{33} \\
F
& 1591 
& 1582 & 1646 & 1587
& 1577 & (1581) & 1637 & (1648) & 1568 \\
& \emph{2474} 
& \emph{4789} & \emph{456} & \emph{3853} 
& \emph{4699} & (\emph{4797}) &  \emph{462} & (\emph{455}) & \emph{3628}\\
T
& 8
& 17 & 0 & 5
& 17 & (22) & 0 & (0) & 5\\
 \hline\\[-3em]
\end{tabular}
\end{center}
\end{table}

We use the following abbreviations: The method~\LR\ (\QR) refers to compatibility with
\emph{linear (quadratic) restricted interpretation}, 
cf.~Section~\ref{dependency pairs}. Moreover ``S'', ``F'', ``T'' denotes
\emph{success}, \emph{failure}, or \emph{timeout} respectively.
In interpreting defined and dependency pair functions, we restrict the search to 
polynomials in the range $\{0,1,\dots,5\}$.
Table~\ref{tab:lc} shows the experimental results for
linear runtime complexities based on \LR. The columns
marked ``Cor.~\ref{c:dp:usable}'' and ``Cor.~\ref{c:main1}'' refer to 
the applicability of the respective corollaries. In the column
marked ``both'' we indicate the results, we obtain when we first
try to apply Corollary~\ref{c:main1} and if this fails Corollary~\ref{c:dp:usable}. 
Table~\ref{tab:qc} summarises experimental results for
quadratic runtime complexities based on \QR.
On the studied test bed there are 1567 TRSs such that one may switch from 
$\WIDP(\RR)$ to $\DP(\RR)$. For the individual tests, we indicated 
the results in parentheses for these versions of Corollary~\ref{c:dp:usable} 
and Corollary~\ref{c:main1}.

\begin{table}
\begin{center}
\caption{Results for Quadratic Runtime Complexities}
\label{tab:qc}
\medskip
\begin{tabular}{@{\hspace{0pt}}lr@{~~}r@{~}r@{~}r@{\qquad}rl@{~}rl@{~}l@{\hspace{0pt}}}
\hline\\[-2ex]
%& \multicolumn{8}{l}{\textbf{Quadratic Runtime Complexities}}\\[0.5ex] \hline
& \multicolumn{4}{c}{\emph{full rewriting}}
& \multicolumn{5}{c}{\emph{innermost rewriting}} \\
& \textsf{QR} 
& Cor.~\ref{c:dp:usable} & Cor.~\ref{c:main1}
& both
& Cor.~\ref{c:dp:usable} & ($\DP$) & Cor.~\ref{c:main1} & ($\DP$) 
& both \\
\hline\\[-2ex]
S
& 182 
& 182 & 93 & 186
& 183 & (166) & 102 & (91) & 193 \\
& \emph{152} 
& \emph{329} & \emph{97} & \emph{614} 
& \emph{324} & (\emph{327}) & \emph{98} & (\emph{81}) & \emph{486} \\
F
& 524
& 473 & 1636 & 564
& 492 & (855) & 1627 & (1636) & 577 \\
& \emph{5469} 
& \emph{5436} & \emph{793} & \emph{5215}
& \emph{5500} & (\emph{4884}) & \emph{825} & (\emph{765}) & \emph{5535}
\\
T
& 864
& 951 & 10 & 853
& 924 & (884) & 10 & (12) & 833 \\
\hline\\[-3em]
\end{tabular}
\end{center}
\end{table}

\section{Conclusion} \label{ijcar08a:Conclusion}

In this paper we studied the runtime complexity of rewrite systems. 
We have established a variant of the dependency pair method 
that is applicable in this context and is easily mechanisable. 
In particular our findings extend the class of TRSs
whose \emph{linear} or \emph{quadratic} runtime complexity can be detected automatically.
We provided ample numerical data for assessing the viability of the method.
To conclude, we mention possible future work.
In the experiments presented, we have restricted our attention to 
interpretation based methods inducing linear or quadratic (innermost) 
runtime complexity. Recently in~\cite{AM:2008} (see Chapter~\ref{flops08}) 
a restriction of the multiset path order, called \emph{polynomial path order} has been introduced that induces polynomial
runtime complexity. In future work we will test to what extent this
is effectively combinable with our Main Theorem.
Furthermore, we strive to extend the approach
presented here to handle dependency graphs~\cite{ArtsGiesl:2000}.

}
\chapter{Complexity, Graphs, and the Dependency Pair Method}
\label{lpar08}

\subsection*{Publication Details}
N.~Hirokawa and G.~Moser.
\newblock Complexity, graphs, and the dependency pair method.
\newblock In \emph{Proceedings of the International Conference on Logic for
  Programming Artificial Intelligence and Reasoning}, number 5330 in LNAI,
  pages 652--666, Springer Verlag,
  2008{\natexlab{b}}.%
\footnote{This research was partially supported by FWF (Austrian Science Fund) project P20133.}

\medskip
The mistake mentioned in Chapter~\ref{ijcar08a} resulted in a
flawed Proposition~9 in the published version. This mistake has been rectified below. 
This influenced the given experimental evidence and I would
like to thank Andreas Schnabl and Martin Avanzini for providing me 
with adjusted experimental data.

\subsection*{Ranking}
The Conference on Logic Programming and Automated Reasoning has been ranked
\textbf{A} by CORE in 2007.

{
\input{lpar08.sty}

\subsection*{Abstract}
This paper builds on recent efforts (see Chapter~\ref{ijcar08a}) to exploit 
the dependency pair method for verifying feasible, i.e., polynomial
\emph{runtime complexities} of term rewrite systems automatically.
We extend our earlier results 
by revisiting dependency graphs in the context
of complexity analysis. 
The obtained new results are easy to implement and 
considerably extend the analytic power of our existing methods.
The gain in power is even more significant when compared to existing methods that
directly, i.e., without the use of transformations, 
induce \emph{feasible} runtime complexities.
We provide ample numerical data for assessing the viability of the method.

\section{Introduction} \label{lpar08:Intro}

Term rewriting is a conceptually simple but powerful abstract model of
computation that underlies much of declarative programming. 
\emph{Runtime complexity} is a notion for capturing time complexities
of functions defined by a term rewriting system (TRS for short) introduced
in~\cite{HM:2008} (see Chapter~\ref{ijcar08a} but also~\cite{Lescanne:1995,BCMT:2001,AM:2008}). 
In recent research we revisited the basic dependency pair method~\cite{ArtsGiesl:2000}
in order to make it applicable for complexity analysis, cf.~\cite{HM:2008}.  
The dependency pair method introduced by Arts and Giesl~\cite{ArtsGiesl:2000} 
is one of the most powerful methods in termination analysis.
The method enables us to use several powerful techniques including, 
usable rules, reduction pairs, argument filterings, and 
dependency graphs. Our main results in~\cite{HM:2008} show how natural 
improvements of the dependency pair method, like usable rules, 
reduction pairs, and argument filterings become applicable in the 
context of complexity analysis. 
In this paper, we will extend these recent results further. 

The dependency pair method for termination analysis is based on 
the observation that from an arbitrary non-terminating term
one can extract a minimal non-terminating subterm. 
For that one considers \emph{dependency pairs} that essentially
encode recursive calls in a TRS. 
Note that with respect to the TRS defined in Example~\ref{ex:lpar08:1} below, 
one finds 5 such pairs (see Section~\ref{DG} for further details).

\begin{example} \label{ex:lpar08:1}
Consider the following TRS $\RS$ which computes a permutation of lists.%
\footnote{This is Example~3.12 in Arts and Giesl's collection of TRSs~\cite{ArtsGiesl:2001}.}
\begin{alignat*}{4}
1\colon && \app(\nil,y) & \rew y &
4\colon && \reverse(n \cons x) & \rew \app(\reverse(x),n \cons \nil)\\
2\colon && \app(n \cons x,y) & \rew n \cons \app(x,y) & \hspace{5ex}
5\colon && \shuffle(\nil) & \rew \nil\\
3\colon && \reverse(\nil) & \rew \nil & 
6\colon && \shuffle(n\cons x) & \rew n \cons \shuffle(\reverse(x))
\end{alignat*}
% GM: example adapted
\end{example}

A very well-studied refinement of the dependency pair method 
are \emph{dependency graphs}.
To show termination of a TRS, it suffices to guarantee that none
of the cycles in $\DG(\RS)$~\cite{ArtsGiesl:2000} can give rise to an infinite rewrite sequence.
(Here a \emph{cycle} $\CC$ is a nonempty set of dependency pairs of 
$\RR$ such that for every two pairs $s \to t$ and $u \to v$ in $\CC$ 
there exists a nonempty path in $\CC$ from $s \to t$ to $u \to v$.)
More precisely it suffices to prove for every cycle
$\CC$ in the dependency graph $\DG(\RS)$, that
there are no $\CC$-minimal rewrite sequences (see~\cite{GAO:2002}, but also~\cite{HirokawaMiddeldorp:2007,GTSF06}). 
To achieve this one may consider each cycle independently, i.e., for each
cycle it suffices to find a reduction pair $(\gtrsim,\succ)$ (cf.~Section~\ref{lpar08:Preliminaries})
such that
${\RS} \subseteq {\gtrsim}$, ${\CC} \subseteq {\gtrsim}$ and 
${{\CC} \cap {\succ}} \not= {\varnothing}$, i.e., at least one dependency pair in $\CC$ 
is strictly decreasing.
\begin{example}[continued from Example~\ref{ex:lpar08:1}] 
The \emph{dependency graph} $\DG(\RS)$,
whose nodes are the mentioned 5 dependency pairs, has the following form
\begin{center}
  \begin{tikzpicture}[node distance=5mm]
    \node(10) {10'} ;
    \node(11) [right=of 10 ] {11'} ;
    \node(9) [below=of 11,xshift=6mm] {9'} ;
    \node(8) [right=of 11] {8'} ;
    \node(7) [right=of 8] {7'} ;

    \path[->] (7) edge [loop above] (7) ;
    \draw[->] (8) to (7);
    \path[->] (9) edge [loop above] (9) ;
    \path[->] (10) edge [loop above] (10) ;
    \draw[->] (9) to (8) ;
    \draw[->] (10) to (11) ;
    \draw[->] (11) to (8) ;
    \draw[->] (11) to (9) ;
  \end{tikzpicture}
\end{center}
This graph contains the 
(maximal) cycles $\{7'\}$, $\{9'\}$, and $\{10'\}$.%
\footnote{Recall that a cycle $\CC$ is maximal, if there is no
longer cycle containing $\CC$.} 
As already mentioned, it suffices to consider each of these three cycles individually. 
\end{example}

The main contribution of this paper is to extend the dependency graph
refinement of the dependency pair method to complexity analysis. 
This is a challenging task, and we face a
couple of difficulties, documented via suitable examples below.
To overcome these obstacles we adapt the standard notion of 
dependency graph suitably and introduce \emph{weak (innermost) dependency graphs},
based on \emph{weak dependency pairs}, which have been studied in~\cite{HM:2008}
(see also Chapter~\ref{ijcar08a}).
Moreover, we observe that in the context of complexity analysis,
it is not enough to focus on the (maximal) cycles of a (weak) dependency graph.
Instead, we show how cycle detection is to be replaced by \emph{path detection},
in order to salvage the (standard) technique of dependency graphs for
runtime complexity considerations.

The remainder of the paper is organised as follows.
After recalling basic notions in Section~\ref{lpar08:Preliminaries}, 
we recall in Section~\ref{DPMethod} main results from~\cite{HM:2008} that will be extended in the sequel. 
In Section~\ref{DG} we establish our dependency graph analysis for 
complexity analysis. Finally, we conclude in Section~\ref{lpar08:Conclusion},
where we assess the applicability of our method.

\section{Preliminaries} \label{lpar08:Preliminaries}

We assume familiarity with term rewriting~\cite{BaaderNipkow:1998,Terese},
but briefly review basic concepts and notations. Moreover, we assume
familiarity with standard notions in graph theory (see for example~\cite[Chapter~1]{Hein:2009}).

Let $\VS$ denote a countably infinite set of variables and $\FS$ a 
signature. The set of terms over $\FS$ and $\VS$ is denoted by 
$\TERMS$ ($\TT$ for short). The \emph{root symbol} of a term $t$ is either $t$ itself, if
$t \in \VS$, or the symbol $f$, if $t = f(\seq{t})$. 
The \emph{set of positions} $\Pos(t)$ of a term $t$ is defined as
usual. We write $\Pos_{\GG}(t) \subseteq \Pos(t)$ for the set of
positions of subterms whose root symbol is contained in $\GG \subseteq \FS$.
The \emph{descendants} of a position with respect to a rewrite
sequence are defined as usual, cf.~\cite{Terese}. 
The subterm relation is denoted as $\subterm$.
$\Var(t)$ ($\Fun(t))$ denotes the set of variables (functions) 
occurring in a term $t$.
The \emph{size} $\size{t}$ of a term is defined 
as the number of symbols in $t$. 
A \emph{term rewrite system} $\RS$ over
$\TERMS$ is a \emph{finite} set of rewrite
rules $l \to r$, such that $l \notin \VS$ and $\Var(l) \supseteq \Var(r)$.
The smallest rewrite relation that contains $\RS$ is denoted by $\to_{\RS}$, 
and its transitive and reflexive closure by $\rssrew{\RS}$.
We simply write $\to$ for $\to_{\RS}$ if $\RS$ is clear from context.
A term $s \in \TERMS$ is called a \emph{normal form} if there is no
$t \in \TERMS$ such that $s \to t$. 
The \emph{innermost rewrite relation} $\irew{\RR}$
of a TRS $\RR$ is defined on terms as follows: $s \irew{\RR} t$ if 
there exists a rewrite rule $l \to r \in \RR$, a context $C$, and
a substitution $\sigma$ such that $s = C[l\sigma]$, $t = C[r\sigma]$,
and all proper subterms of $l\sigma$ are normal forms of $\RR$.  
The set of defined symbols is denoted as $\DS$, while the constructor
symbols are collected in $\CS$.
We call a term $t = f(\seq{t})$ \emph{basic} if $f \in \DS$ and $t_i \in \TA(\CS,\VS)$
for all $1 \leqslant i \leqslant n$. 

We call a TRS \emph{terminating} if no infinite rewrite sequence
exists. The $n$-fold composition of $\to$ is denoted as $\to^n$ and
the \emph{derivation length} of a terminating term $t$ with respect to a 
TRS $\RS$ and rewrite relation $\rsrew{\RS}$ is defined as:
$\dheight(s,\rsrew{\RS}) \defsym \max\{ n \mid \exists t \; s \to^n t \}$. 
Let $\RR$ be a TRS and $T$ be a set of terms. 
The \emph{runtime complexity function with respect to a relation $\to$ on $T$}
is defined as follows:
\begin{equation*}
\Rc{}(n, T, \rew) \defsym \max\{ \dheight(t, \rew) \mid 
\text{$t \in T$ and $\size{t} \leqslant n$}\} \tpkt
\end{equation*}
In particular we are interested in the 
(innermost) runtime complexity with respect to $\rsrew{\RS}$ ($\irew{\RS}$)
on the set $\TB$ of all \emph{basic} terms.%
\footnote{We can replace $\TB$ by the set of terms $f(\seq{t})$ with $f\in \DS$,
whose arguments $t_i$ are in normal form, while keeping all results in this paper.} 
More precisely, the \emph{runtime complexity function} 
(with respect to $\RS$) is defined as 
$\Rc{\RS}(n) \defsym \Rc{}(n, {\TB}, \rsrew{\RS})$ and we 
define the \emph{innermost runtime complexity function} 
as $\Rci{\RS}(n) \defsym \Rc{}(n, {\TB}, \irew{\RS})$.
Note that the \emph{derivational complexity function} (with respect to $\RS$) 
becomes definable as follows: 
$\Dc{\RS}(n) \defsym \Rc{}(n, \TA, \rsrew{\RS})$, where
$\TA$ denotes the set of \emph{all} terms $\TA(\FS,\VS)$, compare~\cite{HofbauerLautemann:1989}.
We sometimes say the (innermost) runtime complexity of $\RR$ is \emph{linear},
\emph{quadratic}, or \emph{polynomial} if $\rc_\RR^{(\mathrm{i})}$ 
is bounded 
by a linear, quadratic, or polynomial function in $n$, respectively.

A \emph{proper order} is a transitive and irreflexive relation and
a \emph{preorder} is a transitive and reflexive relation. 
A proper order $\succ$ is \emph{well-founded} if there is 
no infinite decreasing sequence $t_1 \succ t_2 \succ t_3 \cdots$.
An $\FS$-\emph{algebra} $\A$ consists of a carrier set $A$ and 
an interpretation
$f_\A$ for each function symbol in $\FS$. 
A \emph{well-founded} and \emph{monotone} algebra (\emph{WMA} for short) 
is a pair $(\A,>)$, where $\A$ is an algebra and $>$ is 
a well-founded proper order on $A$ such that every $f_\A$ is monotone (with respect to
$>$) in all arguments.
An \emph{assignment} $\alpha \colon \VS \to A$ is a function mapping
variables to elements in the carrier,
and $\eval{\alpha}{\A}(\cdot)$ denotes the usual evaluation function
associated with $\A$.
A WMA naturally induces a proper order $\gord{\A}$ on terms: 
$s \gord{\A} t$ if $\eval{\alpha}{\A}(s) > \eval{\alpha}{\A}(t)$
for all assignments $\alpha \colon \VS \to A$. 
For the reflexive closure $\geqslant$ of $>$, the preorder $\geqord{\A}$ 
is similarly defined.
Clearly the proper order $\gord{\A}$ is a reduction order, i.e., 
if ${\RS} \subseteq {\gord{\A}}$, for a TRS $\RS$, then
we can conclude termination of $\RS$.
A \emph{rewrite preorder} is a preorder on terms 
which is closed under contexts and substitutions. 
A \emph{reduction pair} $(\gtrsim, \succ)$ consists of a
rewrite preorder $\gtrsim$ and a compatible well-founded order $\succ$
which is closed under substitutions.  Here compatibility means 
the inclusion ${\gtrsim \cdot \succ \cdot \gtrsim} \subseteq {\succ}$. 
Note that for any WMA $\A$ the pair $(\geqord{\A},\gord{\A})$
constitutes a reduction pair. 

We call a WMA $\A$ based on 
the natural numbers $\NN$
a \emph{polynomial interpretation},
if all functions $f_\A$ are polynomials.
A polynomial $P(x_1,\dots,x_n)$ (over the natural numbers) is called
\emph{strongly linear} if $P(x_1,\dots,x_n) = x_1 + \cdots + x_n + c$
where $c \in \NN$. A polynomial interpretation is called \emph{linear restricted} 
if all constructor symbols are interpreted by strongly linear polynomials
and all other function symbols by linear polynomials. If on the other
hand the non-constructor symbols are interpreted by quadratic polynomials,
the polynomial interpretation is called \emph{quadratic restricted}.
Here a polynomial is \emph{quadratic} if it is a sum of monomials of 
degree at most $2$ (see~\cite{contejean:2005}).
It is easy to see that if a TRS $\RS$ is compatible with a 
linear or quadratic restricted interpretation, 
the runtime complexity of $\RS$ is linear or quadratic, 
respectively (see~\cite{HM:2008} but also~\cite{BCMT:2001}).

Finally, we introduce a very restrictive class of polynomial interpretations:
\emph{strongly linear interpretations} (\emph{SLI} for short). A polynomial
interpretation is called \emph{strongly linear} if all functions 
$f$ are interpreted as strongly linear polynomials. 

\section{Complexity Analysis Based on the Dependency Pair Method} 
\label{DPMethod}

In this section, we recall central definitions
and results established in~\cite{HM:2008} (see also Chapter~\ref{ijcar08a}).
We kindly refer the reader to~\cite{HM:2008} for additional
examples and underlying intuitions.

We write $\SC{t_1, \ldots, t_n}_X$ to denote $C[t_1,\ldots,t_n]$,
whenever $\rt(t_i) \in X$ for all $1 \leqslant i \leqslant n$
and $C$ is an $n$-hole context containing no $X$-symbols.
Let $t$ be a term. We set $t^\sharp \defsym t$ if $t \in \VV$, and 
$t^\sharp \defsym f^\sharp(t_1,\dots,t_n)$ if $t = f(\seq{t})$.
Here $f^\sharp$ is a new $n$-ary function symbol called 
\emph{dependency pair symbol}. For a signature
$\FS$, we define $\FS^\sharp = \FS \cup \{f^\sharp \mid f\in \FS\}$.
\begin{definition} \label{d:lpar08:WDP}
Let $\RR$ be a TRS. 
If $l \rew r \in \RS$ and $r = \SC{\seq{u}}_{\DD \cup \VV}$ then 
the rewrite rule $l^\sharp \to \COM(u_1^\sharp,\ldots,u_n^\sharp)$ is called
a \emph{weak dependency pair} of $\RS$. 
Here $\COM$ is defined with a fresh $n$-ary function symbol $c$ 
(corresponding to $l \to r$) as follows: $\COM(\seq{t})$ is $t_1$ if $n = 1$, 
and $c(t_1,\ldots,t_n)$ otherwise.  
The symbol $c$ is called \emph{compound symbol}.
The set of all weak dependency pairs is denoted by $\WDP(\RS)$.
\end{definition}

\begin{example}[continued from Example~\ref{ex:lpar08:1}] \label{ex:pre2}
The set $\WDP(\RR)$ consists of the next 6 weak dependency pairs. 
\begin{alignat*}{4}
7\colon && \app^\sharp(\nil,y) & \rew y & 
10\colon && \reverse^\sharp(n \cons x) & \rew \app^\sharp(\reverse(x),n \cons \nil)\\
8\colon && \app^\sharp(n \cons x,y) &  \rew \m{c}(n,\app^\sharp(x,y)) & \hspace{5ex}
11\colon && \shuffle^\sharp(\nil) & \rew \m{e}\\
9\colon && \reverse^\sharp(\nil) & \rew \m{d} &
12\colon && \shuffle^\sharp(n\cons x) & \rew \m{f}(n,\shuffle^\sharp(\reverse(x)))
\end{alignat*}
\end{example}

\begin{definition}
\label{d:WIDP}
Let $\RR$ be a TRS. 
If $l \rew r \in \RS$ and $r = \SC{\seq{u}}_\DD$ then 
the rewrite rule $l^\sharp \to \COM(u_1^\sharp,\ldots,u_n^\sharp)$
is called a \emph{weak innermost dependency pair} of $\RS$.
The set of all weak innermost dependency pairs is denoted by $\WIDP(\RS)$.
\end{definition}

Definitions~\ref{d:lpar08:WDP} and~\ref{d:WIDP} should be compared to
the definition of ``standard'' dependency pairs.
\begin{definition}[\cite{ArtsGiesl:2000}]
\label{d:lpar08:DP}
The set $\DP(\RS)$ of (standard) \emph{dependency pairs} of a TRS $\RS$ 
is defined as 
$\{ l^\sharp \to u^{\sharp} \mid l \to r \in \RS, 
u \subterm r, \rt(u) \in \DD \}$.
\end{definition}
\begin{example}[continued from Example~\ref{ex:pre2}]
As already mentioned in the introduction, the TRS $\RS$ admits 5
(standard) dependency pairs. Note that
the sets $\DP(\RS)$ and $\WDP(\RS)$ are incomparable. For example
${\app^\sharp(\nil,y) \rew y} \in {\WDP(\RS) \setminus \DP(\RS)}$
while 
${\shuffle^\sharp(x)) \to  \reverse^\sharp(x)} \in {\DP(\RS) \setminus \WDP(\RS)}$.
\end{example}

We write $f \depends g$ if there exists a rewrite rule
$l \to r \in \RR$ such that $f = \rt(l)$ and $g$ is a defined 
symbol in $\Fun(r)$. 
For a set $\GG$ of defined symbols 
we denote by $\RR{\restriction}\GG$ the set of
rewrite rules $l \to r \in \RR$ with $\rt(l) \in \GG$. 
The set $\UU(t)$ of usable rules of a term $t$ is defined as
$\RR{\restriction}\{ g \mid \text{$f \depends^* g$ for some $f \in \Fun(t)$} \}$.
Finally, if $\PP$ is a set of (weak or weak innermost) dependency pairs
then $\UU(\PP) = \bigcup_{l \to r \in \PP} \UU(r)$.

\begin{proposition}[Chapter~\ref{ijcar08a}, Theorem~\ref{t:dp:usable}] \label{p:dp:usable}
Let $\RR$ be a TRS and let $t \in \TB$. 
If $t$ is terminating with respect to $\rew$ then
$\dheight(t, \rew) \leqslant \dheight(t^{\sharp},\rsrew{\UU(\PP)\,\cup \, \PP})$, 
where $\rew$ denotes $\rsrew{\RS}$ or $\irew{\RS}$
depending on whether $\PP = \WDP(\RS)$ or $\PP = \WIDP(\RS)$.
\end{proposition}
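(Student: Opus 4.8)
The plan is to reduce the statement to the three marking lemmas on weak dependency pairs together with the usable rule lemma, mirroring the proof of Theorem~\ref{t:dp:usable} in Chapter~\ref{ijcar08a}. First I would fix a basic term $t \in \TB$, so that $t = f(\seq{t})$ with $f \in \DS$ and $t_i \in \TA(\CS,\VS)$; in particular $\rt(t) \in \DS$ and $t^\sharp \in \TBS$. The root condition $\rt(t) \in \DS$ is exactly what the marking lemmas require, while membership of $t^\sharp$ in $\TBS$ is precisely the hypothesis that later triggers the usable rule argument. The termination hypothesis on $t$ guarantees that all occurring derivation heights are well defined and finite.

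Next I would split on the rewrite relation. For full rewriting, where $\rew = \rsrew{\RS}$ and $\PP = \WDP(\RS)$, Lemma~\ref{l:dp:1} yields the equality $\dheight(t, \rsrew{\RS}) = \dheight(t^\sharp, \rsrew{\WDP(\RS) \cup \RS})$. For innermost rewriting, where $\rew = \irew{\RS}$ and $\PP = \WIDP(\RS)$, Lemma~\ref{l:dp:2} gives $\dheight(t, \irew{\RS}) = \dheight(t^\sharp, \irew{\WIDP(\RS) \cup \RS})$, after which I would pass from the innermost to the full rewrite relation via the inclusion $\irew{\WIDP(\RS) \cup \RS} \subseteq \rsrew{\WIDP(\RS) \cup \RS}$; since the innermost derivations form a subset of the full ones, this only weakens the value to the upper bound $\dheight(t^\sharp, \rsrew{\WIDP(\RS) \cup \RS})$. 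In both cases the problem is thereby reduced to bounding $\dheight(t^\sharp, \rsrew{\PP \cup \RS})$.

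The final step removes the non-usable rules. Because $t^\sharp \in \TBS$, Lemma~\ref{l:usable} asserts that every finite $\rsrew{\PP \cup \RS}$-derivation issuing from $t^\sharp$ is already a $\rsrew{\UU(\PP) \cup \PP}$-derivation. Together with the trivial inclusion $\UU(\PP) \cup \PP \subseteq \PP \cup \RS$, this gives $\dheight(t^\sharp, \rsrew{\PP \cup \RS}) = \dheight(t^\sharp, \rsrew{\UU(\PP) \cup \PP})$. Chaining the (in)equalities of the previous two steps then produces $\dheight(t, \rew) \leqslant \dheight(t^\sharp, \rsrew{\UU(\PP) \cup \PP})$, as claimed. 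I would stress that it is intentional that the right-hand side is phrased with the full rewrite relation $\rsrew{}$ even in the innermost case, so that the resulting bound stays amenable to the reduction-pair and relative-rewriting arguments applied afterwards.

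I expect the only genuinely delicate point to be the invariant underpinning Lemma~\ref{l:usable}: one must argue that along any derivation starting from a basic term, every subterm sitting at a non-usable position remains a normal form, so that a non-usable rule can never be applied. This is what makes the passage from $\PP \cup \RS$ to $\UU(\PP) \cup \PP$ sound, and it crucially exploits that the initial term lies in $\TBS$ rather than being arbitrary. The remaining ingredients—the root condition, the innermost-to-full inclusion, and the bookkeeping ensuring the derivation heights are finite—are routine.
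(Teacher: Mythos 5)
Your proposal is correct and follows essentially the same route as the paper: the paper states that Theorem~\ref{t:dp:usable} follows from Lemmata~\ref{l:dp:1} and~\ref{l:dp:2} (the marking equalities for terms with defined root) combined with Lemma~\ref{l:usable} (every $\RS\cup\PP$-derivation from a term in $\TBS$ is already a $\UU(\PP)\cup\PP$-derivation), which is exactly the chain of (in)equalities you assemble, including the innermost-to-full inclusion needed to phrase the bound with $\rsrew{\UU(\PP)\cup\PP}$. Your emphasis on the invariant behind Lemma~\ref{l:usable} and on the restriction to basic start terms matches the paper's own discussion of why the result fails for arbitrary terms.
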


We recall the notion of \emph{relative rewriting}~\cite{Terese}.
Let $\RR$ and $\SS$ be TRSs.
We write $\to_{\RR / \SS}$ for $\to_\SS^* \cdot \to_\RR \cdot \to_\SS^*$ and
we call $\to_{\RR / \SS}$ the \emph{relative rewrite relation} of $\RR$ over $\SS$. (Note that ${\to_{\RR / \SS}} = {\rsrew{\RS}}$, if $\SS = \varnothing$.)
Let $\A$ denote a strongly linear interpretation. 
%
%GM: corrected
\begin{proposition}[Chapter~\ref{ijcar08a}, Theorem~\ref{t:relative}] \label{p:relative}
Let $\RR$ and $\SS$ be TRSs, $\AA$ an SLI compatible with $\SS$, and
$\RR$ non-duplicating.
There exist constants $K$ and $L$, depending only on $\RS$ and $\AA$, such that
%\linebreak
$\dheight(t, \to_{\RR \cup \SS}) \leqslant 
K \cdot \dheight(t,\to_{\RR/\SS}) + L \cdot \size{t}$
for all terminating terms $t$ on $\RS \cup \SS$.
\end{proposition}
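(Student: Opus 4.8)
The plan is to identify the two constants explicitly, namely $K = 1 + \WG(\A,\RS)$ and $L = \MA$, where $\MA = \max\{f_\A(0,\dots,0) \mid f \in \FS\}$ is the maximum weight of the SLI $\A$ and $\WG(\A,\RS) = \max\{[r] \modminus [l] \mid l \to r \in \RS\}$ is its weight gap with respect to $\RR$. Throughout, $[u]$ abbreviates the value $\eval{\alpha_0}{\A}(u)$ obtained under the assignment $\alpha_0$ mapping every variable to $0$. Both constants manifestly depend only on $\RS$ and $\A$, and with these choices the statement coincides with Theorem~\ref{t:relative}; the task is therefore to re-run that argument, which rests entirely on two auxiliary facts established earlier, Lemma~\ref{l:SLIa} and Lemma~\ref{l:WG}.

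First I would fix a term $t$ terminating on $\RR \cup \SS$ and set $m = \dheight(t, \to_{\RR/\SS})$ together with $n = \size{t}$. A maximal $\to_{\RR \cup \SS}$-derivation starting from $t$ can be written in the alternating form
\begin{equation*}
t = s_0 \to_\SS^{k_0} t_0 \to_\RR s_1 \to_\SS^{k_1} t_1 \to_\RR \cdots \to_\SS^{k_m} t_m \tkom
\end{equation*}
in which exactly $m$ steps are $\RR$-steps — this is precisely what $\dheight(t, \to_{\RR/\SS}) = m$ records — while the $k_i$ count the interspersed blocks of $\SS$-steps, so that $\dheight(t, \to_{\RR \cup \SS}) = m + \sum_{i=0}^m k_i$. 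It thus remains only to bound the total number $\sum_i k_i$ of $\SS$-steps.

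The two driving facts are: (a) since $\A$ is compatible with $\SS$, i.e.\ ${\SS} \subseteq {\gord{\A}}$, every $\SS$-step strictly decreases the weight, whence $k_i \leqslant [s_i] - [t_i]$; and (b) the weight gap principle of Lemma~\ref{l:WG}, which exploits that $\RR$ is non-duplicating, yields $[s_{i+1}] \leqslant [t_i] + \WG(\A,\RS)$ across each $\RR$-step. Telescoping $\sum_{i=0}^m k_i \leqslant \sum_{i=0}^m ([s_i]-[t_i]) = [s_0] - [t_m] + \sum_{i=1}^m ([s_i]-[t_{i-1}])$ and absorbing the $m$ gaps $[s_i]-[t_{i-1}] \leqslant \WG(\A,\RS)$ via (b) gives $\sum_i k_i \leqslant [s_0] + m\,\WG(\A,\RS)$; finally Lemma~\ref{l:SLIa} bounds $[s_0] \leqslant \MA \cdot n$. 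Adding the $m$ many $\RR$-steps back produces $\dheight(t,\to_{\RR\cup\SS}) \leqslant (1+\WG(\A,\RS))\,m + \MA\cdot n$, which is the claimed inequality with $K$ and $L$ as above.

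The main obstacle is conceptual rather than computational: one must find an invariant that survives the mixed $\RR/\SS$-derivation even though $\RR$-steps need \emph{not} decrease the weight. This is exactly what the weight gap principle supplies — an $\RR$-step may raise $[\cdot]$, but only by the uniformly bounded amount $\WG(\A,\RS)$, and it is the non-duplication hypothesis that keeps this increase a constant rather than something proportional to the (possibly growing) term size. Once facts (a) and (b) are in place the remaining telescoping is purely routine bookkeeping, so the weight gap principle is the genuine load-bearing step.
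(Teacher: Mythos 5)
Your proposal is correct and follows essentially the same route as the paper's proof of Theorem~\ref{t:relative}: the same explicit constants $K = 1 + \WG(\A,\RS)$ and $L = \MA$, the same alternating decomposition of a maximal $\rsrew{\RR\cup\SS}$-derivation, and the same two facts (compatibility of $\A$ with $\SS$ bounding each $\SS$-block, and the weight gap principle of Lemma~\ref{l:WG} bounding the increase across each $\RR$-step) combined by telescoping and Lemma~\ref{l:SLIa}. Nothing further is needed.
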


We need some further definitions.
Let $\RS$ be a TRS, let $\PP$ be a set of 
weak or weak innermost dependency pairs of $\RS$
and let $\Slow$ denote a mapping associating a term (over $\FS^\sharp$ and $\VS$) 
and a proper order $\succ$ with a natural number.
An order $\succ$ on terms is $\Slow$-\emph{collapsible} 
for a TRS $\RR$ 
if $s \rsrew{\PP \cup \UU(\PP)} t$  and $s \succ t$ implies $\Slow(s,\succ) > \Slow(t,\succ)$. 
An order $\succ$ is \emph{collapsible} for a TRS $\RS$, 
if there is a mapping $\Slow$ such that $\succ$ is $\Slow$-collapsible 
for $\RR$.%
\footnote{Note that most reduction orders are collapsible. 
E.g.~if $\A$ is a polynomial interpretation then $\gord{\A}$ is collapsible, 
as one may take any $\alpha$ and set 
$\Slow(t, \gord{\A}) \defsym \eval{\alpha}{\A}(t)$.}

We write $\TBS$ for $\{t^{\sharp} \mid t \in \TB \}$. 
The set $\TBC$ is inductively defined as follows 
(i) $\TTs \cup \TT \subseteq \TBC$, 
where $\TTs = \{t^{\sharp} \mid t \in \TT \}$ and (ii) 
$c(t_1,\ldots,t_n) \in \TBC$, whenever $t_1,\ldots,t_n \in \TBC$ and $c$ a compound symbol.
A proper order $\succ$ on $\TBC$ is called \emph{safe} if
$c(s_1,\ldots,s_i,\ldots,s_n) \succ c(s_1,\ldots,t,\ldots,s_n)$
for all $n$-ary compound symbols $c$ and all terms $s_1,\ldots,s_n,t$ 
with $s_i \succ t$.
A reduction pair $({\gtrsim}, {\succ})$ is called \emph{collapsible} 
for a TRS $\RR$ if $\succ$ is collapsible for $\RR$. It is called \emph{safe}
if the well-founded order $\succ$ is safe. In order to construct safe reduction pairs 
one may use \emph{safe algebras}, i.e., weakly monotone well-founded 
algebras $(\AA, \succ)$ such that the interpretations of compound symbols
are strictly monotone with respect to $\succ$. It is easy to see that
if $(\AA, >)$ is a safe algebra then $(\geqord{\A},\gord{\A})$ is a safe reduction pair.
%
% GM: corrected
\begin{proposition}[Chapter~\ref{ijcar08a}, Theorem~\ref{t:ijcar08a:main}] \label{p:main}
Let $\RR$ be a TRS, let $\AA$ be an SLI, let $\PP$ be a set of weak or weak innermost dependency pairs, 
such that $\PP$ is non-duplicating, and let $(\gtrsim, \succ)$ be a safe and 
$\Slow$-collapsible reduction pair such that $\UU(\PP) \subseteq {\gtrsim}$ and
$\PP \subseteq {\succ}$. 
If in addition $\UU(\PP) \subseteq {>_\A}$ then 
for any $t \in \TB$, there exist constants $K$ and $L$
(depending only on $\RS$ and $\A$) such that 
$\dheight(t, \rew) \leqslant K \cdot \Slow(t^{\sharp},\succ) + L \cdot \size{t}$.
Here $\rew$ denotes $\rsrew{\RS}$ or $\irew{\RS}$
depending on whether $\PP = \WDP(\RS)$ or $\PP = \WIDP(\RS)$.
\end{proposition}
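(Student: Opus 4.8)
Let me restate what Proposition 9 (the final displayed statement, labeled as Chapter~\ref{ijcar08a}'s Theorem~\ref{t:ijcar08a:main}) asserts. We are given a TRS $\RR$, an SLI $\AA$, a set $\PP$ of weak or weak innermost dependency pairs that is non-duplicating, and a safe, $\Slow$-collapsible reduction pair $(\gtrsim,\succ)$ with $\UU(\PP)\subseteq{\gtrsim}$ and $\PP\subseteq{\succ}$. Assuming additionally $\UU(\PP)\subseteq{>_\A}$, we must produce constants $K,L$ (depending only on $\RR$ and $\AA$) with $\dheight(t,\rew)\leqslant K\cdot\Slow(t^\sharp,\succ)+L\cdot\size{t}$ for all basic $t\in\TB$, where $\rew$ is $\rsrew{\RS}$ or $\irew{\RS}$ according as $\PP=\WDP(\RS)$ or $\PP=\WIDP(\RS)$.

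**Overall plan.** The strategy is to chain together the three building blocks that have already been established in this excerpt: Proposition~\ref{p:dp:usable} (the usable-rule bound via weak dependency pairs), Proposition~\ref{p:relative} (the weight-gap principle reducing $\rsrew{\UU(\PP)\cup\PP}$ to the relative relation $\rsrew{\PP/\UU(\PP)}$), and the collapsibility/safety machinery that bounds relative $\PP$-steps by $\Slow$. First I would fix $t\in\TB$ and, without loss of generality, treat the case $\PP=\WDP(\RS)$; the innermost case is identical after replacing $\rsrew{\RS}$ by $\irew{\RS}$ and $\WDP$ by $\WIDP$. I would begin by observing that the hypotheses already force every basic term to be terminating: since $\UU(\PP)\subseteq{>_\A}$ and $\AA$ is an SLI, $\rsrew{\UU(\PP)}$ is well-founded, and together with $\PP\subseteq{\succ}$ and the safety of $(\gtrsim,\succ)$ (invoking the safe-step property that underlies the collapsibility argument), $\rsrew{\UU(\PP)\cup\PP}$ is well-founded on the relevant terms; by the usable-rule lemma this transfers termination back to $\rew$ on $\TB$. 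This termination guarantee is what makes $\dheight(t,\rew)$ well-defined and lets me apply the propositions.

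**Key steps.** With termination secured, the main chain of inequalities runs as follows. By Proposition~\ref{p:dp:usable}, $\dheight(t,\rew)\leqslant\dheight(t^\sharp,\rsrew{\UU(\PP)\cup\PP})$. Next, since $\PP$ is non-duplicating and $\AA$ is an SLI compatible with $\UU(\PP)$ (because $\UU(\PP)\subseteq{>_\A}$), I apply Proposition~\ref{p:relative} with $\RR\defsym\PP$ and $\SS\defsym\UU(\PP)$ to obtain constants $K',L'$ with $\dheight(t^\sharp,\rsrew{\UU(\PP)\cup\PP})\leqslant K'\cdot\dheight(t^\sharp,\rsrew{\PP/\UU(\PP)})+L'\cdot\size{t^\sharp}$. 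The remaining task is to bound the relative derivation length $\dheight(t^\sharp,\rsrew{\PP/\UU(\PP)})$ by $\Slow(t^\sharp,\succ)$. Here I use that $(\gtrsim,\succ)$ is safe, $\UU(\PP)\subseteq{\gtrsim}$ and $\PP\subseteq{\succ}$: the safe-reduction-pair lemma (Lemma~\ref{l:5} in the preceding chapter) shows that each step $s\rsrew{\PP/\UU(\PP)}u$ with $s\in\TBC$ yields $s\succ u$ and $u\in\TBC$, so a maximal relative derivation from $t^\sharp$ is a strictly $\succ$-descending sequence inside $\TBC$. Since $\succ$ is $\Slow$-collapsible, each such step strictly decreases the natural number $\Slow(\cdot,\succ)$, giving $\dheight(t^\sharp,\rsrew{\PP/\UU(\PP)})\leqslant\Slow(t^\sharp,\succ)$. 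Substituting back and using $\size{t^\sharp}=\size{t}$, I set $K\defsym K'$ and $L\defsym L'$, yielding the claimed bound; these constants come from Proposition~\ref{p:relative} and hence depend only on $\RR$ and $\AA$.

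**The main obstacle.** The routine part is the arithmetic of composing the three bounds; the delicate point is justifying that the relative-step reduction really lands inside $\TBC$ and stays $\succ$-decreasing despite compound symbols allowing $\PP$-contractions below the root. This is exactly where \emph{safety} of the reduction pair is indispensable: a bare $\PP\subseteq{\succ}$ does not guarantee $u_i\succ v_i\Rightarrow C[u_i]\succ C[v_i]$ when $C$ is built from compound symbols, so I must route every relative step through the safe-step lemma to preserve the descent. A secondary subtlety, worth flagging explicitly, is that all three propositions are stated for \emph{basic} start terms $t\in\TB$ (equivalently $t^\sharp\in\TBS$); the usable-rule invariant (that non-usable positions remain in normal form) is what confines the derivation to usable rules and is only available because $t$ is basic. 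I would therefore be careful to invoke Proposition~\ref{p:dp:usable} and Proposition~\ref{p:relative} only on $t^\sharp$ with $t\in\TB$, and to note that the non-duplication hypothesis on $\PP$ is precisely the condition needed to apply the weight-gap principle.
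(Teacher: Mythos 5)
Your proposal is correct and follows essentially the same route as the paper's own proof: first the reduction to $\dheight(t^\sharp,\rsrew{\UU(\PP)\cup\PP})$ via Theorem~\ref{t:dp:usable}, then the weight-gap bound of Theorem~\ref{t:relative} with the polynomial $p(m,n)=(1+\WG(\A,\PP))\cdot m+\MA\cdot n$, and finally the bound $\dheight(t^\sharp,\rsrew{\PP/\UU(\PP)})\leqslant\Slow(t^\sharp,\succ)$ via the safe-step lemma and collapsibility, using $\size{t^\sharp}=\size{t}$. Your explicit flagging of the termination preliminary and of where safety is needed matches the paper's (terser) argument.
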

Suppose the assertions of the proposition are met and there exists a polynomial
$p$ 
such that $\Slow(t^{\sharp},\succ) \leqslant p(\size{t})$ holds. 
Then, as an easy corollary to Proposition~\ref{p:main}, 
we observe that the runtime
complexity induced by $\RS$ is majorised by $p$.

\section{Dependency Graphs} \label{DG}

In this section, we study a natural refinement of the dependency pair
method, namely \emph{dependency graphs} (see~\cite{ArtsGiesl:2000,GAO:2002,Giesl:2005frocos,HirokawaMiddeldorp:2007}) in
the context of complexity analysis. 
We start with a brief motivation.
Let $\RS$ be a TRS, let $\PP$ denote a set of weak or weak innermost dependency
pairs and let $(s_i)_{i=0,\ldots,n}$ denote a maximal 
derivation $D$ with respect to $\RS$ with $s_0 \in \TB$.
In order to estimate the length $\ell$ of this derivation 
it suffices to estimate the length of the 
derivation 
$t_0 \rssrew{\UU(\PP) \cup \PP} t_n$, 
where $t_0 = s_0^\sharp \in \TBS$, cf.~Proposition~\ref{p:dp:usable}.
If we suppose that $\PP$ is non-duplication and 
that there exists an SLI such that ${\UU(\PP)} \subseteq {\AA}$, 
we may estimate the derivation length $\ell$ by finding 
\emph{one} (safe and collapsible) reduction pair $(\gtrsim, \succ)$ 
such that $\UU(\PP) \subseteq {\gtrsim}$ and $\PP \subseteq {\succ}$ holds, 
cf.~Proposition~\ref{p:main}. 
On the other hand in termination analysis---as already mentioned
in the introduction---it suffices to 
guarantee that for any cycle $\CC$ in the dependency graph $\DG(\RS)$, 
there are no $\CC$-minimal rewrite sequences, cf.~\cite{GAO:2002}.
Hence, we strive to extend this idea to complexity analysis. 

\subsection{From Cycle Analysis to Path Detection}
Let us recall the definition of a dependency graph and extend it suitably
to weak and weak innermost dependency pairs. 
\begin{definition} \label{d:DG}
Let $\RS$ be a TRS over a signature $\FS$ and 
let $\PP$ be the set of weak, weak
innermost, or (standard) dependency pairs. 
The nodes of the \emph{weak dependency graph} $\WDG(\RS)$, 
\emph{weak innermost dependency graph} $\WIDG(\RS)$, 
or \emph{dependency graph} $\DG(\RS)$ are the elements of $\PP$ 
and there is an arrow from $s \to t$ to $u \to v$ if and
only if there exist a context $C$ and substitutions 
$\sigma, \tau \colon \VV \to \TT(\FS, \VV)$ such that
$t\sigma \rew^* C[u\tau]$, where $\rew$ denotes 
$\rsrew{\RS}$ or $\irew{\RS}$
depending on whether $\PP = \WDP(\RS)$, $\PP = \DP(\RS)$ or $\PP = \WIDP(\RS)$,
respectively.
\end{definition}

\begin{example}[continued from Example~\ref{ex:pre2}] \label{ex:lpar08:2}
The weak dependency graph $\WDG(\RR)$ has the following form.  
\begin{center}
\begin{tikzpicture}[node distance=5mm]
    \node(10) {10} ;
    \node(8) [right=of 10] {8} ;
    \node(7) [right=of 8] {7} ;
    \node(9) [right=of 7 ] {9} ;
    \node(12) [right=of 9 ] {12} ;
    \node(11) [right=of 12 ] {11} ;

    \path[->] (8) edge [loop above] (8) ;
    \draw[->] (8) to (7);
    \path[->] (12) edge [loop above] (12) ;
    \draw[->] (8) to (7) ;
    \draw[->] (12) to (11) ;
    \path[->] (10) edge [bend right=45] (7) ;
    \draw[->] (10) to (8);
\end{tikzpicture}
\end{center}
\end{example}

We recall a theorem on 
the dependency graph refinement in conjunction with 
usable rules and innermost rewriting (see~\cite{GAO:2002}, but also~\cite{HirokawaMiddeldorp:2005}).
Similar results hold in the
context of full rewriting, see~\cite{GTSF06,HirokawaMiddeldorp:2007}.
\begin{theorem}[\cite{GAO:2002}]
\label{t:GAO02}
A TRS $\RS$ is innermost terminating if for every maximal cycle 
$\CC$ in the dependency graph $\DG(\RS)$
there exists a reduction pair $(\gtrsim,\succ)$ such that
${\UU(\CC)} \subseteq {\gtrsim}$ and ${\CC} \subseteq {\succ}$.
\end{theorem}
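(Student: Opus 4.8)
The plan is to argue by contraposition and to reduce innermost non-termination to the existence of an infinite $\succ$-descending sequence, contradicting well-foundedness. First I would invoke the standard dependency pair characterisation of innermost termination due to Arts and Giesl~\cite{ArtsGiesl:2000,GAO:2002}: $\RS$ is innermost terminating if and only if there is no infinite \emph{innermost chain}, i.e.\ no infinite sequence of dependency pairs $s_1 \to t_1, s_2 \to t_2, \dots \in \DP(\RS)$ (renamed apart) together with a substitution $\sigma$ such that every $s_i\sigma$ is argument-normalised (all its proper subterms are normal forms) and $t_i\sigma \irew{\RS}^\ast s_{i+1}\sigma$ holds for all $i \geqslant 1$. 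Assuming that $\RS$ is not innermost terminating, I fix such an infinite innermost chain.

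Next I would exploit the finiteness of $\DP(\RS)$, and hence of $\DG(\RS)$. By Definition~\ref{d:DG} consecutive pairs of the chain are joined by an arrow in $\DG(\RS)$, so the chain traces an infinite path through the dependency graph. Since the graph is finite, the set $\KK \subseteq \DP(\RS)$ of pairs occurring infinitely often along the path is non-empty, and any two of its elements are connected by a path inside $\KK$; thus $\KK$ is a cycle and is contained in some maximal cycle $\CC$. Discarding a finite prefix of the chain, I may assume that every pair used belongs to $\KK$, and in particular that each pair of $\CC$ appearing in $\KK$ is used infinitely often.

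The decisive ingredient is the usable rules lemma for innermost rewriting: because each $s_i\sigma$ is argument-normalised and rewriting is innermost, every reduction $t_i\sigma \irew{\RS}^\ast s_{i+1}\sigma$ can only contract redexes whose defined root symbol is reachable (via $\depends^\ast$) from $\Fun(t_i)$, so it is in fact a reduction with respect to $\UU(\CC)$ alone. Granting this, I would finish as follows. By hypothesis there is a reduction pair $(\gtrsim,\succ)$ with $\UU(\CC) \subseteq {\gtrsim}$ and $\CC \subseteq {\succ}$. Closure of $\gtrsim$ under contexts and substitutions yields $t_i\sigma \gtrsim s_{i+1}\sigma$ from $t_i\sigma \irew{\UU(\CC)}^\ast s_{i+1}\sigma$, while closure of $\succ$ under substitutions together with $\CC \subseteq {\succ}$ gives $s_i\sigma \succ t_i\sigma$. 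Compatibility ${\gtrsim} \cdot {\succ} \cdot {\gtrsim} \subseteq {\succ}$, together with reflexivity of $\gtrsim$, then produces the infinite sequence $s_1\sigma \succ s_2\sigma \succ s_3\sigma \succ \cdots$, contradicting the well-foundedness of $\succ$.

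The main obstacle I expect is the usable rules lemma itself: proving that an argument-normalised innermost reduction never needs a non-usable rule requires an induction on the reduction that tracks which positions can ever become reducible, and it is precisely here that the restriction to \emph{innermost} rewriting is indispensable---for full rewriting the analogous statement fails and one must instead interpose the auxiliary projection rules $\m{g}(x,y) \to x$, $\m{g}(x,y) \to y$, as recalled in Section~\ref{usable}. A secondary point needing care is that demanding \emph{all} pairs of the maximal cycle to be strict (rather than merely one) is exactly what lets the descent argument go through without separately bookkeeping which pair decreases; I would therefore state the settling-into-a-cycle step so that every pair of $\KK$, and hence the corresponding pairs of $\CC$, is genuinely used infinitely often.
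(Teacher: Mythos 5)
Your proof is correct and is precisely the standard argument for this criterion: the paper itself offers no proof here but merely recalls the result from the cited reference \cite{GAO:2002}, whose proof proceeds exactly as you describe (extract an infinite innermost chain, let it settle into a cycle contained in some maximal cycle $\CC$, apply the innermost usable-rules lemma to confine the intermediate reductions to $\UU(\CC)$, and derive an infinite $\succ$-descent via compatibility of the reduction pair). Your closing observation is also apt: requiring all of $\CC$ to be strictly decreasing, rather than a single pair per cycle, is what lets the descent go through without tracking which pair recurs infinitely often.
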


The following example shows that we cannot directly employ Theorem~\ref{t:GAO02}
in the realm of complexity analysis. 
Even though in this setting we 
can restrict our attention to a specific strategy: innermost rewriting.
\begin{example} \label{ex:exp}
Consider the TRS $\RSexp$
\begin{align*}
\m{exp}(\mN) & \to \ms(\mN) & \m{d}(\mN) & \to \mN \\
\m{exp}(\m{r}(x)) & \to \m{d}(\m{exp}(x)) 
& \m{d}(\ms(x)) & \to \ms(\ms(\m{d}(x))) 
\end{align*}
$\DP(\RSexp)$ consists of three pairs:
$1\colon \m{exp}^\sharp(\m{r}(x)) \to \m{d}^\sharp(\m{exp}(x))$,
$2\colon \m{exp}^\sharp(\m{r}(x)) \to \m{exp}^\sharp(x)$, and
$3\colon \m{d}^\sharp(\ms(x)) \to \m{d}^\sharp(x)$.
Hence the dependency graph $\DG(\RSexp)$ contains two 
maximal cycles:
$\{2\}$ and $\{3\}$.
It is easy to see how to define two reduction pairs 
$(\geqord{\AA},\gord{\AA})$ and $(\geqord{\BB},\gord{\BB})$ 
such that the conditions of the theorem are fulfilled.
For that it suffices to define interpretations
$\AA$ and $\BB$, respectively. 
Because one can find suitable linear restricted ones for $\AA$ and $\BB$,
compatibility with these interpretations apparently
induces \emph{linear} runtime complexity of $\RSexp$,
cf.~\cite{BCMT:2001,HM:2008} (even for full rewriting). 
However, we must not conclude linear innermost runtime complexity for 
$\RSexp$ 
in this setting, as $\RSexp$ formalises the exponentiation function and
setting $t_n = \m{exp}(\m{r}^n(\mN))$ we obtain 
$\dheight(t_n, \irew{\RS}) \geqslant 2^n$ for each $n \geqslant 0$. 
Thus the innermost runtime complexity of $\RSexp$ is exponential.
\end{example}
Note that the problem exemplified by Example~\ref{ex:exp}
cannot be circumvented by replacing the dependency graph employed in Theorem~\ref{t:GAO02} with 
the weak (innermost) dependency graph. 
Furthermore, observe that while Proposition~\ref{p:dp:usable}
allows us to replace in Example~\ref{ex:exp} the innermost rewrite relation 
$\irew{\RS}$ by the 
(sometimes simpler) rewrite relation 
$\irew{\UU(\DP(\RS)) \cup \DP(\RS)}$, this is of no help:
The exponential length of $t_n^\sharp$ in Example~\ref{ex:exp} with respect to 
$\UU(\DP(\RR)) \cup \DP(\RR)$ is \emph{not} due to the cycles $\{2\}$ or $\{3\}$, but
achieved through the non-cyclic pair $1$ and its usable rules. 
These observations are cast into Definition~\ref{d:1}, below.

A graph is called \emph{strongly connected} if any node is connected with every other
node by a path. A \emph{strongly connected component} (\emph{SCC} for short) is a
maximal strongly connected subgraph.% 
\footnote{Note that in the literature SCCs are sometimes defined as \emph{maximal cycles}. This
alternative definition is of limited use in our context as we must not ignore
trivial SCCs.}
\begin{definition} \label{d:1}
Let $\GG$ be a graph, let $\equiv$ denote the equivalence relation induced by SCCs, 
and let $\PP$ be an SCC in $\GG$. 
The set of all source nodes in $\PG{\GG}$ is denoted by $\Src$.
Let $l \to r$ be a dependency pair in $\GG$, let $\KK \in \PG{\GG}$ and
let $\CC$ denote the SCC represented by $\KK$. 
Then we write $l \to r \in \KK$ if $l \to r \in \CC$.
\end{definition}

\begin{example}[Continued from Example \ref{ex:lpar08:2}] \label{ex:lpar08:3}
There are 6 (trivial) SCCs in $\WDG(\RS)$,all being trivial. 
Hence the graph $\PG{\WDG(\RR)}$ has the 
following form:
\begin{center}
\begin{tikzpicture}[node distance=5mm]
    \node(10) {10} ;
    \node(8) [right=of 10] {8} ;
    \node(7) [right=of 8] {7} ;
    \node(9) [right=of 7 ] {9} ;
    \node(12) [right=of 9 ] {12} ;
    \node(11) [right=of 12 ] {11} ;

  \draw[->] (8) to (7) ;
  \draw[->] (12) to (11) ;
  \path[->] (10) edge [bend right=45] (7) ;
  \draw[->] (10) to (8);
\end{tikzpicture}
\end{center}
Here $\Src = \{ \{ 9 \}, \{ 10 \}, \{12\} \}$.
\end{example}

\subsection{Refinement Based on Path Detection}

We re-consider the motivating derivation $D$:
\begin{equation} 
\label{eq:dg:1}
  t_0 \rsrew{\UU(\PP) \cup \PP} t_1 \rsrew{\UU(\PP) \cup \PP} \dots 
  \rsrew{\UU(\PP) \cup \PP} t_n \tkom
\end{equation}
where $t_0 \in \TBS$. To simplify the exposition, we set $\PP = \WDP(\RS)$ 
and $\GG = \WDG(\RS)$. Momentarily we assume that all compound symbol are of arity $0$,
as is for instance the case in Example~\ref{ex:pre2}.
Above we asserted that there exists an SLI $\A$ such that ${\UU(\PP)} \subseteq {\gord{\A}}$. 
Hence Proposition~\ref{p:relative} is applicable. 
Thus, to estimate the length of the derivation~\eqref{eq:dg:1} 
it suffices to consider the following relative rewriting derivation:
\begin{equation} \label{eq:dg:2}
  t_0 \rsrew{\PP/\UU(\PP)} t_1 \rsrew{\PP/\UU(\PP)} \dots 
  \rsrew{\PP/\UU(\PP)} t_n \tpkt
\end{equation}
Exploiting the given assumptions, it is not difficult to see that derivation~\eqref{eq:dg:2} is representable 
as follows:
\begin{equation} \label{eq:dg:3}
t_0 \to_{\PP_1/\UU(\PP_1)}^{\ell_1} 
t_{\ell_1} \to_{\PP_2/\UU(\PP_1) \cup \UU(\PP_2)}^{\ell_2} 
\cdots \to_{\PP_m/\UU(\PP_1) \cup \cdots \cup \UU(\PP_m)}^{\ell_m} t_n \tkom
\end{equation}
where, $(\PP_1,\ldots,\PP_m)$ is a \emph{path} in $\PG{\GG}$ with $\PP_1 \in \Src$
and $\ell_i \geqslant 0$ ($i=1,\dots,n$).
Since the length $\ell$ of the pictured $\rsrew{\PP/\UU(\PP)}$-rewrite
sequence equals $\ell_1 + \cdots + \ell_m$, this suggests that we can 
estimate each $\ell_j$ ($j \in \{1,\dots,m\}$) independently.
We assume the existence of a family of SLIs $\BB_j$ 
($j \in \{1,\dots,m\}$) such that 
${\UU(\PP_1) \cup \dots \cup \UU(\PP_j)} \subseteq {\geqord{\BB_j}}$
and 
${\PP_j} \subseteq {\gord{\BB_j}}$
holds for every $j$. From this we can conclude 
$\ell_j = \bO(\size{t_{\ell_j}})$ for all $j \in \{1,\dots, m\}$.
The next step is to estimate each $\ell_j$ by a function 
(preferable a polynomial) in $\size{t_0}$. As each of the WMAs
$\BB_j$ is assumed to be strongly linear, we can even 
conclude $\eval{\alpha_0}{\BB_j}(t_{\ell_j}) = \Omega(\size{t_{\ell_j}})$.%
%
%GM: this is correct, 18/12/2008
%
(Here $\alpha_0$ denotes the assignment mapping any variable to $0$.)
In sum, we obtain for each $j \in \{1,\dots,m\}$, the existence of a
constant $c_j$ such that  $\size{t_{\ell_j}}\leqslant c_j \cdot \size{t_0}$
and thus there exists a linear polynomial $p(x)$ such that
$\ell_j \leqslant p(\size{t_0})$. 
However, some care is necessary in assessing this observation:
Note that
the given argument cannot be used to deduce \emph{polynomial} runtime
complexity, if we weaken the assumption that the algebras 
$\BB_j$ are strongly linear only slightly. 
Hence, we replace the direct application of Proposition~\ref{p:relative} as follows.
\begin{lemma}\label{l:shift}
$n \leqslant \dheight(s, \rsrew{\RR_2/(\SS_1 \cup \SS_2)})$
whenever $s \rsrew{\SS_1}^* \cdot \rsrew{\RR_2/\SS_2}^n u$.
\end{lemma}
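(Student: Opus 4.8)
The plan is to establish the slightly stronger assertion that $s \rsrew{\RR_2/(\SS_1 \cup \SS_2)}^n u$; the stated bound then follows immediately, since this exhibits a $\rsrew{\RR_2/(\SS_1 \cup \SS_2)}$-derivation of length exactly $n$ starting from $s$, whence $n \leqslant \dheight(s, \rsrew{\RR_2/(\SS_1 \cup \SS_2)})$ by the definition of the derivation length function.

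First I would unfold the hypothesis: there is an intermediate term $s'$ with $s \rssrew{\SS_1} s'$ and $s' \rsrew{\RR_2/\SS_2}^n u$. Writing the latter according to the definition $\rsrew{\RR_2/\SS_2} = {\rssrew{\SS_2}} \cdot {\rsrew{\RR_2}} \cdot {\rssrew{\SS_2}}$, the $n$-fold composition splits into $n$ blocks of the shape $\rssrew{\SS_2} \cdot \rsrew{\RR_2} \cdot \rssrew{\SS_2}$. The key (and elementary) observation is monotonicity in the modulus: from $\SS_2 \subseteq \SS_1 \cup \SS_2$ we obtain ${\rssrew{\SS_2}} \subseteq {\rssrew{\SS_1 \cup \SS_2}}$, so each such block is in particular a $\rsrew{\RR_2/(\SS_1 \cup \SS_2)}$-step. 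Consequently $s' \rsrew{\RR_2/(\SS_1 \cup \SS_2)}^n u$, already with the correct number $n$ of $\rsrew{\RR_2}$-contractions.

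It then remains to absorb the $\SS_1$-prefix. For $n = 0$ the inequality $0 \leqslant \dheight(s, \rsrew{\RR_2/(\SS_1 \cup \SS_2)})$ is trivial, so assume $n \geqslant 1$. Since $\SS_1 \subseteq \SS_1 \cup \SS_2$ gives $s \rssrew{\SS_1 \cup \SS_2} s'$, and the first block above begins with $s' \rssrew{\SS_1 \cup \SS_2} \cdots$, transitivity of the reflexive--transitive closure (i.e.\ ${\rssrew{\SS_1 \cup \SS_2}} \cdot {\rssrew{\SS_1 \cup \SS_2}} = {\rssrew{\SS_1 \cup \SS_2}}$) merges the prefix into the leading $\SS_1 \cup \SS_2$-part of the first step, so that $s \rsrew{\RR_2/(\SS_1 \cup \SS_2)}^n u$ holds. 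The only points demanding care in this otherwise purely combinatorial argument are the degenerate case $n = 0$ and the prefix absorption, where one must verify that prepending $\SS_1$-steps stays inside a \emph{single} relative step rather than silently introducing an extra $\rsrew{\RR_2}$-contraction; both are settled by the idempotency of $\rssrew{\SS_1 \cup \SS_2}$ under composition.
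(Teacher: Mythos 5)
Your proof is correct, and it fills in exactly the argument the paper leaves implicit -- the paper's own proof of this lemma is just the single word ``Straightforward.'' The route you take (monotonicity of the modulus in a relative step, followed by absorbing the $\SS_1$-prefix into the leading $\rssrew{\SS_1 \cup \SS_2}$-part of the first step, with the $n=0$ case handled separately) is the natural and intended one.
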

\begin{proof}
  Straightforward.
\end{proof}

We lift the assumption that all compound symbols are of arity at most $0$. 
Perhaps surprisingly this generalisation complicates the matter considerably.
First a \emph{maximal} derivation need no longer be of the form given in~\eqref{eq:dg:3}
which is exemplified by Example~\ref{ex:dg:1} below.
\begin{example} \label{ex:dg:1}
Consider the TRS 
$\RR = \{\m{f}(\m{0}) \to \m{leaf}, \m{f}(\m{s}(x)) \to \m{branch}(\m{f}(x), \m{f}(x))\}$.
The set $\WDP(\RR)$ consists of the two weak dependency pairs:
$1\colon \m{f}^\sharp(\m{0}) \to \m{c_1}$ and
$2\colon \m{f}^\sharp(\m{s}(x)) \to \m{c_2}(\m{f}^\sharp(x), \m{f}^\sharp(x))$.
Hence the weak dependency graph $\WDG(\RR)$
contains 2 SCCs: $\{2\}$ and $\{1\}$.  
Clearly $\Src = \{ \{ 2 \} \}$.
Let $t_n = \m{f}^\sharp(\m{s}^n(\m{0}))$.
Consider the following sequence:
\begin{align*}
t_2 & \rsrew{\{2\}}^2 \m{c_2}(\m{c_2}(t_0, t_0), t_1) 
\rsrew{\{1\}}   \m{c_2}(\m{c_2}(\m{c_1}, t_0), t_1) \\
    & \rsrew{\{2\}}   \m{c_2}(\m{c_2}(\m{c_1}, t_0), \m{c_2}(t_0, t_0)) 
\rsrew{\{1\}}^3 \m{c_2}(\m{c_2}(\m{c_1}, \m{c_1}), \m{c_2}(\m{c_1},\m{c_1})) \tpkt
\end{align*}
This derivation does not have the form \eqref{eq:dg:3}, 
because it is based on the sequence 
$(\{2\},\{1\},\{2\},\{1\})$, which is not a path in $\PG{\WDG(\RS)}$.
\end{example}

Note that the derivation in Example~\ref{ex:dg:1} can be
reordered (without affecting its length) such that the derivation becomes
based on a path. Still, not every derivation can be abstracted to a path. 
Consider a maximal (with respect to subset inclusion) component of $\PG{\WDG(\RS)}$. 
Clearly this component forms a directed acyclic graph $\GG$, and without loss of
generality we assume in the following that $\GG$ is a tree $T$ with root in $\Src$. 
Otherwise, observe that any directed acyclic graph $\GG$ can be unfolded to a forest $\FF$
and that the size of $\GG$ is bounded in the size of $\FF$. Moreover if $T \in \FF$ is
of maximal depth, then the size of $\FF$ is linearly bounded in the size of $T$ as the
number of trees in $\FF$ depends only on $\RS$. 
Suppose further that $T$ is not degenerated to a branch. Then a given derivation may only be 
abstractable by different paths in $T$, as exemplified by Example~\ref{ex:dg:2}.
\begin{example} \label{ex:dg:2}
Consider the TRS $\RS = \{\m{f} \to \m{c}(\m{g},\m{h}), \m{g} \to \m{a}, \m{h} \to \m{a}\}$.
Thus $\WDP(\RR)$ consists of three dependency pairs:
$1\colon \m{f}^\sharp \to \m{c_1}(\m{g}^\sharp,\m{h}^\sharp)$, 
$2\colon \m{g}^\sharp \to \m{c_2}$, and
$3\colon \m{h}^\sharp \to \m{c_3}$.
Let $\PP \defsym \WDP(\RR)$, then 
clearly $\PP = \WDG(\RR) = \PG{\WDG(\RR)}$. Consider the following
derivation
\begin{equation*}
\m{f}^\sharp 
\rsrew{\PP} \m{c_1}(\m{g}^\sharp, \m{h}^\sharp)
\rsrew{\PP} \m{c_1}(\m{c_2}, \m{h}^\sharp)
\rsrew{\PP} \m{c_1}(\m{c_2}, \m{c_3}) \tpkt
\end{equation*}
This derivation is composed from the paths $(\{1\},\{2\})$ and 
$(\{1\},\{3\})$.
\end{example}

Fortunately, we can circumvent these obstacles. 
Let $\PP$ denote the set of weak or weak innermost dependency pairs
of a TRS $\RS$. We make the following easy observation.
\begin{lemma} \label{l:1}
Let $\GG$ denote a weak or weak innermost dependency graph. 
Let ${\CC} \subseteq {\GG}$ and let $D \colon s \rssrew{\CC/\UU(\PP)} t$ denote a
derivation based on $\CC$ with $s \in \TBC$. 
Then $D$ has the following form:
$s = s_{0} \rsrew{\CC/\UU(\PP)} s_{1} \rsrew{\CC/\UU(\PP)} \dots \rsrew{\CC/\UU(\PP)} s_{n} = t$
where each $s_i \in \TBC$. 
\end{lemma}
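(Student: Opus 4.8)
The claim is Lemma~\ref{l:1}: if $\CC \subseteq \GG$ is a subgraph of a weak or weak innermost dependency graph and $D \colon s \rssrew{\CC/\UU(\PP)} t$ is a derivation based on $\CC$ with $s \in \TBC$, then every intermediate term $s_i$ again lies in $\TBC$. Let me think about what this really asserts.

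The plan is to prove that $\TBC$ is closed under $\rsrew{\CC/\UU(\PP)}$ whenever the starting term lies in $\TBC$; the statement of the lemma then follows by a trivial induction on the length of $D$. Since a $\rsrew{\CC/\UU(\PP)}$-step unfolds as $\rssrew{\UU(\PP)} \cdot \rsrew{\CC} \cdot \rssrew{\UU(\PP)}$, it suffices to show separately that each of $\rsrew{\UU(\PP)}$ and $\rsrew{\CC}$ maps $\TBC$ into $\TBC$.

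First I would record the structural shape forced by the inductive definition of $\TBC$: every $u \in \TBC$ can be written as $C[t_1,\ldots,t_m]$, where $C$ is a (possibly trivial) context built exclusively from compound symbols and each hole is filled by some $t_j \in \TTs \cup \TT$, i.e.\ $t_j$ is either an ordinary term over $\FS$ or a term $f^\sharp(v_1,\ldots,v_n)$ carrying a single dependency-pair symbol at its root with $v_1,\ldots,v_n \in \TT$. This is immediate by induction on the definition of $\TBC$, and it pins down exactly where each kind of symbol may occur: compound symbols only in $C$, dependency-pair symbols only at the roots of the sharped $t_j$, and $\FS$-symbols only inside the $t_j$.

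The closure under $\rsrew{\UU(\PP)}$ is the easy half. The rules of $\UU(\PP)$ are ordinary rules over $\FS$, so their left-hand sides contain neither compound nor dependency-pair symbols; consequently any $\UU(\PP)$-redex in $u = C[t_1,\ldots,t_m]$ must lie strictly inside the $\FS$-portion, that is, below some $t_j$ and---if $t_j = f^\sharp(v_1,\ldots,v_n)$---strictly below its sharped root. Rewriting such a redex keeps $t_j$ in $\TT$ (resp.\ in $\TTs$), so the whole term stays in $\TBC$.

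The main point, and the one I expect to require care, is the closure under $\rsrew{\CC}$, because of the interplay of the three symbol classes. A rule of $\CC \subseteq \GG$ is a weak (or weak innermost) dependency pair $l^\sharp \to \COM(u_1^\sharp,\ldots,u_k^\sharp)$ whose left-hand side has a dependency-pair symbol at its root; by the structural characterisation above the only positions of $u$ whose root is a dependency-pair symbol are the roots of the sharped leaves $t_j$, so any $\CC$-step necessarily contracts some $t_j = l^\sharp\sigma$ at its root. The crucial observation is that, since $t_j = f^\sharp(v_1,\ldots,v_n)$ with $v_1,\ldots,v_n \in \TT$, the matching substitution $\sigma$ binds every variable to an ordinary term. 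Hence each $u_i^\sharp\sigma$ is either an ordinary term (when $u_i^\sharp$ is a variable, as may happen for weak dependency pairs such as rule~$7$ of Example~\ref{ex:pre2}) or a sharped term $g^\sharp(\ldots)$, so in both cases $u_i^\sharp\sigma \in \TTs \cup \TT$. Consequently $\COM(u_1^\sharp\sigma,\ldots,u_k^\sharp\sigma) \in \TBC$, and plugging it back into the compound context $C$ in place of $t_j$ yields again a term of $\TBC$. Combining the two closure properties and inducting on the number of steps of $D$ completes the proof.
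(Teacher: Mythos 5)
Your proof is correct and follows the same route as the paper, which merely asserts that each $s_i$ can be written as $C[u_1^\sharp,\dots,u_r^\sharp]$ for a context $C$ consisting of compound symbols only. You supply the details the paper leaves implicit (the case analysis on where $\UU(\PP)$-redexes and $\CC$-redexes can occur in such a term, and why the contractum stays in $\TBC$), but the underlying invariant and argument are identical.
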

\begin{proof}
It is easy to see that $D$ has the presented form and that 
for each $i \in \{0,\dots,n\}$ there exists a context $C$
such that $s_i = C[u^\sharp_1,\dots,u^\sharp_r]$ and $C$ consists of compound
symbols only. This establishes the lemma. 
\end{proof}

Motivated by Example~\ref{ex:dg:1} we observe 
that a weak (innermost) dependency pair containing an $m$-ary ($m > 1$) 
compound symbol can only induces $m$ \emph{independent} derivations. 
Hence, we can reorder derivations to achieve the structure of derivation~\eqref{eq:dg:3}. 
This is formally proven via the next two lemmas.

\begin{lemma} \label{l:2}
Let $\GG$ denote a weak or weak innermost dependency graph and
let $\KK$ and $\LL$ denote two different nodes in $\PG{\GG}$ such that
there is no edge from $\KK$ to $\LL$. Let $s_0 \in \TBC$ and suppose the existence of a derivation $D$ of the following form:
$s_{0} \rew_{\KK/\UU(\PP)}^{n} s_{n} \rssrew{\UU(\PP)} t_{0} \rew_{\LL/\UU(\PP)}^{m} t_m$.
Then there exists a derivation $D'$ which has the form
$t'_{0} \rew_{\LL/\UU(\PP)}^{m} t'_{m} \rssrew{\UU(\PP)} s'_{0} \rew_{\KK/\UU(\PP)}^{n} s'_n$
with $t'_0 \in \TBC$.
\end{lemma}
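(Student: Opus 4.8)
The plan is to prove Lemma~\ref{l:2} by analysing how the rewrite steps in the two segments interact at the level of positions in the terms. First I would observe, using Lemma~\ref{l:1}, that every term occurring in the derivation $D$ lies in $\TBC$, i.e.\ each such term has the form $C[u_1^\sharp,\dots,u_r^\sharp]$ where $C$ is a context built exclusively from compound symbols and each $u_i^\sharp \in \TTs$ is headed by a dependency pair symbol. The key structural fact is that, because compound symbols only occur at the top (they never appear as arguments of dependency pair symbols or below the usable rules $\UU(\PP)$), the rewrite activity of $\KK/\UU(\PP)$ and of $\LL/\UU(\PP)$ takes place inside pairwise-disjoint ``slots'' $u_i^\sharp$ of the compound context $C$, and these slots evolve independently.

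The central step is to argue that a $\KK/\UU(\PP)$-step and a subsequent $\LL/\UU(\PP)$-step which act in the same slot cannot be chained, given that there is \emph{no} edge from $\KK$ to $\LL$ in $\PG{\GG}$. Concretely, suppose a $\KK$-pair $l^\sharp \to \COM(\dots)$ is applied in slot $i$, producing new dependency pair symbols, and later an $\LL$-pair $u^\sharp \to v^\sharp$ fires in a descendant of that same slot. Tracing descendants through the intervening $\UU(\PP)$-steps, this would exhibit a context $C'$ and substitutions $\sigma,\tau$ with $(\text{rhs of }\KK\text{-pair})\sigma \rew^* C'[u\tau]$, which by Definition~\ref{d:DG} is exactly an arrow from the $\KK$-pair to the $\LL$-pair, hence an edge $\KK \to \LL$ in $\PG{\GG}$ (or $\KK=\LL$, excluded since they are different nodes). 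This contradicts the hypothesis. Therefore every $\LL$-step in the second segment acts on a slot whose content is untouched by, and in fact already present before, the $\KK$-steps of the first segment. I expect this position-tracking / descendant argument, and the precise matching of the ``no edge'' condition to the non-existence of a connecting reduction, to be the main obstacle, since one must carefully handle the interleaved $\UU(\PP)$-steps and the degenerate cases where contexts collapse or slots are duplicated by branching compound symbols.

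Once the independence is established, I would perform the actual reordering: since the $\LL$-steps operate on slots disjoint from those rewritten by the $\KK$-steps, each individual $\LL/\UU(\PP)$-step commutes past each $\KK/\UU(\PP)$-step (closure of the rewrite relation under the compound context guarantees the step remains applicable after the swap, and the $\UU(\PP)$-steps can be distributed accordingly). Iterating this commutation moves all $m$ steps of $\LL/\UU(\PP)$ to the front and all $n$ steps of $\KK/\UU(\PP)$ to the back, yielding a derivation $D'$ of the shape
\begin{equation*}
t'_{0} \rew_{\LL/\UU(\PP)}^{m} t'_{m} \rssrew{\UU(\PP)} s'_{0} \rew_{\KK/\UU(\PP)}^{n} s'_n \tkom
\end{equation*}
where $t'_0$ is obtained from $s_0$ by applying only those $\UU(\PP)$-reductions needed to reach the relevant $\LL$-redexes, so $t'_0 \in \TBC$ by Lemma~\ref{l:1}.

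To organise the reordering cleanly I would proceed by induction on $n$, the number of $\KK$-steps, with the base case $n=0$ being trivial. For the inductive step I would peel off the last $\KK$-step and use the commutation established above to push it behind all $m$ of the $\LL$-steps, then apply the induction hypothesis to the remaining prefix containing $n-1$ many $\KK$-steps. The bookkeeping of the interspersed $\UU(\PP)$-steps is routine once the disjointness of slots is in hand; the only genuinely delicate point remains the verification that the disjointness is forced by the absence of an edge $\KK \to \LL$, which I would isolate as the single combinatorial claim driving the whole argument.
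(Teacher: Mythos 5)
Your proposal is correct and follows essentially the same route as the paper's proof: reformulate the absence of an edge from $\KK$ to $\LL$ (via the definition of the dependency graph) as the non-existence of a reduction from the right-hand side of a $\KK$-pair to a context containing an $\LL$-redex, deduce from this that the $\LL$-steps and the intervening $\UU(\PP)$-steps occur at positions parallel to (descendants of) the position produced by the last $\KK$-step, and then commute that step to the end by induction on $n$. The paper phrases the disjointness in terms of parallel positions and descendants rather than ``slots'' of the compound context, but the combinatorial content is the same.
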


\begin{proof}
Consider the following two dependency pairs:
$1\colon u_k^\sharp \to \COM(v_{k1}^\sharp,\ldots,v_{kr}^\sharp)$
and $2\colon u_l^\sharp \to \COM(v_{l1}^\sharp,\ldots,u_{lr}^\sharp)$.
Here the dependency pair $1$ belongs to $\KK$ and denotes the last 
dependency pair employed in $D$ before the path leaves $\KK$ into $\LL$,
while $2$ denotes the first pair in $\LL$. The assumption that there is no edge
from $\KK$ to $\LL$ can be reformulated as follows:
\begin{description}
\item[$(\dagger)$] No context $C$ and no substitutions 
$\sigma, \tau \colon \VS \to \TA(\FS,\VS)$ exist such that 
\begin{equation*}
  \COM(v_{k1}^\sharp\sigma,\ldots,v_{kr}^\sharp\sigma) \rssrew{\UU(\PP)} C[u_l^\sharp\tau] \tkom
\end{equation*}
%$\COM(v_{k1}^\sharp\sigma,\ldots,v_{kr}^\sharp\sigma) \rssrew{\UU(\PP)} C[u_l^\sharp\tau]$
holds.
\end{description}  
To prove the lemma, we proceed by induction on $n$. 
It suffices to consider the step case $n>1$. By assumption 
the last rewrite step in the subderivation $D_0 \colon s_0 \rew_{\KK/\UU(\PP)}^{n} s_{n}$
employs dependency pair $1$. Let $p \in \Pos(s_n)$
denote the position of the reduct 
$\COM(v_{k1}^\sharp\tau,\ldots,v_{kr}^\sharp\tau)$ in $s_{n}$.
By assumption there exists a derivation $s_n \rssrew{\UU(\PP)} t_{0}$. Let
$q \in \Pos(s_n)$ denote the position of the redex in  $s_{n}$ that is
contracted as first step in this reduction. 
Without loss of generality we can assume that both positions are parallel 
to each other.
Otherwise one of the following cases applies. Either $p < q$ or $p \geqslant q$. 
But clearly the first case contradicts the assumption $(\dagger)$. Hence,
assume the second. 
But this is also impossible. Lemma~\ref{l:1} yields
that $\atpos{s_n}{q} \in \TBC$, which contradicts that $q$ is redex with respect
to $\UU(\PP)$.
Repeating this argument we see that position $p$ has exactly one descendant
in $t_{0}$. A similar argument shows that all redex positions in the
subderivation $D_1 \colon t_0 \rew_{\LL/\UU(\PP)}^{m} t_{m}$ are parallel to (descendants of) 
$p$. Hence, we can move the last rewrite step $s_{n-1} \rsrew{\KK} s_{n}$ 
in the derivation $D_0$ after the derivation $D_1$. Note that in each of
the terms $(t_i)_{i=1,\dots,m}$ the position $p$ exists and denotes the
term $\COM(v_{k1}^\sharp\tau,\ldots,v_{kr}^\sharp\tau)$. 
Hence, the replacement of $\COM(v_{k1}^\sharp\tau,\ldots,v_{kr}^\sharp\tau)$ 
everywhere by $u_k^\sharp\sigma$ does not affect the validity 
of the rewrite sequence. Furthermore the set $\TBC$ is
closed under this operation.
Now, the induction hypothesis becomes applicable to 
derive the existence of the sought derivation $D'$.
\end{proof}

Let $\GG$ denote a weak or weak innermost dependency graph and 
let $D \colon s \rew^\ell t$ denote a derivation, such that $s \in \TBS$.
Here $\rew$ denotes either $\rsrew{\PP/\UU(\PP)}$ or $\irew{\PP/\UU(\PP)}$.
We say that $D$ is \emph{based on $(\PP_1,\ldots,\PP_m)$} 
in $\PG{\GG}$ if $D$ is of the form
\begin{equation*}
s~\parenito_{\PP_1/\UU(\PP)}^{\ell_1} \cdots 
  \parenito_{\PP_m/\UU(\PP)}^{\ell_m}~t \tkom  
\end{equation*}
with $\ell_1,\ldots,\ell_m \geqslant 0$. 
We arrive at the main lemma of this section.

\begin{lemma} \label{l:lpar08:3}
Let $\PP$ denote a set of weak or weak innermost dependency pairs,
let $s \in \TBS$ and let $D \colon s \rew^\ell t$ denote a maximal derivation, 
where $\rew$ denotes $\rsrew{\PP/\UU(\PP)}$ or
$\irew{\PP/\UU(\PP)}$ respectively. Suppose that $D$ is based on 
$(\PP_1,\ldots,\PP_m)$ and $\PP_1 \in \Src$. Then there exists a derivation $D' \colon s \rew^\ell t$
based on $(\PP'_1,\ldots,\PP'_{m'})$, with $\PP_1' \in \Src$ such that 
all $\PP'_i$ ($i \in \{1,\dots,m'\}$) are pairwise distinct.
\end{lemma}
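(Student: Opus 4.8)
The statement to prove (Lemma~\ref{l:lpar08:3}) asserts that any maximal derivation based on a sequence $(\PP_1,\dots,\PP_m)$ of SCCs, starting in a source node, can be re-organised into a derivation of the \emph{same length} that is based on a sequence of \emph{pairwise distinct} SCCs. My plan is to use Lemma~\ref{l:2} as the central reordering device and argue by a well-chosen induction that eliminates repetitions. First I would observe that the situation where a repetition occurs is exactly the situation where the sequence $(\PP_1,\dots,\PP_m)$ contains two occurrences of the same SCC, say $\PP_i = \PP_j$ with $i<j$. Since $\PG{\GG}$ is acyclic (it is the quotient of a dependency graph by its SCCs), there can be no edge returning to an already-visited node; hence between two occurrences of the same SCC there must sit an SCC $\LL$ for which there is no edge back into $\PP_i$. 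This is precisely the hypothesis needed to invoke Lemma~\ref{l:2}.

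The key steps, in order, are as follows. I would first fix a complexity measure on the sequence $(\PP_1,\dots,\PP_m)$ that strictly decreases under a single reordering application and whose minimal value forces pairwise distinctness — for instance the multiset of ``out-of-order adjacencies'', or more simply the number $m$ of blocks together with a lexicographic tie-breaker on block positions. Then, given a derivation $D$ based on $(\PP_1,\dots,\PP_m)$ that is not yet based on pairwise distinct components, I locate an adjacent (or, after contracting $\UU(\PP)$-steps, effectively adjacent) pair of blocks $\KK$, $\LL$ with no edge from $\KK$ to $\LL$, and apply Lemma~\ref{l:2} to swap them. Crucially, Lemma~\ref{l:2} preserves both the total length of the derivation and the property that the start term lies in $\TBC$, so the swapped derivation $D'$ is again a legitimate derivation of the same length $\ell$. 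Repeated swaps let me gather all occurrences of each fixed SCC into a single contiguous block; merging a contiguous run of identical SCCs into one block of $\rew_{\PP_i/\UU(\PP)}$-steps then reduces the number of distinct blocks. Because $\PG{\GG}$ is finite and acyclic, iterating this process terminates, and the resulting derivation $D'$ is based on $(\PP'_1,\dots,\PP'_{m'})$ with all $\PP'_k$ pairwise distinct; since reordering never introduces a new source, $\PP'_1 \in \Src$ is maintained.

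The main obstacle I anticipate is verifying that the reordering can always be \emph{localised} to an adjacent non-edge pair so that Lemma~\ref{l:2} genuinely applies. Lemma~\ref{l:2} is stated for a configuration $s_0 \rew_{\KK/\UU(\PP)}^n s_n \rssrew{\UU(\PP)} t_0 \rew_{\LL/\UU(\PP)}^m t_m$ with the hypothesis that there is no edge from $\KK$ to $\LL$; the delicate point is that in an arbitrary maximal derivation the blocks interleave with $\UU(\PP)$-steps in ways that may not immediately match this shape. I would handle this by first appealing to Lemma~\ref{l:1} to guarantee every intermediate term stays in $\TBC$, and then arguing that whenever the block sequence violates acyclicity order, the offending adjacent pair $\KK,\LL$ must lack a $\KK\to\LL$ edge (otherwise $\PG{\GG}$ would contain a cycle), so the hypothesis of Lemma~\ref{l:2} is met. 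A secondary subtlety is ensuring the chosen measure strictly decreases and that the merging of contiguous identical blocks is sound; this is routine once the swap lemma is in place, but it is the bookkeeping I would be most careful to get right.
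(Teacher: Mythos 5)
Your proposal is correct and takes essentially the same approach as the paper's proof: there too the derivation is reordered by repeatedly applying Lemma~\ref{l:2} to an adjacent pair of blocks, the required no-edge hypothesis is obtained by observing that such an edge would force the repeated strongly connected components to collapse into one (your acyclicity argument), and the process terminates by induction on the distance between the two occurrences of the repeated component. The only difference is cosmetic: the paper inducts on $j-i$ for a repetition $\mathcal{P}_i=\mathcal{P}_k$ with $k=j+1$, whereas you propose an equivalent, if less sharply specified, decreasing measure.
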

\begin{proof}
Without loss of generality, we restrict our attention to weak dependency pairs.
To prove the lemma, we consider a sequence $(\PP_1,\ldots,\PP_m)$,
where there exist indices $i$, $j$ and $k$ with $i < j < k$ and $\PP_i = \PP_k$.
By induction on $j-i$ we show that this path is transformable into a sequence
$(\PP'_1,\ldots,\PP'_{m'})$ of the required form. It suffices to
prove the step case. Moreover, we can assume without loss of generality 
that $k = j+1$. Consider the two dependency pairs:
$1\colon l_j^\sharp  \to \COM(u_{j1}^\sharp,\ldots,u_{jr}^\sharp)$
and $ 2\colon l_k^\sharp \to \COM(u_{k1}^\sharp,\ldots,u_{kr}^\sharp)$.
Dependency pair $1$ belongs to $\PP_j$ and denotes the last 
dependency pair employed in $D$ before the sequence leaves $\PP_j$ into $\PP_{k}$,
while $2$ denotes the first pair in $\PP_k$. We consider two cases:
\begin{enumerate}
\item Assume there exist a context $C$ and substitutions 
$\sigma,\tau \colon \VS \to \TA(\FS,\VS)$
such that the following holds:
$\COM(u_{j1}^\sharp\sigma,\ldots,u_{jr}^\sharp\sigma) \rew^\ast C[l_k^\sharp\tau]$.
Thus by definition of weak dependency graphs the node in $\WDG(\RS)$ representing 
dependency pair $1$ is connected to the node representing dependency pair $2$. 
In particular every node in the SCCs represented by $\PP_i = \PP_k$ 
is connected to every node in the SCC represented by $\PP_j$. This implies that
$\PP_i = \PP_j = \PP_k$ contradicting the assumption.
\item Otherwise, there is no edge between $\PP_j$ and $\PP_k$ in
the graph $\PG{\GG}$ and by the assumptions on $(\PP_1,\dots,\PP_\ell)$
we find a derivation of the following form:
$D_0 \colon s_{j_1} \rew_{\PP_j/\UU(\PP)}^p s_{j_p} \rssrew{\UU(\PP)} s_{k_1}  
\rew_{\PP_k/\UU(\PP)}^q s_{k_q}$.
Due to Lemma~\ref{l:2} there exists a derivation 
$D_1 \colon s'_{k_1} \rew_{\PP_k/\UU(\PP)}^{q} s'_{k_q} \rssrew{\UU(\PP)} s'_{j_1} 
\rew_{\PP_j/\UU(\PP)}^{p} s'_{j_p}$
so that the number of (weak) dependency pair steps is unchanged. 
The sequence $(\PP_1,\ldots,\PP_{j},\PP_k,\dots,\PP_m)$ is reorderable
into $(\PP_1,\ldots,\PP_{k},\PP_j,\dots,\PP_m)$ without affecting the 
length $\ell$ of the $\rsrew{\PP/\UU(\PP)}$-rewrite sequence. 
By assumption $k=j+1$, hence the induction hypothesis becomes applicable and we conclude
the existence of a path $(\PP'_1,\ldots,\PP'_{m'})$ fulfilling the assertions of the lemma.
\end{enumerate}
\end{proof}

Finally, we arrive at the main contribution of this paper.
\begin{theorem}
\label{t:dg}
Let $\RR$ be a TRS,
let $\PP$ be the set of weak or weak innermost dependency pairs,
let $A$ denote the maximum arity of compound symbols and let
$K$ denote the number of SCCs in the weak (innermost) dependency graph $\GG$.
Suppose $t \in \TBS$ is (innermost) terminating and define
\begin{equation*}
 L(t) \defsym \max\{ \dheight(t, \parenito_{\PP_m/\SS}) \mid 
\text{$(\PP_1,\ldots,\PP_m)$ is a path in $\PG{\GG}$ such that $\PP_1 \in \Src$}\} \tkom
\end{equation*}
where $\SS = \PP_1 \cup \cdots \cup \PP_{m-1} \cup \UU(\PP_1 \cup \cdots \cup \PP_m)$.
Then $\dheight(t, \parenito_{\PP/\UU(\PP)}) \leqslant K^2 \cdot L(t)$.
\end{theorem}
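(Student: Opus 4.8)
The plan is to fix a maximal derivation $D\colon t \parenito^{\ell}_{\PP/\UU(\PP)} u$ issuing from $t\in\TBS$, so that $\ell = \dheight(t,\parenito_{\PP/\UU(\PP)})$, and to charge its length against the SCC-structure of the graph $\GG$. Every counted step in $D$ applies a dependency pair lying in exactly one SCC of $\GG$ (the $\UU(\PP)$-steps are uncounted in the relative rewriting), so writing $n_{\CC}$ for the number of steps using pairs from the SCC $\CC$ we have $\ell = \sum_{\CC} n_{\CC}$, the sum ranging over the $K$ SCCs. The goal $\ell\leqslant K^2\cdot L(t)$ will then follow once each $n_{\CC}$ is organised along the source-to-$\CC$ paths of $\PG{\GG}$ and each such path is shown to carry at most $K\cdot L(t)$ steps in a way that respects this global budget.

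First I would rearrange $D$ so that its steps group into phases that follow single paths of $\PG{\GG}$. The tools are exactly those assembled before the theorem: Lemma~\ref{l:1} guarantees that every term stays in $\TBC$, i.e.\ is a compound-symbol context over frontier subterms, so that distinct arguments of a compound symbol evolve independently; Lemma~\ref{l:2} shows that whenever two SCCs are not joined by an edge their steps occur at parallel positions and therefore commute; and Lemma~\ref{l:lpar08:3} then rearranges any maximal derivation based on a sequence of SCCs into one based on a path with pairwise \emph{distinct} SCCs starting in $\Src$, without changing its length. Using this I would reduce $\PG{\GG}$ to a forest of trees rooted at the sources (unfolding the acyclic graph as in the discussion preceding the theorem) and decompose $D$ into subderivations, each following one root-to-node path.

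Next comes the per-path estimate. For a path $\pi=(\PP_1,\dots,\PP_{m-1},\CC)$ I would argue that the $\CC$-steps travelling along $\pi$, together with the $\PP_1,\dots,\PP_{m-1}$-steps and usable-rule steps that enable them, form a single $\parenito_{\CC/\SS_{\pi}}$-derivation from $t$, where $\SS_{\pi}=\PP_1\cup\cdots\cup\PP_{m-1}\cup\UU(\PP_1\cup\cdots\cup\PP_{m-1}\cup\CC)$; here Lemma~\ref{l:shift} is what lets me absorb the earlier-SCC steps into the weak (relative) part of the rewrite relation. Consequently the number of such $\CC$-steps is at most $\dheight(t,\parenito_{\CC/\SS_{\pi}})\leqslant L(t)$, since $\pi$ is one of the admissible paths in the definition of $L(t)$. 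As every path of $\PG{\GG}$ visits pairwise distinct SCCs, it has at most $K$ nodes, so a single path accounts for at most $K\cdot L(t)$ counted steps.

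The main obstacle—and the step I would spend the most care on—is the final counting: bounding the number of paths that actually carry steps so that the phases combine into the product $K^2\cdot L(t)$ rather than a sum over exponentially many paths. Naively unfolding an SCC-DAG into a forest can multiply nodes, and a crude bound of (number of root-to-leaf paths) $\times\,K\cdot L(t)$ is \emph{not} polynomial in general. The resolution, foreshadowed by the remark that ``the number of trees in $\FF$ depends only on $\RR$'' and that each tree is controlled by the maximal-depth one, is to observe that the reordering of Lemmata~\ref{l:1}--\ref{l:lpar08:3} forces the phases carrying the budgeted $\CC$-steps to lie along at most $K$ relevant root-to-leaf paths, each of length at most $K$. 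I would make this precise by tracking, for each $\CC$, the histories of its steps back to the root and showing that after the commutations only $O(K)$ distinct path-prefixes survive, so that $\sum_{\CC} n_{\CC}\leqslant K\cdot K\cdot L(t)=K^2\cdot L(t)$, which yields the claim.
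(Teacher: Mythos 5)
Your proposal is correct and follows essentially the same route as the paper's own proof: Lemma~\ref{l:lpar08:3} normalises the derivation to a path of pairwise distinct SCCs, Lemma~\ref{l:shift} absorbs the earlier components into the relative part so that each segment is bounded by $L(t)$, a path of at most $K$ nodes then contributes at most $K\cdot L(t)$ steps, and the tree structure of each component of $\PG{\GG}$ limits the number of relevant root-to-leaf paths to $K$, giving $K^2\cdot L(t)$. The ``main obstacle'' you isolate is resolved in the paper exactly as you suggest, via the observation that a tree on at most $K$ nodes has at most $K$ such paths.
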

\begin{proof}
Let $(\PP_1,\ldots,\PP_m)$ be a path in $\PG{\PP}$ such that $\PP_1 \in \Src$ and
let $D \colon t \to^\ell u$, denote a maximal derivation based on this path. 
(Here $\rew$ denotes $\rsrew{\PP/\UU(\PP)}$ or $\irew{\PP/\UU(\PP)}$.)
Lemma~\ref{l:lpar08:3} yields that $D$ has the following form:
\begin{equation} \label{eq:dg:4}
t=t_0 \to_{\PP_1/\UU(\PP_1)}^{\ell_1} 
t_{\ell_1} \to_{\PP_2/\UU(\PP_1) \cup \UU(\PP_2)}^{\ell_2} 
\cdots \to_{\PP_m/\UU(\PP_1) \cup \cdots \cup \UU(\PP_m)}^{\ell_m} t_n = u \tkom
\end{equation}
where $t_0 \in \TBS$ and $t_i \in \TBC$ for all $i \geqslant 1$.
It suffices to estimate $\ell_j$ for all $j=1,\dots,m$ suitably.
Let $j$ be arbitrary, but fixed. Consider the subderivation $D'$ of~\eqref{eq:dg:4}
where $m$ is replaced by $j$. Clearly $D'$ is contained 
in the following derivation:
\begin{equation*}
t \to_{\PP_1 \cup \cdots \cup \PP_{i-1} \cup \UU(\PP_1) \cup \cdots \cup \UU(\PP_{j-1})}^{\ast} 
\cdot \to_{\PP_j/\UU(\PP_1) \cup \cdots \cup \UU(\PP_j)}^{\ell_j} t_{\ell_j}  
\end{equation*}
Hence Lemma~\ref{l:shift} is applicable, thus
$\ell_j \leqslant \dheight(t,\rsrew{\PP_j/\PP_1 \cup \cdots \cup \PP_{j-1} \cup 
    \UU(\PP_1) \cup \cdots \cup \UU(\PP_{j})})$.
As 
$\UU(\PP_1) \cup \cdots \cup \UU(\PP_{j}) \subseteq \UU(\PP_1 \cup \cdots \cup \PP_{j})$
we conclude $\ell_j \leqslant L(t)$ and obtain
$\ell = \ell_1 + \ell_2 + \cdots + \ell_m \leqslant K \cdot L(t)$.

Above we argued that any connected component in $\PG{\PP}$ is a tree. Clearly
the number of nodes in this tree is less than $K$. Thus
an arbitrary derivation can at most be based on $K$-many 
different paths. As the length of a derivation $D$ based on a specific path can 
be estimated by $K \cdot L(t)$, we conclude that the length of an arbitrary derivation is
less than $K^2 \cdot L(t)$.
This completes the proof of the theorem.
\end{proof}

% GM: corrected
Theorem~\ref{t:dg} together with Proposition~\ref{p:relative} 
form a suitable analog of Theorem~\ref{t:GAO02}: Let $\PP$ be the set of weak
or weak innermost dependency pairs. Suppose for every path 
$\WW_i \defsym (\PP_{i1},\ldots,\PP_{im})$ in $\PP$ there exist an SLI $\AA_j$ compatible 
with the usable rules of $\bigcup_{j \in \{i_1,\dots,i_m\}} \PP_j$. Assume 
the existence of a safe and $\Slow$-collapsible reduction pairs
$(\gtrsim_i,\succ_i)$ such that 
$\UU(\bigcup_{j \in \{i_1,\dots,i_m\}} \PP_j) \cup \bigcup_{j \in \{i_1,\dots,i_{m-1}\}} \PP_j$ 
is compatible with $\gtrsim_i$ and 
$\PP_{im}$ compatible with $\succ_i$, such that
$\PP_{im}$ is non-duplicating. Then for any $t \in \TB$ 
the derivation height $\dheight(t,\parenito)$ with respect to 
(innermost) rewriting is majorised by $\Slow(t^{\sharp},\succ_i)$ and $\size{t}$.

% GM: corrected, 20.2.2009
\begin{corollary}
\label{c:dg}
Let $\RS$ be a TRS, let $\PP$ be the set of weak (innermost) 
dependency pairs, and let $\GG$ denote the weak (innermost) dependency graph.
Suppose for every path $\WW_i = (\PP_{i1},\ldots,\PP_{im})$ in $\PG{\GG}$ 
there exist an SLI $\AA_i$ and linear (quadratic) restricted interpretations $\BB_j$ 
such that 
$(\geqord{\BB_i},\gord{\BB_i})$
forms a safe reduction pair with
(i) $\UU(\PP_{i1} \cup \cdots \cup \PP_{im}) \subseteq {\gord{\AA_i}}$
(ii) $\PP_{i1} \cup \cdots \cup \PP_{im-1} \cup \UU(\PP_{i1} \cup \cdots \cup \PP_{im})
 \subseteq {\geqord{\BB_i}}$, 
(iii) ${\PP_{im}} \subseteq {\gord{\BB_{im}}}$, and
(iv) $\PP_{im}$ is non-duplicating.
Then the runtime complexity of a TRS $\RR$ is linear or quadratic, respectively. 
\end{corollary}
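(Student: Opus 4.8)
The plan is to assemble the three principal tools developed above---Proposition~\ref{p:dp:usable}, Theorem~\ref{t:dg}, and the weight gap principle of Proposition~\ref{p:relative}---into a single chain of inequalities, and then to feed in the polynomial bounds coming from the linear (resp.\ quadratic) restricted interpretations $\BB_i$. Fix a basic term $t \in \TB$ and write $\PP$ for the set of weak (resp.\ weak innermost) dependency pairs of $\RR$. First I would record that every strongly connected component of $\GG$ occurs as the terminal node $\PP_{im}$ of some path starting in $\Src$ (in a finite directed acyclic graph every node is reachable from a source), so that hypothesis~(iv) forces every dependency pair in $\PP$ to be non-duplicating; in particular $\PP$ itself is non-duplicating. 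Termination of $t$ with respect to $\rew$ (needed to invoke the propositions) follows from the per-path strict decrease together with the fact that every $\AA_i$ witnesses termination of the usable rules; I would argue this exactly as in the proofs of Theorem~\ref{t:dg} and Proposition~\ref{p:main}. Granting this, Proposition~\ref{p:dp:usable} gives $\dheight(t, \rew) \leqslant \dheight(t^\sharp, \rsrew{\UU(\PP) \cup \PP})$, with $t^\sharp \in \TBS$ and $\size{t^\sharp} = \size{t}$.

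The heart of the argument is the per-path estimate, which is essentially routine once the graph machinery is in place. By Theorem~\ref{t:dg} it suffices to bound $L(t^\sharp)$, i.e.\ for each path $(\PP_{i1}, \ldots, \PP_{im})$ with $\PP_{i1} \in \Src$ the quantity $\dheight(t^\sharp, \parenito_{\PP_{im}/\SS_i})$, where $\SS_i = \PP_{i1} \cup \cdots \cup \PP_{i,m-1} \cup \UU(\PP_{i1} \cup \cdots \cup \PP_{im})$. Hypotheses~(ii) and~(iii) state precisely that $\SS_i \subseteq {\geqord{\BB_i}}$ and $\PP_{im} \subseteq {\gord{\BB_i}}$, and $(\geqord{\BB_i}, \gord{\BB_i})$ is a safe reduction pair whose strict part is collapsible via the measure $\Slow(s, \gord{\BB_i}) = \eval{\alpha_0}{\BB_i}(s)$. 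Hence Lemma~\ref{l:5} applies to the relative system $\parenito_{\PP_{im}/\SS_i}$ (taking $\PP_{im}$ as the strict component and $\SS_i$ as the slack): starting from $t^\sharp \in \TBC$ every step stays in $\TBC$, each $\PP_{im}$-step strictly decreases $\eval{\alpha_0}{\BB_i}$ and each $\SS_i$-step leaves it unchanged or smaller, so $\dheight(t^\sharp, \parenito_{\PP_{im}/\SS_i}) \leqslant \eval{\alpha_0}{\BB_i}(t^\sharp)$. Because $\BB_i$ is linear (resp.\ quadratic) restricted and $t^\sharp$ is basic, $\eval{\alpha_0}{\BB_i}(t^\sharp)$ is bounded by a linear (resp.\ quadratic) polynomial in $\size{t}$; taking the maximum over the finitely many paths yields $L(t^\sharp) \leqslant q(\size{t})$ for one such polynomial $q$, and therefore $\dheight(t^\sharp, \parenito_{\PP/\UU(\PP)}) \leqslant K^2 \cdot q(\size{t})$.

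It remains to pass from this bound on the number of $\PP$-steps back to a bound on the number of all $\UU(\PP) \cup \PP$-steps. This is the job of Proposition~\ref{p:relative}: with $\RR \defsym \PP$ (non-duplicating) and $\SS \defsym \UU(\PP)$ oriented by a strongly linear interpretation, one obtains constants $K_1, L_1$ with $\dheight(t^\sharp, \rsrew{\UU(\PP) \cup \PP}) \leqslant K_1 \cdot \dheight(t^\sharp, \parenito_{\PP/\UU(\PP)}) + L_1 \cdot \size{t}$. Chaining the three estimates gives $\dheight(t, \rew) = \bO(\size{t})$ in the linear case and $\bO(\size{t}^2)$ in the quadratic case; since this holds uniformly for all basic $t$, and $\Rc{\RS}(n)$ (resp.\ $\Rci{\RS}(n)$) is the maximum of $\dheight(t,\rew)$ over basic $t$ with $\size{t} \leqslant n$, the runtime complexity is linear (resp.\ quadratic), as claimed.

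The main obstacle is the last step. Proposition~\ref{p:relative} as stated requires a \emph{single} strongly linear interpretation orienting the whole of $\UU(\PP)$, whereas the hypotheses only supply the per-path interpretations $\AA_i$ orienting $\UU(\PP_{i1} \cup \cdots \cup \PP_{im})$; these cannot in general be amalgamated into one global SLI (the associated linear inequalities on the symbol weights need not have a common solution). I expect the clean resolution is to localise the weight gap argument: rather than applying Proposition~\ref{p:relative} globally, I would re-run the reorganisation underlying Theorem~\ref{t:dg} (Lemmata~\ref{l:2} and~\ref{l:lpar08:3}) on the full $\UU(\PP) \cup \PP$-derivation, so that it decomposes into at most $K^2$ path-based segments, and then bound the $\UU(\PP)$-steps \emph{within} the segment for path $i$ by the weight gap principle using $\AA_i$, whose domain of strict decrease already covers the usable rules active there. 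Checking that the intervening $\UU(\PP)$-steps can indeed be charged segment-wise---i.e.\ that reordering does not create usable-rule steps outside the reach of the corresponding $\AA_i$---is the only genuinely delicate point, and it is exactly what the closure of $\TBC$ under the reorganisation lemmas is designed to guarantee.
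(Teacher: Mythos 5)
Your proof is correct and follows essentially the same route as the paper: termination of basic terms, per-path collapsing via the safe reduction pairs $(\geqord{\BB_i},\gord{\BB_i})$ to bound the relative derivation heights polynomially, the path decomposition of Theorem~\ref{t:dg}, and Proposition~\ref{p:dp:usable} at the end. The ``obstacle'' you flag in your final paragraph is resolved exactly as you propose: the paper does not apply Proposition~\ref{p:relative} globally but path-wise with the local SLI $\AA_i$, defining an operator $M$ analogous to $L$ that counts \emph{all} steps along a path, and then reruns the decomposition argument of Theorem~\ref{t:dg} with $M$ in place of $L$.
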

\begin{proof}
Observe that the assumptions imply that any basic term $t \in \TB$
is terminating with respect to $\RS$: Any infinite
derivation with respect to $\RS$ starting in $t$ can be translated into
an infinite derivation with respect to $\UU(\RS) \cup \PP$ (see~\cite[Lemma~16]{HM:2008}).
Moreover, as the number of paths in $\PG{\GG}$ is finite, there exists 
a component $\PP_j$ that represents an infinite rewrite sequence. This is
a contradiction.
Without loss of generality, we assume $\PP = \WDP(\PP)$ and $\GG = \WDG(\PP)$.
Note that the reduction pair $(\geqord{\BB_i},\gord{\BB_i})$ is safe and collapsible.
Hence for all $i$, the length of any $\rsrew{\PP_{im}/\SS}$-rewrite sequence
is less than $p_i(\size{t})$, where $p_m$ denotes 
a linear (or quadratic) polynomial, depending on $\size{t}$ only. (Here
$\SS = \PP_{i1} \cup \cdots \cup \PP_{im-1} \cup \UU(\PP_{i1} \cup \cdots \cup \PP_{im})$.)
In analogy to the operator $L$, we define 
$M(t) \defsym  \max\{ \dheight(t, \rsrew{\PP_{im} \cup \SS}) \mid 
\text{$(\PP_{i1},\ldots,\PP_{im})$ is a path in $\PG{\GG}$ such that $\PP_{i1} \in \Src$}\}$.
An application of Proposition~\ref{p:relative} yields 
$M(t) = \bO(p_i(\size{t}))$. Following the pattern of the proof of the theorem, 
we establish the existence of a polynomial $p$ such that 
$\dheight(t,\rsrew{\PP \cup \UU(\PP)}) \leqslant p(\size{t})$
holds for any basic term $t$. Finally, the corollary follows by 
an application of Proposition~\ref{p:dp:usable}.
\end{proof}

As mentioned above, in the dependency graph refinement for 
termination analysis it suffices to guarantee for each cycle $\CC$ that
there exist no $\CC$-minimal rewrite sequences. For that one only needs to
find a reduction pair $(\gtrsim,\succ)$ such that
${\RS} \subseteq {\gtrsim}$, ${\CC} \subseteq {\gtrsim}$ and 
${{\CC} \cap {\succ}} \not= {\varnothing}$. Thus, considering Theorem~\ref{t:dg} it is
tempting to think that it should suffice to replace strongly connected
components by cycles and the stronger conditions should apply. 
However this intuition is deceiving as shown by the next example.
%
% GM: corrected, 20.2.2009
\begin{example}
Consider the TRS $\RS$ of
$\m{f}(\m{s}(x),\m{0}) \to \m{f}(x, \m{s}(0))$ and
$\m{f}(x,\m{s}(y)) \to \m{f}(x, y)$.
$\WDP(\RS)$ consists of
$1\colon \m{f}^\sharp(\m{s}(x),\m{0}) \to \m{f}^\sharp(x, \m{s}(x))$
and $2\colon \m{f}^\sharp(x,\m{s}(y)) \to \m{f}^\sharp(x,y)$,
and the weak dependency graph $\WDG(\RS)$ 
contains two cycles $\{1,2\}$ and $\{2\}$.
There are two linear restricted interpretations $\AA$ and $\BB$ such that
$\{1,2\} \subseteq {\geqslant_\A} \cup {>_\A}$,
$\{1\} \subseteq {>_\AA}$, and
$\{2\} \subseteq {>_\BB}$.
Here, however, we must not conclude linear runtime complexity,
because the runtime complexity of $\RS$ is at least quadratic.
\end{example}

\section{Conclusion} \label{lpar08:Conclusion}

In this section we provide (experimental) evidence on the applicability of
the technique for complexity analysis established in this paper. 
We briefly consider the
efficient implementation of the techniques provided by Theorem~\ref{t:dg}
and Corollary~\ref{c:dg}. 
Firstly, in order to approximate (weak) dependency graphs, 
we adapted (innermost) dependency graph estimations
using the functions $\m{TCAP}$ ($\m{ICAP}$)~\cite{Giesl:2005frocos}.
Secondly, note that a graph including $n$ nodes 
may contain an exponential number of paths.
However, to apply Corollary~\ref{c:dg} it is sufficient to handle only paths in
the following set. 
Note that if we can assume that $\PG{\GG}$ is a tree, then this set contains at most $n^2$ paths.
\begin{equation}
\label{eq:maximal}
\{ (\PP_1,\ldots,\PP_k) \mid 
\text{$(\PP_1,\ldots,\PP_m)$ is a maximal path and $k \leqslant m$} \} \tkom
\end{equation}

\begin{example}
Consider the following non-total terminating $\RS$.\\[-4ex]
\begin{alignat*}{4}
13\colon && \mp(\mf(\mf(x))) & \rew \mq(\mf(\mg(x))) & \hspace{5ex}
15\colon &&  \mp(\mg(\mg(x))) & \rew \mq(\mg(\mf(x)))\\
14\colon && \mq(\mf(\mf(x))) & \rew \mp(\mf(\mg(x))) &
16\colon && \mq(\mg(\mg(x))) & \rew \mp(\mg(\mf(x)))
\end{alignat*}  
% Note that the dependendcy graph $\DG(\RS)$ doesn't contain
% any cycles. Hence termination of $\RS$ follows trivially.
%
The weak dependency pairs $\WDP(\RR)$ are given as follows:
\begin{alignat*}{4}
17\colon && \mp^\sharp(\mf(\mf(x))) & \rew \mq^\sharp(\mf(\mg(x))) & \hspace{5ex}
19\colon &&  \mp^\sharp(\mg(\mg(x))) & \rew \mq^\sharp(\mg(\mf(x)))\\
18\colon && \mq^\sharp(\mf(\mf(x))) & \rew \mp^\sharp(\mf(\mg(x))) &
20\colon && \mq^\sharp(\mg(\mg(x))) & \rew \mp^\sharp(\mg(\mf(x)))
\end{alignat*} 
Hence for $\PG{\WDG(\RR)}$ the set~\eqref{eq:maximal} consists of 4 trivial paths:
$(\{17\})$, $(\{18\})$, $(\{19\})$, and $(\{20\})$. Exemplarily we treat one of the
trivial paths.
\begin{itemize}
\item Consider $17\colon \mp^\sharp(\mf(\mf(x))) \rew \mq^\sharp(\mf(\mg(x)))$.
Observe that $\UU(\{17\}) = \varnothing$. In order to orient this weak dependency 
pair it suffices to employ the following polynomial interpretation $\mp^\sharp_\BB(x) = 1$
and $\mq^\sharp_\BB(x) = \mf_\BB(x) = \mg_\BB(x) = 0$. 
\end{itemize}
In a similar fashion, we can treat the remaining three paths. It is not difficult
to argue that this implies that the runtime complexity function of $\RR$ is constant. Note that none
of the other techniques in the analysis of runtime complexities can obtain
this (optimal) bound.
\end{example}

\begin{table}
\begin{center}
\caption{Results for Linear Runtime Complexities}
\label{table1}
\medskip
\begin{tabular}{%
@{\hspace{0pt}}l@{~~}r@{~~~~}r@{~~}r@{~~}r@{~~~~~~}r@{~~}r@{~~}r@{\hspace{0pt}}}
\hline
& \multicolumn{4}{c}{\emph{full rewriting}}
& \multicolumn{3}{c}{\emph{innermost rewriting}} \\
& direct
& Prop.\ref{p:dp:usable} & Prop.\ref{p:main} & Cor.\ref{c:dg}
& Prop.\ref{p:dp:usable} & Prop.\ref{p:main} & Cor.\ref{c:dg}
\\
\hline\\[-2ex]
S
& 139
& 139 & 93 & 108 
& 144 & 102 & 117 
\\
&   
&      & (147\makebox[0mm][l]{)} & (162\makebox[0mm][l]{)}
&      & (166\makebox[0mm][l]{)} & (181\makebox[0mm][l]{)}
\\
& \emph{7}
& \emph{9} & \emph{14} & \emph{6} 
& \emph{8} & \emph{16} & \emph{6} 
\\
F
& 1591
& 1582 & 1646 & 1630
& 1577 & 1637 & 1619
\\
& \emph{2474}
& \emph{4789} & \emph{456} & \emph{607}
& \emph{4699} & \emph{462} & \emph{536} 
\\
T
& 8
& 17 & 0 & 1
& 17 & 0 & 3
\\
\hline\\[-3em]
\end{tabular}
\end{center}
\end{table}
Moreover, to deal efficiently with polynomial interpretations, the issuing
constraints are encoded in \emph{propositional logic} in a similar
spirit as in~\cite{FGMSTZ07}.  
Assignments are found by employing a state-of-the-art SAT solver, in our case
Mini\-Sat.%
\footnote{\url{http://minisat.se/}.}
Furthermore, SLIs are handled by linear programming.
Based on these ideas we implemented a complexity analyser.
These techniques have also been incorportated into the 
\emph{Tyrolean Complexity Tool} (\tct\ for short) that 
incorporates the most powerful techniques to analyse the complexity of 
rewrite systems that are currently at hand.%
\footnote{\url{http://cl-informatik.uibk.ac.at/software/tct/}.}
For compilation of the here presented experimental data we used the
latter implementation.

As suitable test bed we used the rewrite systems in the Termination Problem Data Base
version 4.0.%
\footnote{See~\url{http://termcomp.uibk.ac.at}.}
This test bed comprises 1739 TRSs.
The presented tests were performed on a server with 8 Dual-Core 2.6 GHz AMD\textsuperscript\textregistered\ 
Opteron\texttrademark\ Processor 8220 CPUs, for a total of 16 cores. 64 GB of RAM are
available. For each system we used a timeout of 60 seconds, the times 
in the tables are given in seconds. 
In interpreting defined and dependency pair symbols,
we restrict to polynomials whose coefficients are in the range $\{0,1,\dots,5\}$.
Table~\ref{table1} (\ref{table2}) shows the experimental results for
linear (quadratic) runtime complexities based on linear (quadratic) 
restricted interpretations.%
\footnote{For full experimental evidence see~\url{http://www.jaist.ac.jp/~hirokawa/08b/} or
~\url{http://cl-informatik.uibk.ac.at/software/tct/}.}
Text written in~\textit{italics} below the number of successes or failures
indicates the total time 
(in seconds) of success cases or failure cases, respectively.
The columns marked ``Prop.~\ref{p:main}'' and ``Cor.~\ref{c:dg}''
refer to the applicability of the respective results. 
Moreover ``S'', ``F'', ``T'' denotes
\emph{success}, \emph{failure}, or \emph{timeout} respectively.
For the sake of comparison, in the parentheses we indicate the number of
successes by the method of the column \emph{or} by Proposition~\ref{p:dp:usable}.

\begin{table}
\begin{center}
\caption{Results for Quadratic Runtime Complexities}
\label{table2}
\medskip
\begin{tabular}{%
@{\hspace{0pt}}l@{~~}r@{~~~~}r@{~~}r@{~~}r@{~~~~~~}r@{~~}r@{~~}r@{\hspace{0pt}}}
\hline\\[-2ex]
& \multicolumn{4}{c}{\emph{full rewriting}}
& \multicolumn{3}{c}{\emph{innermost rewriting}} \\
& direct
& Prop.\ref{p:dp:usable} & Prop.\ref{p:main} & Cor.\ref{c:dg}
& Prop.\ref{p:dp:usable} & Prop.\ref{p:main} & Cor.\ref{c:dg}
\\
\hline\\[-2ex]
S
& 182
& 182 & 93 & 108
& 183 & 102 & 117
\\
&  
&   & (186\makebox[0mm][l]{)} & (202\makebox[0mm][l]{)}
&   & (193\makebox[0mm][l]{)} & (208\makebox[0mm][l]{)}
\\
& \emph{152}
& \emph{329} & \emph{97} & \emph{74}
& \emph{324} & \emph{98} & \emph{54}
\\
F
& 524
& 473 & 1636 & 1624
& 492 & 1627 & 1613
\\
& \emph{5469}
& \emph{5436} & \emph{793} & \emph{850}
& \emph{5500} & \emph{825} & \emph{781} 
\\
T
& 864
& 951 & 10 & 7
& 924 & 10 & 9
\\
\hline\\[-3em]
\end{tabular}
\end{center}
\end{table}

In concluding, we observe that the experimental data shows that the 
here introduced dependency graph refinement for complexity
analysis extends the analytic power of the methods introduced in~\cite{HM:2008}
(see also Chapter~\ref{ijcar08a}). 
Note the significant difference between those TRSs that can be handled
by Propositions~\ref{p:dp:usable} and~\ref{p:main} in contrast to those that
can be handled either by Proposition~\ref{p:dp:usable} or by 
Corollary~\ref{c:dg}.
Moreover observe the gain in power in relation to direct methods, 
compare also~\cite{BCMT:2001,AM:2008}.
}

\backmatter
% \bibliographystyle{plainnat}
% \newcommand{\bibfont}{\small} 
% \setlength{\bibsep}{0pt}
% \bibliography{proof_theory,rewrite}

\end{document}